\definecolor{winered}{rgb}{0.6,0,0}
\definecolor{lessblue}{rgb}{0,0,0.7}
\newcommand{\myitem}[2]{\item[\rm(#2)]\def\@currentlabel{#2}\label{#1}}
\def\@tocline#1#2#3#4#5#6#7{
\begingroup
  \par
    \parindent\z@ \leftskip#3 \relax \advance\leftskip\@tempdima\relax
                  \rightskip\@pnumwidth plus 4em \parfillskip-\@pnumwidth
    \ifcase #1 
       \vskip 0.6em \hskip 0em 
       \or
       \or \hskip 0em 
       \or \hskip 1em 
    \fi%
    %
    #6
    %
    \nobreak\relax{\leavevmode\leaders\hbox{\,.}\hfill}
    \hbox to\@pnumwidth {\@tocpagenum{#7}}
  \par
\endgroup
}
 \def\l@section{\@tocline{0}{0pt}{0pc}{}{}}
\renewcommand{\tocsection}[3]{%
  \indentlabel{\@ifnotempty{#2}{ 
    \ignorespaces\bfseries{#2. #3}}}
  \indentlabel{\@ifempty{#2}{\ignorespaces\bfseries{#3}}{}} 
    \vspace{1.5pt}}
\renewcommand{\tocsubsection}[3]{%
  \indentlabel{\@ifnotempty{#2}{
    \ignorespaces#2. #3}}
  \indentlabel{\@ifempty{#2}{\ignorespaces #3}{}}
    \vspace{1.5pt}}
\renewcommand{\tocsubsubsection}[3]{%
  \indentlabel{\@ifnotempty{#2}{
    \ignorespaces#2. #3}}
  \indentlabel{\@ifempty{#2}{\ignorespaces #3}{}}
    \vspace{1.5pt}}
\def\@nomenstarted{0}
\newlength{\@nomenoldtabcolsep}
\newcommand{\nomenstart}
  {%
    \def\@nomenstarted{1}%
    \setlength{\@nomenoldtabcolsep}{\tabcolsep}%
    \setlength{\tabcolsep}{3.5pt}%
    \begin{longtable}{p{0.11\textwidth} p{0.86\textwidth}}
  }
\newcommand{\nomenitem}[2]{%
    \ifcase\@nomenstarted%
      \or 
      \or \\ 
    \fi%
    #1\,{\leavevmode\leaders\hbox{\,.}\hfill} & #2%
    \def\@nomenstarted{2}%
  }%
\newcommand{\nomenend}
  {\\%
      \end{longtable}%
      \setlength{\tabcolsep}{\@nomenoldtabcolsep}%
      \def\@nomenstarted{0}%
  }
\newcommand{\BIG}{\bBigg@{3.5}}
\newcommand{\vast}{\bBigg@{4}}
\newcommand{\Vast}{\bBigg@{5}}
\newcommand{\VAST}[1]{\bBigg@{#1}}
\numberwithin{equation}{section}
\numberwithin{figure}{section}
\newtheorem{thm}{Theorem}[section]
\newtheorem{prop}[thm]{Proposition}
\newtheorem{lemma}[thm]{Lemma}
\newtheorem{cor}[thm]{Corollary}
\newtheorem*{thm*}{Theorem}
\newtheorem*{prop*}{Proposition}
\newtheorem*{cor*}{Corollary}
\newtheorem*{conj*}{Conjecture}
\theoremstyle{definition}
\newtheorem{definition}[thm]{Definition}
\theoremstyle{remark}
\newtheorem{rmk}[thm]{Remark}
\newcommand{\fakephantomsection}{%
  \Hy@MakeCurrentHref{\@currenvir.\the\Hy@linkcounter}
  \Hy@raisedlink{\hyper@anchorstart{\@currentHref}\hyper@anchorend}%
  \Hy@GlobalStepCount\Hy@linkcounter%
}
\newcommand{\mc}{\mathcal}
\newcommand{\cA}{\mc A}
\newcommand{\cC}{\mc C}
\newcommand{\cD}{\mc D}
\newcommand{\cE}{\mc E}
\newcommand{\cF}{\mc F}
\newcommand{\cG}{\mc G}
\newcommand{\cH}{\mc H}
\newcommand{\cI}{\mc I}
\newcommand{\cK}{\mc K}
\newcommand{\cL}{\mc L}
\newcommand{\cM}{\mc M}
\newcommand{\cO}{\mc O}
\newcommand{\cQ}{\mc Q}
\newcommand{\cR}{\mc R}
\newcommand{\cS}{\mc S}
\newcommand{\cT}{\mc T}
\newcommand{\cU}{\mc U}
\newcommand{\cV}{\mc V}
\newcommand{\cX}{\mc X}
\newcommand{\ms}{\mathscr}
\newcommand{\sC}{\ms C}
\newcommand{\sD}{\ms D}
\newcommand{\scri}{\ms I}
\newcommand{\sS}{\ms S}
\newcommand{\TT}{\mathbb{T}}
\newcommand{\C}{\mathbb{C}}
\newcommand{\N}{\mathbb{N}}
\newcommand{\R}{\mathbb{R}}
\newcommand{\Z}{\mathbb{Z}}
\newcommand{\Sph}{\mathbb{S}}
\newcommand{\sfe}{\mathsf{e}}
\newcommand{\sfm}{\mathsf{m}}
\newcommand{\sfs}{\mathsf{s}}
\newcommand{\sfH}{\mathsf{H}}
\newcommand{\bfB}{\mathbf{B}}
\newcommand{\fa}{\mathfrak{a}}
\newcommand{\fC}{\mathfrak{C}}
\newcommand{\fm}{\mathfrak{m}}
\newcommand{\fp}{\mathfrak{p}}
\newcommand{\ft}{\mathfrak{t}}
\newcommand{\slg}{\slashed{g}{}}
\newcommand{\slstar}{\slashed{\star}}
\newcommand{\Err}{{\mathrm{Err}}{}}
\newcommand{\codim}{\operatorname{codim}}
\renewcommand{\Re}{\operatorname{Re}}
\renewcommand{\Im}{\operatorname{Im}}
\newcommand{\mathspan}{\operatorname{span}}
\newcommand{\supp}{\operatorname{supp}}
\newcommand{\rank}{\operatorname{rank}}
\newcommand{\eps}{\epsilon}
\newcommand{\hra}{\hookrightarrow}
\newcommand{\la}{\langle}
\newcommand{\ol}{\overline}
\newcommand{\pa}{\partial}
\newcommand{\dd}{{\mathrm d}}
\newcommand{\ra}{\rangle}
\newcommand{\spec}{\operatorname{spec}}
\newcommand{\specb}{\operatorname{spec}_\bop}
\newcommand{\wh}{\widehat}
\newcommand{\wt}{\widetilde}
\newcommand{\xra}{\xrightarrow}
\newcommand{\ubar}[1]{\underaccent{\bar}#1}
\newcommand{\pfstep}[1]{$\bullet$\ \underline{\textit{#1}}}
\newcommand{\bop}{{\mathrm{b}}}
\newcommand{\sop}{{\mathrm{s}}}
\newcommand{\seop}{{\mathrm{se}}}
\newcommand{\scop}{{\mathrm{sc}}}
\newcommand{\scl}{{\mathrm{sc}}}
\newcommand{\eop}{{\mathrm{e}}}
\newcommand{\tbop}{{3\mathrm{b}}}
\newcommand{\tscop}{{3\mathrm{sc}}}
\newcommand{\scbtop}{{\mathrm{sc}\text{-}\mathrm{b}}}
\newcommand{\schop}{{\mathrm{sc},\semi}}
\newcommand{\semi}{\hbar}
\newcommand{\ff}{\mathrm{ff}}
\newcommand{\scface}{{\mathrm{scf}}}
\newcommand{\tface}{{\mathrm{tf}}}
\newcommand{\zface}{{\mathrm{zf}}}
\newcommand{\cp}{{\mathrm{c}}}
\newcommand{\Diff}{\mathrm{Diff}}
\DeclareMathOperator{\Op}{Op}
\newcommand{\Vb}{\cV_\bop}
\newcommand{\Ve}{\cV_\eop}
\newcommand{\Vscbt}{\cV_\scbtop}
\newcommand{\Vs}{\cV_\sop}
\newcommand{\Vse}{\cV_\seop}
\newcommand{\Diffb}{\Diff_\bop}
\newcommand{\Diffe}{\Diff_\eop}
\newcommand{\Diffscbt}{\Diff_\scbtop}
\newcommand{\Diffs}{\Diff_\sop}
\newcommand{\Diffse}{\Diff_\seop}
\newcommand{\Psie}{\Psi_\eop}
\newcommand{\Vtb}{\cV_\tbop}
\newcommand{\Vtsc}{\cV_{3\scl}}
\newcommand{\Difftb}{\Diff_\tbop}
\newcommand{\Vsc}{\cV_\scop}
\newcommand{\Diffsc}{\Diff_\scop}
\newcommand{\Vsch}{\cV_\schop}
\newcommand{\Diffsch}{\Diff_\schop}
\newcommand{\WF}{\mathrm{WF}}
\newcommand{\Ell}{\mathrm{Ell}}
\newcommand{\Char}{\mathrm{Char}}
\newcommand{\Omegab}{{}^{\bop}\Omega}
\newcommand{\Tb}{{}^{\bop}T}
\newcommand{\Tse}{{}^\seop T}
\newcommand{\Tsc}{{}^{\scop}T}
\newcommand{\Te}{{}^{\eop}T}
\newcommand{\Ttb}{{}^{\tbop}T}
\newcommand{\Ttsc}{{}^{\tscop}T}
\newcommand{\Sb}{{}^{\bop}S}
\newcommand{\Se}{{}^{\eop}S}
\newcommand{\Ssc}{{}^{\scop}S}
\newcommand{\Sse}{{}^\seop S}
\newcommand{\Stb}{{}^{\tbop}S}
\newcommand{\sigmase}{{}^\seop\upsigma}
\newcommand{\sigmatb}{{}^\tbop\upsigma}
\newcommand{\loc}{{\mathrm{loc}}}
\newcommand{\CI}{\cC^\infty}
\newcommand{\CIdot}{\dot\cC^\infty}
\newcommand{\CIc}{\cC^\infty_\cp}
\newcommand{\Hb}{H_{\bop}}
\newcommand{\He}{H_{\eop}}
\newcommand{\Htb}{H_\tbop}
\newcommand{\Hsc}{H_{\scop}}
\newcommand{\phg}{{\mathrm{phg}}}
\newcommand{\bhm}{\fm}
\newcommand{\bha}{\fa}
\newcommand{\openbigpmatrix}[1]
  {%
    \def\@bigpmatrixsize{#1}%
    \addtolength{\arraycolsep}{-#1}%
    \begin{pmatrix}%
  }
\newcommand{\closebigpmatrix}
  {%
    \end{pmatrix}%
    \addtolength{\arraycolsep}{\@bigpmatrixsize}%
  }
\newlength{\enummargin}\setlength{\enummargin}{1.5em}
\newcommand{\usref}[1]{{\upshape\ref{#1}}}
\newcommand*{\fwbw}[1]{\expandafter\@fwbw\csname c@#1\endcsname}
\newcommand*{\@fwbw}[1]{\ifcase #1 \or {\rm fw}\or {\rm bw}\fi}
\AddEnumerateCounter{\fwbw}{\@fwbw}
\begin{document}

\title[Gluing small black holes along timelike geodesics II: uniform analysis]{Gluing small black holes along timelike geodesics II: uniform analysis on glued spacetimes}

\date{\today}

\begin{abstract}
  Given a smooth globally hyperbolic $(3+1)$-dimensional spacetime $(M,g)$ satisfying the Einstein vacuum equations (possibly with cosmological constant) and an inextendible timelike geodesic $\cC$, we constructed in Part I \cite{HintzGlueLocI} a family of metrics $g_\eps$ on the complement $M_\eps\subset M$ of an $\eps$-neighborhood of $\cC$ with the following behavior: away from $\cC$ one has $g_\eps\to g$ as $\eps\to 0$, while the $\eps^{-1}$-rescaling of $g_\eps$ around every point of $\cC$ tends to a fixed subextremal Kerr metric. Furthermore, $g_\eps$ solves the Einstein vacuum equation modulo $\cO(\eps^\infty)$ errors. The ultimate goal, which we achieve in Part III \cite{HintzGlueLocIII}, is to correct $g_\eps$ to a true solution on any fixed precompact subset of $M$ by addition of a size $\cO(\eps^\infty)$ metric perturbation which needs to satisfy a quasilinear wave equation (namely, the Einstein vacuum equations in a suitable gauge).

  The present paper lays the necessary analytical foundations. We develop a framework for proving estimates (including tame estimates) for solutions of (tensorial) wave equations on $(M_\eps,g_\eps)$ which, on a suitable scale of Sobolev spaces, are uniform on $\eps$-independent precompact subsets of the original spacetime $M$. These estimates are proved by combining two ingredients: the spectral theory for the corresponding wave equation on Kerr; and uniform microlocal estimates governing the propagation of regularity through the small black hole, including radial point estimates reminiscent of diffraction by conic singularities and long-time estimates near perturbations of normally hyperbolic trapped sets.

  As an illustration of this framework, we construct solutions of a toy nonlinear scalar wave equation on $(M_\eps,g_\eps)$ for uniform timescales and with full control in all asymptotic regimes as $\eps\to 0$.
\end{abstract}

\subjclass[2010]{Primary: 35L05, 35B25. Secondary: 83C57, 58K55}

\author{Peter Hintz}
\address{Department of Mathematics, ETH Z\"urich, R\"amistrasse 101, 8092 Z\"urich, Switzerland}
\email{peter.hintz@math.ethz.ch}

\maketitle

\setlength{\parskip}{0.00pt}
\tableofcontents
\setlength{\parskip}{0.05in}

\section{Introduction}
\label{SI}

In this work, we develop analytical tools for proving uniform estimates for solutions of linear wave equations on families $(M_\eps,g_\eps)$ of $(3+1)$-dimensional spacetimes in which a mass $\eps$ Kerr black hole, in the limit $\eps\searrow 0$, travels along a timelike curve $\cC$ in a spacetime $(M,g)$. These tools are used in the companion paper \cite{HintzGlueLocIII} to complete the construction, begun in \cite{HintzGlueLocI}, of metrics of this type which are moreover solutions of the Einstein vacuum equations. (In the present paper, the Einstein vacuum equations play no role.)

\subsection{A simple linear example}
\label{SsILin}

We denote the metric of a Kerr black hole with mass $\bhm>0$ and subextremal specific angular momentum $\bha$, $|\bha|<\bhm$, by $\hat g_{\bhm,\bha}$; see \cite{KerrKerr} and Definitions~\ref{DefGlKerrBL}, \ref{DefGlKerr}. We consider this as a Lorentzian metric on a subset $\{|\hat x|\geq\bhm\}\subset\R_{\hat t}\times\R^3_{\hat x}$ for which $\dd\hat t$ is timelike, and which is stationary ($\cL_{\pa_{\hat t}}\hat g_{\bhm,\bha}=0$), axisymmetric, and asymptotically flat in that the metric coefficients $(\hat g_{\bhm,\bha})_{\hat\mu\hat\nu}$ in the coordinates $\hat t,\hat x$ tend to those of the Minkowski metric $-\dd\hat t^2+\dd\hat x^2$ as $|\hat x|\to\infty$. (The boundary $|\hat x|=\bhm$ is a spacelike hypersurface lying behind the black hole event horizon $\{|\hat x|=\hat r_{\bhm,\bha}=\bhm-\sqrt{\bhm^2-|\bha|^2}\}$.) We moreover have the scaling behavior
\begin{equation}
\label{EqIResc}
  (\hat g_{\bhm,\bha})_{\hat\mu\hat\nu}(x/\eps) = (\hat g_{\eps\bhm,\eps\bha})_{\hat\mu\hat\nu}(x).
\end{equation}

Fix local coordinates $t\in\R$, $x\in\R^3$ on $M:=\R_t\times\R^3_x$ in which $g$ is equal to the Minkowski metric $-\dd t^2+\dd x^2$ at $\cC=\R_t\times\{0\}$. For $\eps\in(0,1)$, we then consider Lorentzian metrics $g_\eps$ on $M_\eps=M\cap\{|x|\geq\eps\bhm\}$ so that on $M\setminus\cC$ we have $g_\eps\to g$ in $\CI$ (i.e.\ locally uniformly with all derivatives), while $(g_\eps)_{\mu\nu}(t,\eps\hat x)\to(\hat g_{\bhm,\bha})_{\hat\mu\hat\nu}(\hat x)$ in $\CI$ for $t\in\R$, $\hat x\in\R^3$, $|\hat x|\geq\bhm$. (Identifying $\hat x=\frac{x}{\eps}$, the metric $g_\eps$ thus describes a mass $\eps\bhm$ Kerr black hole near $x=0$ in view of~\eqref{EqIResc}.) We also require uniform control in intermediate regions $\eps\lesssim|x|\lesssim 1$, in that the metric coefficients $(g_\eps)_{\mu\nu}$ are smooth functions of $t,|x|,\frac{\eps}{|x|},\frac{x}{|x|}$.\footnote{They thus restrict to $\frac{\eps}{|x|}=0$, resp.\ $|x|=0$ to $g_{\mu\nu}(t,|x|\frac{x}{|x|})=g_{\mu\nu}(t,x)$, resp.\ $(\hat g_{\bhm,\bha})_{\hat\mu\hat\nu}(\frac{x/|x|}{\eps/|x|})=(\hat g_{\bhm,\bha})_{\hat\mu\hat\nu}(\frac{x}{\eps})$.} See Figure~\ref{FigI}.

\begin{figure}[!ht]
\centering
\includegraphics{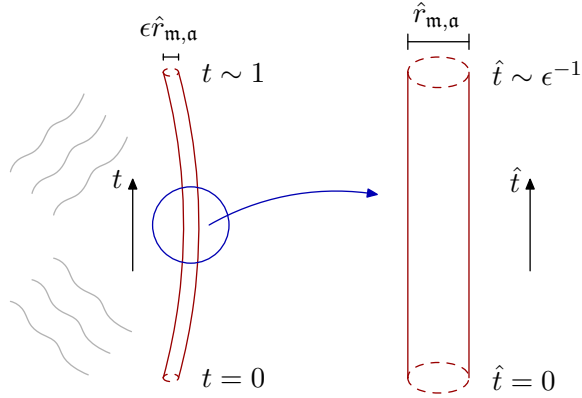}
\caption{Illustration of the metric $g_\eps$ for $\eps\ll 1$, and (on the right) its rescaling by $\eps^{-1}$ around a point $(t,0)\in\cC$. The red tube indicates the event horizon of the small black hole. The gray lines on the left are meant to illustrate waves incident to or reflected off the small black hole.}
\label{FigI}
\end{figure}

In such geometries, it is natural to measure regularity with respect to the vector fields $\pa_t,\pa_x$ in $|x|\gtrsim 1$ (i.e.\ standard regularity on $M$ away from $\cC$) and $\eps\pa_t$, $\eps\pa_x$ in $|x|\lesssim\eps$ (i.e.\ regularity in the time and space coordinates adapted to the scale of the small Kerr black hole). Interpolating suitably in the intermediate region, we thus define $H_{\seop,\eps}^s(M):=H^s(M)$ with the $\eps$-dependent norm
\begin{equation}
\label{EqIHse}
  \|u\|_{H_{\seop,\eps}^s(M)} := \sum_{i+|\beta|\leq s} \| (\hat\rho\pa_t)^i (\hat\rho\pa_x)^\beta u \|_{L^2(M)}^2,\qquad \hat\rho:=(\eps^2+|x|^2)^{\frac12}.
\end{equation}
We refer to $H_{\seop,\eps}^s(M)$ as an \emph{se-Sobolev space}, and regularity with respect to $\hat\rho\pa_t,\hat\rho\pa_x$ as \emph{se-regularity}.

\begin{thm}[Uniform estimates for linear scalar waves]
\label{ThmI}
  Let $T,r_0>0$, and set
  \begin{equation}
  \label{EqIOmega}
    \Omega_\eps=\{0\leq t\leq T,\ \eps\bhm\leq|x|\leq r_0+2(T-t)\}.
  \end{equation}
  Assume that the boundary hypersurfaces of $\Omega_\eps$ are spacelike with respect to $g_\eps$ for $\eps\in(0,\eps_0]$. (This always holds when $r_0,T,\eps_0>0$ are sufficiently small.) Write $L_\eps=\Box_{g_\eps}$ for the linear scalar wave operator on $(M_\eps,g_\eps)$. Let $s=0$ and fix $\hat\alpha\in(\frac12,\frac32)$. Then there exists a constant $C$ so that the following holds. For $f\in H^{s+6}(M_\eps)$ vanishing in $t\leq 0$, denote by $u\in H^{s+7}(M_\eps)$ the forward solution of $L_\eps u=f$, i.e.\ the unique solution of
  \[
    L_\eps u=f\ \text{on}\ \Omega_\eps,\qquad u=0\ \text{in}\ t\leq 0.
  \]
  Then $u$ satisfies the \emph{uniform} estimate
  \begin{equation}
  \label{EqIEst}
    \|\hat\rho^{-\hat\alpha}u\|_{H_{\seop,\eps}^s(\Omega_\eps)} \leq C\|\hat\rho^{-\hat\alpha+2}f\|_{H_{\seop,\eps}^{s+6}(\Omega_\eps)},\qquad \eps\in(0,\eps_0].
  \end{equation}
  More generally, for all $k\in\N_0$, there exists a constant $C_k$ so that for all $\eps\in(0,\eps_0]$,
  \begin{equation}
  \label{EqIEstHi}
    \sum_{i+|\beta|\leq k} \|\hat\rho^{-\hat\alpha} \pa_t^i(\hat\rho\pa_x)^\beta u\|_{H_{\seop,\eps}^s(\Omega_\eps)} \leq C_k\sum_{i+|\beta|\leq k} \|\hat\rho^{-\hat\alpha+2} \pa_t^i(\hat\rho\pa_x)^\beta f\|_{H_{\seop,\eps}^{s+6}(\Omega_\eps)}.
  \end{equation}
\end{thm}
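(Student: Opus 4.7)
The plan is to view $L_\eps=\Box_{g_\eps}$ as a second-order se-differential operator on the radial blow-up $\wt M$ of $[0,\eps_0]_\eps\times M$ along $\{0\}\times\cC$, and to reduce~\eqref{EqIEst} to separate model problems on the two new boundary hypersurfaces of $\wt M$: the \emph{zero face} $\zface$, reached as $\eps\searrow 0$ with $\hat x=x/\eps$ and $\hat t=t/\eps$ bounded, whose interior carries the subextremal Kerr geometry $\hat g_{\bhm,\bha}$; and the \emph{transition face} $\tface$, reached as $\eps/|x|\searrow 0$ with $(t,x)\to(t,0)\in\cC$, whose interior carries a Minkowski tangent model with a conic structure along the null cone from $\cC$. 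On $\wt M$ the se-vector fields $\hat\rho\pa_t,\hat\rho\pa_x$ are smooth, $L_\eps$ is a genuine se-differential operator of order $2$, and the weighted se-Sobolev norms in~\eqref{EqIHse} become standard weighted Sobolev norms on $\wt M$.

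The estimate~\eqref{EqIEst} then follows, after a uniform se-microlocal parametrix construction, from three ingredients. First, symbolic estimates uniform in $\eps$: elliptic regularity and propagation of singularities for $L_\eps$ in the se-calculus, controlling $u$ away from the normal operator faces. Second, a radial point estimate at $\tface$ of the type familiar from diffraction by conic singularities, propagating se-regularity from the incoming radial set (where past-directed null geodesics emanate from $\cC$ as $\eps\to 0$) across $\tface$ to the outgoing radial set; the threshold regularity for this estimate, combined with the dual statement needed for existence, forces $\hat\alpha\in(\tfrac12,\tfrac32)$. Third, Fredholm invertibility of the zero-face normal operator $\Box_{\hat g_{\bhm,\bha}}$ on the corresponding weighted scattering-b space over Kerr: this is where the spectral theory of the Kerr wave equation enters, combining red-shift estimates at the event horizon, semiclassical normally hyperbolic trapping estimates at the photon sphere (of Wunsch--Zworski--Dyatlov type) uniform down to zero frequency, and mode stability for subextremal Kerr to exclude non-zero resonances in the closed upper half plane.

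With these in place, one constructs a uniform parametrix $Q_\eps:\hat\rho^{\hat\alpha-2}H_{\seop,\eps}^{s+6}\to\hat\rho^{\hat\alpha}H_{\seop,\eps}^s$ with $L_\eps Q_\eps=\Id-R_\eps$, where $R_\eps$ is a compact se-remainder uniformly in $\eps$ with trivial nullspace by the forward energy estimate for $L_\eps$ on $(\Omega_\eps,g_\eps)$ (valid thanks to the spacelike boundary hypothesis); inverting $\Id-R_\eps$ produces~\eqref{EqIEst}, with the loss of $6$ derivatives arising from the cumulative losses in the trapping estimate and in the radial point estimates at $\tface$ and at the event horizon. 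The tame estimate~\eqref{EqIEstHi} is then obtained by commuting $\pa_t^i(\hat\rho\pa_x)^\beta$ past $L_\eps$: since $\pa_t=\hat\rho^{-1}\cdot(\hat\rho\pa_t)$ and $\hat\rho\pa_x$ are smooth se-vector fields on $\wt M$ modulo lower-order se-terms, $[L_\eps,\pa_t^i(\hat\rho\pa_x)^\beta]$ lies in the same se-class up to lower order, and induction on $k$ applying~\eqref{EqIEst} at each step closes the argument with $\eps$-independent constants.

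The hardest part is the Kerr Fredholm theory. The estimate must be uniform as the spectral parameter (dual to $\hat t$) ranges over the real axis: in the low-frequency regime the weight $\hat\alpha\in(\tfrac12,\tfrac32)$ must avoid the indicial root of $\Box_{\hat g_{\bhm,\bha}}$ at spatial infinity on Kerr and must be placed relative to the zero resonance so that Fredholm theory yields an actual inverse after projecting off the constant mode; in the high-frequency regime the normally hyperbolic trapping at the photon sphere gives a logarithmic loss that must be absorbed using the subprincipal dynamical gap. Matching this range of weights to the threshold condition of the conic radial point estimate at $\tface$ is what pins down the interval $\hat\alpha\in(\tfrac12,\tfrac32)$, and proving the matching is the main conceptual obstacle.
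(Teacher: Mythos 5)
The most significant gap is in your treatment of the higher-$k$ estimate~\eqref{EqIEstHi}. Commuting $\pa_t$ through $L_\eps u=f$ yields $L_\eps(\pa_t u)=\pa_t f+[L_\eps,\pa_t]u$, and because $\pa_t=\hat\rho^{-1}(\hat\rho\pa_t)$ is \emph{not} an se-vector field (it is what the paper calls an s-vector field), the commutator $[L_\eps,\pa_t]$ is again a full second-order se-operator with the same $\hat\rho^{-2}$ weight. Applying~\eqref{EqIEst} (with $s=0$) to the equation for $\pa_t u$ then demands control of $[L_\eps,\pa_t]u$ in $\hat\rho^{-\hat\alpha+2}H_{\seop,\eps}^6$, i.e.\ roughly eight se-derivatives of $u$, while~\eqref{EqIEst} gives $u$ only in $\hat\rho^{-\hat\alpha}H_{\seop,\eps}^0$: the naive induction loses two se-orders per s-derivative gained and does not close. (This is precisely the obstacle laid out in \S\ref{SssIFs}.) The paper's remedy is to re-prove the sharp variable-order microlocal estimates --- which lose only one se-derivative, at trapping --- on mixed $(\seop;\sop)$-Sobolev spaces $H_{(\seop;\sop),\eps}^{(\sfs;k)}$, and to do so one must replace the naive se-ps.d.o.\ class $\tilde\Psi_\seop$ (whose elements, e.g.\ multiplication by $\chi(\hat t,\hat x)$, destroy s-regularity) by the more restrictive class $\Psi_\seop$ with s-regular coefficients (\S\ref{SssFVarsse}, \S\ref{SssEstRads}, \S\ref{SssEstTraps}). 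Your proposal skips this machinery entirely, and without it~\eqref{EqIEstHi} is out of reach.

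A few secondary points are also off. The interval $\hat\alpha\in(\tfrac12,\tfrac32)$ is not pinned down by the threshold condition of the radial point estimate at $\pa M_\circ$ plus duality, as you claim; it is exactly the condition $\alpha_\cD=\alpha_\circ-\hat\alpha\in(-\tfrac32,-\tfrac12)$ for invertibility of the zero-energy Kerr operator $\wh{\Box_{\hat g_b}}(0)$ on weighted b-Sobolev spaces (Lemma~\ref{LemmaSc3bSpec}); moreover for scalar waves there is no zero mode to project out, so ``projecting off the constant mode'' is misplaced here. The loss of six derivatives is likewise not a sum of cumulative losses at trapping, horizon, and radial points: the sharp variable-order estimate loses a single se-derivative, and the six arises entirely from translating that variable-order estimate (whose order must decrease by roughly five across $\Sse^*\wt M$ to satisfy the above/below-threshold conditions at $\cR_{\rm in}$ and $\cR_{\rm out}$) into constant integer orders (Remark~\ref{RmkScSStd}). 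Finally, your parametrix-plus-Fredholm closing does not by itself give uniformity in $\eps$: a uniformly bounded compact remainder $R_\eps$ with trivial nullspace for each fixed $\eps$ need not have $(\Id-R_\eps)^{-1}$ uniformly bounded. The paper avoids this by observing (Remark~\ref{RmkIFAdj}) that solvability is trivial for each fixed $\eps>0$, so it suffices to prove a \emph{direct} estimate for the given solution, and the error term there comes with an explicit factor $C\eps^\delta$ which is absorbed for small $\eps$ (proof of Theorem~\ref{ThmScUnif}, using inversion of \emph{both} the Kerr model $\Box_{\hat g_b}$ at $\hat M$ and the edge model $L_\circ$ at $M_\circ$, the latter being a separate input from~\cite{HintzConicWave} rather than a byproduct of the radial point analysis alone).
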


This is a consequence of Theorem~\ref{ThmScUnif} (with $\alpha_\circ=0$) for $k=0$. See Remark~\ref{RmkScSStd} (and Theorem~\ref{ThmScS}) for the case of general $k\in\N$. We comment on a few aspects of Theorem~\ref{ThmI}.

\begin{enumerate}
\item\label{ItILoss}{\rm (se-regularity losses.)} In Theorems~\ref{ThmScUnif} and \ref{ThmScS}, we utilize \emph{variable order} se-Sobolev spaces $H_{\seop,\eps}^\sfs$, i.e.\ the order $\sfs$ depends on the point in phase space and needs to be above, resp.\ below an explicit threshold value at certain incoming, resp.\ outgoing radial sets; see Definition~\ref{DefEstAdm} and Remark~\ref{RmkIFMicroThr}. For suitable such orders $\sfs$, one can use the $H_{\seop,\eps}^\sfs(\Omega_\eps)$-norm \emph{on both sides} of the estimates~\eqref{EqIEst}--\eqref{EqIEstHi}. The resulting uniform estimates thus only lose one se-derivative relative to standard hyperbolic estimates; we will argue that this loss (or at least some small se-regularity loss) arises from trapping near the small black hole. The large se-regularity loss in~\eqref{EqIEst}--\eqref{EqIEstHi} arises from the lossy translation of such a rather precise variable order estimate to constant integer orders.
\item{\rm (Why se-regularity; weights.)} In $|x|\gtrsim 1$, we have $L_\eps\xra{\eps\to 0} L_\circ:=\Box_g$. On the other hand, under the identification $\hat t=\frac{t}{\eps}$ and $\hat x=\frac{x}{\eps}$, we have $\eps^2 L_\eps\xra{\eps\to 0}\Box_{\hat g_{\bhm,\bha}}$ for bounded $|\hat x|$. This motivates the relative weight $\hat\rho^2$ in~\eqref{EqIEst} as well as the usage of se-regularity ($\pa_t,\pa_x$ in $|x|\gtrsim 1$ and $\pa_{\hat t},\pa_{\hat x}$ in $|x|\lesssim\eps$).
\item{\rm (s-regularity.)} We shall refer to regularity under $\pa_t,\hat\rho\pa_x$ as \emph{s-regularity}. In rescaled coordinates in $|\hat x|=\frac{|x|}{\eps}\lesssim 1$, this is regularity with respect to $\eps^{-1}\pa_{\hat t},\pa_{\hat x}$. Thus, the estimate~\eqref{EqIEstHi} means, for $k>0$, that $u$ varies only on unit time scales from the perspective of $M$ (i.e.\ measured with $t$), or equivalently that $u$ behaves \emph{adiabatically} near the small black hole, i.e.\ varies very slowly on unit time scales as measured with $\hat t$.
\end{enumerate}

\begin{rmk}[Semiglobal results]
\label{RmkISG}
  On domains $\Omega\cap\{|x|\geq\eps\bhm\}$ where $\Omega\subset M$ is a compact manifold with corners and spacelike boundary hypersurfaces which near their intersection with $\cC$ equal to level sets of $t$, one can prove uniform estimates of the form~\eqref{EqIEst} (possibly with further regularity losses) by concatenating finitely many such estimates. Note that on domains which do not intersect the curve $\cC$, the estimate~\eqref{EqIEst} follows (with only the standard hyperbolic loss, and for all $s\in\N_0$) from standard hyperbolic theory, applied to a continuous (in $\eps$) family of wave equations.
\end{rmk}

\begin{rmk}[Classical solutions]
\label{RmkIClassical}
  If the norm on the right hand side of~\eqref{EqIEstHi} is bounded by $C_{N,k}\eps^N$ for all $N,k$, then so is the left hand side. A variant of Sobolev embedding thus implies that $u\in\CI(\Omega_\eps)$, and all its derivatives along $\pa_t$, $\hat\rho\pa_x$ have $L^\infty$-norm bounded by $\eps^N$ for all $N$. In this way, Theorem~\ref{ThmI} and its generalizations (discussed below) can be used to control the non-perturbative part of solutions of linear and nonlinear wave equations whose perturbative part (generalized Taylor expansion in $\eps$) is already known. See Theorem~\ref{ThmNToy} for a concrete example.
\end{rmk}

Let us consider the statement of Theorem~\ref{ThmI} from the perspective of the small black hole by using $\hat t=\frac{t}{\eps}$ and $\hat x=\frac{x}{\eps}$; we are thus working on the domain
\begin{equation}
\label{EqIDomKerr}
  \Omega_\eps=\bigl\{(\hat t,\hat x)\colon 0\leq\hat t\leq\eps^{-1}T,\ \bhm\leq|\hat x|\leq\eps^{-1}(r_0+2(T-t))\bigr\}.
\end{equation}
In terms of $z=(t,x)$ and $\hat z=(\hat z^0,\ldots,\hat z^3)=(\hat t,\hat x)$, the metric coefficients of $g_\eps$ are thus
\[
  (g_\eps)_{\mu\nu}(t,\hat x) = (\hat g_{\bhm,\bha})_{\hat\mu\hat\nu}(\hat x) + \eps|\hat x| h_{\hat\mu\hat\nu}\Bigl(t,\eps|\hat x|,\frac{\hat x}{|\hat x|}\Bigr) + \cO((\eps|\hat x|)^2),
\]
where $h_{\hat\mu\hat\nu}$ is smooth in its arguments. (The error term, which captures the deviation of $g_\eps$ from $g$ away from $\cC$, is small when $|x|=\eps|\hat x|$ is small.) Therefore, $(g_\eps)_{\mu\nu}$ is a size $\eps$ perturbation of the Kerr metric $\hat g_{\bhm,\bha}$ (with mass now fixed) in $|\hat x|\lesssim 1$ for times $0\leq\hat t\lesssim\eps^{-1}$, while $(g_\eps)_{\mu\nu}$ differs by a unit amount from the Kerr values for $|\hat x|\sim\eps^{-1}$. Waves which originate in the far-field region $|\hat x|\sim\eps^{-1}$ can thus arrive in the near-field region $|\hat x|\sim 1$ in time $\hat t\sim\eps^{-1}$ and fall into the black hole, linger for some time near the trapped set of Kerr, or scatter back to the far-field region. The estimate~\eqref{EqIEst} must account for all these possibilities.

\begin{rmk}[Scattering by the small black hole]
\label{RmkIScatter}
  One may expect scattering by the small black hole to produce waves traveling into its radiation zone with unit frequency, i.e.\ waves oscillating on unit scales in $\hat t-|\hat x|=\frac{t-|x|}{\eps}$; from the perspective of $M$, these are thus waves with frequency $\eps^{-1}$. The choice $s=0$ in Theorem~\ref{ThmI} (also $s<-\frac12+\hat\alpha$ would be admissible) ensures the finiteness of norms of such outgoing waves once they have reached the far field region $|x|\gtrsim 1$. On the other hand, for sources which vary only on unit scales in $t$ (and in particular are slowly varying near the small black hole), the solution also varies only on such scales---this is the content of~\eqref{EqIEstHi}.
\end{rmk}

\subsubsection{Generalizations}
\label{SssIGen}

In order to be able to solve linear, semilinear, or quasilinear tensorial wave equations on glued spacetimes $(M_\eps,g_\eps)$ for uniform finite time intervals (measured using $t$), we shall generalize the setting and sharpen the conclusions of Theorem~\ref{ThmI} in the following ways.
\begin{enumerate}
\item{\rm (Tensorial operators.)} $L_\eps$ may be a \emph{tensorial} wave operator on $(M_\eps,g_\eps)$, i.e.\ it may act on sections of a vector bundle. Assuming for notational simplicity that the vector bundle is trivial, we thus have
  \begin{equation}
  \label{EqIGenLeps}
    L_\eps = \sum_{i+|\beta|\leq 2} \ell_{\eps,i\beta}(t,x)\pa_t^i\pa_x^\beta = \Box_{g_\eps} + {\rm l.o.t.},
  \end{equation}
  with $\ell_{\eps,i\beta}(t,x)$ being a smooth matrix-valued function of $t,|x|,\frac{\eps}{|x|},\frac{x}{|x|}$, and with the principal symbol of $L_\eps$ being equal to that of $\Box_{g_\eps}$. To $L_\eps$ we can then associate a tensorial wave operator on Kerr,
  \[
    L = \sum_{i+|\beta|\leq 2} \hat\ell_{i\beta}(\hat x)\pa_{\hat t}^i\pa_{\hat x}^\beta = \Box_{\hat g_{\bhm,\bha}} + {\rm l.o.t.},\qquad \hat\ell_{i\beta}(\hat x):=\lim_{\eps\to 0}\ell_{\eps,i\beta}(t,\eps\hat x),
  \]
  which we require here to be independent of $t$. Theorem~\ref{ThmI} has a direct analogue for every such family $L_\eps$, provided the operator $L$ satisfies a mode stability assumption in $\Im\sigma\geq 0$ (discussed in~\S\ref{SssIFModel} below).
\item{\rm (Regularity of coefficients; tame estimates.)} The smoothness of $\ell_{\eps,i\beta}$ can be relaxed considerably: it suffices to assume that $\ell_{\eps,i\beta}=\ell^{(0)}_{\eps,i\beta}+\ell^{(1)}_{\eps,i\beta}$ where $\ell^{(0)}_{\eps,i\beta}$ is smooth as above, while $\ell^{(1)}_{\eps,i\beta}$ only satisfies uniform se-estimates in that
  \[
    |(\hat\rho\pa_t,\hat\rho\pa_x)^\gamma\ell^{(1)}_{\eps,i\beta}| \leq C_\gamma\eps,\qquad \eps\in(0,\eps_0],
  \]
  for all $\gamma\in\N_0^4$ with $|\gamma|\leq d_0$ where $d_0$ is large and finite (depending only on $L$). Assuming the same for $(\pa_t,\hat\rho\pa_x)$-derivatives of $\ell_{\eps,i\beta}^{(1)}$, we moreover prove \emph{tame estimates} for forward solutions of $L_\eps u=f$; omitting weights (which are as in~\eqref{EqIEstHi}), these take the form
  \[
    \|u\|_{H_{\sop,\eps}^k} \leq C_k\bigl(\|\ell_\eps^{(1)}\|_{\cC_{\sop,\eps}^d}\bigr) \bigl( \|f\|_{H_{\sop,\eps}^{k+d}} + \|\ell^{(1)}_\eps\|_{\cC_{\sop,\eps}^{k+d}}\|f\|_{H_{\sop,\eps}^d}\bigr).
  \]
  Here, the $H_{\sop,\eps}^k$-norm is defined like~\eqref{EqIHse} except one uses $\pa_t,\hat\rho\pa_x$ instead of $\hat\rho\pa_t,\hat\rho\pa_x$, and the $\cC_{\sop,\eps}^k$-norm is the supremum norm of all up to $k$-fold derivatives along $\pa_t,\hat\rho\pa_x$. See Theorem~\ref{ThmNTame} for the precise version in the case $L=\Box_{\hat g_{\bhm,\bha}}$. Such tame estimates are the key ingredient for the applicability of a Nash--Moser iteration scheme for solving nonlinear equations; see Theorem~\ref{ThmNToy} for an example.
\item{\rm (Estimates without mode stability.)} Mode stability for $L$ fails in important cases, e.g.\ when $L$ is the linearized gauge-fixed Einstein operator on Kerr in generalized harmonic gauge, as studied in \cite{HaefnerHintzVasyKerr,AnderssonHaefnerWhitingMode}; the estimate~\eqref{EqIEst} is then \emph{false}. Nonetheless, we prove estimates which provide uniform control on $u$ in spaces with high se-regularity in terms of bounds on $u$ in lower se-regularity spaces: omitting weights, this reads
  \begin{equation}
  \label{EqIGenRegEst}
    \|u\|_{H_{\seop,\eps}^\sfs(\Omega_\eps)} \leq C\bigl(\|L_\eps u\|_{H_{\seop,\eps}^\sfs(\Omega_\eps)} + \|u\|_{H_{\seop,\eps}^{\sfs_0}(\Omega_\eps)}\bigr)
  \end{equation}
  where $\sfs_0<\sfs$; here $u=0$ for $t\leq 0$. See Theorem~\ref{ThmEstStd} for a precise version. In fact, it is via the combination of such an estimate with estimates for $L^{-1}$ how we prove Theorem~\ref{ThmI}. In the absence of mode stability at zero energy for $L$, one can still combine~\eqref{EqIGenRegEst} with certain weaker estimates for $L^{-1}$ to conclude, as we demonstrate in the linearized Einstein setting in \cite{HintzGlueLocIII}.
\end{enumerate}

We will describe the main ideas of how to do uniform (tame) analysis for wave-type operators in this generality in~\S\ref{SsIF}.

\begin{rmk}[Other geometries]
\label{RmkIGenOther}
  Using (a subset of) the methods developed in this paper, one can prove uniform estimates when the model metric $\hat g_{\bhm,\bha}$ of $(g_\eps)_{\eps\in(0,1)}$ is replaced by suitable other types of metrics $\hat g$, such as stationary, asymptotically flat, nontrapping metrics. The simplest example is the Minkowski metric, which is the model arising from the choice $g_\eps:=g$, and an operator family to which our methods apply is
  \[
    L_\eps=\Box_g+\eps^{-2}V\Bigl(\frac{x}{\eps}\Bigr)
  \]
  for $V\in\CIc(\R_{\hat x}^3)$ (or for $V$ having at least inverse quadratic decay at infinity), assuming mode stability for $\Box_{\hat g}+V$. In particular, we can correct the formal solution for the model hyperbolic singular perturbation problem discussed in \cite[\S{1.2.1}]{HintzGlueLocI} to a true solution via a correction which is trivial in perturbation theory, i.e.\ which vanishes to all orders (and with all derivatives) as $\eps\searrow 0$; see Remark~\ref{RmkScSGV}. We shall not discuss such geometries in any detail in this paper, and instead refer the interested reader to Remark~\ref{RmkGlOther} and the further remarks referenced there.
\end{rmk}

\subsubsection{Related work}
\label{SssIRel}

Yang \cite{YangGeodesicHypo} considers the Einstein field equations coupled to a nonlinear Klein--Gordon equation which admits (exponentially decaying) soliton solutions. More precisely, \cite[Theorem~1.1]{YangGeodesicHypo} establishes the existence, for uniform finite time intervals, of solutions to the initial value problem when the scalar field data are close to soliton shrunk to spatial scale $\eps$, and the scalar field potential is scaled in an $\eps$-dependent manner so that also the amplitude of the soliton tends to $0$ as $\eps\to 0$. In fact, Yang shows that the soliton approximately moves along a timelike geodesic in the unperturbed spacetime ($M,g)$. This improves on an earlier result by Stuart \cite{StuartGeodesicsEinstein} (see also \cite{StuartSolitonsGeodesics} for related work on semilinear wave equations admitting solitons) which established short-time existence and required a stronger order of vanishing of the scalar field potential in the limit $\eps\to 0$. The spacetime metric remains uniformly close in $\cC^1$ to $g$. The $\|\cdot\|_{H_\eps^s}$-norms introduced in \cite[Equation~(9)]{YangGeodesicHypo} are equivalent (up to an overall power of $\eps$ related to Sobolev embedding) to the $H_{\seop,\eps}^s$-norm~\eqref{EqIHse} over $|x|\lesssim\eps$, but are weaker when $\eps\ll|x|\lesssim 1$ (where the norm~\eqref{EqIHse} does \emph{not} lose a power of $\eps$ with each differentiation); this is related to Remark~\ref{RmkIScatter}.

Hyperbolic singular gluing problems, with precise control of the evolving solutions for a uniform time interval, were also considered in the context of the Euler equations by Davila--del Pi\~{n}o--Musso--Wei \cite{DaviladelPinoMussoWeiVortex,DaviladelPinoMussoWeiHelices}; again spatial derivatives are scaled by $\eps$, and the relevant uniform estimates are closed at a low level of regularity. Literature concerned with the construction of (multi-)soliton solutions to nonlinear Schr\"o\-ding\-er, KdV, water waves, and wave or Klein--Gordon equations includes \cite{MerleCritNLSBlowup,MartelGenKdVMultiSoliton,MartelMerleMultiSolitonNLS,CoteMartelMerlegKdVNLSMultiSoliton,CoteMunozNLKGMultiSoliton,BellazziniGhimentiLeCozNLKGMultiSoliton,MingRoussetTzvetkovWaterWavesMulti,MartelMerleNLW5MultiSoliton,JendrejTwoBubble,CoteMartelNLKGTravelling,JendrejMartelNLWMultiBubble,KadarDynamical3DSoliton}. See also the recent construction \cite{KriegerSchmidZakharovFormal,KriegerSchmidZakharovTrue} of finite-time blow-up solutions for the energy-critical Zakharov system.

An extensive review of the literature concerning gravitational self-force and matched asymptotic expansions in general relativity as well as gluing methods for the constraint equations is given in \cite[\S{1.1}]{HintzGlueLocI}.

\subsection{Framework for uniform analysis}
\label{SsIF}

Our approach towards proving uniform high frequency estimates of the form~\eqref{EqIGenRegEst} is fully microlocal except for the usage of simple energy estimates near the boundaries of the domain $\Omega_\eps$; see~\S\ref{SssIFMicro}. The upgrade of~\eqref{EqIGenRegEst} to uniform estimates~\eqref{EqIEst} requires bounds for the Kerr model operator $L$; see~\S\ref{SssIFModel}. The overall structure of the argument is thus essentially standard (even if the execution is quite delicate):
\begin{itemize}
\item use microlocal techniques to control $u$ in terms of $L_\eps u$ at high frequencies (i.e.\ in the sense of se-differentiability) (\S\ref{SssIFMicro});
\item use the invertibility of certain normal operators of $L_\eps$ as $\eps\searrow 0$ to control $u$ also to leading order in the sense of $\eps$-decay (though with a loss of regularity) (\S\ref{SssIFModel});
\item combine both types of estimates to control $u$ by $L_\eps u$ modulo error terms involving weaker norms of $u$ multiplied by a positive power of $\eps$. For small enough $\eps$, the error terms can be absorbed, giving~\eqref{EqIEst} (also \S\ref{SssIFModel}).
\end{itemize}

Estimates with higher s-regularity such as~\eqref{EqIEstHi} are subsequently proved by commuting $L_\eps$ with appropriate s-vector fields. Technical challenges arise in this step from the need to avoid regularity losses which would preclude the applicability to nonlinear problems via tame estimates; see~\S\ref{SssIFs}.

\emph{Throughout this section, we work in local coordinates $t\in\R$, $x\in\R^3$ on $M$ in which the curve $\cC\subset M$ is given by $\R_t\times\{0\}$.}

\begin{rmk}[Constructions in perturbation theory vs.\ nonpertubative solutions]
\label{RmkIFPertVsNonpert}
  The techniques developed in the present paper are entirely different from those in \cite{HintzGlueLocI} (where formal solutions $g_\eps$ of the Einstein vacuum equations are constructed which are of the type considered in~\S\ref{SsILin}). This is common for singular or asymptotic gluing problems. Indeed, the construction of the perturbative part of a glued solution (i.e.\ in generalized Taylor series in the asymptotic parameter, here the mass $\eps$ of the small black hole) is typically accomplished via repeated inversion of linear model problems, often with infinite regularity throughout the construction; moreover, only certain aspects of the linear problems need to be understood, e.g.\ in \cite{HintzGlueLocI} only linearized gravity on Kerr \emph{at zero frequency}, or in \cite{KadarDynamical3DSoliton} the linearization of the nonlinear wave equation around the soliton \emph{at zero frequency}. Correcting the perturbative solution by adding an appropriate non-perturbative part requires facing the full difficulties caused by the \emph{nonlinear}, hyperbolic (i.e.\ \emph{evolution}) character of the partial differential equation under consideration. This in particular requires precise tracking of regularity, and depends on full spectral information. For example, if mode stability fails for the Kerr model operator $L$ (concrete such settings being studied in \cite{MoschidisSuperradiant}), i.e.\ typical solutions of $L u=0$ have $e^{c\hat t}$ growth, then uniform estimates of the form~\eqref{EqIEst} cannot hold since $u$ is then expected to grow like $e^{c t/\eps}$, i.e.\ \emph{exponentially} in $\eps^{-1}$ on unit size intervals in $t$.
\end{rmk}

\begin{rmk}[Energy methods]
\label{RmkIFEnergy}
  It is conceivable that the toy problem considered in Theorem~\ref{ThmI} can be treated using energy methods. We do not explore this possibility here. The reason is that the framework developed here, based on robust microlocal analysis and exploiting only spectral information for the Kerr model, is highly versatile; this is essential for the application to the (gauge-fixed) Einstein vacuum equation in \cite{HintzGlueLocIII}.
\end{rmk}

\subsubsection{Total space, null-geodesics, phase space}
\label{SssIFPhase}

As in \cite{HintzGlueLocI}, we find it convenient to encode uniformity as $\eps\searrow 0$ by working on a suitable \emph{total space}, i.e.\ a compactification $\wt M$ of $\bigcup_{\eps\in(0,1)}\{\eps\}\times M_\eps$ to a manifold with corners. Concretely, we define $\wt M$ as the real blow-up $\wt M:=[[0,1)_\eps\times M;\{0\}\times\cC]$. Thus, $\wt M$ can be written as the union of three overlapping regions:
\begin{itemize}
\item the \emph{near-field region} $[0,1)_\eps\times\R_t\times\R^3_{\hat x}$ where $\hat x=\frac{x}{\eps}$;
\item the \emph{far-field region} $[0,1)_\eps\times\R_t\times(\R^3_x\setminus\{0\})$;
\item the \emph{transition region} $[0,1)_{\rho_\circ}\times[0,1)_{\hat\rho}\times\R_t\times\Sph_\omega^2$ where $\rho_\circ=\frac{\eps}{|x|}$, $\hat\rho=|x|$, $\omega=\frac{x}{|x|}$.
\end{itemize}
Furthermore, the identity map in $\eps,t,x$ coordinates extends to a smooth \emph{blow-down map}
\[
  \wt\upbeta \colon \wt M \to [0,1)_\eps\times M
\]
which is a diffeomorphism on $\eps>0$. The space $\wt M$ fibers over $[0,1)$ in a singular manner: the fiber over $\eps>0$ is $M_\eps:=\{\eps\}\times M\cong M$, whereas the fiber over $\eps=0$ is the union of two boundary hypersurfaces:
\begin{itemize}
\item $M_\circ$, given in the far-field region by $\eps=0$ and in the transition region by $\rho_\circ=0$;
\item $\hat M$, given in the near-field region by $\eps=0$ and in the transition region by $\hat\rho=0$.
\end{itemize}
Thus, $M_\circ=[M;\cC]$ is a compactified version of the far-field region, arising from $M$ by blowing up $\cC$ (i.e.\ regarding polar coordinates around $\cC$ as valid and nondegenerate down to the center), while $\hat M=\R_t\times\ol{\R^3_{\hat x}}$ is a compactification of the near-field region. See Figure~\ref{FigIwtM}. Below, we shall describe the sense in which each fiber
\[
  \hat M_{t_0} := \hat M \cap \{t=t_0\} \cong \ol{\R^3_{\hat x}},\qquad t_0\in\R,
\]
should be thought of as the Kerr spacetime manifold quotiented out by time translations.

\begin{figure}[!ht]
\centering
\includegraphics{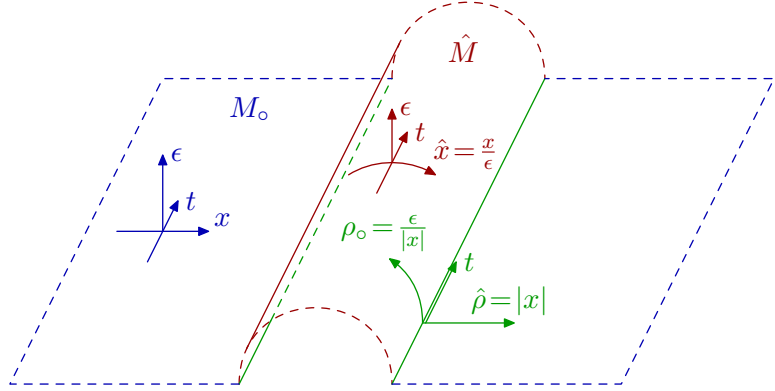}
\caption{The total space $\wt M$ on which our uniform analysis will be phrased, and in the phase space over which we will analyze $\wt L=(L_\eps)_{\eps\in(0,1)}$ microlocally. We also indicate local coordinates (near-field region: red, far-field region: blue, intermediate region: green).}
\label{FigIwtM}
\end{figure}

Equipping $M_\eps\subset\wt M$ with the metric $g_\eps$ as in~\S\ref{SsILin} for all $\eps\in(0,1)$ produces a total metric $\wt g$, whose coefficients (with respect to $t,x$) are smooth functions on $\wt M$. We can then regard null-geodesics on $(M_\eps,g_\eps)$ as curves on $\wt M$ which are tangent to the level sets of $\eps$.

We proceed to qualitatively describe the behavior of null-geodesics inside the domain $\Omega_\eps$ from~\eqref{EqIOmega} in the limit $\eps\to 0$. For every fixed $\eps>0$, null-geodesics simply go from an initial to a final spacelike boundary hypersurface of $\Omega_\eps$. (The hypersurface $|\hat x|=\bhm$ deep inside the small Kerr black hole is one of these final hypersurfaces.) But as $\eps\to 0$, the null-geodesic dynamics of the small Kerr black hole become visible: for example, a null-geodesic can start in the far-field region, approach the small black hole and orbit $\sim\eps^{-1}$ many times before exiting $\Omega_\eps$ and escaping again into the far-field region (or crossing the event horizon, or remaining trapped). Such a null-geodesic is thus \emph{almost} trapped from the perspective of the small black hole. To resolve the dynamics in the near-field region, we rescale the affine parameter by $\eps$ when the null-geodesic is in a region of bounded $|\hat x|$ so that one orbit takes parameter time $\sim 1$. However, this implies that the $\eps\to 0$ limit of such null-geodesics remains stuck at the value of $t$ which it had when entering the near-field region, and indeed one gets a forward trapped null-geodesic on Kerr (or more precisely the spatial manifold of the Kerr black hole at the time $t$). Taking the limit while appropriately shifting the affine parameters may instead produce a backward trapped null-geodesic which escapes the near-field region. Furthermore, when transitioning from the near- to the far-field region, limits of null-geodesics pass through the corner $M_\circ\cap\hat M$ of $\wt M$. Other limits of null-geodesics include those which enter the near-field region but remain far from trapping or horizons and instead exit into the far-field region again; and of course those which always remain in the far-field region, which are thus null-geodesics of $(M\setminus\cC,g)$ simply.

\begin{rmk}[Scattering of null-geodesics]
\label{RmkIScatterNull}
  In the context of the works \cite{MelroseWunschConic,MelroseVasyWunschEdge,MelroseVasyWunschDiffraction,HintzConicWave} and from the perspective of $M\setminus\cC=M_\circ\setminus\pa M_\circ$, one may be tempted to regard the curve $\cC$ (or its resolved version $\pa M_\circ=\wt\upbeta^{-1}(\cC)$) as a timelike curve of cone points. However, $\pa M_\circ=\pa\hat M$ is now itself the boundary at infinity of $\R_t$ times the spatial manifold of an asymptotically flat space. Thus, the `conic singularity' at $\pa M_\circ$ has nontrivial interior dynamics (namely those of the Kerr metric). One can therefore no longer characterize geometric and diffractive null-geodesics via distance $\pi$ propagation along $\pa M_\circ$ (see also \cite{MelroseZworskiFIO}); rather, the scattering map for null-geodesics on Kerr enters. This is very complicated \cite[\S{20}]{ChandrasekharBlackHoles} and only partially defined---but it is also irrelevant for the purposes of the present paper since we do not need to prove diffractive improvements in the spirit of \cite{MelroseWunschConic,MelroseVasyWunschEdge,MelroseVasyWunschDiffraction}.
\end{rmk}

We shall encode the rescaling of parameters, the trapped set of Kerr, etc.\ by working in a suitable phase space over $\wt M$, the \emph{se-cotangent bundle}
\[
  \Tse^*\wt M \to \wt M,
\]
which is a rank $4$ vector bundle over the $5$-dimensional manifold $\wt M$. Since we are working in local coordinates on $M$ here, this bundle is trivial and given by $\wt M\times\R^4$; the point is that over $M_\eps$, we identify
\begin{equation}
\label{EqITseIdent}
\begin{split}
  &M_\eps \times \R^4 \ni (\eps,t,x;\zeta_\seop) \mapsto -\sigma_\seop\frac{\dd t}{\hat\rho} + \xi_\seop\cdot\frac{\dd x}{\hat\rho} \in T^*_{(t,x)}M, \\
  &\hspace{8em} \zeta_\seop=(\sigma_\seop,\xi_\seop)\in\R\times\R^3,\quad \hat\rho=(\eps^2+|x|^2)^{\frac12}.
\end{split}
\end{equation}
This gives a natural isomorphism $\wt T^*_{M_\eps}\wt M\cong T^*M$ which, however, does not extend to $\eps=0$. In other words, the standard momentum variables associated with $t,x$ are rescalings by $\hat\rho^{-1}$ of the se-momentum variables $\sigma_\seop,\xi_\seop$.\footnote{Correspondingly, what constitutes large momenta---which is of central importance for microlocal analysis---differs dramatically between $T^*M$ and $\Tse^*\wt M$ as one approaches points in $\hat M$ in the base.} Note that uniformly equivalent se-momentum variables, i.e.\ smooth and nondegenerate fiber-linear coordinates on $\Tse^*\wt M$, are defined in the near-field region by writing covectors as $-\sigma\,\dd\hat t+\xi\,\dd\hat x$ (where $\dd\hat t=\frac{\dd t}{\eps}$ and $\dd\hat x=\frac{\dd x}{\eps}$), and in the far-field region via $-\sigma\,\dd t+\xi\,\dd x$; thus, $\Tse^*\wt M$ is indeed well-adapted to resolve null-geodesic dynamics (lifted to phase space) in both regimes.

Null-bicharacteristics---lifts of null-geodesics to phase space---are integral curves of the Hamiltonian vector field of the function $G_\eps|_{(t,x)}(\zeta):=|\zeta|_{g_\eps^{-1}|_{(t,x)}}^2$. In the far-field region $|x|\gtrsim 1$, we simply have convergence $H_{G_\eps}\to H_G$ where $G(\zeta)=|\zeta|_{g^{-1}}^2$. In the near-field region $|\hat x|\lesssim 1$ on the other hand, we have
\[
  \eps^2 G_\eps|_{(t,\hat x)}(-\sigma\,\dd\hat t+\xi\,\dd\hat x) = |{-}\sigma\,\dd t+\xi\,\dd x|_{g_\eps|_{(t,\eps\hat x)}}^2 \to |{-}\sigma\,\dd\hat t+\xi\,\dd\hat x|_{\hat g_{\bhm,\bha}^{-1}|_{\hat x}}^2,
\]
which explains why $H_{\eps^2 G_\eps}\to H_{\hat G_{\bhm,\bha}}$ where $\hat G_{\bhm,\bha}(\zeta)=|\zeta|_{\hat g_{\bhm,\bha}^{-1}}^2$. Carefully note that the $\pa_{\hat t}$-component of $H_{\hat G_{\bhm,\bha}}$ remains bounded for bounded se-momenta; but $\pa_{\hat t}=\eps\pa_t$, and thus $t$ is constant along $H_{\eps^2 G_\eps}$-integral curves over $\hat M$ (where $\eps=0$), as already observed prior to Remark~\ref{RmkIScatterNull}.

In order to work uniformly near $\eps=0$, one considers $H_{\hat\rho^2 G_\eps}$. Its integral curves over $M_\circ$ either miss $\pa M_\circ$ altogether, or they tend to $\pa M_\circ$ in the forward or past direction, with limiting momentum a multiple of $(\pa_t\mp\pa_r)^\flat$, $r:=|x|$, over $\pa M_\circ$. Thus, there are \emph{radial sets}
\[
  \cR_{\rm in},\ \cR_{\rm out} \subset \Tse^*_{\pa M_\circ}\wt M \setminus o
\]
over the corner $\pa M_\circ=M_\circ\cap\hat M$; these are the sets through which entrance into and exit from the near-field region take place. From the perspective of the Kerr black hole, these are the sets from which, resp.\ towards which null-bicharacteristics scatter. Their intersections with a fixed $t$-level set are thus closely related to the radial sets of \cite{MelroseEuclideanSpectralTheory,VasyZworskiScl} in stationary scattering theory, and in some sense also to past and null infinity \cite{PenroseAsymptotics} (although in the present setting we do not keep track of the fast time variable $\hat t$ over $\hat M$). The trapped set of Kerr is likewise a smooth conic submanifold
\[
  \Gamma\subset\Tse^*_{\hat M}\wt M\setminus o,
\]
which was characterized in \cite{WunschZworskiNormHypResolvent,DyatlovWaveAsymptotics}. There is a further (generalized) radial set over the event horizon which played an important role already in \cite{VasyMicroKerrdS} (and, non-microlocally, in \cite{DafermosRodnianskiRedShift}).

We point out that all of these sets lie over $\eps=0$, consistent with the fact that scattering and trapping only take place in the limit $\eps\searrow 0$. (For the analysis near the trapped set, we do need to construct an extension of the backwards trapped set over $\hat M$ to small $\eps>0$. This is done in Proposition~\ref{PropTrap} using a general result on normally hyperbolic trapping, Theorem~\ref{ThmTrap}, which follows the constructions given in \cite{HirschPughShubInvariantManifolds,HintzPolyTrap}.)

The detailed description of the null-bicharacteristic dynamics is the subject of~\S\ref{SsGlDyn}; see already Figure~\ref{FigGlDynFlow}.

\subsubsection{Uniform microlocal analysis on se-Sobolev spaces}
\label{SssIFMicro}

A natural class of vector fields adapted to the geometry on $\wt M$ is given by the space $\Vse(\wt M)$ of \emph{se-vector fields}: these are spanned over $\CI(\wt M)$ by $\hat\rho\pa_t$, $\hat\rho\pa_x$ (cf.\ \eqref{EqIHse}). (By~\eqref{EqITseIdent}, their principal symbols are thus smooth fiber-linear functions on $\Tse^*\wt M$.) This is a Lie algebra, with the Lie bracket given by the commutator of vector fields. (An element of $\Vse(\wt M)$ is thus a family, parameterized by $\eps\in(0,1)$, of vector fields on $M\cong M_\eps$ which degenerate in a particular fashion at $\hat M$.) Therefore, we can define corresponding spaces $\Diffse^m(\wt M)$ of \emph{se-differential operators}, and more generally also spaces of weighted operators, with weights $\rho_\circ^{-\ell_\circ}\hat\rho^{-\hat\ell}$ where $\ell_\circ,\hat\ell\in\R$. The key example in the present paper is the total operator
\[
  \wt L = (L_\eps)_{\eps\in(0,1)} \in \hat\rho^{-2}\Diffse^2,
\]
which acts on $\CIc(\wt M^\circ)$ via $L_\eps$ on the $\eps$-level set $M_\eps\subset\wt M$. As further justification for the weight $\hat\rho^{-2}$, we note that
\[
  -\pa_t^2+\pa_x^2 = \hat\rho^{-2}\bigl( -(\hat\rho\pa_t)^2 + (\hat\rho\pa_x)^2 - [\hat\rho\pa_x,\hat\rho]\hat\rho^{-1}\,\hat\rho\pa_x \bigr) \in \hat\rho^{-2}\Diffse^2
\]
indeed, since $[\hat\rho\pa_x,\hat\rho]\hat\rho^{-1}\in\CI(\wt M)$ by direct computation.

We shall use \emph{se-pseudodifferential operators} to effect microlocalization in (conic subsets of) $\Tse^*\wt M\setminus o$ for the purpose of analyzing the operator $\wt L$. We thus quantize symbols $a=a(\eps,t,x;\sigma_\seop,\xi_\seop)\in S^m(\Tse^*\wt M)$ (omitting cutoff functions to suitable neighborhoods of $\{(t,x)=(t',x')\}$) as
\[
  \Op_{\seop,\eps}(a)u(t,x) = (2\pi)^{-4}\iiiint e^{i(-(t-t')\sigma+(x-x')\cdot\xi)} a(\eps,t,x;\hat\rho\sigma,\hat\rho\xi)u(t',x')\,\dd t'\,\dd x'\,\dd\sigma\,\dd\xi.
\]
For now, we only discuss the class $\tilde\Psi_\seop^m$ of families $(\Op_{\seop,\eps}(a))_{\eps\in(0,1)}$ of operators obtained by inserting cutoff functions localizing to $|t-t'|,|x-x'|\lesssim\hat\rho$ in this oscillatory integral, and by allowing symbols $a$ which only feature se-regularity in the base variables in that
\begin{equation}
\label{EqIFMicroSymb}
  \bigl|(\hat\rho\pa_t)^i(\hat\rho\pa_x)^\beta \pa_{(\sigma_\seop,\xi_\seop)}^\gamma a\bigr|\leq C_{i\beta\gamma}(1+|\sigma_\seop|+|\xi_\seop|)^{m-|\gamma|}\qquad\forall\,i,\beta,\gamma.
\end{equation}
Operators of this class are families of bounded geometry ps.d.o.s (pseudodifferential operators) \cite{ShubinBounded}, with the bounded geometry structure depending on $\eps$; see \S\ref{SssFVarse}. The principal symbol of an element of $\tilde\Psi_\seop^m$ captures the symbol $a$ modulo symbols with an extra power of $(1+|\sigma_\seop|+|\xi_\seop|)^{-1}$.

The key reason why se-ps.d.o.s are useful for uniform analysis is that for every $\wt A=(A_\eps)_{\eps\in(0,1)}\in\tilde\Psi_\seop^m$ and $s\in\R$, there exists a constant $C$ so that, \emph{for all $\eps\in(0,1)$},
\[
  \|A_\eps u\|_{H_{\seop,\eps}^{s-m}} \leq C\|u\|_{H_{\seop,\eps}^s}.
\]

\begin{rmk}[Fine localization; variable orders]
\label{RmkIFMicroFine}
  Using $\tilde\Psi_\seop$ we can localize rather finely also in the base---namely, to sets of size $\sim\hat\rho$ in $t,x$, which for bounded $|\hat x|$ means: to sets of size $\sim 1$ in $\hat t,\hat x$. For example, multiplication by $\chi\in\CIc(\R^4_{\hat t,\hat x})$ defines an element of $\tilde\Psi_\seop^0$. Furthermore, following a general principle in microlocal analysis going back to \cite{UnterbergerVariable} and used e.g.\ in \cite{BaskinVasyWunschRadMink,VasyMinicourse,HintzNonstat}, we can define se-ps.d.o.s whose differential order is \emph{variable}, i.e.\ a function on the cosphere bundle $\Sse^*\wt M=(\Tse^*\wt M\setminus o)/\R_+$; via testing with these, we can define se-Sobolev spaces with variable regularity order.
\end{rmk}

To analyze $\wt L$ se-microlocally, observe that its principal symbol is given by the function $\wt G$ given over $M_\eps$ by $G_\eps$---the Hamiltonian flow of which (in the characteristic set) we discussed in~\S\ref{SssIFPhase}. Therefore, we can utilize the full toolkit of principal symbol-based microlocal analysis to prove uniform microlocal estimates for $\wt L$ on se-Sobolev spaces. For example, microlocal elliptic regularity takes the form of a uniform estimate (omitting weights)
\begin{equation}
\label{EqIEll}
  \| B_\eps u \|_{H_{\seop,\eps}^s} \leq C\Bigl( \| G_\eps L_\eps u \|_{H_{\seop,\eps}^{s-2}} + \| \chi u \|_{H_{\seop,\eps}^{-N}} \Bigr),\qquad \eps\in(0,1),
\end{equation}
where $\wt B=(B_\eps)$, $\wt G=(G_\eps)\in\tilde\Psi_\seop^0$, with $\wt G$ and $\wt L$ elliptic on the se-operator wave front set of $\wt B$ (defined in the usual fashion in terms of the full symbol of $\wt B$), and with\footnote{The notation means that $(\hat\rho\pa_t)^i(\hat\rho\pa_x)^\beta\chi\in L^\infty(\wt M)$ for all $i,\beta$.} $\chi\in\cC_\seop^\infty(\wt M)$ a cutoff function which equals $1$ near the projections of the supports of the Schwartz kernels of $\wt B,\wt G$ to $\wt M$ along both projection maps $\wt M^2\to\wt M$.

Similarly, we have uniform estimates which are analogues of real principal type propagation estimates, originating in \cite{DuistermaatHormanderFIO2,HormanderEnseignement}, in the form given in \cite[Theorem~26.1.6]{HormanderAnalysisPDE4}, \cite[Theorem~5.4]{VasyMinicourse}, \cite[\S{E.4}]{DyatlovZworskiBook}, or \cite[Theorem~8.7]{HintzMicro}.

Near the (generalized) radial sets over $\pa M_\circ$ and over the event horizon, we can exploit the saddle point or source/sink structure of the flow to prove uniform estimates quantifying the propagation of se-regularity from the unstable manifold (which is absent for the radial set over the event horizon) into the radial set itself. See~\S\S\ref{SsEstRad} and \ref{SsEstHor}.

\begin{rmk}[Threshold conditions]
\label{RmkIFMicroThr}
  A subtlety which is well-known from asymptotically Euclidean scattering (see e.g.\ \cite[Proposition~5.28]{VasyMinicourse}) is that propagation estimates through the radial set $\cR_{\rm in}$---which connects null-bicharacteristics from the far-field region incident on (spatial infinity of) the black hole and null-bicharacteristics in the near-field region coming in from infinity---yield uniform control in weighted spaces $H_{\seop,\eps}^{s,\alpha_\circ,\hat\alpha}=\rho_\circ^{\alpha_\circ}\hat\rho^{\hat\alpha}H_{\seop,\eps}$ only under the above-threshold condition $s>-\frac12+\alpha_\circ-\hat\alpha$ (plus further constant shifts in case the Kerr model operator $L$ is not symmetric), roughly corresponding to the absence of (high frequency) incoming radiation. On the other hand, when propagating se-estimates out of the near-field region through $\cR_{\rm out}$, there is a below-threshold condition $s<-\frac12+\alpha_\circ-\hat\alpha$ (again with shifts when $L\neq L^*$), roughly corresponding to having to allow for outgoing radiation from the small black hole. There is a further threshold condition of the form $s>s_{\rm thr}$ arising from the radial point estimate at the event horizon. Accommodating all threshold conditions typically requires the differential order $s$ to be variable.
\end{rmk}

Lastly, near the trapped set, we are able to adapt the proof from \cite{HintzPolyTrap}---which in turn is based on \cite{DyatlovSpectralGaps}---to propagate uniform control on solutions of $L_\eps u=f$ from the forward trapped set into the trapped set $\Gamma$ itself, with a loss of two se-derivatives on $u$ compared to elliptic estimates. This is the most delicate microlocal estimate in this paper; see~\S\ref{SsEstTrap}.

In all of these estimates, we need to localize in time, as we are ultimately interested in proving uniform estimates on the bounded (in $M$) domain $M_\eps$. Such localizations are, however, harmless due to the monotonicity of $t$ and $\hat t$ along (future) null-bicharacteristics. Upon concatenating the se-microlocal estimates, we obtain a uniform (schematic) estimate
\begin{equation}
\label{EqIFMicroPre}
  \|u\|_{H_{\seop,\eps}^\sfs(\Omega_\eps^\flat)} \leq C\Bigl( \|L_\eps u\|_{H_{\seop,\eps}^\sfs(\Omega_\eps)} + \|u\|_{H_{\seop,\eps}^{\sfs_0}(\Omega_\eps)}\Bigr)
\end{equation}
where $\Omega_\eps^\flat=\{(t,x)\colon 0\leq t\leq T+\delta\hat\rho,\ \eps(\bhm-\delta)\leq|x|\leq r_0-\delta+2(T-t) \}$ is slightly smaller than $\Omega_\eps$ in~\eqref{EqIOmega} (here $0<\delta\ll 1$ is fixed). The cause for the discrepancy of domains on the left and the right hand sides is the ever-present error term in microlocal estimates, e.g.\ $\chi u$ in~\eqref{EqIEll}.

It is crucial at this point that using $\tilde\Psi_\seop$ we can localize \emph{to unit size intervals in the fast time $\hat t$}, cf.\ Remark~\ref{RmkIFMicroFine}.\footnote{Otherwise, we would have to shrink $\Omega_\eps$ by a fixed amount in $t$, e.g.\ to $t\leq T-\delta$. This would be disastrous: controlling $u$ uniformly on the time interval $[T-\delta,T]$ which would not yet be covered is just as difficult as controlling it on $[0,T]$, which we have not yet done at this point.} The point is that we can bridge the gap between $\Omega_\eps^\flat$ and $\Omega_\eps$ via straightforward energy estimates. Indeed, from the perspective of the Kerr black hole at time $t=T$, i.e.\ in terms of $\hat t=\frac{t-T}{\eps}$ and $\hat x=\frac{x}{\eps}$, the remaining region $T-\delta\hat\rho\leq t\leq T$ is equal to $-\delta\la\hat x\ra\leq\hat t\leq 0$; but for $\delta<1$, this is a region of Kerr which is disjoint from past and future null infinity, and indeed simple energy estimates allow one to control solutions of $\Box_{\hat g_{\bhm,\bha}}v=h$ (with $v,h$ vanishing for $\hat t<-\delta\la\hat x\ra$, say) in this region. The gap near the inner boundary at $|x|=\eps\bhm$ is similarly bridged via energy estimates (with $|\hat x|$ being a time variable there), and the gap near the outer boundary at $|x|=r_0+2(T-t)$ is likewise bridged via standard energy estimates. The details are given in~\S\ref{SsEstEn}.

Altogether, these considerations prove the uniform se-regularity estimate~\eqref{EqIGenRegEst}; see~\S\ref{SsEstStd}. We state this now with the correct weights: writing $\|u\|_{H_{\seop,\eps}^{\sfs,\alpha_\circ,\hat\alpha}(\Omega_\eps)} := \|\rho_\circ^{-\alpha_\circ}\hat\rho^{-\hat\alpha}u\|_{H_{\seop,\eps}^\sfs(\Omega_\eps)}$, this is
\begin{equation}
\label{EqIFMicroReg}
  \|u\|_{H_{\seop,\eps}^{\sfs,\alpha_\circ,\hat\alpha}(\Omega_\eps)} \leq C\bigl(\|L_\eps u\|_{H_{\seop,\eps}^{\sfs,\alpha_\circ,\hat\alpha-2}(\Omega_\eps)} + \|u\|_{H_{\seop,\eps}^{\sfs_0,\alpha_\circ,\hat\alpha}(\Omega_\eps)}\bigr),
\end{equation}
with $\sfs_0<\sfs$ (and with $\sfs_0,\sfs$ satisfying the threshold conditions of Remark~\ref{RmkIFMicroThr} and being non-strictly monotonically decreasing along the future-directed null-bicharacteristic flow).

We stress that up to this point, we have only used symbolic properties of $\wt L$; mode stability for the Kerr model $L$ does not play any role yet. In particular, our arguments go through without any changes when $\wt L$ acts on sections of a vector bundle; no symmetry conditions are required, and the only nontrivial requirement is a bound on the subprincipal symbol of the trapped set (see Proposition~\ref{PropEstTrap}).

\subsubsection{Model operators}
\label{SssIFModel}

The estimate~\eqref{EqIFMicroReg} does not yield unconditional uniform control of $u$ as $\eps\searrow 0$. For example, when mode stability for $L$ fails at some $\sigma\in\C$ with $\Im\sigma>0$, then both norms on $u$ on either side are typically exponentially growing in $\eps$.

In order to proceed, we thus need to impose a mode stability assumption on $L$ at frequencies $\sigma\in\C$ with $\Im\sigma\geq 0$. One says that mode stability holds at a frequency $\sigma\neq 0$ with $\Im\sigma\geq 0$ if there do not exist any \emph{outgoing} solutions $u=u(\hat x)$ of the equation $L(e^{-i\sigma\hat t}u(\hat x))=0$. In the present paper, the precise meaning of the outgoing condition is the triviality of the nullspace of $\hat L(\sigma)=e^{i\sigma\hat t}L e^{-i\sigma\hat t}$ on appropriate scattering Sobolev spaces \cite{MelroseEuclideanSpectralTheory} with variable decay order (as used also in \cite[Proposition~5.28]{VasyMinicourse}); see Proposition~\ref{PropEstFTbdd}. In the case that $L=\Box_{\hat g_{\bhm,\bha}}$ is the scalar wave operator, this can be shown to be equivalent to the requirement that $e^{-i\sigma\hat t}u$ be smooth across the future event horizon, and $u\sim \hat r^{-1}e^{i\sigma\hat r_*}$ as $\hat r=|\hat x|\to\infty$, where $\hat r_*\sim\hat r+2\bhm\log\hat r$ (see \cite[Definition~1.1]{ShlapentokhRothmanModeStability} for the precise conditions for separated mode solutions $u$); we recall in Lemma~\ref{LemmaSc3bSpec} the relevant results from \cite[\S{3}]{HintzKdSMS}.\footnote{We remark that mode stability for \emph{large} $|\sigma|$ is a consequence of symbolic \emph{high energy} (or semiclassical, with $|\sigma|^{-1}$ being the semiclassical parameter) estimates; that is, this is \emph{automatic}. See Proposition~\ref{PropEstFThi}.}

For $\sigma=0$, the operator $\hat L(0)$ is, near spatial infinity, an elliptic operator with good mapping properties on \emph{b-Sobolev spaces} \cite{MelroseMendozaB,MelroseAPS}, i.e.\ Sobolev spaces which measure regularity with respect to $\hat r\pa_{\hat x}$ (recall that we are working in $\hat r\geq\bhm>0$). Mode stability at zero frequency is then the statement that on $L^2$-based b-Sobolev spaces with some range of weights $\hat r^{-\alpha_\cD}$, the operator $\hat L(0)$ is invertible. For $L=\Box_{\hat g_b}$, this is true for $\alpha_\cD\in(-\frac32,-\frac12)$. See Proposition~\ref{PropEstFT0} for the precise statement.

The qualitative mode stability assumption on $L$, coupled with quantitative Fredholm and high energy estimates, implies uniform bounds on $\hat L(\sigma)^{-1}$ as an operator on (semiclassical, when $|\sigma|\gg 1$) scattering Sobolev spaces when $\sigma\neq 0$ and b-Sobolev spaces when $\sigma=0$, and scattering-b-transition Sobolev spaces for uniform estimates near $\sigma=0$. (The latter spaces were defined in \cite[Appendices~A.3--A.4]{HintzKdSMS}, based on earlier work by Guillarmou--Hassell \cite{GuillarmouHassellResI}.) Using the Plancherel theorem (and Paley--Wiener), these quantitative estimates immediately imply the boundedness of the forward solution operator for $L$ on certain spacetime $L^2$-based Sobolev spaces $\Htb^{s,\alpha}=\hat r^{-\alpha}\Htb^s$---namely, weighted 3b-Sobolev spaces as introduced in \cite{Hintz3b}. These are function spaces on $\R_{\hat t}\times\R^3_{\hat x}$ which measure regularity with respect to
\begin{equation}
\label{EqIFModel3b}
  \la\hat x\ra\pa_{\hat t},\quad \la\hat x\ra\pa_{\hat x};
\end{equation}
on the Fourier transform side, these are $L^2$-spaces in $\sigma\in\R$ with values in precisely the aforementioned function spaces for resolvent estimates. We recall this relationship in~\eqref{EqF3FT} following \cite[Proposition~4.29(1)]{Hintz3b}; this was used previously already in the proofs of \cite[Theorem~7.2]{Hintz3b} and \cite[Proposition~5.19]{HintzNonstat}, and we thus refer the reader to these works for further discussion of the relationship of 3b- and resolvent estimates.

In the case $L=\Box_{\hat g_b}$, we obtain
\begin{equation}
\label{EqIFModelKerr}
  L^{-1} \colon H_\tbop^{\sfs_0,\alpha_\cD+2} = \hat r^{-\alpha_\cD-2}H_\tbop^{\sfs_0} \to H_\tbop^{\sfs_0,\alpha_\cD};
\end{equation}
we refer the reader to Theorem~\ref{ThmSc3b} for the precise statement and the conditions on $\sfs_0$ and $\alpha_\cD\in(-\frac32,-\frac12)$. The estimate~\eqref{EqIFModelKerr} in particular implies that solutions of initial value problems with suitably decaying data are in $L^2$ in time, which is significantly weaker than the quantitative decay estimates proved in \cite{DafermosRodnianskiShlapentokhRothmanDecay} following earlier work \cite{AnderssonBlueHiddenKerr,TataruTohaneanuKerrLocalEnergy}, or the precise decay rates established in \cite{HintzPrice,AngelopoulosAretakisGajicKerr} (see also \cite{LukOhTwoTails,LooiXiongSemilinearAsymp,LooiSussmanSchrodingerAsymp} for further recent work in similar directions). The utility of~\eqref{EqIFModelKerr} for present purposes lies in the fact that the 3b-spaces \emph{precisely} match the se-Sobolev spaces on which our microlocal analysis for $\wt L$, leading to the estimate~\eqref{EqIFMicroReg}, is based.

To explain this, note that the basic se-vector fields $\hat\rho\pa_t,\hat\rho\pa_x$, with $\hat\rho=(\eps^2+|x|^2)^{\frac12}$, are, in terms of the coordinates $\hat t=\frac{t}{\eps}$, $\hat x=\frac{x}{\eps}$ (thus $\hat\rho=\eps\la\hat x\ra$), \emph{precisely} equal to the basic 3b-vector fields~\eqref{EqIFModel3b}. Therefore, we have a uniform equivalence of norms (up to powers of $\eps$ required to accommodate changes in volume densities)
\[
  \|u\|_{H_{\seop,\eps}^{\sfs_0,\alpha_\circ,\hat\alpha}(\Omega_\eps)} \sim \eps^{-\hat\alpha}\|u\|_{H_\tbop^{\sfs_0,\alpha_\cD}(\Omega_\eps)},\qquad \alpha_\cD=\alpha_\circ-\hat\alpha,
\]
where on the right we regard $u$ as a function of $(\hat t,\hat x)$ on the Kerr domain~\eqref{EqIDomKerr}. See Lemma~\ref{LemmaFseHse3b} and Proposition~\ref{PropFVarseSobRel} for detailed statements. From~\eqref{EqIFModelKerr} we get $\|u\|_{H_\tbop^{\sfs_0,\alpha_\cD}}\leq C\|L u\|_{H_\tbop^{\sfs_0,\alpha_\cD+2}}$. Since $L$ is equal to $\eps^2 L_\eps$ up to error terms which vanish at $\hat M$ (as se-differential operators of order 2), we can improve~\eqref{EqIFMicroReg} to
\begin{equation}
\label{EqIFMicroMhat}
  \|u\|_{H_{\seop,\eps}^{\sfs,\alpha_\circ,\hat\alpha}(\Omega_\eps)} \leq C\Bigl(\|L_\eps u\|_{H_{\seop,\eps}^{\sfs,\alpha_\circ,\hat\alpha-2}(\Omega_\eps)} + \|u\|_{H_{\seop,\eps}^{\sfs_0+2,\alpha_\circ,\hat\alpha-\delta}(\Omega_\eps)}\Bigr)
\end{equation}
for any fixed $\delta\leq 1$, provided $\alpha_\circ-\hat\alpha=\alpha_\cD\in(-\frac32,-\frac12)$. In other words, we can control $u$ not only in the sense of se-regularity, but also to leading order at $\hat M$, by $L_\eps u$.

\begin{rmk}[se-regularity order]
\label{RmkIFMicroReg}
  We need $2$ more orders of se-regularity in the error term here compared to~\eqref{EqIFMicroReg} due to the loss of 2 derivatives in~\eqref{EqIFModelKerr} and the second order nature of $\eps^2 L_\eps-L$. Therefore, the error term in the estimate~\eqref{EqIFMicroMhat} is only weaker than the left hand side if $\sfs_0<\sfs-2$. In other words, the microlocal se-regularity analysis of~\S\ref{SssIFMicro} is indeed necessary. This is of course a typical feature of non-elliptic problems: the ability to solve $L u=f$ does not suffice to solve perturbed problems $L_\eps u=f$ via $L u=f-(L_\eps-L)u$ via the application of a fixed point argument, since $L_\eps-L$ uses more derivatives than $L$ gains.
\end{rmk}

\begin{rmk}[More precise spaces for resolvent analysis]
\label{RmkIFMicro2nd}
  We cannot use Vasy's resolvent estimates on very precise second microlocal spaces \cite{VasyLAPLag,VasyLowEnergyLag}, as this would force us to work with similarly precise spaces for the entire gluing problem in lieu of the softer se-spaces.
\end{rmk}

To further improve the estimate~\eqref{EqIFMicroMhat}, we have two options.
\begin{enumerate}
\item{\rm (Option I: small domains.)} For $\sfs_0<\sfs-2$, we can estimate the error term of~\eqref{EqIFMicroMhat} using $\|u\|_{H_{\seop,\eps}^{\sfs_0+2,\alpha_\circ,\hat\alpha-\delta}}\leq C\|\hat\rho^\delta u\|_{H_{\seop,\eps}^{\sfs,\alpha_\circ,\hat\alpha}}$. But when the domain $\Omega_\eps$ is \emph{small} (i.e.\ $T,r_0>0$ are small in~\eqref{EqIOmega}), $\hat\rho^\delta$ is \emph{small} on $\Omega_\eps$, and thus this error term can be absorbed into the left hand side of~\eqref{EqIFMicroMhat}. We implement this (by working with a rescaled operator on a fixed domain) in~\S\ref{SssScUnifSm}.
\item{\rm (Option II: inversion of the model operator at $M_\circ$.)} A more systematic option is to improve the $M_\circ$-decay order $\alpha_\circ$ of the error term in~\eqref{EqIFMicroMhat} by using suitable estimates for the model operator of $\wt L$ at $M_\circ$, given by restricting the coefficients of $\wt L$ over $M\setminus\cC$ to $\eps=0$. (In the notation used in~\eqref{EqIGenLeps}, this model operator is thus $L_\circ=\sum_{i+|\beta|\leq 2}\ell_{0,i\beta}(t,x)\pa_t^i\pa_x^\beta$ where $\ell_{0,i\beta}(t,x)=\lim_{\eps\to 0}\ell_{\eps,i\beta}(t,x)$ for $x\neq 0$.) On the level of function spaces, the basic se-derivatives $\hat\rho\pa_t,\hat\rho\pa_x$ restrict to $\eps=0$ as $r\pa_t$, $r\pa_x$ where $r=|x|$. These are \emph{edge vector fields} on $M_\circ$ \cite{MazzeoEdge}. Thus, an estimate for $L_\circ$ of the form
  \begin{equation}
  \label{EqIFMicroEdge}
    \|u\|_{H_\eop^{s,\ell}} \leq C\|L_\circ u\|_{H_\eop^{s-1,\ell-2}},\qquad s=\sfs_0+2,\ \ell=(\hat\alpha-\delta)-\alpha_\circ,
  \end{equation}
  where the weighted edge Sobolev space $H_\eop^{s,\ell}=r^\ell H_\eop^s$ is defined via testing with edge vector fields, can be used to improve the error term of~\eqref{EqIFMicroMhat} to $\|u\|_{H_{\seop,\eps}^{\sfs_0+3,\alpha_0-\delta,\hat\alpha-\delta}}\leq C\eps^\delta\|u\|_{H_{\seop,\eps}^{\sfs,\alpha_0,\hat\alpha}}$, which is thus \emph{small} for sufficiently small $\eps>0$. The proof of estimates of the form~\eqref{EqIFMicroEdge} is given in \cite{HintzConicWave}. We implement this option in~\S\ref{SsScUnif}.
\end{enumerate}
Either option produces the desired estimate~\eqref{EqIEst} for the $M_\circ$-weight $\alpha_\circ=0$ (and with more precise variable orders, as in point~\eqref{ItILoss} following the statement of Theorem~\ref{ThmI}).

\begin{rmk}[No adjoint estimates]
\label{RmkIFAdj}
  The solvability of the linear wave equations we study here is trivial, as for any fixed $\eps>0$ this follows from standard existence results on compact regions with spacelike boundaries; it is only the proof of \emph{uniform estimates} that is highly nontrivial. An important consequence of this observation is that it suffices to prove `direct' estimates for given solutions such as~\eqref{EqIEst}; one does not need to prove `adjoint' estimates for the adjoint problem $L_\eps^*u^*=f^*$ which would give solvability via duality. This will be particularly important in the proof of uniform s-regularity estimates below, where we use $\N_0$ degrees of s-regularity via testing with vector fields; the dual spaces are, at best, difficult to work with unless one has a pseudodifferential calculus available which allows for real se- \emph{and s-}regularity orders. (We do not develop such a calculus here.)
\end{rmk}

\subsubsection{Adiabatic (s-)regularity and tame estimates}
\label{SssIFs}

In order to explain the basic ideas and challenges succinctly, we start with the sharper variable order version of~\eqref{EqIEst} (but omit weights since they only play a minor role at this point)
\begin{equation}
\label{EqIFs}
  \|u\|_{H_{\seop,\eps}^\sfs(\Omega_\eps)} \leq C\|f\|_{H_{\seop,\eps}^\sfs(\Omega_\eps)},\qquad f=L_\eps u,
\end{equation}
valid for monotone variable orders satisfying threshold conditions (see Remark~\ref{RmkIFMicroThr}) and for $u$ vanishing in $t<0$.

Among the basic s-vector fields $\pa_t$, $\hat\rho\pa_x$, only $\pa_t$ is not already an se-vector field. In order to prove~\eqref{EqIEstHi}, a natural first step is to commute $L_\eps u=f$ with $\pa_t$, so
\[
  L_\eps(\pa_t u) = \pa_t f + [L_\eps,\pa_t]u.
\]
When $\wt L=(L_\eps)_{\eps\in(0,1)}$ has smooth (or just s-regular) coefficients on $\wt M$, the commutator $[\wt L,\pa_t]$ is of the same class $\hat\rho^{-2}\Diffse^2$ as $\wt L$ (cf.\ \eqref{EqIGenLeps}). Therefore, applying~\eqref{EqIFs} to this equation with $\sfs-2$ in place of $\sfs$ gives
\[
  \|\pa_t u\|_{H_{\seop,\eps}^{\sfs-2}} \leq C\bigl(\|\pa_t f\|_{H_{\seop,\eps}^{\sfs-2}} + \|u\|_{H_{\seop,\eps}^\sfs}\bigr).
\]
We have thus gained \emph{one} s-derivative at the expense of \emph{two} se-derivatives (two being the differential order of $[\wt L,\pa_t]$). Keeping in mind that the threshold conditions impose an absolute upper bound on $\sfs$, this is unacceptable.

The remedy is to revisit the microlocal se-regularity estimates from~\S\ref{SssIFMicro} on function spaces $H_{(\seop;\sop),\eps}^{(\sfs;k)}$ which encode $k$ degrees of s-regularity (in the argument thus far: $k=1$) and $\sfs$ degrees of se-regularity; see Definition~\ref{DefFsFn} and~\S\ref{SssFVarsse}. There is an important technical caveat, however: elements of the class $\tilde\Psi_\seop$ used thus far do not preserve (uniform) s-regularity; for example, acting with the 0-th order operator given by multiplication with $\chi\in\CIc(\R^4_{\hat t,\hat x})$ on a function $u$, we have $\pa_t(\chi u)=(\pa_t\chi)u+\chi(\pa_t u)$, with $\pa_t\chi$ blowing up like $\hat\rho^{-1}$ as $\eps\to 0$. Therefore, for the present task, we need to work with a class $\Psi_\seop$ of se-ps.d.o.s whose symbols are s-regular, i.e.\ $\hat\rho\pa_t$ in~\eqref{EqIFMicroSymb} is replaced with $\pa_t$. The standard geometric microlocal approach towards defining such a class of operators and prove their mapping and composition properties would be via a characterization of their Schwartz kernels on a suitable resolution of $[0,1)_\eps\times M\times M$; here, we instead take a simpler route and use the black box construction of \cite{HintzScaledBddGeo} to define $\Psi_\seop$ as an algebra of ps.d.o.s associated with a (parameterized) \emph{scaled bounded geometry structure}. See~\S\ref{SssFVarsse}.

Precisely because elements of $\Psi_\seop$ preserve s-regularity, they cannot localize sharply in $\hat t$. Instead, the error term of microlocal estimates, such as $\chi u$ in~\eqref{EqIEll}, must be measured on a set that is bigger by a unit amount in $t$. (The radial point and trapping estimates are proved in this setting in~\S\S\ref{SssEstRads}, \ref{SssEstHors}, and \ref{SssEstTraps}.) As an analogue of~\eqref{EqIFMicroPre}, we thus obtain
\[
  \|u\|_{H_{(\seop;\sop),\eps}^{(\sfs;k)}(\Omega_\eps)} \leq C\Bigl( \|L_\eps u\|_{H_{(\seop;\sop),\eps}^{(\sfs;k)}(\Omega^\sharp_\eps)} + \|u\|_{H_{(\seop;\sop),\eps}^{(\sfs_0;k)}(\Omega^\sharp_\eps)}\Bigr),
\]
where $\Omega_\eps^\sharp=\{(t,x)\colon 0\leq t\leq T+\delta,\ \eps(\bhm-\delta)\leq|x|+r_0+\delta+2(T-t)\}$ is larger than $\Omega_\eps$. In order to bridge the gap $T\leq t\leq T+\delta$, it is thus necessary to use the already established se-solvability theory on this time interval. See~\S\ref{SssEstRads} for details on how one can thus prove (via an iterative argument in the s-regularity order) the estimate~\eqref{EqIEstHi}.

\medskip

Finally, we briefly comment on how we prove tame estimates in~\S\ref{SsNTame}. The basic idea is to commute $\pa_t^k$ through $L_\eps u=f$ (and proceed inductively in $k$). The error term $[L_\eps,\pa_t^k]u$ can then be expanded into $j$-fold s-derivatives of the coefficients of $L_\eps$ acting on $(k-j)$-fold s-derivatives of $u$. Standard tame multiplication estimates (i.e.\ \emph{Moser estimates}) can be used to bound $L^2$-, or indeed fixed regularity $H_{\seop,\eps}$-norms of these terms. An important aspect of our arguments is that we only ever use at most a fixed finite number of se-derivatives; thus, we do not need to record tameness in the se-regularity order.

\subsection{Outline of the paper}
\label{SsIO}

The structure of the paper, and a list of the main results proved in each of its sections, is as follows.

\begin{itemize}
\item In~\S\ref{SF}, we develop the analytic toolbox for our analysis: we describe in detail the various classes of vector fields which arose above (in particular se-, s-, 3b-, and edge vector fields), as well as the associated classes of differential operators (and their model operators), Sobolev spaces, and pseudodifferential operators.
\item In~\S\ref{SGl}, we give the precise definition of glued spacetimes (Definition~\ref{DefGl}) and study the null-bicharacteristic flow in the se-cotangent bundle in detail.
\item In~\S\ref{SEst}, we prove se-microlocal estimates at (generalized) radial sets and at the trapped set; this culminates in Theorem~\ref{ThmEstStd} which provides uniform se-regularity control (i.e.\ a rigorous version of the estimate~\eqref{EqIFMicroReg}). We also collect information on the spectral family $\hat L(\sigma)$ of the Kerr model operator which is accessible without the assumption of mode stability.
\item In~\S\ref{SSc}, we study operators $\wt L=(L_\eps)_{\eps\in(0,1)}$ for which the Kerr model $L$ satisfies mode stability in $\Im\sigma\geq 0$, focusing on the scalar wave operator treated in Theorem~\ref{ThmI}. The main results are Theorems~\ref{ThmSc3b} (the 3b-estimate~\eqref{EqIFModelKerr} on Kerr), Theorems~\ref{ThmScUnif} and \ref{ThmScUnifSm} (uniform estimates on general or on small domains), and Theorem~\ref{ThmScS} (higher s-regularity as in~\eqref{EqIEstHi}). We illustrate the capabilities of our estimates in Theorem~\ref{ThmScSG} by solving linear equations with non-perturbative (i.e.\ $\cO(\eps^\infty)$) source terms.
\item In~\S\ref{SN}, we lay the groundwork for nonlinear problems by proving tame estimates (Theorem~\ref{ThmNTame}) and setting up a suitable version of the Nash--Moser iteration scheme (Theorem~\ref{ThmNTameNM}), including the relevant smoothing operators. As an illustration, we use this framework to solve a toy nonlinear wave equation (Theorem~\ref{ThmNToy}).
\item Appendix~\ref{SB} recalls fundamental notions of geometric singular analysis such as b-vector fields and operators, blow-ups, conormality and polyhomogeneity, and analysis on manifolds with bounded geometry.
\item Appendix~\ref{STrap} is devoted to the construction of extensions of (un)stable trapped sets---which is a purely dynamical result---as needed in our proof of uniform trapping estimates in~\S\ref{SsEstTrap}.
\end{itemize}

\section{Vector fields, function spaces, pseudodifferential operators}
\label{SF}

In \S\S\ref{SsFE}--\ref{SsF3}, we recall notions from edge and 3b-analysis from \cite{MazzeoEdge,Hintz3b}. The development of novel material on se-analysis begins in~\S\ref{SsFse}. The notion of s-regularity is introduced in~\S\ref{SsFs}. Up until this point, we only discuss Sobolev spaces with integer (or, more generally, constant real) orders. In our application, we need spaces with microlocally varying differentiability orders; these, and the underlying classes of pseudodifferential operators, are discussed in~\S\ref{SsFVar}. The reader unfamiliar with basic notions of geometric singular analysis should consult Appendix~\ref{SB} before proceeding.

\subsection{Towards diffraction by the gluing curve: edge analysis}
\label{SsFE}

Let $M_\circ$ be a manifold with boundary, and suppose the boundary $\pa M_\circ$ is the total space of a fibration $Z-\pa M_\circ\to Y$. Then the space $\Ve(M_\circ)$ of \emph{edge vector fields} \cite{MazzeoEdge} consists of all smooth vector fields on $M_\circ$ which are tangent to the fibers of $M_\circ$. This is a Lie algebra and $\CI(M_\circ)$-module. In local coordinates $r\geq 0$, $\omega\in\R^{n_Y}$, $t\in\R^{n_Z}$ in which the fibration $\pa M_\circ\to Y$ is given by the projection $(\omega,t)\mapsto\omega$, the space $\Ve(M_\circ)$ thus consists of all linear combinations, with smooth coefficients, of the vector fields
\[
  r\pa_{t_i}\ (1\leq i\leq n_Z),\quad r\pa_r,\quad \pa_{\omega_j}\ (1\leq j\leq n_Y).
\]
These vector fields are a local frame of the \emph{edge tangent bundle}
\[
  \Te M_\circ \to M_\circ,
\]
the smooth sections of which are thus precisely the edge vector fields. An \emph{edge differential operator} $A\in\Diffe^m(M_\circ)$ of order $m\in\N_0$ is the locally finite sum of up to $m$-fold compositions of elements of $\Ve(M_\circ)$. If $r\in\CI(M_\circ)$ is a boundary defining function, one can also consider spaces of weighted edge differential operators, denoted
\[
  \Diffe^{m,\ell}(M_\circ)=r^{-\ell}\Diffe^m(M_\circ)=\{r^{-\ell}P\colon P\in\Diffe^m(M_\circ)\}.
\]
Such operators define continuous linear maps on $\CIdot(M_\circ)$.

We shall encounter the edge setting in the following way: if $M$ is a smooth $(n+1)$-dimensional manifold without boundary and $\cC\subset M$ is a closed embedded $1$-dimensional submanifold, then $M_\circ:=[M;\cC]$ is a manifold with boundary, and the restriction of the blow-down map $\upbeta\colon[M;\cC]\to M$ to the front face $\pa M_\circ$ defines a fibration $\Sph^{n-1}-\pa M_\circ\to\cC$. In this case $n_Z=1$ and $n_Y=n-1$, so $\Ve(M_\circ)$ is spanned by $r\pa_t$ (with $t$ a coordinate along $\cC$), $r\pa_r$, and spherical vector fields $\pa_{\omega_j}$. Every element $P\in\Diffe^m(M_\circ)$ can thus locally be written in the form
\[
  P = \sum_{j+k+|\alpha|\leq m} a_{j k\alpha}(t,r,\omega) (r D_t)^j (r D_r)^k D_\omega^\alpha,\qquad a_{j k\alpha}\in\CI(\R_t\times[0,\infty)_r\times\Sph^{n-1}_\omega).
\]
The edge normal operator of $P$ at time $t_0\in\R$ is given by freezing coefficients at $t=t_0$ and $r=0$, so
\[
  N_{\eop,t_0}(P) := \sum_{j+k+|\alpha|\leq m} a_{j k\alpha}(t_0,0,\omega) (r D_t)^j (r D_r)^k D_\omega^\alpha;
\]
this is an operator on $\R_t\times[0,\infty)_r\times\Sph^{n-1}_\omega$ which is translation invariant in $t$ and dilation-invariant in $(t,r)$. If the coefficients $a_{j k\alpha}(t_0,0,\omega)$ are independent of the value $t_0$ of $t$, we drop the subscript `$t_0$'. (This property is sensitive to the particular choice of coordinates.) Formally passing to the Fourier transform in $t$ by replacing $D_t$ with multiplication by $-\sigma\in\C$, and subsequently setting $r':=r|\sigma|$ and $\hat\sigma=\frac{\sigma}{|\sigma|}$ produces the \emph{reduced normal operator}
\begin{equation}
\label{EqFERedNormOp}
\begin{split}
  \hat N_{\eop,t_0}(P,\hat\sigma) &:= \sum_{j+k+|\alpha|\leq m} a_{j k\alpha}(t_0,0,\omega) (-\hat\sigma r')^j (r' D_{r'})^k D_\omega^\alpha \\
    &\in \Diff_{\bop,\scop}^{m,0,m}([0,\infty]_{r'}\times\Sph^{n-1}) = (1+r')^m\Diff_{\bop,\scop}^m([0,\infty]_{r'}\times\Sph^{n-1}),
\end{split}
\end{equation}
where $\Diff_{\bop,\scop}^m([0,\infty]_{r'}\times\Sph^{n-1})$ is the space of differential operators which on $r'^{-1}([0,4))$ are b-operators and on $\rho'{}^{-1}([0,4))$, where $\rho'=r'{}^{-1}$, scattering operators of order $m$. (Note that $(1+r')^{-1}$ is a defining function of $\rho'=0$.) We write this as $\hat N_\eop(P,\hat\sigma)$ when $a_{j k\alpha}(t_0,0,\omega)$ is independent of $t_0$.

These notions extend in a straightforward manner to operators acting between spaces of sections of vector bundles $E,F\to M_\circ$, that is, $P\in\Diffe^m(M_\circ;E,F)$. Then $\hat N_{\eop,t_0}(P,\hat\sigma)$ is an element of $\Diff_{\bop,\scop}^{m,0,m}([0,\infty]_{r'}\times\Sph^{n-1};\pi_{t_0}^*E,\pi_{t_0}^*F)$, where $\pi_{t_0}\colon(r',\omega)\mapsto(t_0,0,\omega)\in\pa M_\circ$. Thus, the bundle $\pi_{t_0}^*E$ is the pullback of $E|_{\upbeta^{-1}(t_0)}$ to $[0,\infty]\times\Sph^{n-1}$, similarly for $\pi_{t_0}^*F$. It may happen that for an identification of the bundles $E|_{\upbeta^{-1}(t)}\to\upbeta^{-1}(t)=\Sph^{n-1}$ for varying $t$, similarly for $F$, the operators $\hat N_{\eop,t_0}(P,\hat\sigma)$ are $t_0$-independent; in this case we again drop `$t_0$' from the notation.

The associated Sobolev spaces are called \emph{weighted edge Sobolev spaces}. We first define them when $M_\circ$ is compact. Fixing a weighted b-density on $M_\circ$, such as $r^{n-1}|\dd t\,\dd r\,\dd g_{\Sph^{n-1}}|$ in local coordinates, to define $L^2(M_\circ)$, one defines $\He^{s,\ell}(M_\circ)=r^\ell\He^s(M_\circ)=\{r^\ell u\colon u\in\He^s(M_\circ)\}$ for $s\in\N_0$, $\ell\in\R$, where
\[
  \He^s(M_\circ) = \{ u \in L^2(M_\circ) \colon P u\in L^2(M_\circ)\ \forall\,P\in\Diffe^m(M_\circ) \}.
\]
The space $\He^{s,\ell}(M_\circ)$ can be given the structure of a Hilbert space, with squared norm given by the sum of squared $L^2$-norms of $r^{-\ell}P_j u$ where $\{P_j\}\subset\Diffe^m(M_\circ)$ is a finite subset spanning $\Diffe^m(M_\circ)$ over $\CI(M_\circ)$. Weighted edge Sobolev spaces $\He^{s,\ell}(M_\circ;E)$ of sections of vector bundles $E\to M_\circ$ are defined as distributions on $(M_\circ)^\circ$ which are tuples of elements of $\He^{s,\ell}(M_\circ)$ when multiplied by any smooth function on $M_\circ$ on whose support $E$ is trivialized. Thus, every $P\in\Diffe^{m,\ell}(M_\circ;E,F)$ defines a bounded operator $\He^{s,\ell'}(M_\circ;E)\to\He^{s-m,\ell'-\ell}(M_\circ;F)$ when $s\geq m$. Edge spaces of extendible and/or supported distributions on precompact subsets of manifolds $M_\circ$ with fibered boundary can be defined analogously to~\eqref{EqBHsupp}--\eqref{EqBHsuppext}.

\subsection{Towards scattering theory for the small black hole: 3b-analysis}
\label{SsF3}

We now work on the space
\begin{equation}
\label{EqF3cM}
  \cM := \bigl[\,\ol{\R^{1+n}}; \pa(\ol\R\times\{0\})\,\bigr]
\end{equation}
with interior $\cM^\circ=\R_t\times\R^n_x$. We denote the front face by $\cT=\cT_+\sqcup\cT_-$ where $\cT_\pm$ is the lift of $\{\pm\infty\}\times\{0\}$; and we denote the lift of $\pa\ol{\R^{1+n}}$ by $\cD$. Thus, one can use $\rho_\cD=\frac{1}{\la x\ra}$ and $\rho_\cT=\frac{\la x\ra}{\la(t,x)\ra}$ as defining functions of $\cD$ and $\cT$, respectively. Following \cite[\S3.1]{Hintz3b}, we define the space of \emph{3b-vector fields} by $\Vtb(\cM):=\rho_\cD^{-1}\Vtsc(\cM)$. This Lie algebra and $\CI(\cM)$-module is thus spanned over $\CI(\cM)$ by the vector fields
\begin{equation}
\label{EqF3VF}
  \la x\ra\pa_t,\ \la x\ra\pa_{x_1},\ \ldots,\ \la x\ra\pa_{x_n}.
\end{equation}
The corresponding class of (weighted) differential operators is denoted
\[
  \Difftb^{m,\ell_\cD,\ell_\cT}(\cM) = \rho_\cD^{-\ell_\cD}\rho_\cT^{-\ell_\cT}\Difftb^m(\cM).
\]
The associated Sobolev spaces, relative to a fixed choice of weighted b-density on $\cM$, are denoted
\[
  \Htb^{s,\alpha_\cD,\alpha_\cT}(\cM) = \rho_\cD^{\alpha_\cD}\rho_\cT^{\alpha_\cT}\Htb^s(\cM).
\]
When we consider spaces $\bar H_\tbop^{s,\alpha_\cD,\alpha_\cT}(U)$ of extendible distributions on an open subset $U\subset\cM$ (which in applications is parameter-dependent), we shall always work with fixed boundary defining functions $\rho_\cD,\rho_\cT$ and use the quotient norm, i.e.\ the smallest norm of any extension. Carefully note that even in the case that $\bar U\cap\pa\cM=\emptyset$, this norm depends on the weights $\alpha_\cD,\alpha_\cT$ at $\cD,\cT\subset\cM$. For example, in the case $s=0$, we have
\[
  \|u\|_{\bar H_\tbop^{0,\alpha_\cD,\alpha_\cT}(U)} = \| \rho_\cD^{-\alpha_\cD}\rho_\cT^{-\alpha_\cT}u \|_{L^2(U)}.
\]
Thus, when $U_\eps=\eps^{-1}U_1$ where $\emptyset\neq U_1\subset\cM$ and $\eps>0$ is a parameter, then the norm on $\bar H_\tbop^{0,\alpha_\cD,\alpha_\cT}(U_\eps)$ is uniformly equivalent to the $L^2(U_\eps)$ norm if and only if $\alpha_\cD=\alpha_\cT=0$.

Here and below, we no longer spell out the purely notational changes required for accommodating vector bundles.

\subsubsection{Spectral family and Fourier transform}

Consider a \emph{stationary} 3b-differential operator
\[
  P = \sum_{j+|\alpha|\leq m} a_{j\alpha}(x) (\la x\ra D_t)^j (\la x\ra D_x)^\alpha \in \Diff_{\tbop,\rm I}^{m,\ell_\cD,0}(\cM),
\]
where $a_{j\alpha}\in\la x\ra^{\ell_\cD}\CI(\ol{\R^n_x})$; the subscript `I' indicates the stationarity of $P$, i.e.\ the $t$-independence of its coefficients (or equivalently $[P,\pa_t]=0$). We can then define the \emph{spectral family} of $P$ by formally replacing $D_t$ with multiplication by $-\sigma\in\C$, which gives
\begin{equation}
\label{EqF3SpecFam}
  \hat P(\sigma) = \sum_{j+|\alpha|\leq m} a_{j\alpha}(x) (-\sigma\la x\ra)^j (\la x\ra D_x)^\alpha \in \Diff^m(\R^n).
\end{equation}
Note that $\hat P(\sigma)\in\Diffsc^{m,m+\ell_\cD}(\ol{\R^n})=\la x\ra^{m+\ell_\cD}\Diffsc^m(\ol{\R^n})$ with smooth dependence on $\sigma\in\C$, and for $\sigma=0$ one in fact has $\hat P(0)\in\Diffb^{m,\ell_\cD}(\ol{\R^n})$.

In order to record the structure of $\hat P(\sigma)$ as a family of operators more precisely, we recall from \cite{GuillarmouHassellResI,HintzKdSMS} (see also \cite[\S2.4]{Hintz3b}) for the low frequency regime the space
\[
  \cX_\scbtop := \bigl[ [0,1)_{|\sigma|}\times\cX; \{0\}\times\pa\cX \bigr],\qquad \cX:=\ol{\R^n},
\]
with boundary hypersurfaces denoted $\scface$ (`scattering face', the lift of $[0,1)\times\pa\cX$), $\tface$ (`transition face', the front face), and $\zface$ (`zero face', the lift of $\{0\}\times\cX$); and we further recall the Lie algebra $\Vscbt(\cX)$ of \emph{sc-b-transition vector fields}, consisting of all $V\in\rho_\scface\Vb(\cX_\scbtop)$ which are tangential to all $|\sigma|$-level sets (i.e.\ $\dd|\sigma|(V)=0$). Here $\rho_\scface\in\CI(\cX_\scbtop)$ is a defining function of $\scface$; a possible choice is $\rho_\scface=\frac{\rho}{\rho+|\sigma|}$ where $\rho=\la x\ra^{-1}$. Thus, sc-b-transition vector fields are spanned over $\CI(\cX_\scbtop)$ by $\rho_\scface\la x\ra\pa_x$. (A sc-b-transition vector field is thus a family of vector fields on $\cX^\circ$ with smooth dependence on the parameter $|\sigma|\in(0,1)$ and a specific degeneration near $|\sigma|=0$ and/or $\pa\cX$.) See Figure~\ref{FigF3scbt}.

\begin{figure}[!ht]
\centering
\includegraphics{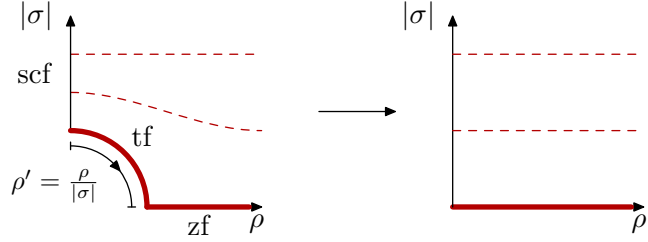}
\caption{\textit{On the left:} the sc-b-transition double space in $\rho=\la x\ra^{-1}<1$, with the factor $\Sph^{n-1}_\omega$, $\omega=\frac{x}{|x|}$, suppressed. \textit{On the right:} the product space $[0,1)_{|\sigma|}\times\ol{\R^n}$. The arrow is the blow-down map.}
\label{FigF3scbt}
\end{figure}

Since $\rho_\scface|\sigma|\la x\ra=\frac{\rho}{\rho+|\sigma|}|\sigma|\rho^{-1}=\frac{|\sigma|}{\rho+|\sigma|}\in\CI(\cX_\scbtop)$, we see from~\eqref{EqF3SpecFam} that for any fixed $\hat\sigma\in\C$, $|\hat\sigma|=1$,
\[
  \bigl(\rho^{\ell_\cD}\hat P(\hat\sigma|\sigma|)\bigr)_{|\sigma|\in[0,1)} \in \Diffscbt^{m,m,0,0}(\cX) = \rho_\scface^{-m}\Diffscbt^m(\cX),
\]
and therefore
\begin{equation}
\label{EqF3scbt}
  \bigl(|\sigma|^{\ell_\cD}\hat P(\hat\sigma|\sigma|)\bigr)_{|\sigma|\in[0,1)} \in \Diffscbt^{m,m+\ell_\cD,0,-\ell_\cD}(\cX) = \rho_\scface^{-(m+\ell_\cD)}\rho_\zface^{\ell_\cD}\Diffscbt^m(\cX).
\end{equation}
The second, third, and fourth orders refer to weights at $\scface$, $\tface$, and $\zface$, respectively, and $\rho_H\in\CI(\cX_\scbtop)$ is a defining function of $H\in\cM_1(\cX_\scbtop)$, with possible choices being $\rho_\scface=\frac{\rho}{\rho+|\sigma|}$ as above (with $\rho\in\CI(\cX)$ a boundary defining function), $\rho_\tface=\rho+|\sigma|$, and $\rho_\zface=\frac{|\sigma|}{\rho+|\sigma|}$.

This operator family has a \emph{transition face normal operator} $N_\tface(P,\hat\sigma)$, given by restriction to $\tface$: to define it, we first rewrite~\eqref{EqF3SpecFam} in inverse polar coordinates $\rho=|x|^{-1}$, $\omega=\frac{x}{|x|}$ as
\[
  \hat P(\sigma) = \sum_{j+k+|\alpha|\leq m} \rho^{-\ell_\cD}a^0_{j k\alpha}(\sigma,\rho,\omega) \Bigl(\frac{\sigma}{\rho}\Bigr)^j (\rho D_\rho)^k D_\omega^\alpha
\]
where $a^0_{j k\alpha}\in\CI(\cX)$. We then pass to $\rho'=\frac{\rho}{|\sigma|}$; thus
\begin{equation}
\label{EqF3scbttf}
  \tface=[0,\infty]_{\rho'}\times\Sph^{n-1}.
\end{equation}
Restricting the coefficients of $(|\sigma|^{\ell_\cD}\hat P(\hat\sigma|\sigma|))_{|\sigma|\in[0,1)}$ to $\sigma=0$ gives
\begin{align*}
  N_\tface(P,\hat\sigma) &= \sum_{k+|\alpha|\leq m} a_{\tface,k\alpha}(\hat\sigma,\hat\rho,\omega) (\rho' D_{\rho'})^k D_\omega^\alpha \\
    &\in \Diff_{\scop,\bop}^{m,m+\ell_\cD,-\ell_\cD}(\tface) = \Bigl(\frac{\rho'}{1+\rho'}\Bigr)^{-m-\ell_\cD}(1+\rho')^{-\ell_\cD}\Diff_{\scop,\bop}^m([0,\infty]_{\rho'}\times\Sph^{n-1}),
\end{align*}
where the coefficients $a_{\tface,k\alpha}\in(\frac{\rho'}{1+\rho'})^{-m-\ell_\cD}(1+\rho')^{-\ell_\cD}\CI(\tface)$ (with smooth dependence on $\hat\sigma$) are given by $a_{\tface,k\alpha}(\hat\sigma,\rho',\omega):=\sum_{j\leq m-k-|\alpha|} a^0_{j k\alpha}(0,0,\omega)\hat\sigma^j\rho'^{-j-\ell_\cD}$.

For the high frequency regime, we introduce $h=|\sigma|^{-1}$ and $z=\sigma/|\sigma|$; then
\begin{equation}
\label{EqF3sch}
  \hat P(h^{-1}z)\in\Diff_{\scop,\semi}^{m,m+\ell_\cD,m}(\cX)=h^{-m}\Diff_{\scop,\semi}^{m,m+\ell_\cD,0}(\cX)=h^{-m}\rho^{-m-\ell_\cD}\Diff_{\scop,\semi}^m(\cX)
\end{equation}
(where now $\rho\in\CI(\cX)$ is a global boundary defining function such as $\rho=\la x\ra^{-1}$), with smooth dependence on $z\in\C$, $|z|=1$; the space $\Diffsch^m(\cX)$ is defined in the usual way relative to the space of \emph{semiclassical scattering vector fields} $\Vsch(\cX)$ which are vector fields of the form $V=h W$ where $W\in\CI([0,1)_h;\Vsc(\cX))$. See \cite{VasyZworskiScl} and \cite[\S2.3]{Hintz3b} for details.

We next discuss function spaces. Fix on $\cX$ a weighted b-density $\mu$ and then on $\cM$ the (weighted b-)density $|\dd t|\otimes\mu$. By the Plancherel theorem, the Fourier transform, here with the convention
\[
  \hat u(\sigma,x) = (\cF u)(\sigma,x) := \int_\R e^{i\sigma t} u(t,x)\,\dd t,\qquad x\in\cX,
\]
induces an isomorphism $L^2(\cM)\cong L^2(\R_\sigma;L^2(\cX))$. Moreover, the Fourier transform intertwines the vector fields~\eqref{EqF3VF} with $-i\sigma\la x\ra$, $\la x\ra\pa_x$. For $\rho:=|x|^{-1}<1$ and $\omega=\frac{x}{|x|}$ we can replace these by $\sigma\rho^{-1}$, $\rho\pa_\rho$, $\pa_\omega$, i.e.\ $\rho_\scface^{-1}=\frac{\rho+|\sigma|}{\rho}$ times $\frac{\sigma}{\rho+|\sigma|}$, $\rho_\scface\rho\pa_\rho$, $\rho_\scface\pa_\omega$; on the other hand, for $|\sigma|>1$, we can replace them by $h^{-1}\rho^{-1}$ times $1$, $h\pa_x$. Thus, the Fourier transform gives an isomorphism
\begin{equation}
\label{EqF3FT}
\begin{split}
  \cF &\colon \Htb^{m,\alpha_\cD,0}(\cM) \to L^2\bigl(\R_\sigma; H_{\wh{\tbop},\sigma}^{m,\alpha_\cD}(\cX) \bigr), \\
  &\qquad
    H_{\wh{\tbop},\sigma}^{m,\alpha_\cD}(\cX) :=
      \begin{cases}
        H_{\scbtop,|\sigma|}^{m,m+\alpha_\cD,\alpha_\cD,0}(\cX), & |\sigma|\leq 1, \\
        H_{\scop,|\sigma|^{-1}}^{m,m+\alpha_\cD,m}(\cX), & |\sigma| > 1,
      \end{cases}
\end{split}
\end{equation}
where we set
\begin{align*}
  \|u\|_{H_{\scbtop,|\sigma|}^{m,r,l,b}(\cX)}^2 &:= \sum_{|\alpha|\leq m} \| \rho_\scface^{-r}\rho_\tface^{-l}\rho_\zface^{-b} (\rho_\scface\la x\ra\pa_x)^\alpha u\|_{L^2(\cX)}^2, \\
  \|u\|_{H_{\scop,h}^{m,r,b}(\cX)}^2 &:= \sum_{|\alpha|\leq m} \| \la x\ra^r h^{-b} (h\pa_x)^\alpha u\|_{L^2(\cX)}^2.
\end{align*}
See \cite[Proposition~4.24]{Hintz3b} for details.

The $H_{\scbtop,|\sigma|}^{m,r,l,b}(\cX)$-norms can in turn be described using standard scattering and/or b-norms (see Appendix~\ref{SB}). Write $\mu=\rho^{-w}\mu_\bop$ where $0<\mu_\bop\in\CI(\cX;\Omegab\cX)$ is an unweighted b-density; and use a density $\mu'$ on $\tface$ which is the product of an unweighted b-density $\mu'_\bop$ with $\rho'{}^{-w}$. Let $\chi\in\CIc([0,1))$ be equal to $1$ near $0$, and set $\chi_\zface=\chi(\frac{|\sigma|}{\rho+|\sigma|})$ and $\chi_\tface=\chi(|\sigma|+\rho)$; define moreover $\Psi_{|\sigma|}(\rho',\omega):=(|\sigma|,|\sigma|\rho',\omega)\in\cX_\scbtop$ for $|\sigma|\neq 0$ (using the coordinates $|\sigma|$, $\rho\geq 0$, $\omega\in\Sph^{n-1}$ on the right). Recalling~\eqref{EqF3scbttf}, we then have uniform (in $|\sigma|$) norm equivalences
\begin{equation}
\label{EqF3scbtNormzf}
\begin{split}
  \| \chi_\zface u \|_{H_{\scbtop,|\sigma|}^{m,r,l,b}(\cX;\mu)} &\sim |\sigma|^{-b} \| \chi_\zface u \|_{\Hb^{m,l-b}(\cX)}, \\
  \| \chi_\tface u \|_{H_{\scbtop,|\sigma|}^{m,r,l,b}(\cX)} = \|\chi_\tface u\|_{H_{\scbtop,|\sigma|}^{m,r+\frac{w}{2},l+\frac{w}{2},b}(\cX;\mu_\bop)} &\sim |\sigma|^{-l-\frac{w}{2}} \| \Psi_{|\sigma|}^*(\chi_\tface u) \|_{H_{\scop,\bop}^{m,r+\frac{w}{2},b-l-\frac{w}{2}}(\tface;\mu'_\bop)} \\
    & = |\sigma|^{-l-\frac{w}{2}} \| \Psi_{|\sigma|}^*(\chi_\tface u) \|_{H_{\scop,\bop}^{m,r,b-l}(\tface;\mu')}.
\end{split}
\end{equation}
See \cite[Proposition~2.21]{Hintz3b}.

\subsection{Combination; uniformity in \texorpdfstring{$\eps$}{epsilon}: se-analysis}
\label{SsFse}

Let $M$ be an $(n+1)$-dimensional manifold without boundary (which will play the role of the spacetime into which we wish to glue a black hole), and let $\cC\subset M$ be a closed embedded $1$-dimensional submanifold with orientable normal bundle. We then recall from \cite[Definitions~3.1 and 3.9]{HintzGlueLocI}:

\begin{definition}[Total gluing spacetime; vector fields]
\label{DefFse}
  The \emph{total gluing spacetime} is
  \[
    \wt M := \bigl[ \, [0,1)_\eps \times M ; \{0\} \times \cC\,\bigr],\qquad \text{with blow-down map}\ \wt\upbeta\colon \wt M \to [0,1)_\eps\times M.
  \]
   for $|\sigma|\neq 0$ (using the coordinates $|\sigma|$, $\rho\geq 0$, $\omega\in\Sph^{n-1}$ on the right) for $|\sigma|\neq 0$ (using the coordinates $|\sigma|$, $\rho\geq 0$, $\omega\in\Sph^{n-1}$ on the right)The boundary hypersurfaces of $\wt M$ are denoted $M_\circ=[M;\cC]$ (the lift of $\{0\}\times M$) and $\hat M=\ol{N}\cC\cong\cC\times\ol{\R^n}$ (the front face, i.e.\ lift of $\{0\}\times\cC$, which is the radially compactified normal bundle of $\cC$ in $M$). We write $\upbeta_\circ=\wt\upbeta|_{M_\circ}\colon M_\circ\to M$ and $\hat\upbeta=\wt\upbeta|_{\hat M}\colon\hat M\to\cC$. Furthermore, we set $M_\eps:=\{\eps\}\times M\subset\wt M$ for $\eps>0$. Defining the vertical tangent bundle $\wt T([0,1)\times M):=[0,1)\times T M\to[0,1)\times M$, we set
  \[
    \wt T\wt M := \wt\upbeta^* \wt T([0,1)\times M).
  \]
  With respect to the fibration $\hat\upbeta$ of $\hat M$, we moreover define the Lie algebra and $\CI(\wt M)$-module of \emph{se-vector fields} by
  \[
    \Vse(\wt M) = \{ V\in\Vb(\wt M) \colon \dd\eps(V)=0,\ V\ \text{is tangent to the fibers of $\hat M$} \}.
  \]
  The fibers of $\hat M$ are denoted $\hat M_p:=\hat\upbeta^{-1}(p)$ for $p\in\cC$; when an identification of $\cC$ with a subset of $\R_t$ is chosen, we also write $\hat M_t$ instead of $\hat M_p$ when $t\in\R$ corresponds to the point $p\in\cC$. Defining functions of $M_\circ$ and $\hat M$ (also local ones) are denoted $\rho_\circ$ and $\hat\rho$, respectively; we always require $\rho_\circ\hat\rho=\eps$. We write $\Diffse^m(\wt M)$ for the space of locally finite sums of up to $m$-fold compositions of elements of $\Vse(\wt M)$; 0-fold compositions are defined to be multiplication operators by elements of $\CI(\wt M)$. We finally write
  \[
    \Diffse^{m,\ell_\circ,\hat\ell}(\wt M) := \rho_\circ^{-\ell_\circ}\hat\rho^{-\hat\ell}\Diffse^m(\wt M).
  \]
\end{definition}

\begin{rmk}[Action of operators]
\label{RmkFseAction}
  If $P=(P_\eps)_{\eps\in(0,1)}\in\Diffse^m(\wt M)$ and $u\in\CI(M)$, we shall write $P u$ for the function $P_\eps u$ when the value of $\eps$ is clear from the context (typically from the subscript of an se-Sobolev norm, see below).
\end{rmk}

If $t\in\R$, $x\in\R^n$ are local coordinates near a point in $\cC$, with $\cC=\{x=0\}$, then one can take $\hat\rho=(\eps^2+|x|^2)^{1/2}$ and $\rho_\circ=\frac{\eps}{(\eps^2+|x|^2)^{1/2}}=\la\hat x\ra^{-1}$ where $\hat x=\frac{x}{\eps}$. A local spanning set of $\Vse(\wt M)$ is $\hat\rho\pa_t$, $\hat\rho\pa_{x_i}$ ($1\leq i\leq n$); this is thus a local frame for the \emph{se-tangent bundle}
\[
  \Tse\wt M \to \wt M,
\]
and we have
\begin{equation}
\label{EqFsewtTTse}
  \wt T\wt M = \hat\rho^{-1}\,\Tse\wt M,
\end{equation}
in the sense that the map $\CI(\wt M;\wt T\wt M)\ni V\mapsto\hat\rho V\in\CI(\wt M;\Tse\wt M)$ is an isomorphism of $\CI(\wt M)$-modules. An se-vector field $V\in\Vse(\wt M)$ is thus a smooth family, parameterized by $\eps\in(0,1)$, of smooth vector fields $V|_{M_\eps}$ on $M\cong\{\eps\}\times M=M_\eps$ which degenerate in a specific fashion near $\cC$ as $\eps\to 0$.

\begin{definition}[se-Sobolev spaces]
\label{DefFseSob}
  Suppose $M$ is compact. Fix a smooth positive density on $M$. Then for $\eps>0$ we define the \emph{se-Sobolev space} $H_{\seop,\eps}^m(M)=H^m(M)$ to have the $\eps$-dependent squared norm
  \[
    \|u\|_{H_{\seop,\eps}^m(M)}^2 := \sum_j \|P_j|_{M_\eps} u \|_{L^2(M)}^2,
  \]
  where $\{P_j\}\subset\Diffse^m(\wt M)$ is a \emph{fixed} (i.e.\ $\eps$-independent) finite spanning set.\footnote{Thus, the norm depends on $\cC$, even though we do not make this explicit in the notation.} The \emph{weighted se-Sobolev space} $H_{\seop,\eps}^{m,\alpha_\circ,\hat\alpha}(M)$, where $\alpha_\circ$, $\hat\alpha\in\R$, is equal to $H^m(M)$ for all $\eps>0$ as a vector space, but with norm
  \[
    \|u\|_{H_{\seop,\eps}^{m,\alpha_\circ,\hat\alpha}(M)} := \| \rho_\circ^{-\alpha_\circ}\hat\rho^{-\hat\alpha}u \|_{H_{\seop,\eps}^m(M)}
  \]
  where $\rho_\circ,\hat\rho$ are \emph{fixed} defining functions.
\end{definition}

If instead of a fixed positive density on $M$ one chooses a smooth positive section of the density bundle associated with $\wt T\wt M$, one obtains uniformly equivalent norms.

\begin{rmk}[Norm in local coordinates]
\label{RmkFseSobLoc}
  For $u$ with support in a fixed compact subset of a coordinate chart $\R_t\times\R^n_x$, the $H_{\seop,\eps}^m(M)$-norm is uniformly (for $\eps\in(0,1]$) equivalent to
  \begin{equation}
  \label{EqFseNormLoc}
    \|u\|_{H_{\seop,\eps}^m(M)}^2 = \sum_{j+|\alpha|\leq m} \bigl\| \bigl((\eps^2+|x|^2)^{1/2}\pa_t\bigr)^j \bigl((\eps^2+|x|^2)^{1/2}\pa_x\bigr)^\alpha u \bigr\|_{L^2(\R^{1+n}_{t,x})}^2.
  \end{equation}
\end{rmk}

These spaces are Hilbert spaces, and one can define them for $m\in\R$ via duality (with respect to $L^2(M)$) and interpolation (or better yet via testing with se-pseudodifferential operators, defined below). Every element $P\in\Diffse^{m,\ell_\circ,\hat\ell}(\wt M)$ defines, upon restriction to $M_\eps$, a continuous map $H^s\to H^{s-m}$, whose norm between the appropriate se-spaces is uniformly bounded: for all $s,\alpha_\circ,\hat\alpha$, there exists a constant $C$ \emph{which is independent of $\eps$} so that
\[
  \|u\|_{H_{\seop,\eps}^{s-m,\alpha_\circ-\ell_\circ,\hat\alpha-\hat\ell}(M)} \leq C\|P u\|_{H_{\seop,\eps}^{s,\alpha_\circ,\hat\alpha}(M)}
\]
for all $\eps\leq\frac12$.

While se-Sobolev spaces are convenient for estimates, the regularity of coefficients of se-operators is better measured in $L^\infty$-based spaces:

\begin{definition}[se-continuity]
\label{DefFseCont}
  For $\alpha_\circ,\hat\alpha\in\R$, we write
  \begin{equation}
  \label{EqFseCont}
    \cC_{\seop,\eps}^{k,\alpha_\circ,\hat\alpha}(M) := \cC^k(M), \qquad
    \|u\|_{\cC_{\sop,\eps}^{k,\alpha_\circ,\hat\alpha}(M)} := \sum_j \| \rho_\circ^{-\alpha_\circ}\hat\rho^{-\hat\alpha} P_j|_{M_\eps} u \|_{L^\infty(M)},
  \end{equation}
  where $P_j$ is as in Definition~\eqref{DefFseSob}. Furthermore, we set
  \[
    \cC_\seop^{k,\alpha_\circ,\hat\alpha}(\wt M) := \Bigl\{ \wt u \colon \wt M\setminus(M_\circ\cup\hat M)\to\C \colon \sup_{\eps\in(0,1)}\|\wt u|_{M_\eps}\|_{\cC_{\seop,\eps}^{k,\alpha_\circ,\hat\alpha}(M)} < \infty \Bigr\}.
  \]
\end{definition}

\subsubsection{Normal operators}
\label{SssFseNormal}

In local coordinates $t\in\R$ and $x\in\R^n$ as above, an element of $\Diffse^m(\wt M)$ can be written as
\begin{equation}
\label{EqFseNormalP}
  P = \sum_{j+|\alpha|\leq m} a_{j\alpha}(\eps,t,x) (\hat\rho D_t)^j (\hat\rho D_x)^\alpha,
\end{equation}
where the $a_{j\alpha}$ are smooth on $\wt M$. We fix the choice $\hat\rho=(\eps^2+|x|^2)^{1/2}$; restricting the coefficients to $M_\circ$ then gives $a_{\circ,j\alpha}=a_{j\alpha}|_{M_\circ}\in\CI(M_\circ)$ (i.e.\ smooth in $t$, $|x|$, $\frac{x}{|x|}$) and in view of $\hat\rho|_{M_\circ}=|x|$ the edge operator
\[
  N_{M_\circ}(P) = \sum_{j+|\alpha|\leq m} a_{\circ,j\alpha} (|x|D_t)^j (|x|D_x)^\alpha \in \Diffe^m(M_\circ).
\]
More invariantly, the restriction of a defining function of $\hat M$ to $M_\circ$ is a defining function of $\pa M_\circ$. Thus, there is a natural isomorphism of cotangent bundles
\begin{equation}
\label{EqFseBundleEdge}
  \Tse_{M_\circ}^*\wt M \cong \Te^* M_\circ
\end{equation}
which is the adjoint of $\Tse_{M_\circ}\wt M\cong\Te M_\circ$.

We next turn to normal operators at $\hat M$. For fixed $t_0$, let us set
\begin{equation}
\label{EqFseHatCoord}
  \hat t := \frac{t-t_0}{\eps},\qquad
  \hat x := \frac{x}{\eps}.
\end{equation}
Then $\hat\rho D_t=\eps\la\hat x\ra\cdot\eps^{-1}D_{\hat t}=\la\hat x\ra D_{\hat t}$ and $\hat\rho D_x=\la\hat x\ra D_{\hat x}$. Note that $\hat M_{t_0}=\ol{\R^n_{\hat x}}$ via continuous extension of $\R^n\ni\hat x\mapsto\lim_{\eps\to 0} (\eps,t_0,\eps\hat x)\in\hat M_{t_0}$. Defining thus $a_{j\alpha}(t_0,\cdot):=a_{j\alpha}|_{\hat M_{t_0}}\in\CI(\hat M_{t_0})=\CI(\ol{\R^n_{\hat x}})$, the operator $P$ induces the stationary 3b-operator
\begin{equation}
\label{EqFseNhatMt0}
  N_{\hat M_{t_0}}(P) = \sum_{j+|\alpha|\leq m} a_{j\alpha}(t_0,\hat x) (\la\hat x\ra D_{\hat t})^j (\la\hat x\ra D_{\hat x})^\alpha \in \Diff_{\tbop,\rm I}^m(\cM),
\end{equation}
where we set $\cM = [ \ol{\R^{1+n}_{\hat t,\hat x}}; \pa(\ol\R_{\hat t}\times\{0\}) ]$ as in~\eqref{EqF3cM}. (When $N_{\hat M_{t_0}}(P)$ is independent of $t_0$ for a fixed choice of local coordinates $t,x$, we drop `$t_0$' from the notation.) These considerations lead to a natural isomorphism
\begin{equation}
\label{EqFseBundle3b}
  \Ttb_{\hat z}^*\cM \cong \Tse_z^*\wt M,
\end{equation}
where the function $z=z(\hat z)$ is defined by continuous extension of $\hat z=(\hat t,\hat x)\mapsto(t_0,\hat x)=z$ (where on the right we use the coordinates $t$, $\hat x$ near $\hat M^\circ$).  In other words, smooth sections of $\Tse_{\hat M_{t_0}}^*\wt M$ are identified with smooth \emph{stationary} sections of $\Ttb^*\cM$.

Using the material from~\S\ref{SsF3}, the operator $N_{\hat M_{t_0}}(P)$ in turn has a spectral family
\[
  \hat N_{\hat M_{t_0}}(P,\sigma) = \sum_{j+|\alpha|\leq m} a_{j\alpha}(t_0,\hat x) (-\sigma\la\hat x\ra)^j (\la\hat x\ra D_{\hat x})^\alpha,
\]
which in turn has transition face normal operators. More generally, for $P\in\Diffse^{m,\ell_\circ,\hat\ell}(\wt M)$, one has normal operators
\begin{align*}
  N_{M_\circ}(\eps^{\ell_\circ}P) &\in \Diffe^{m,\hat\ell-\ell_\circ}(M_\circ), \\
  N_{\hat M_{t_0}}(\eps^{\hat\ell}P) &\in \Difftb^{m,\ell_\circ-\hat\ell,0}(\cM).
\end{align*}
In particular, using~\eqref{EqF3scbt} with $\ell_\cD=\ell_\circ-\hat\ell$, we then have
\begin{align}
\label{EqFseNMcircWeighted}
  (r')^{-(\hat\ell-\ell_\circ)}\hat N_{\eop,t_0}\bigl(r^{\hat\ell-\ell_\circ}N_{M_\circ}(\eps^{\ell_\circ}P),\hat\sigma\bigr) &\in \Diff_{\bop,\scop}^{m,\hat\ell-\ell_\circ,m-(\hat\ell-\ell_\circ)}([0,\infty]_{r'}\times\Sph^2), \\
\label{EqFseNhatMWeighted}
  \bigl( |\sigma|^{\ell_\circ-\hat\ell}\hat N_{\hat M_{t_0}}(\eps^{\hat\ell}P,\hat\sigma|\sigma|)\bigr)|_{|\sigma|\in[0,1)} &\in \Diffscbt^{m,m+(\ell_\circ-\hat\ell),0,-(\ell_\circ-\hat\ell)}(\cX),
\end{align}
where $\cX=\ol{\R^n_{\hat x}}$; here $r'=r|\sigma|$. See \cite[\S3]{HintzGlueLocI} for a more geometric perspective on the normal operators of se-differential operators.

\begin{lemma}[Identification of tf- and reduced normal operators]
\label{LemmaFseIdent}
  Let $P\in\Diffse^{m,\ell_\circ,\hat\ell}(\wt M)$. Let $\hat\sigma\in\C$, $|\hat\sigma|=1$. Write
  \[
    \hat N_{\eop,t_0}(P,\hat\sigma) \in \Diff_{\bop,\scop}^{m,\hat\ell-\ell_\circ,m-(\hat\ell-\ell_\circ)}([0,\infty]_{r'}\times\Sph^2)
  \]
  for the reduced normal operator~\eqref{EqFseNMcircWeighted}. Furthermore, write
  \begin{equation}
  \label{EqFseNhatMWeightedtf}
    \hat N_{\tface,t_0}(P,\hat\sigma) \in \Diff_{\scop,\bop}^{m,m+(\ell_\circ-\hat\ell),-(\ell_\circ-\hat\ell)}(\tface),\qquad \tface=[0,\infty]_{\rho'}\times\Sph^2,
  \end{equation}
  for the transition face normal operator of~\eqref{EqFseNhatMWeighted}; here $\rho'=\frac{1/\hat r}{|\sigma|}$ for $\hat r=|\hat x|=\frac{|x|}{\eps}$. Then the pullback of $\hat N_{\eop,t_0}(P,\hat\sigma)$ along the map $(\rho',\omega)\mapsto(r',\omega)=(\rho'{}^{-1},\omega)$ is equal to $\hat N_{\tface,t_0}(P,\hat\sigma)$.
\end{lemma}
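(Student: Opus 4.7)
The plan is to verify the identity by a direct computation in local coordinates, leveraging the observation that both $\hat N_{\eop,t_0}(P,\hat\sigma)$ and $\hat N_{\tface,t_0}(P,\hat\sigma)$ are built from the same data at the corner $\pa M_\circ\cap\hat M\cap\{t=t_0\}\cong\Sph^{n-1}$: namely, the restriction to this corner of the coefficients of $P$, together with the algebraically identical expansion of the principal operators $(\hat\rho D_t)^j(\hat\rho D_x)^\alpha$ in spherical (resp.\ inverse polar) coordinates. I first write locally
\[
  P = \hat\rho^{-\hat\ell}\rho_\circ^{-\ell_\circ}\sum_{j+|\alpha|\leq m} a_{j\alpha}(\eps,t,x)(\hat\rho D_t)^j(\hat\rho D_x)^\alpha,\qquad a_{j\alpha}\in\CI(\wt M),
\]
and use $\eps=\rho_\circ\hat\rho$ to rewrite $\eps^{\ell_\circ}P=\hat\rho^{\ell_\circ-\hat\ell}\sum a_{j\alpha}(\hat\rho D_t)^j(\hat\rho D_x)^\alpha$ and $\eps^{\hat\ell}P=\rho_\circ^{\hat\ell-\ell_\circ}\sum a_{j\alpha}(\hat\rho D_t)^j(\hat\rho D_x)^\alpha$. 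The smoothness of $a_{j\alpha}$ on $\wt M$ implies that $a_{\circ,j\alpha}(t_0,0,\omega)$ (restriction to $M_\circ$ followed by $r\to 0$, $t=t_0$) and $\lim_{\hat r\to\infty}a_{j\alpha}(0,t_0,\hat r\omega)$ (restriction to $\hat M_{t_0}$ followed by $\hat r\to\infty$) define the same function $a_{j\alpha}^\circ(t_0,\omega)\in\CI(\Sph^{n-1})$.

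Next I compute each side explicitly. For the edge side, I pass to polar coordinates via $r D_{x_i}=\omega_i r D_r + D^\omega_i$, where $D^\omega_i$ are angular first-order operators on $\Sph^{n-1}$ independent of $r$ and commuting with $r D_r$. Expanding $(rD_x)^\alpha$ this way, freezing at $r=0$, $t=t_0$, and applying the reduced-normal recipe (substitute $D_t\mapsto-\hat\sigma|\sigma|$, $r=r'/|\sigma|$, which gives $rD_r=r'D_{r'}$), and finally including the weight $(r')^{-(\hat\ell-\ell_\circ)}$ per~\eqref{EqFseNMcircWeighted} yields
\[
  \hat N_{\eop,t_0}(P,\hat\sigma) = (r')^{\ell_\circ-\hat\ell}\sum_{j+|\alpha|\leq m} a_{j\alpha}^\circ(t_0,\omega)(-\hat\sigma r')^j\prod_i\bigl(\omega_i r'D_{r'}+D^\omega_i\bigr)^{\alpha_i}.
\]
For the 3b side, I use inverse polar coordinates $\hat r=|\hat x|$, $\omega=\hat x/\hat r$ to write $\la\hat x\ra D_{\hat x_i}=\omega_i\la\hat x\ra D_{\hat r}+(\la\hat x\ra/\hat r)D^\omega_i$ with the \emph{same} angular operators $D^\omega_i$, then form $\hat N_{\hat M_{t_0}}(\eps^{\hat\ell}P,\hat\sigma|\sigma|)$, multiply by $|\sigma|^{\ell_\circ-\hat\ell}$, and restrict to $\tface$ using $\rho'=1/(\hat r|\sigma|)$. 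There $|\sigma|\la\hat x\ra\to 1/\rho'$, $\la\hat x\ra/\hat r\to 1$, and $\hat r\pa_{\hat r}=-\rho'\pa_{\rho'}$ gives $\la\hat x\ra D_{\hat r}|_\tface=-\rho'D_{\rho'}$, whence
\[
  \hat N_{\tface,t_0}(P,\hat\sigma) = \rho'^{\hat\ell-\ell_\circ}\sum_{j+|\alpha|\leq m} a_{j\alpha}^\circ(t_0,\omega)(-\hat\sigma/\rho')^j\prod_i\bigl(-\omega_i\rho'D_{\rho'}+D^\omega_i\bigr)^{\alpha_i}.
\]

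Finally, I apply the pullback $\phi\colon(\rho',\omega)\mapsto(1/\rho',\omega)$. From $\pa_{r'}=-\rho'^2\pa_{\rho'}$ one reads off $\phi^*(r'D_{r'})=-\rho'D_{\rho'}$, $\phi^*(r')=1/\rho'$, and $\phi^*D^\omega_i=D^\omega_i$; hence the weight $(r')^{\ell_\circ-\hat\ell}$ pulls back to $\rho'^{\hat\ell-\ell_\circ}$, the factor $(-\hat\sigma r')^j$ to $(-\hat\sigma/\rho')^j$, and $(\omega_i r'D_{r'}+D^\omega_i)^{\alpha_i}$ to $(-\omega_i\rho'D_{\rho'}+D^\omega_i)^{\alpha_i}$. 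Matching term by term with the $\tface$ formula above finishes the proof. The main (mild) obstacle is the conceptual identification of the two coefficient data as a single function $a_{j\alpha}^\circ$ on the corner; this is a consequence of the compatibility of restrictions of smooth functions on $\wt M$ along the two boundary hypersurfaces $M_\circ$ and $\hat M$ at their intersection.
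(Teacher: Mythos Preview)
Your proof is correct and follows essentially the same approach as the paper's: both compute the two normal operators in local polar-type coordinates and verify that the inversion $\rho'\mapsto r'=1/\rho'$ intertwines them, with the key identities $r'D_{r'}\leftrightarrow -\rho'D_{\rho'}$, $r'\leftrightarrow 1/\rho'$, and the matching of coefficients at the corner $\pa M_\circ\cap\hat M_{t_0}$. The paper streamlines the computation slightly by checking only on the se-frame $r\pa_t,\,r\pa_r,\,\pa_\omega$ (which is generating, so suffices for the unweighted case) and then verifying the weight $w=\rho_\circ^{-\ell_\circ}\hat\rho^{-\hat\ell}$ separately, whereas you carry the full Cartesian basis $(\hat\rho D_x)^\alpha$ through the polar expansion; both arrive at the same identification.
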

\begin{proof}
  Consider first the unweighted case $\ell_\circ=\hat\ell=0$. Both normal operators only involve the coefficients of $P$ (as an se-operator) at $\pa\hat M_{t_0}$. It thus suffices to check the claim for a local frame of se-vector fields near $\pa\hat M_{t_0}$. We take $r\pa_t$, $r\pa_r$, $\pa_\omega$, the reduced normal operators of which are, in terms of $r'=r|\sigma|$ and $\hat\sigma=\frac{\sigma}{|\sigma|}$, given by
  \begin{equation}
  \label{EqFseIdent}
    {-}i r\sigma=- ir'\hat\sigma,\qquad
    r'\pa_{r'},\qquad
    \pa_\omega.
  \end{equation}
  On the other hand, writing $r=\eps\hat r$ and $\hat t=\frac{t-t_0}{\eps}$, these se-vector fields take the form $\hat r\eps\pa_t=\hat r\pa_{\hat t}$, $\hat r\pa_{\hat r}$, $\pa_\omega$, whose spectral families are $-i\sigma\hat r$, $\hat r\pa_{\hat r}$, $\pa_\omega$, and which thus have the transition face normal operators $-i\hat\sigma\rho'{}^{-1}$, $-\rho'\pa_{\rho'}$, $\pa_\omega$. Under the identification $\rho'=r'{}^{-1}$, these are equal to~\eqref{EqFseIdent}.

  To prove the weighted version of the Lemma, we only need to check the weights themselves. So for $w=\rho_\circ^{-\ell_\circ}\hat\rho^{-\hat\ell}$, where we may take $\rho_\circ=\frac{\eps}{r}=\hat r^{-1}$ and $\hat\rho=r$ near $\pa\hat M$, we have $N_{M_\circ}(\eps^{\ell_\circ}w)=r^{\ell_\circ-\hat\ell}$, so $\hat N_{\eop,t_0}(w,\hat\sigma)=(r')^{\ell_\circ-\hat\ell}$; on the other hand, $\hat N_{\hat M_{t_0}}(\eps^{\hat\ell}w,\hat\sigma|\sigma|)=\hat r^{\ell_\circ-\hat\ell}$, so $\hat N_{\tface,t_0}(w,\hat\sigma)=(\rho')^{-(\ell_\circ-\hat\ell)}$. The conclusion is now evident.
\end{proof}

The same calculations also show the equality of boundary spectra
\begin{equation}
\label{EqFseIdentSpecb}
  \specb(\hat N_{\eop,t_0}(P,\hat\sigma)) = \specb(\hat N_{\tface,t_0}(P,\hat\sigma)) = -\specb(\hat N_{\hat M_{t_0}}(P,0))
\end{equation}
and their independence of $\hat\sigma$; here, the boundary spectra are computed for the b-normal operators at $r'=0$, resp.\ $\rho'=0$, resp.\ $\hat r^{-1}=0$.

The functions $\hat t,\hat x$ defined by~\eqref{EqFseHatCoord} induce a diffeomorphism
\begin{equation}
\label{EqFsehatMpff}
  \ff[ \wt M; \hat M_{t_0} ] \cong \cM,
\end{equation}
where $\ff[X;Y]$ denotes the front face of $[X;Y]$. This follows from a local coordinate computation: replacing $M$ by $\R_t\times\R^n_x$, the space $[\wt M;\hat M_{t_0}]$ is the iterated blow-up
\[
  [ [0,1)\times\R_t\times\R^n_x; \{0\}\times\R_t\times\{0\}; \{0\} \times \{t_0\} \times \{0\} \bigr].
\]
Since $\{0\}\times\{t_0\}\times\{0\}\subset\{0\}\times\R_t\times\{0\}$, the order of the two blow-ups can be exchanged by \cite[Proposition~5.8.1]{MelroseDiffOnMwc}. Now, the front face of $[[0,1)\times\R_t\times\R^n_x;\{0\}\times\{t_0\}\times\{0\}]$ is diffeomorphic to $\ol{\R^{1+n}_{\hat t,\hat x}}$, and the lift of $\{0\}\times\R_t\times\{0\}$ meets this at the boundary at infinity where $\hat x/\hat t=0$, i.e.\ at $\pa{\ol{\R_{\hat t}}\times\{0\}}$; this gives~\eqref{EqFsehatMpff}. (See also \cite[\S3.1]{HintzGlueLocI}.) The operator $N_{\hat M_{t_0}}(P)$ then arises, in a more geometric fashion, as the restriction to $\ff[\wt M;\hat M_{t_0}]$ of $P$.

\subsubsection{Relationship of se- and edge-notions}

The relationship between se-vector fields on $\wt M$ and edge vector fields on $M_\circ$ has the following consequence for Sobolev spaces, which we phrase in local coordinates for simplicity.

\begin{lemma}[se- and edge Sobolev spaces]
\label{LemmaFseEdgeSob}
  Consider local coordinates $t\in\R$, $x\in\R^n$ on $M$, with $\cC=\{x=0\}$. Let $\chi\in\CIc([0,\infty))$, and write $\chi_\circ(\eps,t,x)=\chi(\frac{\eps}{|x|})$ which is thus equal to $1$ near $M_\circ$. Let $m\in\N_0$, $\alpha_\circ,\hat\alpha\in\R$. Then we have a uniform equivalence of norms
  \[
    \|\chi_\circ u\|_{H_{\seop,\eps}^{m,\alpha_\circ,\hat\alpha}(M)} \sim \eps^{-\alpha_\circ}\| (\chi_\circ u)(\eps,-) \|_{\He^{m,\hat\alpha-\alpha_\circ}(M_\circ)}
  \]
  for all $u$ with support in any fixed compact subset of $\R_t\times\R^n_x$. On the right, we regard $(\chi_\circ u)(\eps,-)$ as the function $\R_t\times(\R^n_x\setminus\{0\})\ni(t,x)\mapsto(\chi_\circ u)(\eps,t,x)$; and we use on $M_\circ$ the density $|\dd t\,\dd x|$.
\end{lemma}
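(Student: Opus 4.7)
The plan is to compare the two norms directly on $\supp\chi_\circ$, where the ratio $\eps/|x|$ is bounded by a constant depending only on $\chi$. Write $r=|x|$ and $f=f(\eps/r)=(1+(\eps/r)^2)^{1/2}=\hat\rho/r$, so that $f$ and $f^{-1}$ are smooth, positive, and uniformly bounded on $\supp\chi_\circ$, together with all their derivatives along $r\pa_t,r\pa_r,\pa_\omega$ (since these derivatives yield smooth functions of $\eps/r$ on the compact interval where this variable lies).

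First I would extract the $\eps$-weight. On $\supp\chi_\circ$, since $\rho_\circ\hat\rho=\eps$, one has
\[
\rho_\circ^{-\alpha_\circ}\hat\rho^{-\hat\alpha}=\eps^{-\alpha_\circ}\hat\rho^{\alpha_\circ-\hat\alpha}=\eps^{-\alpha_\circ}r^{-(\hat\alpha-\alpha_\circ)}\cdot f^{\alpha_\circ-\hat\alpha}.
\]
The factor $f^{\alpha_\circ-\hat\alpha}$ is uniformly bounded above and below; moreover, its commutator with any local generator $\hat\rho\pa_t,\hat\rho\pa_{x^i}$ of $\Vse(\wt M)$ is again a smooth function of $\eps/r$ (note $\pa_t f=0$ and $\hat\rho\pa_{x^i}f=-f(\eps/r)\omega^i f'(\eps/r)$), hence uniformly bounded on $\supp\chi_\circ$. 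By induction on $m$, multiplication by $f^{\alpha_\circ-\hat\alpha}$ preserves the $H_{\seop,\eps}^m(M)$- and $\He^m(M_\circ)$-norms of functions supported in $\supp\chi_\circ$, uniformly in $\eps$. It therefore suffices to prove the unweighted equivalence
\[
\|\chi_\circ u\|_{H_{\seop,\eps}^m(M)}\sim\|(\chi_\circ u)(\eps,-)\|_{\He^m(M_\circ)},
\]
noting that the two $L^2$-densities agree: both sides use $|\dd t\,\dd x|$, which pulls back under the blow-down map $\upbeta_\circ\colon M_\circ\to M$ to $|\dd t\,\dd x|$ on $M_\circ$, since $\upbeta_\circ$ is a diffeomorphism away from the null set $\cC$.

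Second I would identify vector fields. A local spanning set of $\Vse(\wt M)$ near $\pa\hat M$ is $\hat\rho\pa_t,\hat\rho\pa_{x^i}=f\cdot r\pa_t,\ f\cdot r\pa_{x^i}$; and $r\pa_t,r\pa_{x^i}$ are edge vector fields spanning $\Ve(M_\circ)$ over $\CI(M_\circ)$ (using the identity $r\pa_{x^i}=\omega^i(r\pa_r)+(\delta^{ij}-\omega^i\omega^j)\pa_{\omega^j}$). Conversely, on $\supp\chi_\circ\cap M_\eps$ each edge generator is $f^{-1}$ times the corresponding se-generator. Iterating (with commutators $[\hat\rho\pa,\hat\rho]$ vanishing or producing smooth bounded terms), any spanning set $\{P_j\}\subset\Diffse^m(\wt M)$, restricted to $\supp\chi_\circ\cap M_\eps$, is expressible as a $\CI(M_\circ)$-combination of any spanning set $\{Q_k\}\subset\Diffe^m(M_\circ)$ with coefficients that are smooth functions of $(t,r,\omega,\eps/r)$---hence uniformly bounded in $\eps$, together with all their edge-derivatives---and vice versa. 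This yields
\[
\sum_j\|P_j|_{M_\eps}(\chi_\circ u)\|_{L^2(M)}^2\sim\sum_k\|Q_k\bigl((\chi_\circ u)(\eps,-)\bigr)\|_{L^2(M_\circ)}^2,
\]
uniformly in $\eps$, which combined with the first step proves the lemma.

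The only nontrivial point is verifying that all commutators arising in the two steps---from moving weights through operators, and from passing between the two spanning sets---yield coefficients uniformly bounded in $\eps$. This reduces to the initial observation that these coefficients are smooth functions of $\eps/r$ on the compact range where $\supp\chi_\circ$ places this ratio, which is a straightforward consistency check.
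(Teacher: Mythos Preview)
Your proof is correct and follows essentially the same approach as the paper: the key observation is that $\hat\rho=r\cdot f(\eps/r)$ with $f=(1+(\eps/r)^2)^{1/2}$ and $f^{-1}$ smooth and uniformly bounded on $\supp\chi_\circ$, so the se-vector fields $\hat\rho\pa_t,\hat\rho\pa_x$ and the edge vector fields $r\pa_t,r\pa_x$ differ only by such harmless factors. You spell out the handling of the weights and the commutator bookkeeping in more detail than the paper (which dispatches the lemma in two lines), but the underlying idea is identical.
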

\begin{proof}
  We have $(\eps^2+|x|^2)^{1/2}=|x|\eta(\eps,x)$ where $\eta(\eps,x)=(1+\frac{\eps^2}{|x|^2})^{1/2}$ and $\eta(\eps,x)^{-1}$ are smooth on $\supp\chi_\circ$. Thus, in the expression~\eqref{EqFseNormLoc} for $\chi_\circ u$ in place of $u$, the replacement of the derivatives on the right hand side by $(|x|\pa_t)^j(|x|\pa_x)^\alpha$ gives a uniformly equivalent norm.
\end{proof}

Define fiber-linear coordinates $\sigma_\eop,\xi_\eop\in\R$ and $\eta_\eop\in T^*\Sph^2$ on $\Te^*M_\circ$ by writing edge covectors in terms of $t\in I_\cC$, $r=|x|\geq 0$, $\omega=\frac{x}{|x|}\in\Sph^2$ as
\begin{equation}
\label{EqFseEdgeCoord}
  {-}\sigma_\eop\,\frac{\dd t}{r} + \xi_\eop\,\frac{\dd r}{r} + \eta_\eop,\qquad \eta_\eop\in T_\omega^*\Sph^2.
\end{equation}
Then the proof of Lemma~\ref{LemmaFseEdgeSob} also shows that $\sigma_\eop,\xi_\eop,\eta_\eop$ are smooth fiber-linear coordinates also on $\Tse^*\wt M$ in a neighborhood of $\pa M_\circ$, since $\frac{\dd t}{r}$, $\frac{\dd r}{r}$, and spherical 1-forms are a local frame of $\Tse^*\wt M$ near $\pa M_\circ$.

\subsubsection{Relationship of se- and 3b-notions}

In order to transfer bounds for distributions on $\cM$, with parametric $\eps$-dependence, in 3b-spaces to bounds in se-spaces, we first need to upgrade~\eqref{EqFsehatMpff} to a relationship between $[\wt M;\hat M_{t_0}]$ and $[0,1)_\eps\times\cM$. (The relationship on the level of function spaces is discussed in Lemma~\ref{LemmaFseHse3b} below.) We state this directly in local coordinates:

\begin{lemma}[Relationship between resolved spaces]
\label{LemmaFseRelGeo}
  Let $M=\R_t\times\R^n_x$, $\cC=\R_t\times\{0\}$, and define $\wt M=[[0,1)_\eps\times M;\{0\}\times\cC]$ and $\hat M_{t_0}\subset\wt M$ as in Definition~\usref{DefFse}. Define $\cM$ as in~\eqref{EqF3cM} with coordinates denoted $\hat t,\hat x$, and boundary hypersurfaces $\cT$ (front face) and $\cD$ (lift of the original boundary).
  \begin{enumerate}
  \item\label{ItFseRelGeoMap} The map
    \begin{subequations}
    \begin{equation}
    \label{EqFseRelGeoPsi0}
      (0,1)\times M \ni (\eps,t,x) \mapsto (\eps,\hat t,\hat x) = \Bigl(\eps,\frac{t-t_0}{\eps},\frac{x}{\eps}\Bigr) \in (0,1)\times\cM
    \end{equation}
    extends by continuity to a smooth map
    \begin{equation}
    \label{EqFseRelGeoPsi}
      \Psi \colon [\wt M;\hat M_{t_0}] \to \wt\cM := \bigl[\,[0,1)\times\cM; \{0\}\times\cT; \{0\}\times\cD\,\bigr]
    \end{equation}
    \end{subequations}
    which is a diffeomorphism onto its image (equal to the complement of the union of the lifts of $[0,1)\times\cT$ and $[0,1)\times\cD$). Under $\Psi$, the lift of $\hat M$, resp.\ $M_\circ$ gets mapped to the lift of $\{0\}\times\cT$, resp.\ $\{0\}\times\cD$.
  \item\label{ItFseRelGeoVF} Denote by ${}^\seop T[\wt M;\hat M_{t_0}]$ the pullback of $\Tse\wt M$, and denote by $\Ttb\wt\cM$ the pullback of $\Ttb\cM$ along the composition of maps $\wt M\to[0,1)\times\cM\to\cM$. Then pushforward along $\Psi$ induces a bundle isomorphism $\Tse[\wt M;\hat M_{t_0}]\cong\Ttb_{\Psi([\wt M;\hat M_{t_0}])}\wt\cM$. (That is, the space of pushforwards of se-vector fields under $\Psi$ is equal to the space of families of 3b-vector fields on $\cM$, parameterized by $\eps\in(0,1)$, whose coefficients are smooth on the image of $\Psi$.)
  \end{enumerate}
\end{lemma}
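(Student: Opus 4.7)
The plan is to verify both parts through explicit local coordinate computations, using the universal property of real blow-ups on both sides.

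For part~\eqref{ItFseRelGeoMap}, I would cover $[\wt M;\hat M_{t_0}]$ by three types of coordinate patches and match each with a corresponding patch of $\wt\cM$. First, a near-field patch with coordinates $(\eps,\hat t,\hat x)=(\eps,(t-t_0)/\eps,x/\eps)$ valid on a neighborhood of the new front face created by blowing up $\hat M_{t_0}$; here $\Psi$ becomes essentially the identity in coordinates, and the image lies in the front-face chart of $\wt\cM$ arising from blowing up $\{0\}\times\cT$, since $(\hat t,\hat x)$ ranges over a bounded subset of $\cM^\circ$. Second, a far-field patch on which $|x|\gtrsim 1$; here $\hat x=x/\eps$ tends to infinity as $\eps\searrow 0$, and the image lies near the lift of $\{0\}\times\cD$ in $\wt\cM$, with $\eps/|x|=1/|\hat x|$ serving as a defining function of the lift of $\cD$. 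Third, a transitional patch interpolating between these. In each case the coordinate change is smooth with smooth inverse, extending the diffeomorphism~\eqref{EqFseRelGeoPsi0} across $\eps=0$. Matching of boundary hypersurfaces is then immediate from the formulas: the lift of $\hat M$ corresponds to the lift of $\{0\}\times\cD$, while the new front face (from blowing up $\hat M_{t_0}$) corresponds to the lift of $\{0\}\times\cT$.

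For part~\eqref{ItFseRelGeoVF}, once part~\eqref{ItFseRelGeoMap} is established it suffices to compare local frames. The module $\Vse(\wt M)$ is locally spanned over $\CI(\wt M)$ by $\hat\rho\pa_t$ and $\hat\rho\pa_{x_i}$, while $\Vtb(\cM)$ is locally spanned over $\CI(\cM)$ by $\la\hat x\ra\pa_{\hat t}$ and $\la\hat x\ra\pa_{\hat x_i}$. For $\eps>0$, the relations $\pa_t=\eps^{-1}\pa_{\hat t}$, $\pa_{x_i}=\eps^{-1}\pa_{\hat x_i}$, combined with $\hat\rho=\eps\la\hat x\ra$ in the near-field region (and with $\hat\rho\sim|x|=\eps\la\hat x\ra\cdot(1+O(|\hat x|^{-2}))^{1/2}$ in the far-field), yield
\[
\Psi_*(\hat\rho\,\pa_t)=\la\hat x\ra\pa_{\hat t},\qquad \Psi_*(\hat\rho\,\pa_{x_i})=\la\hat x\ra\pa_{\hat x_i}.
\]
By part~\eqref{ItFseRelGeoMap}, the coefficients of $\Psi_*V$, for $V\in\Vse(\wt M)$, expressed in the 3b-frame extend smoothly across the new boundary hypersurfaces of $\wt\cM$; conversely, any family of 3b-vector fields on $\cM$ smoothly parameterized over the image of $\Psi$ pulls back to a smooth se-vector field via the same identities.

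The most delicate point will be the transitional patch at the corner of $\wt\cM$ where the lifts of $\{0\}\times\cT$ and $\{0\}\times\cD$ meet, which corresponds to the lifted corner $M_\circ\cap\hat M$ in $[\wt M;\hat M_{t_0}]$. Three scales are active simultaneously there---$\eps$, $|x|$, and $|t-t_0|$---and each blow-up resolves one ratio. To handle this I would use the exchange-of-blow-ups result \cite[Proposition~5.8.1]{MelroseDiffOnMwc} (already invoked around~\eqref{EqFsehatMpff}) to reorder the blow-ups on both sides so that $[\wt M;\hat M_{t_0}]$ is realized as an iterated blow-up of $[0,1)\times M$ in which one first blows up the point $\{(0,t_0,0)\}$ and then the lift of $\{0\}\times\cC$; this matches the order of blow-ups defining $\wt\cM$, where $\{0\}\times\cT$ is compact and $\{0\}\times\cD$ meets the lift of the former only at the corner. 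The coordinate check at that corner then reduces to a projective computation built on the single algebraic identity $\eps\la\hat x\ra=\hat\rho$, together with the standard commutation of b- and scattering structures across the blown-up corner.
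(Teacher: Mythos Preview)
Your overall strategy (local coordinate checks plus the exchange-of-blow-ups argument for the corner) matches the paper's, and your argument for part~\eqref{ItFseRelGeoVF} is correct. However, your boundary hypersurface matching in part~\eqref{ItFseRelGeoMap} is wrong, and this error propagates through your patch descriptions.

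You claim that the lift of $\hat M$ goes to the lift of $\{0\}\times\cD$ and that the new front face of $[\wt M;\hat M_{t_0}]$ goes to the lift of $\{0\}\times\cT$. Both are incorrect. On the lift of $\hat M$ away from $\hat M_{t_0}$ one has $\eps\to 0$ with $t\neq t_0$ fixed and $\hat x$ bounded; hence $\hat t=(t-t_0)/\eps\to\pm\infty$ while $\hat x$ stays bounded, which lands at $\cT$, not $\cD$. Conversely, on the lift of $M_\circ$ one has $x\neq 0$ fixed and $\eps\to 0$, so $|\hat x|=|x|/\eps\to\infty$: this lands at $\cD$. The front face of $[\wt M;\hat M_{t_0}]$ (your ``near-field'' patch, with $\hat t,\hat x$ bounded) maps into $\cM^\circ$, i.e.\ to the lift of $\{0\}\times\cM$ in $\wt\cM$, not to the front face coming from $\{0\}\times\cT$. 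So the correct matching is: front face of $[\wt M;\hat M_{t_0}]\to$ lift of $\{0\}\times\cM$; lift of $\hat M\to$ lift of $\{0\}\times\cT$; lift of $M_\circ\to$ lift of $\{0\}\times\cD$.

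Relatedly, your three patches do not cover a neighborhood of the lift of $\hat M$: there one has $|x|=\eps|\hat x|\to 0$ (so not your far-field patch) and $|\hat t|\to\infty$ (so not your near-field patch), and your ``transitional patch'' is described only as the codimension~3 corner. The paper handles this region with the explicit chart $(t-t_0,\ \eps/(t-t_0),\ \hat x)=(\eps\hat t,\ 1/\hat t,\ \hat x)$, which makes the identification with a front-face chart of the blow-up of $\{0\}\times\cT$ transparent. You need a fourth patch here (and symmetrically one near the lift of $M_\circ$ meeting the front face), or else your transitional patch must be substantially enlarged and described more carefully.
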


See Figure~\ref{FigFseRelGeo} for an illustration.

\begin{figure}[!ht]
\centering
\includegraphics{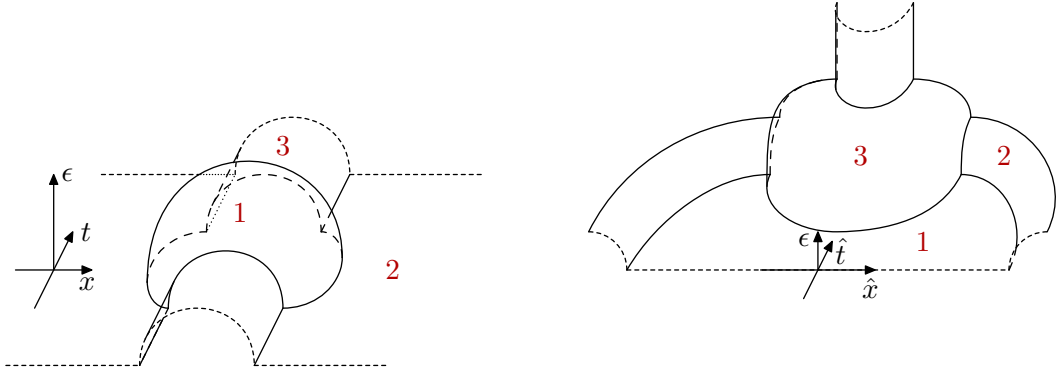}
\caption{Illustration of Lemma~\ref{LemmaFseRelGeo}. Boundary hypersurfaces which are mapped to each other under $\Psi$ are labeled by matching numbers on both sides. \textit{On the left:} the space $[\wt M;\hat M_{t_0}]$. \textit{On the right:} the subset of the space $[[0,1)\times\cM;\{0\}\times\cT;\{0\}\times\cD]$ where $\hat t\geq 0$.}
\label{FigFseRelGeo}
\end{figure}

\begin{proof}[Proof of Lemma~\usref{LemmaFseRelGeo}]
  We check part~\eqref{ItFseRelGeoMap} in local coordinates.

  Near the interior of the front face of $[\wt M;\hat M_{t_0}]$, a local coordinate system is $\eps$, $\frac{t-t_0}{\eps}$, $\frac{x}{\eps}$; and $\eps$, $\hat t$, $\hat x$ are local coordinates on $[0,1)\times\cM^\circ$.

  Next, near the interior of one of the two components of the intersection of the front face of $[\wt M;\hat M_{t_0}]$ with the lift of $\hat M$, we can use $t-t_0$, $\frac{\eps}{t-t_0}$, $\frac{x}{\eps}$ as local coordinates. These are equal to $\eps\hat t=\frac{\eps}{1/\hat t}$, $\frac{1}{\hat t}$, $\hat x$, which are local coordinates near the interior of one of the two components of the intersection of the front face of $[[0,1)\times\cM;\{0\}\times\cT]$ with the lift of $\{0\}\times\cM$ (which is disjoint from the lift of $\{0\}\times\cD$ and thus unaffected by the blow-up of the latter).

  We continue with a neighborhood of the interior of the intersection of the front face of $[\wt M;\hat M_{t_0}]$ with the lift of $M_\circ$. There, local coordinates are $|x|$, $\frac{x}{|x|}$, $\frac{t-t_0}{|x|}$, $\frac{\eps}{|x|}$. But these equal $\eps|\hat x|$, $\frac{\hat x}{|\hat x|}$, $\frac{\hat t}{|\hat x|}$, $\frac{1}{|\hat x|}$, with the last three being local coordinates on $\cM$ away from $\cT$, and the first one a local defining function of the lift of $\{0\}\times\cM$ to $[[0,1)\times\cM;\{0\}\times\cD]$ away from the lift of $\{0\}\times\cT$.

  Finally, we consider a neighborhood of the codimension $3$ corner of $[\wt M;\hat M_{t_0}]$. Starting with the local coordinates $t-t_0$, $\frac{\eps}{|x|}$, $|x|$, $\frac{x}{|x|}$ on $\wt M$ near $\hat M_{t_0}\cap M_\circ$, local coordinates on the space $[\wt M;\hat M_{t_0}]$ produced by blowing up $\hat M_{t_0}=\{t-t_0=|x|=0\}$ are $t-t_0$, $\frac{|x|}{t-t_0}$, $\frac{\eps}{|x|}$, $\frac{x}{|x|}$. In terms of $\hat t,\hat x$, these are equal to
  \begin{equation}
  \label{EqFseRelGeoCodim3}
    \eps\hat t,\ \frac{|\hat x|}{\hat t},\ \frac{1}{|\hat x|},\ \frac{\hat x}{|\hat x|}.
  \end{equation}
  On the other hand, starting with the local coordinates $\eps$, $\frac{|\hat x|}{\hat t}$, $\frac{1}{|\hat x|}$, $\frac{\hat x}{|\hat x|}$ on $[0,1)\times\cM$ near $[0,1)\times(\cT\cap\cD)$, we first blow up $\{0\}\times\cT=\{\eps=\frac{|\hat x|}{\hat t}=0\}$; in the region where $\frac{|\hat x|}{\hat t}\gtrsim\eps$ (which ends up containing a neighborhood of the relevant codimension $3$ corner), we can then use as local coordinates $\frac{\eps}{|\hat x|/\hat t}=\frac{\eps\hat t}{|\hat x|}$, $\frac{|\hat x|}{\hat t}$, $\frac{1}{|\hat x|}$, $\frac{\hat x}{|\hat x|}$, and the lift of $\{0\}\times\cD$ is given by $\{\frac{\eps\hat t}{|\hat x|}=\frac{1}{|\hat x|}=0\}$. In the region where $\frac{1}{|\hat x|}\gtrsim\frac{\eps\hat t}{|\hat x|}$, we can then introduce as local coordinates on the space on the right in~\eqref{EqFseRelGeoPsi}: $\frac{1}{|\hat x|}$, $\frac{\eps\hat t/|\hat x|}{1/|\hat x|}=\eps\hat t$, $\frac{|\hat x|}{\hat t}$, $\frac{\hat x}{|\hat x|}$. These match~\eqref{EqFseRelGeoCodim3}, as desired.

  We sketch an alternative proof based on commuting blow-ups: the map $(\eps,t,x)\mapsto(\eps,\hat t,\hat x)$ is easily seen to extend to a diffeomorphism $\Psi_0\colon[[0,1)\times\ol{\R_t\times\R^n_x};\{(0,0,0)\}]\to[[0,1)\times\ol{\R_{\hat t}\times\R^n_{\hat x}};\{0\}\times\pa\ol{\R^{1+n}}]$. We then blow up the lift of $\{0\}\times(\ol{\R_t}\times\{0\})$ in the domain (thus producing a compactification of $[\wt M;\hat M_{t_0}]$) and its image under $\Psi_0$, i.e.\ the lift of $\{0\}\times(\pa\ol{\R_{\hat t}}\times\{0\})$, in the target. The resulting diffeomorphism $\Psi_1$ restricts to $[\wt M;\hat M_{t_0}]$ to the map $\Psi$ in~\eqref{EqFseRelGeoPsi}. This uses that the image of this restriction $\Psi$ is disjoint from the lift of $[0,1)\times(\pa\ol{\R_{\hat t}}\times\{0\})$, so one may as well blow up this lift in the target without affecting the mapping properties of $\Psi$; but since $(\{0\}\times\pa\ol{\R^{1+n}})\cap([0,1)\times(\pa\ol\R\times\{0\})=\{0\}\times(\pa\ol\R\times\{0\})$, we can commute blow-ups to see that the resulting target space is
  \begin{align*}
    &\bigl[ [0,1)\times\ol{\R^{1+n}}; \{0\}\times\pa\ol{\R^{1+n}}; \{0\}\times(\pa\ol\R\times\{0\}); [0,1)\times(\pa\ol\R\times\{0\}) \bigr] \\
    &\qquad \cong \bigl[ [0,1)\times \ol{\R^{1+n}}; [0,1)\times(\pa\ol\R\times\{0\}); \{0\}\times(\pa\ol\R\times\{0\}); \{0\}\times\pa\ol{\R^{1+n}} \bigr] \\
    &\qquad = \bigl[ [0,1)\times\cM; \{0\}\times\cT; \{0\}\times\cD \bigr].
  \end{align*}

  Part~\eqref{ItFseRelGeoVF} follows from the fact that the space of se-vector fields on $\wt M$ is spanned by $\hat\rho\pa_t=\la\hat x\ra\pa_{\hat t}$ and $\hat\rho\pa_x=\la\hat x\ra\pa_{\hat x}$.
\end{proof}

\begin{lemma}[se- and 3b-Sobolev spaces]
\label{LemmaFseHse3b}
  Take $M=\R_t\times\R^n_x$, $\cC=\R_t\times\{0\}$ to define $\wt M$, and set $\cM=[\ol{\R^{1+n}_{\hat t,\hat x}};\pa(\ol\R\times\{0\})]$. Fix the densities $|\dd t\,\dd x|$ and $|\dd\hat t\,\dd\hat x|$ on $M$ and $\cM$. Define $\Psi$ as in~\eqref{EqFseRelGeoPsi0}{\rm--}\eqref{EqFseRelGeoPsi}. Let $\lambda>0$ and set $\Omega=\{|t|<\lambda,\ |x|<\lambda\}\subset M$, $\Omega_\eps=\{|\hat t|<\eps^{-1}\lambda,\ |\hat x|<\eps^{-1}\lambda\}\subset\cM$. Let $m\in\N_0$, $\alpha_\circ,\hat\alpha\in\R$. Then we have a uniform equivalence of norms
  \begin{equation}
  \label{EqFseHse3b}
    \|u\|_{H_{\seop,\eps}^{m,\alpha_\circ,\hat\alpha}} \sim \eps^{\frac{n+1}{2}-\hat\alpha}\|(\Psi_*u)(\eps,-)\|_{H_\tbop^{m,\alpha_\circ-\hat\alpha,0}}
  \end{equation}
  when $u$ has support in $\Omega$ (and thus $(\Psi_*u)(\eps,-)$ has support in $\Omega_\eps$).
\end{lemma}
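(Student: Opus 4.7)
The strategy is direct: transport everything to $\cM$ via the change of variables $\Psi\colon(t,x)\mapsto(\hat t,\hat x)=(t/\eps,x/\eps)$ and compare term-by-term with the 3b-norm. Three ingredients — vector fields, weights, and the Lebesgue density — transform in a manner that together produce exactly the prefactor $\eps^{(n+1)/2-\hat\alpha}$ on the right of~\eqref{EqFseHse3b}.

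First I would identify the differential operators. Choose in Definition~\ref{DefFseSob} a spanning set $\{P_j\}\subset\Diffse^m(\wt M)$ built from polynomials of degree $\leq m$ in the vector fields $\hat\rho\pa_t,\hat\rho\pa_x$ (where I fix $\hat\rho=(\eps^2+|x|^2)^{1/2}$ so that $\Psi^*\hat\rho=\eps\la\hat x\ra$). Then the elementary computations $\Psi_*(\hat\rho\pa_t)=\la\hat x\ra\pa_{\hat t}$ and $\Psi_*(\hat\rho\pa_x)=\la\hat x\ra\pa_{\hat x}$ — which is precisely part~\eqref{ItFseRelGeoVF} of Lemma~\ref{LemmaFseRelGeo} in coordinates — show that $\Psi_*P_j$ is a corresponding spanning set for $\Diff_\tbop^m(\cM)$, written in the standard 3b-frame. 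Since the support of $\Psi_*u$ is contained in $\Omega_\eps$ and the 3b-Sobolev norm of integer order $m$ is equivalent (uniformly in the support) to the sum of $L^2$-norms of all such $(\Psi_*P_j)(\Psi_*u)$, the differentiation parts of the two norms match up identically once I push forward.

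Next I track the weights. Using $\rho_\circ=\eps/\hat\rho$, the se-weight becomes
\[
  \rho_\circ^{-\alpha_\circ}\hat\rho^{-\hat\alpha} = \eps^{-\alpha_\circ}\hat\rho^{\alpha_\circ-\hat\alpha},
\]
and substituting $\hat\rho=\eps\la\hat x\ra$ on the image side gives $\Psi_*(\rho_\circ^{-\alpha_\circ}\hat\rho^{-\hat\alpha})=\eps^{-\hat\alpha}\la\hat x\ra^{\alpha_\circ-\hat\alpha}$. Taking $\rho_\cD=1/\la\hat x\ra$ and $\rho_\cT^0=1$, this is exactly $\eps^{-\hat\alpha}$ times the 3b-weight appearing in $H_\tbop^{m,\alpha_\circ-\hat\alpha,0}$. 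For the density, $|dt\,dx|=\eps^{n+1}|d\hat t\,d\hat x|$, which introduces a factor $\eps^{(n+1)/2}$ when translating the $L^2$ norm on $M$ to that on $\cM$. Combining the three contributions,
\[
  \|u\|_{H_{\seop,\eps}^{m,\alpha_\circ,\hat\alpha}}^2 \;=\; \sum_j\int \bigl|\rho_\circ^{-\alpha_\circ}\hat\rho^{-\hat\alpha}P_j u\bigr|^2\,|dt\,dx| \;=\; \eps^{n+1-2\hat\alpha}\sum_j\int \bigl|\la\hat x\ra^{\alpha_\circ-\hat\alpha}(\Psi_*P_j)(\Psi_*u)\bigr|^2\,|d\hat t\,d\hat x|,
\]
which is comparable to $\eps^{n+1-2\hat\alpha}\|(\Psi_*u)(\eps,-)\|_{H_\tbop^{m,\alpha_\circ-\hat\alpha,0}}^2$; taking square roots yields~\eqref{EqFseHse3b}.

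I expect no real obstacle here — the statement is essentially a bookkeeping lemma, and the geometric identification of $\Psi$ in Lemma~\ref{LemmaFseRelGeo} does all the heavy lifting. The one minor subtlety is verifying that the support condition ensures comparability of the chosen defining function $\rho_\cD=1/\la\hat x\ra$ with other admissible choices (so that the constants in the equivalence are independent of $\eps$); this is immediate since the support of $\Psi_*u$ lies away from $\cT$ (which would require $|\hat t|\gtrsim|\hat x|$), so $\rho_\cT$ is bounded below on $\supp(\Psi_*u)(\eps,-)\subset\Omega_\eps$ and only the $\cD$-behavior matters. A secondary, purely notational point is that the same argument works verbatim in the reverse direction (pulling back via $\Psi^{-1}$ a spanning set of 3b-operators to an se-spanning set), which gives the opposite inequality with the same constants.
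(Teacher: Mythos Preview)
Your proof is correct and follows essentially the same approach as the paper's: push forward via $\Psi$, identify se-vector fields with 3b-vector fields (Lemma~\ref{LemmaFseRelGeo}\eqref{ItFseRelGeoVF}), track the weight $\rho_\circ^{-\alpha_\circ}\hat\rho^{-\hat\alpha}=\eps^{-\hat\alpha}\la\hat x\ra^{\alpha_\circ-\hat\alpha}$, and account for the Jacobian $|\dd t\,\dd x|=\eps^{n+1}|\dd\hat t\,\dd\hat x|$. The paper organizes this by first reducing to $\hat\alpha=0$ and then treating $m$, $\alpha_\circ$ separately, while you do all three ingredients at once; both amount to the same computation.

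One minor slip in your closing remark: the claim that $\supp(\Psi_*u)(\eps,-)\subset\Omega_\eps$ ``lies away from $\cT$'' is false --- points with $|\hat x|$ bounded and $|\hat t|\sim\eps^{-1}\lambda$ lie in $\Omega_\eps$ and approach $\cT$ as $\eps\to 0$. Fortunately this does not matter: the $\cT$-weight is $0$, and $\rho_\cD=\la\hat x\ra^{-1}$ is already a global smooth defining function of $\cD$ on $\cM$, so no further justification of uniform comparability is needed. (The paper phrases the corresponding geometric observation in terms of the resolved space $\wt\cM$, noting that the closure of $\bigcup_\eps\{\eps\}\times\Omega_\eps$ there is disjoint from the \emph{lateral} hypersurfaces, i.e.\ the lifts of $[0,1)\times\cT$ and $[0,1)\times\cD$.)
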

\begin{proof}
  Since
  \[
    \|u\|_{\bar H_{\seop,\eps}^{m,\alpha_\circ,\hat\alpha}(\Omega)} = \eps^{-\hat\alpha} \|u\|_{\bar H_{\seop,\eps}^{m,\alpha_\circ-\hat\alpha,0}(\Omega)},
  \]
  it suffices to consider the case $\hat\alpha=0$. The norm equivalence~\eqref{EqFseHse3b} then follows for $m=\alpha_\circ=0$ from $|\dd t\,\dd x|=\eps^{n+1}|\dd\hat t\,\dd\hat x|$, for general $m$ from Lemma~\ref{LemmaFseRelGeo}\eqref{ItFseRelGeoVF}, and for general $\alpha_\circ$ from the relationship between $M_\circ$ and $\cD\subset\cM$ in Lemma~\ref{LemmaFseRelGeo}\eqref{ItFseRelGeoMap}. Note here that the closure of $\bigcup_{\eps\in(0,1)}\{\eps\}\times\Omega_\eps$ in $\wt\cM$ (see~\eqref{EqFseRelGeoPsi}) is disjoint from the `lateral' boundary hypersurfaces (i.e.\ the lifts of $[0,1)\times\cT$ and $[0,1)\times\cD$).
\end{proof}

For $\hat\alpha=0$ and $u$ supported in $\ol\Omega$, one can further characterize the 3b-norm on the right in~\eqref{EqFseHse3b} using the Fourier transform via~\eqref{EqF3FT}.

\subsection{Higher regularity: s-analysis}
\label{SsFs}

Note that se-regularity near $t=t_0$, $x=0$ amounts to uniform regularity in the `fast' variables $\hat t,\hat x$ from~\eqref{EqFseHatCoord} when $\hat t,\hat x$ are bounded. Regularity in the `slow' time variable $t$ is a stronger notion which we capture as follows.

\begin{definition}[s-vector fields]
\label{DefFs}
  In the notation of Definition~\usref{DefFse}, we define the Lie algebra and $\CI(\wt M)$-module of \emph{s-vector fields} by
  \[
    \Vs(\wt M) = \{ V\in\Vb(\wt M) \colon \dd\eps(V)=0 \}.
  \]
  The corresponding space of $m$-th order s-differential operators is denoted $\Diffs^m(\wt M)$.
\end{definition}

In local coordinates $t,x$ on $M$, with $\cC=\{x=0\}$, a local spanning set of $\Vs(\wt M)$ is given by $\pa_t$, $\hat\rho\pa_x$ where $\hat\rho=(\eps^2+|x|^2)^{1/2}$. Their restrictions to $M_\circ$ are $\pa_t$, $|x|\pa_x$ and thus span the space of b-vector fields on $M_\circ$. On the other hand, expressed in terms of $\hat t=\frac{t-t_0}{\eps}$, $\hat x=\frac{x}{\eps}$, these vector fields are $\eps^{-1}\pa_{\hat t}$, $\la\hat x\ra\pa_{\hat x}$.

\begin{rmk}[s-vector fields from the 3b-perspective]
\label{RmkFsLifts3b}
  The lifts of $\eps^{-1}\pa_{\hat t}$, $\la\hat x\ra\pa_{\hat x}$ to $[\wt M;\hat M_{t_0}]$ do \emph{not} restrict to the front face $\cM=\ff[\wt M;\hat M_{t_0}]$ to a subset, let alone a spanning set, of $\Vb(\cM)$. Thus, s-regularity on $\wt M$ translates into a notion \emph{different} from b-regularity on $\cM$. This may at first seem surprising since in the 3b-setting the strongest sensible notion of regularity for 3b-equations on $\cM$ is b-regularity on $\cM$ (see e.g.\ \cite[Corollary~6.16]{Hintz3b} or \cite[Corollary~5.35]{HintzNonstat}); but of course b-regularity on $\cM$ still measures, in any compact subset of $\cM^\circ$, only regularity in the \emph{fast} time $\hat t$.
\end{rmk}

\begin{definition}[Function spaces]
\label{DefFsFn}
  Let $M$ be compact, $m,k\in\N_0$, $\alpha_\circ,\hat\alpha\in\R$.
  \begin{enumerate}
  \item{\rm (Mixed (se;s)-Sobolev spaces.)} For $\eps\in(0,1)$, we define
    \begin{equation}
    \label{EqFsFnMixed}
      H_{(\seop;\sop),\eps}^{(m;k),\alpha_\circ,\hat\alpha}(M) := H^{m+k}(M)
    \end{equation}
    with norm
    \[
      \|u\|_{H_{(\seop;\sop),\eps}^{(m;k),\alpha_\circ,\hat\alpha}(M)}^2 := \sum_j \|Q_j u\|_{H_{\seop,\eps}^{m,\alpha_\circ,\hat\alpha}(M)}^2
    \]
    where $\{Q_j\}\subset\Diffs^k(\wt M)$ is a fixed finite spanning set of $\Diffs^k(\wt M)$ over $\CI(\wt M)$. (The $L^2$-space underlying the norm on $H_{\seop,\eps}$ is defined using a positive smooth density on $M$ unless noted otherwise.)
  \item{\rm (s-Sobolev spaces.)} In the special case $m=0$, we write
    \begin{equation}
    \label{EqFsFnPure}
      H_{\sop,\eps}^{k,\alpha_\circ,\hat\alpha}(M) = H^k(M),\qquad \|u\|_{H_{\sop,\eps}^{k,\alpha_\circ,\hat\alpha}(M)}^2 = \sum_j \|\rho_\circ^{-\alpha_\circ}\hat\rho^{-\hat\alpha}Q_j u\|_{L^2(M)}^2
    \end{equation}
  \item{\rm (s-continuity.)} We define $\cC_{\sop,\eps}^{k,\alpha_\circ,\hat\alpha}(M)$ and $\cC_\sop^{k,\alpha_\circ,\hat\alpha}(\wt M)$ analogously to Definition~\ref{DefFseCont}.
  \item\label{ItFsFnMixedC}{\rm (Mixed (se;s)-continuity.)} We define $\cC_{(\seop;\sop),\eps}^{(k;m),\alpha_\circ,\hat\alpha}(M)=\cC^{k+m}(M)$ but with norm
    \[
      \|u\|_{\cC_{(\seop;\sop),\eps}^{(k;m),\alpha_\circ,\hat\alpha}(M)} := \sum_{j,l} \|\rho_\circ^{-\alpha_\circ}\hat\rho^{-\hat\alpha}P_j Q_l u\|_{L^\infty(M)}
    \]
    where $\{P_j\}\subset\Diffse^k(\wt M)$ and $\{Q_l\}\subset\Diffs^m(\wt M)$ are fixed finite spanning sets over $\CI(\wt M)$. We then define $\cC_{\seop;\sop}^{(k;m),\alpha_\circ,\hat\alpha}(\wt M)$ analogously to Definition~\ref{DefFseCont}.
  \end{enumerate}
\end{definition}

We record the following version of Sobolev embedding.

\begin{prop}[s-Sobolev embedding]
\label{PropFsSobEmb}
  Suppose $k>\frac{\dim(M)}{2}+l=\frac{n+1}{2}+l$. Then
  \begin{equation}
  \label{EqFsSobEmb}
    H_{\sop,\eps}^{k,\alpha_\circ,\hat\alpha}(M) \hra \cC_{\sop,\eps}^{l,\alpha_\circ,\hat\alpha-\frac{n}{2}}(M),
  \end{equation}
  and the inclusion map has uniformly bounded operator norm. Conversely, for all $\eta>0$, the map
  \begin{equation}
  \label{EqFsIncl}
    \cC_{\sop,\eps}^{l,\alpha_\circ,\hat\alpha}(M) \hra H_{\sop,\eps}^{l,\alpha_\circ,\hat\alpha+\frac{n}{2}-\eta}(M)
  \end{equation}
  has uniformly bounded operator norm.
\end{prop}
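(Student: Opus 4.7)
The plan is to reduce both inclusions to the unweighted case $\alpha_\circ=\hat\alpha=0$ and then combine the standard Euclidean Sobolev embedding with a Whitney-type decomposition of $M$ adapted to the s-structure. The reduction itself is cheap: since $\hat\rho$ and $\rho_\circ$ are boundary defining functions of $\wt M$ and $V\hat\rho,V\rho_\circ\in\hat\rho\,\CI(\wt M)\cdot\CI(\wt M)$ for $V\in\Vs(\wt M)$ (as is easily checked from the local frame $\pa_t,\hat\rho\pa_x$, with $\hat\rho\pa_x(\hat\rho)=x\in\CI(\wt M)$), commuting $\rho_\circ^{-\alpha_\circ}\hat\rho^{-\hat\alpha}$ through a spanning set of $\Diffs^k(\wt M)$ only produces lower order s-operators with $\cC_\sop^\infty(\wt M)$-coefficients. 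Consequently the norms $\|u\|_{H_{\sop,\eps}^{k,\alpha_\circ,\hat\alpha}}$ and $\|\rho_\circ^{-\alpha_\circ}\hat\rho^{-\hat\alpha}u\|_{H_{\sop,\eps}^{k,0,0}}$ are uniformly equivalent, and likewise for the $\cC_{\sop,\eps}$-norms.

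For the forward embedding~\eqref{EqFsSobEmb} with $\alpha_\circ=\hat\alpha=0$, I would work in a local coordinate patch $t\in\R$, $x\in\R^n$ with $\cC=\{x=0\}$ (away from $\cC$ the bound follows from ordinary Sobolev embedding with an $\eps$-uniform constant). Cover the region near $\cC$ with a finite-overlap family of boxes $B=B_{(t_0,x_0)}$ of unit size in $t$ and of $x$-radius comparable to $\hat\rho(x_0)$ (i.e.\ size $\eps$ when $|x_0|\leq\eps$ and size $|x_0|$ when $|x_0|\geq\eps$). On each such $B$ introduce rescaled variables $\hat t=t-t_0$, $\hat x=(x-x_0)/\hat\rho(x_0)$; then $B$ becomes a unit box in $(\hat t,\hat x)$, the basic s-vector fields $\pa_t,\hat\rho\pa_x$ become $\pa_{\hat t}$ and a smooth bounded frame of $\pa_{\hat x}$ (since $\hat\rho/\hat\rho(x_0)$ and its inverse are bounded in $\cC^\infty$ uniformly on $B$), and the volume transforms as $|\dd t\,\dd x|=\hat\rho(x_0)^n\,|\dd\hat t\,\dd\hat x|$. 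Standard Euclidean Sobolev embedding $H^k\hra\cC^l$ on a unit box with $k>\frac{n+1}{2}+l$ therefore yields
\[
  \hat\rho(x_0)^{n/2}\,\|Q_{j_1}\!\cdots Q_{j_l}u\|_{L^\infty(B)}\;\lesssim\;\|u\|_{H_{\sop,\eps}^{k+l,0,0}(B')}
\]
with $B'$ a slight enlargement and with a constant independent of $\eps$ and of the chosen box. Since $\hat\rho$ is essentially constant on $B$, the left side controls $\|\hat\rho^{n/2}Q_{j_1}\cdots Q_{j_l}u\|_{L^\infty(B)}$, and passing to the $\sup$ over the cover and using finite overlap (via $\sup_B a_B^2\leq\sum_B a_B^2$) gives the desired estimate $\|u\|_{\cC_{\sop,\eps}^{l,0,-n/2}(M)}\leq C\|u\|_{H_{\sop,\eps}^{k+l,0,0}(M)}$.

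The reverse embedding~\eqref{EqFsIncl} is a direct integral estimate: for any $Q_j\in\Diffs^{\leq l}(\wt M)$,
\[
  \|\hat\rho^{-n/2+\eta}Q_j u\|_{L^2(M)}^2 \;\leq\; \|Q_j u\|_{L^\infty(M)}^2\int_M \hat\rho^{-n+2\eta}\,\mathrm{vol}.
\]
Splitting the $x$-integral into $|x|\leq\eps$ (where $\hat\rho\sim\eps$, contributing $\eps^{-n+2\eta}\cdot\eps^n=\eps^{2\eta}\leq 1$) and $\eps\leq|x|\lesssim 1$ (where $\hat\rho\sim|x|$, contributing $\int_\eps^1 r^{-1+2\eta}\,\dd r\leq(2\eta)^{-1}$), the integral is bounded independently of $\eps\in(0,1]$ as soon as $\eta>0$; combined with integration over a bounded $t$-interval, this gives~\eqref{EqFsIncl}.

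The only real obstacle is the bookkeeping that makes the Sobolev embedding on each box uniform in $\eps$ and in the choice of box. This uniformity, however, is guaranteed by the precise statement that under the rescaling $\hat x=(x-x_0)/\hat\rho(x_0)$ the s-frame $\pa_t,\hat\rho\pa_x$ pulls back to a frame bounded above and below in $\cC^\infty$ of a fixed unit-size box in $(\hat t,\hat x)$ — which is exactly the content of the s-structure near $\cC$. Everything else is a finite-overlap summation.
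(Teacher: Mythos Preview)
Your proposal is correct and follows the same bounded-geometry strategy as the paper: reduce to unweighted spaces, rescale s-adapted boxes (unit in $t$, size $\hat\rho$ in $x$) to unit cubes, apply Euclidean Sobolev embedding, and for the converse directly estimate $\int\hat\rho^{-n+2\eta}\,\dd x$. The paper's version differs only in packaging---it uses two explicit charts $(t,x/\eps)$ in $|\hat x|\lesssim 1$ and $(t,-\log r,\omega)$ in $\eps\lesssim|x|\lesssim 1$ rather than your Whitney covering---but the idea is identical. (Minor slip: in your displayed inequality the right side should be $H_{\sop,\eps}^{k}$ rather than $H_{\sop,\eps}^{k+l}$, since $k>\tfrac{n+1}{2}+l$ already accounts for the $l$ derivatives.)
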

\begin{proof}
  For any fixed $\eps>0$,~\eqref{EqFsSobEmb} is the standard Sobolev embedding $H^k(M)\hra\cC^l(M)$. It suffices to prove the uniform boundedness of~\eqref{EqFsSobEmb} for $l=0$. If $U\subset M$ is an open subset with $\ol U\cap\cC=\emptyset$, and $\chi\in\CIc(U)$, then $\|\chi u\|_{H_{\sop,\eps}^{k,\alpha_\circ,\hat\alpha}(M)}\sim\eps^{-\alpha_\circ}\|\chi u\|_{H^k(M)}$ (with `$\sim$' denoting uniform equivalence of norms), which controls $\eps^{-\alpha_\circ}\|\chi u\|_{L^\infty(M)}$. It thus suffices to work near the front face $\hat M\subset\wt M$ and in local coordinates $t$, $x$ as above. We shall reduce the Proposition to Sobolev embedding on $\R^{1+n}$ by passing to rescaled coordinates in which s-vector fields are uniformly equivalent to coordinate vector fields on $\R^{1+n}$. (This is thus an instance of a bounded geometry perspective, here for $\Vs(\wt M)$.)

  First, for $t_0<t_1$ and $\hat r_0>0$, consider a set
  \[
    U := \Bigl\{ (\eps,t,x) \colon \eps\in(0,1),\ t\in(t_0,t_1),\ \Bigl|\frac{x}{\eps}\Bigr|<\hat r_0 \Bigr\} \subset \wt M,\qquad
    U_\eps := U \cap M_\eps.
  \]
  On $U$, we may use $\hat\rho=\eps$ and $\rho_\circ=1$. Define the map $\Phi_\eps\colon(t,x)\mapsto(t,\frac{x}{\eps})$; then
  \[
    \Phi_\eps(U_\eps) = V := (t_0,t_1) \times B_{\hat r_0}(0) \subset \R_t \times \R^n_{\hat x},\qquad
    (\Phi_\eps)_*\pa_t = \pa_t, \quad
    (\Phi_\eps)_*(\eps\pa_x) = \pa_{\hat x}.
  \]
  Furthermore, $(\Phi_\eps)_*(|\dd t\,\dd x|)=\eps^n\,|\dd t\,\dd\hat x|$. Therefore, for $\chi\in\CIc(V)$, we have
  \[
    \|(\Phi_\eps^*\chi) u\|_{H_{\sop,\eps}^{k,\alpha_\circ,\hat\alpha}(M)} \sim \eps^{\frac{n}{2}-\hat\alpha} \| \chi((\Phi_\eps)_*u) \|_{H^k(\R^{1+n})} \gtrsim \eps^{\frac{n}{2}-\hat\alpha}\|\chi((\Phi_\eps)_*u)\|_{L^\infty} = \eps^{\frac{n}{2}-\hat\alpha}\|\chi u\|_{L^\infty}.
  \]

  Second, for $t_0<t_1$ and $r_0>0$, consider now
  \[
    U' := \bigl\{ (\eps,t,x) \colon \eps \in (0,1),\ t\in(t_0,t_1),\ \eps\leq|x|\leq r_0 \bigr\} \subset \wt M, \qquad
    U'_\eps := U \cap M_\eps.
  \]
  On this set, we may take $\hat\rho=|x|$ and $\rho_\circ=\frac{\eps}{|x|}$. Passing to polar coordinates $x=r\omega$, define $\Phi'\colon(t,r,\omega)\mapsto(t,-\log r,\omega)$; then s-vector fields are $\CI(\wt M)$-linear combinations of $\pa_t$, $\hat\rho\pa_r=r\pa_r$, $\pa_\omega$, and we record
  \begin{align*}
    &\Phi'(U'_\eps) = V'_\eps := (t_0,t_1) \times (-\log r_0,-\log\eps) \times \Sph^{n-1} \subset \R_t\times(-\log r_0,\infty)_R\times\Sph^{n-1}_\omega, \\
    &\qquad \Phi'_*\pa_t=\pa_t,\quad \Phi'_*(\hat\rho\pa_r)=-\pa_R,\quad \Phi'_*\pa_\omega=\pa_\omega.
  \end{align*}
  Passing on $U'$ from the density $|\dd t\,\dd x|=r^{n-1}|\dd t\,\dd r\,\dd\omega|$ on $M$ to $|\dd t\,\frac{\dd r}{r}\,\dd\omega|$ amounts to replacing $\hat\alpha$ by $\hat\alpha-\frac{n}{2}$; and then we note that $\Phi_*'(|\dd t\,\frac{\dd r}{r}\,\dd\omega|)=|\dd t\,\dd R\,\dd\omega|$. Shifting the orders so that $\alpha_\circ=\hat\alpha-\frac{n}{2}=0$, we then observe that
  \[
    \|\chi u\|_{H_{\sop,\eps}^{k,\alpha_\circ,\hat\alpha}(M;|\dd t\,\dd x|)} \sim \|\chi u\|_{H_{\sop,\eps}^k(M;|\dd t\,\frac{\dd r}{r}\,\dd\omega|)} \sim \| \Phi'_*(\chi u) \|_{H^k(\R_t\times\R_R\times\Sph^{n-1}_\omega)} \gtrsim \|\chi u\|_{L^\infty},
  \]
  completing the proof of~\eqref{EqFsSobEmb}.

  We verify~\eqref{EqFsIncl} near the corner $M_\circ\cap\hat M$ in coordinates $t\in I\Subset\R$, $c\eps\leq r\leq C$, and for $\alpha_\circ=\hat\alpha=0$ and $l=0$. We drop integration over $(n-1)$-spheres from the notation. Then
  \[
    \int_I\int_{c\eps}^C |r^{-\frac{n}{2}+\eta}u(\eps,t,r)|^2\,r^{n-1}\,\dd t\,\dd r \leq \|u\|_{L^\infty} |I| \int_{c\eps}^C r^{-1+2\eta}\,\dd r
  \]
  is uniformly bounded as $\eps\searrow 0$, as required.
\end{proof}

We next discuss the mapping properties of se- and s-differential operators on the mixed Sobolev spaces~\eqref{EqFsFnMixed}. We first note:

\begin{lemma}[Commutator s-vector fields]
\label{LemmaFsComm}
  Denote by
  \[
    \cV_{[\sop]}(\wt M)\subset\cV_\sop(\wt M)
  \]
  the space consisting of all s-vector fields $V$ with the property that $[V,W]\in\Vse(\wt M)$ for all $W\in\Vse(\wt M)$. Then $\cV_{[\sop]}(\wt M)$ spans $\cV_\sop(\wt M)$ over $\CI_\sop(\wt M)$.
\end{lemma}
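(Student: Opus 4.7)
The strategy is to identify an explicit, easily verifiable family of vector fields lying in $\cV_{[\sop]}(\wt M)$ whose $\CI(\wt M)$-span (a fortiori their $\CI_\sop(\wt M)$-span) already exhausts $\Vs(\wt M)$. There are three ingredients: the Lie algebra property of $\Vse$, a single commutator calculation in local coordinates adapted to $\cC$, and a partition-of-unity argument.

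First I would observe that $\Vse(\wt M)\subset\cV_{[\sop]}(\wt M)$: this is immediate from the fact (clear from Definition~\ref{DefFse} and the local description of $\Vse$ by $\hat\rho\pa_t,\hat\rho\pa_x$) that $\Vse(\wt M)$ is a Lie subalgebra of $\Vb(\wt M)$.

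Next I would work in local coordinates $t\in\R$, $x\in\R^n$ on $M$ with $\cC=\{x=0\}$, and choose the specific defining function $\hat\rho=(\eps^2+|x|^2)^{1/2}$ of $\hat M$. In this chart, $\Vs(\wt M)$ is locally spanned over $\CI(\wt M)$ by $\pa_t$ and $\hat\rho\pa_{x_1},\ldots,\hat\rho\pa_{x_n}$; the latter $n$ vector fields already lie in $\Vse(\wt M)$. The key claim is that $\pa_t\in\cV_{[\sop]}(\wt M)$. To see this, any $W\in\Vse(\wt M)$ can be written locally as $W=a\,\hat\rho\pa_t+b^i\,\hat\rho\pa_{x_i}$ with $a,b^i\in\CI(\wt M)$, and since $\pa_t\hat\rho=0$ (our chosen $\hat\rho$ is $t$-independent) one computes
\[
  [\pa_t,W] = (\pa_t a)\hat\rho\pa_t + (\pa_t b^i)\hat\rho\pa_{x_i} \in \Vse(\wt M),
\]
where $\pa_t a,\pa_t b^i\in\CI(\wt M)$ because $\pa_t$ extends as a smooth b-vector field on $\wt M$ (clear in the near-field chart $(\eps,t,\hat x)$ as well as in the transition chart $(t,r,\omega,\rho_\circ)$, since in both $t$ is itself a smooth coordinate and $\pa_t$ annihilates $\eps$).

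Finally, I would globalize by a partition of unity. Cover $\wt M$ by finitely many charts, each either of the type above (intersecting $\cC$) or disjoint from $\cC$, on which $\Vs=\Vse$ holds locally and there is nothing to verify. Given $V\in\Vs(\wt M)$ and a subordinate partition of unity $\{\chi_\alpha\}\subset\CI(\wt M)$, on each chart $U_\alpha$ meeting $\cC$ write $\chi_\alpha V=f_{(\alpha)}\pa_{t_{(\alpha)}}+W_{(\alpha)}$ with $f_{(\alpha)}\in\CI(\wt M)$ compactly supported in $U_\alpha$ and $W_{(\alpha)}\in\Vse(\wt M)$; on the remaining charts $\chi_\alpha V\in\Vse(\wt M)$. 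Summing over $\alpha$ presents $V$ as a $\CI(\wt M)$-linear combination of elements of $\cV_{[\sop]}(\wt M)$, hence also a $\CI_\sop(\wt M)$-linear combination since $\CI(\wt M)\subset\CI_\sop(\wt M)$.

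The hard part, to the extent there is one, is the commutator calculation, and it is trivialized by the choice of the $t$-independent defining function $\hat\rho$; the rest is bookkeeping via partition of unity. No mode-theoretic input or estimates are needed.
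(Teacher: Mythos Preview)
Your approach is the same as the paper's, with the same key local computation $[\pa_t,W]\in\Vse(\wt M)$. There is, however, a gap in the globalization step. When you localize via a partition of unity $\{\chi_\alpha\}\subset\CI(\wt M)$, you implicitly need a \emph{global} element of $\cV_{[\sop]}(\wt M)$ that agrees with $\pa_t$ on $\supp f_{(\alpha)}$. The natural candidate is $\tilde\chi_\alpha\pa_t$ for a cutoff $\tilde\chi_\alpha\in\CI(\wt M)$ equal to $1$ on $\supp f_{(\alpha)}$; but for generic $\tilde\chi_\alpha$, the commutator
\[
  [\tilde\chi_\alpha\pa_t,W]=\tilde\chi_\alpha[\pa_t,W]-(W\tilde\chi_\alpha)\pa_t
\]
has a second term that need not lie in $\Vse(\wt M)$: se-vector fields are tangent to the fibers of $\hat M$ but not to $\hat M$ itself, so $W\tilde\chi_\alpha$ need not vanish on $\hat M$ (take $\tilde\chi_\alpha$ depending nontrivially on $\hat x$ at $\eps=0$, and $W=\la\hat x\ra\pa_{\hat x_j}$), and then $(W\tilde\chi_\alpha)\pa_t\notin\Vse(\wt M)$.

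The paper fixes this by requiring the cutoff $\psi$ to be \emph{fiber-constant} on $\hat M$ (equivalently, pulled back from $M$, or simply equal to $1$ near $\hat M$). Then $W\psi$ vanishes on $\hat M$, so $W\psi\in\hat\rho\CI(\wt M)$ and $(W\psi)\pa_t=(W\psi)\hat\rho^{-1}\cdot\hat\rho\pa_t\in\Vse(\wt M)$. Such a choice is possible precisely because a single Fermi chart already covers a full neighborhood of $\hat M$, so no subdivision of $\hat M$ is needed; your partition-of-unity setup with several charts meeting $\hat M$ would force cutoffs that are not fiber-constant there.
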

\begin{proof}
  For $\chi\in\CI(\wt M)$ with $\supp\chi\cap\hat M=\emptyset$, we have $\chi\cV_{[\sop]}=\chi\cV_\sop$. Furthermore, $\Vse(\wt M)\subset\cV_{[\sop]}(\wt M)$.

  In local coordinates $t,x$ on $M$, a spanning set of $\Vs(\wt M)$ is given by the union of $\{\pa_t\}$ and $\Vse(\wt M)$. Note then that for all $\psi\in\CI(\wt M)$ with support in the coordinate chart for which $\psi|_{\hat M}=\hat\upbeta^*\phi$ for some $\phi\in\CI(\cC)$ (i.e.\ $\psi$ is fiber-constant on $\hat M$), and for $W\in\Vse(\wt M)$, we have
  \begin{equation}
  \label{EqFsOrderpatComm}
    [\psi\pa_t,W] = \psi[\pa_t,W]-(W\psi)\pa_t.
  \end{equation}
  But $[\pa_t,W]\in\Vse(\wt M)$, and $W\psi\in\eps\CI_\sop(\wt M)$; therefore $[\psi\pa_t,W]\in\Vse(\wt M)$. This proves $\psi\pa_t\in\cV_{[\sop]}(\wt M)$. Since every element of $\Vs(\wt M)$ can be written as a linear combination of $\psi\pa_t$ and elements of $\Vse(\wt M)$ with $\CI_\sop(\wt M)$-coefficients, the lemma is proved.
\end{proof}

The following simple consequence plays a crucial role in the regularity theory for se-operators.

\begin{cor}[Commutators of se-operators and commutator s-vector fields]
\label{CorFsComm}
  Let $P\in\Diffse^{m,\ell_\circ,\hat\ell}(\wt M)$. Let $\chi\in\CI(\wt M)$ be equal to $1$ near $\hat M$ and supported in the lift of the coordinate chart $\eps,t,x$ on $\wt M'$ around $\cC$. Then $[\chi\pa_t,P]\in\Diffse^{m,\ell_\circ,\hat\ell}(\wt M)$. If, moreover, the normal operators $N_{\hat M_{t_0}}(\eps^{\hat\ell}P)$ are independent of $t_0$, then $[\chi\pa_t,P]\in\Diffse^{m,\ell_\circ,\hat\ell-1}(\wt M)$.
\end{cor}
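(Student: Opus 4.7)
The plan is to reduce both assertions to the unweighted case $P_0\in\Diffse^m(\wt M)$ by factoring out weights, derive the first from Lemma~\ref{LemmaFsComm}, and obtain the refinement by a direct coefficient computation tied to the normal operator condition.

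First I would choose defining functions satisfying $\rho_\circ\hat\rho=\eps$ with $\rho_\circ$ chosen $t$-independent on $\supp\chi$ (e.g.\ $\rho_\circ=\eps/|x|$ in the chart), and write $P=\eps^{-\hat\ell}\rho_\circ^{-(\ell_\circ-\hat\ell)}P_0$ with $P_0\in\Diffse^m(\wt M)$. Using $\eps^{-\hat\ell}=\hat\rho^{-\hat\ell}\rho_\circ^{-\hat\ell}$ and the identity $[\chi\pa_t,f]=\chi(\pa_tf)$, the $t$-independence of $\eps$ and $\rho_\circ$ gives $[\chi\pa_t,\eps^{-\hat\ell}\rho_\circ^{-(\ell_\circ-\hat\ell)}]=0$ and hence
\[
[\chi\pa_t,P]=\hat\rho^{-\hat\ell}\rho_\circ^{-\ell_\circ}[\chi\pa_t,P_0].
\]
Moreover, since $N_{\hat M_{t_0}}(\eps^{\hat\ell}P)=N_{\hat M_{t_0}}(\rho_\circ^{-(\ell_\circ-\hat\ell)}P_0)=\la\hat x\ra^{\ell_\circ-\hat\ell}N_{\hat M_{t_0}}(P_0)$ and the prefactor is $t_0$-independent, the normal operator hypothesis is equivalent to $t_0$-independence of $N_{\hat M_{t_0}}(P_0)$. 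It therefore suffices to prove (a) $[\chi\pa_t,P_0]\in\Diffse^m(\wt M)$, and (b) $[\chi\pa_t,P_0]\in\hat\rho\Diffse^m(\wt M)$ under the normal operator condition on $P_0$.

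For (a), since $\chi|_{\hat M}\equiv 1$ is pulled back from the base $\cC$, the computation~\eqref{EqFsOrderpatComm} in the proof of Lemma~\ref{LemmaFsComm} (applied with $\psi=\chi$) shows $\chi\pa_t\in\cV_{[\sop]}(\wt M)$: concretely, $[\chi\pa_t,W]=\chi[\pa_t,W]-(W\chi)\pa_t$, where $[\pa_t,W]\in\Vse$ because $\pa_t$ commutes with $\hat\rho$, while $W\chi\in\hat\rho^\infty\CI(\wt M)$ since $\pa_t\chi$ and $\pa_x\chi$ vanish near $\hat M$, whence $(W\chi)\pa_t\in\hat\rho^\infty\Vse$. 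Writing $P_0$ as a finite sum of products of at most $m$ se-vector fields with $\CI(\wt M)$-coefficients and applying the Leibniz rule then yields $[\chi\pa_t,P_0]\in\Diffse^m(\wt M)$.

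For (b), I would decompose $\chi\pa_t=\pa_t+(\chi-1)\pa_t$. The second summand is an se-vector field whose coefficient $(\chi-1)\hat\rho^{-1}\in\CI(\wt M)$ vanishes to infinite order at $\hat M$, so $[(\chi-1)\pa_t,P_0]\in\hat\rho^\infty\Diffse^m\subset\hat\rho\Diffse^m$. For the remaining term, I would expand $P_0=\sum_{j+|\alpha|\leq m}a_{j\alpha}(\hat\rho\pa_t)^j(\hat\rho\pa_x)^\alpha$ in the chart; since $\hat\rho$ is $t$-independent, $[\pa_t,\hat\rho\pa_t]=[\pa_t,\hat\rho\pa_x]=0$, and so
\[
[\pa_t,P_0]=\sum_{j+|\alpha|\leq m}(\pa_t a_{j\alpha})(\hat\rho\pa_t)^j(\hat\rho\pa_x)^\alpha.
\]
The main obstacle is the translation of the normal operator condition into the pointwise vanishing $\pa_t a_{j\alpha}|_{\hat M}=0$: expressed in the near-field chart one has $N_{\hat M_{t_0}}(P_0)=\sum a_{j\alpha}(0,t_0,\hat x)(\la\hat x\ra\pa_{\hat t})^j(\la\hat x\ra\pa_{\hat x})^\alpha$, so $t_0$-independence of this family is equivalent to $a_{j\alpha}|_{\hat M}$ being fiber-constant under $\hat\upbeta$. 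Since $a_{j\alpha}\in\CI(\wt M)$ on the manifold with corners $\wt M$, Taylor expansion of $\pa_t a_{j\alpha}$ in the boundary defining function $\hat\rho$ then gives $\pa_t a_{j\alpha}\in\hat\rho\CI(\wt M)$ globally (also across the corner $\hat M\cap M_\circ$), whence $[\pa_t,P_0]\in\hat\rho\Diffse^m(\wt M)$, completing (b).
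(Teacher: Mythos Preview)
Your proof is correct and follows essentially the same approach as the paper: both establish the first part via $\chi\pa_t\in\cV_{[\sop]}(\wt M)$ (from Lemma~\ref{LemmaFsComm}) and induction on $m$, and both obtain the refinement by direct differentiation of the coefficient expansion~\eqref{EqFseNormalP}, using that the normal operator condition forces $\pa_t a_{j\alpha}|_{\hat M}=0$. Your treatment is more detailed than the paper's (which simply says it suffices to work near a compact subset of $\hat M^\circ$ with $\hat\rho=\eps$ and $\hat\ell=0$): you carry out the weight reduction explicitly, split off the cutoff via $\chi\pa_t=\pa_t+(\chi-1)\pa_t$, and work globally with $\hat\rho=(\eps^2+|x|^2)^{1/2}$, which transparently covers the corner $\hat M\cap M_\circ$. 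One cosmetic point: your example $\rho_\circ=\eps/|x|$ is only a local defining function near $\pa M_\circ$; the global $t$-independent choice you want is $\rho_\circ=\la\hat x\ra^{-1}$, but this does not affect the argument.
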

\begin{proof}
  The first statement follows from $\chi\pa_t\in\cV_{[\sop]}$ via induction on $m$. For the second part, it suffices to work near a compact subset of $\hat M^\circ$ and with $\hat\ell=0$; the claim then follows by direct differentiation of~\eqref{EqFseNormalP} where we take $\hat\rho=\eps$ and $a_{j\alpha}=a_{j\alpha}(\eps,t,\hat x)$, with $a_{j\alpha}$ independent of $t$ at $\eps=0$.
\end{proof}

\begin{cor}[Order of se- and s-operators]
\label{CorFsOrder}
  Given $P\in\Diffse^m(\wt M)$ and $Q\in\Diffs^q(\wt M)$, one can write $P Q=\sum_j Q_j P_j$ and $Q P=\sum_k P'_k Q'_k$ (finite sums) for suitable $P_j,P'_k\in\Diffse^m(\wt M)$ and $Q_j,Q'_k\in\Diffs^q(\wt M)$.
\end{cor}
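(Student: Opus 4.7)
Both statements $PQ\in\Diffs^q\cdot\Diffse^m$ and $QP\in\Diffse^m\cdot\Diffs^q$ reduce, by peeling off one vector field at a time, to the question of interchanging a single $W\in\Vse(\wt M)$ with a single $V\in\Vs(\wt M)$ modulo operators of lower total order. The necessary commutator identity is supplied by Lemma~\ref{LemmaFsComm}: writing $V=\sum_a g_a V_a$ with $V_a\in\cV_{[\sop]}(\wt M)$ and $g_a\in\CI(\wt M)$ (allowed because the statement of Lemma~\ref{LemmaFsComm} can be localized by a partition of unity on $\wt M$), a direct Leibniz expansion gives
\[
  WV = VW + T,\qquad T := \sum_a g_a[W,V_a] + \sum_a (Wg_a)V_a \in \Vse(\wt M) + \Vs(\wt M) \subset \Diffs^1(\wt M),
\]
since $[W,V_a]\in\Vse(\wt M)$ by the defining property of $\cV_{[\sop]}$. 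Read backwards, the same identity gives $VW=WV-T$.

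\textbf{Inductive rearrangement.} I would next prove by induction on $q$ that for every $W\in\Vse(\wt M)$ one has $W\cdot\Diffs^q(\wt M)\subset\Diffs^q(\wt M)\cdot\Diffse^1(\wt M)$. The base case $q=0$ is the elementary identity $Wf=fW+(Wf)\in\Diffse^1(\wt M)$. For the step, write $Q\in\Diffs^q$ as $\sum_i V_i Q_i+R$ with $V_i\in\Vs$ and $Q_i,R\in\Diffs^{q-1}$; the core identity above then yields
\[
  WV_iQ_i = V_i(WQ_i) + T_iQ_i,
\]
and the inductive hypothesis handles $WQ_i$, $WR$, as well as the $\Vse$-part of $T_iQ_i$, while the $\Vs$-part of $T_iQ_i$ lies in $\Diffs^q\subset\Diffs^q\cdot\Diffse^1$ directly. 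A second induction on $m$ (with $q$ fixed) then yields $PQ\in\Diffs^q\cdot\Diffse^m$ for $P\in\Diffse^m$: factor $P=W_j P_j+R'$ with $W_j\in\Vse$ and $P_j,R'\in\Diffse^{m-1}$, apply the outer inductive hypothesis to $P_jQ$ and $R'Q$, and then use the first induction to rearrange each leading $W_j$ past the resulting $\Diffs^q$-factor. The mirror-image argument, starting from $VW=WV-T$ and inducting first on $m$ and then on $q$, gives $QP\in\Diffse^m\cdot\Diffs^q$.

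\textbf{Main obstacle.} The whole argument is essentially algebraic, so the only point that genuinely needs care is the function-space class of the coefficients $g_a$ in the decomposition $V=\sum_a g_a V_a$ furnished by Lemma~\ref{LemmaFsComm}: one must arrange $g_a\in\CI(\wt M)$ so that $Wg_a\in\CI(\wt M)$ and hence the remainder $T$ in the core identity really lies in $\Vse+\Vs$ rather than in some larger, non-smooth class. The proof of Lemma~\ref{LemmaFsComm} accomplishes this by treating separately the region away from $\hat M$ (where every s-vector field is already in $\cV_{[\sop]}$) and a neighborhood of $\hat M$ (where the only new generator is $\psi\pa_t$ with $\psi\in\CI(\wt M)$ fiber-constant on $\hat M$), and patching by a partition of unity on $\wt M$. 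No other subtlety arises.
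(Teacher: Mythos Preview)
Your proof is correct and follows essentially the same approach as the paper, which simply says ``This follows from Lemma~\ref{LemmaFsComm} by a simple inductive argument in $q$.'' You have spelled out in detail what the paper leaves implicit: the core commutator identity $WV=VW+T$ with $T\in\Vse+\Vs$, the double induction (first on $q$ for a single $W\in\Vse$, then on $m$), and the mirror argument for $QP$. Your identification of the regularity of the coefficients $g_a$ as the only genuine subtlety---and your resolution via the proof of Lemma~\ref{LemmaFsComm} (local charts plus partition of unity, giving $g_a\in\CI(\wt M)$ rather than merely $\CI_\sop$)---is apt and makes explicit a point the paper leaves to the reader.
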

\begin{proof}
  This follows from Lemma~\ref{LemmaFsComm} by a simple inductive argument in $q$.
\end{proof}

Therefore, when $M$ is compact, every $P\in\Diffse^m(\wt M)$ defines a uniformly bounded map
\[
  P \colon H_{(\seop;\sop),\eps}^{(s;k)}(M) \to H_{(\seop;\sop),\eps}^{(s-m;k)}(M),
\]
similarly for weighted spaces and operators. One can relax the regularity requirements to $P\in\cC_{\seop;\sop}^{(d_0;k)}\Diffse^m(\wt M)$ for some sufficiently large $d_0$ (depending only on $s,m$). Analogous mapping properties hold on the spaces $\cC_{\seop;\sop}^{(s;k)}$. Corollary~\ref{CorFsOrder} also implies that changing the order of the operators $P_j$ and $Q_l$ in Definition~\ref{DefFsFn}\eqref{ItFsFnMixedC} leads to a uniformly equivalent norm.

\subsection{Pseudodifferential operators; variable order spaces}
\label{SsFVar}

In our analysis we shall need to work with spaces $H_{\seop,\eps}^\sfs(M)$ whose differential order is a function on se-phase space. Such spaces are defined microlocally, i.e.\ via testing with variable order se-pseu\-do\-dif\-fer\-en\-tial operators. We define such operators in~\S\ref{SssFVarse} below, following a recapitulation (and rephrasing from a bounded geometry perspective) from \cite{Hintz3b,HintzConicWave} of the edge and 3b-settings in~\S\S\ref{SssFVarE}--\ref{SssFVar3b}. The ps.d.o.s (of class $\tilde\Psi_\seop$) defined in~\S\ref{SssFVarse} do not act uniformly boundedly on the mixed function spaces of Definition~\ref{DefFsFn}, since they do not preserve s-regularity. In~\S\ref{SssFVarsse}, we use the parameterized scaled bounded geometry perspective introduced in \cite{HintzScaledBddGeo} to define a more restrictive class of se-ps.d.o.s (denoted $\Psi_\seop$) whose elements \emph{do} act uniformly boundedly on mixed function spaces; these ps.d.o.s will be the key tools to track se-regularity microlocally in spaces encoding fixed integer amounts of s-regularity.

\subsubsection{Edge setting}
\label{SssFVarE}

In the edge setting, we only consider here the local coordinate description of (variable order) edge pseudodifferential operators in the special case of interest; we refer the reader to Mazzeo's original paper \cite{MazzeoEdge} and also to \cite[Appendix~A]{HintzConicWave} for details. We work in coordinates $t\in\R$, $r\geq 0$, $\omega\in\Sph^{n-1}$ on the manifold $M_\circ$ as in~\S\ref{SsFE}, and indeed in a coordinate patch on $\Sph^{n-1}$, so we regard $\omega\in\R^{n-1}$; for concreteness, we use projective coordinates $\omega=(\frac{x^j}{x^1})_{j=2,\ldots,n}$ on $\Sph^{n-1}\subset\R^n_x$. Edge vector fields are thus spanned by $r\pa_t$, $r\pa_r$, $\pa_\omega$. Given a variable order function $\sfs\in\CI(\Se^*M_\circ)$ and a symbol $a\in S^\sfs(\Te^*M_\circ)$ with support lying over a compact subset of the coordinate patch, we can define the quantization
\[
  \Op_\eop(a)\in\Psie^\sfs(M_\circ)
\]
as follows. Write edge covectors as $-\sigma_\eop\,\frac{\dd t}{r}+\xi_\eop\,\frac{\dd r}{r}+\eta_\eop\cdot\dd\omega$; fix a cutoff $\phi\in\CIc((-2,2))$ which is equal to $1$ on $[-1,1]$, and a cutoff $\chi\in\CIc(\R^{n-1})$ which equals $1$ near the $\omega$-support of $a$; then
\begin{equation}
\label{EqFVarE}
\begin{split}
  (\Op_\eop(a)u)(t,r,\omega) = (2\pi)^{-n-1}\iiint&\,\exp\Bigl[i\Bigl(-\sigma_\eop\frac{t-t'}{r'}+\xi_\eop\log\Bigl(\frac{r}{r'}\Bigr)+\eta_\eop\cdot(\omega-\omega')\Bigr)\Bigr] \\
    &\times \phi\Bigl(\Bigl|\frac{t-t'}{r'}\Bigr|\Bigr)\phi\Bigl(\log\Bigl(\frac{r}{r'}\Bigr)\Bigr)\chi(\omega')a(t,r,\omega;\sigma_\eop,\xi_\eop,\eta_\eop) \\
    &\times u(t',r',\omega')\,\dd\sigma_\eop\,\dd\xi_\eop\,\dd\eta_\eop\,\frac{\dd t'}{r'}\,\frac{\dd r'}{r'}\,\dd\omega'.
\end{split}
\end{equation}
The principal symbol of $A=\Op_\eop(a)$ is $a\bmod\bigcap_{\delta>0}S^{\sfs-1+2\delta}(\Te^*M_\circ)$, and one then defines the elliptic set $\Ell_\eop(A)$, resp.\ operator wave front set $\WF'_\eop(A)$ of $A$ to be the elliptic set, resp.\ essential support, of $a$ as usual; they can be regarded equivalently as conic subsets of $\Te^*M_\circ\setminus o$ (or their closures in $\ol{\Te^*}M_\circ\setminus o$) or as subsets of fiber infinity $\Se^*M_\circ$. Spaces of weighted ps.d.o.s are denoted $\Psie^{\sfs,\ell}(M_\circ)=r^{-\ell}\Psie^\sfs(M)$.

When $M_\circ$ is compact and equipped with a smooth weighted b-density, variable order edge spaces $\He^\sfs(M_\circ)$ can now be defined via testing with elliptic edge-ps.d.o.s: let $A\in\Psie^\sfs(M_\circ)$ be elliptic and fix any constant $s_0\leq\inf\sfs$, then we set
\[
  \|u\|_{\He^\sfs(M_\circ)}^2 := \|u\|_{\He^{s_0}(M_\circ)}^2 + \|A u\|_{L^2(M_\circ)}^2;
\]
similarly for weighted spaces $\He^{\sfs,\alpha}(M_\circ)$.

A more direct approach to variable order edge Sobolev spaces (discussed also in \cite[Appendix~A.2]{HintzConicWave}) is to regard $(M_\circ)^\circ$, equipped with any fixed smooth Riemannian edge metric, as a \emph{manifold with bounded geometry} \cite{ShubinBounded}. We shall freely use the material from Appendix~\ref{SB}. In local coordinates as above, and restricting further to bounded $r$, this amounts to covering a neighborhood of $\R_t\times(0,1]_r\times\R^{n-1}_\omega$ by the sets
\begin{equation}
\label{EqFVarEBdd}
  U_{j k} = \bigl((j-1)2^{-k},(j+1)2^{-k}\bigr)_t \times (2^{-k-1},2^{-k+1})_r \times \{ |\omega|<2 \},
\end{equation}
where $j\in\Z$, $k\in\N_0$ (or, more generally, $k\in\Z$ bounded from below, in order to cover a larger range of $r$). Under the maps
\begin{equation}
\label{EqFVarEBddMap}
  \phi_{j k}\colon U_{j k}\ni(t,r,\omega)\mapsto(2^k t-j,2^k r,\omega)\in U := (-1,1)_T\times\Bigl(\frac12,2\Bigr)_R\times\{|\omega|<2\},
\end{equation}
the vector of basic edge vector fields $(r\pa_t,r\pa_r,\pa_\omega)$ pushes forward to $A_{j k}(\pa_T,\pa_R,\pa_\omega)$ where the $(n+1)\times(n+1)$ matrix valued function $A_{j k}$ on $U$ and its inverse are uniformly (for all $j,k$, and with all derivatives) bounded. Therefore, elements of $S^s(\Te^*M_\circ)$ with compact support contained in $r\leq 1$ push forward to a uniformly bounded family of elements of $S^s(T^*U)$; when $M_\circ$ is compact, this implies that $S^s(\Te^*M_\circ)\subset S^s_{\rm uni}(T^*(M_\circ)^\circ)$. In fact, this remains true for symbols of class $\CI_\eop S^s(\Te^*M_\circ)$ where we write $\CI_\eop(M_\circ)\subset\CI((M_\circ)^\circ)$ for the space of all functions which remain uniformly bounded on $M_\circ$ upon application of any number of smooth edge vector fields. Symbols of class $\CI_\eop S^s(\Te^*M_\circ)$ thus have only edge regularity in the base variables.

We can then quantize symbols in $S^s_{\rm uni}(T^*(M_\circ)^\circ)$ to define the space $\Psi^s_{\rm uni}((M_\circ)^\circ)$. In the present context this is thus the space of $s$-th order edge pseudodifferential operators with edge-regular coefficients, which we denote $\CI_\eop\Psie^s(M_\circ)$; indeed one can show that for $a\in S^s(\Te^*M_\circ)$ the quantizations $\Op_\eop(a)$ (defined via~\eqref{EqFVarE}) and $\Op_{\rm uni}(a)$ agree modulo $\CI_\eop\Psie^{-\infty}(M_\circ)$, essentially since the cutoffs in~\eqref{EqFVarE} localize $(t,r,\omega)$ and $(t',r',\omega')$ to the same chart $U_{j k}$ or, more accurately, a uniformly bounded number of such charts. The same considerations apply to quantizations of symbols of variable order $\sfs\in\CI(\Se^*M_\circ)$ (and indeed the regularity $\sfs\in\CI_\eop(\Se^*M_\circ)$ is sufficient). One can then further equivalently define $\He^\sfs(M_\circ):=H_{\rm uni}^\sfs((M_\circ)^\circ)$, and similarly for weighted spaces; the $L^2$ space here is defined with respect to a uniformly positive and bounded density, which here means a smooth positive edge density.

Given $A\in\Psi^s_{\rm uni}((M_\circ)^\circ)$, the fact that $(M_\circ)^\circ$ is, by definition, the interior of the manifold $M_\circ$ allows one to define the elliptic set as a subset not only of $S^*(M_\circ)^\circ$, but of $\Se^*M_\circ$ in essentially the usual fashion: $A$ is elliptic at $\zeta\in\Se^*M_\circ$ if and only if there exists an open neighborhood $U\subset\Se^*M_\circ$ of $\zeta$ so that $A$ is uniformly elliptic on $U\cap S^*(M_\circ)^\circ$. One can also use compactifications of $S^*(M_\circ)^\circ$ other than $\Se^*M_\circ$. For example, if $Z\subset\pa M_\circ$ is a boundary fiber, one can define
\[
  \Ell_\eop(A) \subset \upbeta^*(\Se^*M_\circ),\qquad \upbeta\colon[M_\circ;Z]\to M_\circ,
\]
in the same fashion. This is useful since cutoffs like $\chi(t/r)$, with $\chi\in\CIc(\R)$, are on the one hand only elements of $\CI_\eop\subset\CI_\eop\Psie^0$ and not of $\CI\subset\Psie^0$, but on the other hand lift to be smooth on the blow-up of $\R_t\times[0,\infty)_r\times\Sph^{n-1}$ along $Z=\{0\}\times\{0\}\times\Sph^{n-1}$, so for example $\Ell_\eop(\chi(t/r))$, as a subset of $\Se^*M_\circ$, does not contain any points over $\pa M_\circ$, whereas as a subset of $\upbeta^*(\Se^*M_\circ)$ it does (namely the points where $\chi(t/r)\neq 0$, with $t/r$ being smooth near the interior of the front face of $[M_\circ;Z]$). --- The same considerations apply to the operator wave front set.

\subsubsection{3b-setting}
\label{SssFVar3b}

We now turn to spaces of 3b-pseudodifferential operators on the compactification $\cM$ of $\R_t\times\R^n_x$ defined in~\eqref{EqF3cM}. These spaces were first defined in \cite[Definition~4.3]{Hintz3b} via their Schwartz kernels; here, we instead use the corresponding bounded geometry perspective on $\R_t\times\R^n_x$ indicated in \cite[Definition~4.5]{Hintz3b}. (We drop the hats from the notation for better readability.) Concretely, recalling that 3b-vector fields are spanned by $\la x\ra\pa_t$, $\la x\ra\pa_x$, we cover $\R^{1+n}$ with two families of distinguished open sets. The first family covers the region $|x|<2$ and is given by
\begin{equation}
\label{EqFVar3b1}
  U^{(1)}_j = (j-1,j+1)_t \times B_{\R^n}(0,2)_x,\qquad B_{\R^n}(x_0,R)=\{x\in\R^n\colon|x-x_0|<R\},\quad j\in\Z.
\end{equation}
The second family covers the region $|x|>\frac12$. Working the region where $\pm x^1>\frac14\max_{j\neq 1}|x^j|$, we use local coordinates $\omega^j=\frac{x^j}{x^1}$, $j=2,\ldots,n$, on $\Sph^{n-1}$ (so $|\omega^j|<4$) and the sets
\begin{equation}
\label{EqFVar3b2}
  U^{(2)}_{\pm 1,j k} = \bigl(2^k(j-1),2^k(j+1)\bigr)_t \times (2^{k-1},2^{k+1})_r \times B_{\R^{n-1}}(0,2)_\omega, \qquad j\in\Z,\ k\in\N_0,
\end{equation}
which are subsets of $\R^{1+n}$ via $x^1=\pm r/\sqrt{1+|\omega|^2}$, $x^j=x^1\omega^j$ ($j=2,\ldots,n$); we define the sets $U_{\pm m,j k}^{(2)}$ for $m=2,\ldots,n$ analogously. See Figure~\ref{FigFVar3b}. We can then map $U_j^{(1)}\to U^{(1)}:=(-1,1)_T\times B_{\R^n}(0,2)_X$ via $(t,x)\mapsto(t-j,x)$, and the basic 3b-vector fields $\pa_t$, $\pa_x$ push forward to $\pa_T$, $\pa_X$; similarly, we map
\[
  U_{\pm 1,j k}^{(2)} \ni (t,r,\omega) \mapsto (T,R,\omega) \in U^{(2)}:=(-1,1)_T\times\Bigl(\frac12,2\Bigr)_R\times B_{\R^{n-1}}(0,2)_\omega
\]
via $(T,R,\omega)=(2^{-k}t-j,2^{-k}r,\omega)$; the pushforwards of $r\pa_t$, $r\pa_r$, $\pa_\omega$ under this map are $R\pa_T$, $R\pa_R$, $\pa_\omega$. Note then that $R,R^{-1}$ are bounded on $U^{(2)}$. We conclude that a 3b-vector field on $\cM$ is an element of $\CI_{\rm uni}(\R^{1+n};T\R^{1+n})$, and indeed we have $\CI_{\rm uni}(\R^{1+n};T\R^{1+n})=\CI_\tbop\Vtb(\cM)$ where $\CI_\tbop(\cM)$ is the space of bounded smooth functions on $\cM^\circ=\R^{1+n}$ which remain bounded upon application of any number of 3b-vector fields on $\cM$.

\begin{figure}[!ht]
\centering
\includegraphics{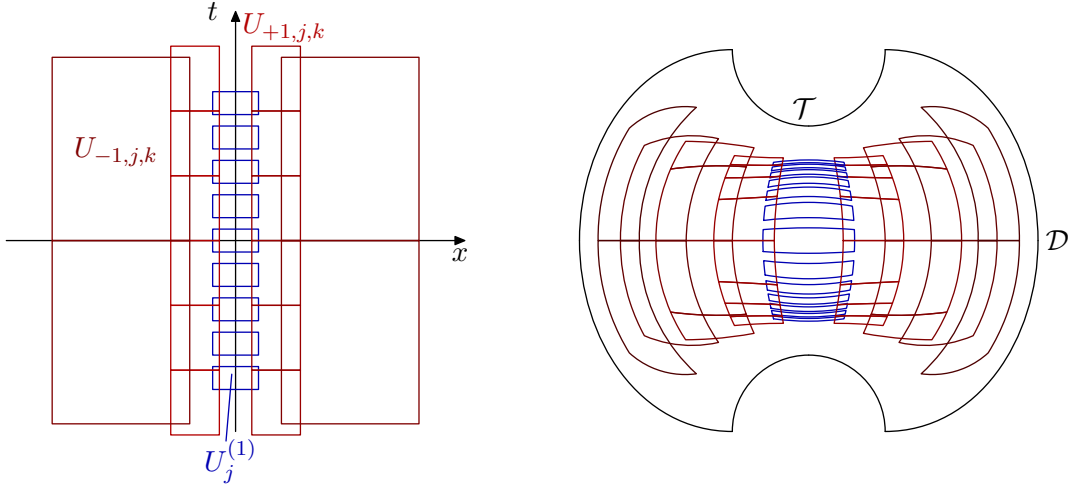}
\caption{Illustration of some of the coordinate charts for the bounded geometry perspective on $\Vtb(\cM)$, here in the 2-dimensional case $\cM=[\ol{\R_t\times\R_x},\pa(\ol\R_t\times\{0\})]$. Charts of type~\eqref{EqFVar3b1} are drawn in blue, charts of type~\eqref{EqFVar3b2} in red. \textit{On the left:} picture in $t,x$ coordinates. \textit{On the right:} the same picture (with some additional charts), drawn in a compactified fashion.}
\label{FigFVar3b}
\end{figure}

We can now define (variable order, weighted) 3b-Sobolev spaces $\Htb^{\sfs,\alpha_\cD,\alpha_\cT}(\cM)$ as weighted (with weight $w=\rho_\cD^{\alpha_\cD}\rho_\cT^{\alpha_\cT}$) uniform Sobolev spaces on $\cM^\circ$; here $\sfs\in\CI(\Stb^*\cM)$ (or even $\sfs\in\CI_\tbop(\Stb^*\cM)$, although we do not need this level of generality). The underlying volume density is a positive 3b-density; usage of other weighted b-densities can be accommodated by shifting the weights appropriately. This defines the same space (up to equivalence of norms) as in \cite[\S{4.5}]{Hintz3b} where smooth coefficient 3b-ps.d.o.s are utilized instead of the uniform ps.d.o.s here (which are 3b-ps.d.o.s with 3b-regular coefficients), for the smooth coefficient 3b-ps.d.o.s, upon localization of their Schwartz kernel to a collar neighborhood of the 3b-diagonal, are special instances of uniform ps.d.o.s, and the notions of ellipticity coincide.

We now recall from \cite[Proposition~4.29(1)]{Hintz3b} the variable order version of the isomorphism~\eqref{EqF3FT} for variable orders $\sfs\in\CI(\Stb^*\cM)$ which are \emph{stationary}, i.e.\ invariant under the lift of $t$-translations to $\Ttb^*\cM$. (In coordinates $-\sigma_\tbop\frac{\dd t}{\la x\ra}+\xi_\tbop\frac{\dd x}{\la x\ra}$, the lift of the translation $(t,x)\mapsto(t+a,x)$ is given simply by $(t,x,\sigma_\tbop,\xi_\tbop)\mapsto(t+a,x,\sigma_\tbop,\xi_\tbop)$.) Namely, using a density $|\dd t|\otimes\mu$ on $\cM$ where $\mu$ is a weighted b-density on $\cX=\ol{\R^n_x}$, we have the isomorphism
\begin{equation}
\label{EqFVar3bFT}
\begin{split}
  \cF &\colon \Htb^{\sfs,\alpha_\cD,0}(\cM) \to L^2\bigl(\R_\sigma; H_{\wh{\tbop},\sigma}^{\sfs,\alpha_\cD}(\cX)\bigr), \\
  &\qquad
    H_{\wh{\tbop},\sigma}^{\sfs,\alpha_\cD}(\cX) :=
      \begin{cases}
        H_{\scbtop,|\sigma|}^{\sfs_\infty,\sfs_\scop+\alpha_\cD,\alpha_\cD,0}(\cX), & |\sigma|\leq 1, \\
        H_{\scop,|\sigma|^{-1}}^{\sfs_\infty,\sfs_\scop+\alpha_\cD,\sfs_\semi}(\cX), & |\sigma| > 1,
      \end{cases}
\end{split}
\end{equation}
where for $|\sigma|\leq 1$ the orders $\sfs_\infty$ (variable sc-b-transition regularity order) and $\sfs_\scop$ (variable sc-decay order at $\scface\subset\cX_\scbtop$) are defined in terms of $\sfs$ via pullback along identifications of the sc-b-transition phase space and 3b-phase space as in \cite[Lemmas~4.28(1) and 3.12]{Hintz3b}; and similarly for $|\sigma|>1$, the orders $\sfs_\infty$ (semiclassical scattering regularity order), $\sfs_\scop$ (semiclassical scattering decay order), and $\sfs_\semi$ (semiclassical order, i.e.\ power of $h$) are defined in terms of $\sfs$ as in \cite[Lemma~4.28(2)]{Hintz3b}. The norms on variable order sc-b-transition and semiclassical scattering spaces here are defined as usual via testing with elliptic elements in the corresponding class of variable order pseudodifferential operators. (In the sc-b-transition case, they can be described in terms of the more standard variable order b- \cite[Appendix~A]{BaskinVasyWunschRadMink} and variable order scattering \cite{VasyMinicourse} spaces completely analogously to~\eqref{EqF3scbtNormzf}.) We refer the reader to \cite[\S{4.3}]{HintzScaledBddGeo} for a vast generalization of~\eqref{EqFVar3bFT}.

\subsubsection{se-setting: bounded geometry perspective}
\label{SssFVarse}

We consider the se-setting first in local coordinates, so we take
\begin{equation}
\label{EqFVarseLoc}
  M=\R_t\times\R^n_x,\qquad
  \cC=\R_t\times\{0\}\subset M,\qquad
  \wt M=[[0,1)_\eps\times M;\{0\}\times\cC]
\end{equation}
as in Definition~\ref{DefFse}. In the region $|\hat x|<2$ (with $\hat x=\frac{x}{\eps}$), we define for $\eps\in(0,1)$ the sets
\begin{equation}
\label{EqFVarse1}
  U^{(1)}_{\eps,j} := \{\eps\} \times \bigl((j-1)\eps,(j+1)\eps\bigr)_t \times B_{\R^n}(0,2)_{\hat x},\qquad j\in\Z.
\end{equation}
In $|\hat x|>\frac12$, so $|x|>\frac{\eps}{2}$, we work in the region where $\pm x^1>\frac14\max_{j\neq 1}|x^j|$. There, we introduce local coordinates $\omega^j=\frac{x^j}{x^1}$ ($j=2,\ldots,n$) on $\Sph^{n-1}$, and define for $j\in\Z$, $k\in\N_0$ the sets
\begin{equation}
\label{EqFVarse2}
  U^{(2)}_{\eps,\pm 1,j k} := \{\eps\} \times \bigl(2^k(j-1)\eps,2^k(j+1)\eps\bigr)_t \times \bigl(2^{k-1}\eps,2^{k+1}\eps\bigr)_r \times B_{\R^{n-1}}(0,2)_\omega,
\end{equation}
regarded as a subset of $\wt M$ via $x^1=\pm r/\sqrt{1+|\omega|^2}$, and similarly $U^{(2)}_{\eps,\pm m,j k}$ when $x^m$ and $x^1$ exchange roles. Roughly speaking, $U^{(2)}_{\eps,\pm 1,j k}$ lives in the near-field regime (bounded $|\hat x|$) when $k$ is bounded while $\eps$ decreases, and it lives in the far-field regime ($|x|\gtrsim 1$) when $2^k\eps\geq 1$; see Figure~\ref{FigFVarse}. Define the maps
\begin{align*}
  \phi^{(1)}_{\eps,j} \colon U^{(1)}_{\eps,j} &\to U^{(1)}:=(-1,1)_T \times B_{\R^n}(0,2)_X, \\
    (\eps,t,\hat x) &\mapsto (T,X)=(\eps^{-1}t-j,\hat x), \\
  \phi^{(2)}_{\eps,\pm 1,j k} \colon U^{(2)}_{\eps,\pm 1,j k} &\to U^{(2)} := (-1,1)_T \times \Bigl(\frac12,2\Bigr)_R \times B_{\R^{n-1}}(0,2)_\omega, \\
    (\eps,t,r,\omega) &\mapsto (T,R,\omega) = (2^{-k}\eps^{-1}t-j,2^{-k}\eps^{-1}r,\omega).
\end{align*}
We then compute the pushforwards of the basic se-vector fields $\hat\rho\pa_t$, $\hat\rho\pa_x$ (or of the splitting of $\hat\rho\pa_x$ into radial and spherical vector fields), with the choice of local defining function $\hat\rho=\eps$ for $|\hat x|\lesssim 1$ and $\hat\rho=r$ for $|\hat x|\gtrsim 1$, to be
\begin{equation}
\label{EqFVarsePushfwd}
\begin{alignedat}{2}
  (\phi_{\eps,j}^{(1)})_* &\colon
    \eps\pa_t \mapsto \pa_T,&\qquad&
    \eps\pa_x = \pa_{\hat x} \mapsto \pa_X, \\
  (\phi_{\eps,\pm 1,j k}^{(2)})_* &\colon
    r\pa_t \mapsto R\pa_T, &\qquad&
    r\pa_r \mapsto R\pa_R, \qquad\quad
    \pa_\omega \mapsto \pa_\omega.
\end{alignedat}
\end{equation}
Note that $R,R^{-1}\in(\frac12,2)$ are bounded away from $0$ and $\infty$.

\begin{figure}[!ht]
\centering
\includegraphics{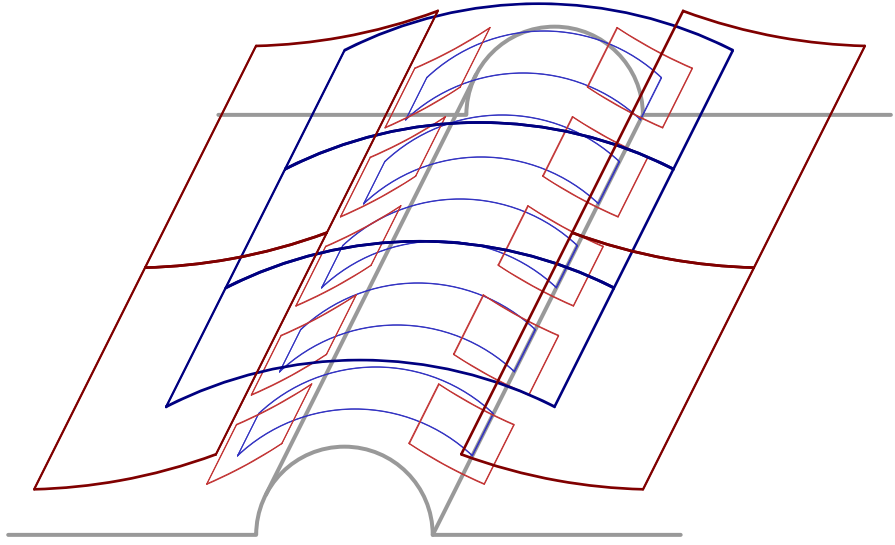}
\caption{Coordinate charts for the bounded geometry perspective on $\Vse(\wt M)$; charts of type~\eqref{EqFVarse1} are drawn in blue, charts of type~\eqref{EqFVarse2} in red. The manifold $\wt M$ is drawn in gray.}
\label{FigFVarse}
\end{figure}

For each $\eps>0$, the covering $\fC(\eps)$ of $M$ by the charts $(U^{(1)}_{\eps,j},\phi^{(1)}_{\eps,j})$ and $(U^{(2)}_{\eps,\pm m,j k},\phi^{(2)}_{\eps,\pm m,j k})$ gives $M$ the structure of a manifold with (parameterized) bounded geometry. We can thus define variable order se-Sobolev spaces by
\begin{equation}
\label{EqFVarsetildeH}
  \tilde H_{\seop,\eps}^\sfs(M) := H_{\rm uni,\fC(\eps)}^\sfs(M)
\end{equation}
for variable order functions $\sfs\in\CI_{\rm uni,\fC(\eps)}(S^*M)$, where we made the covering explicit in the notation of the uniform Sobolev space. We shall only work with variable order functions $\sfs\in\CI(\Sse^*\wt M)$, and thus with the order function $\sfs|_{S^*M_\eps}$ in~\eqref{EqFVarsetildeH}. Similarly, we define
\[
  \tilde\Psi_\seop^{\sfs,\ell_\circ,\hat\ell}(\wt M) := \{ \wt P=(\wt P_\eps)_{\eps\in(0,1)} \colon \wt P_\eps \in \rho_\circ^{-\ell_\circ}\hat\rho^{-\hat\ell}\Psi_{\rm uni,\fC(\eps)}^\sfs(M)\ \text{with uniform (in $\eps$) bounds} \},
\]
where the uniform bounds refer to uniform (in $\eps$) bounds on the symbol seminorms of the underlying uniform (on $M$ with covering $\fC(\eps)$) symbols, as well as to uniform (in $\eps$) $\CI$ bounds on residual operators in the above charts for $\fC(\eps)$. We use the notation $\tilde\Psi$ here to indicate the lack of smoothness of the coefficients of such ps.d.o.s, both in the base variables and in $\eps$; the allowed level of regularity in the base variables is, by definition, se-regularity, which will be useful for sharp localizations near fibers of $\hat M$, while the lack of smoothness in $\eps$ is of no concern since $\eps$ is merely a parameter (much as in the semiclassical setting, see e.g.\ \cite[\S{4.4.1}]{ZworskiSemiclassical}).

The reason for the tilde in~\eqref{EqFVarsetildeH} is the following.
\begin{lemma}[Uniform Sobolev spaces and se-Sobolev spaces]
\label{LemmaFVarseSob}
  Fix on $M=\R_t\times\R^n_x$ the Lebesgue measure $|\dd t\,\dd x|$, and set $\hat\rho=(\eps^2+|x|^2)^{1/2}$. Then have $\tilde H_{\seop,\eps}^0(M)=\hat\rho^{\frac{n+1}{2}}L^2(M)$, in the sense that the norms of both spaces are uniformly equivalent.
\end{lemma}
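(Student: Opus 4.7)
The strategy is to unwind the definition of the uniform Sobolev space at order zero, which amounts to $L^2$ on each chart with respect to the standard Euclidean density in the chart coordinates, and to compute the Jacobian of the chart maps $\phi^{(i)}_{\eps,\ldots}$ to see that the induced density on $M$ is uniformly comparable to $\hat\rho^{-(n+1)}|\dd t\,\dd x|$ on each chart.

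First, I would recall that for the bounded geometry structure $\fC(\eps)$, one has (equivalently, via a uniformly locally finite partition of unity subordinate to $\fC(\eps)$, or via quantization of a $0$-th order elliptic symbol)
\[
  \|u\|_{\tilde H_{\seop,\eps}^0(M)}^2 \sim \sum_{\alpha} \bigl\| (\phi_\alpha)_*(\chi_\alpha u) \bigr\|_{L^2(U_\alpha,\,|\dd T\,\dd X|\ {\rm or}\ |\dd T\,\dd R\,\dd\omega|)}^2,
\]
where $\alpha$ ranges over the charts $(U^{(1)}_{\eps,j},\phi^{(1)}_{\eps,j})$ and $(U^{(2)}_{\eps,\pm m,j k},\phi^{(2)}_{\eps,\pm m,j k})$ from \eqref{EqFVarse1}--\eqref{EqFVarse2}, and $\{\chi_\alpha\}$ is a uniformly bounded (in all uniform derivatives) partition of unity subordinate to $\fC(\eps)$.

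Next, I would compute the pullback of the chart Lebesgue density on each chart. For a chart of type \eqref{EqFVarse1}, the inverse of $\phi^{(1)}_{\eps,j}$ is $(T,X)\mapsto(\eps(T+j),\eps X)$, so
\[
  (\phi^{(1)}_{\eps,j})^*|\dd T\,\dd X| = \eps^{-(n+1)}|\dd t\,\dd x|\qquad\text{on}\ U^{(1)}_{\eps,j}.
\]
On this chart $|x|\leq 2\eps$, hence $\hat\rho=(\eps^2+|x|^2)^{1/2}\sim\eps$, so $\hat\rho^{-(n+1)}\sim\eps^{-(n+1)}$. For a chart of type \eqref{EqFVarse2}, written in $(t,r,\omega)$ coordinates (and accounting for $|\dd t\,\dd x|=r^{n-1}|\dd t\,\dd r\,\dd\omega|$), the same computation gives
\[
  (\phi^{(2)}_{\eps,\pm 1,j k})^*|\dd T\,\dd R\,\dd\omega| = (2^k\eps)^{-2} r^{-(n-1)}|\dd t\,\dd x|
    \sim (2^k\eps)^{-(n+1)} |\dd t\,\dd x|\qquad\text{on}\ U^{(2)}_{\eps,\pm 1,j k},
\]
since $r\sim 2^k\eps$ on this chart; on the other hand $\hat\rho\sim 2^k\eps$ there as well (both for $k=0$ where $\hat\rho\sim\eps$ and for $k\geq 1$ where $\hat\rho\sim r$). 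Thus in both cases the chart density pulls back to a function uniformly comparable to $\hat\rho^{-(n+1)}|\dd t\,\dd x|$.

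Consequently, on each chart,
\[
  \bigl\| (\phi_\alpha)_*(\chi_\alpha u) \bigr\|_{L^2(U_\alpha)}^2 \sim \bigl\| \hat\rho^{-\frac{n+1}{2}}\chi_\alpha u \bigr\|_{L^2(M,|\dd t\,\dd x|)}^2,
\]
with constants of equivalence independent of $\eps$, $\alpha$. Summing over $\alpha$ and using the uniform finite overlap of the covering $\fC(\eps)$, we obtain
\[
  \|u\|_{\tilde H_{\seop,\eps}^0(M)}^2 \sim \bigl\| \hat\rho^{-\frac{n+1}{2}} u \bigr\|_{L^2(M,|\dd t\,\dd x|)}^2,
\]
which is exactly the statement $\tilde H_{\seop,\eps}^0(M) = \hat\rho^{\frac{n+1}{2}}L^2(M)$ with uniformly equivalent norms.

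The only mild subtlety, and the point I would check most carefully, is the equivalence of the uniform Sobolev norm with the partition-of-unity expression above when $\fC(\eps)$ itself varies with $\eps$; but this is built into the very notion of (parameterized) bounded geometry developed in \cite{HintzScaledBddGeo} and applied here, so no real work is required beyond verifying that the volume computations above are uniform in the chart label $\alpha$ and in $\eps\in(0,1)$, which they manifestly are.
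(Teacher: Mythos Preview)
Your proof is correct and takes essentially the same approach as the paper: both verify that the pullback of the chart Lebesgue density to $M$ is uniformly comparable to the se-density $\hat\rho^{-(n+1)}|\dd t\,\dd x|$, which is precisely the statement that the uniform $L^2$-norm equals the $\hat\rho^{\frac{n+1}{2}}L^2(M)$-norm. The paper's proof states this more tersely by invoking the pushforward computations~\eqref{EqFVarsePushfwd} and noting that $\hat\rho/\eps$ (on type~(1) charts) and $\hat\rho/r$ (on type~(2) charts) are uniformly bounded in $\CI$; your explicit Jacobian computations are simply the unpacking of this observation.
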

\begin{proof}
  In view of~\eqref{EqFVarsePushfwd}, the norm on $\tilde H_{\seop,\eps}^0(M)$ is the $L^2$ norm with respect to a smooth positive se-density, i.e.\ with respect to $a|\frac{\dd t}{\hat\rho}\frac{\dd x}{\hat\rho^n}|$ where $a,a^{-1}\in\CI_{\rm uni,\fC(\eps)}(M)$; this is due to the fact that the quotients of $\hat\rho$ and $\eps$, resp.\ $r$ are uniformly bounded in $\CI$ in $|\hat x|<2$, resp.\ $|\hat x|>\frac12$.
\end{proof}

\begin{definition}[Variable order weighted se-Sobolev spaces]
\label{DefFVarseSob}
  For $\sfs\in\CI(\Sse^*\wt M)$, $\alpha_\circ,\hat\alpha\in\R$, and $\rho_\circ=\frac{\eps}{(\eps^2+|x|^2)^{1/2}}$, $\hat\rho=(\eps^2+|x|^2)^{1/2}$, we set
  \[
    H_{\seop,\eps}^{\sfs,\alpha_\circ,\hat\alpha}(M) := \rho_\circ^{\alpha_\circ}\hat\rho^{\hat\alpha-\frac{n+1}{2}}\tilde H_{\seop,\eps}^\sfs(M).
  \]
\end{definition}

For constant integer orders $\sfs$, Lemma~\ref{LemmaFVarseSob} shows that this is consistent with our earlier Definition~\ref{DefFseSob}. Elements of $\tilde\Psi_\seop(\wt M)$ act boundedly between these spaces, with uniform (in $\eps$) operator norm bounds.

Lemmas~\ref{LemmaFseEdgeSob} and \ref{LemmaFseHse3b} remain valid also for variable order se-spaces:

\begin{prop}[se- and edge or 3b-Sobolev spaces]
\label{PropFVarseSobRel}
  Let $\sfs\in\CI(\Sse^*\wt M)$, $\alpha_\circ,\hat\alpha\in\R$. We fix the density $|\dd t\,\dd x|$ on $M$ and $M_\circ$.
  \begin{enumerate}
  \item\label{ItFVarseSobRele}{\rm (Edge relationship.)} Let $\sfs_\circ:=\sfs|_{\Sse^*_{M_\circ}\wt M}\in\CI(\Se^*M_\circ)$ (using~\eqref{EqFseBundleEdge}). Let $K\subset\R_t\times\R^n_x$ be compact. Then for all $\delta>0$ there exist $c_0>0$ and $C$ so that the following holds. Let $\chi\in\CIc([0,c_0))$, $\chi_\circ=\chi(\frac{\eps}{|x|})$. Then for all $\eps<c_0$ and all $u$ with $\supp u\subset K$, we have
    \begin{equation}
    \label{EqFVarseSobRele}
    \begin{split}
      \|\chi_\circ u\|_{H_{\seop,\eps}^{\sfs,\alpha_\circ,\hat\alpha}(M)} &\leq C\eps^{-\alpha_\circ}\|(\chi_\circ u)(\eps,-)\|_{\He^{\sfs_\circ+\delta,\hat\alpha-\alpha_\circ}(M_\circ)}, \\
      \eps^{-\alpha_\circ}\|(\chi_\circ u)(\eps,-)\|_{\He^{\sfs_\circ-\delta,\hat\alpha-\alpha_\circ}(M_\circ)} &\leq C\|\chi_\circ u\|_{H_{\seop,\eps}^{\sfs,\alpha_\circ,\hat\alpha}(M)}.
    \end{split}
    \end{equation}
    for all $u$ with $\supp u\subset K$. If $\sfs$ is $\eps$-independent on $([0,c_0)\times K)\cap\supp\chi_\circ$, one can take $\delta=0$, i.e.\ one obtains a uniform equivalence of norms.
  \item\label{ItFVarseSobRel3b}{\rm (3b relationship.)} Let $t_0\in\R$, and $\hat\sfs:=\sfs|_{\Sse^*_{\hat M_{t_0}}\wt M}$, which we regard (via~\eqref{EqFseBundle3b}) as an element $\hat\sfs\in\CI_{\rm I}(\Stb^*\cM)$; here the subscript {\rm `I'} restricts to stationary functions on $\cM=[\ol{\R_{\hat t}\times\R^n_{\hat x}};\pa(\ol{\R_{\hat t}}\times\{0\})]$. Fix the density $|\dd\hat t\,\dd\hat x|$ on $\cM$. Then for all $\delta>0$ there exist $c_0>0$ and $C$ so that the following holds. Let $\lambda\in(0,c_0)$, and define $\Omega=\{|t-t_0|<\lambda,\ |x|<\lambda\}$ and $\hat\Omega_\eps=\{|\hat t|<\eps^{-1}\lambda,\ |\hat x|<\eps^{-1}\lambda \}\subset\cM$. Write $\Psi_\eps(t,x)=(\hat t,\hat x)=(\frac{t-t_0}{\eps},\frac{x}{\eps})$. Then for all $u$ with support in $\Omega$ (and thus $(\Psi_\eps)_*u$ with support in $\hat\Omega_\eps$), we have
    \begin{equation}
    \label{EqFVarseSobRel3b}
    \begin{split}
      \|u\|_{H_{\seop,\eps}^{\sfs,\alpha_\circ\hat\alpha}(M)} &\leq C\eps^{\frac{n+1}{2}-\hat\alpha}\|(\Psi_\eps)_*u\|_{\Htb^{\hat\sfs+\delta,\alpha_\circ-\hat\alpha,0}(\cM)}, \\
      \eps^{\frac{n+1}{2}-\hat\alpha}\|(\Psi_\eps)_*u\|_{\Htb^{\hat\sfs-\delta,\alpha_\circ-\hat\alpha,0}(\cM)} &\leq C\|u\|_{H_{\seop,\eps}^{\sfs,\alpha_\circ\hat\alpha}(M)}.
    \end{split}
    \end{equation}
  If $\sfs$ is invariant under the lift of dilations $(\eps,t_0+\tau,x)\mapsto(a\eps,t_0+a\tau,a x)$ on $\Omega$, one can take $\delta=0$.
  \end{enumerate}
\end{prop}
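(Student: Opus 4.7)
The plan is to reduce both parts to their constant‐order counterparts (Lemmas~\ref{LemmaFseEdgeSob} and~\ref{LemmaFseHse3b}) by working in the bounded geometry descriptions of \S\S\ref{SssFVarE}--\ref{SssFVarse} and exploiting continuity of $\sfs$ at the relevant boundary hypersurface to absorb the $\delta$-loss.

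For part~\eqref{ItFVarseSobRele}, the starting observation is that on the support of $\chi_\circ$, where $|\hat x|=|x|/\eps>c_0^{-1}$, the se-charts~\eqref{EqFVarse2} with map $\phi^{(2)}_{\eps,\pm 1,j k}\colon(t,r,\omega)\mapsto(2^{-k}\eps^{-1}t-j,2^{-k}\eps^{-1}r,\omega)$ push the basic se-vector fields $\hat\rho\pa_t$, $\hat\rho\pa_r$, $\pa_\omega$ (with $\hat\rho\sim r$ on this region) forward to $R\pa_T,R\pa_R,\pa_\omega$ on the standard chart $U^{(2)}$. These are \emph{precisely} the pushforwards of the edge generators $r\pa_t,r\pa_r,\pa_\omega$ under the edge chart map~\eqref{EqFVarEBddMap} for the index $k'=k-\log_2\eps$ (rounded to an integer and similarly shifted in $j$). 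I would use this to identify the se-bounded geometry structure on $M_\eps\cap\{\eps/|x|<c_0\}\cap K$ with the restriction of the edge bounded geometry structure on $M_\circ$ to the corresponding range of indices, up to a uniformly bounded amount of overlap. Under this identification, uniform pseudodifferential calculi on the two sides coincide, and a compactness/continuity argument shows that after shrinking $c_0$ depending on $\delta$, the order $\sfs|_{M_\eps}$ and the pullback of $\sfs_\circ=\sfs|_{\Sse^*_{M_\circ}\wt M}$ (via~\eqref{EqFseBundleEdge}) satisfy $\sfs\leq\sfs_\circ+\delta$ and $\sfs\geq\sfs_\circ-\delta$ pointwise. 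Combined with the constant-order statement (Lemma~\ref{LemmaFseEdgeSob}) and the weight translation $\rho_\circ^{-\alpha_\circ}\hat\rho^{-\hat\alpha}\leftrightarrow\eps^{-\alpha_\circ}r^{-(\hat\alpha-\alpha_\circ)}$, this yields~\eqref{EqFVarseSobRele}. When $\sfs$ is $\eps$-independent on the relevant set, $\sfs=\sfs_\circ$ identically and no loss is needed.

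For part~\eqref{ItFVarseSobRel3b}, I would proceed analogously using $\Psi_\eps$ and the 3b-bounded geometry structure on $\cM$ described in~\S\ref{SssFVar3b}. The pushforwards $(\Psi_\eps)_*(\hat\rho\pa_t)=\la\hat x\ra\pa_{\hat t}$, $(\Psi_\eps)_*(\hat\rho\pa_x)=\la\hat x\ra\pa_{\hat x}$ are the basic 3b-vector fields on $\cM$; moreover, the image under $\Psi_\eps$ of a se-chart~\eqref{EqFVarse1} or~\eqref{EqFVarse2} is a 3b-chart of type~\eqref{EqFVar3b1} or~\eqref{EqFVar3b2} (modulo a shift of $j$ by an $\eps$-dependent integer, harmless for uniformity). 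This is consistent with the diffeomorphism~\eqref{EqFsehatMpff} and Lemma~\ref{LemmaFseRelGeo}\eqref{ItFseRelGeoVF}. Invoking continuity of $\sfs$ at $\hat M_{t_0}$, I would again arrange $|\sfs-\hat\sfs|<\delta$ on the support of $u$ by choosing $c_0$ small, with $\hat\sfs$ viewed as a stationary function on $\Stb^*\cM$ via~\eqref{EqFseBundle3b}. Sandwiching between constant orders $\hat\sfs\pm\delta$, appealing to Lemma~\ref{LemmaFseHse3b} for the constant-order equivalence, and tracking the volume factor $|\dd t\,\dd x|=\eps^{n+1}|\dd\hat t\,\dd\hat x|$ and the weight $\hat\rho=\eps\la\hat x\ra$ then gives~\eqref{EqFVarseSobRel3b}. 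The stated dilation invariance is exactly the condition that $\sfs$ coincides with its limit $\hat\sfs$ on the entire relevant region, so $\delta$ can be taken to be zero.

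The main difficulty I anticipate is not conceptual but bookkeeping: verifying cleanly that (i)~the se-charts of~\eqref{EqFVarse2} on $M_\eps$ and the edge charts of~\eqref{EqFVarEBdd} on $M_\circ$ (resp.\ the se-charts on $M_\eps$ near $\{t_0\}\times\{0\}$ and the 3b-charts of~\eqref{EqFVar3b1}--\eqref{EqFVar3b2} on $\cM$) define uniformly equivalent bounded geometry structures in the relevant overlap, and (ii)~the comparison of variable orders under these identifications is controlled uniformly in $\eps$. Both rely on continuity and compactness, but the $\delta$-loss is genuinely necessary whenever $\sfs$ has a nontrivial derivative in the direction transverse to $M_\circ$ or $\hat M_{t_0}$.
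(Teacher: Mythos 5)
Your proposal follows essentially the same route as the paper's own proof: first use continuity of $\sfs$ to sandwich the order between $\sfs_\circ\pm\delta$ (resp.\ $\hat\sfs\pm\delta$) after shrinking $c_0$ (resp.\ $\lambda$), then identify the se-bounded geometry charts~\eqref{EqFVarse1}--\eqref{EqFVarse2} on $M_\eps$ with the edge charts~\eqref{EqFVarEBdd}/3b-charts~\eqref{EqFVar3b1}--\eqref{EqFVar3b2} by explicit index matching (the paper sets $\eps'=2^{-\ell}$ and computes $k'=\ell-k$, resp.\ $k'=k-\ell$), concluding that the variable-order uniform Sobolev norms agree. This is exactly the argument in the paper, including the remark that the $\eps$-independence/dilation-invariance hypothesis lets one take $\delta=0$.

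One small correction in wording: after the sandwiching you invoke Lemmas~\ref{LemmaFseEdgeSob} and~\ref{LemmaFseHse3b} as ``constant-order statements,'' but those lemmas are formulated only for constant integer orders $m\in\N_0$; the sandwich orders $\sfs_\circ\pm\delta$ and $\hat\sfs\pm\delta$ are still \emph{variable} (merely $\eps$-independent, resp.\ stationary), so those lemmas do not literally apply. What actually carries the variable-order case is the bounded-geometry chart identification you set up earlier, together with the density replacement turning $|\dd t\,\dd x|$ into an se-/edge-density (a factor $\hat\rho^{-n}$, resp.\ $r^{-n}$); the paper proves the intermediate $\eps$-independent-order equivalence directly from the chart identification rather than by citing the integer-order lemma. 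This is not a gap in the argument, just a misattribution: replace the lemma citations by the chart identification itself and your proof is complete.
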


In the coordinates $\hat t,\hat x$, the dilations in part~\eqref{ItFVarseSobRel3b} are given by $(\eps,\hat t,\hat x)\mapsto(a\eps,\hat t,\hat x)$ simply.

\begin{proof}[Proof of Proposition~\usref{PropFVarseSobRel}]
  It suffices to consider the case $\alpha_\circ=\hat\alpha=0$.

  We begin with part~\eqref{ItFVarseSobRele}. Since $\sfs$ differs from the $\eps$-independent extension $\sfs_\circ'$ of $\sfs_\circ$ by a function which is smooth on $\supp\chi_\circ$ and vanishes over $M_\circ$, we have $\sfs_\circ'+\delta>\sfs>\sfs_\circ'-\delta$ on $([0,c_0)\times K)\cap\supp\chi_\circ$ when $c_0$ is sufficiently small; it therefore suffices (upon replacing $\sfs$ by $\sfs_\circ'\pm\delta$) to prove the uniform equivalence of $\|\chi_\circ u\|_{H_{\seop,\eps}^\sfs}$ and $\eps^{-\alpha_\circ}\|(\chi_\circ u)(\eps,-)\|_{\He^\sfs}$ for $\eps$-independent $\sfs$. We may further replace the densities on $M$ and $M_\circ$ by a positive se-density and a positive edge-density, respectively; indeed, this amounts to multiplication of the density $|\dd t\,\dd x|$ by $\hat\rho^{-n}$ and $r^{-n}$, respectively, where we may take as the local defining function $\hat\rho$ of $\hat M$ on $\supp\chi_\circ$ the function $\hat\rho=r$.

  At this point, we can compute the $H_{\seop,\eps}^\sfs$ and $\He^{\sfs_\circ}$ norms using their characterizations as uniform Sobolev spaces. Note that the coverings $\fC(\eps)$ and $\fC(\eps')$ are compatible in a uniform sense when $\eps/\eps'\in[\frac12,2]$; that is, their union is still a bounded geometry covering, and the $\CI$ bounds on the transition functions are uniformly bounded in $\eps,\eps'$. Thus, the norms on $H_{{\rm uni},\fC(\eps)}$ and $H_{{\rm uni},\fC(\eps')}$ are uniformly equivalent. Let now $\eps\in(0,1)$ and set $\eps'=2^{-\ell}$ for a value $\ell\in\N_0$ for which $\eps/\eps'\in[\frac12,2]$. Then the projection of the set $U_{2^{-\ell},+1,j k}^{(2)}$ in~\eqref{EqFVarse2} to the $t,r,\omega$ coordinates is equal to the set $U_{j k'}$ in~\eqref{EqFVarEBdd} for $k'=\ell-k$. (On $\supp\chi_\circ$ we have $\frac{r}{\eps}\geq\eps c_0^{-1}>c_0^{-1}$, so $2^k\geq\frac{1}{2 c_0}$, and on $K$ we have $r\lesssim 1$, so $k-\ell\lesssim 1$, and therefore overall $k'$ is bounded uniformly from below, and by a constant plus $\ell$ from above.) This implies the claim.

  The arguments for part~\eqref{ItFVarseSobRel3b} are similar. First, replacing $\sfs$ by the dilation-invariant extension of $\hat\sfs$ causes an error which on $\Omega$ is less than $\delta$ in absolute value if the size $\lambda$ of $\Omega$ is sufficiently small. The weights are dealt with as in the proof of Lemma~\ref{LemmaFseHse3b}. By shifting the $t$ coordinate, we may assume $t_0=0$. One then observes that the projection to $t,\hat x$ coordinates of the image under $\Psi(2^{-\ell},-)$ of the set $U^{(1)}_{2^{-\ell},j}$ in~\eqref{EqFVarse1} is equal to the set $U^{(1)}_j$ in~\eqref{EqFVar3b1}, and similarly for $U^{(2)}_{2^{-\ell},+1,j k}$ in~\eqref{EqFVarse2} and $U^{(2)}_{+1,j k'}$ in~\eqref{EqFVar3b2} for $k'=k-\ell$.
\end{proof}

When $M$ is compact, we can define $H_{\seop,\eps}^{\sfs,\alpha_\circ,\hat\alpha}(M)$ using a partition of unity; on $[0,1)_\eps\times V$ where $V\subset M$ is open with closure disjoint from $\cC$, one uses an ordinary Sobolev norm on $M$ (independently of $\eps$); similarly one can define $\tilde\Psi_\seop^{\sfs,\ell_\circ,\hat\ell}(\wt M)$. Such ps.d.o.s will arise in this paper as uniform quantizations of elements of (weighted versions of) $S^\sfs(\Tse^*\wt M)$ or $\CI_\seop S^\sfs(\Tse^*\wt M)$, where $\CI_\seop(\wt M)$ denotes smooth functions on $\wt M^\circ$ which remain uniformly bounded upon application of any number of se-vector fields (smoothness in $\eps$ is not needed here). We shall then have occasion to consider the elliptic set $\Ell_\seop(A)$ of $A\in\tilde\Psi_\seop^\sfs$ as a subset
\[
  \Ell_\seop(A) \subset \Sse^*\wt M\qquad\text{or}\qquad \Ell_\seop(A)\subset\upbeta^*\Sse^*\wt M
\]
where $\upbeta\colon[\wt M;Z]\to\wt M$ is the blow-down map for the blow-up of $\wt M$ along some p-submanifold $Z$ of $\wt M$ (typically a union of fibers of $\hat M$); similarly for the operator wave front set $\WF'_\seop(A)$.

\subsubsection{Mixed function spaces}
\label{SssFVarsse}

We need to extend Definition~\ref{DefFsFn} to the case of variable order se-differential orders. Thus, for compact $M$ and for $\sfs\in\CI(\Sse^*\wt M)$, $k\in\N_0$, and $\alpha_\circ,\hat\alpha\in\R$, we define the mixed (se;s)-Sobolev spaces
\begin{equation}
\label{EqFVarsseSob}
  H_{(\seop;\sop),\eps}^{(\sfs;k),\alpha_\circ,\hat\alpha}(M) := H^{\sfs|_{M_\eps}+k}(M), \qquad
  \|u\|_{H_{(\seop;\sop),\eps}^{(\sfs;k),\alpha_\circ,\hat\alpha}(M)}^2 := \sum_j \|P_j u\|_{H_{\seop,\eps}^{\sfs,\alpha_\circ,\hat\alpha}(M)}^2,
\end{equation}
where $\{P_j\}\subset\Diffs^k(\wt M)$ is a fixed finite spanning set of $\Diffs^k(\wt M)$ over $\CI(\wt M)$, and the underlying $L^2$-space is defined using a positive smooth density on $M$ unless stated otherwise.

Carefully note that elements of $\tilde\Psi_\seop^0(\wt M)$ do \emph{not} define bounded linear operators on such spaces when $k\geq 1$. As a concrete example, multiplication by a function which is se-regular but not s-regular, such as $\chi(\frac{t-t_0}{\hat\rho})$ where $\chi\in\CIc(\R)$ is nonzero, does not preserve s-regularity. Therefore, in order to do se-microlocal analysis on mixed (se;s)-Sobolev spaces, we need to work with se-ps.d.o.s whose coefficients are s-regular. To this end, we employ the scaled bounded geometry perspective introduced in \cite{HintzScaledBddGeo}. Concretely, working in local coordinates $t,x$ as in~\eqref{EqFVarseLoc}, we define unit cells \emph{for the space $\Vs(\wt M)$ of s-vector fields} as follows. In the region $|\hat x|<2$, we define
\begin{subequations}
\begin{equation}
\label{EqFVarsseBG1}
\begin{split}
  &U_{\eps,j}^{(1)} := \{\eps\} \times (j-1,j+1)_t \times B_{\R^n}(0,2)_{\hat x},\qquad j\in\Z, \\
  &\phi_{\eps,j}^{(1)} \colon U_{\eps,j}^{(1)} \ni (\eps,t,\hat x) \mapsto (T,X)=(t-j,\hat x) \in U^{(1)} := (-1,1)_T \times B_{\R^n}(0,2)_X.
\end{split}
\end{equation}
In $|\hat x|>\frac12$, and in the region where $\pm x^1>\frac14\max_{j\neq 1}|x^j|$, we introduce $\omega^j=\frac{x^j}{x^1}$ and define for $j\in\Z$, $k\in\N_0$ with $2^k\eps\leq 1$ the sets
\begin{equation}
\label{EqFVarsseBG2}
\begin{split}
  &U_{\eps,\pm 1,j k}^{(2)} := \{\eps\} \times (j-1,j+1)_t \times (2^{k-1}\eps,2^{k+1}\eps)_r \times B_{\R^{n-1}}(0,2)_\omega, \\
  &\qquad \phi_{\eps,\pm 1,j k}^{(2)} \colon U_{\eps,\pm 1,j k}^{(2)} \ni (\eps,t,r,\omega) \mapsto (T,R,\omega)=(t-j,2^{-k}\eps^{-1}r,\omega) \\
  &\qquad \hspace{14em} \in U^{(2)} := (-1,1)_T \times \Bigl(\frac12,2\Bigr)_R \times B_{\R^{n-1}}(0,2)_\omega,
\end{split}
\end{equation}
\end{subequations}
with similar definitions for sets $U_{\eps,\pm m,j k}$ and maps $\phi_{\eps,\pm m,j k}$ in the region where $\pm x^m>\frac14\max_{j\neq m}|x^j|$ for $m=2,\ldots,n$. The pushforwards of the basic s-vector fields $\pa_t$, $\eps\pa_{x^j}$ in $|\hat x|<2$, resp.\ $\pa_t$, $r\pa_r$, $\pa_\omega$ in $|\hat x|>\frac12$ under the map $\phi_{\eps,j}^{(1)}$, resp.\ $\phi_{\eps,\pm 1,j k}^{(2)}$ are $\pa_T$, $\pa_{\hat x^j}$, resp.\ $\pa_T$, $R\pa_R$, $\pa_\omega$, and thus uniformly equivalent to the coordinate vector fields on $U^{(1)}$, resp.\ $U^{(2)}$. In the region $|x|>\frac14$, which is not yet fully covered by the above two families of unit cells, we use the unit cells $\{\eps\}\times(j-1,j+1)_t\times \frac14(k+(-2,2)^n)_x$ for $j\in\Z$, $k=(k_1,\ldots,k_n)\in\Z^n$ with $\min_{i=1,\ldots,n}|k_i|\geq 3$, which we map to $(-2,2)_T\times(-2,2)^n_X$ via $(\eps,t,x)\mapsto(t-j,4 x-k)$. On such unit cells, which are thus far from $\cC$, the coordinate vector fields $\pa_T$, $\pa_X$ are uniformly equivalent to $\pa_t$, $\pa_x$.

In the terminology of \cite{HintzScaledBddGeo}, we conclude that $\Vs(\wt M)$ is the \emph{coefficient Lie algebra} for the parameterized (by $\eps\in(0,1)$) bounded geometry structure~\eqref{EqFVarsseBG1}--\eqref{EqFVarsseBG2}.\footnote{This structure does not quite match the conditions in \cite{HintzScaledBddGeo} since the sets $U^{(1)}$ and $U^{(2)}$ are not equal to $(-2,2)^{1+n}$. This is easily remedied, at the expense of leading to sets which are not as closely related to those used in~\S\ref{SssFVarse} and notationally more cumbersome. We leave the necessary modifications to the reader.} As the scaling on $U_{\eps,j}^{(1)}$, we take $\rho_{\eps,j,T}=\eps$, $\rho_{\eps,j,X^i}=1$ for $i=1,\ldots,n$; and as the scaling on $U_{\eps,\pm m,j k}^{(2)}$, we take $\rho_{\eps,(\pm m,j k),T}=\eps 2^k$ (which is thus comparable to $r$ on this set), $\rho_{\eps,(\pm m,j k),X^i}=1$ for $i=1,\ldots,n$. The \emph{operator Lie algebra} corresponding to this scaling is $\CI_\sop\Vse(\wt M)$; indeed, its elements are those vector fields which, in the above charts, have uniformly bounded smooth coefficients when expressed as linear combinations of $\rho_{\eps,\alpha,T}\pa_T$ and $\rho_{\eps,\alpha,X^i}\pa_{X^i}$ for $\alpha=j$ or $\alpha=(\pm m,j k)$; the latter vector fields pull back to the basic se-vector fields $\eps\pa_t$, $\eps\pa_{x^j}$ in $|\hat x|<2$, resp.\ $r\pa_t$, $r\pa_r$, $\pa_\omega$ in $|\hat x|>\frac12$, $|x|<1$.

In analogy with the notation introduced in \cite[Definition~3.59]{HintzScaledBddGeo}, we shall now write
\[
  \Psi_\seop^{\sfs,\alpha_\circ,\hat\alpha}(\wt M) = \rho_\circ^{-\alpha_\circ}\hat\rho^{-\hat\alpha}\Psi_\seop^\sfs(\wt M)
\]
for the corresponding space of parameterized scaled bounded geometry pseudodifferential operators; more precisely, in view of Lemma~\ref{LemmaFsComm}, we may work with the more precise version given in \cite[Proposition~4.22]{HintzScaledBddGeo}, and shall do so exclusively in the sequel. Roughly speaking, and considering the case $\alpha_\circ=\hat\alpha=0$ for brevity, elements of $\Psi_\seop^\sfs(\wt M)$ are families $(P_\eps)_{\eps\in(0,1)}$ of operators on $M$, with each $P_\eps$ being the sum of local quantizations on the above s-unit cells $U_{\eps,\alpha}$ where $\alpha=j$ or $\alpha=(\pm m,j k)$; the local quantizations have Schwartz kernels $(\int \exp(i(T-T')\sigma/\rho_{\eps,\alpha,T}+i(X-X')\cdot\xi) a(T,X,\sigma,\xi)\,\dd\sigma\,\dd\xi)\frac{|\dd T'\dd X'|}{\rho_{\eps,\alpha,T}}$ where $|\pa_{T,X}^\beta\pa_{\sigma,\xi}^\gamma a|\leq C_{\beta\gamma\delta}(1+|\sigma|+|\xi|)^{\sfs-(1-2\delta)|\gamma|}$ for all $\beta,\gamma$ and all $\delta>0$, with the constants $C_{\beta\gamma\delta}$ being uniform over all $\alpha$. (For the full definition, including the space of residual operators, i.e.\ those of se-differential order $-\infty$, see \cite[Definitions~3.31 and~3.34, equation~(4.27)]{HintzScaledBddGeo}.)

For the purposes of the present paper, we need to record two key property of elements $P=(P_\eps)_{\eps\in(0,1)}\in\Psi_\seop^\sfs(\wt M)$.
\begin{enumerate}
\item The maps
  \begin{equation}
  \label{EqFVarsseMap}
    P_\eps \colon H_{(\seop;\sop),\eps}^{(\sfs';k)}(M) \to H_{(\seop;\sop),\eps}^{(\sfs'-\sfs;k)}(M)
  \end{equation}
  are uniformly (in $\eps\in(0,1)$) bounded for all $\sfs'\in\CI(\Sse^*\wt M)$ and $k\in\N_0$; similarly for weighted operators and spaces. This generalizes Corollary~\ref{CorFsOrder}. (It is also a crucial improvement over the mapping properties of elements of $\tilde\Psi_\seop^\sfs(\wt M)$, for which~\eqref{EqFVarsseMap} in general fails unless $k=0$.)
\item For commutator s-vector fields $W\in\cV_{[\sop]}(\wt M)$ (see Lemma~\ref{LemmaFsComm}), $[W,P]\in\Psi_\seop^\sfs(\wt M)$.
\end{enumerate}

We use such ps.d.o.s to prove the following elliptic regularity result. (We continue assuming for notational simplicity that $M$ is compact.)

\begin{lemma}[se-microlocal elliptic regularity]
\label{LemmaFVarsseEll}
  Let $k,m\in\N_0$, and let $\wt L\in\cC_{\seop;\sop}^{(d_0;k)}\Psi_\seop^m(\wt M)$. (That is, the operator $\wt L$ is a linear combination of products of elements of $\cC_{\seop;\sop}^{(d_0;k)}(\wt M)$ with elements of $\Psi_\seop^m(\wt M)$; recall here Definition~\usref{DefFsFn}\eqref{ItFsFnMixedC}.) Let $\sfs\in\CI(\Sse^*\wt M)$ and $N\in\R$, and assume that $d_0=d_0(m,\sfs,N)$ is sufficiently large. Let $\chi\in\CI_\sop(\wt M)$, and let $B,G\in\Psi_\seop^0(\wt M)$ be such that $B=\chi B\chi$, $G=\chi G\chi$, and $\WF_\seop'(B)\subset\Ell_\seop(G)\cap\Ell_\seop(\wt L)$. Then there exists $C$ so that
  \begin{equation}
  \label{EqFVarsseEll}
    \|B u\|_{H_{(\seop;\sop),\eps}^{(\sfs;k)}(M)} \leq C\Bigl( \|G\wt L u\|_{H_{(\seop;\sop),\eps}^{(\sfs-m;k)}(M)} + \|\chi u\|_{H_{(\seop;\sop),\eps}^{(-N;k)}(M)}\Bigr).
  \end{equation}
  Analogous statements hold for weighted operators and Sobolev spaces.
\end{lemma}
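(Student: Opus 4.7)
The plan follows the standard template for microlocal elliptic regularity, adapted to the scaled bounded geometry calculus $\Psi_\seop(\wt M)$ developed in~\S\ref{SssFVarsse}. Two structural features of this calculus drive the argument: every element of $\Psi_\seop^\sfs(\wt M)$ is uniformly bounded on all mixed spaces $H_{(\seop;\sop),\eps}^{(\sfs';k)}(M)$ (cf.~\eqref{EqFVarsseMap}); and for $W\in\cV_{[\sop]}(\wt M)$ and $P\in\Psi_\seop^\sfs(\wt M)$, the commutator $[W,P]$ lies in $\Psi_\seop^\sfs(\wt M)$ with $\WF_\seop'([W,P])\subset\WF_\seop'(P)$.

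For the base case $k=0$, I would invoke the microlocal parametrix construction in $\Psi_\seop$. Under the ellipticity hypothesis $\WF_\seop'(B)\subset\Ell_\seop(G)\cap\Ell_\seop(\wt L)$, the principal symbol of $G\wt L$ is invertible on $\WF_\seop'(B)$, and a finite iterative symbolic construction---requiring some $d_0=d_0(m,\sfs,N)$ orders of $\cC_\seop$-regularity of the coefficients of $\wt L$---produces $Q\in\Psi_\seop^{-m}(\wt M)$ with $\WF_\seop'(Q)\subset\WF_\seop'(B)$ such that
\[
  Q G\wt L = B + R,\qquad R\in\Psi_\seop^{-N'}(\wt M),\quad \WF_\seop'(R)\subset\WF_\seop'(B),
\]
for $N'\geq N+\sup\sfs$. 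Since $B=\chi B\chi$, slightly enlarging $\chi$ if needed allows one to write $Ru=R\chi u$ modulo negligible terms; applying both sides to $u$ and using the uniform $H_{\seop,\eps}^{\sfs}$-boundedness of $Q$ and $R$ yields~\eqref{EqFVarsseEll} at $k=0$.

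For the inductive step $k-1\Rightarrow k$, I use Lemma~\ref{LemmaFsComm} to select a finite spanning set $\{W_i\}\subset\cV_{[\sop]}(\wt M)$ of $\Vs(\wt M)$ over $\CI_\sop(\wt M)$; it then suffices to estimate $\|W_i B u\|_{H_{(\seop;\sop),\eps}^{(\sfs;k-1)}}$ for each $i$. Writing $W_i B u = B W_i u + [W_i,B] u$, the commutator $[W_i,B]\in\Psi_\seop^0$ satisfies $\WF_\seop'([W_i,B])\subset\WF_\seop'(B)$, and the inductive hypothesis applied with $[W_i,B]$ in place of $B$ controls the second term. For the first term, the inductive hypothesis applied with $W_i u$ in place of $u$ produces a bound involving $G\wt L(W_i u)=W_i G\wt L u+[G,W_i]\wt L u+G[\wt L,W_i]u$; the leading piece is dominated by $\|G\wt L u\|_{H_{(\seop;\sop),\eps}^{(\sfs-m;k)}}$, while the commutator remainders $[G,W_i]\in\Psi_\seop^0$ and $[\wt L,W_i]\in\cC_{\seop;\sop}^{(d_0;k-1)}\Psi_\seop^m(\wt M)$ are absorbed by further applications of the inductive hypothesis, after modest enlargements of $B$, $G$, and $\chi$.

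The main obstacle is the bookkeeping of s-regularity: each commutation with $W_i\in\cV_{[\sop]}(\wt M)$ differentiates a coefficient of $\wt L$ by one s-derivative, so the $k$ orders of coefficient s-regularity assumed on $\wt L$ are precisely what is needed for the $k$-fold commutator iteration; crucially, no additional se-regularity beyond the base threshold $d_0=d_0(m,\sfs,N)$ is consumed, since commutation with $\cV_{[\sop]}$ stays within $\Psi_\seop$ at the same se- and differential orders. The weighted version of~\eqref{EqFVarsseEll} follows by conjugating everything by $\rho_\circ^{-\alpha_\circ}\hat\rho^{-\hat\alpha}$.
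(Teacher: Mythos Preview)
Your proposal is correct and follows essentially the same approach as the paper: an elliptic parametrix for the base case $k=0$, followed by induction on $k$ via commutation with elements of $\cV_{[\sop]}(\wt M)$, using that $[W,P]\in\Psi_\seop$ for $W\in\cV_{[\sop]}$ and $P\in\Psi_\seop$. The only minor discrepancy is that the paper builds the $k=0$ parametrix in the larger class $\tilde\Psi_\seop^{-m}$ (se-regular coefficients) rather than $\Psi_\seop^{-m}$, since the coefficients of $\wt L$ are merely $\cC_{\seop}^{d_0}$ at level $k=0$; this is harmless for the $k=0$ estimate because $\tilde\Psi_\seop$ operators are still uniformly bounded on $H_{\seop,\eps}^{\sfs}=H_{(\seop;\sop),\eps}^{(\sfs;0)}$, and the s-regularity of $B,G\in\Psi_\seop^0$ only enters in the inductive step.
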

\begin{proof}
  For $k=0$ and $d_0=\infty$, one can prove~\eqref{EqFVarsseEll} using a microlocal elliptic parametrix near $\WF_\seop'(B)$ of $\wt L$ of class $\tilde\Psi_\seop^{-m}$. Since operator norm bounds of se-ps.d.o.s on a fixed range of se-Sobolev spaces only depend on some (finite) seminorm of the coefficients of $\wt L$, the constant $C$ in~\eqref{EqFVarsseEll} only depends on the $\cC_\seop^{d_0}$-norm of the coefficients of $\wt L$ for some sufficiently large $d_0$.

  We now argue by induction on $k$. Apply~\eqref{EqFVarsseEll} with $k-1$ in place of $k$ to $V u$ where $V\in\cV_{[\sop]}(\wt M)$. Now, using $[B,V]\in\Psi_\seop^0$, we have
  \[
    \|B V u\|_{H_{(\seop;\sop),\eps}^{(\sfs;k-1)}} \geq \|V B u\|_{H_{(\seop;\sop),\eps}^{(\sfs;k-1)}} - \|[B,V] u\|_{H_{(\seop;\sop),\eps}^{(\sfs;k-1)}} \geq \|V B u\|_{H_{(\seop;\sop),\eps}^{(\sfs;k-1)}} - C\|\tilde B u\|_{H_{(\seop;\sop),\eps}^{(\sfs;k-1)}}
  \]
  where $\tilde B\in\Psi_\seop^0(\wt M)$ is elliptic on $\WF'_\seop(B)$. We similarly estimate $\|G\wt L V u\|_{H_{(\seop;\sop),\eps}^{(\sfs-m;k-1)}}\leq C\|G\wt L u\|_{H_{(\seop;\sop),\eps}^{(\sfs-m;k)}}+C\|\tilde G u\|_{H_{(\seop;\sop),\eps}^{(\sfs-m;k-1)}}$ where $\tilde G\in\Psi_\seop^0(\wt M)$ is elliptic on $\WF'_\seop(G)$. Here, we use that $\wt L$ and $[\wt L,V]\in\cC_{\seop;\sop}^{(d_0;k-1)}\Psi_\seop^m(\wt M)$ preserve $k$ and $k-1$ orders of s-regularity, respectively. Summing over a collection of $V$ which span $\cV_\sop(\wt M)$ over $\CI(\wt M)$, we thus obtain~\eqref{EqFVarsseEll} (with slightly enlarged $G$ and $\chi$).
\end{proof}

\begin{lemma}[Real principal type propagation]
\label{LemmaFVarsseProp}
  Let $k,m\in\N_0$ and $\delta>0$. Suppose the operator
  \[
    \wt L=\wt L_0+\wt L_1,\qquad \wt L_0\in\Psi_\seop^m(\wt M),\quad \wt L_1\in\cC_{\seop;\sop}^{(d_0;k),\delta,\delta}\Psi_\seop^m(\wt M),
  \]
  has a real homogeneous principal symbol. Let $\sfs\in\CI(\Sse^*\wt M)$ be monotonically decreasing along the null-bicharacteristic flow of $\wt L$. Suppose $B,E,G\in\Psi_\seop^0(\wt M)$ and $\chi\in\CI_\sop(\wt M)$ are such that all backwards null-bicharacteristics starting at $\WF_\seop'(B)$ reach $\Ell_\seop(E)$ in finite time while remaining in $\Ell_\seop(G)$. Suppose moreover that $\WF_\seop'(B)\subset\Ell_\seop(G)$, and $B=\chi B\chi$, etc. Let $N\in\R$. Then for sufficiently large $d_0=d_0(m,\sfs,N)$, there exists a constant $C$ so that
  \[
    \|B u\|_{H_{(\seop;\sop),\eps}^{(\sfs;k)}(M)} \leq C\Bigl( \|G\wt L u\|_{H_{(\seop;\sop),\eps}^{(\sfs-m+1;k)}(M)} + \|E u\|_{H_{(\seop;\sop),\eps}^{(\sfs;k)}(M)} + \|\chi u\|_{H_{(\seop;\sop),\eps}^{(-N;k)}(M)}\Bigr).
  \]
  Analogous statements hold for weighted operators and Sobolev spaces.
\end{lemma}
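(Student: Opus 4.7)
The plan is to combine a Duistermaat--H\"ormander-style positive commutator argument at the se-microlocal level with an induction in the s-regularity order $k$, mirroring the structure of the proof of Lemma~\ref{LemmaFVarsseEll}. Throughout, I exploit the identity $\rho_\circ^\delta\hat\rho^\delta = \eps^\delta$ (which follows from $\rho_\circ\hat\rho=\eps$) to treat $\wt L_1$ as a perturbation that is smaller by a positive power of $\eps$ than a generic element of $\Psi_\seop^m(\wt M)$, though with only finite regularity, so that $\wt L_1$ itself cannot be symbolically manipulated.

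For the base case $k=0$, let $p$ denote the real homogeneous principal symbol of $\wt L_0$. Following the classical template of \cite[Theorem~26.1.6]{HormanderAnalysisPDE4} (cf.\ \cite[Theorem~8.7]{HintzMicro}), I would construct a commutant symbol $a\in S^{2\sfs-m+1}(\Tse^*\wt M)$ supported in a thin neighborhood of the backwards null-bicharacteristic segments from $\WF_\seop'(B)$ to $\Ell_\seop(E)$, elliptic on $\WF_\seop'(B)$, and arranged so that
\begin{equation*}
  H_p a = -b^2 + e^2 + r,
\end{equation*}
with $b^2$ elliptic on $\WF_\seop'(B)$, $e^2$ supported in $\Ell_\seop(E)$, and $r$ vanishing on $\Char(p)$ (hence absorbable by Lemma~\ref{LemmaFVarsseEll}). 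The monotonicity hypothesis on $\sfs$ contributes an additional sign-definite piece via the logarithmic weight appearing in $H_p\la\zeta\ra^{2\sfs}$, which is what accommodates variable orders (compare Remark~\ref{RmkIFMicroThr}). Quantizing $a$ to a self-adjoint $A\in\Psi_\seop^{2\sfs-m+1}(\wt M)$ and expanding
\begin{equation*}
  2\Im\langle A u,\wt L u\rangle = \langle i[\wt L_0,A] u, u\rangle + \langle i(\wt L_0^*-\wt L_0)A u, u\rangle + 2\Im\langle A u,\wt L_1 u\rangle,
\end{equation*}
the sharp G\aa rding inequality in $\Psi_\seop^\bullet(\wt M)$ applied to the principal symbol of $i[\wt L_0,A]$ gives the desired bound on $\|B u\|_{H_{\seop,\eps}^\sfs}$ in terms of $\|E u\|$, $\|G\wt L u\|_{H_{\seop,\eps}^{\sfs-m+1}}$, an $r$-term absorbed via elliptic regularity, and the $\wt L_1$-error; by Cauchy--Schwarz and the weight smallness the latter is dominated by $C\eps^\delta\|\tilde B u\|_{H_{\seop,\eps}^\sfs}^2$ plus $C\|G\wt L u\|\,\|A u\|$, hence absorbed for $\eps$ small, while for $\eps$ bounded below the standard fixed-$\eps$ propagation estimate applies.

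For the induction step $k\mapsto k+1$, I apply the inductive estimate to $V_j u$ for each $V_j$ in a spanning set $\{V_j\}\subset\cV_{[\sop]}(\wt M)$ of $\Vs(\wt M)$ over $\CI(\wt M)$ (furnished by Lemma~\ref{LemmaFsComm}). The crucial points are that $[V_j,B]\in\Psi_\seop^0(\wt M)$ (a defining property of the calculus $\Psi_\seop$ recorded after Definition~\ref{DefFVarseSob}) and that $[V_j,\wt L]=[V_j,\wt L_0]+[V_j,\wt L_1]$ splits with $[V_j,\wt L_0]\in\Psi_\seop^m(\wt M)$ and $[V_j,\wt L_1]\in\cC_{\seop;\sop}^{(d_0-1;k),\delta,\delta}\Psi_\seop^m(\wt M)$ by Corollary~\ref{CorFsComm} and the Leibniz rule. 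Summing over $j$ and slightly enlarging the microlocalizers $B,G,E$ and the cutoff $\chi$ closes the iteration.

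The main obstacle is controlling the $\wt L_1$ contribution in the commutator identity without access to a symbolic calculus for its coefficients: one must estimate $|\langle A u,\wt L_1 u\rangle|$ and $|\langle i[\wt L_1,A]u,u\rangle|$ directly, relying on the mapping properties of $\wt L_1$ between appropriate (mixed) se-Sobolev spaces---valid provided the coefficient regularity $d_0$ exceeds a threshold depending on $m$, $\sfs$, and $N$---combined with the $\eps^\delta$ weight smallness to absorb the resulting error. Pinning down the minimal $d_0=d_0(m,\sfs,N)$ large enough for the quantization-and-composition arguments in the $k=0$ case and for all subsequent s-commutators in the induction is the sole place where the required coefficient regularity of $\wt L_1$ is fixed.
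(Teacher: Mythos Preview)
Your base case is essentially correct but more involved than necessary. The paper does not split off $\wt L_1$ perturbatively; since $\cC_{\seop;\sop}^{(d_0;k),\delta,\delta}\subset\cC_\seop^{d_0}$ (for $d_0$ large), the full operator $\wt L$ has se-regular coefficients and lies in $\tilde\Psi_\seop^m$, so a standard positive commutator argument carried out directly in $\tilde\Psi_\seop$ gives the $k=0$ estimate uniformly in $\eps$, with no appeal to $\eps^\delta$-smallness and no separate fixed-$\eps$ patch. The weight on $\wt L_1$ matters only to ensure the principal symbol of $\wt L$, and hence the null-bicharacteristic flow, is continuous down to $\eps=0$.

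Your induction step has a genuine gap. When you apply the inductive hypothesis to $V_j u$, the forcing is $G\wt L(V_j u)=G V_j\wt L u+G[\wt L,V_j]u$, and you must estimate the second piece in $H_{(\seop;\sop),\eps}^{(\sfs-m+1;k)}$. But $[V_j,\wt L_0]\in\Psi_\seop^m$ and $[V_j,\wt L_1]\in\cC_{\seop;\sop}^{(d_0;k),\delta,\delta}\Psi_\seop^m$ (your index $(d_0-1;k)$ should read $(d_0;k)$ in your $k\mapsto k+1$ convention: commuting with an s-vector field drops the s-index, not the se-index), so $[V_j,\wt L]$ is still of order $m$. Hence $\|G[\wt L,V_j]u\|_{H^{(\sfs-m+1;k)}}$ is controlled only by $\|\tilde G u\|_{H^{(\sfs+1;k)}}$: the se-order has climbed by one. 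The paper's first approach simply accepts this and invokes the inductive hypothesis at se-order $\sfs+1$, which is harmless here since real principal type propagation carries no threshold on $\sfs$; but you do not acknowledge the increase, and ``closes the iteration'' elides it.

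The paper also records a second, cleaner argument which you should know about: one writes $[\wt L,V_j]=\sum_l A_{j,l}V_l$ with $A_{j,l}\in(\CI+\cC_{\seop;\sop}^{(d_0;k-1),\delta,\delta})\Psi_\seop^{m-1}$, and applies the inductive hypothesis to the principally scalar system $\wt L^{(1)}u^{(1)}=f^{(1)}$ with $u^{(1)}=(V_l u)_l$ and $(\wt L^{(1)})_{j l}=\delta_{j l}\wt L-A_{j,l}$. Since $\wt L^{(1)}$ has the same principal symbol as $\wt L$ and the off-diagonal pieces are order $m-1$, the se-order stays at $\sfs$. This is the version that adapts to the radial point and tame estimates later in the paper, where se-regularity increases are not permissible.
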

\begin{proof}
  For $k=0$, this can be proved by a standard positive commutator argument utilizing $\tilde\Psi_\seop$; see \cite[\S8]{HintzMicro} and \cite{BaskinDatchevPropagation} for expository accounts. (The structure of $\wt L$ ensures that the null-bicharacteristic flow is continuous down to $\Sse^*_{M_\circ\cup\hat M}\wt M$.)

  For $k\geq 1$, one can argue by induction similarly to the proof of Lemma~\ref{LemmaFVarsseProp}. For example, for $k=1$ and $V\in\cV_{[\sop]}(\wt M)$, one estimates
  \[
    \|G\wt L V u\|_{H_{(\seop;\sop),\eps}^{(\sfs-m+1;k-1)}} \leq \|G\wt L u\|_{H_{(\seop;\sop),\eps}^{(\sfs-m+1;k)}} + \|\tilde G u\|_{H_{(\seop;\sop),\eps}^{(\sfs+1;k-1)}}
  \]
  since $[\wt L,V]\in(\CI+\cC_{\seop;\sop}^{(d_0;k-1),\delta,\delta})\Psi_\seop^m(\wt M)$; and the second term can be estimated using the inductive hypothesis, with se-regularity order increased by $1$.

  An alternative argument which avoids this se-regularity increase (and can be adapted to radial point estimates in~\S\ref{SssEstRads} and tame estimates in~\S\ref{SsNTame}) proceeds as follows. Fix a spanning set $V_1,\ldots,V_N\in\cV_{[\sop]}(\wt M)$ of $\cV_\sop(\wt M)$ over $\CI_\sop(\wt M)$; concretely, one can take $V_1=\chi\pa_t$ and choose $V_2,\ldots,V_N\in\Vse(\wt M)$ to span $\Vse(\wt M)$ over $\CI(\wt M)$. Set $V_0=I$. We can then write $[\wt L,V_j]=\sum_{k=0}^N A_{j,l}V_l$ where $A_{j,l}\in(\CI+\cC_{\seop;\sop}^{(d_0;k-1),\delta,\delta})\Psi_\seop^{m-1}(\wt M)$. Set $f:=\wt L u$ and $u^{(1)}=(V_j u)_{j=0,\ldots,N}$, $f^{(1)}=(V_j f)_{j=0,\ldots,N}$, and $(\wt L^{(1)})_{j l}=(\delta_{j l}\wt L-A_{j,l})$; then
  \begin{equation}
  \label{EqFVarssePropInd}
    \wt L^{(1)}u^{(1)}=f^{(1)}.
  \end{equation}
  But $\wt L^{(1)}$ is principally scalar, with the same principal symbol as $\wt L$, and we can apply the inductive hypothesis to equation~\eqref{EqFVarssePropInd}, \emph{with the same $\sfs$} and with $\wt L^{(1)}$, $k-1$ in place of $\wt L$, $k$, to complete the inductive step.
\end{proof}

\section{Geometry and phase space dynamics of glued spacetimes}
\label{SGl}

\subsection{Glued spacetimes}
\label{SsGl}

Glued spacetimes are families of spacetimes (not necessarily solving the field equations) describing Kerr black holes of mass $0<\eps\ll 1$ moving along timelike geodesics in a background spacetime $(M,g)$. As a prerequisite for stating the precise definition, we recall from \cite{KerrKerr,BoyerLindquistKerr}:

\begin{definition}[Kerr in Boyer--Lindquist coordinates]
\label{DefGlKerrBL}
  For $\bhm>0$ and $a\in\R$ satisfying the \emph{subextremality condition} $|a|<\bhm$, let $\hat r_{\bhm,a}=\bhm+\sqrt{\bhm^2-a^2}$ and set $\mu(\hat r)=\hat r^2-2\bhm\hat r+a^2$, $\varrho^2(\hat r,\theta)=\hat r^2+a^2\cos^2\theta$; then the Kerr metric $\hat g_{\bhm,a}$ on the manifold $\R_{\hat t_{\rm BL}}\times(\hat r_{\bhm,a},\infty)_{\hat r}\times\Sph^2_{\theta,\phi}$ is the (Ricci-flat) metric
  \begin{equation}
  \label{EqGlKerrBL}
    \hat g_{\bhm,a} := -\frac{\mu}{\varrho^2}(\dd\hat t_{\rm BL}-a\sin^2\theta\,\dd\phi)^2 + \varrho^2\Bigl(\frac{\dd\hat r^2}{\mu}+\dd\theta^2\Bigr) + \frac{\sin^2\theta}{\varrho^2}\bigl((\hat r^2+a^2)\dd\phi-a\,\dd\hat t_{\rm BL}\bigr)^2.
  \end{equation}
\end{definition}

\begin{lemma}[Horizon-penetrating coordinates]
\label{LemmaGlCoord}
  (See \cite[\S3.1]{HintzKdSMS}.) For any pair of smooth functions $\tilde T,\tilde\Phi\colon[0,\infty)\to\R$ which are analytic on $[0,4\bhm]$, define $T,\Phi\colon(\hat r_{\bhm,a},\infty)\to\R$ (up to arbitrary additive constants) by
  \begin{equation}
  \label{EqGlCoord}
    T'(\hat r)=-\frac{\hat r^2+a^2}{\mu(\hat r)}+\tilde T(\hat r),\qquad
    \Phi'(\hat r)=-\frac{a}{\mu(\hat r)}+\tilde\Phi(\hat r).
  \end{equation}
  Set $\hat t=\hat t_{\rm BL}-T(\hat r)$, $\phi_*=\phi-\Phi(\hat r)$, and $\hat r_{\bhm,a}^-:=\bhm-\sqrt{\bhm^2-a^2}$. Then $\hat g_{\bhm,a}$ extends analytically from $\hat r>\hat r_{\bhm,a}$ to a (Ricci-flat) metric on $\R_{\hat t}\times(\hat r_{\bhm,a}^-,\infty)_{\hat r}\times\Sph^2_{\theta,\phi_*}$. One can moreover choose $\tilde T,\tilde\Phi$ to satisfy $\tilde T=\tilde\Phi=0$ for large $\hat r$, to depend smoothly on $\bhm,a$ near fixed subextremal parameters $\bhm_0,a_0$, and so that $\dd\hat t$ is everywhere past timelike.
\end{lemma}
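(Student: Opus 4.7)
The plan is a direct computation verifying the cancellation of the $\mu^{-1}$ singularities in the metric, followed by an explicit choice of $\tilde T, \tilde\Phi$ for the remaining assertions. Substituting $\dd\hat t_{\rm BL} = \dd\hat t + T'(\hat r)\,\dd\hat r$ and $\dd\phi = \dd\phi_* + \Phi'(\hat r)\,\dd\hat r$ into~\eqref{EqGlKerrBL}, the two squared one-forms there become
\[
  (\dd\hat t - a\sin^2\theta\,\dd\phi_*) + (T' - a\sin^2\theta\,\Phi')\,\dd\hat r,\qquad ((\hat r^2+a^2)\dd\phi_* - a\,\dd\hat t) + ((\hat r^2+a^2)\Phi' - a T')\,\dd\hat r.
\]
Using $\varrho^2 = (\hat r^2 + a^2) - a^2\sin^2\theta$ together with~\eqref{EqGlCoord}, the two key algebraic identities are
\[
  T' - a\sin^2\theta\,\Phi' = -\tfrac{\varrho^2}{\mu} + (\tilde T - a\sin^2\theta\,\tilde\Phi),\qquad (\hat r^2+a^2)\Phi' - a T' = (\hat r^2+a^2)\tilde\Phi - a\tilde T.
\]
Squaring the first expression and multiplying by the prefactor $-\mu/\varrho^2$ from~\eqref{EqGlKerrBL} gives $-\tfrac{\varrho^2}{\mu} + 2(\tilde T - a\sin^2\theta\,\tilde\Phi) - \tfrac{\mu}{\varrho^2}(\tilde T - a\sin^2\theta\,\tilde\Phi)^2$; the leading $-\varrho^2/\mu$ contribution exactly cancels the explicit $+\varrho^2/\mu\cdot\dd\hat r^2$ term in~\eqref{EqGlKerrBL}, while the second identity above is $\mu$-regular from the start. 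This is the entire mechanism of the extension.

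Once these cancellations are observed, I would collect coefficients and read off that they are all analytic on $\{\hat r > \hat r_{\bhm,a}^-\}$. The $\dd\hat r^2$ coefficient becomes
\[
  2(\tilde T - a\sin^2\theta\,\tilde\Phi) - \tfrac{\mu}{\varrho^2}(\tilde T - a\sin^2\theta\,\tilde\Phi)^2 + \tfrac{\sin^2\theta}{\varrho^2}((\hat r^2+a^2)\tilde\Phi - a\tilde T)^2;
\]
the cross terms $\dd\hat r\,\dd\hat t$ and $\dd\hat r\,\dd\phi_*$ are likewise polynomial in $\tilde T, \tilde\Phi, \mu, \varrho^{-2}$ with no $\mu^{-1}$ factors; and the $(\dd\hat t, \dd\phi_*, \dd\theta)$-sector copies the Boyer--Lindquist expressions with $(\hat t_{\rm BL}, \phi)$ replaced by $(\hat t, \phi_*)$. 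Each coefficient is analytic on $\{\hat r > \hat r_{\bhm,a}^-\}$ given the analyticity of $\tilde T, \tilde\Phi$ on $[0, 4\bhm] \supset [\hat r_{\bhm,a}^-, 4\bhm]$ and their smoothness beyond. Ricci-flatness of the extended metric then follows from Ricci-flatness on $\{\hat r > \hat r_{\bhm,a}\}$ by real-analyticity and analytic continuation of the Einstein equations.

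For the remaining assertions, I would take $\tilde\Phi \equiv 0$ and $\tilde T(\hat r) = c\,\chi(\hat r)$ with $c = c(\bhm_0, a_0) > 0$ and $\chi$ smooth on $[0,\infty)$, equal to $1$ on $[0, 4\bhm]$ (hence trivially analytic there) and supported in $[0, 5\bhm]$; smooth dependence on $(\bhm, a)$ is then manifest, and $\tilde T, \tilde\Phi$ vanish for large $\hat r$. Using $g^{t\hat r} = 0$ in Boyer--Lindquist coordinates, a direct calculation yields
\[
  g_{\bhm,a}^{-1}(\dd\hat t, \dd\hat t) = g^{tt} + (T')^2 g^{\hat r\hat r} = \tfrac{1}{\varrho^2}\bigl(a^2\sin^2\theta - 2(\hat r^2+a^2)\tilde T + \mu\tilde T^2\bigr).
\]
On the region where $\tilde T = c$, this is negative provided $c(2(\hat r^2+a^2) - \mu c) > a^2$; at the horizon ($\mu = 0$) this reduces to $c > a^2/(2(\hat r_{\bhm,a}^2 + a^2))$, which by subextremality $|a| < \bhm \leq \hat r_{\bhm,a}$ is ensured by e.g.\ any $c \in (\tfrac14, 1)$. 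The past orientation is then fixed by the positive pairing $\dd\hat t(\pa_{\hat t_{\rm BL}}) = 1$ with the future timelike Boyer--Lindquist time vector field. The only (minor) technical point is verifying the inequality uniformly on the relevant compact $\hat r$-range and for $(\bhm, a)$ near $(\bhm_0, a_0)$, which is immediate by continuity and compactness once strict subextremality is used.
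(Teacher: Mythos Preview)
The paper does not prove this lemma itself, deferring to \cite[\S3.1]{HintzKdSMS}. Your argument for the analytic extension is correct and is the standard computation: the two algebraic identities you display are exactly the mechanism by which the $\mu^{-1}$ singularities cancel, and the remaining metric coefficients are then visibly analytic on $\hat r > \hat r_{\bhm,a}^-$.

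There is, however, a genuine gap in your timelikeness argument. You verify $\hat g_{\bhm,a}^{-1}(\dd\hat t,\dd\hat t) < 0$ only where $\tilde T = c$, i.e.\ $\hat r \leq 4\bhm$, and wave off the rest as ``the relevant compact $\hat r$-range.'' But your own formula
\[
  \varrho^2\, \hat g_{\bhm,a}^{-1}(\dd\hat t, \dd\hat t) = a^2\sin^2\theta - 2(\hat r^2+a^2)\tilde T + \mu\tilde T^2
\]
gives, on $\hat r > 5\bhm$ where your $\tilde T$ vanishes, the value $a^2\sin^2\theta \geq 0$: so $\dd\hat t$ is \emph{not} timelike there when $a \neq 0$. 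In fact no $\tilde T$ with $\tilde T = 0$ for large $\hat r$ can make $\dd\hat t$ everywhere timelike; read literally as a simultaneous requirement, the final sentence of the lemma is false for $a\neq 0$. What is consistent with the paper's actual usage---the Kerr metric in $(\hat t,\hat x)$ coordinates is required to be asymptotically Minkowskian (see \S\ref{SsILin} and Definition~\ref{DefGl})---is $T' \to 0$ at infinity, i.e.\ $\tilde T \to 1$ rather than $\tilde T \to 0$. Taking simply $\tilde T \equiv 1$, $\tilde\Phi \equiv 0$, your formula gives $\varrho^2\,\hat g_{\bhm,a}^{-1}(\dd\hat t,\dd\hat t) = -(a^2\cos^2\theta + \hat r^2 + 2\bhm\hat r) < 0$ on all of $\hat r > 0$. (A minor additional point: your orientation check via $\dd\hat t(\pa_{\hat t_{\rm BL}}) = 1$ requires $\pa_{\hat t_{\rm BL}}$ to be timelike, which fails in the ergoregion; once $\dd\hat t$ is known to be timelike everywhere, though, its past character follows by continuity from any single point outside the ergoregion.)
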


The level sets of $\hat t$ are then transversal to the future event horizon $\cH^+=\{\hat r=\hat r_{\bhm,a}\}$.

\begin{definition}[Kerr model]
\label{DefGlKerr}
  (See \cite[Definition~3.25]{HintzGlueLocI}.) For parameters $\bhm>0$ and $\bha\in\R^3$ for which $\bhm,a:=|\bha|$ are subextremal, define
  \begin{gather*}
    \hat K_{\bhm,\bha} := \{\hat x\in\R^3 \colon \hat r=|\hat x|\leq\bhm \}, \\
    \hat M_{\bhm,\bha}^\circ := \R_{\hat t}\times\hat X_{\bhm,\bha}^\circ,\qquad
    \hat X_{\bhm,\bha}^\circ := \R^3_{\hat x} \setminus \hat K_{\bhm,\bha}^\circ.
  \end{gather*}
  Define polar coordinates $\hat r=|\hat x|$, $\theta\in(0,\pi)$, $\phi_*\in(0,2\pi)$ on $\R^3_{\hat x}\supset\hat X_{\bhm,\bha}^\circ$ so that $\bha$ (when nonzero) points in the direction of the north pole $\theta=0$. We then define $\hat g_{\bhm,\bha}$ on $\hat M_{\bhm,\bha}^\circ$ to be equal to $g_{\bhm,a}$ in the coordinates from Lemma~\ref{LemmaGlCoord}.
\end{definition}

We shall identify $\hat M_{\bhm,\bha}^\circ$ with an open subset of
\[
  \cM = \bigl[\,\ol{\R^{1+3}_{\hat t,\hat x}}; \pa\ol{\R_{\hat t}\times\{0\}}\,\bigr].
\]
Then $\hat g_{\bhm,\bha}$ extends to the closure $\hat M_{\bhm,\bha}$ of $\hat M_{\bhm,\bha}^\circ$ as a smooth stationary Lorentzian 3-body-scattering metric, that is,
\[
  \hat g_{\bhm,\bha} \in \CI(\hat X_{\bhm,\bha};S^2\,\Ttsc^*_{\hat X_{\bhm,\bha}}\hat M_{\bhm,\bha}),\qquad \hat X_{\bhm,\bha} := \ol{\R^3_{\hat x}} \setminus \hat K_{\bhm,\bha}^\circ.
\]

\begin{definition}[Glued spacetime]
\label{DefGl}
  Fix the following data:
  \begin{itemize}
  \item $(M,g)$: a smooth $4$-dimensional globally hyperbolic Lorentzian manifold;
  \item\label{ItGlGeod} $\cC\subset M$: a smooth inextendible timelike curve,\footnote{In the present paper, it is irrelevant whether or not $\cC$ is a geodesic; this only matters when we solve the gluing problem for the Einstein equations in \cite{HintzGlueLocI,HintzGlueLocIII}.} parameterized by arc length by $c\colon I_\cC\subseteq\R\to M$;
  \item Fermi normal coordinates
  \begin{equation}
  \label{EqGlFermi}
    z=(z^0,z^1,z^2,z^3)=(t,x)\in U_{\rm Fermi}\subset I_\cC\times\R^3
  \end{equation}
  around $\cC$, with $\pa_t$ future timelike. That is, $U_{\rm Fermi}$ is an open neighborhood of $I_\cC\times\{0\}$, and $g_{\mu\nu}(t,x)=g^{\rm Mink}_{\mu\nu}+\cO(|x|)$ where $g^{\rm Mink}=-\dd t^2+\dd x^2$ is the Minkowski metric and $\cO(|x|)$ denotes a smooth function on $M$ which vanishes at $\cC=\{x=0\}$;
  \item subextremal Kerr parameters $\bhm>0$, $\bha\in\R^3$.
  \end{itemize}
  A \emph{glued spacetime} is then a pair $(\wt M,\wt g)$ where
  \begin{enumerate}
  \item $\wt M:=[[0,1)_\eps\times M;\{0\}\times\cC]$ is the total gluing spacetime from Definition~\usref{DefFse};
  \item\label{ItGlMetric} $\wt g$ is a section of $S^2\wt T^*\wt M$ over $\wt M\setminus\wt K^\circ$ where $\wt K=\{|\hat x|\leq\bhm\}$, $\hat x=\frac{x}{\eps}$. We assume that:
    \begin{enumerate}
    \item\label{ItGlwtg} $\wt g$ is of class $\CI+\cC_\seop^{\infty,1,1}=\CI+\rho_\circ\hat\rho\CI_\seop$;
    \item $\wt g|_{M_\circ}=\upbeta_\circ^*g$, so in coordinates $\eps$, $t$, $x\neq 0$, we have $\wt g_{\mu\nu}(0,t,x)=g_{\mu\nu}(t,x)$;
    \item\label{ItGlwtgKerr} in the coordinates $\eps$, $t$, $\hat x$, we have $\wt g_{\mu\nu}(0,t,\hat x)=(\hat g_{\bhm,\bha})_{\hat\mu\hat\nu}(\hat x)$ where the left, resp.\ right hand side is a metric coefficient in the coordinates $z=(t,x)$ on $M$, resp.\ $\hat z=(\hat t,\hat x)$ on $\hat M_{\bhm,\bha}^\circ$.
    \end{enumerate}
  \end{enumerate}
  We write $g_\eps:=\wt g|_{M_\eps}\in\CI(M_\eps\setminus\wt K^\circ;S^2 T^*M_\eps)$, where we recall that $M_\eps=\{\eps\}\times M$ is the $\eps$-level set of $\wt M$ for $\eps>0$.
\end{definition}

Recall that Fermi normal coordinates are unique up to replacing $x$ by $R x$ where $R\in O(3)$ is constant; see e.g.\ \cite[Lemma~3.14]{HintzGlueLocI}. The choice of spatial coordinates $x$ thus amounts to a choice of the axis of rotation of the small Kerr black hole. Recall moreover that for any fixed precompact open set $V\subset M$ there exists $\eps(\ol V)>0$ so that $\wt g$ is a Lorentzian signature section of $S^2\wt T^*\wt M)$ on $\wt\upbeta^{-1}([0,\eps(\ol V))\times V)\subset\wt M$ (see the discussion preceding \cite[Notation~3.17]{HintzGlueLocI}); this follows from the continuity of $\wt g$ and the fact that both $\wt g|_{M_\circ}$ and $\wt g|_{\hat M}$ are Lorentzian.

\begin{rmk}[Regular set]
\label{RmkGlRegular}
  We shall not make the open neighborhood of $M_\circ\cup(\hat M\setminus\wt K^\circ)$ of $\wt M$ on which $\wt g$ is Lorentzian explicit. For quasilinear problems, one only needs the simple fact that if $\wt g$ is Lorentzian on a compact subset $\wt C\subset\wt M$, then also $\wt g+\wt h$ is Lorentzian on $\wt C$ provided that $\wt h$ is sufficiently small in $L^\infty$ as a section of $S^2\wt T^*\wt M$ (equipped with any fixed smooth positive definite fiber metric) over $\wt C$.
\end{rmk}

\begin{rmk}[se, 3b, and edge perspectives]
\fakephantomsection
\label{RmkGlSe3be}
  \begin{enumerate}
  \item\label{ItGlSe3be3b} In view of the identification~\eqref{EqFsewtTTse}, we can regard a glued spacetime metric $\wt g$ as a Lorentzian se-metric,
    \begin{equation}
    \label{EqGlMetricse}
      \wt g \in \hat\rho^2(\CI+\cC^{\infty,1,1}_\seop)(\wt M\setminus\wt K^\circ;S^2\,\Tse^*\wt M).
    \end{equation}
    In local coordinates, this arises from $\dd z=\hat\rho\frac{\dd z}{\hat\rho}$, with $\frac{\dd z}{\hat\rho}$ being a local frame of $\Tse^*M$. Therefore, $\eps^{-2}\wt g\in\rho_\circ^{-2}(\CI+\cC_\seop^{\infty,1,1})(\wt M\setminus\wt K^\circ;S^2\,\Tse^*\wt M)$ can be restricted to any fiber $\hat M_{t_0}$ of $\hat M$ to yield, via~\eqref{EqFseBundle3b}, a stationary metric of class
    \[
      \rho_\circ^{-2}\CI(\hat M_{\bhm,\bha};S^2\,\Ttb^*\hat M_{\bhm,\bha}) = \CI(\hat M_{\bhm,\bha};S^2\,\Ttsc^*\hat M_{\bhm,\bha})
    \]
    which is, in view of Definition~\ref{DefGl}\eqref{ItGlwtgKerr}, the Kerr metric $\hat g_{\bhm,\bha}$ (independently of $t_0$).
  \item\label{ItGlSe3bee} Restriction of~\eqref{EqGlMetricse} to $M_\circ$ recovers $g$ as a weighted edge metric:
  \[
    \wt g|_{M_\circ} = g \in \upbeta_\circ^*\CI(M;S^2 T^*M) \subset |x|^2\CI(M_\circ;S^2\,\Te^*M_\circ).
  \]
  \end{enumerate}
\end{rmk}

\begin{rmk}[Other model metrics at $\hat M$]
\label{RmkGlOther}
  While the assumption on $\wt g|_{\hat M}$ in Definition~\ref{DefGl} is the relevant setting for applications to black hole gluing, one can easily consider other settings as well. For example, one can require $\wt g_{\mu\nu}(0,t,\hat x)=\hat g_{\hat\mu\hat\nu}(\hat x)$ where $\hat g$ is a stationary asymptotically flat metric. The setting where $\hat g$ is equal to the Minkowski metric (or a general asymptotically flat metric without trapping and horizons) is of particular interest if one wishes to study the wave equation $\Box_{g_\eps}+V_\eps$ on (possibly degenerating) spacetimes coupled to sharply localized potentials $V_\eps(x)=\eps^{-2}V(\frac{x}{\eps})$ where $V$ is Schwartz (or has inverse cubic decay, say). For further comments on this particular case, see Remarks~\ref{RmkGlDynOther}, \ref{RmkEstOther}, \ref{RmkEstOtherUnif}, and \ref{RmkScSGV}.
\end{rmk}

\subsection{Phase space dynamics}
\label{SsGlDyn}

\emph{For notational simplicity, we shall assume from now on that
\begin{equation}
\label{EqGlDynwtgCI}
  \wt g \in \CI(\wt M\setminus\wt K^\circ;S^2\wt T^*\wt M),
\end{equation}
and thus $\wt g\in\hat\rho^2\CI(\wt M\setminus\wt K^\circ;S^2\,\Tse^*\wt M)$} unless otherwise noted. That is, we assume that the se-regular remainder terms of $\wt g$ in Definition~\ref{DefGl}\eqref{ItGlwtg} are, in fact, smooth. We stress, however, that the results in this section go through under the original assumption $\wt g\in\CI+\cC_\seop^{\infty,1,1}$ (and indeed under the even weaker assumption $\wt g\in(\CI+\cC_\seop^{\infty,\delta,\delta})(\wt M\setminus\wt K^\circ;S^2\wt T^*\wt M)$ for some $\delta>0$). (When this is not obvious, we will point this out explicitly.) We shall study the null-bicharacteristic dynamics in $\Tse^*\wt M\setminus o$ in stages:
\begin{enumerate}
\item \S\ref{SssGlDynMc} recalls results from \cite{HintzConicWave} regarding the dynamics over $M_\circ$;
\item \S\ref{SssGlDynKerr} describes the structure of the null-geodesic flow (lifted to phase space) on subextremal Kerr spacetimes following \cite{DyatlovWaveAsymptotics};
\item \S\ref{SssGlDynTr} concerns properties of the parallel transport of vectors along the trapped set of Kerr and follows \cite{MarckParallelNull};
\item \S\ref{SssGlDynStd} gives a description of the se-phase space dynamics uniformly for small $\eps>0$.
\end{enumerate}

For nonlinear applications, it is useful to be more precise as far as the \emph{amount} of se-regularity of the remainder term of $\wt g$ is concerned. This only plays a role in~\S\ref{SssGlDynStd} (since in the earlier sections the remainder term does not enter at all, as it vanishes on $M_\circ\cup\hat M$).

\begin{rmk}[Other settings]
\label{RmkGlDynOther}
  When $\wt g|_{\hat M_t}=\hat g$ is an asymptotically flat nontrapping spacetime without horizons (in the context of Remark~\ref{RmkGlOther}), the dynamics of the null-bi\-char\-ac\-ter\-is\-tic flow over $\hat M^\circ$ is much simpler than in the Kerr setting considered here: it has a simple source-to-sink structure where the source and sink are~\eqref{EqGlDynKerrIn} and \eqref{EqGlDynKerrOut}, respectively.
\end{rmk}

We denote by
\[
  \wt G \colon \Tse^*_z\wt M \ni \zeta \mapsto \wt g|_z^{-1}(\zeta,\zeta)
\]
the dual metric function. Using~\eqref{EqGlDynwtgCI}, the rescaling $\hat\rho^{-2}\wt g\in\CI(\wt M\setminus\wt K^\circ;S^2\,\Tse^*\wt M)$ is a nondegenerate Lorentzian signature section down to $\hat M\cup M_\circ$, and correspondingly\footnote{We recall that $P^{[2]}$ refers to spaces of fiber-wise homogeneous polynomials of degree $2$ with smooth dependence on the base point. For se-regular metric remainder terms,~\eqref{EqGlDynwtG} gets replaced by $(\CI+\cC_\seop^{\infty,1,1})(\wt M\setminus\wt K^\circ)P^{[2]}(\Tse^*_{\wt M\setminus\wt K^\circ}\wt M)$.}
\begin{equation}
\label{EqGlDynwtG}
  \hat\rho^2\wt G \in P^{[2]}(\Tse^*_{\wt M\setminus\wt K^\circ}\wt M).
\end{equation}
We then write
\[
  \wt\Sigma := (\Tse^*\wt M \setminus o) \cap \{ \hat\rho^2\wt G = 0 \},\qquad
  \pa\wt\Sigma \subset \Sse^*\wt M,
\]
for the characteristic set and its boundary at fiber infinity, respectively. We fix as the time orientation of $\wt g$ the unique one for which $\hat\rho\pa_t=\rho_\circ^{-1}\pa_{\hat t}$ is future timelike at $M_\circ\cap\hat M$. (This is consistent with the usual time orientation on Kerr in which $\pa_{\hat t}$ is future timelike for $\hat r\gg\bhm$, and also with the conventions in Lemma~\ref{LemmaGlCoord} and Definition~\ref{DefGl}.) We then write
\[
  \wt\Sigma = \wt\Sigma^+ \sqcup \wt\Sigma^-
\]
for the future (`$+$') and past (`$-$') components of $\wt\Sigma$.

We shall describe the structure of the flow of the rescaled Hamiltonian vector field
\[
  \hat\rho^2 H_{\wt G} \in \Vse(\Tse^*_{\wt M\setminus\wt K^\circ}\wt M)
\]
in $\wt\Sigma$, where $\hat\rho^2 H_{\wt G}=H_{\hat\rho^2\wt G}$. Here we write $\Vse(\Tse^*\wt M)$ for the space of smooth vector fields which, in terms of the fiber-linear coordinates defined by writing se-covectors as $-\sigma_\seop\,\frac{\dd t}{\hat\rho}+\xi_\seop\,\frac{\dd r}{\hat\rho}+\eta_\seop$, $\eta_\seop\in T^*\Sph^2$, are spanned over $\CI(\Tse^*\wt M)$ by lifts of se-vector fields on $\wt M$ and the vertical vector fields $\pa_{\sigma_\seop}$, $\pa_{\xi_\seop}$, $\pa_{\eta_\seop}$. In $\eps>0$, the $\hat\rho^2 H_{\wt G}$-flow is the same as the lift of the null-geodesic flow on $(M_\eps,g_\eps)=(M_\eps,\wt g|_{M_\eps})$ to the cotangent bundle, up to reparametrization.

\subsubsection{Far field regime: dynamics over \texorpdfstring{$M_\circ$}{the lift of the original spacetime}}
\label{SssGlDynMc}

Over the interior of $M_\circ$ (where $\Tse^*_{(M_\circ)^\circ}\wt M=T^*(M\setminus\cC)$), the vector field $H_{\wt G}$ is equal to the restriction of $H_G$ to $T^*_{M\setminus\cC}M$. Near $\cC$ on the other hand, we may choose $\hat\rho$ to be equal to $|x|$, and then
\[
  \hat\rho^2 H_{\wt G}|_{\Tse^*_{M_\circ}\wt M}=|x|^2 H_G\in\Ve(\Te^*M_\circ).
\]
Therefore, we are in effect regarding $\cC\subset M$ as a timelike curve of conic singularities (which happen to be smooth) and study the phase space dynamics in the edge cotangent bundle; the analysis of general such settings is presented in \cite{HintzConicWave}. We shall show here that the only radial sets for the $|x|^2 H_G$-flow in $\wt\Sigma\cap\Te^*M_\circ$ are the sets defined as follows:

\begin{definition}[Radial sets over $\pa M_\circ$]
\label{DefGlDynInOut}
  Write edge covectors as $-\sigma_\eop\frac{\dd t}{r}+\xi_\eop\frac{\dd r}{r}+\eta_\eop$ as in~\eqref{EqFseEdgeCoord}. We then define the \emph{incoming} and \emph{outgoing radial sets over $\pa M_\circ$} by $\cR_{\rm in}=\bigsqcup_\pm\cR_{\rm in}^\pm$, $\cR_{\rm out}=\bigsqcup_\pm\cR_{\rm out}^\pm\subset\Te^*_{\pa M_\circ}M_\circ=\Tse^*_{\pa M_\circ}\wt M$, where
  \begin{equation}
  \label{EqGlDynInOutSets}
  \begin{alignedat}{2}
    \cR_{\rm in}^\pm &= \{ (t,r,\omega;\sigma_\eop,\xi_\eop,\eta_\eop) = (t,0,\omega;\sigma_\eop,-\sigma_\eop,0) \colon {\pm}\sigma_\eop > 0 \} && \subset \wt\Sigma^\pm \cap \Tse^*_{\pa M_\circ}\wt M, \\
    \cR_{\rm out}^\pm &= \{ (t,r,\omega;\sigma_\eop,\xi_\eop,\eta_\eop) = (t,0,\omega;\sigma_\eop,\sigma_\eop,0) \colon {\pm}\sigma_\eop > 0 \} && \subset \wt\Sigma^\pm \cap \Tse^*_{\pa M_\circ}\wt M.
  \end{alignedat}
  \end{equation}
\end{definition}

In the local coordinates~\eqref{EqFseEdgeCoord}, note that $G\in P^{[2]}(T^*M)$ is equal to the dual metric function of the Minkowski metric $\ubar g=-\dd t^2+\dd x^2$ plus terms vanishing over $\cC$. Therefore,
\begin{equation}
\label{EqGlDynGe}
  G_\eop := r^2 G = -\sigma_\eop^2 + \xi_\eop^2 + |\eta_\eop|^2 + \tilde G_\eop,\qquad \tilde G_\eop\in r P^{[2]}(\Te^*M).
\end{equation}
The time orientation is such that $-\dd t$ is future timelike; since $\ubar g^{-1}(-\dd t,-\dd t\pm\dd r)=\ubar g^{-1}(-\dd t,-\dd t)<0$, the 1-forms $-\dd t\pm\dd r$ are future lightlike for $\ubar g$, and thus so are $-\sigma_\eop\frac{\dd t}{r}+\xi_\eop\frac{\dd r}{r}$ for $r^{-2}g$ at $r=0$ where $\sigma_\eop=1$, $\xi_\eop=\pm 1$; this justifies the signs in~\eqref{EqGlDynInOutSets}. We then record that
\begin{align}
  H_{G_\eop} &= -(\pa_{\sigma_\eop}G_\eop)r\pa_t + (\pa_{\xi_\eop}G_\eop)(r\pa_r+\sigma_\eop\pa_{\sigma_\eop}) + (\pa_{\eta_{\eop,j}}G_\eop)\pa_{\omega^j} \nonumber\\
    &\qquad + r(\pa_t G_\eop)\pa_{\sigma_\eop} - \bigl((r\pa_r+\sigma_\eop\pa_{\sigma_\eop})G_\eop\bigr)\pa_{\xi_\eop} - (\pa_{\omega^j}G_\eop)\pa_{\eta_{\eop,j}} \nonumber\\
\label{EqGlDynHamEdge}
    &= 2\sigma_\eop r\pa_t + 2\xi_\eop(r\pa_r+\sigma_\eop\pa_{\sigma_\eop}) + 2\slg^{j k}\eta_{\eop,k}\pa_{\omega^j} + 2\sigma_\eop^2\pa_{\xi_\eop} - (\pa_{\omega^j}\slg^{k l})\eta_{\eop,k}\eta_{\eop,l}\pa_{\eta_{\eop,j}} + H_{\tilde G_\eop},
\end{align}
where $\slg$ denotes the standard metric on $\Sph^2$. We have $H_{\tilde G_\eop}\in r\Ve(\Te^*M)$. Consider now $H_{G_\eop}$ on the characteristic set over $r=0$: when $\eta_\eop\neq 0$, it has a non-vanishing $\pa_\omega$-component; when $\eta_\eop=0$, we must have $-\sigma_\eop^2+\xi_\eop^2=0$, so we are in $\cR_{\rm in}\cup\cR_{\rm out}$.

We analyze the null-bicharacteristic dynamics near the radial sets using projective coordinates on $\ol{\Te^*}M_\circ$. For brevity, we only consider the radial sets $\cR_{\rm in}^+$, $\cR_{\rm out}^+$. Near their boundaries $\pa\cR_{\rm in}^+$, $\pa\cR_{\rm out}^+$ at fiber infinity $\Se^*M_\circ$, we introduce projective coordinates
\begin{equation}
\label{EqGlDynInOutCoord2}
  \rho_\infty = \frac{1}{\sigma_\eop},\quad
  \hat\xi_\eop = \frac{\xi_\eop}{\sigma_\eop},\quad
  \hat\eta_\eop = \frac{\eta_\eop}{\sigma_\eop},
\end{equation}
and compute $\rho_\infty^2 G_\eop\equiv-1+\hat\xi_\eop^2+|\hat\eta_\eop|^2\bmod r\CI$ and
\begin{equation}
\label{EqGlDynInOutHam}
  \rho_\infty H_{G_\eop} \equiv 2\hat\xi_\eop(r\pa_r-\rho_\infty\pa_{\rho_\infty}-\hat\eta_\eop\pa_{\hat\eta_\eop}) + 2(1-\hat\xi_\eop^2)\pa_{\hat\xi_\eop} + \rho_\infty H_{|\eta_\eop|^2} \bmod r\Vb(\Te^*M);
\end{equation}
here $\rho_\infty H_{|\eta_\eop|^2}=2\slg^{j k}\hat\eta_{\eop,k}\pa_{\omega^j}-(\pa_{\omega^j}\slg^{k l})\hat\eta_{\eop,k}\hat\eta_{\eop,l}\pa_{\hat\eta_{\eop,j}}$. Along integral curves of $\rho_\infty H_{G_\eop}$ within the characteristic set over $r=0$, $\hat\xi_\eop$ either remains constant at $\pm 1$ (which happens at the radial sets), or it tends to $\pm 1$ as the affine parameter tends to $\pm\infty$. Thus, in the characteristic set over $\pa M_\circ$, $\cR_{\rm in}^+$ is a source and $\cR_{\rm out}^+$ is a sink for the $\rho_\infty H_{G_\eop}$ flow. (This behavior is already discussed in the ultrastatic conic context in \cite{MelroseWunschConic}.)

On the other hand, the coefficient of $r\pa_r$ in~\eqref{EqGlDynInOutHam} is positive at $\pa\cR_{\rm out}^+$ and negative at $\pa\cR_{\rm in}^+$, so $\pa\cR_{\rm in}^+$ is a sink in the radial direction and $\pa\cR_{\rm out}^+$ is a source. We can explicitly construct the unstable manifold
\[
  \pa W_{t_0,\rm out}^+\subset\wt\Sigma^+\cap\Se^*M_\circ
\]
of the $\rho_\infty H_{G_\eop}$-invariant set $\pa\cR_{t_0,\rm out}^+=\pa\cR_{\rm out}^+\cap\{t=t_0\}$. One possibility is to appeal to the stable/unstable manifold theorem \cite{HirschPughShubInvariantManifolds}. More simply, one can exploit the smoothness of $(M,g)$: define $W_{M,t_0,\rm out}^+$ as the forward flow-out under $H_G$ of the set $\cR_{M}^+:=\{(t,x,\sigma,\xi)\colon x=0,\ \sigma>0,\ |\xi|=\sigma\}\subset T^*_\cC M$ (where we write covectors as $-\sigma\,\dd t+\xi\,\dd x$) inside of $T^*M\setminus\cR_M^+$. Then $\pa W_{t_0,\rm out}^+$ is the union of $\pa\cR_{t_0,\rm out}^+$ with the boundary at fiber infinity of the closure of $W_{M,\rm out}^+\cap T^*(M\setminus\cC)$ inside of $\ol{\Te^*_{(M_\circ)^\circ}}M_\circ\setminus o$. Since a backward null-bicharacteristic of $H_{G_\eop}$ in $\wt\Sigma^+$ starting over $(M_\circ)^\circ=M\setminus\cC$ tends to $\{t=t_0\}\cap\Te^*_{\pa M_\circ}M_\circ$ if and only if the backward null-bicharacteristic of $H_G$ starting at the same point intersects $T^*_\cC M$ (and thus necessarily the dual light cone of $(M,g)$ over $T^*_\cC M$) at $t=t_0$, this indeed produces the unstable manifold. In an analogous fashion, one can construct the stable manifold
\[
  \pa W_{t_0,\rm in}^+\subset\wt\Sigma^+\cap\Se^*M_\circ
\]
of $\pa\cR_{t_0,\rm in}^+=\pa\cR_{\rm out}^+\cap\{t=t_0\}$.

\begin{definition}[Stable and unstable manifolds]
\label{DefGlDynStableUn}
  For $t_0\in I_\cC$, we define $\pa W_{t_0,\rm out}^\pm$, resp.\ $\pa W_{t_0,\rm in}^\pm$ as the unstable, resp.\ stable manifold of $\pa\cR_{t_0,\rm out}^\pm=\pa\cR_{\rm out}^\pm\cap\{t=t_0\}$, resp.\ $\pa\cR_{t_0,\rm in}^\pm=\pa\cR_{\rm in}^\pm\cap\{t=t_0\}$ for the flow of $\pm\rho_\infty H_{G_\eop}$ inside of $\Se^*M_\circ$. Furthermore, $W_{t_0,\rm out}^\pm\subset\Te^*M_\circ\setminus o$ is the conic extension of $\pa W_{t_0,\rm out}^\pm$, and $W_{t_0,\rm out}=\bigsqcup_\pm W_{t_0,\rm out}^\pm$. Finally, we write $W_{\rm out}=\bigcup_{t_0}W_{t_0,\rm out}$; similarly for `in' in place of `out'.
\end{definition}

\begin{definition}[Standard domains in $M$]
\label{DefGlDynStdM}
  Let $T\in\cV(M)$ denote a future timelike vector field. A compact submanifold with corners $\Omega\subset M$ is a \emph{standard domain} if satisfies one of the following two alternatives.
  \begin{enumerate}
  \item\label{ItGlDynStdMC} $\Omega$ is contained in the Fermi normal coordinate chart $U_{\rm Fermi}$ and given by
    \[
      \Omega = \Omega_{t_0,t_1,r_0} := \{ (t,x) \colon t_0\leq t\leq t_1,\ |x|\leq r_0 + 2(t_1-t) \}
    \]
    for some $t_0<t_1$ and $r_0>0$, and moreover:
    \begin{enumerate}
    \item\label{ItGlDynStdIni} $X_{t_0,t_1,r_0}=\{t=t_0\}\cap\Omega_{t_0,t_1,r_0}$ is an initial spacelike hypersurface (i.e.\ $T t>0$ on $X_{t_0,t_1,r_0}$);
    \item\label{ItGlDynStdFin} $Y^-_{t_0,t_1,r_0}=\{t=t_1\}\cap\Omega_{t_0,t_1,r_0}$ is a final spacelike hypersurface (i.e.\ $T t>0$ on $Y^-_{t_0,t_1,r_0}$), and also $Y^\wedge_{t_0,t_1,r_0}=\{|x|=r_0+2(t_1-t)\}\cap\Omega_{t_0,t_1,r_0}$ is a final spacelike hypersurface (i.e.\ $T(|x|-(r_0+2(t_1-t)))>0$ on $Y^\wedge_{t_0,t_1,r_0}$);
    \item\label{ItGlDynStdNonrf} $\Omega$ is \emph{non-refocusing} in that $W_{t_-,\rm out}^+\cap W_{t_+,\rm in}^+=\emptyset$ for all $t_-,t_+\in[t_0,t_1]$;
    \item\label{ItGlDynStdOrder} in the coordinates~\eqref{EqFseEdgeCoord}, we have $\sigma_\eop>0$ on $\wt\Sigma^+\cap\Te^*_\Omega M_\circ$, and there exists $c_0>0$ so that for the function $\hat\xi_\eop=\frac{\xi_\eop}{\sigma_\eop}$ we have
      \begin{equation}
      \label{EqGlDynStdMonotone}
        \sigma_\eop^{-1} H_{G_\eop}\hat\xi_\eop\geq c_0\ \ \text{on}\ \wt\Sigma^+\ \text{whenever}\ \hat\xi_\eop\in\Bigl[-\frac34,\frac34\Bigr].
      \end{equation}
    \end{enumerate}
  \item\label{ItGlDynStdMNonC} $\Omega$ is \emph{disjoint} from $\cC$, has only spacelike boundary hypersurfaces, and only one boundary hypersurface $X\subset\Omega$ is initial (i.e.\ $T$ points into $\Omega$ everywhere in $X^\circ$).
  \end{enumerate}
  We moreover abuse notation and write $\Omega$ also for its lift $\upbeta_\circ^*\Omega$ to $M_\circ$.
\end{definition}

Condition~\eqref{ItGlDynStdNonrf} is an instance of \cite[Definition~2.5]{HintzConicWave}. Condition~\eqref{ItGlDynStdOrder} will be used to aid the construction of edge regularity order functions which are monotonically decreasing along the flow of $\rho_\infty H_{G_\eop}$; see~\S\ref{SsEstStd}.

\begin{rmk}[More general domains]
\label{RmkGlDynStdMore}
  The requirement that only one boundary hypersurface be initial is made solely for notational convenience later on. Similarly, the exact form of standard domains intersecting $\cC$ is chosen so as to minimize notational complications (see \cite[Definition~2.2]{HintzConicWave} for a more general definition). In our gluing applications, we can freely choose the domains on which to work, and thus we leave the modifications required for analysis on more general domains to the reader.
\end{rmk}

\begin{lemma}[Existence of standard domains]
\label{LemmaGlDynStdEx}
  Fix $t'\in I_\cC$. Then there exists $\eta>0$ so that for all $t'-\eta<t_0<t_1<t'+\eta$ and $r_0\in(0,\eta)$, the domain $\Omega_{t_0,t_1,r_0}$ is a standard domain.
\end{lemma}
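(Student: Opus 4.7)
The strategy is to verify each of the four conditions in Definition~\ref{DefGlDynStdM}\eqref{ItGlDynStdMC} by reducing to the Minkowski model and invoking continuity / openness of the relevant conditions. Recall from Definition~\ref{DefGl} that in the Fermi normal coordinates $(t,x)$ around $\cC$, the metric satisfies $g_{\mu\nu}(t,x)=g^{\rm Mink}_{\mu\nu}+\cO(|x|)$. By shrinking $\eta$ we can work in the region $|t-t'|<\eta$, $|x|<3\eta$, which is contained in $U_{\rm Fermi}$, and on which $g$ is uniformly $\cC^\infty$-close to $-\dd t^2+\dd x^2$.

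\emph{Spacelike character of the boundary hypersurfaces (conditions~\eqref{ItGlDynStdIni}--\eqref{ItGlDynStdFin}).} For $g^{\rm Mink}$, the hypersurface $\{t=\mathrm{const}\}$ has timelike conormal $\dd t$ (so it is spacelike), and the hypersurface $\{|x|=r_0+2(t_1-t)\}$ has conormal proportional to $2\,\dd t+\frac{x}{|x|}\cdot\dd x$ whose squared $(g^{\rm Mink})^{-1}$-norm is $-4+1=-3<0$; this is again spacelike, with $\dd t$-component of the dual vector equal to $2$, hence past-pointing (so it is a final hypersurface). Spacelikeness and the correct time orientation are open conditions on $g$, so they persist for $\eta$ sufficiently small. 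Taking $T=\pa_t$ gives $T t=1$, so $T$ is initial on $\{t=t_0\}$ and final on both $\{t=t_1\}$ and $\{|x|=r_0+2(t_1-t)\}$.

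\emph{Monotonicity of $\hat\xi_\eop$ (condition~\eqref{ItGlDynStdOrder}).} In the Minkowski case, the decomposition~\eqref{EqGlDynGe} has $\tilde G_\eop=0$, and using~\eqref{EqGlDynHamEdge} a direct calculation gives $H_{G_\eop}\xi_\eop=2\sigma_\eop^2$ and $H_{G_\eop}\sigma_\eop=2\sigma_\eop\xi_\eop$, hence
\[
  \sigma_\eop^{-1}H_{G_\eop}\hat\xi_\eop = 2(1-\hat\xi_\eop^2).
\]
On $\wt\Sigma^+$ (for $g^{\rm Mink}$) one has $\hat\xi_\eop^2+|\hat\eta_\eop|^2=1$, so when $|\hat\xi_\eop|\leq\tfrac34$ we get $\sigma_\eop^{-1}H_{G_\eop}\hat\xi_\eop\geq\tfrac78$. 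The general metric contributes an error $H_{\tilde G_\eop}$ with $\tilde G_\eop\in r\,P^{[2]}(\Te^*M)$, so on the compact region $|x|\leq 3\eta$ the correction to $\sigma_\eop^{-1}H_{G_\eop}\hat\xi_\eop$ is $\cO(\eta)$ uniformly on the portion of $\wt\Sigma^+$ where $|\hat\xi_\eop|\leq\tfrac34$ (and $|\hat\eta_\eop|$ is bounded). Hence for $\eta$ small the inequality~\eqref{EqGlDynStdMonotone} holds with, say, $c_0=\tfrac12$. The sign condition $\sigma_\eop>0$ on $\wt\Sigma^+$ is likewise open and holds for $g^{\rm Mink}$ by our choice of time orientation.

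\emph{Non-refocusing (condition~\eqref{ItGlDynStdNonrf}).} This is the most delicate point; however, it too is a consequence of the Minkowski model combined with a perturbation argument. In Minkowski, $W^+_{t_-,\rm out}$ consists, over the base, of outgoing null rays emanating from $(t_-,0)$, i.e.\ $\{(t_-+s,s\omega)\colon s>0\}$ carrying covectors of the form $\sigma(-\dd t+\omega\cdot\dd x)$ with $\sigma>0$; and $W^+_{t_+,\rm in}$ consists of incoming null rays arriving at $(t_+,0)$, i.e.\ $\{(t_+-s,s\omega)\colon s>0\}$ carrying covectors of the form $\sigma(-\dd t-\omega\cdot\dd x)$. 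At any common base point the two sets of covectors are distinct, so $W^+_{t_-,\rm out}\cap W^+_{t_+,\rm in}=\emptyset$ for all $t_-,t_+$ in Minkowski. For the perturbed metric, both $W^+_{t_-,\rm out}$ and $W^+_{t_+,\rm in}$ depend continuously on the metric and on $t_\pm$. On the compact region of phase space that meets the small domain $\Omega_{t_0,t_1,r_0}$, the Minkowski separation of the outgoing and incoming manifolds is strict (in terms of the $\hat x\cdot\hat\xi$ sign at the boundary point), so by continuity it persists under small perturbations---that is, for $\eta>0$ sufficiently small.

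The main (and only nontrivial) obstacle is the non-refocusing condition, but the elementary structure of incoming vs.\ outgoing null rays in Minkowski and the stability of this structure under $\cO(|x|)$ metric perturbations, applied in a region of size $\cO(\eta)$, resolves it. Choosing $\eta$ smaller than the minimum of the finitely many thresholds obtained above completes the proof.
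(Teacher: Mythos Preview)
Your proof is correct and follows essentially the same approach as the paper: verify each condition of Definition~\ref{DefGlDynStdM}\eqref{ItGlDynStdMC} for the Minkowski metric and then invoke openness/continuity under the $\cO(|x|)$ perturbation. One trivial arithmetic slip: the domain actually sits in $|x|<5\eta$ (since $r_0+2(t_1-t_0)<\eta+4\eta$), not $|x|<3\eta$, but this does not affect the argument.
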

\begin{proof}
  The conditions~\eqref{ItGlDynStdIni} and \eqref{ItGlDynStdFin} in Definition~\ref{DefGlDynStdM} are true when $g$ is the Minkowski metric in Fermi normal coordinates. Therefore, they remain true for domains $\Omega_{t_0,t_1,r_0}$ which are contained in a sufficiently small neighborhood of $\cC$, which is guaranteed for all small $\eta>0$. Similarly, if $g$ is the Minkowski metric, then all domains $\Omega_{t_0,t_1,r_0}$ are non-refocusing; the continuous dependence of geodesics on the metric implies the non-refocusing property (i.e.\ condition~\eqref{ItGlDynStdNonrf}) for small $\eta>0$. For condition~\eqref{ItGlDynStdOrder} finally, fix a number $c'_0>0$ for which~\eqref{EqGlDynStdMonotone} is valid on Minkowski space with $c'_0$ in place of $c_0$; then for $c_0=\frac12 c'_0$ the inequality~\eqref{EqGlDynStdMonotone} remains valid when $\eta>0$ is sufficiently small, since then the contribution of the remainder term in~\eqref{EqGlDynInOutHam} is smaller than $\frac12 c'_0$.
\end{proof}

\begin{lemma}[Null-bicharacteristic flow on standard domains: $M_\circ$]
\label{LemmaGlDynStdFlowM}
  Let $\Omega$ be a standard domain. Denote by $X\subset\Omega$ its initial boundary hypersurface, and by $Y$ the union of its final boundary hypersurfaces. Let $\gamma\colon I\subseteq\R\to\pa\wt\Sigma^\pm\cap\Se^*_{\Omega}M_\circ$, $0\in I$, be a maximal integral curve of $\pm\sigma_\eop^{-1}r^2 H_G$. Let $s_-=\inf I$ and $s_+=\sup I$. Then one the following possibilities occurs.
  \begin{enumerate}
  \item\label{ItGlDynStdFlowMRad} $\gamma$ is contained in the radial sets $\pa\cR_{\rm in}\cup\pa\cR_{\rm out}$ from Definition~\usref{DefGlDynInOut} and thus constant;
  \item\label{ItGlDynStdFlowMInOut} $\gamma\subset\Se^*_{\pa M_\circ}M_\circ\setminus(\pa\cR_{\rm in}\cup\pa\cR_{\rm out})$, in which case $s_\pm=\pm\infty$, and $\lim_{s\searrow -\infty}\gamma(s)\in\pa\cR_{\rm in}$ and $\lim_{s\nearrow\infty}\gamma(s)\in\pa\cR_{\rm out}$;
  \item\label{ItGlDynStdFlowMOutY} $\gamma\subset S^*(M\setminus\cC)$, and $\gamma(s)\to\pa\cR_{\rm out}$ as $s\searrow s_-=\infty$, while $s_+<\infty$ and $\gamma(s_+)\in\Se^*_Y M_\circ$;
  \item\label{ItGlDynStdFlowMXIn} $\gamma\subset S^*(M\setminus\cC)$, and $s_->-\infty$, $\gamma(s_-)\in\Se^*_X M_\circ$, while $\gamma(s)\to\pa\cR_{\rm in}$ as $s\nearrow s_+=\infty$;
  \item\label{ItGlDynStdFlowMXY} $|s_\pm|<\infty$, and $\gamma(s_-)\in\Se^*_X M_\circ$, $\gamma(s_+)\in\Se^*_Y M_\circ$.
  \end{enumerate}
\end{lemma}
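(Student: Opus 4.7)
I treat the $+$ case; the $-$ case follows by time reversal. The first step is to show that $\wt\Sigma^+ \cap \{r=0\}$ is flow-invariant. Since $H_{G_\eop} = H_{r^2 G} = r^2 H_G + G\,H_{r^2}$, on the characteristic set $\{G=0\}$ we have $H_{G_\eop}|_{\wt\Sigma} = r^2 H_G|_{\wt\Sigma}$, so $H_{G_\eop} r = r^2(H_G r)$ vanishes at $r=0$; smoothness of the flow then forces $\gamma$ to be entirely contained in $\Se^*_{\pa M_\circ}M_\circ$ or entirely in $S^*(M\setminus\cC)$, and approach to $\{r=0\}$ from $\{r>0\}$ can only happen asymptotically, with $|s|\to\infty$.

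\textbf{Case A} ($\gamma\subset\Se^*_{\pa M_\circ}M_\circ$). From~\eqref{EqGlDynHamEdge}, the $\pa_t$-component of $H_{G_\eop}$ vanishes at $r=0$, so $t$ is constant along $\gamma$. If $\gamma(0)\in\pa\cR_{\rm in}^+\cup\pa\cR_{\rm out}^+$, invariance (radial point structure) gives $\gamma$ constant, yielding case~\eqref{ItGlDynStdFlowMRad}. Otherwise, from~\eqref{EqGlDynInOutHam} at $r=0$ we have $\sigma_\eop^{-1}H_{G_\eop}\hat\xi_\eop = 2(1-\hat\xi_\eop^2) > 0$ for $|\hat\xi_\eop|<1$, and by~\eqref{EqGlDynStdMonotone} this derivative is bounded below by $c_0 > 0$ on $|\hat\xi_\eop|\leq 3/4$ even after the remainder is included; thus $\hat\xi_\eop\circ\gamma$ is strictly monotone on $(-1,1)$, forcing $\gamma(s)\to\pa\cR_{\rm in}^+$ as $s\to-\infty$ and $\gamma(s)\to\pa\cR_{\rm out}^+$ as $s\to+\infty$, with $s_\pm=\pm\infty$ since these are fixed points of a smooth flow. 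This is case~\eqref{ItGlDynStdFlowMInOut}.

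\textbf{Case B} ($\gamma\subset S^*(M\setminus\cC)$). Here $\sigma_\eop^{-1}r^2 H_G$ is a smooth positive reparametrization of $H_G$, so the projection of $\gamma$ to $M$ is a (future-directed, since $\sigma_\eop>0$) null-geodesic of $(M,g)$. Forward analysis: either $\gamma(s)$ reaches $\Se^*_{\pa\Omega}M_\circ$ at some finite $s_+$, in which case future-directedness together with the initial/final decomposition of $\pa\Omega$ (with $X$ the unique initial hypersurface) forces $\gamma(s_+)\in\Se^*_Y M_\circ$; or $s_+=\infty$, and one argues that $\gamma(s)$ must approach the stable radial set $\pa\cR_{\rm in}$, which is the only forward-attracting invariant set available. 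Backward analysis is symmetric: either $s_->-\infty$ with $\gamma(s_-)\in\Se^*_X M_\circ$, or $s_-=-\infty$ with $\gamma(s)\to\pa\cR_{\rm out}$. The four combinations give cases~\eqref{ItGlDynStdFlowMOutY}, \eqref{ItGlDynStdFlowMXIn}, \eqref{ItGlDynStdFlowMXY}, together with the doubly asymptotic possibility $\gamma\subset W_{t_-,\rm out}^+\cap W_{t_+,\rm in}^+$ for some $t_-,t_+\in[t_0,t_1]$, which is precisely what the non-refocusing condition~\eqref{ItGlDynStdNonrf} excludes.

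\textbf{Main obstacle.} The delicate point is Case B with $s_+=\infty$: ruling out trapping inside $\Omega^\circ$ and rigorously identifying the asymptotic limit as $\pa\cR_{\rm in}$. In the alternative~\eqref{ItGlDynStdMC}, the smallness of $r_0,t_1-t_0$ provided by Lemma~\ref{LemmaGlDynStdEx} allows one to compare with the Minkowski null-geodesic flow (for which no trapping occurs and the radial point structure at $\pa\cR_{\rm in}$ is explicit), and invoke the radial point stable manifold theorem---equivalently, observe that $\pa\cR_{\rm in}$ is radially attracting with stable manifold $W_{\rm in}^+$. In the alternative~\eqref{ItGlDynStdMNonC}, $\Omega$ is disjoint from $\cC$ and equipped with a global initial/final spacelike decomposition, so the absence of trapping follows from the existence of a proper ``time function'' on $\Omega$ (for instance any smooth function with past-timelike differential whose level sets interpolate between $X$ and the final hypersurfaces), which forces every null-bicharacteristic to exit $\Omega$ in finite parameter time.
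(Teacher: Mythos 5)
Your case decomposition (invariant boundary, then interior orbits sorted by forward/backward behavior) matches the paper's, and you correctly flag that the doubly-asymptotic combination is exactly what the non-refocusing condition \eqref{ItGlDynStdNonrf} rules out, which the paper leaves implicit. Case~A is fine (though the invocation of~\eqref{EqGlDynStdMonotone} is superfluous there: over $r=0$ the remainder in~\eqref{EqGlDynInOutHam} vanishes identically, so $\sigma_\eop^{-1}H_{G_\eop}\hat\xi_\eop=2(1-\hat\xi_\eop^2)$ holds exactly).

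The gap is in your \textbf{Main obstacle} step for alternative~\eqref{ItGlDynStdMC}. You appeal to ``the smallness of $r_0,t_1-t_0$ provided by Lemma~\ref{LemmaGlDynStdEx}'' to compare with the Minkowski flow, but Lemma~\ref{LemmaGlDynStdEx} only asserts that \emph{some} small domains are standard; it does not say every standard domain is small, and Lemma~\ref{LemmaGlDynStdFlowM} is stated for arbitrary standard domains. So the Minkowski comparison has no justification. The mechanism the paper actually uses is already built into Definition~\ref{DefGlDynStableUn}: since $g$ extends smoothly across $\cC$, the rescaled vector field $\sigma_\eop^{-1}r^2 H_G$ on $\Se^*_\Omega M_\circ$ is a reparametrization of $H_G$ on $S^*_\Omega M$, and over the compact set $\Omega\subset M$ (with spacelike $\pa\Omega$, no closed causal curves) every $H_G$-null-bicharacteristic leaves $T^*_\Omega M$ in \emph{finite} affine time, either through $T^*_{\pa\Omega}M$ or through $T^*_\cC M$. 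In the rescaled parametrization, the first case gives $|s_\pm|<\infty$, and the second is precisely membership in the flow-outs $W_{M,t_\bullet,\rm out/in}^+$, whose closures at fiber infinity over $\pa M_\circ$ are by construction $W_{t_\bullet,\rm out/in}^+$ and limit onto the radial sets. This argument requires no smallness and cleanly excludes interior trapping; your observation that ``$\pa\cR_{\rm in}$ is the only forward-attracting invariant set available'' is the right intuition but needs this (or an $\omega$-limit set argument using strict monotonicity of $t$ away from $r=0$ plus monotonicity of $\hat\xi_\eop$ at $r=0$) to be rigorous.
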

\begin{proof}
  When $\Omega\cap\cC=\emptyset$, only possibility~\eqref{ItGlDynStdFlowMXY} can occur. Consider thus the case that $\Omega\cap\cC\neq\emptyset$. When $\gamma$ lies over $\pa M_\circ$, only possibilities~\eqref{ItGlDynStdFlowMRad} and \eqref{ItGlDynStdFlowMInOut} occur, as already discussed.

  Consider next the case that $\gamma$ does not lie in $\Te^*_{\pa M_\circ}M_\circ$; then it is disjoint from it. If $\gamma(0)\in\pa W_{t_0,\rm in}^+$ with $(t_0,0)\in\Omega$, then $\gamma(s)\to\pa\cR_{\rm in}^+$ as $s\nearrow\infty$; otherwise, $\gamma(s)$ hits the future boundary $Y$ of $\Omega$ at a finite value $s=s_+<\infty$. Depending on whether $\gamma(0)\in\pa W_{t_0,\rm out}^+$ with $(t_0,0)\in\Omega$ or not, we then similarly have $\gamma(s)\to\pa\cR_{\rm out}^+$ or $\gamma(s_-)\in\Se^*_X M_\circ$. This corresponds to possibilities~\eqref{ItGlDynStdFlowMOutY}--\eqref{ItGlDynStdFlowMXY}.
\end{proof}

See Figure~\ref{FigGlDynStdFlowM}.

\begin{figure}[!ht]
\centering
\includegraphics{FigGlDynStdFlowM}
\caption{Illustration of Lemma~\ref{LemmaGlDynStdFlowM}. \textit{On the left:} a standard domain of the type in Definition~\ref{DefGlDynStdM}\eqref{ItGlDynStdMC} (not to scale in the $x$ direction). In the edge cosphere bundle $\Se^*M_\circ$ over the green point, the incoming (red) and outgoing (blue) null-geodesics, lifted to $\Se^*M_\circ$, limit to $\pa\cR_{\rm in}$ and $\pa\cR_{\rm out}$, respectively. \textit{On the right:} a standard domain of the type in Definition~\ref{DefGlDynStdM}\eqref{ItGlDynStdMNonC}; the green lines are null-geodesics.}
\label{FigGlDynStdFlowM}
\end{figure}

The following covering result will be used to pass from local results on $\wt M$ to semiglobal ones; see~\S\ref{SsScSG}.

\begin{prop}[Covering with standard domains]
\label{PropGlDynCover}
  Let $M=\R_\cT\times X_0$, $g=-\beta\,\dd\cT^2+h(\cT)$, be a metric splitting of $(M,g)$, so $0<\beta\in\CI(M)$, $h$ is a smooth (in $\cT$) family of Riemannian metrics on $X_0$, and every level set $X_\cT$ of $\cT$ is a Cauchy hypersurface of $(M,g)$. Let $K\subset J^+(X_1)$ (with $J^+$ denoting the causal future) be compact. Let $\{\fp\}=X_0\cap\cC$, and let $X'_0$ be a Cauchy hypersurface of $(M,g)$ contained in $\{-1<\cT<1\}$ which, outside of any fixed neighborhood of $\fp$ is equal to $X_0$ but which near $\fp$ is equal to a level set of the (Fermi normal coordinate) $t$.\footnote{The existence of such $X'_{-1}$ was shown in \cite[Lemma~3.34]{HintzGlueLocI}.} Let $\eta>0$. Then there exist $J\in\N_0$ and a collection $\Omega_j$, $0\leq j\leq J$, of standard domains in $M$ with the following properties:
  \begin{enumerate}
  \item\label{ItGlDynCoverCov} $K\subseteq\bigcup_{j=0}^J\Omega_j$;
  \item\label{ItGlDynCoverIni} for all $j$, the initial Cauchy surface of $\Omega_j$ is contained in $X'_0$ or in $\bigcup_{k\leq j-1}\Omega_k^\circ$;
  \item\label{ItGlDynCoverSmall} all $\Omega_k$ which intersect $\cC$ are of the form $\Omega_{t_0,t_1,r_0}$ for $k$-dependent values $t_0,t_1,r_0$ satisfying $|t_1-t_0|,r_0<\eta$.
  \end{enumerate}
\end{prop}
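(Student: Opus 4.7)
The strategy is to construct the $\Omega_j$ inductively along a discretization of the time function $\cT$, separating near-$\cC$ Fermi-coordinate domains of type~\eqref{ItGlDynStdMC} from lens-shaped away-from-$\cC$ domains of type~\eqref{ItGlDynStdMNonC} within each time slab, with a small downward overlap between consecutive steps so that initial surfaces land in open sets rather than on boundaries.

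First, I reduce to a finite problem. Since $K$ is compact and $K\cup X'_0$ lies in a bounded $\cT$-range, there exists $T_{\max}>1$ with $K\subset\{-1<\cT<T_{\max}\}$; let $t_*$ denote the Fermi time of $\fp$. By Lemma~\ref{LemmaGlDynStdEx} applied at each $t\in[t_*-1,T_{\max}+1]$ along $\cC$, together with a compactness argument, I fix $\eta_0\in(0,\eta)$ so small that every Fermi domain $\Omega_{t_0,t_1,r_0}$ with $t_0,t_1\in[t_*-\eta_0,T_{\max}+\eta_0]$, $|t_1-t_0|\leq 4\eta_0$, and $r_0\leq 4\eta_0$ is a standard domain, and so that $\{|x|\leq 4\eta_0\}\subset U_{\rm Fermi}$. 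Pick $N\in\N$ and $\tau_0<\tau_1<\cdots<\tau_N$ with $\tau_0<-1$, $\tau_N>T_{\max}$, and steps $\tau_{i+1}-\tau_i$ small compared to $\eta_0$.

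Second, I proceed by induction on $i=0,\ldots,N-1$, maintaining as invariant that after step $i$ the ordered collection constructed so far satisfies \eqref{ItGlDynCoverIni} and \eqref{ItGlDynCoverSmall}, and that $X'_0\cup\bigcup_k\Omega_k^\circ$ is an open neighborhood in $M$ of $(K\cup X'_0)\cap\{\cT\leq\tau_i\}$. To extend to step $i+1$: in the Fermi chart, append one Fermi-type domain $\Omega^\cC_i=\Omega_{\tau_i-\delta,\tau_{i+1}+\delta,r_0^i}$ (for a small $\delta>0$ and an appropriately chosen $r_0^i\leq\eta_0$), whose bottom disk at $t=\tau_i-\delta$ sits in the previously covered open set---this containment holds by openness of that set and a small perturbation of the parameters. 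Away from $\cC$, cover the compact set $K\cap\{\tau_i\leq\cT\leq\tau_{i+1}\}\cap\{|x|\geq\eta_0/2\}$ by finitely many type~\eqref{ItGlDynStdMNonC} domains: around each point one builds a small lens region bounded by two spacelike hypersurfaces (for instance level sets of two locally chosen time functions adapted to the metric splitting $g=-\beta\,\dd\cT^2+h(\cT)$) with bottom strictly below $\tau_i$, and extracts a finite subcover by compactness, arranging each bottom surface to lie inside the previously covered open set.

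The main obstacle is verifying condition~\eqref{ItGlDynCoverIni} precisely: each new domain's initial Cauchy surface must lie in the \emph{interior} of $X'_0\cup\bigcup_{k<j}\Omega_k^\circ$, not merely on its boundary. I arrange this by the downward overlap of size $\delta$: for $i\geq 1$ each new domain's initial time is $\tau_i-\delta$, while the previous step's domains extend strictly above $\tau_i-\delta$ by construction, so the new initial surface sits in the open interior; for the base case $i=0$, the first collection of domains is placed with initial surfaces directly on $X'_0$ (near $\cC$ this is the level set $\{t=t_*\}$, and away from $\cC$ it is $X_0$). A secondary bookkeeping issue is that the outer annulus of the bottom disk of $\Omega^\cC_i$---outside the range of the shrinking $\Omega^\cC_{i-1}$---must be covered by previously constructed away-from-$\cC$ lens domains; this is ensured by coordinating the choice of $r_0^i$ with the locations and sizes of the lenses from step $i-1$, relying again on openness of the previously covered region to allow small positive margins. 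After $N$ steps, the finite collection of all $\Omega_k$ covers $K$, yielding \eqref{ItGlDynCoverCov}, and satisfies \eqref{ItGlDynCoverIni} and \eqref{ItGlDynCoverSmall} by construction.
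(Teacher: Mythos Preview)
Your approach is essentially the same as the paper's: a time-step induction alternating small Fermi-type domains $\Omega_{t_0,t_1,r_0}$ along $\cC$ with away-from-$\cC$ standard domains in each slab, arranged with downward temporal overlap so that initial surfaces land in interiors. The only implementational difference is that the paper uses a single (perturbed) truncated domain of dependence $D^+(K'_l)\cap\{\cT\le t_0^{(l+1)}\}$ per step for the away-from-$\cC$ piece rather than finitely many local lenses; this streamlines your ``outer annulus'' bookkeeping, since one simply takes the compact base $K'_l\subset\cT^{-1}(t_0^{(l)})\setminus\{|x|\le r_0/4\}$ large enough to catch the relevant annulus (rather than coordinating $r_0^i$ with lens placements).
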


The existence of a metric splitting was proved by Bernal--S\'anchez \cite{BernalSanchezTimeFn}. See Figure~\ref{FigGlDynCover} for an illustration of Proposition~\ref{PropGlDynCover}.

\begin{figure}[!ht]
\centering
\includegraphics{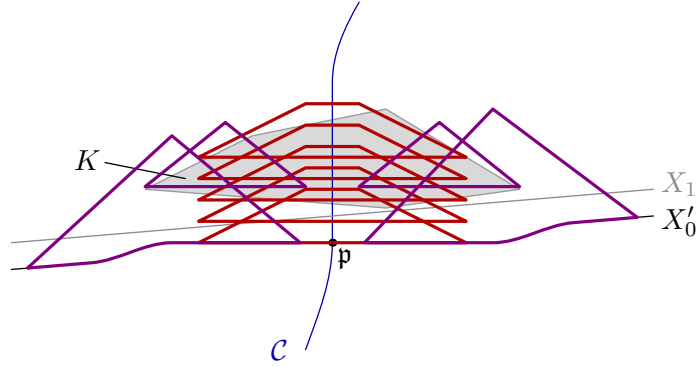}
\caption{Illustration of Proposition~\ref{PropGlDynCover}. The standard domains $\Omega_j$ intersecting $\cC$ are drawn in red, those disjoint from $\cC$ in purple. (The purple domains constructed in the proof of the Proposition are truncated at the top.)}
\label{FigGlDynCover}
\end{figure}

\begin{proof}[Proof of Proposition~\usref{PropGlDynCover}]
  By reparameterizing $\cT$, we may assume that $\cT=t$ on $\cC$; then $M=I_\cC\times X_0$. Let $V_0\subset V_1\subset X'_0$ be precompact open subsets with $X'_0\cap J^-(K)\subset V_0$ and $\ol{V_0}\subset V_1$. Denote by $D_j=D^+(V_j)\subset J^+(X'_0)$ the future domain of dependence of $V_j$ for $j=0,1$; so in particular $D_j\cap X_\cT$ is precompact for all $\cT$, and for $\cT^+:=\sup_K\cT<\infty$ we have $K\subset D_0([0,\cT^+]):=D_0\cap\cT^{-1}([0,\cT^+])$.

  For all sufficiently small $r_0>0$ and $\delta>\tau>0$, the set $[0,\cT^++\tau+\delta]_t\times\{|x|\leq r_0+2\delta\}$ is contained in $U_{\rm Fermi}$, and furthermore $X_0'$ and $X_0$ agree on $\{|x|\leq r_0+2\delta\}$. Set $t_0^{(q)}:=q\tau$ for $q\in\N_0$, and let $Q\in\N$ be the smallest value with $t_0^{(Q)}\geq\cT^+$; then $t_0^{(Q)}+\delta<\sup I_\cC$. Moreover, the union of the sets
  \[
    \Omega^{(q)}:=\Omega_{t_0^{(q)},t_0^{(q)}+\delta,r_0},\qquad q=0,\ldots,Q,
  \]
  covers $\cC\cap\cT^{-1}([0,\cT^+])$. Set $X^{(q)}:=X_{t_0^{(q)},t_0^{(q)}+\delta,r_0}$ (see Definition~\ref{DefGlDynStdM}\eqref{ItGlDynStdIni}). We claim that if we reduce the size of $r_0,\delta,\tau$ even further (if necessary), we can moreover arrange that
  \begin{equation}
  \label{EqGlDynCoverX}
    J^-(X^{(q+1)}\setminus(\Omega^{(q)})^\circ)\cap\{|x|<r_0/2,\ \cT=t_0^{(q)}\}=\emptyset,\qquad q=0,\ldots,Q.
  \end{equation}
  Indeed, since for any $t_0\in I_\cC$ the set $X_{t_0+\tau,t_0+\tau+\delta,r_0}\setminus(\Omega_{t_0,t_0+\delta,r_0})^\circ$ converges, as $\tau\searrow 0$, to the coordinate sphere $\{t_0\}\times\{|x|=r_0+2\delta\}\subset\cT^{-1}(t_0)$, which is disjoint from $\cC$, it suffices to choose $\tau>0$ sufficiently small.

  We set $\Omega'_0=\Omega^{(0)}$. In view of~\eqref{EqGlDynCoverX}, the union of $\Omega'_0$ and the future domain of dependence of $X'_0\setminus\{|x|\leq r_0/4\}$ contains $D_0([0,t_0^{(1)}])$. Taking $\Omega'_1=D^+(K'_0)\cap\{\cT\leq t_0^{(1)}\}$ for some sufficiently large compact $K'_0\subset D_1\cap\cT^{-1}(0)\setminus\{|x|\leq r_0/4\}$, we thus have $D_0([0,t_0^{(1)}])\subset\Omega'_0\cup\Omega'_1$. We continue with $\Omega'_2:=\Omega^{(1)}$ and $\Omega'_3=D^+(K_1)\cap\{\cT\leq t_0^{(2)}\}$ for a sufficiently large compact $K_1\subset D_1\cap\cT^{-1}(t_0^{(1)})\setminus\{|x|\leq r_0/4\}$ so that $D_0([0,t_0^{(2)}])\subset\bigcup_{j=0}^3\Omega'_j$; and so on.

  To finish the construction, we need to modify the thus constructed sets $\Omega'_0,\ldots,\Omega'_J$ (with $J=2 Q+1$) slightly for odd indices $j=2 l+1$: the lateral boundary hypersurfaces of $\Omega'_j$, by which we mean those not contained in $\cT^{-1}(t_0^{(l)})$ and $\cT^{-1}(t_0^{(l+1)})$, are lightlike (and smooth if one chooses $X_j$ appropriately and $\tau>0$ so small that no focal points develop along future null-geodesics emanating from $\pa X_j$ inside $\cT^{-1}([t_0^{(l)},t_0^{(l+1)}])$). We thus perturb $\Omega'_j$ to an appropriate slightly larger domain whose lateral boundary hypersurfaces are spacelike, thus obtaining $\Omega_j$ for odd $j$. For even $j$, we set $\Omega_j=\Omega'_j$. This finishes the construction.
\end{proof}

\subsubsection{Near field regime: Kerr dynamics over \texorpdfstring{$\hat M_{t_0}$}{a fiber of the front face}}
\label{SssGlDynKerr}

We now turn to the null-bi\-char\-ac\-ter\-is\-tic flow of $\hat\rho^2 H_{\wt G}=\rho_\circ^{-2}\eps^2 H_{\wt G}=\rho_\circ^{-2}H_{\eps^2\wt G}$ over $\hat M$. Since this vector field is an se-vector field, i.e.\ tangent to the fibers of $\hat M$, it suffices to study it over $\hat M_{t_0}$ for any fixed $t_0\in I_\cC$. By Remark~\ref{RmkGlSe3be}\eqref{ItGlSe3be3b}, $\eps^2\wt G|_{\hat M_{t_0}}\in\rho_\circ^2\CI(\hat M_{t_0};\Tse^*_{\hat M_{t_0}}\wt M)$ can be identified with the dual metric function $\hat G_b\in\hat r^{-2}P^{[2]}(\Ttb^*_{\hat X_b}\hat M_b)=P^{[2]}(\Ttsc^*_{\hat X_b}\hat M_b)$ of the Kerr metric $\hat g_b$ via the identification~\eqref{EqFseBundle3b}; here we write
\[
  b:=(\bhm,\bha).
\]
We conclude that the se-vector field $H_{\eps^2\wt G}$ has as its \emph{normal operator} at $\Tse^*_{\hat M_{t_0}}\wt M$ (given by restriction to the front face of the lift of $\Tse^*\wt M$ to $[\wt M;\hat M_{t_0}]$, which is equal to $\Ttb^*\cM$) the Hamiltonian vector field $H_{\hat G_b}\in\hat r^{-2}\Vtb(\Ttb^*\hat M_b)$. The \emph{restriction} of $H_{\eps^2\wt G}$ to $\Tse^*_{\hat M_{t_0}}\wt M$ is therefore equal to the zero energy operator of $H_{\hat G_b}$, which we denote
\[
  H^0_{\hat G_b} \in \hat r^{-2}\Vb(\Ttb^*_{\hat X_b}\hat M_b);
\]
this is formally obtained by removing the $\pa_{\hat t}$-derivatives in the expression for $H_{\hat G_b}$.

In summary, the null-bicharacteristic dynamics of $\rho_\circ^{-2}H_{\eps^2\wt G}$ in $\Tse^*_{\hat M_{t_0}}\wt M$ are those of
\begin{equation}
\label{EqGlDynKerrH0}
  \hat r^2 H_{\hat G_b}^0\in\Vb(\Ttb^*_{\hat X_b}\hat M_b),
\end{equation}
i.e.\ those of the subextremal Kerr metric $\hat g_b$, quotiented out by $\hat t$-translations. (The $\pa_{\hat t}$-component of $H_{\hat G_b}$ does play a role later on when we work on domains with initial and final Cauchy hypersurfaces; see~\S\ref{SssGlDynStd} below.)

\begin{rmk}[Local coordinate description]
\label{RmkGlDynKerrLoc}
  Local coordinates on $\Tse^*\wt M$ near the interior of $\hat M$ are $t\in\R$, $\hat x\in\R^3$ and $\sigma_\seop\in\R$, $\xi_\seop\in\R^3$, where we write se-covectors as
  \begin{equation}
  \label{EqGlDynKerrLocCoordse}
    {-}\sigma_\seop\,\frac{\dd t}{\eps} + \xi_\seop\,\frac{\dd x}{\eps},
  \end{equation}
  corresponding to $\eps\pa_t=\pa_{\hat t}$, $\eps\pa_x=\pa_{\hat x}$ being a local frame of $\Tse^*\wt M$ there. But then the Hamiltonian vector field of $p\in\CI(\Tse^*\wt M)$ is
  \begin{align}
    H_p &= -(\pa_{\sigma_\seop}p)\eps\pa_t + (\pa_{\xi_\seop}p)\eps\pa_x + \eps(\pa_t p)\pa_{\sigma_\seop} - \eps(\pa_x p)\pa_{\xi_\seop} \nonumber\\
  \label{EqGlDynKerrLocRestr}
      &= \bigl((\pa_{\xi_\seop}p)\pa_{\hat x} - (\pa_{\hat x}p)\pa_{\xi_\seop}\bigr) + \eps\bigl( -(\pa_{\sigma_\seop}p)\pa_t + (\pa_t p)\pa_{\sigma_\seop}\bigr) \\
  \label{EqGlDynKerrLocNormOp}
      &= -(\pa_{\sigma_\seop}p)\pa_{\hat t} + (\pa_{\xi_\seop}p)\pa_{\hat x} + (\pa_{\hat t}p)\pa_{\sigma_\seop} - (\pa_{\hat x}p)\pa_{\xi_\seop}.
  \end{align}
  The expression~\eqref{EqGlDynKerrLocNormOp} is the Hamiltonian vector field of $p=p(\hat t,\hat x,\sigma_\seop,\xi_\seop)$ since $\sigma_\seop,\xi_\seop$ are the canonical fiber-linear coordinates on $T^*\R^{1+3}_{\hat t,\hat x}$; indeed,~\eqref{EqGlDynKerrLocCoordse} is equal to $-\sigma_\seop\,\dd\hat t+\xi_\seop\,\dd\hat x$. (Since $p$ is smooth in $t$, we have $\pa_{\hat t}p\in\eps\CI$, so the $\pa_{\sigma_\seop}$-component vanishes at $\hat M$.) The expression~\eqref{EqGlDynKerrLocRestr}, which involves the smooth coordinates $t,\hat x,\sigma_\seop,\xi_\seop$ near $\Tse^*_{\hat M^\circ}\wt M$, is the most relevant one for studying the null-bicharacteristic dynamics of $H_p$ on $\Tse^*\wt M$; restricting it as a smooth vector field to $\hat M$ amounts to dropping the $\eps(\cdots)$ term, leaving one with the zero energy operator of the Hamiltonian vector field of the $\hat t$-independent function $p=p(t_0,\hat x,\sigma_\seop,\xi_\seop)$, with smooth parametric dependence on $t_0\in I_\cC$ (which in our setting is trivial since $p=\eps^2\wt G$ is independent of $t_0$ over $\hat M$, namely given by the dual metric function of the Kerr metric $\hat g_b$).
\end{rmk}

The global structure of the $\hat r^2 H_{\hat G_b}^0$-flow is described in a manner that is convenient for present purposes by Dyatlov \cite[\S{3.2}]{DyatlovWaveAsymptotics} (building on the crucial observation by Wunsch--Zworski \cite{WunschZworskiNormHypResolvent} regarding the normally hyperbolic nature of the trapped set), near the horizon in the closely related Kerr--de~Sitter setting by Vasy \cite[\S{6}]{VasyMicroKerrdS}, and near spatial infinity in \cite{DyatlovWaveAsymptotics} and (in a different asymptotically flat setting) in Vasy--Zworski \cite{VasyZworskiScl}. We recall here its key features.

\medskip

(i) \textit{Flow near spatial infinity.} The description of the flow near $\pa\hat X_b$ is the only part where the (3)b-structures in~\eqref{EqGlDynKerrH0} are relevant. First, note that by the discussion leading up to~\eqref{EqGlDynKerrH0}, the restriction $\hat r^2 H_{\hat G_b}^0|_{\Ttb^*_{\pa\hat X_b}\hat M_b}$ is equal to $\hat\rho^2 H_{\wt G}|_{\Tse^*_{\pa\hat M_{t_0}}\wt M}$ (under the usual identification of $\hat X_b$ with $\hat M_{t_0}\cap\{|\hat x|\geq\bhm\}$); but the source ($\cR_{\rm in}$) to sink ($\cR_{\rm out}$) nature of the flow of the latter vector field in $\wt\Sigma^+$ was already described in the previous section after~\eqref{EqGlDynInOutHam}.

Furthermore, recall that the fiber-linear coordinates~\eqref{EqFseEdgeCoord} are also smooth fiber-linear coordinates in an open neighborhood of $\Tse^*_{\pa M_\circ}\wt M$; as local coordinates on the base we use $t,r,\omega$ and $\rho_\circ=\frac{\eps}{r}=\hat r^{-1}$, with $r=0$ on $\hat M$. Thus, the expressions~\eqref{EqGlDynHamEdge}--\eqref{EqGlDynInOutHam} remain valid in $\Tse^*_{\pa M_\circ}\wt M$, \emph{mutatis mutandis}; to wit, in the projective fiber coordinates~\eqref{EqGlDynInOutCoord2}, which we now denote by $\sigma_\seop=\sigma_\eop$ etc., we have
\begin{equation}
\label{EqGlDynInOutMhat}
  \rho_\infty H_{\hat r^2\hat G_b}^0 \equiv 2\hat\xi_\eop(-\rho_\circ\pa_{\rho_\circ}-\rho_\infty\pa_{\rho_\infty}-\hat\eta_\eop\pa_{\hat\eta_\eop}) + 2(1-\hat\xi_\eop^2)\pa_{\hat\xi_\eop} + H_{|\hat\eta_\eop|^2}\bmod \rho_\circ\Vb(\Ttb^*_{\hat X_b}\hat M_b).
\end{equation}
This arises from the fact that $r\pa_r$ in $\eps,t,r,\omega$ coordinates reads $-\rho_\circ\pa_{\rho_\circ}+r\pa_r$ in $t,r,\omega,\rho_\circ$ coordinates, and the restriction of this to $r=0$ is $-\rho_\circ\pa_{\rho_\circ}$. This shows that
\begin{equation}
\label{EqGlDynKerrOut}
  \pa\hat\cR_{\rm out}^+ := \pa\cR_{\rm out}^+ \cap \Stb^*_{\hat X_b}\hat M_b
\end{equation}
(where $\hat\xi_\eop=1$) is a sink and
\begin{equation}
\label{EqGlDynKerrIn}
  \pa\hat\cR_{\rm in}^+ := \pa\cR_{\rm in}^+ \cap \Stb^*_{\hat X_b}\hat M_b
\end{equation}
(where $\hat\xi_\eop=-1$) is a source for the flow of~\eqref{EqGlDynInOutMhat} in the closure in $\ol{\Ttb^*_{\hat X_b}}\hat M_b$ of the characteristic set
\begin{equation}
\label{EqGlDynKerrChar}
  \hat\Sigma_b := (\Ttb^*_{\hat X_b}\hat M_b\setminus o) \cap \{ \hat r^2\hat G_b=0 \}.
\end{equation}
The characteristic set has two components $\hat\Sigma_b^\pm$, with $\pa\hat\cR_\bullet^+\subset\pa\hat\Sigma_b^+$ for $\bullet={\rm in},{\rm out}$. We define $\pa\hat\cR_\bullet^-=-\pa\hat\cR_\bullet^+$ (multiplication by $-1$ in the fibers).

\medskip

(ii) \textit{Trapping.} For the study of the flow away from $\pa\hat X_b$, we can dispense of all rescalings and rescaled bundles and simply study $H_{\hat G_b}^0$ as a vector field on $T^*_{\hat X_b^\circ}\hat M_b^\circ$.

\begin{definition}[Trapping]
\label{DefGlDynKerrTrap}
  In Boyer--Lindquist coordinates (see Definition~\usref{DefGlKerrBL}), introduce the phase space variables $\sigma,\xi,\eta_\theta,\eta_\phi$ by writing covectors as $-\sigma\,\dd\hat t_{\rm BL}+\xi\,\dd\hat r+\eta_\theta\,\dd\theta+\eta_\phi\,\dd\phi$. The \emph{trapped set} is then the subset
  \[
    \hat\Gamma_b^\pm := \{ (\hat r,\theta,\phi;\sigma,\xi,\eta_\theta,\eta_\phi) \in \hat\Sigma_b^\pm \colon \hat r>\hat r_b,\ \xi=H_{\hat G_b}\xi=0 \}
  \]
  of the component $\hat\Sigma_b^\pm$ of the characteristic set $\hat\Sigma_b$ (see~\eqref{EqGlDynKerrChar}). The \emph{stable (forward)/unstable (backward) trapped sets} are denoted
  \[
    \hat\Gamma_b^{{\rm s/u},\pm}\subset\hat\Sigma_b^\pm\cap T^*_{\hat X_{b,\rm BL}^\circ}\hat M_b^\circ,
  \]
  where $\hat X_{b,\rm BL}^\circ:=\hat X_b\cap\hat r^{-1}((\hat r_b,\infty))$.
\end{definition}

Thus, $\hat\Gamma_b^\pm$ is a conic smooth codimension $2$ submanifold of $\hat\Sigma_b^\pm$ by \cite[Proposition~3.3]{DyatlovWaveAsymptotics} which is a symplectic submanifold of $T^*\hat M_b^\circ$ (see \cite[Assumption~(7)]{DyatlovWaveAsymptotics} and its verification after \cite[Proposition~3.5]{DyatlovWaveAsymptotics}) and whose quotient by fiber dilations, or equivalently whose boundary $\pa\hat\Gamma_b^\pm\subset S^*\hat M_b^\circ$ at fiber infinity, is compact. Moreover, by \cite[Assumption~(5) as verified around equation~(3.18)]{DyatlovWaveAsymptotics}, we have\footnote{The reference uses the symbol $\tau$ for what we call $-\sigma$ here.}
\begin{equation}
\label{EqGlDynKerrTrapSigma}
  {\pm}\sigma > 0 \quad\text{on}\quad\hat\Gamma_b^\pm.
\end{equation}
Furthermore, \cite[Proposition~3.5]{DyatlovWaveAsymptotics} shows that $\hat\Gamma_b^{{\rm s},\pm}$ and $\hat\Gamma_b^{{\rm u},\pm}$ are smooth conic codimension $1$ submanifolds of $\hat\Sigma_b^\pm$ which only intersect at $\hat\Gamma_b^\pm=\hat\Gamma^{{\rm s},\pm}_b\cap\hat\Gamma^{{\rm u},\pm}_b$. They admit smooth defining functions
\begin{equation}
\label{EqGlDynKerrTrapDef}
  \varphi^{\rm s/u}_b \in \CI(S^*_{\hat X_{b,\rm BL}^\circ}\hat M_b^\circ)
\end{equation}
which we identify with their homogeneous degree $0$ extensions to $T^*_{\hat X_{b,\rm BL}^\circ}\hat M_b^\circ\setminus o$, and which satisfy
\begin{equation}
\label{EqGlDynKerrInvDefFn}
  {\pm}\sigma^{-1}H_{\hat G_b}\varphi^{\rm s}_b = w^{\rm s}_b\varphi^{\rm s}_b, \qquad
  \pm\sigma^{-1}H_{\hat G_b}\varphi^{\rm u}_b = -w^{\rm u}_b\varphi^{\rm u}_b\quad\text{on}\ \hat\Sigma_b^\pm, \qquad
  \sigma\cdot\{ \varphi^{\rm s},\varphi^{\rm u} \} \neq 0 \quad \text{at}\ \hat\Gamma_b;
\end{equation}
the functions $w^{\rm s/u}\in\CI(S^*_{\hat X_{b,\rm BL}^\circ}\hat M_b^\circ)$ are positive at $\pa\hat\Gamma_b$. Moreover, the $\sigma^{-1}H_{\hat G_b}$-flow in the cosphere bundle is $r$-normally hyperbolic (for every $r$) by \cite[Propositions~3.6 and 3.7]{DyatlovWaveAsymptotics}. (We recall this notion in~\S\ref{STrap}.)

\medskip

(iii) \textit{Event horizon and red-shift.} The phase space dynamics near the event horizon $\hat r=\hat r_b$ were described with an eye towards microlocal estimates by Vasy \cite{VasyMicroKerrdS}; see also \cite{HaberVasyPropagation,GannotHorizons}.

\begin{definition}[Generalized radial sets over the event horizon]
\label{DefGlDynKerrHor}
  We work in the coordinates $\hat t,\hat r,\theta,\phi_*$ on $\hat M_b^\circ$ from Lemma~\ref{LemmaGlCoord}, and write covectors as $-\sigma\,\dd\hat t+\xi\,\dd\hat r+\eta_\theta\,\dd\theta+\eta_{\phi_*}\,\dd\phi_*$. Then we define
  \begin{equation}
  \label{EqGlDynKerrHor}
  \begin{split}
    &\hat\cR_{\cH^+} := \{\hat t=0\}\cap N^*\{\hat r=\hat r_b\}\setminus o \subset T^*_{\hat X_b^\circ}\hat M_b^\circ\setminus o,  \qquad \hat\cR_{\cH^+}^\pm = \hat\cR_{\cH^+} \cap \{ \pm\xi > 0 \}.
  \end{split}
  \end{equation}
\end{definition}

Note here that $\hat\cR_{\cH^+}^\pm\subset\hat\Sigma_b^\pm$; indeed, the vector field dual to $\dd\hat r$ is $\hat g_b^{-1}(\dd\hat r,-)=\frac{\mu}{\varrho^2}\pa_{\hat r}$ in Boyer--Lindquist coordinates in $\hat r>\hat r_b$, which in the coordinates $\hat t,\hat r,\theta,\phi_*$ reads $\varrho^{-2}(\mu\pa_{\hat r}-\mu T'\pa_{\hat t}-\mu\Phi'\pa_{\phi_*})$; using~\eqref{EqGlCoord}, this can be restricted to $\hat r=\hat r_b$ (where $\mu=0$) and gives $\frac{\hat r_b^2+a^2}{\varrho^2}(\pa_{\hat t}+\frac{a}{\hat r_b^2+a^2}\pa_{\phi_*})$, which is indeed \emph{future} null.

To describe the flow near $\R_{\hat t}\times\pa\hat\cR_{\cH^+}\subset S^*\hat M_b^\circ$ quantitatively, we define $\hat t_0=\hat t_{\rm BL}-T_0$, $\phi_0=\phi-\Phi_0$ where $T_0'(\hat r)=-\frac{\hat r^2+a^2}{\mu(\hat r)}$ and $\Phi_0'(\hat r)=-\frac{a}{\mu(\hat r)}$, i.e.\ they satisfy~\eqref{EqGlCoord} with $\tilde T=\tilde\Phi=0$. Starting with the expression for the dual metric function in Boyer--Lindquist coordinates (using~\eqref{EqGlKerrBL})
\[
  \varrho^2\hat g_b^{-1} = -\frac{1}{\mu}\bigl((\hat r^2+a^2)\pa_{\hat t_{\rm BL}}+a\pa_\phi\bigr)^2 + \mu\pa_r^2 + \pa_\theta^2 + \frac{1}{\sin^2\theta}(a\sin^2\theta\,\pa_{\hat t_{\rm BL}}+\pa_\phi)^2,
\]
we change variables by plugging in $\pa_{\hat t_0}$, $\pa_{\hat r}+\frac{\hat r^2+a^2}{\mu}\pa_{\hat t_0}+\frac{a}{\mu}\pa_{\phi_0}$, $\pa_{\phi_0}$ for $\pa_{\hat t_{\rm BL}}$, $\pa_{\hat r}$, $\pa_\phi$, respectively. In coordinates defined by writing covectors as
\begin{equation}
\label{EqGlDynKerrHor0}
  {-}\sigma_0\,\dd\hat t_0+\xi_0\,\dd\hat r+\eta_\theta\,\dd\theta+\eta_{\phi_0}\,\dd\phi_0,
\end{equation}
we then find the dual metric function to be
\begin{equation}
\label{EqGlDynKerrHamHor}
\begin{split}
  \varrho^2\hat G_b &= \mu\xi_0^2 - 2\xi_0\bigl((\hat r^2+a^2)\sigma_0 - a\eta_{\phi_0}\bigr) + \tilde\sC, \\
  &\qquad \tilde\sC = \eta_\theta^2 + \frac{1}{\sin^2\theta}(\eta_{\phi_0}-a\sin^2\theta\,\sigma_0)^2.
\end{split}
\end{equation}
Therefore, we have
\begin{equation}
\label{EqGlDynKerrHam}
\begin{split}
  H_{\varrho^2\hat G_b} &= 2\bigl(\mu\xi_0 - (\hat r^2+a^2)\sigma_0 + a\eta_{\phi_0}\bigr)\pa_{\hat r} - (\mu'\xi_0 - 4\hat r\sigma_0)\xi_0\pa_{\xi_0} \\
    &\qquad + 2(\hat r^2+a^2)\xi_0\pa_{\hat t_0} + 2 a\xi_0\pa_{\phi_0} + H_{\tilde\sC}.
\end{split}
\end{equation}
Note that $H_{\tilde\sC}$ is a linear combination of $\pa_{\hat t_0}$, $\pa_\theta$, $\pa_{\phi_0}$, and $\pa_{\eta_\theta}$, with the coefficient $\pa_\theta\tilde\sC$ of $\pa_{\eta_\theta}$ vanishing quadratically at $\hat\cR_{\cH^+}$. Consider now projective coordinates near $\R_{\hat t_0}\times\pa\hat\cR_{\cH^+}^+$,
\begin{equation}
\label{EqGlDynKerrHamProj}
  \rho_\infty = \xi_0^{-1},\quad
  \hat\sigma_0 = \frac{\sigma_0}{\xi_0},\quad
  \hat\eta_\theta = \frac{\eta_\theta}{\xi_0},\quad
  \hat\eta_{\phi_0} = \frac{\eta_{\phi_0}}{\xi_0}.
\end{equation}
Then we have
\begin{equation}
\label{EqGlDynKerrHamLot}
\begin{split}
  \rho_\infty H_{\varrho^2\hat G_b} &\equiv 2\bigl(\mu'(\hat r_b)(\hat r-\hat r_b) - (\hat r^2+a^2)\hat\sigma_0 + a\hat\eta_{\phi_0}\bigr)\pa_{\hat r} \\
    &\qquad + \mu'(\hat r_b)\bigl(\rho_\infty\pa_{\rho_\infty} + \hat\sigma_0\pa_{\hat\sigma_0} + \hat\eta_\theta\pa_{\hat\eta_\theta} + \hat\eta_{\phi_0}\pa_{\hat\eta_{\phi_0}}\bigr) + \tilde H,
\end{split}
\end{equation}
where $\tilde H$ is the sum of vector fields whose coefficients vanish at least quadratically at $\R_{\hat t_0}\times\pa\hat\cR_{\cH^+}$ and linear combinations (with smooth coefficients) of the vector fields $\pa_{\hat t_0}$, $\pa_\theta$, $\pa_{\phi_0}$ which are tangent to $\R_{\hat t_0}\times\pa\hat\cR_{\cH^+}^+$. Therefore, for the local quadratic defining function
\begin{equation}
\label{EqGlDynKerrHorrho2}
  \rho_{\cH^+}^2 = \rho_\infty^2 + \hat\sigma_0^2 + \hat\eta_\theta^2 + \hat\eta_{\phi_0}^2 + \delta(\hat r-\hat r_b)^2
\end{equation}
of $\R\times\pa\hat\cR_{\cH^+}^+$, which is smooth on $\ol{T^*}\hat M_b^\circ$ near $\R\times\pa\hat\cR_{\cH^+}^+$, we have $\rho_\infty H_{\varrho^2\hat G_b}\rho_{\cH^+}^2\geq \mu'(\hat r_b)\rho_{\cH^+}^2$ in some $\hat t_0$-translation invariant neighborhood of $\R\times\pa\hat\cR_{\cH^+}^+$ in $S^*\hat M_b^\circ$ when $\delta>0$ is sufficiently small. This shows that $\pa\hat\cR_{\cH^+}^+$ is a source for the $\rho_\infty H_{\varrho^2\hat G_b}$-flow, and thus also for the $\rho_\infty H_{\hat G_b}$-flow within the characteristic set $\hat\Sigma_b$. The analysis near $\hat\cR_{\cH^+}^-$ is completely analogous, and one can indeed use the same quadratic defining function~\eqref{EqGlDynKerrHorrho2} for the same purpose.

We further record that $|\dd\hat r|^2_{\hat g_b^{-1}}=\frac{\mu}{\varrho^2}$, so $\dd\hat r$ is timelike in $\hat r<\hat r_b$, and indeed future timelike by continuity from $\hat r=\hat r_b$ where it is future causal. Therefore, $\hat r$ is strictly monotonically decreasing in $\hat r<\hat r_b$ along future null-geodesics.

\medskip

(iv) \textit{Qualitative global dynamics.} Finally, we recall how the various subsets of phase space discussed thus far are connected.

\begin{lemma}[Null-bicharacteristic flow mod $\pa_{\hat t}$ on Kerr]
\label{LemmaGlDynKerrFlow}
  Denote by $\rho_\infty\in\CI(\ol{\Ttb^*_{\hat X_b}}\hat M_b)$ a defining function of fiber infinity $\Stb^*_{\hat X_b}\hat M_b$. Let $I\ni s\mapsto\gamma(s)$ denote a maximally extended null-bicharacteristic of $\rho_\infty\hat r^2\hat G_b$, projected off $\hat t$, inside of $\pa\hat\Sigma_b^+\subset\Stb^*_{\hat X_b}\hat M_b$. Then:
  \begin{enumerate}
  \item in the forward direction, $\gamma$ either tends to the trapped set $\pa\hat\Gamma_b^+$ (see Definition~\usref{DefGlDynKerrTrap}) or the radial set $\pa\hat\cR_{\rm out}^+$ over spatial infinity (see~\eqref{EqGlDynKerrIn}), or $\gamma$ enters $\hat r<\hat r_b$ and escapes through $\hat r=\bhm$ in finite time;
  \item in the backward direction, $\gamma$ either tends to $\pa\hat\Gamma_b^+$, $\pa\hat\cR_{\rm in}^+$ (see~\eqref{EqGlDynKerrOut}), or the radial set over the event horizon $\pa\hat\cR_{\cH^+}^+$ (see Definition~\usref{DefGlDynKerrHor}) as $s\searrow-\infty$.
  \end{enumerate}
  Moreover, $\gamma$ cannot tend to $\pa\hat\Gamma_b^+$ in \emph{both} the forward and backward directions unless it is contained in $\pa\hat\Gamma_b^+$. The same statements hold true for the projections off $\hat t$ of maximally extended null-bicharacteristics of $-\rho_\infty\hat r^2\hat G_b$ in $\pa\hat\Sigma_b^-$ upon replacing the superscripts `$+$' by `$-$'.
\end{lemma}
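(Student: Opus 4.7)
The plan is to combine the local flow descriptions recorded in items (i)--(iii) preceding the statement with a global compactness argument that rests on the known fact \cite{DyatlovWaveAsymptotics} that the trapped set of Kerr in $\hat r>\hat r_b$ is precisely $\pa\hat\Gamma_b^+$ and that the flow is $r$-normally hyperbolic there. By the fibre-reflection symmetry $\zeta\mapsto-\zeta$, it suffices to treat the `$+$' case. First I would decompose $\pa\hat\Sigma_b^+$ into four regions: a neighborhood $U_\infty$ of $\pa\hat\cR^+_{\rm in}\cup\pa\hat\cR^+_{\rm out}$, a neighborhood $U_{\cH^+}$ of $\pa\hat\cR^+_{\cH^+}$, a neighborhood $U_\Gamma$ of $\pa\hat\Gamma^+_b$, and a relatively compact complement $K\subset\{\hat r>\hat r_b\}$ containing no complete flow orbits. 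The last property follows from the structure of the Kerr null-geodesic flow: away from the four special sets, an effective potential analysis (made explicit via Carter's constant) shows the radial momentum has a uniform sign, so every orbit in $K$ leaves $K$ through $\pa K$ in bounded flow-time.

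Given this setup, the forward analysis proceeds by cases. If $\gamma$ ever enters $\hat r<\hat r_b$, the fact that $\dd\hat r$ is future timelike there (noted after~\eqref{EqGlDynKerrHorrho2}) forces $\hat r\circ\gamma$ to decrease strictly, exiting $\hat X_b$ through $\hat r=\bhm$ in finite time. Otherwise $\gamma\subset\{\hat r\geq\hat r_b\}$ for all forward times. If $\gamma$ enters $U_\infty$ sufficiently deeply near $\pa\hat\cR^+_{\rm out}$, the sink structure established after~\eqref{EqGlDynInOutMhat} forces $\gamma(s)\to\pa\hat\cR^+_{\rm out}$; note that by the source property of $\pa\hat\cR^+_{\rm in}$ convergence to $\pa\hat\cR^+_{\rm in}$ in forward time is impossible. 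Otherwise $\gamma$ is confined, for $s\geq 0$, to a compact subset $K'\subset\{\hat r_b\leq\hat r\leq R\}\cap\pa\hat\Sigma^+_b$; then $\omega(\gamma)\subset K'$ is nonempty, compact, and flow-invariant. It cannot meet $K$ by the defining property of $K$; it cannot meet $U_{\cH^+}\setminus\pa\hat\cR^+_{\cH^+}$ because points there flow forward into $K$ by the source property from~\eqref{EqGlDynKerrHamLot}; and trivially $\omega(\gamma)\cap U_\infty=\emptyset$. Hence $\omega(\gamma)\subset\pa\hat\Gamma^+_b$. The backward analysis is identical upon reversing the flow, with $\pa\hat\cR^+_{\cH^+}$ now admitted as a valid $\alpha$-limit candidate and $\pa\hat\cR^+_{\rm in}$ replacing $\pa\hat\cR^+_{\rm out}$.

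For the final assertion, suppose $\gamma\not\subset\pa\hat\Gamma^+_b$; then at least one of $\varphi^{\rm u}_b\circ\gamma$ or $\varphi^{\rm s}_b\circ\gamma$ is nonzero at some $s_0$. Assume $\varphi^{\rm u}_b(\gamma(s_0))\neq 0$. By~\eqref{EqGlDynKerrInvDefFn}, $\tfrac{d}{ds}\log|\varphi^{\rm u}_b\circ\gamma|$ equals $w^{\rm u}_b\circ\gamma$, which is uniformly positive in a fixed neighborhood $V$ of $\pa\hat\Gamma^+_b$; thus $|\varphi^{\rm u}_b\circ\gamma(s)|$ grows exponentially as $s$ decreases from $s_0$, as long as $\gamma(s)$ remains in $V$. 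If $\gamma(s)\to\pa\hat\Gamma^+_b$ as $s\to-\infty$, then $\gamma(s)\in V$ for all sufficiently negative $s$ and $\varphi^{\rm u}_b\circ\gamma(s)\to 0$, a contradiction. A symmetric argument with $\varphi^{\rm s}_b$ and forward time rules out simultaneous forward accumulation at $\pa\hat\Gamma^+_b$. I expect the main obstacle to be the verification of the property of $K$ asserted in the first paragraph, namely that no orbit is confined to $K$ in either time direction; this is the substantive global dynamical input and is where the integrability of the Kerr geodesic flow (and the resulting complete description of the trapped set) has to be invoked.
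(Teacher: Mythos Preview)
Your overall strategy is reasonable, but the $\omega$-limit set argument in the second paragraph has a gap. You claim $\omega(\gamma)\cap K=\emptyset$ ``by the defining property of $K$'', but even granting that $K$ contains no complete orbit, an orbit in $\omega(\gamma)$ may pass through $K$ on its way between $U_\Gamma$ and $U_{\cH^+}$; the region-by-region elimination does not close. A clean fix: since $\omega(\gamma)$ is compact, connected, and invariant, every orbit in it is bounded in both time directions; transversality of the flow to $\{\hat r=\hat r_b\}$ off $\pa\hat\cR^+_{\cH^+}$ then forces such orbits into $\{\hat r>\hat r_b\}\cup\pa\hat\cR^+_{\cH^+}$, and Dyatlov's identification of the trapped set in $\{\hat r>\hat r_b\}$ gives $\omega(\gamma)\subset\pa\hat\Gamma^+_b\cup\pa\hat\cR^+_{\cH^+}$. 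Connectedness together with the source property of $\pa\hat\cR^+_{\cH^+}$ (the Lyapunov function $\rho_{\cH^+}^2$ rules out forward accumulation there) finishes. Also, your claim that ``the radial momentum has a uniform sign'' on $K$ is not correct as stated; what holds is convexity/concavity of $\hat r$ near the endpoints.

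The paper bypasses $\omega$-limit sets entirely via this convexity input (\cite[Proposition~3.1]{DyatlovWaveAsymptotics}): for small $\delta_0>0$, once $\hat r$ crosses $\delta_0^{-1}$ outward it tends to $\infty$ (hence to $\pa\hat\cR^+_{\rm out}$); once it crosses $\hat r_b+\delta_0$ inward it keeps decreasing and, since $\pa\hat\cR^+_{\cH^+}$ is a source, crosses $\hat r=\hat r_b$ in finite time. If neither happens, $\gamma$ is forward confined to $[\hat r_b+\delta_0,\delta_0^{-1}]$, hence forward trapped, hence lies in $\pa\hat\Gamma^{{\rm s},+}_b$ and converges to $\pa\hat\Gamma^+_b$. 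This uses the same global dynamical input you flag, but organized so that no limit-set bookkeeping is needed. For the final assertion you have a sign slip: from~\eqref{EqGlDynKerrInvDefFn} one has $\tfrac{d}{ds}\log|\varphi^{\rm u}_b\circ\gamma|=-w^{\rm u}_b<0$, not $+w^{\rm u}_b$; this is precisely what makes $|\varphi^{\rm u}_b|$ grow as $s$ \emph{decreases}. With that correction your argument is fine and is equivalent to the paper's one-liner $\hat\Gamma^{{\rm s},+}_b\cap\hat\Gamma^{{\rm u},+}_b=\hat\Gamma^+_b$.
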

\begin{proof}
  This is essentially proved in \cite[\S3.4]{DyatlovWaveAsymptotics}; we sketch the argument here. We work only in $\pa\hat\Sigma_b^+$. The concavity, resp.\ convexity of $\hat r$ along the flow for $0<\hat r-\hat r_b\leq\delta_0$, resp.\ $\hat r\geq\delta_0^{-1}$ proved for sufficiently small $\delta_0>0$ in \cite[Proposition~3.1]{DyatlovWaveAsymptotics} implies that forward null-bicharacteristics crossing $\hat r=\delta_0^{-1}$ in the outward direction must tend to $\hat r=\infty$ and thus to $\pa\hat\cR_{\rm out}^+$ by the sink nature of this radial set. For those which cross $\hat r=\hat r_b+\delta_0$ in the direction of decreasing $\hat r$, the value of $\hat r$ must continue decreasing while it is larger than $\hat r_b$; but since $\pa\hat\cR_{\cH^+}^+$ is a \emph{source} (in the direction normal to $\pa\hat\cR_{\cH^+}^+$), they in fact must reach and cross $\hat r=\hat r_b$ in finite time, after which they cross $\hat r=\bhm$ in finite time.

  Similarly, backward null-bicharacteristics crossing $\hat r=\delta_0^{-1}$ in the outward direction must tend to $\pa\hat\cR_{\rm in}^+$ due to the source nature of this radial set. If they cross $\hat r=\hat r_b+\delta_0$ in the direction of decreasing $\hat r$, they must tend to $\pa\hat\cR_{\cH^+}^+$ (which is a source for the $\rho_\infty H_{\hat G_b}$ flow), as they cannot cross the event horizon $\hat r=\hat r_b$.

  Forward, resp.\ backward null-bicharacteristics which in the forward direction stay in a compact subset of $\{\hat r_b<\hat r<\infty\}$ are forward, resp.\ backward trapped, i.e.\ for all large $s$ they lie in $\pa\hat\Gamma_b^{{\rm s},+}$, resp.\ $\pa\hat\Gamma_b^{{\rm u},+}$. The fact that $\hat\Gamma_b^{\rm s,+}\cap\hat\Gamma_b^{\rm u,+}=\hat\Gamma_b^+$ implies the final statement.
\end{proof}

\subsubsection{Parallel transport along the trapped set}
\label{SssGlDynTr}

This section (which will not be used in the present paper and can thus be skipped until the reader studies \cite{HintzGlueLocIII}) establishes properties of parallel transport of (co)vectors along trapped null-geodesics in subextremal Kerr spacetimes which are essential for verifying a subprincipal symbol condition for the propagation of microlocal regularity into the trapped set for tensorial wave equations. (This verification, for the linearized gauge-fixed Einstein equations, is carried out in \cite{HintzGlueLocIII}.) Roughly speaking, we show, following Marck \cite{MarckParallelNull}, that parallel transport along trapped null-geodesics has a nilpotent structure when expressed in a suitable frame; see Proposition~\ref{PropGlDynTrNabla} for the precise statement.

Write $\pi\colon T^*\hat M_b^\circ\to\hat M_b^\circ$ for the base projection, and denote by $\nabla^{\pi^*T^*\hat M_b^\circ}$ the pullback of the Levi-Civita connection on $(\hat M_b^\circ,\hat g_b)$. Write
\begin{equation}
\label{EqGlDynTrDb}
  \cD_b := \nabla^{\pi^*T^*\hat M_b^\circ}_{H_{\hat G_b}} \in \Diff^1(T^*\hat M_b^\circ;\pi^*T^*\hat M_b^\circ).
\end{equation}
This is a (principally scalar) transport operator. Therefore, given $(z_0,\zeta_0)\in T^*\hat M_b^\circ$ and $e_0\in(\pi^*T^*\hat M_b^\circ)_{(z_0,\zeta_0)}=T^*_{z_0}\hat M_b^\circ$, we can solve $\cD_b e=0$ along the $H_{\hat G_b}$-integral curve $\gamma(s)$ through $(z_0,\zeta_0)$, with initial condition $e(0)=e_0$ at $\gamma(0)=(z_0,\zeta_0)$. This defines a notion of parallel transport
\[
  (\pi^*T^*\hat M_b^\circ)_{(z_0,\zeta_0)}\to(\pi^*T^*\hat M_b^\circ)_{\gamma(s)},\qquad e_0\mapsto e(s).
\]
Identifying $e(s)$ with an element of $T^*_{\pi(\gamma(s))}\hat M_b^\circ$, we compute
\[
  \nabla_{(\pi\circ\gamma)'} e(s) = \frac12\nabla_{\pi_*H_{\hat G_b}} e(s) = \frac12\nabla_{H_{\hat G_b}}^{\pi^*T^*\hat M_b^\circ} e(s) = 0,
\]
where we use that the pushforward of $H_{\hat G_b}|_{(z,\zeta)}$ along $\pi$ is twice the vector dual to $\zeta$. Therefore, the operator $\cD_b$ captures parallel transport along all geodesics on $\hat M_b^\circ$ at once, and allows us to choose suitable frames to describe this parallel transport depending on the geodesic \emph{in phase space}. We proceed to study this on Kerr, following Marck's work \cite{MarckParallelNull} and pointing out additional properties of his construction along the way.

Following Carter \cite[\S4]{CarterHamiltonJacobiEinstein}, we define the tetrad
\begin{equation}
\label{EqGlDynTrCarter}
\begin{alignedat}{2}
  \omega^{(0)} &= \frac{\sqrt\mu}{\varrho}(\dd\hat t_{\rm BL}-a\,\sin^2\theta\,\dd\phi), &\qquad
  \omega^{(1)} &= \frac{\varrho}{\sqrt\mu}\dd\hat r, \\
  \omega^{(2)} &= \varrho\,\dd\theta, &\qquad
  \omega^{(3)} &= \varrho^{-1}\sin\theta\,(a\,\dd\hat t_{\rm BL}-(\hat r^2+a^2)\dd\phi).
\end{alignedat}
\end{equation}
The covectors $\omega^{(0)}$ and $\omega^{(1)}$ are smooth on $\hat M_b^\circ$, while $\omega^{(2)}$ and $\omega^{(3)}$ are not. The dual tetrad $\omega_{(\mu)}$ is given by
\begin{alignat*}{2}
  \omega_{(0)} &= \frac{1}{\varrho\sqrt\mu}\bigl((a^2+\hat r^2)\pa_{\hat t_{\rm BL}}+a\pa_\phi\bigr), &\qquad
  \omega_{(1)} &= \frac{\sqrt\mu}{\varrho}\pa_{\hat r}, \\
  \omega_{(2)} &= \varrho^{-1}\pa_\theta, &\qquad
  \omega_{(3)} &= -\frac{1}{\varrho\sin\theta}(a\sin^2\theta\,\pa_{\hat t_{\rm BL}}+\pa_\phi).
\end{alignat*}
In this frame, the Kerr metric is given by the Minkowski matrix $\hat g_b(\omega_{(\mu)},\omega_{(\nu)})=\ubar g_{\mu\nu}$ (i.e.\ $-1$ for $\mu=\nu=0$; $+1$ for $\mu=\nu\geq 1$; and $0$ otherwise), likewise for the dual metric. The Carter constant \cite{CarterGlobalKerr}
\[
  \sC = \varrho^2(\omega_{(2)}^2+\omega_{(3)}^2) - a^2\cos^2\theta\,\hat G_b \in P^{[2]}(T^*\hat M_b^\circ),
\]
which is quadratic in the momenta, is a constant of motion, i.e.\ $H_{\hat G_b}\sC=0$. On the characteristic set, it is equal to
\[
  \tilde\sC=\varrho^2(\omega_{(2)}^2+\omega_{(3)}^2) = \eta_\theta^2 + \frac{1}{\sin^2\theta}(\eta_\phi-a\sin^2\theta\,\sigma)^2
\]
in the coordinates used in Definition~\ref{DefGlDynKerrTrap}; this quantity already appeared earlier in~\eqref{EqGlDynKerrHamHor}. We next recall the Killing--Yano tensor \cite{PenroseNakedSing}, which is the 2-form
\[
  KY = r\omega^{(2)}\wedge\omega^{(3)} + a\cos\theta\,\omega^{(0)}\wedge\omega^{(1)};
\]
it satisfies $KY_{\mu\nu;\lambda}+KY_{\mu\lambda;\nu}=0$. It is smooth since the 2-form
\[
  \omega^{(2)}\wedge\omega^{(3)} = -a\,\dd(\cos\theta)\wedge\dd\hat t_{\rm BL} - (\hat r^2+a^2)\dd\theta\wedge\sin\theta\,\dd\phi
\]
is smooth.

Consider now a null covector $(z,\zeta)\in\hat M_b^\circ$. We then define
\begin{subequations}
\begin{equation}
\label{EqGlDynTre03}
  e^0 = \zeta,\qquad
  e^3 = KY(\zeta) \in (\pi^*T^*\hat M_b^\circ)_{(z,\zeta)}
\end{equation}
(i.e.\ $e^3_\mu=\hat g_b^{\nu\kappa}KY_{\mu\nu}\zeta_\kappa$); their coefficients in the tetrad~\eqref{EqGlDynTrCarter} are
\[
  e^0=(\omega_{(0)},\omega_{(1)},\omega_{(2)},\omega_{(3)}),\qquad
  e^3=(a\cos\theta\,\omega_{(1)},a\cos\theta\,\omega_{(0)},r\omega_{(3)},-r\omega_{(2)}).
\]
We then put
\begin{equation}
\label{EqGlDynTre12}
  e^1 := (r\omega_{(1)},r\omega_{(0)},-a\cos\theta\,\omega_{(3)},a\cos\theta\,\omega_{(2)}),\qquad
  e^2 := \frac{\varrho^2}{2}(\omega_{(0)},\omega_{(1)},-\omega_{(2)},-\omega_{(3)}).
\end{equation}
\end{subequations}

\begin{lemma}[Frame]
\label{LemmaGlDynTrFrame}
  The vectors $e^\mu$, $\mu=0,1,2,3$, defined by~\eqref{EqGlDynTre03}--\eqref{EqGlDynTre12} are smooth stationary sections of $\pi^*T^*\hat M_b^\circ$. They form a quasi-orthonormal frame of $\pi^*T^*\hat M_b^\circ$ at all null covectors $\zeta\in T^*_z\hat M_b^\circ$ for which $\sC\neq 0$; concretely,
  \[
    (\hat g_b^{-1}(e^\mu,e^\nu))_{\mu,\nu=0,\ldots,3} = \sC\begin{pmatrix} 0 & 0 & -1 & 0 \\ 0 & 1 & 0 & 0 \\ -1 & 0 & 0 & 0 \\ 0 & 0 & 0 & 1 \end{pmatrix}.
  \]
\end{lemma}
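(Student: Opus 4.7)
The plan has three ingredients: smoothness and stationarity of the $e^\mu$ as sections of $\pi^*T^*\hat M_b^\circ$, the computation of the inner products $\hat g_b^{-1}(e^\mu,e^\nu)$, and the frame property. I outline each.

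\emph{Smoothness and stationarity.} The covector $e^0=\zeta$ is the tautological $1$-form, and $e^3=KY(\zeta)=\iota_{\zeta^\sharp}(KY)$ is smooth because $KY$ is (as recorded in the paper). For $e^1,e^2$ the displayed tetrad components superficially involve the individually non-smooth $\omega^{(2)},\omega^{(3)}$, so one rewrites the offending combinations invariantly. The $(\omega^{(2)},\omega^{(3)})$-part of $e^1$ is $a\cos\theta\,\iota_{\zeta^\sharp}(\omega^{(2)}\wedge\omega^{(3)})$, smooth by the smoothness of $\omega^{(2)}\wedge\omega^{(3)}$; the $(\omega^{(2)},\omega^{(3)})$-part of $e^2$ is $-\tfrac{\varrho^2}{2}\,\iota_{\zeta^\sharp}g_{\rm ang}$, where $g_{\rm ang}:=\omega^{(2)}\otimes\omega^{(2)}+\omega^{(3)}\otimes\omega^{(3)}=\hat g_b+\omega^{(0)}\otimes\omega^{(0)}-\omega^{(1)}\otimes\omega^{(1)}$ is smooth since $\hat g_b,\omega^{(0)},\omega^{(1)}$ are. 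The remaining $(\omega^{(0)},\omega^{(1)})$-parts of $e^1,e^2$ only involve smooth objects. Stationarity is immediate: every building block ($\omega^{(a)}$, $\omega_{(a)}$, $r$, $\varrho$, $\cos\theta$, $KY$) is $\partial_{\hat t_{\rm BL}}$-invariant, hence also $\partial_{\hat t}$-invariant.

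\emph{Inner products.} Because $\hat g_b^{-1}(\omega^{(a)},\omega^{(b)})=\ubar g^{(a)(b)}$, one evaluates $\hat g_b^{-1}(e^\mu,e^\nu)$ by Minkowski-pairing the tetrad-component tuples displayed in \eqref{EqGlDynTre03}--\eqref{EqGlDynTre12}. The null condition gives the basic identity $-\omega_{(0)}^2+\omega_{(1)}^2+\omega_{(2)}^2+\omega_{(3)}^2=2\hat G_b=0$, i.e.\ $\omega_{(0)}^2-\omega_{(1)}^2=\omega_{(2)}^2+\omega_{(3)}^2$; combined with $\tilde\sC=\varrho^2(\omega_{(2)}^2+\omega_{(3)}^2)$ and the fact that $\sC=\tilde\sC-a^2\cos^2\theta\cdot 2\hat G_b$ reduces to $\tilde\sC$ on the characteristic set. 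Using these, the off-diagonal entries $(0,1),(0,3),(1,2),(2,3)$ vanish by termwise cancellation; $(0,0)$ and $(2,2)$ are proportional to $\hat G_b$, hence vanish; $(1,1)=r^2(\omega_{(0)}^2-\omega_{(1)}^2)+a^2\cos^2\theta(\omega_{(2)}^2+\omega_{(3)}^2)=\varrho^2(\omega_{(2)}^2+\omega_{(3)}^2)=\sC$, and by the same mechanism $(3,3)=\sC$ and $(0,2)=-\sC$; finally $(1,3)=ar\cos\theta\bigl[(\omega_{(0)}^2-\omega_{(1)}^2)-(\omega_{(2)}^2+\omega_{(3)}^2)\bigr]=0$. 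This reproduces the stated matrix.

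\emph{Frame property and obstacle.} The displayed Gram matrix has determinant $-\sC^4$, so $\{e^\mu\}$ is linearly independent—and thus a frame of $\pi^*T^*\hat M_b^\circ$—at every null covector with $\sC\neq 0$. The only step that is not pure bookkeeping is the smoothness check across the rotation axis $\sin\theta=0$, where the tetrad is coordinate-singular; this is resolved by the observation above that every potentially singular term in $e^1,e^2$ is a contraction of $\zeta^\sharp$ with a globally smooth tensor on $\hat M_b^\circ$ (namely $\omega^{(2)}\wedge\omega^{(3)}$ or $g_{\rm ang}$).
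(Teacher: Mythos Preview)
Your proof is correct and follows essentially the same approach as the paper. Your smoothness argument---writing the potentially singular parts of $e^1,e^2$ as contractions of $\zeta^\sharp$ with the smooth tensors $\omega^{(2)}\wedge\omega^{(3)}$ and $g_{\rm ang}=\hat g_b+(\omega^{(0)})^2-(\omega^{(1)})^2$---is exactly the paper's idea, phrased slightly differently (the paper packages these as $e^1=T e^0$, $e^2=S e^0$ for smooth endomorphisms $T,S$); and your inner product computation simply fills in the cases the paper omits. One harmless slip: the null condition reads $-\omega_{(0)}^2+\omega_{(1)}^2+\omega_{(2)}^2+\omega_{(3)}^2=\hat G_b$, not $2\hat G_b$, but since you only use this on the characteristic set it does not affect anything.
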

\begin{proof}
  The stationarity of $e^\mu$ follows from its definition and the stationarity of the Carter tetrad. The smoothness of $e^0$ and $e^3$ is clear. That of $e^1$ and $e^2$ can be verified by direct calculation; alternatively, one notes that $e^1=T e^0$, $e^2=S e^0$ where
  \[
    T=a\cos\theta\,\omega^{(3)}\wedge\omega^{(2)} + r\omega^{(0)}\wedge\omega^{(1)},\qquad
    S=\frac{\varrho^2}{2}\bigl(-I + 2((\omega^{(1)})^2-(\omega^{(0)})^2)\bigr)
  \]
  are smooth. Lastly, since $e^0$ is null, we have $-\omega_{(0)}^2+\omega_{(1)}^2=-\omega_{(2)}^2-\omega_{(3)}^2$. Thus,
  \[
    \hat g_b^{-1}(e^3,e^3) = -a^2\cos^2\theta(\omega_{(1)}^2-\omega_{(0)}^2) + r^2(\omega_{(2)}^2+\omega_{(3)}^2) = \varrho^2(\omega_{(2)}^2+\omega_{(3)}^2) = \sC.
  \]
  We omit the straightforward calculation of the other inner products.
\end{proof}

\begin{prop}[Form of the pullback connection]
\label{PropGlDynTrNabla}
  Define the frame $\sfe^\mu:=\frac{1}{\sqrt\sC}e^\mu$ of $(\pi^*T^*\hat M_b^\circ)_{(z,\zeta)}$ at all null covectors $\zeta\in T^*_z\hat M_b^\circ$ at which $\sC>0$; denote the set of these by $\sC_+\subset T^*_z\hat M_b^\circ\setminus o$. Then in the splitting
  \[
    (\pi^*T^*\hat M_b^\circ)|_{\sC_+} = \la\sfe^0\ra \oplus \la\sfe^1\ra \oplus \la\sfe^2\ra \oplus \la\sfe^3\ra,
  \]
  the operator $\cD_b$ in~\eqref{EqGlDynTrDb} takes the form
  \[
    \cD_b = H_{\hat G_b} + 2\sigma \begin{pmatrix} 0 & 1 & 0 & 0 \\ 0 & 0 & 1 & 0 \\ 0 & 0 & 0 & 0 \\ 0 & 0 & 0 & 0 \end{pmatrix}.
  \]
  This in particular applies at $\hat\Gamma_b$, i.e.\ $\hat\Gamma_b\subset\sC_+$.
\end{prop}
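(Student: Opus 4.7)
The plan is to exploit the conservation law $H_{\hat G_b}\sC = 0$, which gives $\cD_b\sfe^\mu = \sC^{-1/2}\cD_b e^\mu$, and to combine explicit calculations for the distinguished vectors $e^0$ and $e^3$ with a metric-compatibility argument to pin down the middle two columns up to a single scalar, which is then identified by a direct tetrad computation.

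For the first and fourth columns, I would prove $\cD_b e^0 = \cD_b e^3 = 0$ directly. Since $\pi_*H_{\hat G_b}|_{(z,\zeta)} = 2\zeta^\sharp$, the claim $\cD_b e^0 = 0$ for $e^0 = \zeta$ is precisely the statement that the tautological $1$-form is parallel along the cotangent geodesic, which is the standard consequence $\nabla_{2\zeta^\sharp}\zeta_\nu = 2\zeta^\alpha\zeta_{\alpha;\nu} = 0$ of the geodesic equation. For $e^3 = KY(\zeta)$, the product rule together with $\cD_b e^0 = 0$ (and $\nabla\hat g_b=0$, to preserve index raising) reduces the computation to $2\zeta^\lambda KY_{\mu\nu;\lambda}\zeta^\nu$, which vanishes by antisymmetrization via the Killing--Yano identity $KY_{\mu(\nu;\lambda)} = 0$.

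For the middle columns, I would use metric compatibility of $\cD_b$: the inner products $\hat g_b^{-1}(\sfe^j,\sfe^k)$ from Lemma~\ref{LemmaGlDynTrFrame} are constants along the flow (since $\sC$ is), so $\cD_b$ annihilates them. Expanding $\cD_b\sfe^1 = A^0\sfe^0 + A^1\sfe^1 + A^2\sfe^2 + A^3\sfe^3$ and pairing with each $\sfe^k$ using the already proven $\cD_b\sfe^0 = \cD_b\sfe^3 = 0$, all components except the $\sfe^0$-one are forced to vanish; applying the same argument to $\cD_b\sfe^2$ leaves only the $\sfe^1$-component, and the identity $\cD_b\hat g_b^{-1}(\sfe^1,\sfe^2)=0$ then forces the two surviving coefficients to agree, yielding a single scalar $A$ to determine.

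The main technical step is the identification $A = 2\sigma$, where $\sigma = -\zeta(\pa_{\hat t_{\rm BL}})$ is the (conserved) energy associated with the Killing vector $\pa_{\hat t_{\rm BL}}$. I would carry this out by an explicit computation of $\cD_b e^1$ in the Carter tetrad~\eqref{EqGlDynTrCarter}: the Christoffel coefficients of $\hat g_b$ in this tetrad are known, and combining these with Hamilton's equations for the tetrad components of $\zeta$ allows one to read off the coefficient of $\zeta = e^0$ in $\cD_b e^1$ and then divide by $\sqrt\sC$; this calculation is the one performed by \cite{MarckParallelNull}, to which I would appeal. Finally, the inclusion $\hat\Gamma_b\subset\sC_+$ is standard for Kerr: at the trapped set the radial polynomial of the separated null-geodesic equation has a double root while $\xi = 0$, and subextremality then forces $\sC>0$ (compare \cite[\S3.2]{DyatlovWaveAsymptotics}).
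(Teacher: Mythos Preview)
Your proposal is correct and follows essentially the same route as the paper: both establish $\cD_b\sfe^0=\cD_b\sfe^3=0$ from the geodesic equation and the Killing--Yano identity, then use metric compatibility with the inner product table of Lemma~\ref{LemmaGlDynTrFrame} to reduce $\cD_b\sfe^1$ and $\cD_b\sfe^2$ to a single shared scalar, and finally identify that scalar as $2\sigma$ by a direct tetrad computation (which the paper also defers to as ``a lengthy direct computation''). The only notable difference is your argument for $\hat\Gamma_b\subset\sC_+$: the paper gives a shorter self-contained proof by observing that on the null cone $\sC=\tilde\sC=\varrho^2(\omega_{(2)}^2+\omega_{(3)}^2)\geq 0$, and if this vanished then $\omega_{(2)}=\omega_{(3)}=0$, while $\xi=0$ at trapping forces $\omega_{(1)}=0$, and nullity then forces $\omega_{(0)}=0$, contradicting $\zeta\neq 0$---no appeal to the separated radial polynomial or subextremality is needed.
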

\begin{proof}
  For the integral curve $\gamma(s)$ of $H_{\hat G_b}$ with $\gamma(0)=(z,\zeta)\in\sC_+$ the covector $e^0=\dot\gamma(s)^\flat$ is parallel along the geodesic $\pi\circ\gamma$. Since $H_{\hat G_b}\sC=0$, this implies that $\cD_b\sfe^0=\sC^{-1/2}\cD_b e^0=0$. By the properties of $KY$, also $e^3$ and therefore $\sfe^3$ are parallel, i.e.\ $\cD_b\sfe^3=0$.

  Next, write $\cD_b\sfe^1=:\alpha_\mu\sfe^\mu$. Then Lemma~\ref{LemmaGlDynTrFrame} gives $\alpha_1=\hat g_b^{-1}(\cD_b\sfe^1,\sfe^1)=H_{\hat g_b}(|\sfe^1|_{\hat g_b^{-1}}^2)=0$, furthermore $\alpha_2=-\hat g_b^{-1}(\cD_b\sfe^1,\sfe^0)=\hat g_b^{-1}(\sfe^1,\cD_b\sfe^0)=0$, and also $\alpha_3=\hat g_b^{-1}(\cD_b\sfe^1,\sfe^3)=-\hat g_b^{-1}(\sfe^1,\cD_b\sfe^3)=0$. Therefore,
  \[
    \cD_b\sfe^1 = \alpha_0\sfe^0,\qquad \alpha_0=-\hat g_b^{-1}(\cD_b\sfe^1,\sfe^2).
  \]
  Similarly, one finds that for $\cD_b\sfe^2=:\beta_\mu\sfe^\mu$ one has $\beta_0=\beta_2=\beta_3=0$, whereas $\beta_1=\hat g_b^{-1}(\cD_b\sfe^2,\sfe^1)=-\hat g_b^{-1}(\sfe^2,\cD_b\sfe^1)=\alpha_0$. A lengthy direct computation gives $\alpha_0=2\sigma$.

  Finally, we show that $\sC>0$ at the trapped set: if $\hat G_b=0$ and $\sC=0$, then $\omega_{(2)}=\omega_{(3)}=0$; but also $\xi=0$ at $\hat\Gamma_b$, so $\omega_{(1)}=0$; and since $\hat\Gamma_b\subset\hat\Sigma_b$, this forces $\omega_{(0)}=0$ as well. But this means we are at the zero section, which however is disjoint from $\hat\Sigma_b$. Since $\sC=\tilde\sC\geq 0$ everywhere on $\hat\Sigma_b$, this means that $\sC>0$ on $\hat\Gamma_b$.
\end{proof}

Proposition~\ref{PropGlDynTrNabla} exhibits the nilpotent structure of parallel transport along null-geodesics which are trapped or, more generally, along null-geodesics for which $\sC\neq 0$: the coefficients of a parallel section along such a null-geodesic, in the frame $\sfe^\mu$, grow at most quadratically (and in particular not exponentially) with the affine parameter.

\begin{rmk}[Schwarzschild case]
\label{RmkGlDynTrSchw}
  In the special case $\bha=0$ of the Schwarzschild metric, and writing covectors as $-\sigma\,\dd t+\xi\,\dd r+\eta$, $\eta\in T^*\Sph^2$, we have, at the trapped set $r=3\bhm$, $\xi=0$:
  \[
    e^0 = -\sigma\,\dd t + \eta,\qquad
    e^1 = -9\bhm\sigma\,\dd r,\qquad
    e^2 = \frac{9\bhm^2}{2}(\sigma\,\dd t+\eta),\qquad
    e^3 = 3\bhm(\slstar\eta),
  \]
  with $\slstar$ denoting the Hodge star operator on $\Sph^2$ with the standard metric. The frame used in \cite[\S{9.1}]{HintzVasyKdSStability} on Schwarzschild--de~Sitter space is related to this by a constant linear transformation.
\end{rmk}

\subsubsection{Global dynamics on standard domains}
\label{SssGlDynStd}

We now combine the results from~\S\S\ref{SssGlDynMc}--\ref{SssGlDynKerr} to study the null-bicharacteristic flow in the se-characteristic set $\wt\Sigma$. Unless specified otherwise, we only use the se-regularity $\cC_\seop^{2,1,1}=\eps\cC_\seop^2$ of the lower order term of $\wt g$ in Definition~\ref{DefGl}\eqref{ItGlwtg}. (This suffices to ensure that integral curves of $H_{\wt G}$ are unique.)

\begin{definition}[Standard domains in $\wt M$]
\label{DefGlDynStd}
  A standard domain $\wt\Omega\subset\wt M\setminus\wt K^\circ$ is a submanifold with corners of $\wt M$ of the form $\wt\Omega=\wt\upbeta^{-1}([0,1)_\eps\times\Omega)\setminus\wt K^\circ$ where $\Omega\subset M$ is a standard domain in the sense of Definition~\usref{DefGlDynStdM}. We write $\Omega_\eps=\wt\Omega\cap M_\eps$. The boundary hypersurfaces of $\wt\Omega$ are denoted as follows.
  \begin{enumerate}
  \item\label{ItGlDynStdC} If $\Omega=\Omega_{t_0,t_1,r_0}$, then we set
    \begin{align*}
      \wt X_{t_0,t_1,r_0} &:= \wt\upbeta^{-1}([0,1)_\eps\times X_{t_0,t_1,r_0}) \setminus \{ |\hat x|<\bhm \}, \\
      \wt Y^-_{t_0,t_1,r_0} &:= \wt\upbeta^{-1}([0,1)_\eps\times Y^-_{t_0,t_1,r_0}) \setminus \{ |\hat x|<\bhm \}, \\
      \wt Y^\wedge_{t_0,t_1,r_0} &:= [0,1)_\eps\times Y^\wedge_{t_0,t_1,r_0}, \\
      \wt Y^|_{t_0,t_1,r_0} &:= \{ (\eps,t,\hat x) \colon \eps\in[0,1),\ t\in[t_0,t_1],\ |\hat x|=\bhm \}.
    \end{align*}
  \item\label{ItGlDynStdNotC} If $\Omega\cap\cC=\emptyset$, and the boundary hypersurfaces of $\Omega$ are $X$ (initial) and $Y_1,\ldots,Y_J$ (final), then we set $\wt X:=[0,1)_\eps\times X$ and $\wt Y_j:=[0,1)_\eps\times Y_j$.
  \end{enumerate}
\end{definition}

See Figure~\ref{FigGlDynStd}. In the second case of Definition~\ref{DefGlDynStd}, we have $\wt\Omega\cap\eps^{-1}([0,\eps_0])=[0,\eps_0]\times\Omega$ for all sufficiently small $\eps_0>0$ (chosen so that $\wt K$ is disjoint from $[0,\eps_0]\times\Omega$). If one shrinks $\eps_0>0$ further, then one can ensure that for each initial, resp.\ final boundary hypersurface $X\subset\Omega$ also $\{\eps\}\times X=\wt X\cap M_\eps$ will be an initial, resp.\ final spacelike hypersurface for $g_\eps=\wt g|_{M_\eps}$.

In the first case, one can argue similarly for $\wt Y_{t_0,t_1,r_0}^\wedge$ since $Y_{t_0,t_1,r_0}^\wedge\cap\cC=\emptyset$. Since $\dd\hat r$ is (future) timelike at $\hat r=\bhm$ for the Kerr metric, the se-covector $\dd\hat r=\eps^{-1}\dd r$ is future timelike for $\wt g$ in a neighborhood of $\{\hat r=\bhm\}\subset\hat M$, so also $\wt Y^|_{t_0,t_1,r_0}\cap M_\eps$ is a final boundary hypersurface of $\Omega_\eps$ for all sufficiently small $\eps>0$. Analogous reasoning using the se-covector $\hat\rho^{-1}\dd t=\hat\rho_\circ\,\dd\hat t$, which is past timelike at $\hat M^\circ$ in view of the final part of Lemma~\ref{LemmaGlCoord} and also at $M_\circ$ by Definition~\ref{DefGl}, shows that also the intersections of $M_\eps$ with $\wt X_{t_0,t_1,r_0}$ and $\wt Y^-_{t_0,t_1,r_0}$ are spacelike for all sufficiently small $\eps>0$.

\begin{figure}[!ht]
\centering
\includegraphics{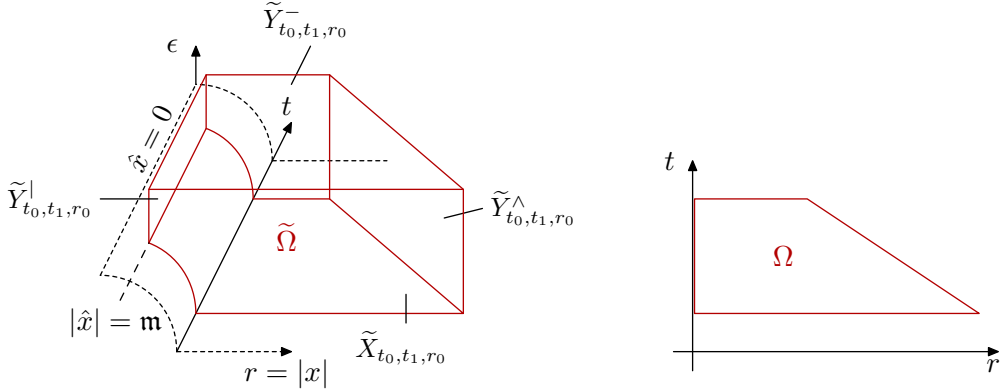}
\caption{A standard domain in $\wt M$, its boundary hypersurfaces, and (on the right) the corresponding standard domain in $M$.}
\label{FigGlDynStd}
\end{figure}

Having already defined the radial sets $\cR_{\rm in}$, $\cR_{\rm out}$ in Definition~\ref{DefGlDynInOut}, we define
\begin{equation}
\label{EqGlDynStdRadHor}
  \cR_{\cH^+} := I_\cC\times\hat\cR_{\cH^+}
\end{equation}
in terms of~\eqref{EqGlDynKerrHor} as the set of all covectors $\xi_\seop\,\dd\hat r$ over $\hat r=\bhm$ where $\xi_\seop\neq 0$; its two components are denoted $\cR_{\cH^+}^\pm=\cR_{\cH^+}\cap\wt\Sigma^\pm$ as usual. Carefully note that $\cR_{\cH^+}\subset\Tse^*_{\hat M}\wt M\setminus o$ lives only at $\eps=0$: the Kerr event horizon plays no special role (and indeed has no geometric meaning anymore) for any $\eps>0$. Similarly, we define the subsets
\begin{equation}
\label{EqGlDynStdTrap}
  \Gamma := I_\cC\times\hat\Gamma_b,\qquad
  \Gamma^{\rm s/u} := I_\cC\times\hat\Gamma_b^{\rm s/u},
\end{equation}
of $\Tse^*_{\hat M}\wt M\setminus o$, with future/past components denoted $\Gamma^\pm$, $\Gamma^{\rm s/u,\pm}$.

\begin{lemma}[Null-bicharacteristic flow on standard domains: $\wt M$]
\label{LemmaGlDynStdFlow}
  Let $\wt\Omega$ be a standard domain. Fix a defining function $\rho_\infty\in\CI(\ol{\Tse^*}\wt M)$ of fiber infinity, and let $\sfH:=\rho_\infty\hat\rho^2 H_{\wt G}\in\Vse(\Tse^*_{\wt M\setminus\wt K^\circ}\wt M)$. Then there exists $\eps_0>0$ so that the following statements hold.
   \begin{enumerate}
  \item\label{ItGlDynStdFlowSpace}{\rm (Causal character.)} The boundary hypersurfaces of $\Omega_\eps$ are spacelike for all $\eps\in(0,\eps_0]$; furthermore, if $\Omega\cap\cC\neq\emptyset$, then $\hat\rho^{-1}\dd t$ is past timelike on $\Omega_\eps$.
  \item\label{ItGlDynStdFlow}{\rm (Flow.)} Let $I\ni s\mapsto \gamma(s)\in\pa\wt\Sigma^\pm\cap\Sse^*_{\wt\Omega}\wt M$ be a maximal integral curve of $\pm\sfH$ over $\wt\Omega\cap\eps^{-1}([0,\eps_0])$. Then exactly one of the following possibilities must occur:
    \begin{enumerate}
    \item\label{ItGlDynStdFlowXY} $\gamma$ lies over $M_\eps$ for some $\eps>0$, starts at an initial hypersurface and ends at a final hypersurface of $\Omega_\eps$;
    \item $\gamma$ is contained in one of the invariant sets $\pa\cR_{\rm out}$, $\pa\cR_{\rm in}$, $\pa\cR_{\cH^+}$, $\pa\Gamma$;
    \item $\gamma$ lies over $\pa M_\circ$ and tends to $\pa\cR_{\rm in}$ in the backward and $\pa\cR_{\rm out}$ in the forward direction;
    \item $\gamma$ lies over $(M_\circ)^\circ$ and in the forward direction tends to $\pa\cR_{\rm in}$ or crosses a final boundary hypersurface in finite time, and in the backward direction tends to $\pa\cR_{\rm out}$ or crosses the initial boundary hypersurface in finite time, but it does not tend to $\pa\cR_{\rm in}\cup\pa\cR_{\rm out}$ in \emph{both} the forward and backward directions;
    \item $\gamma$ lies over $\hat M^\circ$ and in the forward direction tends to $\pa\Gamma$, $\pa\cR_{\rm out}$, or ends at $\wt Y^|_{t_0,t_1,r_0}$ (i.e.\ $\hat r=\bhm$) in finite time, and in the backward direction tends to $\pa\Gamma$, $\pa\cR_{\rm in}$, or $\pa\cR_{\cH^+}$, but it does not tend to $\pa\Gamma$ in \emph{both} directions.
    \end{enumerate}
  \end{enumerate}
\end{lemma}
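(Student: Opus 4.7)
The plan is to handle the two parts in sequence, leveraging the dynamical classifications already established over $M_\circ$ in Lemma~\ref{LemmaGlDynStdFlowM} and over $\hat M$ in Lemma~\ref{LemmaGlDynKerrFlow}, with the key structural fact that $\sfH \in \Vse(\Tse^*\wt M)$ is tangent to each of $M_\eps$ ($\eps>0$), $M_\circ$, and $\hat M$.

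For part~\eqref{ItGlDynStdFlowSpace}, I would first verify the causal character at $\eps=0$ on each piece of $\pa\wt\Omega$. At $M_\circ$ we have $\wt g|_{M_\circ}=g$, so the spacelike character of $X_{t_0,t_1,r_0}$, $Y^-_{t_0,t_1,r_0}$, $Y^\wedge_{t_0,t_1,r_0}$ is immediate from Definition~\ref{DefGlDynStdM}\eqref{ItGlDynStdIni}--\eqref{ItGlDynStdFin}. At $\hat M$ the hypersurface $\wt Y^|_{t_0,t_1,r_0}$ is $\{\hat r=\bhm\}$, which lies inside the Kerr ergoregion $\hat r<\hat r_b$ where $|\dd\hat r|^2_{\hat g_b^{-1}}=\mu/\varrho^2<0$, so $\dd\hat r$ is future timelike there; similarly $\hat\rho^{-1}\dd t=\rho_\circ\,\dd\hat t$ is past timelike on $\hat M^\circ$ by the final assertion of Lemma~\ref{LemmaGlCoord}, and on $M_\circ$ by the Fermi normal construction. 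Continuity of $\wt g$ (which is of class $\CI+\cC_\seop^{\infty,1,1}$, hence at least continuous down to $\hat M\cup M_\circ$) together with the compactness of $\wt\Omega$ then produces the required $\eps_0>0$.

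For part~\eqref{ItGlDynStdFlow}, tangency of $\sfH$ to the three boundary strata implies any maximal $\gamma$ lies over exactly one of them. Over $M_\eps$ with $\eps>0$, the vector field $\sfH$ is a positive scalar multiple of the standard geodesic spray of $(M_\eps,g_\eps)$; spacelikeness of $\pa\Omega_\eps$ and compactness, combined with monotonicity of $t$ along future null-bicharacteristics (which follows from $\hat\rho^{-1}\dd t$ being past timelike on $\Omega_\eps$), force $\gamma$ to travel from an initial to a final hypersurface in finite time, yielding case~(a). Over $M_\circ$, using the identification $\hat\rho^2 H_{\wt G}|_{\Tse^*_{M_\circ}\wt M}=|x|^2 H_G\in\Ve(\Te^*M_\circ)$ from the opening of~\S\ref{SssGlDynMc}, $\gamma$ pulls back (after reparameterization) to an integral curve of $\pm\sigma_\eop^{-1}r^2 H_G$ covered by Lemma~\ref{LemmaGlDynStdFlowM}; its alternatives translate directly into cases~(b)--(d), with Definition~\ref{DefGlDynStdM}\eqref{ItGlDynStdNonrf} ruling out simultaneous forward and backward asymptoting to $\pa\cR_{\rm in}$ and $\pa\cR_{\rm out}$ through $(M_\circ)^\circ$.

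The remaining case, $\gamma$ over $\hat M^\circ$, is the most delicate and is where I expect the main technical work to lie. Here I would use Remark~\ref{RmkGlSe3be}\eqref{ItGlSe3be3b} together with the local coordinate computation~\eqref{EqGlDynKerrLocRestr} to identify $\sfH|_{\Sse^*_{\hat M^\circ}\wt M}$, modulo the trivial $\hat t$-component, with the rescaled Kerr null-bicharacteristic vector field $\pm\rho_\infty\hat r^2 H_{\hat G_b}$ on $\pa\hat\Sigma_b^\pm$; Lemma~\ref{LemmaGlDynKerrFlow} then supplies the list of forward/backward asymptotic behaviors in case~(e), with the finite-time exit through $\{\hat r=\bhm\}$ ensured by the future timelike character of $\dd\hat r$ in $\hat r<\hat r_b$. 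The correspondence between the Kerr-side invariant sets $\hat\Gamma_b^\pm$, $\hat\cR_{\cH^+}^\pm$, $\hat\cR_{\rm in/out}^\pm$ and the present $\Gamma^\pm$, $\cR_{\cH^+}^\pm$, $\cR_{\rm in/out}^\pm$ is built into the definitions~\eqref{EqGlDynStdTrap}, \eqref{EqGlDynStdRadHor}, and~\eqref{EqGlDynKerrOut}--\eqref{EqGlDynKerrIn}. The principal obstacle is then ensuring mutual exclusivity of alternatives within case~(e)---chiefly that a curve cannot tend to $\pa\Gamma$ in both directions without being contained in it---which rests on the normal hyperbolicity of the Kerr trapped set (\cite[Propositions~3.5--3.7]{DyatlovWaveAsymptotics}) and the identity $\hat\Gamma_b^{\rm s,\pm}\cap\hat\Gamma_b^{\rm u,\pm}=\hat\Gamma_b^\pm$ recorded at the end of Lemma~\ref{LemmaGlDynKerrFlow}.
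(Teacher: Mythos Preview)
Your proposal is correct and follows essentially the same approach as the paper: reduce the $\eps=0$ analysis to the already-established classifications over $M_\circ$ (Lemma~\ref{LemmaGlDynStdFlowM}) and over $\hat M$ (Lemma~\ref{LemmaGlDynKerrFlow}), and handle $\eps>0$ via tangency of $\sfH$ to $M_\eps$ together with the monotonicity $\sfH t>0$ coming from the past timelike nature of $\hat\rho^{-1}\dd t$. The paper's proof is terser---it simply cites the two lemmas for $\eps=0$ and the monotonicity of $t$ for $\eps>0$, and points to the discussion after Definition~\ref{DefGlDynStd} for part~\eqref{ItGlDynStdFlowSpace}---whereas you spell out the stratum-by-stratum verification in more detail; your characterization of the $\hat M^\circ$ case as ``the most delicate'' is an overstatement, since it too reduces directly to Lemma~\ref{LemmaGlDynKerrFlow} without additional work.
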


\begin{proof}[Proof of Lemma~\usref{LemmaGlDynStdFlow}]
  Part~\eqref{ItGlDynStdFlowSpace} has already been established. Part~\eqref{ItGlDynStdFlow} is a combination of Lemmas~\ref{LemmaGlDynStdFlowM} and \ref{LemmaGlDynKerrFlow} when $\eps=0$. For $\eps>0$ on the other hand, note that $\sfH$ is tangent to the $\eps$-level sets inside $\Sse^*\wt M$; and at points $(z,\zeta)\in\wt\Sigma^+\cap\Sse^*_{\wt\Omega}\wt M$ where $\eps>0$, we have $\sfH t>0$ by the past timelike nature of $\dd t$. Thus, \eqref{ItGlDynStdFlowXY} is the only possibility in this case.
\end{proof}

\begin{figure}[!ht]
\centering
\includegraphics{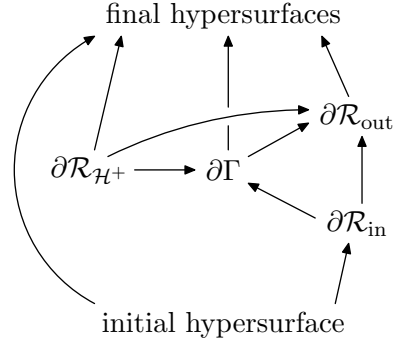}
\caption{Illustration of the future null-bicharacteristic flow, as described in Lemma~\ref{LemmaGlDynStdFlow}.}
\label{FigGlDynFlow}
\end{figure}

\begin{lemma}[Monotone functions]
\label{LemmaGlDynStdMono}
  Let $\wt\Omega$ be a standard domain associated with $\Omega$ where $\Omega\cap\cC\neq\emptyset$. Write se-covectors in $\wt\Omega$ as $-\sigma_\seop\,\frac{\dd t}{\hat\rho}+\xi_\seop\,\frac{\dd r}{\hat\rho}+\eta_\seop$ where $\eta_\seop\in T^*\Sph^2$. Then for all sufficiently large $\hat r_0>\hat r_b$ and all sufficiently small $\eps_0>0$,\footnote{In this region, we may take $\hat\rho=r$.} the following statements hold for all $\eps\in[0,\eps_0]$.
  \begin{enumerate}
  \item\label{ItGlDynStdMonoSigma} $\pm\sigma_\seop>0$ on $\wt\Sigma^\pm$ over $\Omega_\eps\cap\{\hat r\geq\hat r_0\}$.
  \item\label{ItGlDynStdMonoFn} Let $\chi_1,\chi_2\in\CI(\R)$ with
    \begin{alignat*}{3}
      \chi_1|_{(-\infty,\frac12]}&=0,\quad&
      \chi_1'&\geq 0,\quad&
      \chi_1|_{[\frac34,\infty)}&=1, \\
      \chi_2|_{(-\infty,\hat r_0]}&=0,\quad&
      \chi_2'&\geq 0,\quad&
      \chi_2|_{[2\hat r_0,\infty)}&=1.
    \end{alignat*}
    Write $\hat\xi_\seop=\frac{\xi_\seop}{\sigma_\seop}$. Then
    \begin{equation}
    \label{EqGlDynStdMono}
      {\pm}\sigma_\seop^{-1}\hat\rho^2 H_{\wt G}(\chi_1(\hat\xi_\seop)\chi_2(\hat r))\geq 0\quad\text{on}\quad\pa\wt\Sigma^\pm\cap\Sse^*_{\Omega_\eps}\wt M.
    \end{equation}
    If moreover $\sqrt{\chi_j},\sqrt{\chi_j'}\in\CI(\R)$ for $j=1,2$, then on $\pa\wt\Sigma^\pm\cap\Sse^*_{\Omega_\eps}\wt M$, \eqref{EqGlDynStdMono} can be written as the sum of squares of smooth functions on $\pa\wt\Sigma^\pm$.
  \end{enumerate}
\end{lemma}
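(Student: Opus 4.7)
The plan is to reduce both statements to direct computations in the se-phase space over a neighborhood of the corner $M_\circ\cap\hat M$ of $\wt M$. The key preliminary step, which I would establish first, is that in $\wt\Omega\cap\{\hat r\geq\hat r_0,\ \eps\leq\eps_0\}$ with $\hat r_0$ large and $\eps_0$ small, the smooth se-dual metric satisfies
\[
  \hat\rho^2\wt G = -\sigma_\seop^2+\xi_\seop^2+|\eta_\seop|_{\slg}^2+E,
\]
where $E\in\CI$ is a symbol on $\Tse^*\wt M$ that vanishes on the corner and whose fiber-derivatives are uniformly small on this region as $\hat r_0\to\infty$ and $\eps_0\to 0$. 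This combines the Fermi-coordinate expansion \eqref{EqGlDynGe} at $M_\circ$ (where $r$ is small) with the asymptotic flatness of Kerr at $\hat M$ (where $\hat r$ is large, so $\la\hat x\ra^2\hat G_b$ approaches the Minkowski dual metric by Lemma~\ref{LemmaGlCoord} and \eqref{EqFseBundle3b}); smoothness of $\hat\rho^2\wt G$ across the corner then extends the bound to all $\eps\in[0,\eps_0]$.

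Part~\eqref{ItGlDynStdMonoSigma} is then immediate: the null condition forces $\sigma_\seop^2=\xi_\seop^2+|\eta_\seop|^2+O(E)$, so $\sigma_\seop$ cannot vanish on $\wt\Sigma$ over the region (lest the whole fiber variable vanish). Each of $\wt\Sigma^\pm\cap\{\hat r\geq\hat r_0\}$ is a smooth perturbation of one half of a Minkowski null cone over a connected base, hence connected; by continuity $\sigma_\seop$ has a definite sign on each, agreeing by Definition~\ref{DefGlDynInOut} with $\pm$ at $\cR_{\rm in}^\pm$.

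For part~\eqref{ItGlDynStdMonoFn}, I would compute directly, using \eqref{EqGlDynInOutMhat} and its smooth extension off $\hat M$ (via $\hat\rho^2 H_{\wt G}=H_{\hat\rho^2\wt G}$ on $\wt\Sigma$, together with $\rho_\circ=1/\hat r$),
\[
  \pm\sigma_\seop^{-1}\hat\rho^2 H_{\wt G}(\hat\xi_\seop) = 2(1-\hat\xi_\seop^2)+E_1,\qquad \pm\sigma_\seop^{-1}\hat\rho^2 H_{\wt G}(\hat r) = 2\hat\xi_\seop\hat r+E_2
\]
on $\wt\Sigma^\pm$, with $E_1,E_2$ smooth and uniformly small in the region. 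On $\supp\chi_1'\chi_2$ one has $\hat\xi_\seop\in[\tfrac12,\tfrac34]$ so the first leading term is $\geq\tfrac78$, while on $\supp\chi_1\chi_2'$ one has $\hat\xi_\seop\geq\tfrac12$ and $\hat r\in[\hat r_0,2\hat r_0]$ so the second is $\geq\hat r_0$. The product rule yields a sum of two nonnegative contributions plus errors that are controlled once $\hat r_0$ is large and $\eps_0$ is small, giving \eqref{EqGlDynStdMono}. Under the additional smoothness hypothesis on $\sqrt{\chi_j},\sqrt{\chi_j'}$, the sum-of-squares representation follows by writing
\[
  \chi_2\chi_1'\cdot 2(1-\hat\xi_\seop^2)=\bigl(\sqrt{2\chi_2\chi_1'(1-\hat\xi_\seop)(1+\hat\xi_\seop)}\bigr)^2,\qquad \chi_1\chi_2'\cdot 2\hat\xi_\seop\hat r=\bigl(\sqrt{2\chi_1\chi_2'\hat\xi_\seop\hat r}\bigr)^2,
\]
whose inner square roots are smooth by the hypotheses and the positivity of $1\pm\hat\xi_\seop$ on $\supp\chi_1'$ and of $\hat\xi_\seop\hat r$ on $\supp\chi_1\chi_2'$; the small smooth errors $E_1,E_2$ can be absorbed as perturbations of these inner factors.

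The main obstacle is the first step: verifying the Minkowski-plus-small-error form of $\hat\rho^2\wt G$ \emph{uniformly across the corner} $M_\circ\cap\hat M$ and for all $\eps\in[0,\eps_0]$. This requires carefully tracking the identification \eqref{EqFseBundle3b} at $\hat M$, matching it with the edge description at $M_\circ$, and showing that the combined remainder is a single smooth symbol on $\Tse^*\wt M$ that vanishes on the corner. Once this is in place, the sign analysis and the construction of the square-root representation are routine.
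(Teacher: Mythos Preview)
Your approach is correct and close in spirit to the paper's, but you take a slightly different route that is worth contrasting. You work from the Minkowski-plus-small-error form of $\hat\rho^2\wt G$ near the corner $M_\circ\cap\hat M$ and compute the Hamiltonian derivatives explicitly from that approximation. The paper instead leverages the \emph{standard domain conditions} that were built into Definition~\ref{DefGlDynStdM}\eqref{ItGlDynStdOrder} for exactly this purpose: part~\eqref{ItGlDynStdMonoSigma} follows directly from the positivity of $\sigma_\eop$ on $\wt\Sigma^+\cap\Te^*_\Omega M_\circ$ stated there, extended by continuity to a neighborhood $\{\rho_\circ\leq 1/\hat r_0\}$ of $M_\circ$; and the first term $\chi_1'\chi_2\,\sfH\hat\xi_\seop$ in part~\eqref{ItGlDynStdMonoFn} is handled by invoking the monotonicity bound~\eqref{EqGlDynStdMonotone} (which holds over all of $\wt\Omega\cap M_\circ$, not just near the corner) and again extending by continuity. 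Only the second term $\chi_1\chi_2'\,\sfH\hat r$ is treated by the paper via the explicit near-corner formula~\eqref{EqGlDynInOutMhat}, as you do. So the paper's proof is shorter because it cashes in hypotheses already packaged into the definition of standard domain, whereas you effectively re-derive those hypotheses from the Minkowski structure. Your ``main obstacle'' is precisely what~\eqref{EqGlDynStdInOut} records; once you cite that, your computation goes through. One minor imprecision: your error $E_2$ in $\sfH\hat r$ is not uniformly small in absolute terms but rather $E_2/\hat r$ is small (since the main term grows like $\hat r$); this does not affect the conclusion since on $\supp\chi_1\chi_2'$ the main term dominates.
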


See Figure~\ref{FigGlDynStdMono} for an illustration of the monotonicity property~\eqref{EqGlDynStdMono}.

\begin{figure}[!ht]
\centering
\includegraphics{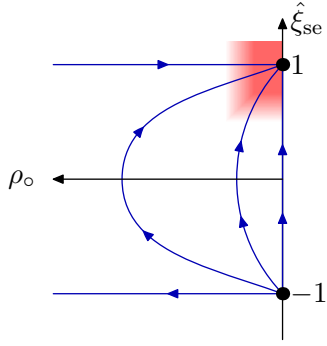}
\caption{Projection to the coordinates $\hat r$ and $\hat\xi_\seop$ (which equals $\xi/\sigma$ if one writes covectors as $-\sigma\,\dd\hat t+\xi\,\dd\hat r+\eta$, $\eta\in T^*\Sph^2$) of the null-bicharacteristic flow, here drawn in the compactified coordinate $\rho_\circ=\hat r^{-1}$. The shading of red indicates the size of the function $\chi_1(\hat\xi_\seop)\chi_2(\hat r)$ in~\eqref{EqGlDynStdMono}.}
\label{FigGlDynStdMono}
\end{figure}

\begin{proof}
  The positivity of $\sigma_\seop$ on $\wt\Sigma$ is ensured over $\wt\Omega\cap M_\circ$ by Definition~\ref{DefGlDynStdM}\eqref{ItGlDynStdOrder}, and thus follows in an open neighborhood thereof by continuity. For the proof of part~\eqref{ItGlDynStdMonoSigma}, it then suffices to note that $\lim_{\hat r_0\to\infty}\lim_{\eps\to 0}(\wt\Omega\cap\{\hat r\geq\hat r_0,\ \eps\leq\eps_0\})=\wt\Omega\cap M_\circ$.

  For part~\eqref{ItGlDynStdMonoFn}, we compute, with $\sfH:=\sigma_\seop^{-1}\hat\rho^2 H_{\wt G}$,
  \[
    \sfH(\chi_1(\hat\xi_\seop)\chi_2(\hat r)) = \chi_1'\chi_2\sfH\hat\xi_\seop + \chi_1\chi_2'\sfH\hat r.
  \]
  But on $\supp\chi_1'$ we have $\hat\xi_\seop\in[\frac12,\frac34]$ and thus $\sfH\hat\xi_\seop$ has a positive lower bound on the characteristic set over $\wt\Omega\cap M_\circ$ in view of~\eqref{EqGlDynStdMonotone}. This continues to hold on a sufficiently small open neighborhood. Thus, the first term is nonnegative, and indeed has a smooth square root on the characteristic set when $\sqrt{\chi_1'}$ is smooth. For the second term, we note that on $\supp(\chi_1\chi_2')$ we have $\hat\xi_\seop\geq\frac12$ and $\hat r\geq\hat r_0$; in view of (the first term of) \eqref{EqGlDynInOutMhat} (with $\rho_\circ=\hat r^{-1}$) this implies that $\sfH\hat r>0$ on $\hat M$ when $\hat r_0$ is sufficiently large; this persists over $\Omega_\eps$ when $\eps$ is sufficiently small.
\end{proof}

For later use, we record the saddle point structure of the null-bicharacteristic flow near the radial sets $\pa\cR_{\rm in}$ and $\pa\cR_{\rm out}$ from Definition~\ref{DefGlDynInOut} in a quantitative way: the combination of~\eqref{EqGlDynInOutHam} and \eqref{EqGlDynInOutMhat} (see also the discussion following~\eqref{EqGlDynInOutMhat}) is
\begin{equation}
\label{EqGlDynStdInOut}
\begin{split}
  \rho_\infty H_{r^2\wt G} &\equiv 2\hat\xi_\seop(r\pa_r-\rho_\circ\pa_{\rho_\circ}-\rho_\infty\pa_{\rho_\infty}-\hat\eta_\seop\pa_{\hat\eta_\seop}) + 2(1-\hat\xi_\seop^2)\pa_{\hat\xi_\seop} \\
    &\qquad + H_{|\hat\eta_\seop|^2} \bmod (\cI+\eps\cC_\seop^1)\Vb(\Sse^*\wt M),
\end{split}
\end{equation}
where $\cI=\hat\rho\CI(\wt M)+\rho_\circ\CI(\wt M)\subset\CI(\wt M)$ is the module of smooth functions vanishing at $\hat M\cap M_\circ$.

Finally, we record the following technical result on the $H_{\eps^2\wt G}$-flow in a full neighborhood of the trapped set inside of $\wt\Sigma$, which we only state near the future characteristic set. The proof is given in Appendix~\ref{STrap}.

\begin{prop}[Extensions of stable and unstable defining functions]
\label{PropTrap}
  Suppose that the second (remainder) term in Definition~\usref{DefGl}\eqref{ItGlwtg} lies in $\eps\cC^{d_0}_\seop=\cC_\seop^{d_0,1,1}$ where $2\leq d_0\in\N\cup\{\infty\}$. Let $I\subset I_\cC$ be a bounded open interval with $I\subset\bar I\subset I_\cC$. Then there exist a neighborhood $\cU\subset\Sse^*_{\wt M\setminus M_\circ}\wt M$ of $\pa\Gamma^+\cap t^{-1}(\bar I)$ and functions $\wt\varphi^{\rm u},\wt\varphi^{\rm s}$, defined on $\cU$, with the following properties:
  \begin{enumerate}
  \item $\wt\varphi^\bullet|_{\hat M_{t_0}}=\varphi_b^\bullet$ for $\bullet={\rm u,s}$ and for all $t_0\in I$. More precisely, in the coordinates $\eps,t,\hat x$ and relative to coordinates on the fibers $\Sph^3$ of $\Sse^*\wt M$ defined in terms of the fiber-linear coordinates $-\sigma_\seop\,\dd\hat t+\xi_\seop\,\dd\hat x$,
  \[
    \wt\varphi^\bullet-\varphi^\bullet \in \eps\cC^{d_0}_\seop\bigl(\cU\cap([0,\eps_0]\times I_t\times\R^3_{\hat x}\times\Sph^3)\bigr),
  \]
  i.e.\ $\eps^{-1}(\wt\varphi^\bullet-\varphi^\bullet)$ remains uniformly bounded upon application of up to $d_0$ many of the vector fields $\eps\pa_t$, $\pa_{\hat x}$, and vector fields on $\Sph^3$;
  \item on $\cU\cap\wt\Sigma^+$, we have
    \begin{equation}
    \label{EqTrapHamPhisu}
      \rho_\infty\eps^2 H_{\wt G}\wt\varphi^{\rm s}=\wt w^{\rm s}\wt\varphi^{\rm s}, \qquad
      \rho_\infty\eps^2 H_{\wt G}\wt\varphi^{\rm u}=-\wt w^{\rm u}\wt\varphi^{\rm u},
    \end{equation}
    where $\wt w^\bullet$ are functions with $\wt w^\bullet|_{\hat M_{t_0}}=w^\bullet_b$ for $\bullet={\rm u,s}$ and for all $t_0\in I$. More precisely, $\wt w^\bullet-w^\bullet\in\eps\cC^{d_0-1}_\seop$;
  \item $\wt w^{\rm s/u}>0$ on $\cU$.
  \end{enumerate}
\end{prop}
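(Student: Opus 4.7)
The plan is to regard the construction as a persistence problem for the $r$-normally hyperbolic trapped set $\pa\hat\Gamma_b^+$ of the Kerr flow under the $\eps$-parameterized perturbation built into the glued metric $\wt g$. The natural phase space is the se-cosphere bundle $\Sse^*\wt M$, on which $\sfH := \rho_\infty \hat\rho^2 H_{\wt G}$ is an se-vector field with coefficients in $\CI + \eps\cC_\seop^{d_0-1}$ (one derivative lost in passing from $\wt g$ to $H_{\wt G}$). Over $\hat M$, by Remark~\ref{RmkGlDynKerrLoc}, $\sfH$ is tangent to the fibers of $\hat\upbeta$ and equals (a positive multiple of) the reduced Kerr Hamiltonian $\rho_\infty \hat r^2 H^0_{\hat G_b}$ from Section~\ref{SssGlDynKerr}. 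Thus $\pa\Gamma^+ = I_\cC \times \pa\hat\Gamma_b^+$ is a codimension-three invariant submanifold of $\wt\Sigma^+ \cap \Sse^*_{\hat M}\wt M$, with the $\cC$-direction playing the role of a center direction.

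The first step is to invoke the abstract persistence theorem Theorem~\ref{ThmTrap} from Appendix~\ref{STrap}. Its hypotheses are verified at the Kerr level: $\pa\hat\Gamma_b^+$ is $r$-normally hyperbolic for every finite $r$ (following Dyatlov and Wunsch--Zworski, recalled in \S\ref{SssGlDynKerr}), and the stable/unstable manifolds admit defining functions $\varphi_b^{\rm s/u}$ verifying \eqref{EqGlDynKerrInvDefFn} with strictly positive rates $w_b^{\rm s/u}$. The theorem, formulated so as to respect the anisotropic se-regularity scale, then yields perturbed stable and unstable manifolds inside $\cU$ which are $\sfH$-invariant, restrict to $I_\cC \times \pa\hat\Gamma_b^{\rm s/u,+}$ over $\hat M$, and admit defining functions in the class $\varphi_b^{\rm s/u} + \eps\cC_\seop^{d_0}$.

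The second step is to upgrade defining functions of the perturbed manifolds to functions $\wt\varphi^{\rm s/u}$ satisfying the transport equations \eqref{EqTrapHamPhisu}. Take any defining function $\tilde\psi^{\rm s}$ of the perturbed stable manifold that agrees with $\varphi_b^{\rm s}$ on $\hat M$; invariance of this manifold under $\sfH$ forces $\sfH\tilde\psi^{\rm s} = a\tilde\psi^{\rm s}$ for some $a \in \cC_\seop^{d_0-1}$, and restriction to $\hat M$ identifies $a|_{\hat M_{t_0}} = w_b^{\rm s}$, hence $a > 0$ throughout a sufficiently small $\cU$. Setting $\wt\varphi^{\rm s} := \tilde\psi^{\rm s}$ and $\wt w^{\rm s} := a$ gives the required identity; the unstable case is analogous, with a sign flip producing $-\wt w^{\rm u}$. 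Positivity of $\wt w^{\rm s/u}$ is inherited from $w_b^{\rm s/u} > 0$ by continuity across $\eps=0$, after shrinking $\cU$ if needed.

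The chief technical obstacle, addressed in the appendix, is that the classical invariant-manifold theorems of Hirsch--Pugh--Shub produce $C^k$ manifolds from isotropic $C^k$ perturbations, whereas we need joint regularity compatible with the se-structure: in the rescaled coordinates $\hat t = (t-t_0)/\eps$, $\hat x = x/\eps$, the perturbation of the Kerr Hamiltonian is controlled only in the anisotropic $\eps\cC_\seop^{d_0}$ class, in which $\eps$ plays the dual role of scaling parameter and independent variable. Theorem~\ref{ThmTrap} is therefore proved by adapting the graph-transform argument of \cite{HirschPughShubInvariantManifolds, HintzPolyTrap} so that the fixed-point construction is carried out in a function space adapted to the se-scale, producing output in the required $\eps\cC_\seop^{d_0}$ class; the positivity assumptions $w_b^{\rm s/u} > 0$ provide the spectral gap needed for the contraction mapping to close uniformly in $\eps$.
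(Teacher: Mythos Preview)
Your proposal is correct and follows essentially the same approach as the paper: reduce to the abstract persistence result Theorem~\ref{ThmTrap}, verify its hypotheses via the $r$-normal hyperbolicity of the Kerr trapped set, and then read off the transport equations~\eqref{EqTrapHamPhisu} from the invariance of the perturbed (un)stable manifolds together with simple vanishing of their defining functions.

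One technical step you gloss over, which the paper carries out explicitly, is the reduction to the setting of Theorem~\ref{ThmTrap}. That theorem is formulated on a fixed product $\cX = [0,\eps_0]\times I\times\hat\cX$ with a codimension-$2$ trapped set $\hat\Gamma\subset\hat\cX$, whereas here one must work inside the $\eps$-dependent characteristic set $\pa\wt\Sigma^+\subset\Sse^*\wt M$. The paper handles this by writing $\pa\wt\Sigma^+$ near $\pa\Gamma^+$ as a normal graph over the fixed Kerr characteristic set $\pa\hat\Sigma_b^+$ via a function $\wt\Phi$ of class $\eps\cC_\seop^{d_0}$ (obtained by a contraction-mapping argument from the nondegeneracy of $\hat G_b$), and then pulling back the rescaled Hamiltonian vector field along $\wt\Phi$ to obtain the vector field $V$ to which Theorem~\ref{ThmTrap} applies. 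Your description implicitly assumes this has been done; it is not a gap in the argument, but it is the one place where a little care with regularity bookkeeping is needed (and accounts for the loss of one se-derivative in passing from $\wt\varphi^\bullet$ to $\wt w^\bullet$).
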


\section{Estimates for linear tensorial wave equations on glued spacetimes}
\label{SEst}

We fix a glued spacetime $(\wt M,\wt g)$ associated with $(M,g)$, the inextendible timelike geodesic $\cC\subset M$, subextremal Kerr black hole parameters $b=(\bhm,\bha)$, and Fermi normal coordinates $t\in I_\cC\subset\R$, $x\in\R^3$ around $\cC$ as in Definition~\ref{DefGl}; we furthermore write $\hat x=\frac{x}{\eps}$. Local defining functions of $\hat M$ and $M_\circ$ are thus $\hat\rho=(\eps^2+|x|^2)^{1/2}$ and $\rho_\circ=\la\hat x\ra^{-1}$. We shall prove estimates for $\eps$-dependent wave operators $L_\eps=\wt L|_{M_\eps}$ on $M_\eps$ associated with $g_\eps=\wt g|_{M_\eps}$.

\textbf{Operator class.} We specify the class of wave operators by the following data.
\begin{enumerate}[label=(4.$\wt L$.\alph*),ref=4.$\wt L$.\alph*]
\item\label{ItEstLBundle} $\wt E\to\wt M\setminus\wt K^\circ$: a smooth vector bundle. We write $\hat E:=\wt E|_{\hat M}\to\hat M$ and assume that
  \begin{equation}
  \label{EqEstLBundle}
    \hat E=\hat\pi^*\hat\cE
  \end{equation}
  for a vector bundle $\hat\cE\to\hat X_b$ where, in the coordinates $t\in I_\cC$, $\hat x\in\ol{\R^3}$ on $\hat M$, we write $\hat\pi(t,\hat x)=\hat x$. (In other words, we are given a compatible family of isomorphisms $\hat E|_{(t,\hat x)}\cong\hat E|_{(t',\hat x)}$ for all $t,t'\in I_\cC$, with smooth dependence on $t,t',\hat x$.)
\item\label{ItEstLOp} $\wt L\in\hat\rho^{-2}(\CI+\cC_\seop^{\infty,1,1})\Diffse^2(\wt M\setminus\wt K^\circ;\wt E)$: a principally scalar operator, with principal symbol equal to the dual metric function $\wt G\in\hat\rho^{-2}(\CI+\cC_\seop^{\infty,1,1})P^{[2]}(\Tse^*_{\wt M\setminus\wt K^\circ}\wt M)$ of $\wt g$. We furthermore assume that the normal operators $N_{\hat M_{t_0}}(\eps^2\wt L)$ are of class $\Diff_{\tbop,\rm I}^{2,-2,0}(\hat M_b;\hat E|_{\hat M_{t_0}})=\rho_\circ^2\Diff_{\tbop,\rm I}^2(\hat M_b;\hat E|_{\hat M_{t_0}})$ and independent of $t_0$; that is,
  \begin{equation}
  \label{EqEstL}
    L := N_{\hat M_{t_0}}(\eps^2\wt L) \in \rho_\circ^2\Diff_{\tbop,\rm I}^2(\hat M_b;\hat\pi^*\hat\cE),
  \end{equation}
  where $\hat\pi\colon\hat M_b\to\hat X_b$ now denotes the projection $(\hat t,\hat x)\mapsto\hat x$.
\end{enumerate}

\textbf{Model operators associated with $\wt L$.} \emph{For notational simplicity only}, we shall, for the remainder of this section, make the stronger assumption~\eqref{EqGlDynwtG} on the metric and
\[
  \wt L \in \hat\rho^{-2}\Diffse^2(\wt M\setminus\wt K^\circ;\wt E) = \Diffse^{2,0,2}(\wt M\setminus\wt K^\circ;\wt E)
\]
on the operator (unless noted otherwise). We leave the (purely notational) modifications to our arguments below in the general case to the reader. Given $\wt L$, we further recall the notation $\wt\Sigma=\bigsqcup_\pm\wt\Sigma^\pm$ for its characteristic set; and we set
\begin{equation}
\label{EqEstLepsLcirc}
\begin{split}
  L_\eps &= \wt L|_{M_\eps\setminus\wt K^\circ} \in \Diff^2(M_\eps\setminus\wt K^\circ; \wt E),\qquad \eps>0, \\
  L_\circ &:= N_{M_\circ}(\wt L) \in \hat\rho^{-2}\Diffe^2(M_\circ;E_\circ),\qquad E_\circ:=\wt E|_{M_\circ}.
\end{split}
\end{equation}
Thus, $L_\circ$ is principally scalar with principal symbol given by the dual metric function $G$ of $g$. Also the operator $L$ in~\eqref{EqEstL} is principally scalar, and indeed a wave operator on the Kerr spacetime $(\hat M_b^\circ,\hat g_b)$ which commutes with time translations; we denote its spectral family (see~\eqref{EqF3SpecFam}) by
\[
  \hat L(\sigma) \in \Diff^2(\hat X_b^\circ;\hat\cE).
\]
(In the notation of~\S\ref{SssFseNormal}, we have $\hat L(\sigma)=\hat N(\eps^2\wt L,\sigma)$.) From the general discussion in~\S\S\ref{SsF3}--\ref{SsFse}, we have, for $\hat\sigma,z\in\C$ with $|\hat\sigma|=|z|=1$,
\begin{equation}
\label{EqEstLRegimes}
  (\hat L(\hat\sigma|\sigma|))_{|\sigma|\in[0,1]} \in |\sigma|^2\Diffscbt^{2,0,0,2}(\hat X_b;\hat\cE)=\Diffscbt^{2,0,-2,0}, \qquad
  \hat L(h^{-1}z) \in h^{-2}\Diff_{\scop,\semi}^2(\hat X_b;\hat\cE).
\end{equation}
The low energy spectral family thus has transition face normal operators (see also~\eqref{EqFseNhatMWeighted} and~\eqref{EqFseNhatMWeightedtf} with $m=2$, $\ell_\circ=0$, $\hat\ell=2$) which we denote
\begin{equation}
\label{EqLtf}
  L_\tface(\hat\sigma) = N_\tface\bigl(|\sigma|^{-2}(\hat L(\hat\sigma|\sigma|))_{|\sigma|\in[0,1]}\bigr) \in \Diff_{\scop,\bop}^{2,0,2}(\tface;\pi_\tface^*\hat\cE|_{\pa\hat X_b}),\qquad \tface=[0,\infty]_{\rho'}\times\Sph^2,
\end{equation}
where $\rho'=(\hat r|\sigma|)^{-1}$ and $\pi_\tface\colon\tface\to\Sph^2=\pa\hat X_b$ is the projection $(\rho',\omega)\mapsto\omega$. This can be identified with the ($t_0$-independent) reduced normal operator of $L_\circ$ in the manner described in Lemma~\ref{LemmaFseIdent}.

The basic example of this setup is the operator $\wt L=(L_\eps)_{\eps\in(0,1)}$ defined by $L_\eps=\Box_{g_\eps}$; this is discussed in detail in~\S\ref{SSc}.

\bigskip

\textbf{Plan for the remainder of this section.} We will begin by proving estimates controlling the regularity of solutions of $\wt L u=f$ on standard domains $\wt\Omega$ (as in Definition~\ref{DefGlDynStd}\eqref{ItGlDynStdC}) in weighted se-Sobolev spaces, with constants that are uniform in $\eps$. These estimates take the form
\begin{equation}
\label{EqEst}
  \|u\|_{H_{\seop,\eps}^{\sfs,\alpha_\circ,\hat\alpha}(\Omega_\eps)^{\bullet,-}} \leq C\Bigl( \|\wt L u\|_{H_{\seop,\eps}^{\sfs,\alpha_\circ,\hat\alpha-2}(\Omega_\eps)^{\bullet,-}} + \|u\|_{H_{\seop,\eps}^{\sfs_0,\alpha_\circ,\hat\alpha}(\Omega_\eps)^{\bullet,-}} \Bigr)
\end{equation}
for suitable orders $\sfs_0<\sfs\in\CI(\Sse^*_{\wt\Omega}\wt M)$ and weights $\alpha_\circ,\hat\alpha\in\R$; see~\S\ref{SsEstStd} for the proof of this estimate, and \S\ref{SsEstFn} for a precise definition of the norms involved. (Recall from Remark~\ref{RmkFseAction} that we write $\wt L u$ for $L_\eps u$, with the value of $\eps$ indicated in the subscript of the se-Sobolev norms.) To prove this estimate, we combine the following (micro)local estimates:
\begin{itemize}
\item microlocal propagation estimates near the incoming and outgoing radial sets $\cR_{\rm in}$, $\cR_{\rm out}$ over $\pa\hat M$ from Definition~\ref{DefGlDynInOut} (see~\S\ref{SsEstRad});
\item a microlocal propagation estimate near the generalized radial set $\cR_{\cH^+}$ over the event horizon from Definition~\ref{DefGlDynKerrHor} and~\eqref{EqGlDynStdRadHor} (see~\S\ref{SsEstHor});
\item a microlocal propagation estimate near the normally hyperbolic trapping from Definition~\ref{DefGlDynKerrTrap} and~\eqref{EqGlDynStdTrap} (see~\S\ref{SsEstTrap});
\item an energy estimate near the initial and final hypersurfaces of $\wt\Omega$ (see~\S\ref{SsEstEn}).
\end{itemize}

The radial point estimates are proved in the standard manner using positive commutator arguments (utilizing only the principal symbol map, here on the algebra of se-pseudodifferential operators with se-regular coefficients) as in \cite{MelroseEuclideanSpectralTheory,VasyMicroKerrdS}. Similarly, we shall be brief on the details in the energy estimates. The trapping estimate roughly follows the two-step commutator proof of \cite{DyatlovSpectralGaps}, though the implementation is rather delicate, and hence we provide more details here. We also use elliptic and real principal type \cite{DuistermaatHormanderFIO2} estimates, see Lemmas~\ref{LemmaFVarsseEll} and \ref{LemmaFVarsseProp}.

\begin{rmk}[Other settings]
\label{RmkEstOther}
  In the context of Remark~\ref{RmkGlOther}, one can skip \S\S\ref{SsEstHor}--\ref{SsEstTrap} if $L$ is a wave operator on an asymptotically flat nontrapping spacetime without horizons (though possibly non-scalar and with lower order terms such as potentials, with e.g.\ a singular potential term $\eps^{-2}V(\frac{x}{\eps})$ of $\wt L$ contributing the potential term $V(\hat x)$ to $L$).
\end{rmk}

The estimate~\eqref{EqEst} does not yet give uniform control of $u$ by $\wt L u$ since the second term on the right is not \emph{small}, even when $\eps$ is small. The strategy for improving the error is to localize it to neighborhoods of the two boundary hypersurfaces $M_\circ$ and $\hat M$ and invert the respective normal operators.
\begin{itemize}
\item To control the localization $\hat\chi u$ of $u$ to a neighborhood of $\hat M$, we replace $\wt L$ by ($\eps^{-2}$ times) the Kerr model $L$. Using the Fourier transform in $\hat t$ and estimates for the resolvent $\hat L(\sigma)^{-1}$, one expects to be able to estimate $\hat\chi u$ in 3b-Sobolev spaces (cf.\ Lemma~\ref{LemmaFseHse3b}) by $L(\hat\chi u)$. This would allow one to replace the orders on the error term in~\eqref{EqEst} by $\alpha_\circ,\hat\alpha-1$. The resolvent analysis, however, depends crucially on subtle information on $\hat L(\sigma)$ (mode stability and suitable behavior near $\sigma=0$) which is strongly problem-dependent, unlike the previous arguments. In~\S\ref{SsEstFT}, we thus only prove those (Fredholm and high energy) estimates on $\hat L(\sigma)$ which do not require this information.
\item The localization $\chi_\circ u$ of $u$ to a neighborhood of $M_\circ$ can be controlled by $L_\circ(\chi_\circ u)$ on edge Sobolev spaces; this uses the main result from \cite{HintzConicWave} which we recall in~\S\ref{SsEstMc}. Since $\wt L$ differs from $L_\circ$ by an se-operator whose coefficients vanish at $M_\circ$, this can be used to improve the weights of the error term in~\eqref{EqEst} to $\alpha_\circ-1,\hat\alpha$.
\end{itemize}

Regarding the first point, the resolvent analysis \emph{for scalar waves} on Kerr has been performed in \cite{HintzKdSMS} in function spaces compatible with the setup of the present paper; we recall this in~\S\ref{SsScUnif}. The case of linearized gravity (which presents additional difficulties due to the existence of zero energy bound states and resonances) is discussed in \cite{HintzGlueLocIII}.

\emph{If one can control both localizations}, say in the simple setting of scalar waves, then one can improve the error term in~\eqref{EqEst} to
\[
  \|u\|_{H_{\seop,\eps}^{\sfs',\alpha_\circ-\delta,\hat\alpha-\delta}}\leq C\eps^\delta\|u\|_{H_{\seop,\eps}^{\sfs,\alpha_\circ,\hat\alpha}}
\]
for some $\sfs'\leq\sfs$ and $\delta>0$. When $\eps$ is so small that $C\eps^\delta<\frac12$, one can therefore absorb this error into the left hand side of~\eqref{EqEst} and obtain the desired uniform control of $u$.

We remark that uniform estimates on standard domains disjoint from $\hat M$, as in Definition~\ref{DefGlDynStd}\eqref{ItGlDynStdNotC} (or also in Definition~\ref{DefGlDynStd}\eqref{ItGlDynStdC} but restricted to $\eps\geq\eps_1>0$) are standard estimates for wave equations with smooth coefficients. Thus, \textit{unless stated otherwise, all standard domains $\wt\Omega\subset\wt M$ will, for the remainder of this section, be associated with the standard domain
\begin{equation}
\label{EqEstStdDomain}
  \Omega=\Omega_{t_0,t_1,r_0}\subset U_{\rm Fermi}\subset M
\end{equation}
for some fixed choices of} $t_0,t_1\in I_\cC$, $r_0>0$. We will need to localize sharply near $t=t_1$; we phrase this geometrically using the blow-up $[\wt M;\hat M_{t_1}]$ and the associated blow-down map
\begin{equation}
\label{EqEstBlowup}
  \upbeta\colon[\wt M;\hat M_{t_1}]\to\wt M.
\end{equation}
(See the left part of Figure~\ref{FigFseRelGeo}.)

\subsection{Extendible and supported se-Sobolev spaces}
\label{SsEstFn}

Since we aim to prove uniform estimates such as~\eqref{EqEst} on $\eps$-dependent domains in $\eps$-dependent supported/extendible Sobolev spaces, we carefully define here the relevant norms.

Set $T:=t_1-t_0+10$ and fix $R>r_0+2(t_1-t_0)+10$; then $\Omega\subset(t_0-1,t_0+T+1)\times\{|x|<R\}$. Therefore, we can regard $\Omega$, and more generally any $\Omega_{t'_0,t'_1,r'_0}$ with $t_0-2<t'_0<t'_1<t_1+2$ and $r'_0<r_0+1$, as a subset of the \emph{compact} manifold (without boundary)
\[
  M' := \Sph^1_T \times \TT^3_{2 R},
\]
where $\Sph^1_T=[t_0-5,t_1+5]/(t_0-5\sim t_1+5)$ is a circle of circumference $T$, and where $\TT^3_{2 R}:=[-R,R]^3/\sim$, where $\sim$ identifies opposite faces. Furthermore, $\cC\cap\Omega\subset M$ is then equal to $\cC'\cap\Omega\subset M'$ where
\[
  \cC' := \Sph^1_T \times \{0\}.
\]
Setting $\wt M'=[[0,1)_\eps\times M';\{0\}\times\cC']$ as in Definition~\ref{DefFse}, we then define weighted se-Sobolev spaces $H_{\seop,\eps}^{\sfs,\alpha_\circ,\hat\alpha}(M')$ with underlying density $|\dd t\,\dd x|$ as in~\S\ref{SssFVarse}; here $\sfs\in\CI(\Sse^*\wt M')$.

We now turn to supported/extendible spaces; for this purpose, we write $M,\wt M,\cC$ for $M',\wt M',\cC'$; and we drop the weights $\alpha_\circ,\hat\alpha$ from the notation for brevity. Given a closed subset $\cK\subset M_\eps$, we equip $\dot H_{\seop,\eps}^\sfs(\cK)\subset H_{\seop,\eps}^\sfs(M)$ with the induced norm and $\bar H_{\seop,\eps}^\sfs(\cK^\circ)=H_{\seop,\eps}^\sfs(M)/\dot H_{\seop,\eps}^\sfs(M\setminus\cK^\circ)$ with the quotient norm. Of primary interest to us are mixed spaces on standard domains $\wt\Omega\subset\wt M$ associated with the standard domain~\eqref{EqEstStdDomain}, denoted
\[
  H_{\seop,\eps}^\sfs(\Omega_\eps)^{\bullet,-}.
\]
In this space, the norm of an element $u=\tilde u|_{\Omega_\eps^\circ}$, with $\tilde u\in\dot H_{\seop,\eps}^\sfs(\{t\notin(t_0-1,t_0)\})$,\footnote{The notation $\dot H_{\seop,\eps}^\sfs(\{t\geq t_0\})$ makes no sense here since the time coordinate is wrapped up on a circle.} is the quotient norm in $\dot H_{\seop,\eps}^\sfs(\{t\notin(t_0-1,t_0)\})/\dot H_{\seop,\eps}^\sfs(\cK_\eps)$ where $\cK_\eps$ is the closure in $M_\eps\cong M$ of the complement of $\Omega_\eps$ in $\{t\notin(t_0-1,t_0)\}$. Thus, by definition, for every $u\in H_{\seop,\eps}^\sfs(\Omega_\eps)^{\bullet,-}$ there exists a (unique) $\tilde u\in\dot H_{\seop,\eps}^\sfs(\{t\notin(t_0-1,t_0)\})$ realizing the quotient norm, i.e.\ the norms of $u$ and $\tilde u$ are equal.
\begin{rmk}[Working locally]
\label{RmkEstFnLoc}
  In practice, we will only work locally near $\wt\Omega$ and only specify a fixed order function $\sfs$ over (a neighborhood of) $\wt\Omega$; when writing $H_{\seop,\eps}^\sfs(\Omega_\eps)^{\bullet,-}$, we then tacitly \emph{fix} an extension of $\sfs$ to a globally defined element of $\CI(\Sse^*\wt M)$ to define the norm.
\end{rmk}

The uniform relationships between se- and edge or 3b-Sobolev norms stated in Proposition~\ref{PropFVarseSobRel} remain valid for supported/extendible spaces. To state this precisely, we need to define these spaces in the edge setting on $M_\circ=[M;\cC]$ where $M=\Sph^1_T\times\TT_{2 R}^3$: for the standard domain $\Omega=\Omega_{t_0,t_1,r_0}$, we define
\[
  \He^\sfs(\Omega)^{\bullet,-}
\]
as the space of restrictions to $\Omega^\circ$ of elements of $\He^\sfs(M_\circ)$ which vanish for $t\in[t_0-1,t_0]$. In the 3b-setting, we note that the image of $\Omega_\eps$ under the map $\Psi_\eps(t,x)=(\frac{t-t_0}{\eps},\frac{x}{\eps})$ is equal to
\[
  \hat\Omega_\eps = \bigl\{ (\hat t,\hat x)\in\hat M_b^\circ \colon 0\leq\hat t\leq\eps^{-1}t_1,\ \bhm\leq|\hat x|\leq \eps^{-1}(r_0+2(t_1-\eps\hat t)) \bigr\}.
\]

\begin{prop}[Supported/extendible se- and edge or 3b-Sobolev spaces]
\label{PropEstFnSobRel}
  Let $\sfs\in\CI(\Sse^*\wt M)$, $\alpha_\circ,\hat\alpha\in\R$. Then the uniform norm bounds~\eqref{EqFVarseSobRele} and \eqref{EqFVarseSobRel3b} remain valid when using the spaces $H_{\seop,\eps}^{\sfs,\alpha_\circ,\hat\alpha}(\Omega_\eps)^{\bullet,-}$, $\He^{\sfs,\ell}(\Omega)^{\bullet,-}$, and $\Htb^{\sfs,\alpha_\cD,0}(\hat\Omega_\eps)^{\bullet,-}$ with the orders stated there.
\end{prop}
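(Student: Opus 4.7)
The plan is to reduce the proposition directly to Proposition~\ref{PropFVarseSobRel} by unwinding the definitions of the supported/extendible norms as quotient/subspace norms, and checking that the maps relating the se-, edge-, and 3b-settings respect the support conditions imposed by the standard domain structure of $\Omega_\eps$, $\Omega$, and $\hat\Omega_\eps$.

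\textbf{3b relationship.} The map $\Psi_\eps(t,x) = \bigl(\tfrac{t-t_0}{\eps},\tfrac{x}{\eps}\bigr)$ is a diffeomorphism $\Omega_\eps \to \hat\Omega_\eps$, identifying the initial boundary $X_{t_0,t_1,r_0}$ with $\{\hat t = 0\}$ and mapping the final boundary hypersurfaces $Y^-_{t_0,t_1,r_0}$, $Y^\wedge_{t_0,t_1,r_0}$, $Y^|_{t_0,t_1,r_0}$ to the corresponding final hypersurfaces of $\hat\Omega_\eps$. Given $u \in H_{\seop,\eps}^{\sfs,\alpha_\circ,\hat\alpha}(\Omega_\eps)^{\bullet,-}$, I would take a norm-minimizing representative $\tilde u \in \dot H_{\seop,\eps}^{\sfs,\alpha_\circ,\hat\alpha}(\{t \notin(t_0-1,t_0)\}\cap\cK_\eps^c)$, which exists by the Hilbert space structure (orthogonal projection onto the complement of $\dot H_{\seop,\eps}^\sfs(\cK_\eps)$). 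Its pushforward $(\Psi_\eps)_*\tilde u$ is a supported distribution on $\cM$ vanishing in $\hat t < 0$, representing $(\Psi_\eps)_* u$ as a class in $\Htb^{\hat\sfs\pm\delta,\alpha_\circ-\hat\alpha,0}(\hat\Omega_\eps)^{\bullet,-}$. Applying~\eqref{EqFVarseSobRel3b} to $\tilde u$ (after choosing a $\lambda$ compatible with the ambient compactification of \S\ref{SsEstFn}) gives one direction of the bound; for the converse, one takes a norm-minimizing extension on the 3b side and pulls back via $\Psi_\eps^{-1}$, observing that such a pullback is again supported past $t = t_0$.

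\textbf{Edge relationship.} Here the underlying map on distributions is simply the identity at fixed $\eps > 0$, modulo the $\eps$-dependent cutoff $\chi_\circ = \chi(\eps/|x|)$. Given $u \in H_{\seop,\eps}^{\sfs,\alpha_\circ,\hat\alpha}(\Omega_\eps)^{\bullet,-}$ with norm-minimizing extension $\tilde u$, the function $(\chi_\circ \tilde u)(\eps,-)$ lies in $\dot H_\eop^{\sfs_\circ\pm\delta,\hat\alpha-\alpha_\circ}$ with support in $\{t \geq t_0\} \cap \{|x| > \eps/c_0\}$, and hence represents $(\chi_\circ u)(\eps,-)$ in $\He^{\sfs_\circ\pm\delta,\hat\alpha-\alpha_\circ}(\Omega)^{\bullet,-}$. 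The uniform inequality of Proposition~\ref{PropFVarseSobRel}\eqref{ItFVarseSobRele} applied to $\tilde u$ then yields the corresponding inequality for $u$. The converse starts from a norm-minimizing extension on the edge side, multiplied by $\chi_\circ$ and viewed as a function on $M_\eps$; since $\chi_\circ$ vanishes near $\{|x| \leq \eps \bhm\}$ for $c_0 < 1/\bhm$, no contribution to the inner-boundary $Y^|$-structure arises.

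\textbf{Main technical point.} The only point requiring care is verifying that the various support conditions—vanishing at $t = t_0$, and extendibility past the final hypersurfaces—are preserved under all the operations involved (multiplication by the $t$-independent cutoff $\chi_\circ$, restriction to $M_\eps$, and the rescaling $\Psi_\eps$, which maps the lateral final hypersurfaces to their rescaled counterparts while preserving the structure at $t = t_0$). I do not anticipate any serious obstacle: the proposition is essentially a bookkeeping result, saying that the supported/extendible structure is determined entirely by the initial and final boundary hypersurfaces, and these correspond canonically under both the identity map (for edge) and under $\Psi_\eps$ (for 3b). In particular, the inner boundary $Y^|_{t_0,t_1,r_0}=\{|\hat x|=\bhm\}$, which exists for $\eps > 0$ but collapses to fiber infinity on $M_\circ$, is only relevant for the 3b equivalence, where it is respected by $\Psi_\eps$; it is invisible to the edge equivalence since $\chi_\circ$ vanishes on a neighborhood of it.
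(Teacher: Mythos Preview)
Your strategy—take the norm-minimizing extension on one side, transport it, apply Proposition~\ref{PropFVarseSobRel}—is exactly the paper's. The part you dismiss as bookkeeping, though, is where the actual work lies.

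Consider the 3b $\to$ se direction. The minimal norm extension $v'\in\Htb^{\hat\sfs+\delta}(\cM)$ of $(\Psi_\eps)_*u$ lives on all of $\cM$; its pullback $\Psi_\eps^*v'$ is a function on all of $\R^{1+3}$ with no support constraint, so it need not sit inside a fundamental domain of the compact model $M'=\Sph^1_T\times\TT^3_{2R}$ on which the se-norm is defined, and Proposition~\ref{PropFVarseSobRel}\eqref{ItFVarseSobRel3b} (which requires support in a fixed bounded set) does not apply to it directly. There is a related slip in your se $\to$ 3b sketch: the minimal-norm se-extension $\tilde u$ is only required to vanish on the arc $(t_0-1,t_0)$ of the circle factor, not on its complement, so the claim that $(\Psi_\eps)_*\tilde u$ vanishes for $\hat t<0$ is not justified without a further cutoff.

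The paper's remedy is to multiply $v'$ by $\chi(\eps\hat z)$ for $\chi\in\CIc(\R^4)$ equal to $1$ on $\Omega$ and supported nearby, and then to \emph{verify}---by computing the pushforwards to the 3b bounded-geometry charts~\eqref{EqFVar3b1}--\eqref{EqFVar3b2}---that this $\eps$-dependent function is a \emph{uniformly} bounded multiplier on 3b-Sobolev spaces. The edge case similarly uses an auxiliary cutoff $\chi_\circ^\sharp=\chi^\sharp(\eps/|x|)$ with $\chi^\sharp\equiv 1$ on $\supp\chi$, whose uniform boundedness on edge spaces is checked by the same kind of chart computation; note that $\chi_\circ^\sharp$ (not $\chi_\circ$) is the right cutoff here, since if $u'$ is the edge extension of $\chi_\circ u$ then $(\chi_\circ^\sharp u')|_\Omega=\chi_\circ u$ whereas $(\chi_\circ u')|_\Omega=\chi_\circ^2 u$. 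These multiplier bounds are the substance of the proof; once you insert them, the rest is indeed bookkeeping.
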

\begin{proof}
  For notational simplicity, we only discuss the case $\alpha_\circ=\hat\alpha=0$. The analogue of the first inequality in~\eqref{EqFVarseSobRele} then reads
  \begin{equation}
  \label{EqEstFnSobRele1}
    \|\chi_\circ u\|_{H_{\seop,\eps}^\sfs(\Omega_\eps)^{\bullet,-}} \leq C\|\chi_\circ u\|_{\He^{\sfs_\circ+\delta,0}(\Omega)^{\bullet,-}},\qquad \sfs_\circ=\sfs|_{\Sse^*_{M_\circ}\wt M};
  \end{equation}
  here $\delta>0$ is fixed, and $\chi_\circ=\chi(\frac{\eps}{|x|})$ where $\chi\in\CIc([0,c_0))$ with $c_0>0$ depending on $\delta$, while $\eps\in(0,c_0]$. Shrinking $c_0$, we thus have $\sfs\leq\sfs_\circ+\delta$ for $\eps,\frac{\eps}{|x|}<2 c_0$. To prove~\eqref{EqEstFnSobRele1}, denote by $u'\in\He^{\sfs_\circ+\delta}(M_\circ)$ the minimal norm extension of $\chi_\circ u$ with $u'|_{[t_0-1,t_0]}=0$. Let $\chi^\sharp\in\CIc([0,2 c_0))$ be equal to $1$ on $[0,c_0]$, and let $\chi^\sharp_\circ=\chi^\sharp(\frac{\eps}{|x|})$. Note that $\supp\dd\chi_\circ^\sharp$ is contained in the set $\frac{\eps}{2 c_0}\leq|x|\leq\frac{\eps}{c_0}$; moreover, the pushforward of $\chi_\circ^\sharp$ under the map $(t,r,\omega)\mapsto(T,R,\omega):=(\frac{c_0}{\eps}t,\frac{c_0}{\eps}r,\omega)$ (cf.\ \eqref{EqFVarEBddMap}) is given by $(T,R,\omega)\mapsto\chi_\circ^\sharp(\frac{c_0}{R})$ (which is uniformly bounded in $\CI(U)$ in the notation of~\eqref{EqFVarEBddMap}). Therefore, multiplication by $\chi_\circ^\sharp$ is uniformly (in $\eps$) bounded on every edge Sobolev space; this implies the uniform bound
  \[
    \|\chi^\sharp_\circ u'\|_{\He^{\sfs_\circ+\delta}(M_\circ)} \leq C^\sharp\|\chi_\circ u\|_{\He^{\sfs_\circ+\delta}(\Omega)^{\bullet,-}}.
  \]
  We now apply the estimate~\eqref{EqFVarseSobRele} to $\chi^\sharp_\circ u'$ to deduce, using $\chi_\circ\chi_\circ^\sharp=\chi_\circ$, the estimate
  \[
    \|\chi_\circ u'\|_{H_{\seop,\eps}^\sfs(M)} \leq C\|\chi_\circ u'\|_{\He^{\sfs_\circ+\delta}(M_\circ)} \leq C'\|\chi_\circ^\sharp u'\|_{\He^{\sfs_\circ+\delta}(M_\circ)} \leq C'C^\sharp\|\chi_\circ u\|_{\He^{\sfs_\circ+\delta}(\Omega)^{\bullet,-}},
  \]
  where in the second inequality we use the uniform boundedness of multiplication by $\chi_\circ$ on every edge Sobolev space. Since $\chi_\circ u'$ is an extension of $\chi_\circ u$ to $M$, this proves~\eqref{EqEstFnSobRele1}. The analogue of the second inequality in~\eqref{EqFVarseSobRele} is proved similarly: one cuts off the minimal norm extension $u'$ of $\chi_\circ u$ in $H_{\seop,\eps}^\sfs(M)$ using the uniformly bounded multiplication operator $\chi_\circ^\sharp$ and applies~\eqref{EqFVarseSobRele} to $\chi_\circ^\sharp u'$.

  Next, we prove the analogue of the 3b-estimate~\eqref{EqFVarseSobRel3b},
  \[
    \|u\|_{H_{\seop,\eps}^\sfs(\Omega_\eps)^{\bullet,-}} \leq C\eps^{\frac{n+1}{2}}\|v\|_{H_\tbop^{\hat\sfs+\delta}(\hat\Omega_\eps)^{\bullet,-}},\qquad v=(\Psi_\eps)_*u.
  \]
  Denote by $v'\in\Htb^{\hat\sfs+\delta}(\cM)$ the minimal norm extension of $v$ which vanishes for $\hat t\leq 0$. We now note that multiplication by $\chi(\eps\hat z)$ (with $\hat z=(\hat t,\hat x)$), where $\chi\in\CIc(\R^4)$, is uniformly bounded on every 3b-Sobolev space; this follows from the uniform (in $\eps$) boundedness in $\CI$ of the pushforwards of $\chi(\eps\hat z)$ to the 3b-unit cells~\eqref{EqFVar3b1}--\eqref{EqFVar3b2}, uniformly in the parameters $j,k$ of the cells. We choose $\chi$ to be equal to $1$ on $\Omega$ and $0$ outside a small neighborhood thereof. Then the norm of $\chi(\eps\hat z)v'\in\Htb^{\hat\sfs+\delta}(\cM)$ is uniformly (in $\eps$) bounded by $\|v\|_{H_\tbop^{\hat\sfs+\delta}(\hat\Omega_\eps)^{\bullet,-}}$; and $u':=\Psi_\eps^*(\chi v')$ restricts to $u=\Psi_\eps^*v$ on $\Omega_\eps$. By~\eqref{EqFVarseSobRel3b}, we thus have the uniform estimate
  \[
    \|u\|_{H_{\seop,\eps}^\sfs(\Omega_\eps)^{\bullet,-}} \leq \|u'\|_{H_{\seop,\eps}^\sfs(M)} \leq C\|\chi v'\|_{H_\tbop^{\hat\sfs+\delta}(\cM)} \leq C'\|v'\|_{H_\tbop^{\hat\sfs+\delta}(\cM)} = C'\|v\|_{H_\tbop^{\hat\sfs+\delta}(\hat\Omega_\eps)^{\bullet,-}},
  \]
  as claimed. The analogue of the second inequality in~\eqref{EqFVarseSobRel3b} is proved similarly.
\end{proof}

\subsection{Microlocal propagation near the radial sets over \texorpdfstring{$\pa M_\circ$}{the boundary of the front face}}
\label{SsEstRad}

Near the radial sets $\cR_{\rm in/out}$ from Definition~\ref{DefGlDynInOut} and near $t^{-1}([t_0,t_1])$, we work with the coordinates
\[
  t,\qquad \hat\rho:=r=|x|,\qquad \rho_\circ:=\frac{\eps}{r},\qquad \omega=\frac{x}{|x|}\in\Sph^2,
\]
and with the fiber-linear coordinates on $\Tse^*\wt M$ given by writing se-covectors as $-\sigma_\seop\,\frac{\dd t}{r}+\xi_\seop\,\frac{\dd r}{r}+\eta_\seop$ where $\eta_\seop\in T^*\Sph^2$. These constitute local coordinates near $\wt\Sigma^+$. (They are also local coordinates near $\wt\Sigma^-$, but we shall only work near $\wt\Sigma^+$ below. The analogous estimates in $\wt\Sigma^-$ are then proved in the same manner upon replacing $\wt G$ by $-\wt G$.) Thus, we work in $\sigma_\seop>0$, and we can use $\rho_\infty=\sigma_\seop^{-1}$ as a defining function of fiber infinity, and the projective fiber coordinates $\hat\xi_\seop=\frac{\xi_\seop}{\sigma_\seop}$, $\hat\eta_\seop=\frac{\eta_\seop}{\sigma_\seop}$ from~\eqref{EqGlDynInOutCoord2}.

To state the sharp localization near $t=t_1$ efficiently, we introduce
\[
  \tau = \frac{t-t_1}{r}.
\]
Thus, $\tau$ is a smooth coordinate near the interior of the front face of~\eqref{EqEstBlowup} in $\hat r>0$.

\begin{prop}[Radial point estimate near $\cR_{\rm in}^+$]
\label{PropEstRadIn}
  Fix a positive definite fiber inner product on $\wt E$ which over $\hat M$ is the pullback of a fiber inner product on $\hat\cE$. Use the volume density $|\dd g_\eps|$ on $M_\eps\setminus\wt K^\circ$. Define the quantity
  \begin{equation}
  \label{EqEstRadInTheta}
    \vartheta_{\rm in} := \sup_{\pa\cR_{\rm in}^+} {\rm spec}\Bigl[\rho_\infty\sigmase^1\Bigl(\hat\rho^2\frac{\wt L-\wt L^*}{2 i}\Bigr)\Bigr] \in \R,
  \end{equation}
  where ${\rm spec}$ denotes the spectrum. Let $s,s_0,N,\alpha_\circ,\hat\alpha\in\R$, and suppose that $s>s_0$ and
  \begin{equation}
  \label{EqEstRadInThr}
    s_0 + \alpha_\circ - \hat\alpha > \frac12(-1+\vartheta_{\rm in}).
  \end{equation}
  Define $\cK:=\{\rho_\infty=\hat\rho=\rho_\circ=0,\ \hat\xi_\seop=-1,\ t\geq t_0,\ \tau\leq 0\}\subset\upbeta^*(\Sse^*\wt M)$. Then for all neighborhoods $\cU\subset\upbeta^*(\Sse^*\wt M)$ of $\cK$ and all $\chi\in\CIc([\wt M;\hat M_{t_1}])$ equal to $1$ near the base projection $\upbeta^{-1}(\pa M_\circ)\cap\{t\geq t_0,\ \tau\leq 0\}$ of $\cK$, there exist operators $B,E,G\in\tilde\Psi_\seop^0(\wt M)$ with $\chi B\chi=B$ etc.\ so that
  \begin{itemize}
  \item the operator wave front sets of $B,E,G$ are contained in $\cU$,
  \item $B$ is elliptic at $\cK$,
  \item $\WF'_\seop(E)\subset\{r>0\}$,
  \end{itemize}
  and a constant $C$ so that the estimate
  \begin{equation}
  \label{EqEstRadIn}
    \|B u\|_{H_{\seop,\eps}^{s,\alpha_\circ,\hat\alpha}} \leq C\Bigl( \|G\wt L u\|_{H_{\seop,\eps}^{s-1,\alpha_\circ,\hat\alpha-2}} + \|E u\|_{H_{\seop,\eps}^{s,\alpha_\circ,\hat\alpha}} + \|G u\|_{H_{\seop,\eps}^{s_0,\alpha_\circ,\hat\alpha}} + \|\chi u\|_{H_{\seop,\eps}^{-N,\alpha_\circ,\hat\alpha}}\Bigr)
  \end{equation}
  holds for all $u$ with support in $t\geq t_0$, in the strong sense that if the right hand side is finite, then so is the left hand side (with the stated bound).
\end{prop}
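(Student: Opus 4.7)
The proof will follow the standard positive commutator template for radial point estimates at a source (cf.\ \cite{MelroseEuclideanSpectralTheory, VasyMicroKerrdS}), implemented in the se-calculus $\tilde\Psi_\seop$ of \S\ref{SssFVarse}. I would first reduce to $\alpha_\circ=\hat\alpha=0$ by conjugating $\wt L$ with $\rho_\circ^{-\alpha_\circ}\hat\rho^{-\hat\alpha}$: this preserves the principal symbol $\wt G$ but shifts the subprincipal skew-adjoint symbol $\rho_\infty\sigma_\seop^1(\hat\rho^2\tfrac{\wt L - \wt L^*}{2i})$ on $\pa\cR_{\rm in}^+$ by a scalar proportional to $\alpha_\circ - \hat\alpha$, reducing \eqref{EqEstRadInThr} to an unweighted threshold $s_0 > \tfrac12(-1+\vartheta_{\rm in}')$. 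I then construct a commutant symbol
\[
  a = \phi(t-t_0)\,\psi(\tau)\,\chi_1(\hat\xi_\seop+1)\,\chi_2(|\hat\eta_\seop|^2)\,\chi_3(\hat\rho)\,\chi_4(\rho_\circ)\,\rho_\infty^{-(s-1/2)} \in S^{s-1/2}(\Tse^*\wt M),
\]
essentially supported in $\cU$ and elliptic at $\cK$, where $\phi \geq 0$, $\phi' \geq 0$ is supported in $\{t \geq t_0\}$; $\psi \geq 0$, $\psi' \leq 0$ is supported in $\{\tau \leq 0\}$; and the remaining cutoffs localize $\hat\xi_\seop$ near $-1$, $\hat\eta_\seop$ near $0$, and $\hat\rho,\rho_\circ$ near $0$. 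The monotone cutoffs $\phi,\psi$ are chosen so that $\phi\phi'$ and $-\psi\psi'$ admit smooth square roots. The sharp localization in $\tau$, required to reach $\{t=t_1\}$ without losing an $\cO(1)$-interval of time, is enabled by the se-regular coefficient framework of $\tilde\Psi_\seop$, with the symbol quantized on the blow-up $\upbeta\colon[\wt M;\hat M_{t_1}]\to\wt M$; I then form the symmetric quantization $A \in \tilde\Psi_\seop^{s-1/2}(\wt M)$.

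By the symbolic calculus, $i[\wt L, A^*A]$ has se-principal symbol $H_{\wt G}(a^2)$. Using the normal form \eqref{EqGlDynStdInOut} for $\rho_\infty H_{r^2 \wt G}$ near $\pa\cR_{\rm in}^+$, the action on the weight gives $\rho_\infty H_{r^2\wt G}(\rho_\infty^{-(2s-1)}) = 2(2s-1)\hat\xi_\seop\rho_\infty^{-(2s-1)}$, which at $\hat\xi_\seop = -1$ equals $-2(2s-1)\rho_\infty^{-(2s-1)}$. Combined with the subprincipal contribution from the symmetrization $\tfrac{\wt L-\wt L^*}{2i}A^*A + A^*A\tfrac{\wt L-\wt L^*}{2i}$ (whose principal symbol at $\cR_{\rm in}^+$ contributes $2\rho_\infty\sigma_\seop^1(\hat\rho^2\tfrac{\wt L-\wt L^*}{2i})\cdot a^2$), the net ``indicial'' matrix on $\cR_{\rm in}^+$ is
\[
  -2\bigl[(2s-1)\,I - \rho_\infty\sigma_\seop^1\bigl(\hat\rho^2\tfrac{\wt L-\wt L^*}{2i}\bigr)\bigr],
\]
which by \eqref{EqEstRadInThr} is uniformly negative definite on $\cR_{\rm in}^+$. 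The remaining terms in $H_{\wt G}(a^2)$ organize as: (i) sum-of-squares contributions from the monotone derivatives $\phi\phi'$, $-\psi\psi'$ together with the indicial matrix term, summing to a symbol $\check b^2$ whose quantization gives control of $\|Bu\|_{H_{\seop,\eps}^{s,0,0}}^2$ for an operator $B$ elliptic at $\cK$; (ii) squared error symbols from $\chi_3'$, $\chi_4'$, supported in $\{\hat\rho+\rho_\circ > \delta\} \subset \{r > 0\}$, contributing to $\|Eu\|_{H_{\seop,\eps}^{s,0,0}}$; and (iii) derivatives of $\chi_1,\chi_2$ supported where $\hat\xi_\seop$ is bounded away from $-1$ or $|\hat\eta_\seop|^2 > \delta'$, absorbed into $\|Gu\|_{H_{\seop,\eps}^{s_0,0,0}}$ using that $s_0$ itself exceeds the threshold.

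Pairing with $u$ via the standard identity $2\,\Im\langle \wt L u, A^*Au\rangle = \langle i[\wt L, A^*A]u, u\rangle + \langle (QC + CQ)u, u\rangle$ with $C = A^*A$ and $Q = \tfrac{\wt L-\wt L^*}{2i}$, then applying Cauchy–Schwarz and Young's inequality to the $\wt L u$ pairing (yielding the $\|G\wt L u\|_{H_{\seop,\eps}^{s-1,0,-2}}$ term), and invoking sharp Gårding for matrix-valued symbols on the negative definite indicial combination, gives \eqref{EqEstRadIn} with both sides a priori possibly infinite. The strong form is obtained by standard regularization: replace $a$ by $a_\varepsilon := a\cdot(1+\varepsilon\sigma_\seop)^{-M}\in S^{-\infty}$ for $M$ large, run the argument uniformly in $\varepsilon \in (0,1]$, and pass to the limit $\varepsilon \to 0$ via weak compactness. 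The chief technical difficulty is twofold: the tensorial character of $\wt L$ forces the use of sharp Gårding for systems (hence the spectral supremum in \eqref{EqEstRadInTheta}), and the $\eps$-uniform sharp localization in $\tau$ requires the se-regular coefficient calculus $\tilde\Psi_\seop$, whose symbols are merely se-regular—not smooth—in the base variables.
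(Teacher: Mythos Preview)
Your overall architecture---positive commutator at a saddle-type radial set, with a commutant built from a fiber weight $\rho_\infty^{-(s-1/2)}$ times cutoffs in $\hat\xi_\seop$, $|\hat\eta_\seop|^2$, $\hat\rho$, $\rho_\circ$, $t$, and $\tau$---matches the paper's. But two steps fail as written.

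\textbf{Missing $\hat\rho$ weight in the commutant; wrong threshold.} After your conjugation, $\wt L'\in\hat\rho^{-2}\Diffse^2$ still carries weight $2$ at $\hat M$. For the pairing $\langle\sC u,u\rangle$ (with $\sC=i[\wt L',A^*A]+\ldots$) to define a bounded form on $H_{\seop,\eps}^{s,0,0}$, the commutant must lie in $\tilde\Psi_\seop^{2s-1,0,-2}$, i.e.\ your $a$ needs an extra factor of $\hat\rho$. The paper builds this in via $\check a=\rho_\infty^{-s+1/2}\rho_\circ^{-\alpha_\circ}\hat\rho^{-(\hat\alpha-1)}\cdot(\text{cutoffs})$. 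Without it, your indicial computation gives $-2[(2s-1)I-\ell_1']$, hence threshold $s>\tfrac12(1+\vartheta_{\rm in}')$, which is off by $+1$ from the claimed $s>\tfrac12(-1+\vartheta_{\rm in}')$. The missing contribution is precisely $\sfH\log\hat\rho=-2$ at $\hat\xi_\seop=-1$, which shifts $-2(2s-1)$ to $-2(2s+1)$.

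\textbf{Incorrect sign sorting of the cutoff derivatives.} At $\pa\cR_{\rm in}^+$ (where $\hat\xi_\seop=-1$), the normal form \eqref{EqGlDynStdInOut} gives $\sfH\hat\rho=-2\hat\rho$ (sink direction) but $\sfH\rho_\circ=+2\rho_\circ$ and $\sfH|\hat\eta_\seop|^2>0$ (source directions). Hence only the $\hat\rho$-cutoff derivative produces a positive square requiring the a~priori term $E$; the $\rho_\circ$- and $|\hat\eta_\seop|^2$-cutoff derivatives are \emph{negative} squares and are simply dropped. Your grouping puts the $\rho_\circ$-term into $E$---but $\supp\chi_4'(\rho_\circ)$ meets $\hat M=\{r=0\}$, so the claimed inclusion $\{\hat\rho+\rho_\circ>\delta\}\subset\{r>0\}$ is false and the condition $\WF_\seop'(E)\subset\{r>0\}$ would be violated. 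Likewise the $|\hat\eta_\seop|^2$-term cannot be ``absorbed into $\|Gu\|_{H^{s_0}}$'': it lives on the characteristic set at full order $s$. It has the good sign and is discarded, not estimated.

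Both issues are repairable (add the $\hat\rho$ factor to $a$; redo the saddle-point sign analysis so that only $\hat\chi'(\hat\rho)$ feeds $E$), and once fixed your argument coincides with the paper's.
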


Due to the independence of the normal operators of $\wt L$ at $\hat M_{t_0}$ on $t_0$, the supremum in~\eqref{EqEstRadInTheta} can be computed over the compact set $\pa\cR_{\rm in}^+\cap\Sse^*_{\hat M_{t_0}}\wt M$. The value of $\vartheta_{\rm in}$ typically depends on the choice of fiber inner product, and to obtain an optimal statement one should take the infimum over all choices, as done e.g.\ in \cite[Definition~4.3]{HintzNonstat}.

The wave front set condition on $E$ together with the localization to a small neighborhood of the set $\cK$ means that $E u$ can only control $u$ microlocally in a punctured neighborhood of $\pa\cR_{\rm in,t'}^+$ inside the stable manifold of $\pa\cR_{\rm in,t'}^+$ for all $t'\in[t_0,t_1]$ (and this is what the choice of $E$ arising from the proof indeed does). We note moreover that $\cK$ is the intersection of the preimage of $\pa\cR_{\rm in}^+$ under $\upbeta$ with the lift of $\Sse^*_{\wt\Omega}\wt M$.\footnote{One can compute that for the lift of the null-bicharacteristic flow to $\upbeta^*(\Sse^*\wt M)$ the part of $\cK$ over the interior of the front face does not contain any radial points \emph{except} over $\tau=-1$, which from the perspective of the front face of $[\wt M;\hat M_{t_1}]$ is past null infinity of the Kerr black hole at $\hat M_{t_1}$. This is partially discussed in \cite[Remark~3.8]{HintzConicWave}.} See Figure~\ref{FigEstRadIn}.

We also remark that there is a version of the estimate~\eqref{EqEstRadIn} without the support condition on $u$: one only needs to add a further term $\|E'u\|_{H_{\seop,\eps}^{s,\alpha_\circ,\hat\alpha}}$ where $E'$ involves a cutoff in $\frac{t-t_0}{r}$. This follows from a simple modification of the proof below.

\begin{figure}[!ht]
\centering
\includegraphics{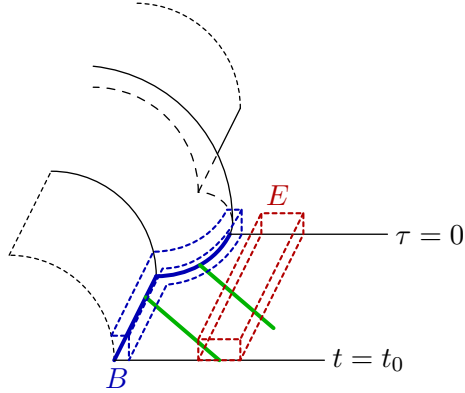}
\caption{Illustration of the estimate~\eqref{EqEstRadIn}: the term $E u$ controls $u$ microlocally near incoming null-bicharacteristics (indicated in green), and $B u$ gives control at $\cK$ (the projection of which to $\wt M$ is the thick blue line).}
\label{FigEstRadIn}
\end{figure}

\begin{proof}[Proof of Proposition~\usref{PropEstRadIn}]
  We fix on $\wt E$ a positive definite fiber inner product which restricts to the chosen one over $\hat M$. We use a standard positive commutator argument as in \cite{MelroseEuclideanSpectralTheory} and \cite[\S{2.4}]{VasyMicroKerrdS}, though here on a manifold with corners as in \cite[\S{4.2}]{HintzVasyScrieb} and indeed similar to \cite[\S{3.4.1}]{HintzKdSMS} as far as the saddle point dynamics go; there is only the minor subtlety that the rather sharp localization near $t=t_1$ necessitates working with non-smooth (but still se-regular) symbols there, analogously to \cite[Proposition~3.4]{HintzConicWave}. The uniformity in $\eps$ of our estimate in se-estimate Sobolev spaces will simply be a consequence of the uniform boundedness of se-ps.d.o.s. Thus, we shall be brief.

  We consider an operator $A=A^*\in\tilde\Psi_\seop^{2 s-1,2\alpha_\circ,2\hat\alpha-2}$ (see~\S\ref{SssFVarse}) with principal symbol
  \begin{equation}
  \label{EqEstRadInSymb}
    a = \check a^2,\qquad
    \check a = \rho_\infty^{-s+\frac12}\rho_\circ^{-\alpha_\circ}\hat\rho^{-(\hat\alpha-1)}\chi_\circ(\rho_\circ)\hat\chi(\hat\rho)\chi_\cR(|\hat\eta_\seop|^2) \chi_\Sigma(\hat\xi_\seop+1) \chi_-(t-t_0)\chi_+\Bigl(\frac{t-t_1}{r}\Bigr).
  \end{equation}
  The cutoff functions are as follows: $\chi_\circ\in\CIc([0,\delta_0))$ is equal to $1$ near $0$, and satisfies $\chi_\circ'\leq 0$ and $\sqrt{-\chi_\circ\chi_\circ'}\in\CI$; similarly for $\hat\chi$ and $\chi_\cR$. Here $\delta_0>0$ will serve to localize the support of $\check a$ near $\cR_{\rm in}^+$. Moreover, we choose $\chi_\Sigma\in\CIc((-2\delta_0,2\delta_0))$ to be equal to $1$ on $[-\delta_0,\delta_0]$, which for small enough $\delta_0$ ensures that $\supp(\chi_\cR(|\hat\eta_\seop|^2)\chi_\Sigma'(\hat\xi_\seop+1))\cap\wt\Sigma^+=\emptyset$ since the characteristic set over $\hat\rho=\rho_\circ=0$ is given by $\hat\xi_\seop^2+|\hat\eta_\seop|^2=1$. Finally, we choose $\chi_\pm$ with
  \begin{equation}
  \label{EqEstRadInChipm}
  \begin{alignedat}{5}
    \chi_-&\in\CI(\R), &\quad& \chi_-'&\geq 0, &\quad& \chi_-|_{(-\infty,-\delta_0]}=0, &\quad& \chi_-|_{[0,\infty)}=1, \\
    \chi_+&\in\CI(\R), &\quad& \chi_+'&\leq 0, &\quad& \chi_+|_{(-\infty,0]}=1, &\quad& \chi_+|_{[\delta_0,\infty)}=0, \\
    \sqrt{\mp\chi_\pm\chi_\pm'}&\in\CI(\R).
  \end{alignedat}
  \end{equation}

  Write $\sfH=\rho_\infty\hat\rho^2 H_{\wt G}=\rho_\infty H_{\hat\rho^2\wt G}-\rho_\infty\wt G H_{\hat\rho^2}$. Using~\eqref{EqGlDynStdInOut}, we now compute
  \[
    c':=\rho_\infty^{2 s}\rho_\circ^{2\alpha_\circ}\hat\rho^{2\hat\alpha}H_{\wt G}a = 2\rho_\infty^{2 s-1}\rho_\circ^{2\alpha_\circ}\hat\rho^{2(\hat\alpha-1)}\check a\sfH\check a.
  \]
  At $\pa\cR_{\rm in}^+$ (where $\hat\xi_\seop=-1$), only derivatives of the weights enter, and one finds
  \begin{equation}
  \label{EqEstRadIncp}
    c'|_{\pa\cR_{\rm in}^+} = 2\Bigl( 2(\hat\alpha-1) - 2\alpha_\circ - 2\Bigl(s-\frac12\Bigr)\Bigr) = -2(2 s+2\alpha_\circ-2\hat\alpha+1).
  \end{equation}
  The signs of those terms of $c'$ arising from differentiating one of the cutoffs $\chi_\circ,\hat\chi,\chi_\cR$ are determined by the saddle point dynamics at $\pa\cR_{\rm in}^+$. Concretely, if one fixes $\chi_\cR$ (thus $\rho_\infty H_{\hat\rho^2\wt G}|\hat\eta_\seop|^2$ has a positive lower bound $c_0>0$ on $\supp(\chi_\Sigma\dd\chi_\cR)$ over $\pa\hat M$), then upon shrinking the supports of $\chi_\circ,\hat\chi$ further we can ensure that $\rho_\infty H_{\hat\rho^2\wt G}|\hat\eta_\scop|^2\geq\frac12 c_0$ on $\supp\check a$ for all sufficiently small $\eps\geq 0$. Therefore, the term in $c'$ involving $\chi_\cR\sfH\chi_\cR$ is then the negative of the square of a smooth function. Similarly, $\chi_\circ\sfH\chi_\circ=\chi_\circ\chi_\circ'\sfH\rho_\circ$ is the negative of the square of a smooth function on $\supp\check a$ when $\delta_0$ is sufficiently small (since then $\sfH\rho_\circ$ is negative on $\supp\check a\cap\supp\dd\chi_\circ$), whereas the term $\hat\chi\sfH\hat\chi$ is the square of a smooth function (ultimately giving rise to the a priori control term $E u$ in~\eqref{EqEstRadIn}). By construction, derivatives falling on $\chi_\Sigma$ produce symbols supported away from $\wt\Sigma^+$ which are thus smooth multiples of (weighted versions of) $\wt G$.

  Next, we consider the term $\chi_-\sfH\chi_-=\chi_-\chi_-'\sfH t$; due to the past timelike nature of $\dd t$, $\sfH t\in\hat\rho\CI$ is non-negative, and indeed $\hat\rho^{-1}\sfH t>0$ on $\wt\Sigma^+$; therefore the contribution of this term to $c'$ is a square (of a symbol with $\hat M$-weight $\hat\alpha-\frac12$ even). Finally, to treat the term $\chi_+\sfH\chi_+$, we go back to~\eqref{EqGlDynHamEdge}. This implies that, on the characteristic set, $\sfH\frac{t-t_1}{r}=2-2\frac{t-t_1}{r}\hat\xi_\seop$ up to corrections which vanish at $\pa\hat M$; but since $\frac{t-t_1}{r}\in[0,\frac12]$ on $\supp\chi_+'$, this is strictly positive, and so $\chi_+\sfH\chi_+$ is the negative of a square on $\supp\check a$.

  The positive commutator argument proceeds by computing the $L^2$ pairing (on $M_\eps$)
  \begin{equation}
  \label{EqEstRadInComm}
    2\Im\la\wt L u,A u\ra = \la \sC u,u\ra,\qquad \sC=i[\wt L,A]+2\frac{\wt L-\wt L^*}{2 i}A.
  \end{equation}
  The principal symbol of $\sC\in\tilde\Psi_\seop^{2 s,2\alpha_\circ,2\hat\alpha}$ is equal to $\rho_\infty^{-2 s}\rho_\circ^{-2\alpha_\circ}\hat\rho^{-2\hat\alpha}$ times $c:=c'+2\ell_1\tilde\chi^2$ where $\ell_1=\rho_\infty\sigmase^1(\hat\rho^2\frac{\tilde L-\tilde L^*}{2 i})$ is a smooth function on $\Sse^*\wt M$ near $\pa\cR_{\rm in}^+$, and $\tilde\chi$ is the product of all cutoffs in the definition of $\check a$. Combined with~\eqref{EqEstRadIncp}, we thus see that $c<0$ at $\pa\cR_{\rm in}^+$ (and thus nearby if the supports of the cutoffs are sufficiently small) provided the threshold condition~\eqref{EqEstRadInThr} holds.

  From here on, the proof proceeds in the usual fashion by writing $c$ as a sum $-\eta\tilde\chi^2$ (for some fixed $\eta\in(0,s+\alpha_\circ-\hat\alpha-\frac12(-1+\vartheta_{\rm in}))$) plus a sum of terms which are (positive or negative) squares of smooth functions, quantizing the symbols appearing in such a formula, and using the ellipticity of the main term (coming from~\eqref{EqEstRadIncp} and $\vartheta_{\rm in}$) to obtain the desired control of a microlocalization of $u$ in $H_{\seop,\eps}^{s,\alpha_\circ,\hat\alpha}$, uniformly for all sufficiently small $\eps>0$, in terms of the microlocal $H_{\seop,\eps}^{s-\frac12,\alpha_\circ,\hat\alpha}$ norm of $u$ on a slightly larger set, which can be controlled inductively while $s-\frac12$ still exceeds the threshold regularity.

  The proof of the strong version of~\eqref{EqEstRadIn} can be done using standard regularization arguments \cite[\S{5.4.7}]{VasyMinicourse}, now carried out in the se-pseudodifferential algebra. Alternatively, one simply notes that for positive $\eps$, the propagation of microlocal $H^s$ regularity from the elliptic set of $E$ to that of $B$ over $M_\eps$ is the standard propagation of singularities statement \cite{DuistermaatHormanderFIO2}, so $B u$ does lie in $H^s$; this is sufficient to justify the above positive commutator argument without further regularization.
\end{proof}

\begin{prop}[Radial point estimate near $\cR_{\rm out}^+$]
\label{PropEstRadOut}
  Define, analogously to~\eqref{EqEstRadInTheta}, the quantity
  \begin{equation}
  \label{EqEstRadOutTheta}
    \vartheta_{\rm out} := \sup_{\pa\cR_{\rm out}^+} {\rm spec}\Bigl[\rho_\infty\sigmase^1\Bigl(\hat\rho^2\frac{\wt L-\wt L^*}{2 i}\Bigr)\Bigr] \in \R.
  \end{equation}
  Let $s,N,\alpha_\circ,\hat\alpha\in\R$, and suppose that
  \begin{equation}
  \label{EqEstRadOutThr}
    s+\alpha_\circ-\hat\alpha<\frac12(-1-\vartheta_{\rm out}).
  \end{equation}
  Define $\cK=\{\rho_\infty=\hat\rho=\rho_\circ=0,\ \hat\xi_\seop=+1,\ t\geq t_0,\ \tau\leq 0\}\subset\upbeta^*(\Sse^*\wt M)$. Then for all neighborhoods $\cU\subset\upbeta^*(\Sse^*\wt M)$ of $\cK$ and all $\chi\in\CIc([\wt M;\hat M_{t_1}])$ equal to $1$ near the base projection of $\cK$, there exist operators $B,E,G\in\tilde\Psi_\seop^0(\wt M)$ with $\chi B\chi=B$ etc.\ so that
  \begin{itemize}
  \item the operator wave front sets of $B,E,G$ are contained in $\cU$,
  \item $B$ is elliptic at $\cK$,
  \item $\WF_\seop'(E)$ is disjoint from $\upbeta^{-1}(\pa W_{\rm out}^+)$ (see Definition~\usref{DefGlDynStableUn}),
  \end{itemize}
  and a constant $C$ so that the estimate
  \begin{equation}
  \label{EqEstRadOut}
    \|B u\|_{H_{\seop,\eps}^{s,\alpha_\circ,\hat\alpha}} \leq C\Bigl( \|G\wt L u\|_{H_{\seop,\eps}^{s-1,\alpha_\circ,\hat\alpha-2}} + \|E u\|_{H_{\seop,\eps}^{s,\alpha_\circ,\hat\alpha}} + \|\chi u\|_{H_{\seop,\eps}^{-N,\alpha_\circ,\hat\alpha}}\Bigr)
  \end{equation}
  holds (in the strong sense) for all $u$ with support in $t\geq t_0$.
\end{prop}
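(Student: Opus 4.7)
The plan is to run a positive commutator argument parallel to the proof of Proposition~\ref{PropEstRadIn}, with systematic sign reversals reflecting that $\pa\cR_{\rm out}^+$ is a sink tangentially along $\pa M_\circ$ and a source in the radial $r$-direction. I would use a commutant $A=A^*\in\tilde\Psi_\seop^{2s-1,2\alpha_\circ,2\hat\alpha-2}$ with principal symbol $a=\check a^2$, where $\check a$ is defined exactly as in~\eqref{EqEstRadInSymb} except that $\chi_\Sigma(\hat\xi_\seop+1)$ is replaced by $\chi_\Sigma(\hat\xi_\seop-1)$ in order to localize near $\hat\xi_\seop=+1$; the cutoffs $\chi_\circ, \hat\chi, \chi_\cR, \chi_\pm$ are chosen as in the incoming case.

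Computing at $\pa\cR_{\rm out}^+$ one finds $\sfH\rho_\infty=-2\rho_\infty$, $\sfH\rho_\circ=-2\rho_\circ$, and $\sfH\hat\rho=+2\hat\rho$, so the weight-derived contribution to $c':=\rho_\infty^{2s}\rho_\circ^{2\alpha_\circ}\hat\rho^{2\hat\alpha}H_{\wt G}a$ has the opposite sign from~\eqref{EqEstRadIncp}:
\[
  c'|_{\pa\cR_{\rm out}^+} = +2(2s+2\alpha_\circ-2\hat\alpha+1).
\]
Adding the subprincipal contribution controlled by $\vartheta_{\rm out}$ from~\eqref{EqEstRadOutTheta}, the full principal symbol $c$ of $\sC=i[\wt L,A]+2A(\wt L-\wt L^*)/(2i)$ satisfies $c<0$ at $\pa\cR_{\rm out}^+$ precisely under the below-threshold condition~\eqref{EqEstRadOutThr}, providing the main negative contribution on which the commutator argument is based.

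Next I would analyze the cutoff-derivative contributions to $c'$. The terms involving $\chi_\Sigma'$, $\chi_-'$, $\chi_+'$ are handled exactly as in the incoming case: $\chi_\Sigma'$ is supported off the characteristic set and is absorbed into $\|G\wt L u\|$, while $\chi_\pm'$ contribute favorable squares via the past timelike character of $\dd t$ and the explicit form~\eqref{EqGlDynHamEdge} applied to $\tau=(t-t_1)/r$. In contrast to the incoming case, $\hat\chi\hat\chi'\sfH\hat\rho$ is now a negative square (since $\sfH\hat\rho=2\hat\rho>0$ near $\pa\cR_{\rm out}^+$), contributing favorably without requiring a priori control. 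The main new feature is that $\sfH\rho_\circ=-2\rho_\circ<0$ and $\sfH|\hat\eta_\seop|^2=-4|\hat\eta_\seop|^2<0$ near $\pa\cR_{\rm out}^+$, so both $\chi_\circ\chi_\circ'\sfH\rho_\circ$ and $\chi_\cR\chi_\cR'\sfH|\hat\eta_\seop|^2$ yield positive (bad-sign) squares supported in $\{\rho_\circ\sim\delta_0\}\cup\{|\hat\eta_\seop|^2\sim\delta_0\}$. This support is disjoint from $\upbeta^{-1}(\pa W_{\rm out}^+)$, since $\pa W_{\rm out}^+\subset\Sse^*_{M_\circ}\wt M$ lies at $\rho_\circ=0$ and consists of radially outgoing null-bicharacteristics with $\hat\eta_\seop=0$; both contributions can thus be combined into a single a priori term $\|E u\|$ with $\WF'_\seop(E)\subset\cU\setminus\upbeta^{-1}(\pa W_{\rm out}^+)$, as required by the statement.

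Pairing $2\Im\la\wt L u,A u\ra=\la\sC u,u\ra$ and applying Cauchy--Schwarz to the $\wt L u$ term then yields~\eqref{EqEstRadOut}; the absence of an inductive $\|Gu\|_{s_0}$-term (compared with~\eqref{EqEstRadIn}) reflects that in the below-threshold regime the commutator gives control at the target order $s$ directly, without bootstrapping from lower regularity. The main point requiring care will be verifying that $\supp\chi_\circ'$ and $\supp\chi_\cR'$ can be made disjoint from $\upbeta^{-1}(\pa W_{\rm out}^+)$ uniformly in $\eps$ upon taking $\delta_0>0$ sufficiently small; this is automatic from the intrinsic definition of $\pa W_{\rm out}^+$ on $M_\circ$, but needs the explicit identification of the radial and angular defining functions of $\pa W_{\rm out}^+$ near $\pa\cR_{\rm out}^+$. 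The strong-form statement is then obtained by the regularization argument in the final paragraph of the proof of Proposition~\ref{PropEstRadIn}, carried out within $\tilde\Psi_\seop$ or, for fixed $\eps>0$, from the standard propagation of singularities.
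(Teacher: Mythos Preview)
Your approach matches the paper's: the same commutant with $\chi_\Sigma(\hat\xi_\seop-1)$, the sign reversal of~\eqref{EqEstRadIncp}, and the resulting threshold condition~\eqref{EqEstRadOutThr}. Your cutoff-by-cutoff sign analysis is correct and more detailed than the paper's two-line sketch; in particular you correctly identify that the bad-sign contributions now come from $\chi_\circ'$ and $\chi_\cR'$ (rather than $\hat\chi'$ as in the incoming case), and that their supports are disjoint from $\upbeta^{-1}(\pa W_{\rm out}^+)$ near $\pa\cR_{\rm out}^+$, yielding the $E$ term with the stated wave front set condition. One small imprecision: $\hat\eta_\seop$ does not vanish identically on $\pa W_{\rm out}^+$ globally, only at $\pa\cR_{\rm out}^+$ itself; but since the commutant is supported in small $\hat\rho$, this suffices.

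There is one genuine misconception. You write that the absence of the $\|G u\|_{s_0}$ term ``reflects that in the below-threshold regime the commutator gives control at the target order $s$ directly, without bootstrapping.'' This is not the mechanism. The positive commutator argument still produces an error at order $s-\tfrac12$ from the subprincipal remainders, exactly as in the incoming case. The paper's point is that since~\eqref{EqEstRadOutThr} is an \emph{upper} bound on $s$ with no lower bound, one may apply the estimate again with $s$ replaced by $s-\tfrac12$ (the threshold condition is preserved) and iterate until the error reaches order $-N$; this is how the $G u$ term is removed. Your argument as written omits this iteration step.
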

\begin{proof}
  The proof is completely analogous to that of Proposition~\ref{PropEstRadIn}, and uses the commutant~\eqref{EqEstRadInSymb} with $\hat\xi_\seop-1$ in place of $\hat\xi_\seop+1$. The only further change then is that due to $\hat\xi_\seop$ now being equal to $+1$ at $\pa\cR_{\rm out}^+$, there is a sign switch in the first (main) term in the expression~\eqref{EqGlDynStdInOut}, and correspondingly the sign of the expression~\eqref{EqEstRadIncp} switches. The contribution $\ell_1$ from the imaginary part of $\wt L$ enters however with the same sign as before, so overall the principal symbol of $\sC$ in~\eqref{EqEstRadInComm} is $\rho_\infty^{-2 s}\rho_\circ^{-2\alpha_\circ}\hat\rho^{-2\hat\alpha}$ times a self-adjoint bundle endomorphism of $\wt E$ that is negative at $\pa\cR_{\rm out}^+$ if $2(2 s+2\alpha_\circ-2\hat\alpha+1)+2\ell_1<0$ there. This leads to the threshold condition~\eqref{EqEstRadOutThr}. Since this condition allows for arbitrarily negative $s$, we may iterate the resulting radial point estimate arbitrarily often and thus remove the $G u$ term in~\eqref{EqEstRadIn}.
\end{proof}

\subsubsection{Higher s-regularity}
\label{SssEstRads}

We shall need suitable analogues of Propositions~\ref{PropEstRadIn}--\ref{PropEstRadOut} on mixed (se;s)-Sobolev spaces (see~\eqref{EqFVarsseSob}). Such analogues cannot feature sharp localizations in $\tau$, but only localizations in $t$, cf.\ the discussion at the beginning of~\S\ref{SssFVarsse}. In order for solutions of $\wt L u=f$ to have s-regularity, we need the coefficients of $\wt L$ to have s-regularity; thus, we shall replace assumption~\eqref{ItEstLOp} by
\[
  \wt L \in \hat\rho^{-2}(\CI+\cC_{\seop;\sop}^{(d_0;k),1,1})\Diffse^2(\wt M\setminus\wt K^\circ;\wt E),
\]
where the conditions on $d_0\geq 2$ and $k\in\N_0$ are specified in the statements below. For the arguments below, it will be important to relax the assumptions on $\wt L$ even further and allow for pseudodifferential lower order terms, i.e.
\begin{equation}
\label{EqEstRadsOp}
  \wt L \in \hat\rho^{-2}(\CI+\cC_{\seop;\sop}^{(d_0;k),1,1})(\Diffse^2(\wt M\setminus\wt K^\circ;\wt E)+\Psi_\seop^1(\wt M;\wt E)).
\end{equation}
We shall tacitly always require the Schwartz kernel of the pseudodifferential term to be localized to a small neighborhood of the diagonal, but will not spell this out below.

\begin{prop}[Radial point estimate near $\cR_{\rm in}^+$: s-regularity]
\label{PropEstRadsIn}
  We use the notation of Proposition~\usref{PropEstRadIn} and assume~\eqref{EqEstRadInThr}, except now we define $\cK:=\{\rho_\infty=\hat\rho=\rho_\circ=0,\ \hat\xi_\seop=-1,\ t_0\leq t\leq t_1\}\subset\Sse^*\wt M$. Let $k\in\N_0$, and assume that $\wt L$ satisfies~\eqref{EqEstRadsOp} for some sufficiently large $d_0=d_0(s,s_0,N)$. Then for all neighborhoods $\cU\subset\Sse^*\wt M$ of $\cK$ and all $\chi\in\CIc(\wt M)$ equal to $1$ near the base projection $\pa M_\circ\cap\{t_0\leq t\leq t_1\}$ of $\cK$, there exist operators $B,E,G\in\Psi_\seop^0(\wt M)$ so that $\chi B\chi=B$ etc., the operator wave front sets of $B,E,G$ are contained in $\cU$, furthermore $B$ is elliptic at $\cK$, and $\WF'_\seop(E)\subset\{r>0\}$, and such that, for some constant $C$, the estimate
  \begin{equation}
  \label{EqEstRadsIn}
  \begin{split}
    \|B u\|_{H_{(\seop;\sop),\eps}^{(s;k),\alpha_\circ,\hat\alpha}} &\leq C\Bigl( \|G\wt L u\|_{H_{(\seop;\sop),\eps}^{(s-1;k),\alpha_\circ,\hat\alpha-2}} + \|E u\|_{H_{(\seop;\sop),\eps}^{(s;k),\alpha_\circ,\hat\alpha}} \\
      &\quad\hspace{5em} + \|G u\|_{H_{(\seop;\sop),\eps}^{(s_0;k),\alpha_\circ,\hat\alpha}} + \|\chi u\|_{H_{(\seop;\sop),\eps}^{(-N;k),\alpha_\circ,\hat\alpha}}\Bigr)
  \end{split}
  \end{equation}
  holds (in the strong sense) for all $u$ with support in $t\geq t_0$.
\end{prop}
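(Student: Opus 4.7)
\medskip

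\textbf{Proof plan for Proposition~\ref{PropEstRadsIn}.} The plan is to mirror the positive commutator argument used for Proposition~\ref{PropEstRadIn}, but carried out in the scaled bounded geometry calculus $\Psi_\seop$ (so that symbols have $\mathrm{s}$-regular---rather than only $\seop$-regular---coefficients), and then to iterate in the $\sop$-order $k$ via commutation with commutator $\mathrm{s}$-vector fields as in the proof of Lemma~\ref{LemmaFVarsseProp}.

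\emph{Step 1: base case $k=0$.} I would use the same commutant as in~\eqref{EqEstRadInSymb}, but with one crucial modification: the sharp localization factor $\chi_+\bigl(\tfrac{t-t_1}{r}\bigr)$ (which is only $\seop$-regular, since differentiating $\tau=(t-t_1)/r$ in $t$ produces a factor of $r^{-1}=\hat\rho^{-1}$ near $\cC$) must be replaced by $\chi_+(t-t_1)$. This localizes only to the unit-size time window $t\leq t_1+\delta_0$, which is precisely why the conclusion is stated with $\cK=\{\hat\xi_\seop=-1,\ t_0\leq t\leq t_1\}\subset\Sse^*\wt M$ rather than with a further restriction $\tau\leq 0$ on $\upbeta^*(\Sse^*\wt M)$; on the other hand, since $\chi_+(t-t_1)$, like all the other cutoffs ($\chi_\circ(\rho_\circ)$, $\hat\chi(\hat\rho)$, $\chi_\cR(|\hat\eta_\seop|^2)$, $\chi_\Sigma(\hat\xi_\seop+1)$, $\chi_-(t-t_0)$), is $\sop$-regular, the resulting commutant lies in $\Psi_\seop^{2s-1,2\alpha_\circ,2\hat\alpha-2}(\wt M;\wt E)$. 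The Hamiltonian derivative calculation at $\pa\cR_{\rm in}^+$ and the interpretation of signs of $\chi_\circ\sfH\chi_\circ$, $\hat\chi\sfH\hat\chi$, etc.\ proceed verbatim; the term $\chi_+\sfH\chi_+$ produces a good sign because $\sfH t=\hat\rho^{-1}\sigmase^1(\hat\rho^2 H_{\wt G})(\dd t)>0$ on $\wt\Sigma^+$ (past timelike nature of $\dd t$). The threshold condition~\eqref{EqEstRadInThr} is unchanged. Quantizing in $\Psi_\seop$ and pairing as in~\eqref{EqEstRadInComm}, uniform operator norm bounds of $\Psi_\seop$-elements on $H_{(\seop;\sop),\eps}^{(\sfs';0)}=H_{\seop,\eps}^{\sfs'}$ give~\eqref{EqEstRadsIn} for $k=0$. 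Finite regularity of the coefficients ($d_0$ large but finite) is harmless since operator norms of quantizations on a fixed range of $\seop$-Sobolev spaces depend only on finitely many seminorms of the coefficients.

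\emph{Step 2: induction on $k$.} Assuming~\eqref{EqEstRadsIn} for $k-1$ (for every choice of $B,E,G,\chi$ with the stated properties, and for all $s,s_0,N$ in the threshold range), I would promote it to $k$ by the system-commutation trick from the proof of Lemma~\ref{LemmaFVarsseProp}. Fix a spanning set $V_0=I,V_1,\ldots,V_N\in\cV_{[\sop]}(\wt M)$ of $\cV_\sop(\wt M)$ over $\CI_\sop(\wt M)$ (which exists by Lemma~\ref{LemmaFsComm}). Writing $[\wt L,V_j]=\sum_l A_{jl}V_l$ with $A_{jl}$ in the class~\eqref{EqEstRadsOp} of order one less in $k$, the tuple $u^{(1)}=(V_j u)_j$ satisfies $\wt L^{(1)}u^{(1)}=f^{(1)}$ for a principally scalar operator $\wt L^{(1)}$ with the same principal symbol as $\wt L$ and the same imaginary part at $\pa\cR_{\rm in}^+$ (modulo lower order terms whose contribution can be absorbed into the $G u$ and error terms provided $d_0$ is taken large enough). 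Applying the inductive hypothesis to this system with orders $(s;k-1)$, and using $BV_j-V_j B\in\Psi_\seop^0$ (which preserves $\sop$-regularity) to commute $B$ and the $V_j$'s modulo controlled errors, yields~\eqref{EqEstRadsIn} at s-order $k$. Crucially, the $\seop$-orders $s,s_0,N$ do not change during the induction, so the threshold condition~\eqref{EqEstRadInThr} need only be imposed once.

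\emph{Main obstacle.} The delicate point is ensuring that \emph{every} cutoff entering the commutant is $\sop$-regular, so that the entire positive commutator computation takes place in $\Psi_\seop$ (where bounded mapping on $H_{(\seop;\sop),\eps}^{(\sfs';k)}$ holds) rather than $\tilde\Psi_\seop$. This forces us to weaken the sharp $\tau$-localization of Proposition~\ref{PropEstRadIn} to a $t$-localization and accept a thicker invariant set $\cK$; correspondingly, the a priori control term $Eu$ must be allowed to live on a full neighborhood of the stable manifold up to $t\leq t_1+\delta_0$, but because $\WF'_\seop(E)\subset\{r>0\}$ (i.e.\ away from $\pa M_\circ$), standard propagation and the already-established $\seop$-solvability theory on the enlarged time window will later permit its absorption when~\eqref{EqEstRadsIn} is assembled into the global estimate in \S\ref{SsEstStd}. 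The strong version of the estimate follows by the usual regularization in the $\Psi_\seop$ calculus (or, for each fixed $\eps>0$, by standard elliptic/propagation regularity).
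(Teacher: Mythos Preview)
Your Step~1 (base case $k=0$) is correct and matches the paper exactly: replace the sharp $\seop$-regular cutoff $\chi_+(\tfrac{t-t_1}{r})$ by the $\sop$-regular cutoff $\chi_+(t-t_1)$ and quantize in $\Psi_\seop$ rather than $\tilde\Psi_\seop$.

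Step~2 has a gap. Your system operator $\wt L^{(1)}=(\delta_{jl}\wt L-A_{jl})$ has off-diagonal entries $A_{jl}\in\Psi_\seop^{m-1}=\Psi_\seop^1$, which is the \emph{same} order as $\tfrac{\wt L-\wt L^*}{2i}$, the operator whose principal symbol determines $\vartheta_{\rm in}$. These are therefore not ``lower order terms'' that can be absorbed into $Gu$ or the error: in the commutator computation~\eqref{EqEstRadInComm} they contribute at the principal level of $\sC$, so the threshold for the system is some $\vartheta_{\rm in}^{(1)}$ which in general differs from $\vartheta_{\rm in}$. Concretely, for $V_j\in\Vse(\wt M)$ (which you include in your spanning set), $[\wt L,V_j]\in\hat\rho^{-2}\Diffse^2$ carries no additional decay at $\pa M_\circ$, so the resulting $A_{jl}$ need not vanish at $\pa\cR_{\rm in}^+$.

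The paper avoids this in two ways. The route it actually uses for the incoming set is the crude induction reducing $(s;k)$ to $(s+1;k-1)$ (first version of the proof of Lemma~\ref{LemmaFVarsseProp}); this is harmless at $\cR_{\rm in}^+$ because~\eqref{EqEstRadInThr} is a \emph{lower} bound on $s_0$, so raising the se-order preserves it. The sharper route (spelled out for $\cR_{\rm out}^+$ in Proposition~\ref{PropEstRadsOut}, and closest in spirit to what you attempt) commutes only with $\pa_t$: near the radial set $\hat\rho\pa_t$ is elliptic, so by~\eqref{EqEstRadsNorm} the $(\sfs;k)$-norm is controlled by $\pa_t$-iterates alone; and by the second part of Corollary~\ref{CorFsComm} (using the $t$-independence of the $\hat M$-normal operators of $\wt L$), the commutator $[\wt L,\pa_t]$ gains an extra factor of $\hat\rho$, so the perturbation $\wt L^{(1)}-\wt L$ vanishes at $\pa\cR_{\rm in}^+$ and the threshold is genuinely preserved. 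Your system approach works once you restrict to $u^{(1)}=(u,\pa_t u)$ rather than the full tuple $(V_j u)_j$.
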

\begin{proof}
  Consider first the case $k=0$. One can then give a proof using the same argument as in the proof of Proposition~\ref{PropEstRadIn}, except one replaces the factor $\chi_+(\frac{t-t_1}{r})$ in~\eqref{EqEstRadInSymb} by $\chi_+(t-t_1)$ and quantizes $a$ to a formally self-adjoint operator in the stronger operator class $\Psi_\seop^{2 s-1,2\alpha_\circ,2\hat\alpha-2}$ from~\S\ref{SssFVarsse}. The Hamiltonian derivative of $\chi_+$ still has the same sign as the main term, and can thus again be dropped. One can also directly deduce~\eqref{EqEstRadsIn} from~\eqref{EqEstRadIn} by enlarging the elliptic sets of $G,E$ (which one now requires to be of class $\Psi_\seop$) slightly. The fact that $\wt L$ now features a pseudodifferential lower order term does not affect the argument in any way. We moreover note that for any fixed orders $s,s_0,N,\alpha_\circ,\hat\alpha$, the proof of~\eqref{EqEstRadsIn} only uses some large but finite se-regularity $d_0$ of the coefficients of $\wt L$.

  For $k\geq 1$, one can argue by induction (reducing the case $s,k$ to $s+1,k-1$) as in the first version of the proof of Lemma~\ref{LemmaFVarsseProp}. A more precise argument, which reduces the inductive step to the case $s,k-1$ (with $d_0$ unchanged), will be given in the case of the outgoing radial point estimate in Proposition~\ref{PropEstRadsOut} below.
\end{proof}

Near $\cR_{\rm out}^+$, the issue with an inductive step which reduces the case of orders $s,k$ to that of $s+1,k-1$ is that the se-regularity order is subject to an \emph{upper bound}, and thus one would only be able to prove $H_{(\seop;\sop),\eps}^{(s;k)}$-estimates for $s+k$ below this bound. A standard method to circumvent this restriction is to exhibit a spanning set of s-vector fields with the property that their commutators with $\wt L$ can be written as compositions of these vector fields with operators whose principal symbols vanish at the radial set; see e.g.\ \cite[Lemma~3.20]{HintzConicWave}, which in turn was inspired by \cite[\S{4}]{BaskinVasyWunschRadMink} and \cite{HassellMelroseVasySymbolicOrderZero}. We give an alternative argument here by \emph{only commuting $\wt L$ with $\pa_t$} and using the second part of Corollary~\ref{CorFsComm}. The key observation is that near the characteristic set of $\wt L$ over $\pa M_\circ$, the operator $\hat\rho\pa_t=r\pa_t$ is elliptic; indeed, its principal symbol is $i\sigma_\eop$, which does not vanish on $G_\eop^{-1}(0)\setminus o$ near $\pa M_\circ$ in view of~\eqref{EqGlDynGe}. If $B\in\Psi_\seop^0(\wt M)$ satisfies $\WF_\seop'(B)\cap\Char_\seop(r\pa_t)=\emptyset$ and has Schwartz kernel supported in $r>0$ in both factors, we claim that we have uniform norm equivalences
\begin{equation}
\label{EqEstRadsNorm}
  \|B u\|_{H_{(\seop;\sop),\eps}^{(s;k)}} \sim \|B u\|_{H_{(\seop;\sop),\eps}^{(s;k-1)}} + \|\pa_t B u\|_{H_{(\seop;\sop),\eps}^{(s;k-1)}} \sim \sum_{j=0}^k \|\pa_t^j B u\|_{H_{\seop,\eps}^s};
\end{equation}
similarly for weighted spaces. It suffices to prove the direction `$\lesssim$' in the first norm equivalence; we do this for $k=0$. If $V_1,\ldots,V_N\in\Vse(\wt M)$ spans $\Vse(\wt M)$ over $\CI(\wt M)$, then we have
\[
  \|B u\|_{H_{\seop,\eps}^{(s;1)}}\sim\|\pa_t B u\|_{H_{\seop,\eps}^s}+\sum_{l=1}^N\|V_l B u\|_{H_{\seop,\eps}^s}+\|B u\|_{H_{\seop,\eps}^s}.
\]
But we can write $V_l=W_l\circ r\pa_t+R_l$ where $W_l\in\Psi_\seop^0$ and $R_l\in\Psi_\seop^1$, with $\WF_\seop'(R_l)\cap\WF_\seop'(B)=\emptyset$. Since $W_l\circ r\in\Psi_\seop^0$ and $R_l B\in\Psi_\seop^{-\infty}$ are uniformly bounded in $H_{\seop,\eps}^s$, we can further bound this by a uniform constant times $\|\pa_t B u\|_{H_{\seop,\eps}^s}+\|B u\|_{H_{\seop,\eps}^s}$, as claimed.

\begin{lemma}[Commutator with $\pa_t$]
\label{LemmaEstRadsComm}
  Let $K\subset\Sse^*\wt M$ be a compact subset which is disjoint from $\Char_\seop(\hat\rho\pa_t)$. Then we can write
  \[
    [\wt L,\pa_t] = \sum_{j=1}^N a_j(A_j\hat\rho\pa_t + R_j)
  \]
  where $a_j\in\hat\rho^{-2}(\hat\rho\CI+\cC_{\seop;\sop}^{(d_0;k-1),1,1})$, $A_j\in\Psi_\seop^1(\wt M)$, $R_j\in\Psi_\seop^2(\wt M)$, $\WF_\seop'(R_j)\cap K=\emptyset$.
\end{lemma}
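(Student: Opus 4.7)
The plan is to proceed in two stages: first, compute $[\wt L,\pa_t]$ and show it lies in the weighted mixed class
\[
  \hat\rho^{-2}\bigl(\hat\rho\CI + \cC_{\seop;\sop}^{(d_0;k-1),1,1}\bigr)\bigl(\Diffse^2(\wt M;\wt E) + \Psi_\seop^1(\wt M;\wt E)\bigr);
\]
second, use a microlocal elliptic parametrix of $\hat\rho\pa_t$ near $K$ to extract this factor from each resulting summand, modulo an operator whose wave front set is disjoint from $K$.

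For the first stage, decompose $\pa_t = \chi\pa_t + (1-\chi)\pa_t$ where $\chi\in\CI(\wt M)$ equals $1$ near $\hat M$ and is supported in the Fermi normal coordinate chart, and decompose $\wt L = \wt L_\infty + \wt L_\flat$ into its smooth part and its $\hat\rho^{-2}\cC_{\seop;\sop}^{(d_0;k),1,1}$-regular remainder. For $\wt L_\infty$, the key input is the second part of Corollary~\usref{CorFsComm}: since by hypothesis $N_{\hat M_{t_0}}(\eps^2\wt L)$ is independent of $t_0$ and the remainder $\wt L_\flat$ vanishes at $\hat M$ (due to the weight $\hat\rho\rho_\circ = \eps$), the normal operator $N_{\hat M_{t_0}}(\eps^2\wt L_\infty)$ is also $t_0$-independent, yielding $[\chi\pa_t,\wt L_\infty]\in\Diffse^{2,0,1} = \hat\rho^{-1}\Diffse^2$. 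On the support of $1-\chi$, we are bounded away from $\hat M$, so $\pa_t = \hat\rho^{-1}(\hat\rho\pa_t)$ is a smooth multiple of an se-vector field and $[(1-\chi)\pa_t,\wt L_\infty]\in\hat\rho^{-1}\Diffse^2$ as well. For $\wt L_\flat$, a direct computation using $\pa_t\in\cV_{[\sop]}(\wt M)$ (so $\pa_t$ reduces s-regularity by one while preserving the $\eps$-weight) gives $[\pa_t,\wt L_\flat]\in\hat\rho^{-2}\cC_{\seop;\sop}^{(d_0;k-1),1,1}\Diffse^2$. The $\Psi_\seop^1$-component of $\wt L$ is handled entirely analogously, producing lower-order but structurally identical contributions.

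For the second stage, since $K$ is compact and disjoint from $\Char_\seop(\hat\rho\pa_t)$, there exists a microlocal elliptic parametrix $P\in\Psi_\seop^{-1}(\wt M;\wt E)$ with
\[
  P\cdot(\hat\rho\pa_t) = \Id + S, \qquad \WF'_\seop(S)\cap K = \emptyset.
\]
Writing the outcome of Stage~1 as $\hat\rho^{-2}\sum_j a_j^\flat Q_j$ with $a_j^\flat\in\hat\rho\CI + \cC_{\seop;\sop}^{(d_0;k-1),1,1}$ and $Q_j\in\Diffse^2+\Psi_\seop^1$ of total order at most $2$, compute
\[
  Q_j = Q_j P\cdot(\hat\rho\pa_t) - Q_j S =: A_j\,\hat\rho\pa_t + R_j,
\]
where $A_j := Q_j P\in\Psi_\seop^1(\wt M;\wt E)$ and $R_j := -Q_j S\in\Psi_\seop^2(\wt M;\wt E)$ satisfies $\WF'_\seop(R_j)\cap K = \emptyset$. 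Setting $a_j := \hat\rho^{-2}a_j^\flat$ then yields the required decomposition.

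The main technical point is the careful tracking of the $\hat\rho$-weight on the smooth-coefficient part: without the $t_0$-independence of the front face normal operator of $\wt L$, one would only conclude $[\pa_t,\wt L_\infty]\in\hat\rho^{-2}\Diffse^2$, and the resulting coefficient would lie in $\hat\rho^{-2}\CI$ rather than in the one-order-better class $\hat\rho^{-2}\cdot\hat\rho\CI$. This extra power of $\hat\rho$ on the smooth part is precisely what enables the inductive radial point arguments in~\S\ref{SssEstRads} to close without loss of se-regularity, and is the ingredient one would need to recover by any alternative means.
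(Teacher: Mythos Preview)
Your proof is correct and follows essentially the same two-stage approach as the paper: first show $[\wt L,\pa_t]\in\hat\rho^{-2}(\hat\rho\CI+\cC_{\seop;\sop}^{(d_0;k-1),1,1})\Diffse^2$ via Corollary~\ref{CorFsComm} (using the $t_0$-independence of the $\hat M$-normal operator), then invoke a microlocal elliptic parametrix of $\hat\rho\pa_t$ near $K$. Your version is more explicit---you spell out the $\chi\pa_t+(1-\chi)\pa_t$ and $\wt L_\infty+\wt L_\flat$ decompositions that underlie the corollary, and you also handle the $\Psi_\seop^1$ component of~\eqref{EqEstRadsOp} directly---but the structure is identical.
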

\begin{proof}
  Since the $\hat M$-normal operators of $\eps^2\wt L$ are $t$-independent, we have $[\wt L,\pa_t]\in\hat\rho^{-2}(\hat\rho\CI+\cC_{\seop;\sop}^{(d_0;k-1),1,1})\Diffse^2$ by Corollary~\ref{CorFsComm}, which is thus of the form $\sum_{j=1}^N a_j B_j$ where $a_j$ are as in the statement of the lemma and $B_j\in\Diffse^2$. It then remains to write $B_j=A_j\hat\rho\pa_t+R_j$ using a microlocal elliptic parametrix construction in $\Psi_\seop$.
\end{proof}

\begin{prop}[Radial point estimate near $\cR_{\rm out}^+$: s-regularity]
\label{PropEstRadsOut}
  We use the notation of Proposition~\usref{PropEstRadOut} and assume~\eqref{EqEstRadOutThr}, except now we define $\cK:=\{\rho_\infty=\hat\rho=\rho_\circ=0,\ \hat\xi_\seop=1,\ t_0\leq t\leq t_1\}\subset\Sse^*\wt M$. Let $k\in\N_0$, and assume that $\wt L$ satisfies~\eqref{EqEstRadsOp} for some sufficiently large $d_0=d_0(s,s_0,N)$. Then for all neighborhoods $\cU\subset\Sse^*\wt M$ of $\cK$ and all $\chi\in\CIc(\wt M)$ equal to $1$ near the base projection $\pa M_\circ\cap\{t_0\leq t\leq t_1\}$ of $\cK$, there exist operators $B,E,G\in\Psi_\seop^0(\wt M)$ so that $\chi B\chi=B$ etc., the operator wave front sets of $B,E,G$ are contained in $\cU$, furthermore $B$ is elliptic at $\cK$, and $\WF'_\seop(E)$ is disjoint from $\pa W_{\rm out}^+$, and such that, for some constant $C$, the estimate
  \begin{align*}
    \|B u\|_{H_{(\seop;\sop),\eps}^{(s;k),\alpha_\circ,\hat\alpha}} &\leq C\Bigl( \|G\wt L u\|_{H_{(\seop;\sop),\eps}^{(s-1;k),\alpha_\circ,\hat\alpha-2}} + \|E u\|_{H_{(\seop;\sop),\eps}^{(s;k),\alpha_\circ,\hat\alpha}} \\
      &\quad\hspace{5em} + \|G u\|_{H_{(\seop;\sop),\eps}^{(s_0;k),\alpha_\circ,\hat\alpha}} + \|\chi u\|_{H_{(\seop;\sop),\eps}^{(-N;k),\alpha_\circ,\hat\alpha}}\Bigr)
  \end{align*}
  holds (in the strong sense) for all $u$ with support in $t\geq t_0$.
\end{prop}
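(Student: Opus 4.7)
For $k=0$, in which case $H_{(\seop;\sop),\eps}^{(s;0)}=H_{\seop,\eps}^s$, I obtain the estimate by adapting the proof of Proposition~\ref{PropEstRadOut}: quantize the commutant symbol~\eqref{EqEstRadInSymb} (with $\hat\xi_\seop-1$ in place of $\hat\xi_\seop+1$) as a formally self-adjoint element of $\Psi_\seop^{2s-1,2\alpha_\circ,2\hat\alpha-2}$, and replace the sharp cutoff $\chi_+(\tau)=\chi_+(\frac{t-t_1}{r})$ by the soft cutoff $\chi_+(t-t_1)$. The Hamiltonian derivative $\chi_+'(t-t_1)\sfH t$ has the same (negative) sign as the main commutator term on $\wt\Sigma^+$, since $\sfH t\ge 0$ there by the past timelike nature of $\dd t$, and is therefore absorbed. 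The pseudodifferential lower order part of $\wt L$ allowed by~\eqref{EqEstRadsOp} affects only subprincipal symbols, which are already captured in~\eqref{EqEstRadOutTheta}. I retain the $\|Gu\|_{H^{s_0}_{\seop,\eps}}$ term in the conclusion (which the original Proposition~\ref{PropEstRadOut} iterates away) for use in the inductive step below.

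For $k\ge 1$, I argue by induction on $k$, following the second (system) proof of Lemma~\ref{LemmaFVarsseProp}; iteratively trading one s-derivative for two se-derivatives, as in the proof of Proposition~\ref{PropEstRadsIn}, is forbidden here by the upper threshold~\eqref{EqEstRadOutThr}. By shrinking $\cU$ I arrange that $\WF_\seop'(B)$, $\WF_\seop'(E)$, $\WF_\seop'(G)$ are disjoint from $\Char_\seop(\hat\rho\pa_t)=\{\sigma_\seop=0\}$, which is possible because $\sigma_\seop>0$ on $\cR_{\rm out}^+$. The norm equivalence~\eqref{EqEstRadsNorm} then reduces the LHS to $\sum_{j=0}^k\|\pa_t^j Bu\|_{H^{s,\alpha_\circ,\hat\alpha}_{\seop,\eps}}$, and commuting $B$ past $\pa_t^j$ (with $[\pa_t^j,B]\in\Psi_\seop^0$ by Lemma~\ref{LemmaFsComm}, the error terms being of lower s-order and absorbed by the inductive hypothesis) further reduces matters to bounding $\|B\pa_t^j u\|_{H^{s,\alpha_\circ,\hat\alpha}_{\seop,\eps}}$ for $j=0,\ldots,k$.

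To this end I consider $\vec u:=(u,\pa_t u,\ldots,\pa_t^k u)$, a section of $\wt E^{k+1}$. Iterated application of Lemma~\ref{LemmaEstRadsComm} (with the compact set $K$ chosen once to contain $\WF_\seop'(G)$) shows that $\vec u$ solves a lower triangular system $\vec{\wt L}\,\vec u=\vec f$, $\vec f:=(f,\pa_t f,\ldots,\pa_t^k f)$, $f:=\wt L u$, whose diagonal entries are of the form $\wt L+P_j$ with $P_j\in\hat\rho^{-1}\Psi_\seop^1+\rho_\circ\hat\rho^{-1}\cC_{\seop;\sop}\Psi_\seop^1$ (arising from the $A_m\hat\rho\pa_t$-terms of the $l=1$ contribution to $[\pa_t^j,\wt L]$), whose below-diagonal entries are in $\hat\rho^{-1}\Psi_\seop^2+\rho_\circ\hat\rho^{-1}\cC_{\seop;\sop}\Psi_\seop^2$ (with the $\hat M$-weight improving further as one moves away from the diagonal), and which furthermore contains residual contributions on $K$ from the $R_m$-terms of Lemma~\ref{LemmaEstRadsComm}. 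The crucial point is that for each of these perturbations $Q$ of $\wt L\otimes I$, the product $\hat\rho^2 Q$ lies in $\hat\rho\Psi_\seop^\bullet+\rho_\circ\hat\rho\cC_{\seop;\sop}\Psi_\seop^\bullet$, whose full symbol vanishes at $\pa\cR_{\rm out}^+\subset\Sse^*_{\hat M}\wt M$. Consequently $\vec{\wt L}$ is principally scalar with principal symbol $\wt G\otimes I$, and
\[
  \sigmase^1\Bigl(\hat\rho^2\frac{\vec{\wt L}-\vec{\wt L}^*}{2i}\Bigr)\bigg|_{\pa\cR_{\rm out}^+}
\]
is block-diagonal with entries equal to those for $\wt L$, so $\vartheta_{\rm out}$ and the threshold~\eqref{EqEstRadOutThr} are preserved. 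Since each iteration of Lemma~\ref{LemmaEstRadsComm} costs only s-derivatives, $\vec{\wt L}$ lies in the class~\eqref{EqEstRadsOp} (now with s-order $0$) with the same $d_0$.

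Applying the $k=0$ estimate to the tensorial system $\vec{\wt L}\,\vec u=\vec f$ with diagonal microlocalizers $\vec B=B\otimes I$, etc., and extracting the $j$-th component yields, for each $j\le k$,
\begin{align*}
  \|B\pa_t^j u\|_{H^{s,\alpha_\circ,\hat\alpha}_{\seop,\eps}} &\leq C\Bigl(\|G\pa_t^j f\|_{H^{s-1,\alpha_\circ,\hat\alpha-2}_{\seop,\eps}}+\|E\pa_t^j u\|_{H^{s,\alpha_\circ,\hat\alpha}_{\seop,\eps}} \\
    &\qquad +\|G\pa_t^j u\|_{H^{s_0,\alpha_\circ,\hat\alpha}_{\seop,\eps}}+\|\chi\pa_t^j u\|_{H^{-N,\alpha_\circ,\hat\alpha}_{\seop,\eps}}\Bigr).
\end{align*}
Summing over $j$ and reassembling the mixed $(\seop;\sop)$-norms via~\eqref{EqEstRadsNorm} (for the terms involving $B$, $E$, $G$, whose wave front sets all avoid $\Char_\seop(\hat\rho\pa_t)$) and via the definition~\eqref{EqFsFnMixed} (for the $\chi$-term, after a harmless enlargement of $\chi$) yields the proposition. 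The principal technical difficulty is the third paragraph's verification that the perturbation of $\wt L\otimes I$ in $\vec{\wt L}$ is weak enough at $\pa\cR_{\rm out}^+$ to preserve $\vartheta_{\rm out}$; this rests entirely on the one-order $\hat M$-weight improvement built into Lemma~\ref{LemmaEstRadsComm}.
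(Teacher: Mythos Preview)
Your proof is correct in substance and rests on the same key mechanism as the paper's: Lemma~\ref{LemmaEstRadsComm} together with the observation that the resulting modification of $\wt L$ has coefficients vanishing at $\pa\cR_{\rm out}^+$, so the threshold quantity $\vartheta_{\rm out}$ is unchanged. The organization differs. The paper carries out a cleaner step-by-step induction $k\to k-1$: from $\wt L u=f$ and Lemma~\ref{LemmaEstRadsComm} one extracts the single scalar equation
\[
  \wt L^{(1)}(\pa_t u) = \pa_t f + R u,\qquad \wt L^{(1)}=\wt L-\sum_j a_j A_j\hat\rho,\quad R=\sum_j a_j R_j,
\]
applies the inductive hypothesis (case $k-1$, same $s$) to $\wt L^{(1)}$ directly (which is legitimate precisely because $\wt L^{(1)}-\wt L$ vanishes at $\pa M_\circ$), and handles the term $R u$ via $G R\in\Psi_\seop^{-\infty}$, which bounds it by $\|\chi u\|_{H_{\seop,\eps}^{-N}}$. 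Your approach packages all $k$ commutations into one $(k{+}1)\times(k{+}1)$ system and applies the $k=0$ case directly, using induction only to absorb commutator errors $[\pa_t^j,B]$.

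One imprecision in your write-up deserves attention. Your below-diagonal entries contain the pieces $a_m R_m\in\hat\rho^{-1}\Psi_\seop^2$, which are genuinely second order; hence $\vec{\wt L}$ is \emph{not} principally scalar globally, and the $k=0$ statement does not apply to it verbatim. The resolution is exactly what the paper does with $R u$: since $\WF'_\seop(R_m)\cap K=\emptyset$, one should leave these terms on the right as part of the source (so that $\vec{\wt L}$ retains only the diagonal and the genuinely first-order off-diagonal pieces, which \emph{is} principally scalar and in the class~\eqref{EqEstRadsOp}), and then bound $\|G(a_m R_m)\pa_t^{j-1}u\|$ by $\|\chi u\|_{H_{(\seop;\sop),\eps}^{(-N;k)}}$ using $G R_m\in\Psi_\seop^{-\infty}$. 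With this adjustment your argument is complete; the paper's single-equation induction simply avoids having to set up and analyze the full system matrix.
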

\begin{proof}
  As in the proof of Proposition~\ref{PropEstRadsIn}, the case $k=0$ is a direct consequence of~\eqref{EqEstRadOut}, or alternatively it can be proved by replacing the sharp temporal cutoff in $\frac{t-t_1}{r}$ in the proof of Proposition~\ref{PropEstRadOut} by a cutoff in $t-t_1$.

  For $k=1$, we use the notation of Lemma~\ref{LemmaEstRadsComm} and note that
  \begin{equation}
  \label{EqEstRadsOutL1}
    \wt L^{(1)}(\pa_t u) = f^{(1)} := \pa_t f + R u,\qquad \wt L^{(1)}=\wt L-\sum_{j=1}^N a_j A_j\hat\rho,\quad R = \sum_{j=1}^N a_j R_j.
  \end{equation}
  Since $\wt L^{(1)}$ differs from $\wt L$ by an operator whose coefficients vanish (relative to $\hat\rho^{-2}$) at $\pa M_\circ$ and thus at $\pa\cR_{\rm out}^+$, the inductive hypothesis applies to this equation. (The presence of a pseudodifferential first order term in $\wt L^{(1)}$ is the reason for working with~\eqref{EqEstRadsOp}.) We choose $\cU$ so small that $\hat\rho\pa_t$ is elliptic on $\cU$. By~\eqref{EqEstRadsNorm} and using $[B,\pa_t]\in\Psi_\seop^0$, we have
  \[
    \|B u\|_{H_{(\seop;\sop),\eps}^{(s;1),\alpha_\circ,\hat\alpha}} \leq C\Bigl(\|\tilde B u\|_{H_{\seop,\eps}^{s,\alpha_\circ,\hat\alpha}} + \|B\pa_t u\|_{H_{\seop,\eps}^{s,\alpha_\circ,\hat\alpha}}\Bigr)
  \]
  where $\tilde B\in\Psi_\seop^0$ is elliptic near $\WF_\seop'(B)$. Similarly,
  \[
    \|G f^{(1)}\|_{H_{\seop,\eps}^{s-1,\alpha_\circ,\hat\alpha-2}} \leq \|\pa_t G f\|_{H_{\seop,\eps}^{s-1,\alpha_\circ,\hat\alpha-2}} + \|[\pa_t,G]f\|_{H_{\seop,\eps}^{s-1,\alpha_\circ,\hat\alpha-2}} + C\|\chi u\|_{H_{\seop,\eps}^{-N,\alpha_\circ,\hat\alpha-2}},
  \]
  where the last term arises from $G R\in\Psi_\seop^{-\infty}(\wt M)$ provided we arrange $G R=\chi G R\chi$, which indeed holds if $R$ has Schwartz kernel supported sufficiently close to the diagonal. The first two terms are bounded by $\|G f\|_{H_{(\seop;\sop),\eps}^{(s-1;1),\alpha_\circ,\hat\alpha-2}}+\|\tilde G f\|_{H_{\seop,\eps}^{s-1,\alpha_\circ,\hat\alpha-2}}$ where $\tilde G\in\Psi_\seop^0$ is elliptic near $\WF_\seop'(G)$. Arguing similarly for the terms $E(\pa_t u)$, $G(\pa_t u)$, and $\chi\pa_t u$, one finishes the inductive step.
\end{proof}

\subsection{Microlocal propagation near the event horizon}
\label{SsEstHor}

Turning now to the future component $\cR_{\cH^+}^+$ of the generalized radial set over the event horizon (see~\eqref{EqGlDynStdRadHor}), we work with $t\in I_\cC$ and the radial coordinate $\hat r$ on $\hat X_b\cong\hat M_t$. As fiber coordinates on $\Tse^*\wt M$ near $\cR_{\cH^+}^+$, we may take $\sigma_0,\xi_0,\eta_\theta,\eta_{\phi_0}$ from~\eqref{EqGlDynKerrHor0} via the identification~\eqref{EqFseBundle3b}. Thus, $\hat r\approx\hat r_b$ and $\xi_0>0$ near $\cR_{\cH^+}^+$. We denote (by a mild abuse of notation) by $\rho_{\cH^+}^2$ the quadratic defining function of $\pa\cR_{\cH^+}^+$, smooth on $\Sse^*\wt M$ near $\pa\cR_{\cH^+}^+$, which is defined by the expression~\eqref{EqGlDynKerrHorrho2}. We moreover set
\[
  \rho_\infty := \xi_0^{-1}
\]
as in~\eqref{EqGlDynKerrHamProj}.

Due to the (normal) source structure of $\pa\hat\cR_{\cH^+}^+\subset\Sse^*_{\hat M_t}\wt M$ for each $t$, also the set $\pa\cR_{\cH^+}^+\subset\Sse^*_{\hat M}\wt M$ is a normal source for the $\hat\rho^2 H_{\wt G}$ flow (which is, after all, tangent to each fiber of $\hat M$). Thus, the radial point estimates of \cite[\S{2.4}]{VasyMicroKerrdS} are almost directly applicable with $t$ as a parameter, except as soon as $\eps>0$ the function $t$ is not constant (and indeed increasing) along the null-bicharacteristic flow in $\wt\Sigma^+$. Similarly to Proposition~\ref{PropEstRadIn}, we thus insert a factor which localizes in $t$, and indeed sharply so near $t=t_1$ using~\eqref{EqEstBlowup}.

\begin{prop}[Generalized radial point estimate near $\cR_{\cH^+}^+$]
\label{PropEstHor}
  With respect to a positive definite inner product on $\wt E$ which over $\hat M$ is the pullback of a fiber inner product on $\hat\cE$, define the quantity
  \begin{equation}
  \label{EqEstHorTheta}
  \begin{split}
    \vartheta_{\cH^+} &:= \sup_{\pa\cR_{\cH^+}^+} \spec\Bigl[\frac{\varrho^2}{2(\hat r_b-\bhm)}\rho_\infty\sigmase^1\Bigl(\frac{\wt L-\wt L^*}{2 i}\Bigr)\Bigr] \\
      &= \sup_{\pa\hat\cR_{\cH^+}^+} \spec\Bigl[\frac{\varrho^2}{2(\hat r_b-\bhm)}\rho_\infty\sigma^1\Bigl(\frac{L-L^*}{2 i}\Bigr)\Bigr] \in \R.
  \end{split}
  \end{equation}
  Let $s,s_0,N,\alpha_\circ,\hat\alpha\in\R$, and suppose that
  \begin{equation}
  \label{EqEstHorThr}
    s > s_0 > \frac12(1+\vartheta_{\cH^+}).
  \end{equation}
  Define $\cK:=\upbeta^{-1}(\pa\cR_{\cH^+}^+)\cap\{t\geq t_0,\ \frac{t-t_1}{\eps}\leq 0\}$. Then for all neighborhoods $\cU\subset\upbeta^*(\Sse^*\wt M)$ of $\cK$ and all $\chi\in\CIc([\wt M;\hat M_{t_1}])$ equal to $1$ near the base projection of $\cK$, there exist operators $B,G\in\tilde\Psi_\seop^0(\wt M)$ with $\chi B\chi=B$ and $\chi G\chi=G$, and so that
  \begin{itemize}
  \item the operator wave front sets of $B,G$ are contained in $\cU$,
  \item $B$ is elliptic at $\cK$,
  \end{itemize}
  and a constant $C$ so that the estimate
  \begin{equation}
  \label{EqEstHor}
    \|B u\|_{H_{\seop,\eps}^{s,\alpha_\circ,\hat\alpha}} \leq C\Bigl(\|G\wt L u\|_{H_{\seop,\eps}^{s-1,\alpha_\circ,\hat\alpha-2}} + \|G u\|_{H_{\seop,\eps}^{s_0,\alpha_\circ,\hat\alpha}} + \|\chi u\|_{H_{\seop,\eps}^{-N,\alpha_\circ,\hat\alpha}}\Bigr)
  \end{equation}
  holds (in the strong sense) for all $u$ with support in $t\geq t_0$.
\end{prop}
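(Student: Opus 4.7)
The argument is a standard positive commutator estimate in $\tilde\Psi_\seop(\wt M)$, following the template of Propositions~\ref{PropEstRadIn}--\ref{PropEstRadOut} (in turn modelled on \cite[\S{2.4}]{VasyMicroKerrdS}). The key new input is the source structure of $\pa\cR_{\cH^+}^+$ recorded in~\eqref{EqGlDynKerrHamLot}, together with the radial dilation coefficient $\mu'(\hat r_b)=2(\hat r_b-\bhm)$. Since $\pa\cR_{\cH^+}^+$ lives entirely over $\hat M$, and since the localization of $\cK$ is sharp on the scale of $\frac{t-t_1}{\eps}$ (i.e.\ on the unit scale in the Kerr time $\hat t$), we must quantize symbols involving the cutoff $\chi_+\bigl(\frac{t-t_1}{\eps}\bigr)$, which is smooth on $[\wt M;\hat M_{t_1}]$ but only se-regular on $\wt M$; this is exactly the class allowed in $\tilde\Psi_\seop$.

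I will quantize a formally self-adjoint commutant $A=A^*\in\tilde\Psi_\seop^{2s-1,2\alpha_\circ,2\hat\alpha-2}$ with principal symbol $a=\check a^2$, where
\[
\check a=\rho_\infty^{-s+\frac12}\rho_\circ^{-\alpha_\circ}\hat\rho^{-(\hat\alpha-1)}\,\chi_{\cH^+}(\rho_{\cH^+}^2)\,\chi_\circ(\rho_\circ)\,\hat\chi(\hat\rho)\,\chi_-(t-t_0)\,\chi_+\!\Bigl(\frac{t-t_1}{\eps}\Bigr),
\]
with $\rho_{\cH^+}^2$ the quadratic defining function of $\pa\cR_{\cH^+}^+$ from~\eqref{EqGlDynKerrHorrho2}, $\chi_{\cH^+},\chi_\circ,\hat\chi\in\CIc([0,\delta_0))$ having non-positive derivatives and $\sqrt{-\chi_\bullet\chi_\bullet'}\in\CI$, and $\chi_\pm$ as in~\eqref{EqEstRadInChipm}. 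Computing $2\Im\la\wt L u,A u\ra=\la\sC u,u\ra$ with $\sC=i[\wt L,A]+2\tfrac{\wt L-\wt L^*}{2i}A$, the principal symbol of $\sC\in\tilde\Psi_\seop^{2s,2\alpha_\circ,2\hat\alpha}$ equals $\rho_\infty^{-2s}\rho_\circ^{-2\alpha_\circ}\hat\rho^{-2\hat\alpha}$ times $c=c'+2\ell_{\cH^+}\tilde\chi^2$, where $\tilde\chi$ is the product of all cutoffs, $\ell_{\cH^+}=\rho_\infty\sigmase^1(\tfrac{\wt L-\wt L^*}{2i})$, and $c'=2\rho_\infty^{2s-1}\rho_\circ^{2\alpha_\circ}\hat\rho^{2\hat\alpha-2}\check a\,\sfH\check a$ with $\sfH=\rho_\infty\hat\rho^2 H_{\wt G}$. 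At $\pa\cR_{\cH^+}^+$, only derivatives of $\rho_\infty^{-s+1/2}$ contribute to $c'$: using $H_{\wt G}=\varrho^{-2}H_{\varrho^2\wt G}$ on $\wt\Sigma^+$ and the coefficient $\mu'(\hat r_b)$ of $\rho_\infty\pa_{\rho_\infty}$ in~\eqref{EqGlDynKerrHamLot}, I obtain $c'|_{\pa\cR_{\cH^+}^+}=2(2s-1)\varrho^{-2}(\hat r_b-\bhm)\,\tilde\chi^2$, so that, in the sense of bundle endomorphisms,
\[
c\bigr|_{\pa\cR_{\cH^+}^+}=4(\hat r_b-\bhm)\varrho^{-2}\bigl(s-\tfrac12-\tfrac12\vartheta_{\cH^+}\bigr)\tilde\chi^2+(\text{negative remainder}),
\]
which is positive definite under the above-threshold assumption~\eqref{EqEstHorThr}.

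The contributions from Hamiltonian derivatives of the cutoffs are arranged to have favorable signs: by~\eqref{EqGlDynKerrHamLot}, $\sfH\rho_{\cH^+}^2\geq\mu'(\hat r_b)\rho_{\cH^+}^2$ on a neighborhood of $\pa\cR_{\cH^+}^+$, so $\chi_{\cH^+}\chi_{\cH^+}'\sfH\rho_{\cH^+}^2$ is a negative square; the same holds for $\chi_\circ\chi_\circ'\sfH\rho_\circ$ and $\hat\chi\hat\chi'\sfH\hat\rho$ once the supports are small enough; $\chi_-\chi_-'\sfH t$ is a positive square by the past timelike character of $\dd t$; and since $\sfH\eps=0$ and $\sfH t$ vanishes at $\hat M$ (because $\hat\rho^2 H_{\wt G}$ restricts to the zero-energy operator $H^0_{\hat G_b}$ there), the quotient $f:=\sfH\frac{t-t_1}{\eps}=\eps^{-1}\sfH t$ extends smoothly to $[\wt M;\hat M_{t_1}]$ and agrees with $\sfH\hat t\geq 0$ over $\hat M^\circ$, hence $\chi_+\chi_+'f$ is a negative square near $\pa\cR_{\cH^+}^+$ for small enough $\delta_0$. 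Writing $c\leq -\eta\tilde\chi^2$ plus sums of signed squares and terms supported away from $\cK$, quantizing, and pairing with $u$ yields, after Cauchy--Schwarz on $2\Im\la\wt L u,A u\ra$ and the usual absorption, an estimate of the form
\[
\|\tilde B u\|_{H_{\seop,\eps}^{s,\alpha_\circ,\hat\alpha}}\leq C\bigl(\|G\wt L u\|_{H_{\seop,\eps}^{s-1,\alpha_\circ,\hat\alpha-2}}+\|\tilde G u\|_{H_{\seop,\eps}^{s-\frac12,\alpha_\circ,\hat\alpha}}+\|\chi u\|_{H_{\seop,\eps}^{-N,\alpha_\circ,\hat\alpha}}\bigr),
\]
with $\tilde B$ elliptic at $\cK$. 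Iterating with $s-\frac12,s-1,\ldots$ down to $s_0$ (admissible by~\eqref{EqEstHorThr}) reduces the error to $\|G u\|_{H_{\seop,\eps}^{s_0,\alpha_\circ,\hat\alpha}}$, yielding~\eqref{EqEstHor}. The strong version follows by a standard regularization in $\tilde\Psi_\seop$, or alternatively, for each fixed $\eps>0$, from the classical normal source radial point estimate which already gives microlocal $H^s$-regularity at $\cK$.

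The main obstacle is the sharp localization in $\frac{t-t_1}{\eps}$: since this function is singular on $\wt M$ but smooth on the blow-up $[\wt M;\hat M_{t_1}]$, one has to verify that $\sfH\frac{t-t_1}{\eps}$ extends smoothly to this blow-up and has the correct sign on $\wt\Sigma^+$ near $\pa\cR_{\cH^+}^+$ (including over the interior of $\ff[\wt M;\hat M_{t_1}]$, where it reduces to the Kerr Hamiltonian derivative of $\hat t$); all the remaining ingredients are parallel to the incoming radial point argument of Proposition~\ref{PropEstRadIn}.
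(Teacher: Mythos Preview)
Your approach is the same as the paper's—a positive commutator argument at a source, with the sharp $\frac{t-t_1}{\eps}$ cutoff handled in $\tilde\Psi_\seop$—but there are two concrete execution errors.

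First, the cutoff $\chi_\circ(\rho_\circ)$ with $\chi_\circ\in\CIc([0,\delta_0))$ makes the commutant identically zero: the radial set $\pa\cR_{\cH^+}^+$ lives over $\hat r=\hat r_b$ in the \emph{interior} of $\hat M$, where $\rho_\circ=\la\hat x\ra^{-1}\sim\hat r_b^{-1}$ is bounded away from $0$, so $\chi_\circ(\rho_\circ)\equiv 0$ for small $\delta_0$. You have carried over the template of Proposition~\ref{PropEstRadIn} (where the radial set sits over $\pa M_\circ$, i.e.\ $\rho_\circ=0$) without adapting it. The paper instead first observes that the weights are irrelevant here—$\rho_\circ$ is bounded, and $\hat\rho=\eps$ commutes with everything—reduces to $\alpha_\circ=\hat\alpha=0$, and takes the simpler commutant $\check a=\rho_\infty^{-s+\frac12}\chi(\rho_{\cH^+}^2)\chi_-(t-t_0)\chi_+(\frac{t-t_1}{\eps})$. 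The $\rho_{\cH^+}^2$ cutoff already localizes in $\hat r$ (via the $(\hat r-\hat r_b)^2$ term in~\eqref{EqGlDynKerrHorrho2}), so no separate $\chi_\circ,\hat\chi$ are needed; and since $\sfH\hat\rho=\sfH\eps=0$, the $\hat\chi$ cutoff would contribute nothing anyway.

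Second, there is a sign error in your main term. From~\eqref{EqGlDynKerrHamLot} the coefficient of $\rho_\infty\pa_{\rho_\infty}$ in $\sfH'=\rho_\infty H_{\varrho^2\hat G_b}$ is $+\mu'(\hat r_b)$, so the contribution of $\rho_\infty^{-s+\frac12}$ to $\check a\,\sfH\check a$ carries the factor $(-s+\tfrac12)\mu'(\hat r_b)<0$; the principal symbol of $\varrho^2\rho_\infty^{2s}\sC$ at $\pa\cR_{\cH^+}^+$ is $\mu'(\hat r_b)(-2s+1)+2\ell_1$, which is \emph{negative} above threshold. You claim $c'|_{\pa\cR_{\cH^+}^+}=+2(2s-1)\varrho^{-2}(\hat r_b-\bhm)\tilde\chi^2$ and ``positive definite'', yet a few lines later write ``$c\leq -\eta\tilde\chi^2$''; these are inconsistent. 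The correct sign is negative, and then all the cutoff contributions from $\chi(\rho_{\cH^+}^2)$ and $\chi_+$ have the \emph{same} (negative) sign as the main term—this is exactly why no a priori control term $E$ appears in~\eqref{EqEstHor}, in contrast to the saddle points of Propositions~\ref{PropEstRadIn}--\ref{PropEstRadOut}.
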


In~\eqref{EqEstHor}, the weight $\alpha_\circ$ is irrelevant since the supports of all distributions involved are disjoint from $M_\circ$, so we may take $\alpha_\circ=0$. Similarly, the weight $\hat\alpha$ is irrelevant since near $\hat r=\hat r_b$ we may take $\hat\rho=\eps$; but multiplication by functions of $\eps$ commutes with every se-operator, and so we may simply divide $u$ by $\hat\rho^{\hat\alpha}$ to reduce to $\hat\alpha=0$, say.

\begin{proof}[Proof of Proposition~\usref{PropEstHor}]
  Since the proof is very similar to that of Proposition~\ref{PropEstRadIn}, we shall be very brief. We use the expression~\eqref{EqGlDynKerrHamLot} for the vector field $\sfH'=\rho_\infty H_{\varrho^2\hat G_b}$, which on the characteristic set over $\eps=0$ equals $\varrho^2\sfH$ where we set $\sfH=\eps^2\rho_\infty H_{\wt G}$. With $\alpha_\circ=\hat\alpha=0$, we take the commutant to have symbol $a=\check a^2$ where
  \[
    \check a = \rho_\infty^{-s+\frac12}\chi(\rho_{\cH^+}^2)\chi_-(t-t_0)\chi_+\Bigl(\frac{t-t_1}{\eps}\Bigr).
  \]
  Here, $\chi_\pm$ are as in~\eqref{EqEstRadInChipm}. Moreover, $\chi\in\CIc([0,\delta_0))$ is equal to $1$, with $\delta_0>0$ so small that $\sfH\rho_{\cH^+}^2$ is larger than a positive multiple of $\rho_{\cH^+}^2$ on $\supp\check a\cap\{\eps=0\}$ (cf.\ the discussion after~\eqref{EqGlDynKerrHorrho2}); with $\chi$ fixed, the function $\sfH\rho_{\cH^+}^2$ therefore has a positive lower bound on $\supp\check a\cap\supp\dd\chi$ for all sufficiently small $\eps\geq 0$. As in the proof of Proposition~\ref{PropEstRadIn}, the past timelike nature of $\dd t$ implies that the derivatives $\sfH\chi_-$ and $\sfH\chi_+$ are positive, resp.\ negative squares near $\supp\check a$; this uses the final part of Lemma~\ref{LemmaGlCoord}. The principal symbol $c$ of the operator $\varrho^2\rho_\infty^{2 s}\sC$, with $\sC$ defined by~\eqref{EqEstRadInComm} evaluates at $\pa\cR_{\cH^+}^+$ to
  \[
    \mu'(\hat r_b)(-2 s+1) + 2\ell_1
  \]
  where $\ell_1=\varrho^2\rho_\infty\sigma^1(\frac{L-L^*}{2 i})$. When this is negative (thus the derivative of $\chi_+$ has the same sign, and is dropped in~\eqref{EqEstHor}), we obtain the desired estimate~\eqref{EqEstHor} for all sufficiently small $\eps$. From this calculation, we read off the threshold quantity in~\eqref{EqEstHorThr} from $\mu'(\hat r_b)=2(\hat r_b-\bhm)$.
\end{proof}

\begin{rmk}[From the perspective of the small Kerr black hole: horizon]
\label{RmkEstHorKerr}
  Recall from Lemma~\ref{LemmaFseRelGeo} the relationship between $[\wt M;\hat M_{t_0}]$ (minus the lift of $\wt K^\circ$) and $[0,1)_\eps$ times the Kerr spacetime manifold $\hat M_b$, given in the coordinates $(\eps,t,\hat x)$ near $\hat M^\circ$ by $(\eps,t,\hat x)\mapsto(\eps,\hat t,\hat x)=(\eps,\frac{t-t_0}{\eps},\hat x)$. Then the time interval on which Proposition~\ref{PropEstHor} applies for fixed $\eps>0$ is $0\leq\hat t\leq\eps^{-1}t_1$ (or $\hat t\leq\eps^{-1}t_1+\delta$ where $\delta>0$ captures the size of $\cU$, which extends a bit past $\frac{t-t_1}{\eps}=0$). Recall that for all $\eps>0$ all null-bicharacteristics over $\Omega_\eps$ exit the domain in the forward and backward direction in a finite amount affine time by Lemma~\ref{LemmaGlDynStdFlow}\eqref{ItGlDynStdFlowXY}. But this amount scales like $\eps^{-1}$. The uniform (in $\eps$) control of $u$ microlocally near $\{0\leq\hat t\leq\eps^{-1}t_1\}\cap N^*\{\hat r=\hat r_b\}\setminus o$ (in $L^2$, together with 3b-derivatives, which near there are simply coordinate derivatives $\pa_{\hat t}$, $\pa_{\hat x}$) thus ultimately needs to take into account the null-bicharacteristic dynamics near the event horizon from $\hat t=0$ to $\hat t=\infty$; and indeed in the compactification $\hat M_b$ of the Kerr spacetime manifold the appropriate notion of null-bicharacteristic flow (namely, the flow in $\Ttb^*\hat M_b\setminus o$) has an invariant (generalized radial) set over $\hat r=\hat r_b$, $\hat t^{-1}=0$. In terms of the \emph{exponential} compactification $[0,1)_{e^{-\hat t}}\times\hat X_b^\circ$ of spatially bounded regions in time, this set is precisely the generalized radial set at future infinity already studied (in the de~Sitter and Kerr--de~Sitter context, near their cosmological and event horizons) from this perspective in \cite[Proposition~2.1]{HintzVasySemilinear}. While we do not use such an exponential compactification in the present asymptotically flat setting, one may nonetheless regard Proposition~\ref{PropEstHor} as a version of \cite[Proposition~2.1]{HintzVasySemilinear} giving an estimate only up to time $\eps^{-1}$, but with uniform norm bounds as $\eps\searrow 0$. An important difference however is that the metric itself depends on $\eps$ and is, in fact, only $\eps$-close to the Kerr metric ($\eps^\delta$-close, for any $\delta>0$, would be sufficient).
\end{rmk}

\subsubsection{Higher s-regularity}
\label{SssEstHors}

We now proceed as in~\S\ref{SssEstRads} and prove an analogue of Proposition~\ref{PropEstHor} on mixed (se;s)-Sobolev spaces, under the regularity assumption~\eqref{EqEstRadsOp} on $\wt L$. Since $\eps\pa_t$ is \emph{not} elliptic at $\pa\cR_{\cH^+}$, we cannot quite argue as in Proposition~\ref{PropEstRadsOut} using Lemma~\ref{LemmaEstRadsComm}; but a simple modification will suffice: the observation is that $\pa_{\hat r}$ \emph{is} elliptic at $\pa\cR_{\cH^+}$, and commuting $\wt L$ with $\pa_{\hat r}$ results in a subprincipal term ($A'_0$ in Lemma~\ref{LemmaEstHorsComm} below) with the correct sign so as to not affect the threshold condition~\eqref{EqEstHorThr}. This can be regarded as a manifestation of the \emph{enhanced red-shift effect}, which first appeared in \cite{DafermosRodnianskiKerrBoundedness}, and which (in form of the monotonicity of the principal symbol $\xi_0$ of $\pa_{\hat r}$ under the $H_{\hat G_b}$-flow) underlies the microlocal radial point estimate proved above or in \cite[Equation~(2.3)]{VasyMicroKerrdS}.

Thus, analogously to~\eqref{EqEstRadsNorm}, we have the following norm equivalences for $B\in\Psi_\seop^0(\wt M)$ with $\WF_\seop'(B)\cap\Char_\seop(\pa_{\hat r})=\emptyset$ and Schwartz kernel supported in a region of bounded $\hat r$ in both factors:
\begin{equation}
\label{EqEstHorsNorm}
  \|B u\|_{H_{(\seop;\sop),\eps}^{(s;k)}} \sim \|B u\|_{H_{(\seop;\sop),\eps}^{(s;k-1)}} + \|\pa_t B u\|_{H_{(\seop;\sop),\eps}^{(s;k-1)}} + \|\pa_{\hat r}B u\|_{H_{(\seop;\sop),\eps}^{(s;k-1)}} \sim \sum_{j+l\leq k} \|\pa_t^j\pa_{\hat r}^l B u\|_{H_{\seop,\eps}^s};
\end{equation}
similarly for weighted spaces. The analogue of Lemma~\ref{LemmaEstRadsComm} is:

\begin{lemma}[Commutators with $\pa_t$ and $\pa_{\hat r}$]
\label{LemmaEstHorsComm}
  Let $K\subset\Sse^*\wt M$ be a compact subset which is disjoint from $\Char_\seop(\pa_{\hat r})$. Then we can write
  \begin{align*}
    [\wt L,\pa_t] &= \sum_{j=1}^N a_j(A_j\pa_{\hat r} + R_j), \\
    [\wt L,\pa_{\hat r}] &= \hat\rho^{-2}(i^{-1}A'_0\pa_{\hat r} + R'_0) + a'_0\wt L +  \sum_{j=1}^N a'_j(A'_j\pa_{\hat r} + R'_j)
  \end{align*}
  where $a_j,a'_j\in\hat\rho^{-2}(\hat\rho\CI+\cC_{\seop;\sop}^{(d_0;k-1),1,1})$, $a'_0\in\CI$, $A_j,A'_0,A'_j\in\Psi_\seop^1(\wt M)$, $R_j,R'_0,R'_j\in\Psi_\seop^2(\wt M)$, the operator wave front sets of $R_j,R'_0,R'_j$ are disjoint from $K$, and
  \[
    \sigmase^1(A'_0)|_{\cR_{\cH^+}^+}<0.
  \]
\end{lemma}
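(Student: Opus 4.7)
The plan is to combine a principal symbol calculation at $\cR_{\cH^+}^+$ with a microlocal parametrix factorization exploiting the ellipticity of $\pa_{\hat r}$ on $K$. Since $K \cap \Char_\seop(\pa_{\hat r}) = \emptyset$, on a conic neighborhood $\cU' \supset K$ there is a right microlocal parametrix $P_0 \in \Psi_\seop^{-1}$ with $\pa_{\hat r} P_0 = I - E_0$, $E_0 \in \Psi_\seop^{-\infty}$, $\WF'_\seop(E_0) \cap K = \emptyset$. Given a cutoff $\chi \in \Psi_\seop^0$ microsupported in $\cU'$ and $\equiv 1$ near $K$, any $B \in \Diffse^2 + \Psi_\seop^1$ admits a decomposition $B = (\chi B P_0)\pa_{\hat r} + (\chi B E_0 + (1-\chi)B)$ of the form $A\pa_{\hat r} + R$ with $A \in \Psi_\seop^1$ and $\WF'_\seop(R) \cap K = \emptyset$.

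For the first equation, the $t$-independence of $N_{\hat M_{t_0}}(\eps^2 \wt L)$ combined with Corollary~\ref{CorFsComm}, applied separately to the smooth and $\cC_{\seop;\sop}^{(d_0;k),1,1}$-regular summands (both differential and pseudodifferential parts), yields
\[
  [\wt L, \pa_t] \in \hat\rho^{-2}\bigl(\hat\rho\,\CI + \cC_{\seop;\sop}^{(d_0-1;k-1),1,1}\bigr)(\Diffse^2 + \Psi_\seop^1).
\]
Writing this as a finite sum $\sum_j a_j B_j$ with $a_j$ in the claimed class and $B_j \in \Diffse^2 + \Psi_\seop^1$, and applying the parametrix factorization above to each $B_j$, produces the required decomposition.

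The second equation requires a principal symbol computation. Split $\wt L = \wt L_0 + \wt L_1$ into smooth and $\cC_{\seop;\sop}^{(d_0;k),1,1}$-regular parts. Since $\pa_{\hat r} = \hat\rho\,\pa_r \in \Vse(\wt M)$ near $K$ and $\pa_{\hat r}\hat\rho=0$ in the bounded-$\hat r$ region, $[\wt L_0, \pa_{\hat r}] \in \hat\rho^{-2}\Diffse^2$ has principal symbol
\[
  \sigmase^2([\wt L_0, \pa_{\hat r}]) = \tfrac{1}{i}\{\wt G, i\xi_0\} = -\pa_{\hat r}\wt G.
\]
Restricting to $\hat M$ via the Kerr identification and using~\eqref{EqGlDynKerrHamHor} for $\varrho^2 \hat G_b$, all terms in $\pa_{\hat r}(\varrho^2\hat G_b)$ at $\cR_{\cH^+}^+$ (where $\sigma_0 = \eta_\theta = \eta_{\phi_0} = 0$, $\hat r = \hat r_b$) either contain a factor of $\sigma_0, \eta_\theta, \eta_{\phi_0}$, or multiply the vanishing factor $\hat G_b$; only $\mu'(\hat r_b)\xi_0^2 = 2(\hat r_b - \bhm)\xi_0^2$ survives. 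Hence $\pa_{\hat r}\hat G_b|_{\cR_{\cH^+}^+} = 2(\hat r_b-\bhm)\xi_0^2/\varrho^2$. Define $A'_0 \in \Psi_\seop^1$ as a quantization of the degree-$1$ symbol $-\pa_{\hat r}(\hat\rho^2\wt G)/\xi_0$ on a conic neighborhood of $K$ in $\{\xi_0 > 0\}$ (times a microlocal cutoff); then its principal symbol at $\cR_{\cH^+}^+$ equals $-2(\hat r_b-\bhm)\xi_0/\varrho^2$, which is negative by the subextremality condition $\hat r_b > \bhm$. By construction, $\hat\rho^{-2} i^{-1}A'_0\pa_{\hat r}$ matches the principal symbol of $[\wt L_0, \pa_{\hat r}]$ microlocally on $\cU'$. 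A suitable smooth $a'_0 \in \CI$ then absorbs any $\wt G$-proportional residue in the full normal operator at $\hat M$, reducing the remainder to $\hat\rho^{-1}\Diffse^2$, which we expand via iterated parametrix factorization into the sum $\sum_{j\geq 1}a'_j(A'_j\pa_{\hat r} + R'_j)$ with $a'_j \in \hat\rho^{-1}\CI$. The contribution $[\wt L_1, \pa_{\hat r}] \in \hat\rho^{-2}\cC_{\seop;\sop}^{(d_0-1;k-1),1,1}(\Diffse^2 + \Psi_\seop^1)$ is absorbed into the $\cC_{\seop;\sop}^{(d_0-1;k-1),1,1}$-component of the $a'_j$ by the same procedure.

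The main technical obstacle is arranging that, after subtracting the leading $\hat\rho^{-2}i^{-1}A'_0\pa_{\hat r}$ term, the residue admits a clean expansion respecting the claimed weighted structure $a'_j \in \hat\rho^{-2}(\hat\rho\,\CI + \cC_{\seop;\sop}^{(d_0;k-1),1,1})$. This requires the $a'_0\wt L$ term to absorb not only the principal-symbol-level $\wt G$-multiple, but any full-normal-operator obstruction on $\hat M$ near $\pa\cR_{\cH^+}^+$; this is possible because $\pa_{\hat r}$ is elliptic at $\pa\cR_{\cH^+}^+$, so the normal operator equation $[L, \pa_{\hat r}] \equiv i^{-1}A'_{0,\hat M}\pa_{\hat r} + a'_{0,\hat M} L$ (mod smoothing, microlocally near $\pa\cR_{\cH^+}^+$) is solvable via the Kerr-model right parametrix of $\pa_{\hat r}$. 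The crucial sign $\sigmase^1(A'_0)|_{\cR_{\cH^+}^+} < 0$ is thus a direct manifestation of the red-shift: $\mu'(\hat r_b) > 0$ by subextremality.
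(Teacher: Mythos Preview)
Your proposal is correct and follows essentially the same approach as the paper: microlocal parametrix factorization through the elliptic operator $\pa_{\hat r}$, combined with a principal symbol computation yielding the red-shift sign $\sigmase^1(A'_0)|_{\cR_{\cH^+}^+}<0$ from $\mu'(\hat r_b)=2(\hat r_b-\bhm)>0$. The only organizational difference is that the paper separates the $a'_0\wt L$ term first, via the Leibniz identity $H_{\hat G_b}\xi_0 = \varrho^{-2}H_{\varrho^2\hat G_b}\xi_0 + (\varrho^2\hat G_b)H_{\varrho^{-2}}\xi_0$, which directly exhibits $a'_0$ as arising from $\varrho^2 H_{\varrho^{-2}}\xi_0$ and gives $A'_0$ the explicit principal symbol $-\varrho^{-2}(\mu'\xi_0-4\hat r\sigma_0)$; you instead quantize the full $-\pa_{\hat r}(\hat\rho^2\wt G)/\xi_0$ and peel off the $\wt G$-multiple afterwards, which is equivalent but slightly less transparent about why the remainder lands in the $\hat\rho^{-1}$-weighted class.
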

\begin{proof}
  The commutator $[\wt L,\pa_t]$ as well as the commutator with $\pa_{\hat r}$ of any contribution to $\wt L$ of class $\hat\rho^{-2}(\hat\rho\CI+\cC_{\seop;\sop}^{(d_0;k),1,1})\Diffse^2$ is handled as in the proof of Lemma~\ref{LemmaEstRadsComm}, with $\hat\rho\pa_t$ replaced by $\pa_{\hat r}$; these give rise to $a_j,A_j,R_j$ and $a'_j,A'_j,R'_j$. The terms $a_0',A'_0,R'_0$ only arise from the $\hat M$-normal operator of $\wt L$ (and thus have smooth coefficients). Concretely, using~\eqref{EqGlDynKerrHam} and recalling that $\sigmase^1(\pa_{\hat r})=i\xi_0$, we compute
  \[
    H_{\hat G_b}\xi_0 = \varrho^{-2}H_{\varrho^2\hat G_b}\xi_0 + \varrho^2\hat G_b H_{\varrho^{-2}}\xi_0 = -\varrho^{-2}(\mu'\xi_0-4\hat r\sigma_0)\xi_0 + \varrho^2\hat G_b H_{\varrho^{-2}}\xi_0.
  \]
  Since $\eps^2\wt G=\hat G_b$ over $\hat M$ and $\mu'(\hat r_b)=2(\hat r_b-\bhm)>0$, this implies the claim: the term $a'_0$ arises from $\varrho^2 H_{\varrho^{-2}}\xi_0$, while $A'_0$ (with principal symbol $-\varrho^{-2}(\mu'\xi_0-4\hat r\sigma_0)$, which equals $2\varrho^{-2}(\hat r_b-\bhm)\xi_0<0$ at $\cR_{\cH^+}^+$) and $R'_0$ arise via a microlocal parametrix construction for $\pa_{\hat r}$ on $K$.
\end{proof}

\begin{prop}[Generalized radial point estimate near $\cR_{\cH^+}^+$: s-regularity]
\label{PropEstHors}
  We use the notation of Proposition~\usref{PropEstHor}, except now we define $\cK=\pa\cR_{\cH^+}^+\cap\{t_0\leq t\leq t_1\}$. Let $k\in\N_0$, and assume that $\wt L$ satisfies~\eqref{EqEstRadsOp} for some sufficiently large $d_0=d_0(s,s_0,N)$. Then for all neighborhoods $\cU\subset\Sse^*\wt M$ of $\cK$ and all $\chi\in\CIc(\wt M)$ equal to $1$ near the base projection $\{\hat r=\hat r_b,\ t_0\leq t\leq t_1\}$ of $\cK$, there exist operators $B,G\in\Psi_\seop^0(\wt M)$ with $\chi B\chi=B$ and $\chi G\chi=G$ and with operator wave front sets contained in $\cU$ and with $B$ elliptic at $\cK$, so that, for some constant $C$, the estimate
  \[
    \|B u\|_{H_{(\seop;\sop),\eps}^{(s;k),\alpha_\circ,\hat\alpha}} \leq C\Bigl(\|G\wt L u\|_{H_{(\seop;\sop),\eps}^{(s-1;k),\alpha_\circ,\hat\alpha-2}} + \|G u\|_{H_{(\seop;\sop),\eps}^{(s_0;k),\alpha_\circ,\hat\alpha}} + \|\chi u\|_{H_{(\seop,\sop),\eps}^{(-N;k),\alpha_\circ,\hat\alpha}}\Bigr)
  \]
  holds (in the strong sense) for all $u$ with support in $t\geq t_0$.
\end{prop}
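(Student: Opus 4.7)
The proof proceeds by induction on $k$, following the template established in Proposition~\ref{PropEstRadsOut} but with two modifications specific to the event horizon. First, since $\pa_{\hat r}$ (rather than $\hat\rho\pa_t$) is elliptic at $\cR_{\cH^+}^+$, I will use \emph{both} $\pa_t$ and $\pa_{\hat r}$ as commutator s-vector fields, exploiting the norm equivalence~\eqref{EqEstHorsNorm}. Second, the commutator $[\wt L,\pa_{\hat r}]$ produces a subprincipal contribution with the favorable sign $\sigmase^1(A'_0)|_{\cR_{\cH^+}^+}<0$ supplied by Lemma~\ref{LemmaEstHorsComm} (the enhanced red-shift), which ensures that the operator governing $\pa_{\hat r} u$ still satisfies the threshold condition~\eqref{EqEstHorThr}. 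Since $\cR_{\cH^+}^+$ is a compact subset of $\Sse^*_{\hat M^\circ}\wt M$ on which $\hat\rho\sim\eps$ and $\rho_\circ\sim 1$, the weights $\alpha_\circ,\hat\alpha$ play no role in any symbolic computation, and I tacitly set them to zero in the threshold analysis.

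\textbf{Base case $k=0$.} This follows from Proposition~\ref{PropEstHor} by a minor adjustment: the temporal cutoff $\chi_+\bigl(\tfrac{t-t_1}{\eps}\bigr)$ used there is replaced by $\chi_+(t-t_1)$, and the commutant is quantized in the stronger class $\Psi_\seop^{2s-1,0,-2}$ of~\S\ref{SssFVarsse} (in place of $\tilde\Psi_\seop$). All symbolic computations (in particular the signs of the Hamiltonian derivatives of $\chi_\pm$ and of the quadratic defining function $\rho_{\cH^+}^2$) are unaffected.

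\textbf{Inductive step $k-1\to k$.} Using the norm equivalence~\eqref{EqEstHorsNorm} and the fact that $[\pa_t,B],[\pa_{\hat r},B]\in\Psi_\seop^0$ (cf.\ \S\ref{SssFVarsse}), together with the inductive hypothesis applied to $u$ at level $k-1$, it suffices to control $\|B\pa_{\hat r}u\|_{H_{(\seop;\sop),\eps}^{(s;k-1)}}$ and $\|B\pa_t u\|_{H_{(\seop;\sop),\eps}^{(s;k-1)}}$. For the first, I rewrite $\wt L\pa_{\hat r}u=\pa_{\hat r}f+[\wt L,\pa_{\hat r}]u$ using Lemma~\ref{LemmaEstHorsComm} and move the terms involving $\pa_{\hat r}u$ to the left:
\[
  \wt L^{(\hat r)}(\pa_{\hat r}u) = \pa_{\hat r}f + a'_0 f + \hat\rho^{-2}R'_0 u + \sum_j a'_j R'_j u,\qquad
  \wt L^{(\hat r)}:=\wt L - \hat\rho^{-2}i^{-1}A'_0 - \sum_j a'_j A'_j.
\]
The operator $\wt L^{(\hat r)}$ has the same principal symbol as $\wt L$, belongs to the class~\eqref{EqEstRadsOp} with s-regularity $k-1$ (since $a'_j\in\hat\rho^{-2}(\hat\rho\CI+\cC_{\seop;\sop}^{(d_0;k-1),1,1})$), and the contribution of the corrections $\sum_j a'_j A'_j$ to its subprincipal symbol vanishes at $\hat M$ (because $a'_j$ has weight $\hat\rho^{-1}$, one less than $\wt L$), hence at $\cR_{\cH^+}^+$. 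The only nontrivial subprincipal change comes from $-\hat\rho^{-2}i^{-1}A'_0$, whose imaginary part has principal symbol $\hat\rho^{-2}\sigmase^1(A'_0)$; by Lemma~\ref{LemmaEstHorsComm} this is strictly negative at $\cR_{\cH^+}^+$. Consequently the threshold quantity for $\wt L^{(\hat r)}$ satisfies $\vartheta_{\cH^+}^{(\hat r)}\leq\vartheta_{\cH^+}$, so the condition~\eqref{EqEstHorThr} holds for $\wt L^{(\hat r)}$ as well. The inductive hypothesis at level $k-1$ (with $\wt L^{(\hat r)}$ in place of $\wt L$, and operators $B,G$ with slightly enlarged wave front sets) then yields control of $B\pa_{\hat r}u$, with the source terms bounded by $\|Gf\|_{H_{(\seop;\sop),\eps}^{(s-1;k),\cdot,\cdot-2}}$ plus $\chi u$ errors (the remainders $R'_0,R'_j$ contribute through $GR'_\bullet\in\Psi_\seop^{-\infty}$, upon arranging $\WF'_\seop(G)$ sufficiently close to $\cK$).

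\textbf{Control of $\pa_t u$.} For the second term I apply the inductive hypothesis to $v=\pa_t u$, which solves $\wt L v=\pa_t f+\sum_j a_j(A_j\pa_{\hat r}+R_j)u$ by Lemma~\ref{LemmaEstHorsComm}. The only new feature, relative to the corresponding step in Proposition~\ref{PropEstRadsOut}, is that the source contains $\pa_{\hat r}u$; but $G a_j A_j\pa_{\hat r}u$ is bounded in $H_{(\seop;\sop),\eps}^{(s-1;k-1),\cdot,\cdot-2}$ in terms of $\|G\pa_{\hat r}u\|_{H_{(\seop;\sop),\eps}^{(s;k-1)}}$, which was controlled in the previous step. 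The remainders $a_j R_j$ again contribute $\chi u$ errors via $G a_j R_j\in\Psi_\seop^{-\infty}$. Handling the a priori terms $\|Gv\|_{H^{(s_0;k-1)}}$ and $\|\chi v\|_{H^{(-N;k-1)}}$ using $\|Gv\|_{H^{(s_0;k-1)}}\lesssim\|Gu\|_{H^{(s_0;k)}}+\|\tilde G u\|_{H^{(s_0;k-1)}}$ (and similarly for $\chi v$) closes the estimate.

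\textbf{Main obstacle.} The principal subtlety is the sign analysis in the passage from $\wt L$ to $\wt L^{(\hat r)}$: one must verify that the commutator correction $-\hat\rho^{-2}i^{-1}A'_0$, which is of the \emph{same} weight $\hat\rho^{-2}$ as $\wt L$ and therefore genuinely alters the subprincipal symbol at $\cR_{\cH^+}^+$, shifts the threshold quantity $\vartheta_{\cH^+}$ in the favorable direction. This is precisely the content (and the name) of the enhanced red-shift in Lemma~\ref{LemmaEstHorsComm}, and it is what makes the inductive scheme work without any iterative increase of the se-order $s$. The remaining bookkeeping—tracking $d_0$ through the induction, arranging the wave front sets of $B,G$ nested appropriately, and promoting the result to the strong form via the standard regularization argument of \S\ref{SssFVarse}—is routine.
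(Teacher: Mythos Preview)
Your proof is correct and uses precisely the same key ingredient as the paper: the enhanced red-shift sign $\sigmase^1(A'_0)|_{\cR_{\cH^+}^+}<0$ from Lemma~\ref{LemmaEstHorsComm}, which guarantees that the threshold quantity for the modified operator governing $\pa_{\hat r}u$ does not exceed $\vartheta_{\cH^+}$ (in fact it drops by $1$).

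The only notable difference is organizational. You estimate $\pa_{\hat r}u$ and $\pa_t u$ \emph{sequentially}: first apply the inductive hypothesis to the scalar equation $\wt L^{(\hat r)}(\pa_{\hat r}u)=\ldots$, then feed the resulting control of $\pa_{\hat r}u$ into the source term for $\pa_t u$. The paper instead packages $(\pa_t u,\pa_{\hat r}u)$ into a single section of $\wt E\oplus\wt E$ solving a $2\times 2$ upper-triangular system
\[
  \wt L^{(1)} = \begin{pmatrix} \wt L & -\sum_j a_j A_j \\ 0 & \wt L + i\hat\rho^{-2}A'_0 - \sum_j a'_j A'_j \end{pmatrix},
\]
and applies the inductive hypothesis once to this system (the off-diagonal entry has an extra $\hat\rho$ and vanishes at $\hat M$, so the threshold of the block operator is $\max(\vartheta_{\cH^+},\vartheta_{\cH^+}-1)=\vartheta_{\cH^+}$). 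The system formulation is slightly cleaner---one invocation of the hypothesis, no need to feed $\pa_{\hat r}u$ back as a source, and less nesting of microlocalizers---but your sequential argument is equally valid.
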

\begin{proof}
  The case $k=0$ follows from Proposition~\ref{PropEstHor} as in the proof of Proposition~\ref{PropEstRadsIn}. For $k\geq 1$, we use induction. We use the notation of Lemma~\ref{LemmaEstHorsComm} to deduce from $\wt L u=f$ the equation $\wt L^{(1)}u^{(1)}=f^{(1)}$ where
  \begin{align*}
    \wt L^{(1)} = 
      \begin{pmatrix}
        \wt L & -\sum_{j=1}^N a_j A_j \\
        0 & \wt L + i\hat\rho^{-2}A'_0 - \sum_{j=1}^N a'_j A'_j
      \end{pmatrix}, \quad
    &\wt u^{(1)} = \begin{pmatrix} \pa_t u \\ \pa_{\hat r}u \end{pmatrix}, \\
    &\wt f^{(1)} = \begin{pmatrix} \pa_t f + \sum_{j=1}^N a_j R_j u \\ \pa_{\hat r}f + a'_0 f + \hat\rho^{-2}R'_0 u + \sum_{j=1}^N a'_j R'_j u \end{pmatrix}.
  \end{align*}
  This is an equation for sections of $\wt E\oplus\wt E$. Using the fixed inner product on $\wt E$ on both summands, we now observe that the threshold quantity~\eqref{EqEstHorTheta} for $\wt L^{(1)}$ is equal to that of $\wt L$ due to the negativity of $\sigmase^1(A'_0)$ at $\cR_{\cH^+}^+$. Applying the inductive hypothesis to this equation thus completes the inductive step for $\wt L$ (upon enlarging the elliptic set of $G$ and the support of $\chi$ slightly).
\end{proof}

\subsection{Microlocal estimates near normally hyperbolic trapping}
\label{SsEstTrap}

We next turn to the most delicate microlocal aspect of the present paper: the proof of uniform se-estimates governing the propagation of quantitative se-regularity from the stable manifold $\Gamma^{\rm s,+}\setminus\Gamma^+$ into the trapped set $\Gamma^+=\Gamma^{\rm s,+}\cap\Gamma^{\rm u,+}$ (using the notation~\eqref{EqGlDynStdTrap}). (Again we only focus on the future half $\wt\Sigma^+$ of the characteristic set, the arguments in $\wt\Sigma^-$ being completely analogous.) Recall from~\eqref{EqGlDynKerrTrapSigma} that the fiber variable $\sigma$ in Definition~\ref{DefGlDynKerrTrap}, which is also a smooth fiber-linear function on $\Tse^*\wt M$ near $\Gamma^+$, is positive on $\Gamma^+$; we shall thus use
\[
  \rho_\infty = \sigma^{-1}
\]
for the purpose of shifting differential orders of symbols. We use $\wt\varphi^{\rm s/u}$, $\wt w^{\rm s/u}$ from Proposition~\ref{PropTrap}, and use the same notation for their extensions to homogeneous degree $0$ symbols on $\Tse^*\wt M\setminus o$. By multiplying $\wt\varphi^{\rm s/u}$ and $\varphi^{\rm s/u}$ by a sufficiently large constant, we may assume that the conclusions of Proposition~\ref{PropTrap} as well as the positivity
\begin{equation}
\label{EqEstTrapPoisson}
  \rho_\infty^{-1}\{\wt\varphi^{\rm u},\wt\varphi^{\rm s}\}>0
\end{equation}
(which can be arranged by replacing $\wt\varphi^{\rm s}$ by $-\wt\varphi^{\rm s}$ if necessary) are valid in a neighborhood
\[
  \pa\wt\Sigma^+ \cap \{ |\wt\varphi^{\rm u}| < 6,\,|\wt\varphi^{\rm s}| < 6 \}
\]
of the trapped set. In the spirit of \cite{DyatlovSpectralGaps} and \cite[\S{3.2.2}]{HintzPolyTrap}, we recall~\eqref{EqGlDynKerrInvDefFn} and set
\begin{equation}
\label{EqEstTrapNuMin}
  \nu_{\rm min} := \min\Bigl\{\,\inf_{\pa\Gamma^+} w_b^{\rm s}, \inf_{\pa\Gamma^+} w_b^{\rm u} \Bigr\} > 0.
\end{equation}

The validity of estimates at normally hyperbolic trapping for the operator $L$ on Kerr (and therefore also for $\wt L$) requires a condition on the subprincipal symbol of $L$ at $\pa\Gamma^+$. Since $L$ acts on sections of the vector bundle $\hat\pi^*\hat\cE\to\hat M_b$ (with $\hat\pi\colon\hat M_b\to\hat X_b$ as in~\eqref{ItEstLOp}), we shall phrase this using the \emph{subprincipal operator} introduced in \cite{HintzPsdoInner} (see also \cite{DenckerPolarization}); in any local trivialization of $\hat\cE$, and trivializing the half-density bundle using the metric density $|\dd\hat g_b|$, this is defined as
\begin{equation}
\label{EqEstTrapSsub}
  S_{\rm sub}(L) = -i H_{\hat G_b} + \sigma_{\rm sub}(L)
\end{equation}
where $\sigma_{\rm sub}(L)$ is the matrix of subprincipal symbols \cite[\S5.2]{DuistermaatHormanderFIO2}. It is well-defined as an operator
\[
  S_{\rm sub}(L) \in \Diff^1(T^*\hat M_b^\circ;\pi^*(\hat\pi^*\hat\cE)),
\]
where $\pi\colon T^*\hat M_b^\circ\to\hat M_b^\circ$ is the base projection, and commutes with time translations.

For better readability, we introduce
\[
  \cL := \eps^2\wt L \in \Diffse^{2,-2,0}(\wt M\setminus\wt K^\circ;\wt E),\qquad
  \cG := \eps^2\wt G;
\]
these are unweighted at $\hat M$. We again set $\tau:=\frac{t-t_1}{\eps}$, and recall~\eqref{EqEstBlowup}.

\begin{prop}[Uniform estimate near the trapped set]
\label{PropEstTrap}
  Suppose there exists a stationary positive definite fiber inner product on the pullback bundle $\pi^*(\hat\pi^*\hat\cE)\to T^*_{\R_{\hat t}\times\hat\Gamma_b^+}\hat M_b^\circ$ so that $\frac{1}{2 i}(S_{\rm sub}(L)-S_{\rm sub}(L)^*)<\frac12\nu_{\rm min}$.\footnote{Here we use that $S_{\rm sub}(L)$ is a transport operator in the direction of $H_{\hat G_b}$, and thus one can compute its adjoint given only the inner product on $\pi^*(\hat\pi^*\hat\cE)$ over the trapped set. Moreover, the scalar operator $-i H_{\hat G_b}$ is formally self-adjoint, so $\frac{1}{2 i}(S_{\rm sub}(L)-S_{\rm sub}(L)^*)$ is of order $0$ and indeed a stationary self-adjoint bundle endomorphism of $\pi^*(\hat\pi^*\hat\cE)$. See \cite{HintzPsdoInner} for details.} Let $s,N,\alpha_\circ,\hat\alpha\in\R$. Then for all neighborhoods $\cU\subset\upbeta^*(\Sse^*\wt M)$ of $\upbeta^*(\pa\Gamma^+)\cap\{t\geq t_0,\ \tau\leq 0\}$ and all $\chi\in\CIc([\wt M;\hat M_{t_1}])$ equal to $1$ near the base projection of $\upbeta^*(\pa\Gamma^+)$, there exist operators $B_\Gamma,B^{\rm s},G\in\tilde\Psi_\seop^0$ with $B_\Gamma=\chi B_\Gamma\chi$ etc.\ so that
  \begin{itemize}
  \item the operator wave front sets of $B_\Gamma,B^{\rm s},G$ are contained in $\cU$,
  \item $B_\Gamma$ is elliptic at $\Gamma^+\cap\{t\geq t_0,\ \tau\leq 0\}$,
  \item the operator wave front set of $B^{\rm s}$ is disjoint from $\Gamma^{\rm u,+}$,
  \end{itemize}
  and a constant $C$ so that the estimate
  \begin{equation}
  \label{EqEstTrap}
    \|B_\Gamma u\|_{H_{\seop,\eps}^{s,\alpha_\circ,\hat\alpha}} \leq C\Bigl( \|G\wt L u\|_{H_{\seop,\eps}^{s,\alpha_\circ,\hat\alpha-2}} + \|B^{\rm s}u\|_{H_{\seop,\eps}^{s+1,\alpha_\circ,\hat\alpha}} + \|\chi u\|_{H_{\seop,\eps}^{-N,\alpha_\circ,\hat\alpha}}\Bigr)
  \end{equation}
  holds (in the strong sense) for all $u$ with support in $t\geq t_0$.
\end{prop}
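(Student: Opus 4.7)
The estimate is a positive commutator argument at normally hyperbolic trapping, adapted from the approach of Dyatlov \cite{DyatlovSpectralGaps} and Hintz \cite{HintzPolyTrap} to the uniform se-calculus $\tilde\Psi_\seop$. The essential inputs are the extensions $\wt\varphi^{s/u}, \wt w^{s/u}$ from Proposition~\ref{PropTrap}, the positivity \eqref{EqEstTrapPoisson}, the lower bound $\nu_{\min}>0$, the subprincipal symbol hypothesis, and the monotonicity of $t$ and $\tau=(t-t_1)/\eps$ along future null-bicharacteristics at $\Gamma^+\cap\hat M$. Writing $\cL:=\eps^2\wt L$ and $\cG:=\eps^2\wt G$, and using that $\tau$ is a smooth coordinate on the front face of $[\wt M;\hat M_{t_1}]$ near $\hat M^\circ$, the sharp localization $\tau\leq 0$ is enforced using elements of $\tilde\Psi_\seop$ with se-regular (but not smooth in $t$) symbols.

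\textbf{Commutant.} I construct a commutant $a=\check a^2$ with
\[
\check a = \rho_\infty^{-s+1/2}\rho_\circ^{-\alpha_\circ}\hat\rho^{-\hat\alpha+1}\,\chi_0\!\Bigl(\tfrac{\wt\varphi^s}{\delta}\Bigr)\chi_0\!\Bigl(\tfrac{\wt\varphi^u}{\delta}\Bigr)\,\psi\,\chi_-(t-t_0)\,\chi_+(\tau),
\]
where $\chi_0\in\CIc((-1,1))$ is even and nonnegative with $-x\chi_0(x)\chi_0'(x)\geq 0$ having a smooth square root; $\psi$ is a cutoff in $\Sse^*\wt M$ localizing in the cosphere-directions tangent to $\{\wt\varphi^s=\wt\varphi^u=0\}$ to a small conic neighborhood of $\pa\Gamma^+$; $\chi_\pm$ are as in \eqref{EqEstRadInChipm}; and $\delta>0$ is chosen so that Proposition~\ref{PropTrap} and \eqref{EqEstTrapPoisson} apply on $\supp\check a$. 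Quantizing gives a formally self-adjoint $A\in\tilde\Psi_\seop^{2s-1,2\alpha_\circ,2\hat\alpha-2}$. The operator $\sC:=i[\cL,A]+\tfrac{\cL-\cL^*}{i}A+A^*\tfrac{\cL-\cL^*}{i}$ is then computed.

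\textbf{Commutator computation.} The principal symbol of $\sC$ on the characteristic set decomposes as: (i) a contribution at $\Gamma^+$ from the Hermitian part of the subprincipal symbol of $\cL$, controlled by $\nu_{\min}/2$ by hypothesis (using $\sigma^{-1}H_{\hat G_b}\sigma=0$ and analogous cancellation for the other weights, so that the weight $\rho_\infty^{-s+1/2}$ contributes nothing at $\Gamma^+$ itself); (ii) a \emph{positive} term arising from $\sfH\chi_0(\wt\varphi^u/\delta)=-\wt w^u\wt\varphi^u\chi_0'(\wt\varphi^u/\delta)\cdot\delta^{-1}$, which by the sign convention on $\chi_0$ and the positivity $\wt w^u\geq\nu_{\min}$ is a positive multiple of a squared smooth symbol; (iii) a \emph{bad} term from $\sfH\chi_0(\wt\varphi^s/\delta)$, supported where $|\wt\varphi^s|\sim\delta$; (iv) contributions from the temporal cutoffs $\chi_\pm$, which are squares of definite sign by the past timelike nature of $\dd t$ and by a direct check that $\sfH\tau>0$ on $\pa\wt\Sigma^+\cap\hat M$ near $\pa\Gamma^+$. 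The core algebraic observation is that using $\wt w^s+\wt w^u\geq 2\nu_{\min}$ together with the subprincipal bound, the bad term (iii) can be rearranged as $\|B^s u\|_{H^{s+1}}\cdot \|B_\Gamma u\|_{H^{s-1}}$ (after pairing and Cauchy-Schwarz): the support of (iii) either lies in $\Ell_\seop(B^s)$ (giving the a priori term) or, after a balanced Cauchy-Schwarz against term (ii), contributes the $H^{s+1}$ a priori term on $B^s u$. This is the source of the characteristic loss of one se-derivative.

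\textbf{Closing and iteration.} The pairing $2\Im\la\cL u,Au\ra=\la\sC u,u\ra$, together with G{\aa}rding's inequality for the positive principal contribution and the conversion $\cL=\eps^2\wt L$ (which shifts $\hat\alpha$ by $2$), yields a first estimate with $\|G\wt L u\|_{H^{s-1,\alpha_\circ,\hat\alpha-2}}$ on the right. A second commutator argument with a modified commutant, in the spirit of \cite[\S3]{HintzPolyTrap}, raises the order on $\wt L u$ from $s-1$ to $s$, giving \eqref{EqEstTrap}. The strong form is obtained via standard regularization arguments in $\tilde\Psi_\seop$, as in the proof of Proposition~\ref{PropEstRadIn}. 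The main obstacle I anticipate is that $\wt\varphi^{s/u}$ are only $\cC^{d_0}_\seop$-regular in $\eps$: one must verify that compositions like $\chi_0(\wt\varphi^{s/u}/\delta)$ and their Poisson brackets with $\cG$ remain in classes for which the $\tilde\Psi_\seop$-quantization, sharp G{\aa}rding inequality, and the Hamiltonian computations close up uniformly in $\eps$; this forces $d_0$ in Proposition~\ref{PropTrap} to be taken large enough depending on $s,\alpha_\circ,\hat\alpha,N$.
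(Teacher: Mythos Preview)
Your single positive-commutator argument has a genuine gap at the trapped set itself. At $\pa\Gamma^+$ all of your cutoff derivatives vanish: $\chi_0(\wt\varphi^{\rm s}/\delta)$ and $\chi_0(\wt\varphi^{\rm u}/\delta)$ are identically $1$ there, the temporal cutoffs are constant, and (as you correctly note) $\sigma^{-1}H_{\hat G_b}\sigma=0$ kills the contribution from the weight $\rho_\infty^{-s+1/2}$. Hence the principal symbol of $\sC$ at $\pa\Gamma^+$ is simply $2\ell_1 a$, with $\ell_1=\rho_\infty\sigmase^1(\tfrac{\cL-\cL^*}{2i})$. The hypothesis only gives $\ell_1<\tfrac12\nu_{\rm min}$, which does \emph{not} force $\ell_1<0$; in the generic bundle-valued case $\ell_1$ may be strictly positive (e.g.\ $\ell_1=\tfrac14\nu_{\rm min}$). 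Your terms (ii) and (iii) are supported where $|\wt\varphi^{\rm u}|\sim\delta$ and $|\wt\varphi^{\rm s}|\sim\delta$ respectively, so neither contributes at $\pa\Gamma^+$; there is nothing against which to Cauchy--Schwarz the indefinite term (i). The commutator thus has no coercive main term at the trapped set, and the argument does not close. This is the well-known obstruction: at normally hyperbolic trapping the flow has no net radial weight, so a naive commutator gives no sign.

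The paper's proof, following \cite{DyatlovSpectralGaps,HintzPolyTrap}, avoids this by a genuinely two-step mechanism. First, one derives a \emph{secondary} equation $\cL'u'=f'$ for $u':=\wt\Phi^{\rm u}u$ (with $\wt\Phi^{\rm u}$ a quantization of $\wt\varphi^{\rm u}$), where $\cL'=\cL-i\wt W^{\rm u}$; the point is that $\ell_1':=\rho_\infty\sigmase^1(\tfrac{\cL'-(\cL')^*}{2i})=\ell_1-\wt w^{\rm u}<\tfrac12\nu_{\rm min}-\nu_{\rm min}<0$ at $\pa\Gamma^+$, so a positive commutator for $\cL'$ \emph{does} have a coercive main term and yields $\|\wt\Phi^{\rm u}u\|_{H_{\seop,\eps}^{s+1}}$-control. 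Second, one converts this to $\|u\|_{H_{\seop,\eps}^s}$-control near $\Gamma^+$ by quantitative propagation along the $H_{\wt\varphi^{\rm u}}$-flow (using~\eqref{EqEstTrapPoisson}) together with a $\log\delta^{-1}$-time commutator for $\cL$ with a $|\wt\varphi^{\rm s}|^{-2\beta}$ weight; choosing $\delta$ small produces a factor $C\delta^{1/2-\beta}<1$ which allows absorption. There are further subtleties peculiar to the se-setting: the commutant for the secondary equation requires a time localizer $\chi_+(\tfrac{t-t_1}{\eps}+C_+\wt\varphi^{\rm s})$ whose derivative along $H_{\wt\varphi^{\rm u}}$ has a sign, and a finite fast-time propagation step near $t=t_1$ is needed to reconcile the domains in the two steps. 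None of this structure is present in your outline.
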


The elliptic set of $B^{\rm s}$ will control $u$ microlocally in a punctured neighborhood of $\Gamma^+$ in $\Gamma^{\rm s,+}$ via standard real principal type propagation. The estimate~\eqref{EqEstTrap} then extends this control into the trapped set (with a loss of two derivatives relative to elliptic estimates).

Since $\Gamma^+$ lies over the interior of $\hat M^\circ$, we only need to consider the case $\alpha_\circ=\hat\alpha=0$ (as in~\S\ref{SsEstHor}). Since $\wt L=\eps^{-2}\cL$, we thus need to prove
\begin{equation}
\label{EqEstTrap2}
    \|B_\Gamma u\|_{H_{\seop,\eps}^s} \leq C\Bigl( \|G\cL u\|_{H_{\seop,\eps}^s} + \|B^{\rm s}u\|_{H_{\seop,\eps}^{s+1}} + \|\chi u\|_{H_{\seop,\eps}^{-N}}\Bigr).
\end{equation}

\begin{rmk}[From the perspective of the small Kerr black hole: trapping]
\label{RmkEstTrapKerr}
  Analogously to Remark~\ref{RmkEstHorKerr}, one can regard this as a uniform estimate for long time intervals $0\leq\hat t\leq\eps^{-1}t_1$ which mirrors the global-in-time trapping estimate proved in \cite[Theorem~3.9]{HintzPolyTrap}. The reference uses weighted cusp Sobolev spaces, which in present notation measure $L^2$-regularity with respect to $\pa_{\hat t}$, $\pa_{\hat x}$, which is thus again compatible with 3b-estimates in bounded subsets of $\hat M_b$ and thus with se-estimates on subsets of $\wt M$ on which $\hat x$ is bounded. Unlike in the reference, however, we are working here on size $\cO(\eps)$ perturbations of the stationary Kerr spacetime, while on the other hand we are not localizing to $\hat t^{-1}\ll 1$.
\end{rmk}

Our proof of Proposition~\ref{PropEstTrap} will closely follow Dyatlov's strategy \cite{DyatlovSpectralGaps}, albeit in a version more akin to the spacetime form described in \cite[\S{3}]{HintzPolyTrap}. The novel aspects in our proof are, firstly, the need for careful time localization in the propagation estimate for the auxiliary equation for $\Op(\wt\varphi^{\rm u})u$ (see~\eqref{EqEstTrapChimp}), and, secondly, the usage of the monotonicity of $t$ (in addition to the unstable nature of $\Gamma^{\rm u}$) along the $H_\cG$-flow in the propagation estimate for $\cL$ from a small neighborhood of the stable trapped set to a size $1$ part of the unstable trapped set (see in particular~\eqref{EqEstTrapFinalTime} and its subsequent combination with~\eqref{EqEstTrapLogProp}).

As in \cite[\S{3}]{HintzPolyTrap}, the main technical tools are quantitative real principal type propagation statements of the type \cite[(3.5)]{HintzPolyTrap} akin to the G\aa{}rding inequality:

\begin{lemma}[Quantitative bound]
\label{LemmaEstTrapQuant}
  Let $B,B'\in\tilde\Psi_\seop^s(\wt M)$, and suppose that, as subsets of $\upbeta^*\Sse^*\wt M$, we have $\WF'_\seop(B')\subset\Ell_\seop(B)$. Suppose moreover that the principal symbols $b,b'$ satisfy $|b'|\leq b$ on $\WF'_\seop(B')$. Then for all $\delta>0$, there exists a constant $C$ so that, for all $\eps>0$,
  \[
    \|B' u\|_{H_{\seop,\eps}^0}^2 \leq (1+\delta)\|B u\|_{H_{\seop,\eps}^0}^2 + C\|u\|_{H_{\seop,\eps}^{-\frac12}}^2.
  \]
\end{lemma}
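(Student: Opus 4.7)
The plan is to prove Lemma~\ref{LemmaEstTrapQuant} as a quantitative G\aa{}rding-type inequality in the se-pseudodifferential calculus $\tilde\Psi_\seop$, following the strategy behind (3.5) of \cite{HintzPolyTrap}. The key is to construct a non-negative ``square root'' symbol witnessing the positivity of the operator $(1+\delta)B^*B-(B')^*B'$ modulo a uniformly controlled residual. The slack $\delta>0$ is essential for the construction to go through, as it produces strict positivity on the support of the modified full symbol of $B'$ rather than just non-negativity.

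First, I would microlocalize: using the hypothesis $\WF'_\seop(B')\subset\Ell_\seop(B)$, modify $B'$ by a residual operator (which will only contribute to the lower-order error $C\|u\|_{H_{\seop,\eps}^{-1/2}}^2$) so that the full symbol of the modified operator, still denoted $B'$, is essentially supported in a small conic neighborhood $\cV\subset\Ell_\seop(B)$ of $\WF'_\seop(B')$. On $\cV$, the hypothesis $|b'|\le b$ yields
\[
  (1+\delta)|b|^2-|b'|^2 \;\geq\; \delta|b|^2\;\geq\;0,
\]
with a positive buffer everywhere on $\supp b'$.

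Second, I would construct a global non-negative symbol $q\in S^{2s}(\Tse^*\wt M)$ agreeing with $(1+\delta)|b|^2-|b'|^2$ on a neighborhood of $\supp b'$ and equal to $(1+\delta)|b|^2$ away from it; the positivity of $q$ on a full neighborhood of $\supp b'$ (thanks to the $\delta$-slack) permits a smooth homogeneous square root $c\in S^s$ with $c^2=q$. Quantizing $c$ to an operator $C=\Op_{\seop,\eps}(c)\in\tilde\Psi_\seop^s$ and using the composition and adjoint formulas in $\tilde\Psi_\seop$, one obtains
\[
  C^*C - \bigl((1+\delta)B^*B - (B')^*B'\bigr) \;=\; R \;\in\; \tilde\Psi_\seop^{2s-1},
\]
with seminorms of $R$ uniformly bounded in $\eps$.

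Third, since $\langle C^*C u,u\rangle=\|Cu\|^2\geq 0$, pairing with $u$ gives
\[
  \|B'u\|_{L^2}^2 \;\leq\; (1+\delta)\|Bu\|_{L^2}^2 + |\langle Ru,u\rangle|,
\]
and the error is controlled by $C\|u\|_{H_{\seop,\eps}^{s-1/2}}^2$ via the uniform mapping $R\colon H_{\seop,\eps}^{s-1/2}\to H_{\seop,\eps}^{-s+1/2}$ (specializing to the $s$ relevant here to match the statement).

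The main obstacle is carrying out every step \emph{uniformly in $\eps$} within the calculus $\tilde\Psi_\seop$: symbol composition, adjoint formulas, existence of smooth square roots of non-negative symbols, and $L^2$-boundedness all need uniform-in-$\eps$ constants. This is handled by the parameterized bounded geometry perspective of \S\ref{SssFVarse}: the covering $\fC(\eps)$ realizes $M_\eps^\circ$ as a uniform family of bounded geometry manifolds, so the standard symbol calculus and G\aa{}rding-type inequalities, applied chart-by-chart, yield the required uniformity. The delicate point is the square root construction near $\partial\supp b'$, where the slack $\delta>0$ ensures the symbol under the square root stays bounded below by a positive multiple of $|b|^2$ on a full neighborhood of $\supp b'$, preventing loss of regularity of the square root and hence of the constant $C$ in the estimate.
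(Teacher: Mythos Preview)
Your proposal is correct and follows essentially the same approach as the paper's two-line proof: take a symbolic square root $e$ of the non-negative principal symbol $(1+\delta)|b|^2-|b'|^2$, quantize to $E\in\tilde\Psi_\seop^s$, and bound the pairing with the order $2s-1$ remainder $R=(1+\delta)B^*B-(B')^*B'-E^*E$ by the uniform mapping $R\colon H_{\seop,\eps}^{s-1/2}\to H_{\seop,\eps}^{-s+1/2}$. Your additional microlocalization of $B'$ and the discussion of uniformity in $\eps$ are fine but not needed beyond what the calculus $\tilde\Psi_\seop$ already provides.
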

\begin{proof}
  (See \cite[Lemma~3.7]{HintzPolyTrap}.) Since the principal symbol of $(1+\delta)^2 B^*B-(B')^*B'$ has a smooth positive square root $e$, we have $R:=(1+\delta)B^*B-(B')^*B^*-E^*E\in\tilde\Psi_\seop^{2 s-1}$ where $E\in\tilde\Psi_\seop^s$ is a quantization of $e$. The uniform boundedness of $R\colon H_{\seop,\eps}^{s-\frac12}\to H_{\seop,\eps}^{-s+\frac12}=(H_{\seop,\eps}^{s-\frac12})^*$ implies the desired estimate.
\end{proof}

\begin{proof}[Proof of Proposition~\usref{PropEstTrap}]
  Fix any positive definite fiber inner product on $\wt E$ which over $\hat M$ is the pullback of a fiber inner product on $\hat\cE$; all adjoints in the argument below are computed with respect to this inner product.

  \pfstep{Pseudodifferential conjugation of $\cL$.} In order to exploit the subprincipal symbol condition, we use \cite[Proposition~3.12]{HintzPsdoInner} to find a time-translation invariant elliptic operator $Q\in\Psi_{\tbop,{\rm I}}^0(\hat M_b;\hat\pi^*\hat\cE)$ so that $\rho_\infty\sigmatb^1(\frac{1}{2 i}(Q L Q^--(Q L Q^-)^*))<\frac12\nu_{\rm min}$ on $\pa\hat\Gamma_b$, where $Q^-$ is a parametrix of $Q$. We can then quantize the 3b-principal symbol of $Q$ as a $t$-independent se-principal symbol via~\eqref{EqFseBundle3b}, or directly pull back $Q$ (as an $\eps$-independent operator) along the map~\eqref{EqFseRelGeoPsi0}--\eqref{EqFseRelGeoPsi} to obtain an elliptic operator $\cQ\in\tilde\Psi_\seop^0$ with the property that
  \[
    \rho_\infty\sigmase^1\Bigl(\frac{1}{2 i}\Bigl(\cQ\cL\cQ^--(\cQ\cL\cQ^-)^*\Bigr)\Bigr)<\frac12\nu_{\rm min}
  \]
  at $\pa\Gamma^+$; here $\cQ^-$ is a parametrix of $\cQ$ with $\cQ^-\cQ=I+R$, $R\in\tilde\Psi_\seop^{-\infty}$. Note that for any fixed $\delta_0>0$, we can choose $\cQ,\cQ^-,\cR$ (which are uniform ps.d.o.s) to enlarge supports of distributions supported near a compact subset of $\hat M^\circ$ (here: the base projection of the trapped set) only by amounts $\leq\delta_0\eps$ in $t$ and $\leq\delta_0$ in $\hat x$. Furthermore, $\cQ\cL\cQ^-(\cQ u)=\cQ\cL u+\cQ\cL R u$, and the $H_{\seop,\eps}^s$ norm of $\cQ\cL R u$ on some set can be bounded by the $H_{\seop,\eps}^{-N}$ norm of $u$ on a slightly larger set. Altogether, we may thus replace $\cL$ and $u$ by $\cQ\cL\cQ^-=\cL+\cQ[\cL,\cQ^-]$ and $\cQ u$, respectively; the goal is still to prove the estimate~\eqref{EqEstTrap2}, but now $\cL\in\tilde\Psi_\seop^2$ (in fact $\cL\in\Diffse^2+\tilde\Psi_\seop^1$) is pseudodifferential, has scalar principal symbol $\cG$, and satisfies
  \begin{equation}
  \label{EqEstTrapL1}
    \rho_\infty \ell_1 < \frac12\nu_{\rm min}\ \ \text{at}\ \ \pa\Gamma,\qquad \ell_1:=\sigmase^1(\cL_1),\ \ \cL_1:=\frac{1}{2 i}(\cL-\cL^*);
  \end{equation}
  and $u$ vanishes in $t\leq t_0-\delta_0\eps$ and thus a fortiori in $t\leq t_0-\delta_0$.

  \pfstep{Equation for $\wt\Phi^{\rm u}u$.} In view of~\eqref{EqTrapHamPhisu}, we have $\rho_\infty H_\cG\wt\varphi^{\rm u}=-\wt w^{\rm u}\wt\varphi^{\rm u}+\rho_\infty^2\cG\wt\psi$ where the smooth function $\wt\psi$ on $\Sse^*\wt M$ (also regarded as a homogeneous degree $0$ function on $\Tse^*\wt M\setminus o$), defined in a neighborhood of $\{|\wt\varphi^{\rm u}|<6,\,|\wt\varphi^{\rm s}|<6\}\cap\wt\Sigma$, accounts for the fact that~\eqref{EqTrapHamPhisu} is only valid on the characteristic set $\cG^{-1}(0)$. Let now
  \[
    \wt\Phi^{\rm u} \in \tilde\Psi_\seop^0,\qquad
    \wt W^{\rm u} \in \tilde\Psi_\seop^1,\qquad
    \wt\Psi \in \tilde\Psi_\seop^{-1},
  \]
  be formally self-adjoint quantizations of $\wt\chi\wt\varphi^{\rm u}$, $\rho_\infty^{-1}\wt\chi\wt w^{\rm u}$, and $\rho_\infty\wt\chi\wt\psi$, respectively, where $\wt\chi\in\CI(\Sse^*\wt M)$ equals $1$ near $\{|\wt\varphi^{\rm u}|<5,\ |\wt\varphi^{\rm s}|<5\}\cap\pa\wt\Sigma$ and is supported in the set where $\wt\varphi^{\rm u}$, $\wt w^{\rm u}$, $\wt\psi$ are defined. Then\footnote{If we took $\wt\varphi^{\rm u}$ to be the $\eps$-independent extension of $\varphi^{\rm u}$, then $\rho_\infty H_\cG\wt\varphi^{\rm u}$ would equal $-w^{\rm u}\varphi^{\rm u}$ plus an error term of size $\eps$. In the subsequent equation for $\cL'u'$, this would cause $f'$ to have an additional term of the schematic form $\eps\tilde\Psi_\seop^1 u$. In the estimate~\eqref{EqEstTrapPhiu} then, one would have an extra term $\eps\|u\|_{H_{\seop,\eps}^{s+1}([0,5]\times[0,5])}$ on the right hand side, which is too strong in the differential order sense to close the estimate (no matter the power of $\eps$). Remark that if $\ell_1\leq 0$ (e.g.\ for scalar waves), one can close the estimate if one uses a further commutator argument using the monotonicity of $t$ (which gives a bound of $\eps\|u\|_{H_{\seop,\eps}^{s+1}}$ in terms of $\|u\|_{H_{\seop,\eps}^s}$). But since in our application this inequality on $\ell_1$ does not hold, we do not discuss this special case further.}
  \[
    i[\cL,\wt\Phi^{\rm u}] = -\wt W^{\rm u}\wt\Phi^{\rm u} + \wt\Psi\cL + \wt R + \wt R',\qquad \wt R\in\tilde\Psi_\seop^0,\ \wt R'\in\tilde\Psi_\seop^2,\quad \WF_\seop'(\wt R')\subset\supp\dd\wt\chi,
  \]
  and therefore, writing $\cL u=:f$, we get the secondary equation
  \begin{equation}
  \label{EqEstTrapSecondary}
    \cL'u'=f',\qquad \cL':=\cL-i\wt W^{\rm u},\quad u':=\wt\Phi^{\rm u}u,\quad f':=(\wt\Phi^{\rm u}-i\wt\Psi)f - i\wt R u - i\wt R'u.
  \end{equation}
  In view of~\eqref{EqEstTrapNuMin} and \eqref{EqEstTrapL1}, we have $\ell_1':=\rho_\infty\sigmase^1(\cL'_1)<-\frac12\nu_{\rm min}$ for $\cL'_1:=\frac{1}{2 i}(\cL'-(\cL')^*)$.

  We then run a positive commutator argument for~\eqref{EqEstTrapSecondary}, the commutant being
  \[
    A=\Op_\seop(a),\qquad a:=\rho_\infty^{-2(s+1)+1}\chi(\wt\varphi^{\rm u})^2\chi(\wt\varphi^{\rm s})^2\chi_\Sigma(\rho_\infty^2\cG)\chi_-(t-t_0)^2\chi_+\Bigl(\frac{t-t_1}{\eps}\Bigr)^2.
  \]
  Here, $\chi_\Sigma\in\CIc(\R)$ is equal to $1$ near $0$, and $\chi\in\CIc((-4,4))$ is fixed with $\chi|_{[-3,3]}=1$ and $\sqrt{-x\chi(x)\chi'(x)}\in\CI(\R_x)$; and furthermore $\chi_-,\chi_+$ are as in~\eqref{EqEstRadInChipm} except we presently require $\chi_-|_{(-\infty,-2\delta_0]}=0$, $\chi_-|_{[-\delta_0,\infty)}=1$, $\chi_+|_{(-\infty,\delta_0]}=1$, and $\chi_+|_{[2\delta_0,\infty)}=0$. Consider $2\Im\la\cL'u',A u'\ra=\la\sC u',u'\ra$ where $\sC=i[\cL',A]+2\cL'_1 A\in\tilde\Psi_\seop^{2(s+1)}$ has principal symbol
  \[
    H_\cG a+2\ell'_1 a.
  \]
  The derivative along $H_\cG$ of the weight $\rho_\infty$ produces a symbol with pointwise bound by $C\eps\rho_\infty^{-1}a$ since $H_{\hat G_b}\rho_\infty=-\sigma^{-2}H_{\hat G_b}\sigma=0$. On the characteristic set, the derivative of $\chi(\wt\varphi^{\rm u})^2$ is $-2\chi\chi'\wt w^{\rm u}\wt\varphi^{\rm u}$, which is thus a positive square (and therefore necessitates an a priori control term); the derivative of $\chi(\wt\varphi^{\rm s})^2$ on the other hand is a negative square (and will be dropped in the estimate below). The derivatives of $\chi_-$, resp.\ $\chi_+$ have the usual signs, i.e.\ they are positive, resp.\ negative. The \emph{main} term of $\sC$ is $2\cL'_1 A$, whose principal symbol is the product of $a$ with a negative self-adjoint bundle endomorphism (namely, one whose eigenvalues are bounded above by $-\nu_{\rm min}$).

  We write the resulting estimate in a schematic form. We write $I\times J\times T$ for a neighborhood of $\{|\wt\varphi^{\rm s}|\in I,\ |\wt\varphi^{\rm u}|\in J\}$ inside $\pa\wt\Sigma^+\cap t^{-1}(T)$. We moreover write $\|u\|_{H_{\seop,\eps}^s(I\times J\times T)}$ for the $H_{\seop,\eps}^0$ norm of $B u$ for an operator $B\in\tilde\Psi_\seop^s$ whose whose principal symbol equals $\rho_\infty^{-s}$ on $I\times J\times T$ and is bounded in absolute value by $\rho_\infty^{-s}$ everywhere, and whose wave front set is contained in a small neighborhood of $I\times J\times T$ (as measured on $\upbeta^*\Sse^*\wt M$). Finally, we shall not write out the error terms $\|u\|_{H_{\seop,\eps}^{s-\frac12}([0,6]\times[0,6]\times[t_0-\delta_0,t_1+5\delta_0\eps])}+\|\chi u\|_{H_{\seop,\eps}^{-N}}$, which need to be added to the right hand side of every estimate below. Then for $u$ vanishing in $t<t_0-\delta_0$, we have
  \begin{align*}
    &\| \wt\Phi^{\rm u}u \|_{H_{\seop,\eps}^{s+1}([0,3]\times[0,3]\times[t_0-2\delta_0,t_1+\delta_0\eps])} \\
    &\qquad \leq C\Bigl( \|u\|_{H_{\seop,\eps}^{s+1}([0,4]\times[3,4]\times[t_0-\delta_0,t_1+2\delta_0\eps])} + \|f'\|_{H_{\seop,\eps}^s([0,4]\times[0,4]\times[t_0-1,t_1+2\delta_0\eps])} \\
    &\qquad \hspace{16em} + \|\wt\Phi^{\rm u}u\|_{H_{\seop,\eps}^{s+\frac12}([0,4]\times[0,4]\times[t_0-1,t_1+2\delta_0\eps])} \Bigr) \\
    &\qquad\leq C\Bigl( \|u\|_{H_{\seop,\eps}^{s+1}([0,4]\times[3,4]\times[t_0-\delta_0,t_1+2\delta_0\eps])} + \|f\|_{H_{\seop,\eps}^s([0,4]\times[0,4]\times[t_0-1,t_1+3\delta_0\eps])} \\
    &\qquad \hspace{5em}+ \|u\|_{H_{\seop,\eps}^s([0,4]\times[0,4]\times[t_0-\delta_0,t_1+3\delta_0\eps])} + \|\wt\Phi^{\rm u}u\|_{H_{\seop,\eps}^{s+\frac12}([0,4]\times[0,4]\times[t_0-1,t_1+2\delta_0\eps])} \Bigr).
  \end{align*}
  One can iterate this estimate once to improve the final, error, term on the right (with slightly enlarged cutoffs), and thus ultimately obtain (cf.\ \cite[(3.34)]{HintzPolyTrap})
  \begin{equation}
  \label{EqEstTrapPhiu}
  \begin{split}
    &\| \wt\Phi^{\rm u}u \|_{H_{\seop,\eps}^{s+1}([0,3]\times[0,3]\times[t_0-2\delta_0,t_1+\delta_0\eps])} \\
    &\qquad \leq C\Bigl( \|u\|_{H_{\seop,\eps}^{s+1}([0,4]\times[3,4]\times[t_0-\delta_0,t_1+2\delta_0\eps])} + \|f\|_{H_{\seop,\eps}^s([0,4]\times[0,4]\times[t_0-1,t_1+3\delta_0\eps]])} \\
    &\qquad \hspace{17em} + \|u\|_{H_{\seop,\eps}^s([0,4]\times[0,4]\times[t_0-\delta_0,t_1+3\delta_0\eps])} \Bigr).
  \end{split}
  \end{equation}

  \pfstep{Quantitative propagation for $\wt\Phi^{\rm u}$.} Using the fact that $H_{\wt\varphi^{\rm u}}(\rho_\infty^{-1}\wt\varphi^{\rm s})\neq 0$, we can now estimate $u$ in $H_{\seop,\eps}^s$ in terms of $\wt\Phi^{\rm u}u$ in $H_{\seop,\eps}^{s+1}$ via propagation along the $H_{\wt\varphi^{\rm u}}$-flow using Lemma~\ref{LemmaEstTrapQuant} and carefully chosen commutants as in \cite[Step~2 of the proof of Theorem~3.9]{HintzPolyTrap}. In the present setting, the commutants need to contain also time localizations; we use for this purpose the function
  \begin{equation}
  \label{EqEstTrapChimp}
    \chi_{-+} := \chi_-(t-t_0)\chi_+\Bigl(\frac{t-t_1}{\eps}+C_+\wt\varphi^{\rm s}\Bigr)
  \end{equation}
  where we fix $C_+$ so that $\rho_\infty^{-1}H_{\wt\varphi^{\rm u}}(\frac{t-t_1}{\eps}+C_+\wt\varphi^{\rm s})>0$ at $\Gamma^+$; here $\chi_-,\chi_+$ are the functions used already in the previous step. Define now the rescaling $\wt\varphi^\bullet_{\rm new}:=\max((\delta_0/60)^{-1}C_+,1)\wt\varphi^\bullet$ for $\bullet={\rm s,u}$; the effect is that on the set $\{|\wt\varphi_{\rm new}^\bullet|<6,\ \bullet={\rm s,u}\}$ (which is contained in the set $\{|\wt\varphi^\bullet|<6\}$), the level sets of $\frac{t-t_1}{\eps}$ and of $\frac{t-t_1}{\eps}+C_+\wt\varphi^{\rm s}$ are within time distance $\frac{\delta_0}{10}\eps$ of each other. We now rename $\wt\varphi_{\rm new}^\bullet$ as $\wt\varphi^\bullet$. (The estimate~\eqref{EqEstTrapPhiu} remains valid under this rescaling.) The derivative of $\chi_-$ along $H_{\wt\varphi^{\rm u}}$ is irrelevant (since we only prove estimates for $u$ supported in $t\geq t_0-\delta_0$), and we have arranged also for the derivative of $\chi_+$ to be negative. This gives, for $\delta>0$ to be chosen later,
  \begin{align*}
    &\| u \|_{H_{\seop,\eps}^s([0,2\delta]\times[0,\delta_1]\times[t_0-\delta_0,t_1])} \\
    &\qquad \leq C\Bigl( \delta^{\frac12} \| u \|_{H_{\seop,\eps}^s([2\delta,1]\times[0,2]\times[t_0-\delta_0,t_1+\delta_0\eps])} + \| u \|_{H_{\seop,\eps}^s([0,1] \times [1,2] \times [t_0-\delta_0,t_1+\delta_0\eps])} \\
    &\qquad \hspace{16em} + \delta^{\frac12} \| \wt\Phi^{\rm u}u \|_{H_{\seop,\eps}^{s+1}([0,3]\times[0,3]\times[t_0-2\delta_0,t_1+\delta_0\eps])} \Bigr).
  \end{align*}
  Here $\delta_1>0$ is fixed (independently of how small $\eps,\delta$ are) so that the $\rho_\infty^{-1}H_{\wt\varphi^{\rm u}}$ flow starting in $[0,2\delta]\times[0,\delta_1]$ remains in $|\wt\varphi^{\rm u}|<2$ as long as $|\wt\varphi^{\rm s}|<1$; and the second term on the right arises from localizing to $|\wt\varphi^{\rm u}|<2$. (Cf.\ \cite[(3.37)]{HintzPolyTrap}.) Moreover, the constant $C$ is independent of $\delta$ (though we recall that we are omitting error terms in the estimates whose constants may depend on $\delta$).

  We then plug~\eqref{EqEstTrapPhiu} into the final term and exploit the small factor $\delta^{\frac12}$ to obtain
  \begin{equation}
  \label{EqEstTrapComb}
  \begin{split}
    &\|u\|_{H_{\seop,\eps}^s([0,2\delta]\times[0,3]\times[t_0-\delta_0,t_1])} \\
    &\qquad \leq C\Bigl( \delta^{\frac12}\|u\|_{H_{\seop,\eps}^s([2\delta,4]\times[0,2]\times[t_0-\delta_0,t_1+3\delta_0\eps])} + \|u\|_{H_{\seop,\eps}^{s+1}([0,4]\times[2,4]\times[t_0-\delta_0,t_1+2\delta_0\eps])} \\
    &\qquad \hspace{18em} + \|f\|_{H_{\seop,\eps}^s([0,4]\times[0,4]\times[t_0-1,t_1+3\delta_0\eps])}\Bigr).
  \end{split}
  \end{equation}
  On the left hand side, we replaced $[0,\delta_1]$ by $[0,3]$ since one can obtain control on the piece $[0,2\delta]\times[\delta_1,3]$ near the stable trapped set (but away from the unstable trapped set) by (fixed time $\sim\log\delta_1^{-1}$) real principal type propagation (including cutoffs in $t$) starting from $[0,4]\times[2,4]$ using the equation $\cL u=f$.

  \pfstep{Quantitative propagation for $\cL$.} Fix a number $\beta>0$ with $\max_{\pa\Gamma}\frac{\rho_\infty\ell_1}{\nu_{\rm min}}<\beta<\frac12$. We shall now estimate the first term on the right in~\eqref{EqEstTrapComb} in terms of $\delta^{\frac12-\beta}$ times the left (plus acceptable a priori control terms) via time $\sim\log\delta^{-1}$ propagation along $\rho_\infty H_\cG$. Following \cite[Step~3 in the proof of Theorem~3.9]{HintzPolyTrap}, this involves a commutant which on the characteristic set is given by
  \[
    a = \rho_\infty^{-2 s+1}|\wt\varphi^{\rm s}|^{-2\beta} \psi\Bigl(\log\Bigl|\frac{\wt\varphi^{\rm s}}{\delta}\Bigr|\Bigr)^2\chi(\wt\varphi^{\rm u})^2 \chi_-(t-t_0)^2\chi_+\Bigl(\frac{t-t_1}{\eps}\Bigr)^2
  \]
  where now $\chi_\pm$ now satisfy $\chi_-|_{(-\infty,-3\delta_0]}=0$, $\chi_-|_{[-2\delta_0,\infty)}=1$, $\chi_+|_{(-\infty,-\delta_0]}=1$, $\chi_+|_{[0,\infty)}=0$, and $\mp\chi'_\pm\geq 1$, $\sqrt{\mp\chi_\pm\chi'_\pm}\in\CI$; furthermore $\chi\in\CIc((-3,3))$ equals $1$ on $[-2,2]$ and has $\sqrt{-x\chi\chi'}\in\CI$; and $\psi\in\CIc((\log\frac12,\infty))$ equals $1$ on $[0,4\log\delta^{-1}]$ and $0$ on $[5\log\delta^{-1},\infty)$, with $\sqrt{-\psi\psi'}\in\CI$. The main term in the principal symbol $H_\cG a+2\ell_1 a$ in the usual commutator calculation now arises from $2\ell_1 a$ plus $\rho_\infty^{-2 s+1}$ times the $\rho_\infty H_\cG$-derivative of $|\wt\varphi^{\rm s}|^{-2\beta}$, which gives $\rho_\infty^{-2 s}(2\ell_1-2\beta\wt w^{\rm s})|\wt\varphi^{\rm s}|^{-2\beta}\psi^2\chi^2\chi_-^2\chi_+^2$, and thus is a \emph{negative} square. The derivative of $\psi$ produces a term of the same sign for $|\wt\varphi^{\rm s}|\geq\delta$, whereas the derivative of $\chi$ produces a positive square (as already in the second step of the proof).

  Altogether, using that $|\wt\varphi^{\rm s}|^{-2\beta}$ for $|\wt\varphi^{\rm s}|\in[\delta,1]$ is at least equal to $\sim\delta^{2\beta}$ times its value for $|\wt\varphi^{\rm s}|\in[\frac12\delta,\delta]$, one obtains the estimate
  \begin{align}
    &\|u\|_{H_{\seop,\eps}^s([\delta,4]\times[0,2]\times[t_0-\delta_0,t_1-\delta_0\eps])} \nonumber\\
  \label{EqEstTrapLogProp}
  \begin{split}
    &\qquad \leq C\Bigl( \delta^{-\beta}\|u\|_{H_{\seop,\eps}^s([\frac12\delta,\delta]\times[0,3]\times[t_0-\delta_0,t_1])} + \|u\|_{H_{\seop,\eps}^s([0,5]\times[2,3]\times[t_0-\delta_0,t_1])} \\
    &\qquad \hspace{17em} + \|f\|_{H_{\seop,\eps}^{s-1}([0,5]\times[0,3]\times[t_0-1,t_1])} \Bigr).
  \end{split}
  \end{align}

  \pfstep{Finite fast time propagation near $t=t_1$.} Due to a mismatch of domains (specifically, the end point of the time interval), we cannot yet plug the estimate~\eqref{EqEstTrapLogProp} into~\eqref{EqEstTrapComb}. This is easily remedied by estimating
  \begin{equation}
  \label{EqEstTrapFinalTime}
  \begin{split}
    &\|u\|_{H_{\seop,\eps}^s([2\delta,4]\times[0,2]\times[t_1-\delta_0\eps,t_1+3\delta_0\eps])} \\
    &\qquad \leq C\Bigl( \|u\|_{H_{\seop,\eps}^s([\delta,4]\times[0,2]\times[t_1-2\delta_0\eps,t_1-\delta_0\eps])} + \|u\|_{H_{\seop,\eps}^s([\delta,5]\times[2,3]\times[t_1-2\delta_0\eps,t_1+4\delta_0\eps])} \\
    &\qquad \hspace{17em} + \|f\|_{H_{\seop,\eps}^{s-1}([0,5]\times[0,3]\times[t_1-2\delta_0\eps,t_1+4\delta_0\eps])} \Bigr)
  \end{split}
  \end{equation}
  (into the first term of which one can plug~\eqref{EqEstTrapLogProp}). This estimate can be proved using a positive commutator estimate for $\cL u=f$ which exploits the monotonicity $\rho_\infty H_\cG\tau\geq c_0>0$, $\tau=\frac{t-t_1}{\eps}$. Indeed, propagation along $\rho_\infty H_\cG$ for fast ($\tau$) time $4 c_0^{-1}\delta_0$ starting from $[\delta,4]\times[0,3]\times[t_1-2\delta_0\eps,t_1-\delta_0\eps]$ covers $[2\delta,4]\times[0,2]\times[t_1-\delta_0\eps,t_1+3\delta_0\eps]$ since $\wt\varphi^{\rm s}$ increases from a starting value of $\delta$ only by a factor of the order $\exp(4 c_0^{-1}\delta_0\wt w^{\rm s})<2$ if $\delta_0$ is chosen small enough at the outset of the proof.

  \pfstep{Conclusion.} Plugging~\eqref{EqEstTrapLogProp} into~\eqref{EqEstTrapFinalTime}, the norm $\|u\|_{H_{\seop,\eps}^s([2\delta,4]\times[0,2]\times[t_0-\delta_0,t_1+3\delta_0\eps])}$ is bounded by the right hand side of~\eqref{EqEstTrapLogProp} except $t_1$ there needs to be replaced by $t_1+4\delta_0\eps$ throughout. Using this to estimate the first term in~\eqref{EqEstTrapComb} and fixing $\delta$ so small that $C\delta^{\frac12-\beta}<\frac12$ allows one to absorb the term $C\delta^{\frac12-\beta}\|u\|_{H_{\seop,\eps}^s([0,2\delta]\times[0,3]\times[t_0,t_1])}$ into the left hand side. This gives
  \begin{align*}
    &\|u\|_{H_{\seop,\eps}^s([0,2\delta]\times[0,3]\times[t_0-\delta_0,t_1])} \\
    &\qquad \leq C\Bigl( \|u\|_{H_{\seop,\eps}^{s+1}([0,5]\times[2,4]\times[t_0-\delta_0,t_1+4\delta_0\eps])} + \|f\|_{H_{\seop,\eps}^s([0,5]\times[0,4]\times[t_0-1,t_1+4\delta_0\eps])} \Bigr)
  \end{align*}
  (plus $C(\|u\|_{H_{\seop,\eps}^{s-\frac12}([0,5]\times[0,5]\times[t_0-\delta_0,t_1+4\delta_0\eps])}+\|\chi u\|_{H_{\seop,\eps}^{-N}})$). This completes the proof of the estimate~\eqref{EqEstTrap2} for $-N=s-\frac12$.

  For general $N$, one applies the estimate iteratively to the $H_{\seop,\eps}^{s-\frac12}$ remainder term (with slightly enlarged cutoffs), thus reducing the differentiability order by $\frac12$ at each step until one reaches $-N$ after finitely many steps.
\end{proof}

\subsubsection{Higher s-regularity}
\label{SssEstTraps}

Since $\sigmase^1(\pa_{\hat t})=-i\sigma$ is elliptic on $\Gamma^\pm$ by~\eqref{EqGlDynKerrTrapSigma}, we can again obtain trapping estimates for solutions of $\wt L u=f$ on mixed (se;s)-Sobolev spaces by commuting $\wt L$ with $\pa_t$; this will use~\eqref{EqEstRadsNorm} and Lemma~\ref{LemmaEstRadsComm}.

\begin{prop}[Uniform estimate near the trapped set: s-regularity]
\label{PropEstTraps}
  We use the assumptions of Proposition~\usref{PropEstTrap}. There exists $d_0=d_0(s,N)$ so that the following holds. Let $k\in\N_0$, and assume that $\wt L$ has the regularity~\eqref{EqEstRadsOp}. Then for all neighborhoods $\cU\subset\Sse^*\wt M$ of $\pa\Gamma^+\cap\{t_0\leq t\leq t_1\}$ and all $\chi\in\CIc(\wt M)$ equal to $1$ near the base projection of $\pa\Gamma^+\cap\{t_0\leq t\leq t_1\}$, there exist operators $B_\Gamma,B^{\rm s},G\in\Psi_\seop^0$ with operator wave front sets contained in $\cU$, with $B_\Gamma=\chi B_\Gamma\chi$ etc., with $B_\Gamma$ elliptic at $\pa\Gamma^+\cap\{t_0\leq t\leq t_1\}$ and the wave front set of $B^{\rm s}$ disjoint from $\Gamma^{\rm u,+}$, and a constant $C$ so that
  \begin{equation}
  \label{EqEstTraps}
    \|B_\Gamma u\|_{H_{(\seop;\sop),\eps}^{(s;k),\alpha_\circ,\hat\alpha}} \leq C\Bigl( \|G\wt L u\|_{H_{(\seop;\sop),\eps}^{(s;k),\alpha_\circ,\hat\alpha-2}} + \|B^{\rm s}u\|_{H_{(\seop;\sop),\eps}^{(s+1,k),\alpha_\circ,\hat\alpha}} + \|\chi u\|_{H_{(\seop;\sop),\eps}^{(-N;k),\alpha_\circ,\hat\alpha}}\Bigr)
  \end{equation}
  holds (in the strong sense) for all $u$ with support in $t\geq t_0$.
\end{prop}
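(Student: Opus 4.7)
The plan is to mimic the strategy of Propositions~\ref{PropEstRadsOut} and~\ref{PropEstHors}: the case $k=0$ is obtained from Proposition~\ref{PropEstTrap} after softening the sharp temporal cutoff, and the case $k\geq 1$ is proved by induction on $k$ by commuting $\wt L$ with $\pa_t$ and invoking the fact that $\hat\rho\pa_t$ is se-elliptic at $\pa\Gamma^+$.

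\emph{Case $k=0$.} Here one simply reruns the positive commutator argument in the proof of Proposition~\ref{PropEstTrap} with all operators taken from the stronger class $\Psi_\seop$ of~\S\ref{SssFVarsse} rather than $\tilde\Psi_\seop$, and with $\chi_+\bigl(\frac{t-t_1}{\eps}\bigr)$ replaced by $\chi_+(t-t_1)$ in every commutant. The sign of the corresponding derivative is still favorable, as the past timelike nature of $\dd t$ is unaffected; the only effect of the softer localization is that all terms on the right of the analogues of~\eqref{EqEstTrapPhiu}, \eqref{EqEstTrapComb}, \eqref{EqEstTrapLogProp}, \eqref{EqEstTrapFinalTime} are now measured on sets enlarged by a fixed amount in $t$ past $t=t_1$. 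Alternatively, one can derive~\eqref{EqEstTraps} for $k=0$ from Proposition~\ref{PropEstTrap} directly by choosing $\cU,\chi$ slightly larger and observing that away from the projection of the trapped set, the remaining region can be absorbed into $\chi u$.

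\emph{Inductive step.} Assume~\eqref{EqEstTraps} for $k-1$ (with the same $s,N$ but possibly larger $d_0$). Fix a neighborhood $\cU$ of $\pa\Gamma^+\cap\{t_0\leq t\leq t_1\}$ so small that $\hat\rho\pa_t$ (which has se-principal symbol $-i\sigma_\seop$) is se-elliptic on $\cU$; this uses~\eqref{EqGlDynKerrTrapSigma}. Apply Lemma~\ref{LemmaEstRadsComm} with $K=\WF'_\seop(B_\Gamma)\cup\WF'_\seop(G)\cup\WF'_\seop(B^{\rm s})\subset\cU$ to write $[\wt L,\pa_t]=\sum_{j=1}^N a_j(A_j\hat\rho\pa_t+R_j)$, and derive, as in~\eqref{EqEstRadsOutL1}, the secondary equation
\begin{equation*}
  \wt L^{(1)}(\pa_t u)=f^{(1)},\qquad \wt L^{(1)}:=\wt L-\sum_{j=1}^N a_j A_j\hat\rho,\quad f^{(1)}:=\pa_t f+\Bigl(\sum_{j=1}^N a_j R_j\Bigr)u.
\end{equation*}
Since $a_j\in\hat\rho^{-2}(\hat\rho\CI+\cC_{\seop;\sop}^{(d_0;k-1),1,1})$, the coefficients of $\sum a_j A_j\hat\rho$ vanish at $\hat M$; consequently $\wt L^{(1)}$ has the \emph{same} normal operator $L$ at $\hat M_{t_0}$ as $\wt L$, and in particular the subprincipal symbol condition at the trapped set and the value of $\nu_{\rm min}$ are preserved. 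Moreover $\wt L^{(1)}$ satisfies~\eqref{EqEstRadsOp} with $k-1$ in place of $k$.

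\emph{Combination.} Apply the inductive hypothesis (at order $k-1$) twice: first to $\wt L u=f$ itself to obtain control of $u$ in $H_{(\seop;\sop),\eps}^{(s;k-1),\alpha_\circ,\hat\alpha}$ near $\pa\Gamma^+$; and second to the equation for $\pa_t u$, using $\wt L^{(1)}$ in place of $\wt L$, to obtain control of $\pa_t u$ in the same space. The operator wave front set of $R:=\sum a_j R_j$ being disjoint from $K$, the term $R u$ in $f^{(1)}$ is absorbed by the $\chi u$ error on a slightly enlarged set; similarly one commutes $\pa_t$ past $G$ and $B^{\rm s}$ on the right hand side, gaining at most one s-derivative each time. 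Combining the two estimates using the analogue of the norm equivalence~\eqref{EqEstRadsNorm} (valid because $\hat\rho\pa_t$ is se-elliptic on $\WF'_\seop(B_\Gamma)$) yields the desired $H^{(s;k)}_{(\seop;\sop),\eps}$-estimate, with the $B^{\rm s}u$ term now measured in $H^{(s+1;k)}_{(\seop;\sop),\eps}$. The enlarged supports and elliptic sets in $B_\Gamma,B^{\rm s},G,\chi$ are harmlessly accommodated by choosing them slightly larger than strictly necessary at the outset.

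\emph{Main obstacle.} The delicate point is to verify that the inductive step does not degrade the subprincipal data controlling the trapping estimate — that is, that $\wt L^{(1)}$ admits the \emph{same} trapping estimate as $\wt L$ with the same threshold $\nu_{\rm min}$. This is why Lemma~\ref{LemmaEstRadsComm} is crucial: its guarantee that the perturbation $\sum a_j A_j\hat\rho$ vanishes at $\hat M$ ensures that the Kerr model operator $L$, and hence the hypothesis on $S_{\rm sub}(L)$ assumed in Proposition~\ref{PropEstTrap}, is inherited by $\wt L^{(1)}$.
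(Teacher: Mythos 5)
Your inductive step is correct and matches the paper's proof in all essentials: commute with $\pa_t$ via Lemma~\ref{LemmaEstRadsComm}, form $\wt L^{(1)}$, observe that its $\hat M$-normal operator is still $L$ so the subprincipal hypothesis at trapping is preserved, and apply the inductive hypothesis together with the norm equivalence~\eqref{EqEstRadsNorm} coming from the se-ellipticity of $\hat\rho\pa_t$ (or $\pa_{\hat t}$) on $\Gamma^+$.

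The case $k=0$ is where the gap is, and it is a real one. Your primary proposal — rerun the positive commutator proof of Proposition~\ref{PropEstTrap} in $\Psi_\seop$ with $\chi_+\bigl(\frac{t-t_1}{\eps}\bigr)$ replaced by $\chi_+(t-t_1)$ — fails, and the reason is specific: that proof relies crucially on the ``finite fast time propagation'' step \eqref{EqEstTrapFinalTime}, which bridges a gap of size $\cO(\delta_0\eps)$ in $t$ (i.e.\ $\cO(\delta_0)$ in fast time $\tau$) between the domains appearing in \eqref{EqEstTrapLogProp} and \eqref{EqEstTrapComb} by propagating for a \emph{bounded} amount of $\tau$-time, during which $\wt\varphi^{\rm s}$ only changes by a bounded factor. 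With the soft cutoff $\chi_+(t-t_1)$ the gap becomes $\cO(\delta_0)$ in $t$, i.e.\ $\cO(\delta_0\eps^{-1})$ in $\tau$; propagating for $\tau$-time $\sim\eps^{-1}$ is precisely the long-time trapping problem being solved, and during that time $\wt\varphi^{\rm s}$ grows unboundedly, so the absorption argument breaks. Your alternative (enlarge $\cU,\chi$ slightly and ``absorb the remaining region into $\chi u$'') is not a substitute: the troublesome region lies \emph{on} the trapped set at later times $t\in(t_1,t_1+\delta)$, where control in a strong ($H^s$) norm is needed, not merely the weak ($H^{-N}$) norm carried by $\chi u$.

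What is actually needed is: apply Proposition~\ref{PropEstTrap} as stated but with final time $t_1+\delta$ in place of $t_1$ (so with $\tau=\frac{t-(t_1+\delta)}{\eps}$), obtaining $\tilde\Psi_\seop$-operators $B_\Gamma,G,B^{\rm s}$; then choose $B_\Gamma^0,G^0,B^{\rm s,0}\in\Psi_\seop^0$ elliptic on $\pa\Gamma^+\cap\{t_0\le t\le t_1\}$ (etc.) with operator wave front sets contained in the elliptic sets of $B_\Gamma,G,B^{\rm s}$ respectively — possible precisely because of the $\delta$-buffer, since $\Psi_\seop^0$-operators cannot localize to sub-unit $t$-intervals; finally transfer control from $B_\Gamma u$ to $B_\Gamma^0 u$ via microlocal elliptic regularity in the $\tilde\Psi_\seop$-calculus, and similarly for the terms on the right. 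This gives the $k=0$ estimate with the required $\Psi_\seop$-operators and feeds into your inductive step unchanged.
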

\begin{proof}
  Consider first the case $k=0$. A direct modification of the proof of Proposition~\ref{PropEstTrap} is delicate (since we do not have an analogue of~\eqref{EqEstTrapFinalTime} when the upper bounds in the time intervals are $t_1+4\delta_0$ instead of $t_1+4\delta_0\eps$ etc.). Instead, we apply Proposition~\ref{PropEstTrap} (the proof of which handles subprincipal pseudodifferential terms directly, cf.\ the discussion prior to~\eqref{EqEstTrapL1}) with final time $t_1+\delta$ for a small value of $\delta>0$, and thus with $\tau=\frac{t-(t_1+\delta)}{\eps}$; pick then $B^0_\Gamma\in\Psi_\seop^0$ with $B^0_\Gamma=\chi B^0_\Gamma\chi$ so that $B^0_\Gamma$ is elliptic at $\Gamma^+\cap\{t_0\leq t\leq t_1\}$ and at the same time satisfies $\WF_\seop'(B^0_\Gamma)\subset\Ell_\seop(B_\Gamma)$. Since a fortiori $B^0_\Gamma\in\tilde\Psi_\seop^0$ (see also \cite[Remark~3.48]{HintzScaledBddGeo}), microlocal elliptic regularity in $\tilde\Psi_\seop$ implies that
  \[
    \|B_\Gamma^0 u\|_{H_{\seop,\eps}^{s,\alpha_\circ,\hat\alpha}} \leq C\Bigl(\|B_\Gamma u\|_{H_{\seop,\eps}^{s,\alpha_\circ,\hat\alpha}} + \|\chi u\|_{H_{\seop,\eps}^{-N,\alpha_\circ,\hat\alpha}}\Bigr).
  \]
  Similarly, we may pick $G^0\in\Psi_\seop^0$ so that $\WF_\seop'(G)\subset\Ell_\seop(G^0)$, $\WF_\seop'(G^0)\subset\cU$, and $\chi G^0\chi=G^0$. (The latter condition can be arranged if one starts with a cutoff function in Proposition~\ref{PropEstTrap} with support contained in the interior of the support of the present $\chi$.) Then
  \[
    \|G\wt L u\|_{H_{\seop,\eps}^{s,\alpha_\circ,\hat\alpha-2}} \leq C\Bigl(\|G^0\wt L u\|_{H_{\seop,\eps}^{s,\alpha_\circ,\hat\alpha-2}} + \|\chi u\|_{H_{\seop,\eps}^{-N,\alpha_\circ,\hat\alpha}}\Bigr).
  \]
  Proceeding similarly with the term $B^{\rm s}$ from Proposition~\ref{PropEstTrap} thus yields~\eqref{EqEstTraps} for $k=0$ with $B_\Gamma^0$ etc.\ in place of $B_\Gamma$.

  The case of higher $k$ is treated via induction. Writing $\wt L u=f$, we consider equation~\eqref{EqEstRadsOutL1} for $\pa_t u$. Since, near $\hat M^\circ$, we have $\wt L^{(1)}-\wt L\in\eps^{-1}(\CI+\cC_{\seop;\sop}^{(d_0;k-1)})\Psi_\seop^1$, the subprincipal operator of $\wt L^{(1)}$ at $\pa\Gamma^+$ equals that of $\wt L$, and therefore the inductive hypothesis applies to~\eqref{EqEstRadsOutL1}. In view of~\eqref{EqEstRadsNorm}, this completes the inductive step.
\end{proof}

\subsection{Energy estimates}
\label{SsEstEn}

The microlocal estimates proved thus far give estimates of solutions of $\wt L u=f$ at high se-frequencies, but with error terms whose norms are taken on sets which are slightly enlarged by an amount $c\hat\rho$ in the $t$- and $x$-directions near $\hat M$, and by an amount $c$ away from $\hat M$, where $c$ is arbitrary but fixed. To close se-estimates, we use energy estimates to control solutions of wave equations for short times near the subsets $\{\hat r=\bhm\}$ and $\{\frac{t-t_1}{\hat\rho}=0\}$ of $\upbeta^{-1}\wt\Omega$.\footnote{We do not need an analogue for mixed (se;s)-Sobolev spaces. In fact, the sharp localization in $\frac{t-t_1}{\hat\rho}$ below is incompatible with ps.d.o.s of class $\Psi_\seop$ which have s-regular coefficients. Moreover, estimates for fixed size $t$-intervals cannot be proved using simple-minded energy methods; after all, this is the key challenge which we overcome in this work.} This is analogous to \cite[\S{2.1.3}]{HintzVasySemilinear} and \cite[\S{3.3}]{HintzConicWave}, and thus we shall be brief.

Recall that $\wt\Omega\subset\wt M$ is a standard domain associated with $\Omega=\Omega_{t_0,t_1,r_0}$. Working in $\hat r\geq\frac12\bhm$, we take $\hat\rho=r$. For $0<\delta<\min(1,\frac12\bhm)$, we then consider the enlarged domain
\[
  \wt\Omega^\delta := \{ (\eps,t,x) \colon t_0\leq t\leq t_1+\hat\rho\delta,\ |\hat x|\geq\bhm-\delta,\ |x|\leq r_0+2(t_1-t)+\delta \} \subset [\wt M;\hat M_{t_1}].
\]
In terms of
\[
  \tau := \frac{t-t_1}{\hat\rho} = \frac{t-t_1}{r}\,,
\]
the final spacelike hypersurfaces are thus at $\tau=\delta$ (as opposed to $\tau=0$ for $\wt\Omega$), $|\hat x|=\bhm-\delta$, and $|x|=r_0+2(t_1-t)+\delta$. Since $\dd\tau=r^{-1}(\dd t-\tau\,\dd r)$ is a past timelike se-covector for $\tau=0$, this remains true also for sufficiently small $\tau$ on $\wt\Omega^\delta$. The spacelike nature of the other boundary hypersurfaces of $\wt\Omega^\delta$ follows by similar perturbative arguments, much as after Definition~\ref{DefGlDynStd}. Near the boundary hypersurfaces of $\wt\Omega^\delta$, we will work with domains of the form
\begin{equation}
\label{EqEstEnDom}
\begin{split}
  \wt\Omega^{\delta_+}_{\delta_-} :=\ \ &\{ t_1+\hat\rho\delta_- \leq t\leq t_1+\hat\rho\delta_+,\ |\hat x|\geq\bhm-\delta_+,\ |x|\leq r_0+2(t_1-t)+\delta_+ \} \\
    \cup\;& \{ t_0\leq t\leq t_1+\hat\rho\delta_+,\ |x|-(r_0+2(t_1-t))\in[\delta_-,\delta_+] \} \\
    \cup\;& \{ t_0\leq t\leq t_1+\hat\rho\delta_+,\ |\hat x|\in[\bhm-\delta_+,\bhm-\delta_-] \}
\end{split}
\end{equation}
for small $\delta_\pm\in\R$ with $\delta_-<\delta_+$. The initial and final boundary hypersurfaces of such domains are illustrated in Figure~\ref{FigEstEnDom}. As usual, we write $(\Omega^{\delta_+}_{\delta_-})_\eps:=\wt\Omega^{\delta_+}_{\delta_-}\cap M_\eps$.

\begin{figure}[!ht]
\centering
\includegraphics{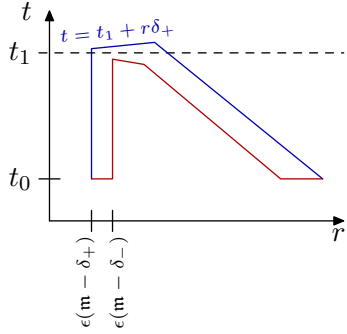}
\caption{A domain $\wt\Omega^{\delta_+}_{\delta_-}$, for a fixed value $\eps>0$ and with $\delta_-<0<\delta_+$. The initial, resp.\ final boundary hypersurfaces are drawn in red, resp.\ blue.}
\label{FigEstEnDom}
\end{figure}

\begin{prop}[Energy estimates near the final boundary hypersurfaces of $\wt\Omega^\delta$]
\label{PropEstEn}
  Let $\alpha_\circ,\hat\alpha\in\R$. Let $\sfs\in\CI(\Sse^*\wt M)$, and assume that in a neighborhood of $\bigcup_{\bullet=-,\,\wedge,\,|}\wt Y^\bullet_{t_0,t_1,r_0}$, the function $\sfs$ is non-increasing along the $\pm\hat\rho^2 H_{\wt G}$-flow in the characteristic set $\wt\Sigma^\pm$. Then for small $\delta_-<\delta_+$, there exist $\eps_0>0$ and $C\in\R$ so that the following holds for all $\eps\leq\eps_0$: given $f\in H_{\seop,\eps}^{\sfs-1,\alpha_\circ,\hat\alpha-2}((\Omega^{\delta_+}_{\delta_-})_\eps)^{\bullet,-}$, there exists a unique solution $u\in H_{\seop,\eps}^{\sfs,\alpha_\circ,\hat\alpha}((\Omega^{\delta_+}_{\delta_-})_\eps)^{\bullet,-}$ of $\wt L u=f$, with the norm of $u$ bounded by $C$ times that of $f$.
\end{prop}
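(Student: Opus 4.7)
The plan is to decompose $\wt\Omega^{\delta_+}_{\delta_-}$ via a partition of unity adapted to its three families of final boundary hypersurfaces and prove three separate local energy estimates, one for each final hypersurface type (near $\wt Y^-$, $\wt Y^\wedge$, $\wt Y^|$). In each region, the relevant time function is, respectively, $\tau=(t-t_1)/\hat\rho$, $\tau'=(|x|-(r_0+2(t_1-t)))$, and $\tau''=\bhm-|\hat x|$; note that $\hat\rho^{-1}\dd\tau$, $\dd\tau'$, and $\eps\,\dd\tau''=\dd(-|\hat x|)$ are past timelike se-covectors at $\tau=\tau'=\tau''=0$ for $\wt g|_{M_\circ}=g$, $g$, and $\hat g_b$ respectively, by the discussion following Definition~\ref{DefGlDynStd}. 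By continuity in $\eps$ and $\wt M$, these remain past timelike se-covectors on a uniform neighborhood of these loci in $\wt\Omega^{\delta_+}_{\delta_-}$ for $\eps\leq\eps_0$ provided $|\delta_-|,\delta_+$ are small enough.

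For each region I would run a standard positive commutator (energy) estimate in the se-pseudodifferential calculus $\tilde\Psi_\seop$. Consider, e.g., the region near $\wt Y^-$. Fix an elliptic symmetric $Q\in\tilde\Psi_\seop^{\sfs-1,\alpha_\circ,\hat\alpha-1}$, a past timelike vector field $V\in\Vse(\wt M)$ (for instance $V=\hat\rho\pa_t$, suitably adjusted), and a cutoff $\phi(\tau)$ supported in $[\delta_-,\delta_+]$. One computes
\[
  2\Re\la Q\wt L u, Q V u\ra_{L^2} = \la u, [\wt L^*Q^*Q, V] u\ra + \la u, (V^*+V)Q^*Q\wt L u\ra + \text{boundary contributions},
\]
and the Stokes-type boundary term at $\tau=\delta_+$ is a coercive square controlling $\|Q u\|_{L^2}^2$ at the final hypersurface, with sign determined by the past timelike character of $V^\flat$ and the principal symbol ellipticity of $Q$. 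The bulk commutator $[\wt L^*Q^*Q,V]$ lies in $\tilde\Psi_\seop^{2\sfs,2\alpha_\circ,2\hat\alpha-2}$; its principal symbol contains the term $(H_{\wt G}\sfs)\cdot 2\sigmase(Q^*Q V)$ coming from differentiating the order function, which by the monotonicity assumption on $\sfs$ has the favorable sign (up to lower order). All other terms either have the favorable sign or can be absorbed by a standard Gronwall step in $\tau$ since the time interval $[\delta_-,\delta_+]$ is short. The uniformity in $\eps$ is built in because $\tilde\Psi_\seop$ and the induced norm equivalences of Definition~\ref{DefFVarseSob} are uniform as $\eps\searrow 0$.

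The analogous estimates near $\wt Y^\wedge$ (which is contained in a fixed compact set away from $\cC$, where the se-calculus reduces to the standard calculus on $M$ uniformly in $\eps$) and near $\wt Y^|$ (where $|\hat x|$ is the time function and the normal operator $L=N_{\hat M}(\eps^2\wt L)$ governs the leading behavior uniformly in $\eps$) are carried out in the same fashion. Summing via the partition of unity and absorbing cross terms using that the supports overlap in regions where the cutoffs are constant, I obtain the a priori estimate $\|u\|_{H_{\seop,\eps}^{\sfs,\alpha_\circ,\hat\alpha}} \leq C\|f\|_{H_{\seop,\eps}^{\sfs-1,\alpha_\circ,\hat\alpha-2}}$ on $(\Omega^{\delta_+}_{\delta_-})_\eps$. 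Existence and uniqueness of $u$ are obtained, for each fixed $\eps>0$, from the classical well-posedness of the forward problem for the Lorentzian wave-type operator $L_\eps$ on the compact domain with spacelike boundary hypersurfaces (see the discussion following Definition~\ref{DefGlDynStd}); the a priori estimate then upgrades this to a uniform bound for $\eps\leq\eps_0$, and a standard duality/adjoint-energy-estimate argument in the $\tilde\Psi_\seop$-calculus extends existence to arbitrary $f$ in the relevant weighted se-Sobolev space.

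The main technical obstacle I anticipate is handling the variable order $\sfs$ on the overlap regions between the three cutoffs, since $\sfs$ is only assumed non-increasing along the flow in neighborhoods of each $\wt Y^\bullet$; one must choose the partition of unity so that the relevant pieces remain in the appropriate neighborhood where monotonicity holds, which is possible by the flow analysis of Lemma~\ref{LemmaGlDynStdFlow}. A secondary point is the minor technical care needed to handle the non-smoothness of defining functions at the corners $\wt Y^-\cap\wt Y^|$ and $\wt Y^-\cap\wt Y^\wedge$ of $\wt\Omega^{\delta_+}_{\delta_-}$, which is standard and handled by choosing the cutoffs in the partition to vanish near these corners.
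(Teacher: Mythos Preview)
Your proposal conflates two distinct energy arguments. The identity you write for $2\Re\la Q\wt L u, QVu\ra$ is not correct: direct computation gives $\la u,(\wt L^*Q^*QV+V^*Q^*Q\wt L)u\ra$, not the commutator expression you state, and in any case a microlocal positive commutator argument (with $Q$ a genuine ps.d.o.) produces no boundary terms---the localization is built into the symbol of the commutant, and one works on supported/extendible spaces. Physical-space multiplier estimates, by contrast, involve no $Q$: the coercive boundary term at the final hypersurface arises from the divergence theorem applied to the energy current $T[u](V,-)$. Your hybrid does not yield a coherent identity, and the ``Gronwall step'' remains vague precisely because the algebraic structure is unclear.

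The paper takes a simpler two-stage route. For $\sfs=1$ it runs a \emph{single} physical-space energy estimate on all of $(\Omega^{\delta_+}_{\delta_-})_\eps$ using the multiplier
\[
  V_\digamma=-e^{-\digamma\tau}\rho_\circ^{-2\alpha_\circ}\hat\rho^{-2\hat\alpha+2}(\dd\tau)^\sharp \in \rho_\circ^{-2\alpha_\circ}\hat\rho^{-2\hat\alpha+2}\Vse(\wt M):
\]
the exponential weight with large $\digamma$ makes the deformation tensor of $V_\digamma$ negative definite as an se-tensor (using that $\dd\tau$ is past timelike), and the divergence theorem applied to $T[u](V_\digamma,-)$ gives the estimate directly---no partition of unity into three regions is needed. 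For general and variable $\sfs$ the paper does \emph{not} attempt a direct microlocal energy estimate. Instead it extends $f$ to a slightly larger domain $(\Omega^{\tilde\delta_+}_{\delta_-})_\eps$ with $\tilde\delta_+>\delta_+$, solves there with a constant sufficiently negative se-order (reducing to the $\sfs=1$ case together with duality for negative orders), and then uses finite-time real principal type propagation in $\tilde\Psi_\seop$ along $\hat\rho^2 H_{\wt G}$ to recover $H_{\seop,\eps}^{\sfs,\alpha_\circ,\hat\alpha}$ control on the original domain; the weak-norm error in the propagation estimate is bounded by the solution already in hand on the larger domain. This separation---physical space for the base estimate, microlocal propagation only for regularity upgrades---is the idea you are missing, and it sidesteps entirely the overlap issues with variable orders that you flag as an obstacle.
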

\begin{proof}
  Existence and uniqueness of $u$ follows for each $\eps>0$ from standard hyperbolic theory (exploiting the timelike nature of $\tau$ and, in $\hat r\leq\bhm-\delta_-$, of $\hat r$), at least for constant orders $\sfs$. One can easily accommodate variable orders by first extending $f$ to a larger domain $(\Omega_{\delta_-}^{\tilde\delta_+})_\eps$ with $\tilde\delta_+>\delta_+$, solving the equation on spaces with constant (and sufficiently negative) se-regularity orders there (discussed below), and then using the propagation of singularities for finite affine time (independently of $\eps$) along $\hat\rho^2 H_{\wt G}$ integral curves over $[\wt M;\hat M_{t_1}]$ to get uniform $H_{\seop,\eps}^{\sfs,\alpha_\circ,\hat\alpha}$ control over $(\Omega_{\delta_-}^{\delta_+})_\eps$; the usual error term in microlocal estimates can be taken to be a weak norm on the larger domain $(\Omega_{\delta_-}^{\tilde\delta_+})_\eps$, which however is controlled by virtue of having solved the wave equation on this larger domain.

  For the proof of uniform estimates of $u$ in terms of $f$ on weighted se-Sobolev spaces, we only discuss the case $\sfs=1$; the extension to general orders $\sfs$ follows from the above arguments (together with duality arguments to cover negative $\sfs$, cf.\ the proof of \cite[Proposition~3.7]{HintzConicWave} or \cite[Theorem~6.4]{HintzVasyScrieb}). For $\sfs=1$, one simply runs a standard energy estimate on $(\Omega_{\delta_-}^{\delta_+})_\eps$ using the (future timelike) vector field multiplier
  \[
    V_\digamma=-e^{-\digamma\tau}\rho_\circ^{-2\alpha_\circ}\hat\rho^{-2\hat\alpha+2}(\dd\tau)^\sharp \in \rho_\circ^{-2\alpha_\circ}\hat\rho^{-2\hat\alpha+2}\Vse(\wt M)
  \]
  for large $\digamma>1$. (When $\wt L$ is an operator on bundles, one takes $V_\digamma$ to be a first order se-differential operator with this principal part.) The deformation tensor of $V_\digamma$ with respect to the metric $\wt g$ is, as an se-tensor, a smooth multiple of $\rho_\circ^{-2\alpha_\circ}\hat\rho^{-2\hat\alpha}$, and indeed a \emph{negative} definite multiple when $\digamma$ is sufficiently large (again using the past timelike nature of $\dd\tau$). Applying the divergence theorem to the vector field $T[u](V_\digamma,-)$ where $T[u]=u_{;\mu}u_{;\nu}-\frac12 \wt g_{\mu\nu}|\dd u|_{\wt g^{-1}}^2$ yields the desired estimate.
\end{proof}

\subsection{Uniform regularity estimates on standard domains}
\label{SsEstStd}

We now gather all (micro)local estimates proved in the previous subsections.

\begin{definition}[Admissible order functions]
\fakephantomsection
\label{DefEstAdm}
  \begin{enumerate}
  \item Let $\wt\Omega$ be a standard domain. We call $\sfs\in\CI(\Sse^*\wt M)$, $\alpha_\circ,\hat\alpha\in\R$ \emph{admissible (on $\wt\Omega$)} if
  \begin{enumerate}
    \item\label{ItEstAdmIn} $\sfs+\alpha_\circ-\hat\alpha>\frac12(-1+\vartheta_{\rm in})$ at $\pa\cR_{\rm in}$ (see~\eqref{EqEstRadInTheta});
    \item\label{ItEstAdmOut} $\sfs+\alpha_\circ-\hat\alpha<\frac12(-1-\vartheta_{\rm out})$ at $\pa\cR_{\rm out}$ (see~\eqref{EqEstRadOutTheta});
    \item\label{ItEstAdmHor} $\sfs>\frac12(1+\vartheta_{\cH^+})$ at $\pa\cR_{\cH^+}$ (see~\eqref{EqEstHorTheta});
    \item $\sfs$ is constant near the radial sets and near the trapped set~\eqref{EqGlDynStdTrap}, and monotonically decreasing along the future null-bicharacteristic flow, i.e.\ $\pm\hat\rho^2 H_{\wt G}\sfs\leq 0$ on $\wt\Sigma^\pm$ over a neighborhood of $\wt\Omega$;
    \item the restriction of $\sfs$ to $\Sse^*_{\hat M}\wt M$ is $t$-independent.
  \end{enumerate}
  \item\label{ItEstAdmKerr} We say that $\sfs\in\CI(\Stb^*\hat M_b)$, $\alpha_\cD$ are \emph{Kerr-admissible} if $\sfs$ is the restriction to $\hat M_{t_0}$ (for any $t_0\in I_\cC$, via~\eqref{EqFseBundle3b}) of an order function $\tilde\sfs$ for which $\tilde\sfs$, $\alpha_\circ=\alpha_\cD$, and $\hat\alpha=0$ are admissible. Equivalently, the above threshold, constancy, and monotonicity conditions hold at $\pa\hat\cR_{\rm in}$, $\pa\hat\cR_{\rm out}$, $\pa\hat\cR_{\cH^+}$, $\pa\hat\Gamma_b$, $\pa\hat\Sigma^\pm$.
  \end{enumerate}
\end{definition}

\begin{rmk}[Induced orders]
\label{RmkEstAdmInd}
  An admissible order function $\sfs$ induces a stationary variable 3b-order function, i.e.\ an element of $\CI(\Stb^*_{\hat X_b}\hat M_b)$, via~\eqref{EqFseBundle3b} which by an abuse of notation we denote by $\sfs$ as well. This function in turn induces order functions on the spectral side as recalled after~\eqref{EqFVar3bFT}: namely, a variable b-differential order function on $\Sb^*\hat X_b$ for the zero energy operator $\hat L(0)$; a variable scattering regularity order on $\Ssc^*\hat X_b$ and a variable scattering decay order on $\ol{\Tsc^*_{\pa\hat X_b}}\hat X_b$ for the spectral family $\hat L(\sigma)$ at nonzero real frequencies (and also a variable semiclassical scattering decay order at high real frequencies); a variable semiclassical order (powers of $h$) at high real frequencies; and variable sc-b-transition regularity and (scattering) decay orders for the uniform low energy analysis at positive or negative frequencies, which in turn induce variable scattering regularity and decay orders for the transition face normal operators $L_\tface(\pm 1)$. We will often denote all these induced orders by $\sfs$ simply. See also~\cite[Lemma~4.28]{Hintz3b}.
\end{rmk}

When $\wt\Omega\cap\hat M=\emptyset$, then any constant $\sfs$ is admissible. To construct an (admissible) order function $\sfs$ when the domain $\wt\Omega$ intersects $\hat M$, we recall Lemma~\ref{LemmaGlDynStdMono} and take (in the notation of the Lemma)
\begin{equation}
\label{EqEstAdmIndEx}
  \sfs := s_+ - \chi_1(\hat\xi_\seop)\chi_2(\hat r)(s_+-s_-)
\end{equation}
where $s_+$ satisfies the lower bounds in conditions~\eqref{ItEstAdmIn} and \eqref{ItEstAdmHor} of Definition~\ref{DefEstAdm}, while $s_-$ satisfies the upper bound in condition~\eqref{ItEstAdmOut}; and we require $s_-\leq s_+$. Note also that the function~\eqref{EqEstAdmIndEx} satisfies $\pa_t\sfs|_{\hat M}=0$, and moreover it is admissible also for all perturbations of the wave operator $\wt L$ (within the class considered in~\eqref{ItEstLBundle}--\eqref{ItEstLOp}).

\begin{thm}[Uniform regularity estimates]
\label{ThmEstStd}
  Let $\wt\Omega\subset\wt M\setminus\wt K^\circ$ be a standard domain. Let $\eps_0>0$ be so small that all boundary hypersurfaces of $\Omega_\eps=\wt\Omega\cap M_\eps$ are spacelike for $\eps\in(0,\eps_0]$. Let $\alpha_\circ,\hat\alpha\in\R$. Let $\sfs,\sfs_0\in\CI(\Sse^*\wt M)$ be such that $\sfs,\alpha_\circ,\hat\alpha$ are admissible on $\wt\Omega$, and so that also $\sfs_0$ satisfies the lower bounds in Definition~\usref{DefEstAdm}\eqref{ItEstAdmIn}, \eqref{ItEstAdmHor}. Suppose moreover that there exists a positive definite fiber inner product on $\hat\cE$ (which gives an inner product on $\cE|_{\hat M}$ via pullback) so that $\frac{1}{2 i}(S_{\rm sub}(L)-S_{\rm sub}(L)^*)<\frac12\nu_{\rm min}$, where we recall~\eqref{EqEstTrapNuMin}. Then there exists a constant $C$ so that the estimate
  \begin{equation}
  \label{EqEstStd}
    \|u\|_{H_{\seop,\eps}^{\sfs,\alpha_\circ,\hat\alpha}(\Omega_\eps)^{\bullet,-}} \leq C\Bigl( \|\wt L u\|_{H_{\seop,\eps}^{\sfs,\alpha_\circ,\hat\alpha-2}(\Omega_\eps)^{\bullet,-}} + \|u\|_{H_{\seop,\eps}^{\sfs_0,\alpha_\circ,\hat\alpha}(\Omega_\eps)^{\bullet,-}} \Bigr)
  \end{equation}
  holds (in the strong sense) for all $\eps\leq\eps_0$.
\end{thm}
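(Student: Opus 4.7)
The case $\wt\Omega\cap\cC=\emptyset$ reduces to standard hyperbolic theory (uniform in the smoothly $\eps$-parameterized family of operators on a compact domain with spacelike boundary), so I focus on the main case, where $\Omega=\Omega_{t_0,t_1,r_0}$ intersects $\cC$. The plan is to assemble a uniform se-estimate on $\Omega_\eps$ from the microlocal building blocks of \S\S\ref{SsEstRad}--\ref{SsEstTrap}, chained along the future null-bicharacteristic flow described in Lemma~\ref{LemmaGlDynStdFlow}, and to close the estimate near the final boundary hypersurfaces by applying the energy estimate of Proposition~\ref{PropEstEn} on a slight enlargement $\wt\Omega^\delta_\eps$ of $\Omega_\eps$.

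First I would fix small $\delta>0$ and apply Proposition~\ref{PropEstEn} on each of the three collar regions near $\wt Y^-_{t_0,t_1,r_0}$, $\wt Y^\wedge_{t_0,t_1,r_0}$, $\wt Y^|_{t_0,t_1,r_0}$ (i.e.\ on the three subsets appearing in~\eqref{EqEstEnDom}, with $\delta_+=\delta$ and $\delta_-$ a small negative constant) to control $u$ on these collars by $\wt L u$ plus the norm of $u$ on the initial sides of these collars. This absorbs the final-hypersurface regions at the cost of an error that lives strictly inside $\Omega_\eps$ (measured on a slightly smaller domain $\Omega^\flat_\eps$); the key geometric input is that $\hat\rho^{-1}\dd t$, $\dd\hat r$, and the se-conormal to $\wt Y^\wedge$ are timelike there for small $\eps$.

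The bulk of the argument is then the microlocal estimate on $\Omega^\flat_\eps$. Working on the resolution $\upbeta\colon[\wt M;\hat M_{t_1}]\to\wt M$ to allow sharp localization at $\tau=\frac{t-t_1}{\hat\rho}\leq 0$, I would cover $\upbeta^*\Sse^*_{\Omega^\flat_\eps}\wt M\setminus o$ by finitely many operator wave front sets and chain the following estimates in the order of the future null-bicharacteristic flow (from sources to sinks), using the monotonicity of $\sfs$:
\begin{enumerate}
\item Off the characteristic set: microlocal ellipticity in $\tilde\Psi_\seop$ (the $\tilde\Psi_\seop$ analog of Lemma~\ref{LemmaFVarsseEll}).
\item At the past-generalized-radial set $\pa\cR_{\cH^+}^-$ over the event horizon, and at $\pa\cR_{\rm in}^\pm$ over $\pa M_\circ$: Propositions~\ref{PropEstHor} and~\ref{PropEstRadIn} give control from nothing (source in the relevant directions) provided the respective threshold conditions of Definition~\ref{DefEstAdm}\eqref{ItEstAdmHor} and \eqref{ItEstAdmIn} hold; the $E u$ term in Proposition~\ref{PropEstRadIn} is supported in $\{r>0\}$ and will be fed by propagation from earlier stages.
\item From these sources, real principal type propagation in $\tilde\Psi_\seop$ (the analog of Lemma~\ref{LemmaFVarsseProp}) carries control along future null-bicharacteristics through the nontrapped part of $\wt\Sigma^\pm$.
\item At the trapped set $\pa\Gamma^\pm$: Proposition~\ref{PropEstTrap} promotes control on $\Gamma^{\rm s,\pm}\setminus\Gamma^\pm$ (obtained by propagating from step~(3) into a punctured neighborhood of $\pa\Gamma^\pm$ inside the stable manifold) to control at $\pa\Gamma^\pm$, using the subprincipal hypothesis on $L$.
\item From $\pa\Gamma^\pm$ along $\Gamma^{\rm u,\pm}$, more real principal type propagation carries the control onward.
\item At $\pa\cR_{\cH^+}^+$, $\pa\cR_{\rm out}^\pm$: Propositions~\ref{PropEstHor} (for the part of $\pa\cR_{\cH^+}^+$ reached by the forward flow, absorbing via the above-threshold condition) and~\ref{PropEstRadOut} (with the below-threshold condition from Definition~\ref{DefEstAdm}\eqref{ItEstAdmOut}) absorb these sinks.
\end{enumerate}
The non-refocusing property of $\Omega$ (Definition~\ref{DefGlDynStdM}\eqref{ItGlDynStdNonrf}) ensures that the $E u$ term in the outgoing radial point estimate can be arranged to have wave front set disjoint from $\pa W_{\rm out}^+$, so it is controlled by one of the earlier steps rather than by the estimate being proved. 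Monotonicity of $\sfs$ along the flow ensures that propagation never increases the regularity order, so every a~priori control term appearing on the right of one estimate either has already been bounded or is acceptable (it is an $H_{\seop,\eps}^{\sfs_0,\alpha_\circ,\hat\alpha}$ or residual norm on a slightly larger set).

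The main obstacle I expect to face is the bookkeeping required to show that the chain closes: in each step one must verify that the a priori terms ($E u$, $B^{\rm s} u$, $G u$, and the residual $\|\chi u\|_{H_{\seop,\eps}^{-N}}$) fall in wave front set regions that are already controlled by a previous step, and that the support condition $u=0$ in $t<t_0$ together with the $\tau\leq 0$ localization supplied by the cutoffs on $[\wt M;\hat M_{t_1}]$ allows the chain to start with no incoming microlocal data. A secondary issue is the matching of domains: the microlocal estimates control $u$ on $\Omega^\flat_\eps$ while the energy estimates control the complementary collars; both sides must be arranged so that the union strictly contains $\Omega_\eps$ and the error terms live in sets covered by the other side. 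With these verifications completed, combining the resulting microlocal and energy estimates and absorbing overlapping a priori terms into either $\|\wt L u\|_{H_{\seop,\eps}^{\sfs,\alpha_\circ,\hat\alpha-2}(\Omega_\eps)^{\bullet,-}}$ or the single lower-order term $\|u\|_{H_{\seop,\eps}^{\sfs_0,\alpha_\circ,\hat\alpha}(\Omega_\eps)^{\bullet,-}}$ yields~\eqref{EqEstStd}. The strong form of the estimate follows from the strong form of each microlocal ingredient, as the right-hand side of~\eqref{EqEstStd} being finite implies finiteness of each intermediate norm by the chaining.
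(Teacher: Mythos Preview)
Your plan matches the paper's strategy---chain the microlocal estimates of \S\S\ref{SsEstRad}--\ref{SsEstTrap} along the future null-bicharacteristic flow of Lemma~\ref{LemmaGlDynStdFlow} and close with the energy estimate of Proposition~\ref{PropEstEn}---but two points need correction. First, both components $\pa\cR_{\cH^+}^\pm\subset\wt\Sigma^\pm$ of the horizon radial set are \emph{sources} for the future-directed flow (see the discussion after~\eqref{EqGlDynKerrHorrho2} and Lemma~\ref{LemmaGlDynKerrFlow}); Proposition~\ref{PropEstHor} is an above-threshold source estimate applied once to both, so $\pa\cR_{\cH^+}^+$ belongs in your step~(2), not step~(6). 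Forward bicharacteristics that approach the horizon without emanating from $\pa\cR_{\cH^+}$ cross $\hat r=\hat r_b$ and exit through $\hat r=\bhm$ in finite time, where they are handled by the energy estimate at $\wt Y^|$, not by a radial point estimate.

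Second, your ordering is missing a step. The microlocal estimates, with cutoffs at $\tau\leq 0$ as in Propositions~\ref{PropEstRadIn}--\ref{PropEstTrap}, control $u$ on $\Omega_\eps$ itself but with error terms living slightly \emph{past} it (to $\tau\leq\delta_0$); and you cannot apply Proposition~\ref{PropEstEn} directly to $u$ on a collar with $\delta_+>0$, since $u\in H_{\seop,\eps}^\sfs(\Omega_\eps)^{\bullet,-}$ and $\wt L u$ are not canonically defined beyond the final boundary. The paper handles this by first extending $f=\wt L u$ to $\tilde f$ with the same norm, \emph{solving} $\wt L\tilde u=\tilde f$ on the enlarged domain $(\Omega^\delta)_\eps$ (so $\tilde u|_{\Omega_\eps}=u$ by uniqueness), and only then applying the microlocal chain to $\tilde u$. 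This yields control of $u$ on $\Omega_\eps$ with error $\|\tilde u\|_{H_{\seop,\eps}^{\sfs_0}((\Omega^\delta)_\eps)}$, which is now bounded via the energy estimate: writing $\tilde u=\chi\tilde u+(1-\chi)u$ for a cutoff $\chi$ supported in $\wt\Omega^\delta_{-\delta}$ and equal to $1$ on $\wt\Omega^\delta_0$, the equation $\wt L(\chi\tilde u)=\chi\tilde f+[\wt L,\chi]u$ has right-hand side controlled by norms on $\Omega_\eps$ (the commutator being supported in $\supp\dd\chi\subset\wt\Omega$), and Proposition~\ref{PropEstEn} on $(\Omega^\delta_{-\delta})_\eps$ then closes the estimate. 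This extend-and-solve step is the ingredient your ordering lacks.
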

\begin{proof}
  Given $f\in H_{\seop,\eps}^{\sfs,\alpha_0,\hat\alpha-2}(\Omega_\eps)^{\bullet,-}$, we extend it to $\tilde f\in H_{\seop,\eps}^{\sfs,\alpha_0,\hat\alpha-2}(M_\eps)$ with the same norm, as described in~\S\ref{SsEstFn}. For some fixed small $\delta>0$ and recalling~\eqref{EqEstEnDom}, set
  \[
    \wt\Omega^\delta:=\wt\Omega\cup\wt\Omega_0^\delta\subset[\wt M;\hat M_{t_1}].
  \]
  On the $\eps$-level sets $(\Omega^\delta)_\eps$ of this domain (for all sufficiently small $\eps>0$), we solve $\wt L\tilde u=\tilde f$, with trivial initial data at the initial hypersurface of $\wt\Omega^\delta$. Thus, $\tilde u$ lies in $H^{\sfs+1}$ for any fixed $\eps>0$, and by the uniqueness of solutions of wave equations it equals $u$ on $\Omega_\eps$.

  Starting in $t\in(t_0-1,t_0)$ where $\tilde u$ vanishes, we now propagate se-regularity along future null-bicharacteristics, following the arrows in Figure~\ref{FigGlDynFlow}. To wit, we first propagate uniform se-regularity into the incoming radial set $\pa\cR_{\rm in}$ using Proposition~\ref{PropEstRadIn} and, independently, out of the event horizon $\pa\cR_{\cH^+}$ using Proposition~\ref{PropEstHor}. This in particular gives control of $\tilde u$ at the stable manifold of $\pa\Gamma$ outside of any arbitrarily small neighborhood of $\pa\Gamma$; from there, se-Sobolev bounds on $\tilde u$ can be propagated into the trapped set itself using Proposition~\ref{PropEstTrap}. Since all past null-bicharacteristics starting in a punctured neighborhood of $\pa\cR_{\rm out}$ over $\hat M$ tend to the horizon, the trapped set, or the initial hypersurface, we can now use Proposition~\ref{PropEstRadOut} to propagate uniform se-regularity into the outgoing radial set $\pa\cR_{\rm out}$, and from there onward using real principal type propagation.

  Together with se-microlocal elliptic estimates to estimate $\tilde u$ away from $\wt\Sigma$, this establishes the uniform estimate
  \[
    \|u\|_{H_{\seop,\eps}^{\sfs,\alpha_\circ,\hat\alpha}(\Omega_\eps)^{\bullet,-}} \leq C\Bigl( \| \tilde f \|_{H_{\seop,\eps}^{\sfs,\alpha_\circ,\hat\alpha-2}((\Omega^\delta)_\eps)^{\bullet,-}} + \| \tilde u \|_{H_{\seop,\eps}^{\sfs_0,\alpha_\circ,\hat\alpha}((\Omega^\delta)_\eps)^{\bullet,-}}\Bigr).
  \]
  Let now $\chi\in\CI(\wt\Omega^\delta_{-\delta})$ be a cutoff function which equals $0$ on $\wt\Omega_{-\delta}^{-\delta/2}$ and $1$ on $\wt\Omega^\delta_0$. Then the norm on $\tilde u$ on the right hand side is bounded by the sum of the norms of $\chi\tilde u$ and $(1-\chi)\tilde u=(1-\chi)u$. We only need to estimate the former term; for this purpose, note that
  \begin{equation}
  \label{EqEstStdExt}
    \wt L(\chi\tilde u)=\tilde f':=\chi\tilde f+[\wt L,\chi]\tilde u = \chi\tilde f + [\wt L,\chi]u.
  \end{equation}
  Proposition~\ref{PropEstEn} gives the uniform estimate
  \[
    \|\chi\tilde u\|_{H_{\seop,\eps}^{\sfs_0,\alpha_\circ,\hat\alpha}((\Omega^\delta_{-\delta})_\eps)^{\bullet,-}} \leq C\|\tilde f'\|_{H_{\seop,\eps}^{\sfs_0-1,\alpha_\circ,\hat\alpha-2}((\Omega^\delta_{-\delta})_\eps)^{\bullet,-}}.
  \]
  To estimate $\tilde f'$, note that $\chi\in\CI([\wt M;\hat M_{t_1}])\subset\CI_\seop(\wt M)$ and thus $[\wt L,\chi]\in\CI_\seop\Diffse^{1,0,2}$; and since $\supp\dd\chi\subset\wt\Omega$, we can estimate
  \[
    \|\tilde f'\|_{H_{\seop,\eps}^{\sfs_0-1,\alpha_\circ,\hat\alpha-2}((\Omega^\delta_{-\delta})_\eps)^{\bullet,-}} \leq C\Bigl(\|\tilde f\|_{H_{\seop,\eps}^{\sfs_0-1,\alpha_\circ,\hat\alpha-2}((\Omega^\delta)_\eps)^{\bullet,-}} + \|u\|_{H_{\seop,\eps}^{\sfs_0,\alpha_\circ,\hat\alpha}(\Omega_\eps)^{\bullet,-}}\Bigr).
  \]
  Since $\|\tilde f\|_{H_{\seop,\eps}^{\sfs,\alpha_\circ,\hat\alpha-2}((\Omega^\delta)_\eps)^{\bullet,-}}=\|f\|_{H_{\seop,\eps}^{\sfs,\alpha_\circ,\hat\alpha-2}(\Omega_\eps)^{\bullet,-}}$, this is in turn bounded by the right hand side of~\eqref{EqEstStd}. The proof is complete.
\end{proof}

\begin{rmk}[Uniformity for small perturbations]
\label{RmkEstStdUnif}
  The constants $\eps_0$ and $C$ can be chosen uniformly, for fixed orders $\sfs,\sfs_0,\alpha_\circ,\hat\alpha$, also for sufficiently small perturbations $\wt L'$ of $\wt L$ satisfying the hypotheses made on $\wt L$, where smallness is measured in the supremum norm (on $\wt\Omega$) of up to $d_0$ derivatives along the se-vector fields $\hat\rho\pa_t$, $\hat\rho\pa_x$ of the coefficients of $\eps^{-1}\hat\rho^2(\wt L-\wt L')$, expressed in terms of a fixed spanning set of se-vector fields (such as $\hat\rho\pa_t$, $\hat\rho\pa_x$), for sufficiently large $d_0$. Indeed, the proofs of all estimates used in the proof of Theorem~\ref{ThmEstStd} apply, \emph{without any modifications}, when the coefficients of $\wt L$ have some large but finite amount of se-regularity. (For the microlocal regularity and propagation results, we already made this explicit when discussing estimates on mixed (se;s)-Sobolev spaces in~\S\ref{SssFVarsse}, \ref{SssEstRads}, \ref{SssEstHors}, and \ref{SssEstTraps}. See also~\S\ref{SsNTame} for tame estimates.)
\end{rmk}

\begin{rmk}[Higher s-regularity]
\label{RmkEstStds}
  We \emph{cannot} prove an analogue of Theorem~\ref{ThmEstStd} on mixed (se;s)-spaces with s-regularity order $k\geq 1$: performing the extension/restriction procedure to standard domains with final hypersurface at the later slow time $t=t_1+\delta$ first requires uniform se-bounds for the solution of $\wt L u=f$ for times $t$ between $t_1$ and $t_1+\delta$, but at this point we do not have such a solvability theory yet. (Establishing this solvability requires the invertibility of the Kerr model $L$---which we have not yet addressed---on matching spaces, as discussed in~\S\ref{SsScS} below.) It is conceivable that this is an artifact of our (se-)microlocal approach. In any case, the theory developed in this paper is only useful if solvability on se-spaces holds, as e.g.\ in the setting of Theorem~\ref{ThmScUnif} below.
\end{rmk}

\begin{rmk}[Other settings]
\label{RmkEstOtherUnif}
  If the model $L$ is a wave-type operator on a nontrapping spacetime without horizons as in Remark~\ref{RmkEstOther}, then Theorem~\ref{ThmEstStd} continues to hold with the following simplifications: there is no threshold regularity condition over $\hat M^\circ$ (cf.\ Definition~\ref{DefEstAdm}\eqref{ItEstAdmHor}), and since we do not need to use trapping estimates anymore, one can relax the norm on $\wt L u$ in~\eqref{EqEstStd} to the $H_{\seop,\eps}^{\sfs-1,\alpha_\circ,\hat\alpha-2}$-norm.
\end{rmk}

\subsection{Estimates for the spectral family}
\label{SsEstFT}

The validity of Fredholm estimates for the spectral family $\hat L(\sigma)$ at bounded nonzero energies as well as high energy estimates only use the dynamical structure of Kerr spacetimes (and information on the subprincipal symbol at horizons, radial sets, and the trapped set) described in Lemma~\ref{LemmaGlDynKerrFlow}. The spectral family of the scalar wave operator on Kerr is discussed in a manner directly applicable to the present paper (including using variable order spaces) in \cite[\S{3.5}]{HintzKdSMS}, where the symbolic estimates which we state below are complemented with mode stability assumptions (which we do not make here) to give invertibility and uniform boundedness statements for the spectral family. A more general setting is described in detail in \cite[\S{4.1}]{HintzNonstat}, except the reference discusses a conjugation of the spectral family by $e^{i\sigma r}$ (see also \cite{VasyLAPLag,VasyLowEnergyLag}) and does not discuss trapping (in the semiclassical, i.e.\ high energy regime) and energy estimates near final hypersurface $\hat r=\bhm$ in the black hole interior; we recall that the relevant semiclassical estimates at trapping (which apply also to tensorial equations) are stated in \cite[Theorem~4.7]{HintzVasyQuasilinearKdS}, and the (energy) estimates near the interior final hypersurface are proved in the same manner as Proposition~\ref{PropEstEn}; likewise in the semiclassical setting, in which the basic energy estimate is proved in \cite[Proposition~3.7]{VasyMicroKerrdS}.

The combination of the various (micro)local estimates as in the proof of Theorem~\ref{ThmEstStd} gives the results stated in the remainder of this section; to avoid repetition, we shall thus be very brief.

\begin{prop}[Fredholm estimate at $0$ energy]
\label{PropEstFT0}
  Let $\sfs\in\CI(\Sb^*_{\hat X_b}\hat M_b)$ be monotonically decreasing along the future null-bicharacteristic flow of $H_{\hat G_b}\bmod\pa_{\hat t}$, with $\sfs>\frac12(1+\vartheta_{\cH^+})$ at $\pa\hat\cR_{\cH^+}$ and constant nearby, and suppose that $\alpha_\cD+\frac32\notin\specb(\hat L(0))$. Then the operator
  \begin{equation}
  \label{EqEstFT0Op}
    \hat L(0) \colon \{ u\in\Hb^{\sfs,\alpha_\cD}(\hat X_b;\hat\cE) \colon \hat L(0)u\in\Hb^{\sfs-1,\alpha_\cD+2}(\hat X_b;\hat\cE) \} \to \Hb^{\sfs-1,\alpha_\cD+2}(\hat X_b;\hat\cE)
  \end{equation}
  is Fredholm. Here, we use the density on $\hat X_b$ whose product with $|\dd\hat t|$ is the metric density $|\dd\hat g_b|$.
\end{prop}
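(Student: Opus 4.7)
The strategy is the now-standard one for b-elliptic Fredholm problems with a generalized radial set at the event horizon, in the spirit of \cite[\S3.5]{HintzKdSMS}: I would establish a symbolic a priori estimate of the form
\[
  \|u\|_{\Hb^{\sfs,\alpha_\cD}} \leq C\bigl( \|\hat L(0)u\|_{\Hb^{\sfs-1,\alpha_\cD+2}} + \|u\|_{\Hb^{-N,\alpha_\cD-\delta}}\bigr)
\]
for some $\delta>0$ and large $N$, together with the analogous estimate on the adjoint $\hat L(0)^*$ (mapping the dual spaces). Since $\hat X_b$ is compact away from $\pa\hat X_b$ and we have strictly lower weight at $\pa\hat X_b$ in the error term, the inclusion $\Hb^{\sfs,\alpha_\cD}\hookrightarrow\Hb^{-N,\alpha_\cD-\delta}$ is compact, and the Fredholm property of the operator~\eqref{EqEstFT0Op} then follows by a standard functional-analytic argument (see e.g.\ \cite[\S2.7]{VasyMinicourse}).

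The a priori estimate is a much simpler version of the proof of Theorem~\ref{ThmEstStd}, obtained by concatenating the following (micro)local estimates along the future $H_{\hat G_b}\bmod\pa_{\hat t}$ flow. First, since $\hat L(0)\in\Diffb^{2,-2}(\hat X_b;\hat\cE)$, b-elliptic parametrix construction at $\pa\hat X_b$ yields the estimate microlocally near fiber infinity over $\pa\hat X_b$, provided the b-normal operator at $\pa\hat X_b$ is invertible on its indicial family at the relevant weight; this invertibility, modulo finite-dimensional kernel/cokernel handled in the Fredholm argument, is precisely the content of $\alpha_\cD+\frac32\notin\specb(\hat L(0))$, the shift $\frac32=\frac{n}{2}$ coming from the relation between the chosen density and unweighted b-densities. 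Second, microlocal elliptic regularity using $\CI_\bop\Psi_\bop$ gives control on the elliptic set of $\hat L(0)$ in the interior. Third, at the generalized radial source $\pa\hat\cR_{\cH^+}$ inside the event horizon, the above-threshold hypothesis $\sfs>\frac12(1+\vartheta_{\cH^+})$ allows one to apply a source-type radial point estimate—this is essentially the statement obtained from Proposition~\ref{PropEstHor} by restricting to the stationary setting on $\hat M_b$ (i.e.\ its direct model-space analogue proved by the same positive commutator argument), and gains no se-/b-derivatives. Fourth, real principal type propagation along the $H_{\hat G_b}\bmod\pa_{\hat t}$-flow in the b-characteristic set, carrying regularity from a punctured neighborhood of $\pa\hat\cR_{\cH^+}$ outwards; the monotonicity hypothesis on $\sfs$ ensures that the variable-order propagation estimates apply. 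Finally, at the interior artificial boundary $\hat r=\bhm$, which is spacelike since $\dd\hat r$ is timelike there, an energy estimate exactly as in Proposition~\ref{PropEstEn} (but applied to the single Kerr spacetime $\hat M_b$ on a slab to the past of $\hat r=\bhm$) allows one to close off the estimate; there is no trapping to worry about, since the trapped set $\hat\Gamma_b^\pm$ lies in $\{\pm\sigma>0\}$ and is therefore absent at zero energy.

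The main obstacle is the analogous estimate for the adjoint $\hat L(0)^*$, which is required to conclude Fredholmness. The operator $\hat L(0)^*$ is again a weighted b-differential operator of the same class, and one obtains an a priori estimate for it in the dual spaces by running the same arguments in reverse: the radial set $\pa\hat\cR_{\cH^+}$ becomes a sink for $\hat L(0)^*$ (since propagation reverses), which produces a below-threshold condition on the dual regularity order; this below-threshold condition matches the dualization of the above-threshold condition on $\sfs$, so is automatic. Similarly, the b-indicial family of $\hat L(0)^*$ is related to that of $\hat L(0)$ by reflection about $-\frac32$, so the hypothesis $\alpha_\cD+\frac32\notin\specb(\hat L(0))$ translates into the matching condition on the dual weight. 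The bookkeeping of thresholds and weights is the main technical point, but is routine given the general framework developed earlier in the section.
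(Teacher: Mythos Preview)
Your proposal is correct and follows essentially the same approach as the paper's proof: b-elliptic analysis near $\pa\hat X_b$ (using the $\specb$ condition), radial point and real principal type estimates over the ergoregion and event horizon in the style of \cite[\S6]{VasyMicroKerrdS}, energy estimates near the artificial spacelike boundary $\hat r=\bhm$ as in Proposition~\ref{PropEstEn}, and the dual estimate for $\hat L(0)^*$; your observation that trapping is absent at zero energy since $\hat\Gamma_b^\pm\subset\{\pm\sigma>0\}$ is exactly why the characteristic set analysis reduces to the simpler Vasy framework here. The paper's proof is a terse sketch pointing to \cite[Lemma~4.1]{HintzNonstat} and \cite[Lemma~3.19]{HintzKdSMS}, and you have correctly unpacked its content.
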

\begin{proof}
  Near $\pa\hat X_b$, $\hat L(0)$ is an elliptic b-differential operator, and hence its analysis is analogous to \cite[Lemma~4.1]{HintzNonstat} (where, unlike in the present setting, the Fredholm analysis is complemented with an injectivity assumption on $\hat L(0)$ and its adjoint). The characteristic set lies over the ergoregion, event horizon, and the black hole interior, where the Fredholm analysis at bounded (or even semiclassical) frequencies is completely analogous to \cite[\S{6}]{VasyMicroKerrdS}. See also \cite[Lemma~3.19]{HintzKdSMS}.
\end{proof}

We shall take for $\sfs$ in Proposition~\ref{PropEstFT0} a Kerr-admissible variable 3b-differential order (with $\alpha_\cD\in\R$ arbitrary but fixed, unless stated otherwise), which in turn arises from an admissible se-order function, as discussed in Remark~\ref{RmkEstAdmInd}. We phrase subsequent estimates in this fashion for brevity. Furthermore, the following condition will play an important role in the sequel:
\begin{equation}
\label{EqEstFTSubpr}
  \parbox{0.8\textwidth}{There exists a stationary positive definite inner product on $\hat\pi^*\hat\cE$ so that $\frac{1}{2 i}(S_{\rm sub}(L)-S_{\rm sub}(L)^*)<\frac12\nu_{\rm min}$ at $\hat\Gamma_b$ in the notation~\eqref{EqEstTrapNuMin}.}
\end{equation}

\begin{prop}[Fredholm estimate at bounded nonzero energies]
\label{PropEstFTbdd}
  Let $\sfs$, $\alpha_\cD$ be Kerr-admissible orders. Then for $\sigma\neq 0$, $\Im\sigma\geq 0$, the operator
  \begin{equation}
  \label{EqEstFTbddOp}
    \hat L(\sigma) \colon \{ u\in\Hsc^{\sfs,\sfs+\alpha_\cD}(\hat X_b;\hat\cE) \colon \hat L(\sigma)u\in\Hsc^{\sfs-1,\sfs+\alpha_\cD+1}(\hat X_b;\hat\cE) \} \to \Hsc^{\sfs-1,\sfs+\alpha_\cD+1}(\hat X_b;\hat\cE)
  \end{equation}
  is Fredholm of index $0$. Here the scattering regularity and decay orders are those induced by $\sfs$ for positive, resp.\ negative real frequencies when $\Re\sigma>0$, resp.\ $\Re\sigma<0$; and when $\Im\sigma>0$, then $\sfs$ can be taken to be arbitrary near $\pa\hat X_b$,\footnote{More precisely, only the threshold condition near $\pa\cR_{\cH^+}$ is needed (so one can take $\sfs$ to be a constant order even, though in case $\sfs$ is variable, the monotonicity along the null-bicharacteristic flow is still necessary).} and $\sfs+\alpha_\cD+1$ can be replaced by $\sfs+\alpha_\cD$. For $\sigma$ lying in a compact subset of $\{\sigma\in\C\colon \sigma\neq 0,\ \Im\sigma\geq 0\}$ of spectral parameters for which $\hat L(\sigma)$ is invertible (for a fixed choice of $\sfs,\alpha_\cD$), the inverse is uniformly bounded.
\end{prop}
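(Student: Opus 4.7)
The plan is to prove the Fredholm estimate by combining (micro)local estimates for $\hat L(\sigma)$ across all phase space regimes, analogously to the proof of Theorem~\ref{ThmEstStd} but in the stationary setting, where the spectral parameter $\sigma$ provides regularization at spatial infinity and where the null-bicharacteristic flow quotients out $\hat t$-translations, matching the description in Lemma~\ref{LemmaGlDynKerrFlow}. The end goal is an estimate of the schematic form
\begin{equation*}
  \|u\|_{\Hsc^{\sfs,\sfs+\alpha_\cD}} \leq C\bigl(\|\hat L(\sigma)u\|_{\Hsc^{\sfs-1,\sfs+\alpha_\cD+1}} + \|u\|_{\Hsc^{\sfs_0,\sfs_0+\alpha_\cD+1}}\bigr)
\end{equation*}
for some $\sfs_0<\sfs$ with $\sfs_0$ still admissible, plus the analogous estimate for the formal adjoint $\hat L(\sigma)^*$. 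Since the inclusion into the target space of the space with the lower orders is compact (for orders strictly below $\sfs,\sfs+\alpha_\cD$ respectively), the pair of such estimates gives the Fredholm property; index~$0$ then follows by continuously deforming $\sigma$ to a value with $\Im\sigma\gg 1$ where energy estimates prove invertibility, combined with the invariance of the Fredholm index under continuous deformation in the operator topology of bounded maps between the relevant spaces.

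First I would verify the estimate region by region. In the interior $\hat X_b^\circ$ away from the characteristic set $\hat\Sigma_b$, the operator $\hat L(\sigma)$ is elliptic as a scattering operator (and the scattering decay order is handled by ellipticity of $\hat L(\sigma)$ at $\pa\hat X_b$ when $\Re\sigma\neq 0$: the sc-principal symbol is $|\zeta|^2-\sigma^2$ on the fiber-radial compactification, which is nonvanishing outside of the radial sets). This gives sc-microlocal elliptic regularity. Near the incoming/outgoing radial points over $\pa\hat X_b$, which correspond to $\hat\cR_{\rm in}^\pm,\hat\cR_{\rm out}^\pm$ in the limit $\hat r\to\infty$ (now in the sc-cotangent bundle), one applies the standard radial point estimates \cite[Propositions~5.27--5.28]{VasyMinicourse} in the stationary setting, with the direction of propagation determined by the sign of $\Re\sigma$; the threshold conditions for the scattering decay order $\sfs+\alpha_\cD$ at these radial points translate precisely to the Kerr-admissibility of $(\sfs,\alpha_\cD)$ via the identifications recalled in Remark~\ref{RmkEstAdmInd}. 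When $\Im\sigma>0$, the scattering principal symbol is elliptic at $\pa\hat X_b$, so there are no radial points and no threshold conditions at infinity, which explains why $\sfs$ may be arbitrary near the boundary and why one may use matching decay weights $\sfs+\alpha_\cD$ on both sides.

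Next I would handle the event horizon and trapping. Near $\pa\hat\cR_{\cH^+}^\pm$, the generalized radial point estimate of Proposition~\ref{PropEstHor} specializes (since here everything is stationary, one simply drops the factors involving $t$-cutoffs) to the threshold condition $\sfs>\frac12(1+\vartheta_{\cH^+})$. At the trapped set $\hat\Gamma_b$, one uses the trapping estimate of Proposition~\ref{PropEstTrap}, again in the stationary version; the subprincipal condition~\eqref{EqEstFTSubpr} ensures that the estimate closes with a loss of one derivative (encoded in the $\sfs-1$ on the right hand side). Finally, inside the black hole, one uses that $\hat r$ is timelike in $\{\bhm\leq\hat r<\hat r_b\}$ to run a standard energy estimate transporting control from the horizon down to the artificial final spacelike boundary at $\hat r=\bhm$.

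The main obstacle will be the trapping contribution, since it forces the loss of one derivative that dictates the shift from $\sfs$ to $\sfs-1$ on the right hand side and is the reason one cannot simply use a parametrix construction. The remaining bookkeeping step is index $0$: I would verify it by showing that for $\Im\sigma$ sufficiently large, a positive commutator/energy argument of the type used in Proposition~\ref{PropEstEn} (now on $\hat M_b$ with the Killing vector field $\pa_{\hat t}$ as multiplier, together with the large imaginary frequency for coercivity) makes $\hat L(\sigma)$ and $\hat L(\sigma)^*$ both injective between the relevant spaces, hence invertible; and by continuously connecting any $\sigma\in\C\setminus\{0\}$ with $\Im\sigma\geq 0$ to such a high-imaginary $\sigma$ within a path along which the Fredholm estimates hold with locally uniform constants, one concludes that the index is constant and equal to $0$. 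The uniform boundedness of $\hat L(\sigma)^{-1}$ on compact subsets of invertibility then follows by absorbing the compact error term $\|u\|_{\Hsc^{\sfs_0,\sfs_0+\alpha_\cD+1}}$ in the Fredholm estimate into the left hand side using invertibility at each point, combined with a standard continuity argument.
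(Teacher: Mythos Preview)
Your overall strategy---assembling elliptic estimates, scattering radial point estimates at $\pa\hat X_b$, the generalized radial point estimate at the horizon, propagation of singularities, an energy estimate in $\hat r<\hat r_b$, plus deformation to large $\Im\sigma$ for index~$0$---is the same as the paper's (which is stated tersely by citing \cite[Proposition~3.18]{HintzKdSMS} and Proposition~\ref{PropEstFThi}).

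There is, however, a conceptual error concerning trapping. At a \emph{fixed} bounded $\sigma\neq 0$, the trapped set $\hat\Gamma_b$ lies at \emph{finite} spatial momentum: it is the slice at energy $\sigma$ of a conic subset of $T^*\hat M_b$, hence a compact subset of the interior of $T^*\hat X_b^\circ$. It therefore does not meet the scattering characteristic set of $\hat L(\sigma)$, which lives at fiber infinity over the ergoregion (where the spatial principal symbol $\hat g_b^{i j}\xi_i\xi_j$ vanishes; cf.\ the proof of Proposition~\ref{PropEstFT0}) and at $\pa\hat X_b$. Trapping is thus \emph{irrelevant} for the Fredholm analysis at bounded nonzero energies---it only enters in the semiclassical (high-$|\sigma|$) regime of Proposition~\ref{PropEstFThi}. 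In particular, Proposition~\ref{PropEstFTbdd} does \emph{not} assume the subprincipal condition~\eqref{EqEstFTSubpr}, and your proof should not invoke it. The shift from $\sfs$ to $\sfs-1$ in~\eqref{EqEstFTbddOp} is simply the standard hyperbolic loss from real principal type propagation through the ergoregion; it is not caused by trapping (and if trapping were present at this level, the domain would require $\hat L(\sigma)u\in\Hsc^{\sfs,\cdot}$ rather than $\Hsc^{\sfs-1,\cdot}$). Likewise, for index~$0$ one deforms to large $\Im\sigma$ where, as the paper notes, trapping is irrelevant (semiclassical complex absorption), so~\eqref{EqEstFTSubpr} is again unnecessary.
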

\begin{proof}
  This is proved similarly to \cite[Proposition~3.18]{HintzKdSMS}. The uniform bounds down to the real axis are a version of the limiting absorption principle, cf.\ \cite[Proposition~5.28]{VasyMinicourse}. The index $0$ property follows from the invertibility energies with large imaginary part (where trapping is irrelevant), proved in Proposition~\ref{PropEstFThi} below.
\end{proof}

\begin{prop}[High energy estimates]
\label{PropEstFThi}
  Let $\sfs$, $\alpha_\cD$ be Kerr-admissible orders, and assume~\eqref{EqEstFTSubpr}. Then for $\sigma\in\C$ with $\Im\sigma\geq 0$ and $|\sigma|$ sufficiently large, the operator~\eqref{EqEstFTbddOp} is invertible. Furthermore, for the semiclassical order, scattering regularity, and scattering decay orders induced by $\sfs$ for positive, resp.\ negative real frequencies, we have for $\pm\Re\sigma>0$ and $0\leq\Im\sigma\leq C_0$ (for any fixed $C_0$) the high energy estimate
  \begin{equation}
  \label{EqEstFThi}
    \|u\|_{H_{\scop,|\sigma|^{-1}}^{\sfs,\sfs+\alpha_\cD,\sfs}(\hat X_b;\hat\cE)} \leq C \| \hat L(\sigma)u \|_{H_{\scop,|\sigma|^{-1}}^{\sfs-1,\sfs+\alpha_\cD+1,\sfs}(\hat X_b;\hat\cE)}
  \end{equation}
  for all sufficiently large $|\sigma|$. For $|\Im\sigma|>c|\Re\sigma|$ for some $c>0$, then this estimate holds, for sufficiently large $|\sigma|$, for orders $\sfs$ which are arbitrary near $\pa\hat X_b$, and the second and third order $\sfs+\alpha_\cD+1$ and $\sfs$ on the right can be replaced by $\sfs+\alpha_\cD$ and $\sfs-1$, respectively.
\end{prop}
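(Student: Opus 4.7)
The plan is to carry out a semiclassical high-energy analysis of $\hat L(h^{-1}z)$, with $h=|\sigma|^{-1}\ll 1$ and $z=\sigma/|\sigma|$, $|z|=1$, in the semiclassical scattering calculus $\Psi_{\scop,\semi}(\hat X_b;\hat\cE)$. By~\eqref{EqEstLRegimes} the rescaled operator $h^2\hat L(h^{-1}z)\in\Diff_{\scop,\semi}^{2,2,0}$ is principally scalar; its semiclassical and scattering principal symbols encode the dual metric function $\hat G_b$ on (semiclassical) scattering phase space together with the spectral parameter, so that the semiclassical characteristic set exhibits precisely the dynamical features described in~\S\ref{SssGlDynKerr}: scattering radial sets $\pa\hat\cR_{\rm in}^\pm$, $\pa\hat\cR_{\rm out}^\pm$ over $\pa\hat X_b$, the generalized radial set $\pa\hat\cR_{\cH^+}^\pm$ over the event horizon, and normally hyperbolic trapping at $\pa\hat\Gamma_b^\pm$, with $\pm\Re\sigma>0$ corresponding to the two components $\hat\Sigma_b^\pm$.

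For the first regime $\pm\Re\sigma>0$, $\Im\sigma\in[0,C_0]$, I would assemble~\eqref{EqEstFThi} from a standard package of uniform-in-$h$ microlocal building blocks, concatenated in the order prescribed by Lemma~\ref{LemmaGlDynKerrFlow}---from the source $\pa\hat\cR_{\rm in}^\pm$ and the generalized radial set at the event horizon, through (a punctured neighborhood of) the trapped set, then out across $\pa\hat\cR_{\rm out}^\pm$---exactly analogously to the proof of Theorem~\ref{ThmEstStd}. The building blocks are: semiclassical elliptic regularity in $\Psi_{\scop,\semi}$ and classical elliptic regularity at fiber infinity; semiclassical real principal type propagation along the $H_{\hat G_b}$-flow in $\hat\Sigma_b^\pm$; semiclassical radial point estimates at $\pa\hat\cR_{\rm in}^\pm$ and $\pa\hat\cR_{\rm out}^\pm$ as in \cite{VasyMinicourse,VasyZworskiScl}, the threshold on $\sfs+\alpha_\cD$ being supplied by Kerr-admissibility; a semiclassical radial point estimate at $\pa\hat\cR_{\cH^+}^\pm$ in the spirit of \cite[\S{2.4}]{VasyMicroKerrdS}, whose threshold is Definition~\ref{DefEstAdm}\eqref{ItEstAdmHor}; the semiclassical normally hyperbolic trapping estimate of \cite{DyatlovSpectralGaps}, in the bundle-valued form of \cite[Theorem~4.7]{HintzVasyQuasilinearKdS}, where hypothesis~\eqref{EqEstFTSubpr} is precisely what bounds $\tfrac{1}{2i}(S_{\rm sub}(L)-S_{\rm sub}(L)^*)$ above by $\tfrac12\nu_{\rm min}$ at $\hat\Gamma_b^\pm$ and thus delivers a trapping estimate with at most a logarithmic loss in $h$; and finally a semiclassical energy estimate, as in \cite[Proposition~3.7]{VasyMicroKerrdS}, accounting for the final spacelike hypersurface $\hat r=\bhm$ inside the black hole. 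Together these yield~\eqref{EqEstFThi}. Since Proposition~\ref{PropEstFTbdd} provides Fredholm index $0$, the estimate upgrades to invertibility for all sufficiently large $|\sigma|$.

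In the regime $|\Im\sigma|>c|\Re\sigma|$ with $\Im\sigma>0$, the semiclassical principal symbol of $h^2\hat L(h^{-1}z)$ acquires an imaginary part of definite sign on a full neighborhood of the classical characteristic set, which acts as a complex absorbing term and renders the operator semiclassically elliptic everywhere the spectral parameter dominates. Consequently, neither the threshold conditions on $\sfs$ at the radial sets nor the trapping estimate need to be invoked: the order $\sfs$ may be taken arbitrary near $\pa\hat X_b$, and a single positive commutator argument exploiting $\Im\sigma>0$ (together with elliptic regularity and the interior energy estimate) produces the improved orders $\sfs+\alpha_\cD$ and $\sfs-1$ on the right of~\eqref{EqEstFThi}.

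The hard part will be the tensorial semiclassical normally hyperbolic trapping estimate in the first regime, as this is the only step in which the purely symbolic geometric information from~\S\ref{SssGlDynKerr} must be coupled quantitatively with the algebraic subprincipal hypothesis~\eqref{EqEstFTSubpr} via Dyatlov's two-step commutator scheme; all remaining components are well-established in the semiclassical scattering literature and require only routine adaptation to the variable-order and tensorial setting used here.
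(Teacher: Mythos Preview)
Your proposal is correct and follows essentially the same route as the paper's (very brief) proof: semiclassical scattering analysis with $h=|\sigma|^{-1}$, combining elliptic regularity, real principal type propagation, semiclassical radial point estimates at $\pa\hat\cR_{\rm in/out}^\pm$ and $\pa\hat\cR_{\cH^+}^\pm$, the tensorial normally hyperbolic trapping estimate of \cite[Theorem~4.7]{HintzVasyQuasilinearKdS} (which uses~\eqref{EqEstFTSubpr}), and the semiclassical energy estimate near $\hat r=\bhm$; and for $|\Im\sigma|>c|\Re\sigma|$, complex absorption from the timelike nature of $\dd\hat t$ bypasses the threshold and trapping constraints.

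One logical point to fix: you deduce invertibility by citing the index~$0$ statement of Proposition~\ref{PropEstFTbdd}, but that statement is itself proved \emph{using} Proposition~\ref{PropEstFThi} (specifically, invertibility for large $\Im\sigma$). The non-circular argument is to first establish invertibility directly in the regime $\Im\sigma\gg 1$, where the complex absorption renders the problem essentially elliptic and one has both the direct and the adjoint estimate, hence invertibility; then the Fredholm index is constant under the continuous deformation down to the strip $0\le\Im\sigma\le C_0$, and the estimate~\eqref{EqEstFThi} (which gives injectivity) upgrades index~$0$ to invertibility there.
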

\begin{proof}
  This is analogous to \cite[Proposition~3.17]{HintzKdSMS}. Note that $\hat L(\sigma)$, as a semiclassical operator with semiclassical parameter $h=|\sigma|^{-1}$, has order $h^{-2}$; so the estimate~\eqref{EqEstFThi} features a loss of two semiclassical orders (cf.\ the third index $\sfs$ on the right), which is due to trapping \cite{DyatlovSpectralGaps}, \cite[Theorem~4.7]{HintzVasyQuasilinearKdS}. The large $\Im\sigma$ behavior arises from semiclassical scattering ellipticity at spatial infinity and the fact that for large $\Im\sigma$, the timelike nature of $\dd\hat t$ implies that $\hat L(\sigma)$, as a semiclassical operator, features complex absorption (i.e.\ damping) at the semiclassical principal symbol (but classically subprincipal symbol) level, globally on $\hat X_b$. (See also the discussion of \cite[Theorem~7.3]{VasyMicroKerrdS}, and \cite[\S{3.9}]{HintzKdSMS}.)
\end{proof}

To make further progress, we now need to complement the dynamical structure of glued spacetimes with analytic information.

\begin{definition}[Invertibility of transition face normal operators]
\label{DefEstInvtf}
  (Cf.\ \cite[Definition~3.12]{HintzConicWave}.) Recall the quantities $\vartheta_{\rm in}$ and $\vartheta_{\rm out}$ from~\eqref{EqEstRadInTheta} and \eqref{EqEstRadOutTheta}, respectively, and recall the sc,b-operator $L_\tface(\hat\sigma)$ on $[0,\infty]_{\rho'}\times\Sph_\omega^2$ from~\eqref{EqLtf}; write $r'=\rho'{}^{-1}$. Fix a positive definite fiber inner product on $\hat\cE|_{\pa\hat X_b}$. We then say that $L_\tface$ is \emph{invertible at weight $\alpha_\cD\in\R$} if the following conditions hold for all $\hat\sigma=e^{i\theta}$, $\theta\in[0,\pi]$.
  \begin{enumerate}
  \item $\alpha_\cD+\frac32\notin\specb\hat L(0)$, or equivalently $-(\alpha_\cD+\frac32)\notin\specb(L_\tface)$ (cf.\ \eqref{EqFseIdentSpecb}).
  \item{\rm (Injectivity.)} Suppose $L_\tface(\hat\sigma)u=0$ where $|(r'\pa_{r'})^j\pa_\omega^\alpha u|\lesssim r'{}^{-\alpha_\cD-\frac32}$ for $r'\in(0,1]$, and $|(r'\pa_{r'})^j\pa_\omega^\alpha(e^{-i\hat\sigma r'}u)|\lesssim r'^C$ (for $\hat\sigma=\pm 1$), resp.\ $|\pa_{r'}^j\pa_\omega^\alpha u|\lesssim r'{}^{-N}$ (for $\Im\hat\sigma>0$) for $r'\in[1,\infty)$ and for all $j\in\N_0$, $\alpha\in\N_0^2$, $N\in\R$, and for some constant $C$. Then $u=0$.
  \item{\rm (Injectivity of the adjoint.)} Suppose $L_\tface(\hat\sigma)^*u=0$, $|(r'\pa_{r'})^j\pa_\omega^\alpha u|\lesssim r'{}^{\frac12+\alpha_\cD}$ for $r'\in(0,1]$, and $|(r'\pa_{r'})^j\pa_\omega^\alpha(e^{i\hat\sigma r'}u)|\lesssim r'{}^C$ (for $\hat\sigma=\pm 1$), resp.\ $|\pa_{r'}^j\pa_\omega^\alpha u|\lesssim r'{}^{-N}$ (for $\Im\hat\sigma>0$) for $r'\in[1,\infty)$ and for all $j,\alpha,N$ for some constant $C$. Then $u=0$.
  \end{enumerate}
\end{definition}

Via Lemma~\ref{LemmaFseIdent}, this definition is equivalent to the spectral admissibility condition for the reduced edge normal operator family of $L_\circ$ (see~\eqref{EqEstLepsLcirc}) stated in \cite[Definition~3.12]{HintzConicWave}. The following result is then the same as \cite[Lemma~3.13]{HintzConicWave}.

\begin{lemma}[Quantitative invertibility of transition face normal operators]
\label{LemmaEstMcInvft}
  Let $\sfs$, $\alpha_\cD\in\R$ be Kerr-admissible orders. Suppose $L_\tface$ is invertible at weight $\alpha_\cD$. Write $\sfs^\pm_\infty\in\CI({}^{\scop,\bop}S^*\tface)$ and $\sfs^\pm_\scop\in\CI(\ol{{}^{\scop,\bop}T^*_{\rho'{}^{-1}(0)}}\tface)$ for the regularity and scattering decay orders induced by $\sfs$ for the low energy spectral family at positive ($+$), resp.\ negative ($-$) real frequencies (cf.\ Remark~\usref{RmkEstAdmInd}). Fix on $\tface$ the density $r'{}^2|\dd r'\,\dd g_{\Sph^2}|$. Then there exists a constant $C$ so that for all $\hat\sigma=e^{i\theta}$ where $\theta\in[0,\frac{\pi}{4}]$ (for the `$+$' sign), resp.\ $\theta\in[\frac{3\pi}{4},\pi]$ (for the `$-$' sign), the estimate
  \begin{equation}
  \label{EqEstMcInvft}
    \|u\|_{H_{\scop,\bop}^{\sfs_\infty^\pm,\sfs_\scop^\pm+\alpha_\cD,-\alpha_\cD}(\tface;\pi_\tface^*\hat\cE|_{\pa\hat X_b})} \leq C\|L_\tface(\hat\sigma)u\|_{H_{\scop,\bop}^{\sfs_\infty^\pm-2,\sfs_\scop^\pm+\alpha_\cD+1,-\alpha_\cD-2}(\tface;\pi_\tface^*\hat\cE|_{\pa\hat X_b})}
  \end{equation}
  holds for all $u$ for which both sides are finite; moreover, for all $f\in H_{\scop,\bop}^{\sfs_\infty^\pm-2,\sfs_\scop^\pm+\alpha_\cD+1,-\alpha_\cD-2}$ there exists a (unique) $u\in H_{\scop,\bop}^{\sfs_\infty^\pm,\sfs_\scop^\pm+\alpha_\cD,-\alpha_\cD}$ with $L_\tface(\hat\sigma)u=f$. An analogous statement holds for $\theta\in[\frac{\pi}{4},\frac{3\pi}{4}]$, where now the order $\sfs$ is allowed to be arbitrary near $\pa\hat X_b$ and one can replace $\sfs_\scop^\pm+\alpha_\cD+1$ on the right by $\sfs_\scop^\pm+\alpha_\cD$.
\end{lemma}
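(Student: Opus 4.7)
My plan is to reduce the lemma to the edge counterpart proved in \cite[Lemma~3.13]{HintzConicWave} via the identification of transition face and reduced edge normal operators established in Lemma~\ref{LemmaFseIdent}. Concretely, the pullback of the reduced edge normal operator family $\hat N_{\eop,t_0}(L_\circ,\hat\sigma)$ (acting on $[0,\infty]_{r'}\times\Sph^2$) along the inversion $r'=\rho'{}^{-1}$ equals $\hat N_{\tface,t_0}(L,\hat\sigma)=L_\tface(\hat\sigma)$, and by the independence of the normal operator $L$ at $\hat M_{t_0}$ on $t_0$, this identification is $t_0$-independent. Moreover, this inversion identifies the b- and sc-ends of the two compactifications, so the sc,b-Sobolev spaces match up with the b,sc-Sobolev spaces used in the edge setting, with exchanged orders. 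The invertibility conditions in Definition~\ref{DefEstInvtf} are, by construction, equivalent to the spectral admissibility condition of \cite[Definition~3.12]{HintzConicWave} at weight $\alpha_\cD$ (using $\specb(L_\tface)=-\specb(\hat L(0))$ from~\eqref{EqFseIdentSpecb}).

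Next I would verify that~\eqref{EqEstMcInvft} follows from Fredholm theory plus the qualitative injectivity statements in Definition~\ref{DefEstInvtf}. For Fredholm theory: near $\rho'=\infty$ (the b-end, i.e.\ $r'=0$), $L_\tface(\hat\sigma)$ is an elliptic b-operator whose normal operator is independent of $\hat\sigma$; the weight hypothesis $-(\alpha_\cD+\tfrac32)\notin\specb(L_\tface)$ is exactly what is needed for Fredholm b-estimates with weight $-\alpha_\cD$. Near $\rho'=0$ (the sc-end), the structure depends on $\hat\sigma$: for $\hat\sigma=\pm 1$ the principal symbol has a source/sink pair at the scattering radial sets, and the Kerr-admissibility of $(\sfs,\alpha_\cD)$ supplies threshold conditions that yield scattering radial point estimates in the sense of \cite{MelroseEuclideanSpectralTheory,VasyMinicourse}, closing the Fredholm estimate after real principal type propagation between the radial sets; for $\Im\hat\sigma>0$ the operator is semiclassically (in $|{-}i\log\hat\sigma|$) scattering-elliptic due to complex absorption, which allows one to take arbitrary orders near $\pa\hat X_b$ and gives the improved weight $\sfs_\scop^\pm+\alpha_\cD$ on the right. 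Combining these microlocal estimates as in the proof of Theorem~\ref{ThmEstStd} gives the Fredholm property of $L_\tface(\hat\sigma)$ between the spaces in~\eqref{EqEstMcInvft}, with the Fredholm index independent of $\hat\sigma$ on each arc by continuity; the injectivity and adjoint-injectivity hypotheses in Definition~\ref{DefEstInvtf} then upgrade this to invertibility with the quantitative bound~\eqref{EqEstMcInvft}.

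For uniformity on the closed arcs $\theta\in[0,\tfrac\pi 4]$, $[\tfrac{3\pi}{4},\pi]$, and $[\tfrac\pi 4,\tfrac{3\pi}{4}]$, I would use the compactness of these arcs together with the continuous dependence of $L_\tface(\hat\sigma)$ on $\hat\sigma$ (as a smooth family of sc,b-operators with parametric $\hat\sigma$): a standard argument upgrades pointwise invertibility to a uniform bound provided the microlocal constants in the Fredholm estimates depend continuously on $\hat\sigma$, which they do. At the transition arc endpoints $\theta=\tfrac\pi 4,\tfrac{3\pi}{4}$, the estimates in the two adjacent regimes are mutually compatible since the stronger (real-$\hat\sigma$) threshold and weight conditions imply the weaker ones used in the complex regime.

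The main obstacle I anticipate is bookkeeping the correspondence between the orders $(\sfs_\infty^\pm,\sfs_\scop^\pm,-\alpha_\cD)$ that appear in the sc,b-norms here and the orders in the b,sc-norms of \cite[Lemma~3.13]{HintzConicWave} after inversion, together with the adjoint pairing used in Definition~\ref{DefEstInvtf}: one has to check that the dual decay weight $\tfrac12+\alpha_\cD$ for the adjoint and the dual b-weight $\alpha_\cD+\tfrac32$ in the specb condition line up consistently with the choice of density $r'{}^2|\dd r'\,\dd g_{\Sph^2}|$ (which, as a b-density, is unweighted at $r'=0$ and weighted by $r'{}^3$ relative to the scattering density at $r'=\infty$). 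Once this bookkeeping is in place, the entire proof is a direct transcription of \cite[Lemma~3.13]{HintzConicWave}, so I would just indicate the identifications and invoke that result.
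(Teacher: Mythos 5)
Your proposal matches the paper's approach exactly: the paper observes that Definition~\ref{DefEstInvtf} is, via Lemma~\ref{LemmaFseIdent}, equivalent to the spectral admissibility condition of \cite[Definition~3.12]{HintzConicWave}, and then declares the lemma to be the same as \cite[Lemma~3.13]{HintzConicWave}. Your extra discussion of the Fredholm/threshold/compactness mechanics underlying that cited result is accurate but is not spelled out in the paper, which simply invokes the reference.
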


We also recall that the set of $\alpha_\cD$ so that $L_\tface$ is invertible at weight $\alpha_\cD$ is open: the boundary spectrum is a discrete set, and the Fredholm estimates underlying the proof of Lemma~\ref{LemmaEstMcInvft} hold for an open set of weights; the invertibility for one weight then implies that for nearby weights by standard perturbation theory argument as in \cite[\S{2.7}]{VasyMicroKerrdS}.

\begin{prop}[Conditional estimate near low energies]
\label{PropEstFTlo}
  Let $\sfs$, $\alpha_\cD$ be Kerr-admissible orders so that also $\sfs-1,\alpha_\cD$ are Kerr-admissible. Suppose that the zero energy operator~\eqref{EqEstFT0Op} is invertible and that also $L_\tface$ is invertible at weight $\alpha_\cD$. Then there exists a constant $C$ so that for all $\hat\sigma=e^{i\theta}$ where $\theta\in[0,\frac{\pi}{4}]$, resp.\ $\theta\in[\frac{3\pi}{4},\pi]$ the uniform low energy estimate
  \[
    \|u\|_{H_{\scbtop,|\sigma|}^{\sfs,\sfs+\alpha_\cD,\alpha_\cD,0}(\hat X_b;\hat\cE)} \leq C\|\hat L(\hat\sigma|\sigma|)u\|_{H_{\scbtop,|\sigma|}^{\sfs-1,\sfs+\alpha_\cD+1,\alpha_\cD+2,0}(\hat X_b;\hat\cE)}, \qquad |\sigma|\leq 1,
  \]
  holds where the orders are induced by $\sfs$ for positive, resp.\ negative real frequencies. For $\theta\in[\frac{\pi}{4},\frac{3\pi}{4}]$, this holds for orders $\sfs$ which are arbitrary near $\pa\hat X_b$, and one can moreover replace $\sfs+\alpha_\cD+1$ on the right by $\sfs+\alpha_\cD$.
\end{prop}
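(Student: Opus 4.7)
The plan is to combine symbolic microlocal estimates for $\hat L(\hat\sigma|\sigma|)$ in the sc-b-transition calculus on $\cX_\scbtop$ with the invertibility of the two normal operators at the two boundary faces: the zero face operator $\hat L(0)$ (invertible by hypothesis, via Proposition~\ref{PropEstFT0}) and the transition face operator $L_\tface(\hat\sigma)$ (uniformly invertible on weighted sc,b-spaces by Lemma~\ref{LemmaEstMcInvft}, applicable since $L_\tface$ is invertible at weight $\alpha_\cD$). This follows the standard low-energy resolvent strategy of \cite[\S{3}]{HintzKdSMS} and \cite[\S{3.3}]{HintzConicWave}.

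First I assemble the symbolic estimate. Using Propositions~\ref{PropEstFTbdd} and~\ref{PropEstFThi} for the behavior at fiber infinity, radial point estimates at $\pa\hat\cR_{\rm in/out}$ and $\pa\hat\cR_{\cH^+}$ (applicable precisely because both $\sfs$ and $\sfs-1$ are Kerr-admissible, so the threshold conditions remain valid after the one-order loss from the propagation), the trapping estimate at $\pa\hat\Gamma$, and real principal type propagation through the characteristic set whose global structure is described by Lemma~\ref{LemmaGlDynKerrFlow}, one obtains the uniform (in $|\sigma|\in(0,1]$) estimate
\[
  \|u\|_{H_{\scbtop,|\sigma|}^{\sfs,\sfs+\alpha_\cD,\alpha_\cD,0}} \leq C\Bigl(\|\hat L(\hat\sigma|\sigma|)u\|_{H_{\scbtop,|\sigma|}^{\sfs-1,\sfs+\alpha_\cD+1,\alpha_\cD+2,0}} + \|u\|_{H_{\scbtop,|\sigma|}^{-N,-N,\alpha_\cD,0}}\Bigr)
\]
for some large fixed $N$. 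The error term has sharp weights at $\tface$ and $\zface$ but very weak regularity.

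The error is then improved using the two normal operators. I decompose $u=\chi_\zface u+\chi_\tface u+\chi_0 u$ using cutoffs localizing near $\zface$, near $\tface$, and in the region where $|\sigma|/(\rho+|\sigma|)$ is bounded away from both $0$ and $1$. For $\chi_\zface u$, the norm equivalence~\eqref{EqF3scbtNormzf} identifies the sc-b-transition norm with a weighted b-norm on $\hat X_b$ up to an explicit power of $|\sigma|$; writing $\hat L(\hat\sigma|\sigma|)(\chi_\zface u)=\chi_\zface\hat L u+[\hat L,\chi_\zface]u$ and treating $|\sigma|^{-2}\hat L(\hat\sigma|\sigma|)-\hat L(0)\in|\sigma|\Diffb^1$ as a small b-perturbation of $\hat L(0)$ for small $|\sigma|$, the invertibility of $\hat L(0)$ on $\Hb^{\sfs,\alpha_\cD}\to\Hb^{\sfs-1,\alpha_\cD+2}$ yields control of $\chi_\zface u$ by $\chi_\zface\hat L u$ plus an error supported away from $\zface$. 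Symmetrically, for $\chi_\tface u$ the second equivalence in~\eqref{EqF3scbtNormzf} rescales to $\tface$ and Lemma~\ref{LemmaEstMcInvft} provides the desired uniform (in $\hat\sigma$ in the relevant arc) control by $\chi_\tface\hat L u$ plus an error supported away from $\tface$. Finally, $\chi_0 u$ is localized to a region where $|\sigma|$ stays in a compact subinterval of $(0,1]$, where Proposition~\ref{PropEstFTbdd} applies directly. Combining these with the symbolic estimate, the weak error is bounded by $C'|\sigma|^\delta\|u\|_{H_{\scbtop,|\sigma|}^{\sfs,\sfs+\alpha_\cD,\alpha_\cD,0}}$ for some $\delta>0$, which for $|\sigma|\leq\sigma_0$ sufficiently small can be absorbed on the left; on the complementary range $|\sigma|\in[\sigma_0,1]$, the claim is already covered by Proposition~\ref{PropEstFTbdd}.

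The main obstacle is the clean passage through~\eqref{EqF3scbtNormzf} between sc-b-transition norms and the b- (resp.\ sc,b-) norms used by the two normal operator inversions: one must track both the power of $|\sigma|$ and the overall weight shifts with enough care that the commutator $[\hat L(\hat\sigma|\sigma|),\chi_\zface]$—supported in a collar around the corner $\zface\cap\tface$—enters as an admissible error (with the correct $\tface$-weight) rather than an unbounded one. A secondary issue is the case $\theta\in[\pi/4,3\pi/4]$ of uniform damping from $\Im\hat\sigma>0$: there radial point thresholds at $\pa\hat X_b$ become irrelevant and the gain in the scattering weight at $\pa\hat X_b$ drops from $+1$ to $+0$; the above argument simplifies accordingly, exploiting the $\pa\hat X_b$-ellipticity provided by Proposition~\ref{PropEstFTbdd} in this regime.
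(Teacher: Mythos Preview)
Your strategy is right in spirit—symbolic sc-b-transition estimate, then invert the two boundary normal operators $\hat L(0)$ and $L_\tface(\hat\sigma)$—and this is exactly what the paper does. But the way you assemble the pieces has a genuine gap.

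First a minor correction: the error term in the symbolic estimate cannot be taken at regularity order $-N$. The radial point estimate at $\pa\hat\cR_{\cH^+}$ (and at $\pa\hat\cR_{\rm in}$) imposes a \emph{lower} bound on the regularity; the correct error is $\|u\|_{H_{\scbtop,|\sigma|}^{\sfs_0,\sfs_0+\alpha_\cD,\alpha_\cD,0}}$ for a Kerr-admissible $\sfs_0<\sfs-1$, as the paper writes. (Also, trapping is irrelevant here: for small $|\sigma|$ the operator is elliptic at the photon sphere, so only the radial points over $\scface$ and over the horizon enter the symbolic step.)

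The real issue is your parallel decomposition $u=\chi_\zface u+\chi_\tface u+\chi_0 u$. After applying $\hat L(0)^{-1}$ to $\chi_\zface u$, the commutator $[\hat L,\chi_\zface]u$ is supported where $\rho_\zface\sim 1$ but $\rho_\tface\sim|\sigma|$, i.e.\ it lives near $\tface$; symmetrically the commutator from the $\chi_\tface$ piece lives near $\zface$. Neither of these errors carries an extra power of $|\sigma|$, so your claim that the combined error is bounded by $C'|\sigma|^\delta\|u\|$ is unsupported—each commutator has improved weight at only \emph{one} of the two faces, and you still owe an argument controlling it at the other. You flag this as ``the main obstacle'' but do not resolve it.

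The paper avoids this by working \emph{sequentially} rather than in parallel: first apply the $L_\tface(\hat\sigma)$-estimate to the \emph{entire} error term (localized near $\tface$), which improves the $\tface$-weight to $\alpha_\cD-1$; then apply the $\hat L(0)$-estimate (localized near $\zface$) to the resulting error, improving the $\zface$-weight as well. Only after both steps does one have an error of the form $\|u\|_{H^{\sfs_0,\ldots,\alpha_\cD-\eta,-1}}$, and since $\rho_\tface^\eta\rho_\zface\leq|\sigma|^\eta$ this is $\leq C|\sigma|^\eta\|u\|_{H^{\sfs,\ldots,\alpha_\cD,0}}$ and can be absorbed.
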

\begin{proof}
  The proof is completely analogous to that of \cite[Proposition~3.21]{HintzKdSMS} (although we switch the order in which the transition face and zero energy operator estimates are used). We recall the argument briefly since we will use it in a slightly modified context in \cite{HintzGlueLocIII}. One first establishes (using radial point estimates at the radial sets over the scattering face of $(\hat X_b)_\scbtop$, as well as at $\pa\hat\cR_{\cH^+}$) the estimate
  \begin{equation}
  \label{EqEstFTloEst}
    \|u\|_{H_{\scbtop,|\sigma|}^{\sfs,\sfs+\alpha_\cD,\alpha_\cD,0}} \leq C\Bigl( \|\hat L(\hat\sigma|\sigma|)u\|_{H_{\scbtop,|\sigma|}^{\sfs-1,\sfs+\alpha_\cD+1,\alpha_\cD+2,0}} + \|u\|_{H_{\scbtop,|\sigma|}^{\sfs_0,\sfs_0+\alpha_\cD,\alpha_\cD,0}}\Bigr)
  \end{equation}
  where we fix an order $\sfs_0$ which satisfies $\sfs_0<\sfs-1$ and which has the property that $\sfs_0,\alpha_\cD$ are Kerr-admissible.

  In the second step, one localizes the error term near the transition face using a cutoff $\chi$ (equal to $1$ in a collar neighborhood of $\tface$, and supported in a slightly larger collar neighborhood). Using~\eqref{EqF3scbtNormzf} (with $w=3$) and Lemma~\ref{LemmaEstMcInvft}, we then estimate
  \begin{align}
    \|u\|_{H_{\scbtop,|\sigma|}^{\sfs_0,\sfs_0+\alpha_\cD,\alpha_\cD,0}} &\leq C\Bigl( |\sigma|^{-\alpha_\cD-\frac32}\|\Psi_{|\sigma|}^*(\chi u)\|_{H_{\scop,\bop}^{\sfs_0,\sfs_0+\alpha_\cD,-\alpha_\cD} } + \|(1-\chi)u\|_{H_{\scbtop,|\sigma|}^{\sfs_0,\sfs_0+\alpha_\cD,\alpha_\cD,0}}\Bigr) \nonumber\\
      &\leq C\Bigl( |\sigma|^{-\alpha_\cD-\frac32}\|L_\tface(\hat\sigma)\Psi_{|\sigma|}^*(\chi u)\|_{H_{\scop,\bop}^{\sfs_0-2,\sfs_0+\alpha_\cD+1,-\alpha_\cD-2}} + \|u\|_{H_{\scbtop,|\sigma|}^{\sfs_0,\sfs_0+\alpha_\cD,-N,0}}\Bigr) \nonumber\\
  \label{EqEstFTloEst2}
      &\leq C\Bigl( \|\hat L(\hat\sigma|\sigma|)u\|_{H_{\scbtop,|\sigma|}^{\sfs_0-2,\sfs_0+\alpha_\cD+1,\alpha_\cD+2,0}} + \|u\|_{H_{\scbtop,|\sigma|}^{\sfs_0,\sfs_0+\alpha_\cD+1,\alpha_\cD-1,0}}\Bigr),
  \end{align}
  where in the passage to the third line we replaced $L_\tface(\hat\sigma)$ by $\sigma^{-2}\hat L(\hat\sigma|\sigma|)$ and absorbed the error term (including the one arising from commuting through $\chi$) into the final, error term.

  In the final step, we localize to a collar neighborhood of the zero energy face using a cutoff which we again denote $\chi$. Let $0<\eta\leq 1$ be such that the zero energy operator invertibility in Proposition~\ref{PropEstFT0} holds also for the orders $\sfs,\alpha_\cD-\eta$. We can then estimate the final, error, term of~\eqref{EqEstFTloEst2} further by
  \begin{align*}
    \|u\|_{H_{\scbtop,|\sigma|}^{\sfs_0,\sfs_0+\alpha_\cD+1,\alpha_\cD-1,0}} &\leq C\Bigl(\|\chi u\|_{\Hb^{\sfs_0,\alpha_\cD-\eta}} + \|(1-\chi)u\|_{H_{\scbtop,|\sigma|}^{\sfs_0,\sfs_0+\alpha_\cD+1,\alpha_\cD-1,-N}}\Bigr) \\
      &\leq C\Bigl( \|\hat L(0)(\chi u)\|_{\Hb^{\sfs_0-1,\alpha_\cD-\eta+2}} + \|u\|_{H_{\scbtop,|\sigma|}^{\sfs_0,\sfs_0+\alpha_\cD+1,\alpha_\cD-1,-N}}\Bigr) \\
      &\leq C\Bigl( \|\hat L(\hat\sigma|\sigma|)u \|_{H_{\scbtop,|\sigma|}^{\sfs_0-1,-N,\alpha_\cD-\eta+2,0}} + \|u\|_{H_{\scbtop,|\sigma|}^{\sfs_0,-N,\alpha_\cD-\eta,-1}} \\
      &\quad \hspace{14em} + \|u\|_{H_{\scbtop,|\sigma|}^{\sfs_0,\sfs_0+\alpha_\cD+1,\alpha_\cD-1,-N}}\Bigr),
  \end{align*}
  where in the passage to the third line we replaced $\hat L(0)\chi$ by the operator $\hat L(\sigma)\chi$, $\sigma=\hat\sigma|\sigma|$, which differs from it by an element of $\Diff_\scbtop^{1,-N,-2,-1}$ for all $N$, and commuted this through the cutoff $\chi$.

  Altogether, we have now improved the error term in~\eqref{EqEstFTloEst} to
  \[
    \|u\|_{H_{\scbtop,|\sigma|}^{\sfs_0,\sfs_0+\alpha_\cD+1,\alpha_\cD-\eta,-1}} \leq C\eps^\eta\|u\|_{H_{\scbtop,|\sigma|}^{\sfs,\sfs+\alpha_\cD,\alpha_\cD,0}}
  \]
  where we used that $\sfs>\sfs_0+1$. For small $\eps>0$, this can thus be absorbed into the left hand side of~\eqref{EqEstFTloEst}. This finishes the proof.
\end{proof}

\subsection{Invertibility of the model operator on \texorpdfstring{$M_\circ$}{the lift of the original spacetime} on edge Sobolev spaces}
\label{SsEstMc}

With $\Omega=\Omega_{t_0,t_1,r_0}\subset M$ as before, we now consider on $\Omega_\circ:=\upbeta_\circ^{-1}\Omega$ the operator $L_\circ$ defined in~\eqref{EqEstLepsLcirc}.

\begin{thm}[Solvability and uniqueness on non-refocusing domains]
\label{ThmEstMc}
   Suppose $L_\tface$ is invertible at weight $\alpha_\cD\in\R$ (Definition~\usref{DefEstInvtf}), and set $\ell=-\alpha_\cD$. Let $\sfs\in\CI(\Se^*_\Omega M_\circ)$ be monotonically decreasing along the null-bicharacteristic flow of $|x|^2 H_G$ (see~\S\usref{SssGlDynMc}) and constant near $\pa\cR_{\rm in}$ and $\pa\cR_{\rm out}$. Suppose that
  \begin{equation}
  \label{EqEstMcThres}
    \sfs-\ell>\frac12(-1+\vartheta_{\rm in})\ \ \text{at}\ \ \pa\cR_{\rm in},\qquad
    \sfs-\ell<\frac12(-1-\vartheta_{\rm out})\ \ \text{at}\ \ \pa\cR_{\rm out}.
  \end{equation}
  Then there exists a constant $C$ so that the following holds: for all $f\in\He^{\sfs-1,\ell-2}(\Omega_\circ;E_\circ)^{\bullet,-}$, there exists a unique solution $u\in\He^{\sfs,\ell}(\Omega_\circ;E_\circ)^{\bullet,-}$ of the equation $L_\circ u=f$, and it satisfies the estimate
  \[
    \|u\|_{\He^{\sfs,\ell}(\Omega_\circ;E_\circ)^{\bullet,-}} \leq C\|f\|_{\He^{\sfs-1,\ell-2}(\Omega_\circ;E_\circ)^{\bullet,-}}.
  \]
\end{thm}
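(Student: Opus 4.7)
The plan is to reduce this to the general edge wave equation theory developed in \cite{HintzConicWave}, which proves essentially this statement in an abstract setting; our task is to verify that $L_\circ$ fits that framework. The non-refocusing property built into Definition~\ref{DefGlDynStdM}\eqref{ItGlDynStdNonrf}, together with the spacelike initial and final hypersurface conditions \eqref{ItGlDynStdIni}--\eqref{ItGlDynStdFin}, makes $\Omega_\circ$ an admissible domain in the sense of \cite[Definition~2.5]{HintzConicWave}. Via Lemma~\ref{LemmaFseIdent} the reduced normal operator $\hat N_{\eop,t_0}(L_\circ,\hat\sigma)$ is conjugate (through the change of variables $r'=\rho'^{-1}$) to the transition face operator $L_\tface(\hat\sigma)$, and the boundary spectra agree by~\eqref{EqFseIdentSpecb}; hence the hypothesis that $L_\tface$ be invertible at weight $\alpha_\cD=-\ell$ is precisely the spectral admissibility condition \cite[Definition~3.12]{HintzConicWave} for $L_\circ$ at weight $\ell$, with the matching quantitative resolvent bounds provided by Lemma~\ref{LemmaEstMcInvft}.

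The proof then proceeds in three steps. First I would establish a baseline microlocal regularity estimate by combining edge-microlocal elliptic regularity off $\wt\Sigma\cap\Te^*\Omega_\circ$, real principal type propagation along the $|x|^2 H_G$-flow in $\wt\Sigma$, the edge radial point estimates at $\pa\cR_{\rm in}$ and $\pa\cR_{\rm out}$ (the $\eps$-independent restrictions of Propositions~\ref{PropEstRadIn}--\ref{PropEstRadOut} to $M_\circ$), and energy estimates near $Y^-_{t_0,t_1,r_0}\cup Y^\wedge_{t_0,t_1,r_0}$ analogous to Proposition~\ref{PropEstEn}. Lemma~\ref{LemmaGlDynStdFlowM} classifies the flow lines inside $\Omega_\circ$, so threading the above estimates along these flow lines (using the support condition at $X_{t_0,t_1,r_0}$) and invoking the threshold conditions~\eqref{EqEstMcThres} yields
\[
  \|u\|_{\He^{\sfs,\ell}(\Omega_\circ;E_\circ)^{\bullet,-}} \leq C\bigl(\|L_\circ u\|_{\He^{\sfs-1,\ell-2}(\Omega_\circ;E_\circ)^{\bullet,-}} + \|u\|_{\He^{\sfs_0,\ell}(\Omega_\circ;E_\circ)^{\bullet,-}}\bigr)
\]
for any $\sfs_0<\sfs$ still satisfying~\eqref{EqEstMcThres}.

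Second, to eliminate the error term, I would localize via a cutoff $\chi(r/\delta)$ supported in a collar neighborhood of $\pa M_\circ$ and pass to the Mellin transform in $r$, which (using the stationarity of $N_{M_\circ}(L_\circ)$ in the Fermi coordinate $t$, inherited from Definition~\ref{DefGl}\eqref{ItGlwtgKerr} together with~\eqref{EqEstL}) converts the localized equation into the parametric problem for $\hat N_{\eop,t_0}(L_\circ,\hat\sigma)=L_\tface(\hat\sigma)$. Lemma~\ref{LemmaEstMcInvft} furnishes uniform bounds on the inverse at weight $-\ell=\alpha_\cD$; transplanting these back via Plancherel and absorbing the commutator $[L_\circ,\chi]$, which is supported away from $\pa M_\circ$ and therefore controlled by the baseline estimate at interior points, produces an improved error of the form $\|u\|_{\He^{\sfs_0,\ell-\eta}}$ for some $\eta>0$, which together with the non-refocusing propagation and the support condition at $X_{t_0,t_1,r_0}$ forces the kernel of $L_\circ$ to be trivial and the estimate to close unconditionally.

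Third, existence follows by a duality argument: the formal adjoint $L_\circ^*$ is an edge operator of the same type whose flow is that of $L_\circ$ with reversed time orientation and whose radial-set thresholds are switched; the hypotheses of the theorem are invariant under this duality, so the estimate just proved for $L_\circ^*$ on the backward problem (vanishing at $Y^-\cup Y^\wedge$ in place of $X$) combines with Hahn--Banach to produce a forward solution with the claimed bound. The main obstacle will be Step~2: carefully matching the edge weight $\ell$ against the sc,b-transition weight $\alpha_\cD$ and the various scattering/b decay orders at $\scface$ and $\zface$ of $\tface$ from Lemma~\ref{LemmaEstMcInvft}, and ensuring the commutator $[L_\circ,\chi]$ is absorbed without incurring a net loss that would preclude closing the estimate.
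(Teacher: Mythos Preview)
Your approach matches the paper's exactly: the paper's proof is a direct citation to \cite[Theorem~3.18]{HintzConicWave}, verifying (as you do in your first paragraph) that Definition~\ref{DefGlDynStdM}\eqref{ItGlDynStdNonrf} gives the non-refocusing condition of \cite[Definition~2.5]{HintzConicWave}, that \eqref{EqEstMcThres} gives the admissibility of the orders $\sfs,\ell$ in the sense of \cite[Definition~3.10]{HintzConicWave}, and that Definition~\ref{DefEstInvtf} gives the spectral admissibility of \cite[Definition~3.12]{HintzConicWave}. Your Steps 1--3 go beyond the paper and sketch the content of the cited result itself; one small correction there: in Step 2 the reduced edge normal operator is obtained via the Fourier transform in $t$ (using the $t$-translation-invariance of $N_{\eop,t_0}(L_\circ)$, cf.\ \eqref{EqFERedNormOp}) followed by the rescaling $r'=r|\sigma|$, not via a Mellin transform in $r$.
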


In the present work, such order functions $\sfs$ arise as restrictions to $\Sse^*_{M_\circ}\wt M=\Se^*M_\circ$ of order functions as in Theorem~\ref{ThmEstStd}.

\begin{proof}[Proof of Theorem~\usref{ThmEstMc}]
  This is the main result of \cite[Theorem~3.18]{HintzConicWave}, whose hypotheses we proceed to verify. Condition~\eqref{ItGlDynStdNonrf} in Definition~\ref{DefGlDynStdM} is the non-refocusing assumption in \cite[Definition~2.5]{HintzConicWave}. The $P$-admissibility of the orders $\sfs,\ell$ in the sense of \cite[Definition~3.10]{HintzConicWave} is equivalent to our present assumptions~\eqref{EqEstMcThres} (due to an effective sign switch in the definition of $\vartheta_{\rm out}$ in \cite[Definition~3.2]{HintzConicWave}). Finally, the spectral admissibility, \cite[Definition~3.12]{HintzConicWave}, is captured by our Definition~\ref{DefEstInvtf}.
\end{proof}

\section{Linear toy model: scalar waves on glued spacetimes}
\label{SSc}

We fix a glued spacetime $(\wt M,\wt g)$ associated with the data $(M,g)$, $\cC\subset M$, $b=(\bhm,\bha)$, and Fermi normal coordinates $t\in I_\cC\subset\R$, $x\in\R^3$ as in Definition~\ref{DefGl}; write $g_\eps=\wt g|_{M_\eps}$. We study the scalar wave operator
\[
  \wt L = \Box_{\wt g}\qquad \text{(i.e.\ $L_\eps:=\wt L|_{M_\eps}=\Box_{g_\eps}$)}.
\]
Thus, the bundles $\wt E$ and $\hat\cE$ in condition~\eqref{ItEstLBundle} in~\S\ref{SEst} are trivial. The normal operators from~\eqref{EqEstL} and \eqref{EqEstLepsLcirc} are the scalar wave operators
\[
  L=\Box_{\hat g_b},\qquad L_\circ=\Box_g
\]
on the subextremal Kerr spacetime $(\hat M_b^\circ,\hat g_b)$ and on $(M,g)$ (or rather $(M_\circ,\upbeta_\circ^*g)$), respectively. Since $L$ is formally self-adjoint, the threshold shifts in Definition~\ref{DefEstAdm} are
\[
  \vartheta_\bullet=0,\qquad \bullet={\rm in,out},\cH^+.
\]
Unless noted otherwise, we shall only assume the regularity $\hat\rho^{-2}(\CI+\cC_\seop^{\infty,1,1})\Diffse^2$ for $\wt L$. (For any fixed orders in the statements below, a sufficiently large but finite value of the se-regularity order would suffice.)

\begin{rmk}[More general operators]
\label{RmkScGeneral}
  With minor purely notational modifications to the arguments below, we can also consider scalar operators $\wt L$ as in~\eqref{ItEstLOp} whose $\hat M$-normal operator is equal to $\Box_{\hat g_b}$. Such a more general setting is required to accommodate nonlinear settings; see Theorem~\ref{ThmNTame} below for the corresponding (tame) linear statement. More generally still, we can treat any operator $\wt L$ (with principal part $\Box_{\hat g_b}$, but acting on vector bundles, and without any symmetry conditions) as long as its $\hat M$-model $L$ satisfies mode stability in $\Im\sigma\geq 0$, and the condition on the subprincipal symbol at the trapped set from Proposition~\ref{PropEstTrap} is satisfied. No modifications are required for the arguments below, except the interval of weights $\alpha_\cD:=\alpha_\circ-\hat\alpha$ for which the zero energy operator $\hat L(0)$ is invertible needs to be adjusted accordingly as in \cite[Definition~3.14]{HintzNonstat}. (An important example where mode stability fails at $\sigma=0$ is when $L$ is the linearized gauge-fixed Einstein operator on Kerr, discussed in \cite{HaefnerHintzVasyKerr}; uniform estimates in this case are at the heart of \cite{HintzGlueLocIII}.)
\end{rmk}

\subsection{Estimates for scalar waves on Kerr in 3b-spaces}
\label{SsSc3b}

We do not merely have the (automatic) invertibility of the spectral family of the scalar wave operator on Kerr for high frequencies, which follows from Proposition~\ref{PropEstFThi}, but in fact mode stability in all of $\Im\sigma\geq 0$:

\begin{lemma}[Spectral family of the scalar wave operator on Kerr]
\label{LemmaSc3bSpec}
  For $\alpha_\cD\in(-\frac32,-\frac12)$, the operators $\hat L(0)$ in~\eqref{EqEstFT0Op} and $\hat L(\sigma)$, $\sigma\in\C\setminus\{0\}$, $\Im\sigma\geq 0$, in \eqref{EqEstFTbddOp} are invertible. Furthermore, $L_\tface$ is invertible at weight $\alpha_\cD$, and thus the low energy estimates of Proposition~\usref{PropEstFTlo} hold.
\end{lemma}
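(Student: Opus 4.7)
The proof recapitulates the scalar wave mode stability analysis from \cite[\S3]{HintzKdSMS}; I outline the strategy. Fix $\alpha_\cD\in(-\tfrac{3}{2},-\tfrac{1}{2})$ and a Kerr-admissible variable order $\sfs$. By Propositions~\ref{PropEstFT0}--\ref{PropEstFTbdd}, $\hat L(\sigma)$ is Fredholm on the relevant spaces for every $\sigma\in\C$ with $\Im\sigma\geq 0$, and of index zero for $\sigma\neq 0$; at $\sigma=0$ the index vanishes by a formal self-adjointness pairing argument on weighted b-spaces. Thus in each case invertibility reduces to the triviality of the nullspace.

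For $\sigma\neq 0$ with $\Im\sigma\geq 0$, the above-threshold condition at $\pa\hat\cR_{\cH^+}$ encoded in the choice of scattering Sobolev space propagates, via the radial point estimate of Proposition~\ref{PropEstHor}, to smoothness of $e^{-i\sigma\hat t}u$ across the future event horizon for any element $u$ of the kernel; the scattering decay order at $\pa\hat X_b$ enforces the outgoing Sommerfeld-type condition at spatial infinity. Separating variables in oblate spheroidal coordinates then exhibits $u$ as a (locally finite) superposition of outgoing mode solutions in the sense of \cite[Definition~1.1]{ShlapentokhRothmanModeStability}. Whiting's mode stability theorem, extended to the real axis by Shlapentokh-Rothman, gives $u=0$.

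For $\sigma=0$, any $u\in\Hb^{s,\alpha_\cD}$ with $\hat L(0)u=0$ admits, by b-elliptic regularity of $\hat L(0)$ near $\pa\hat X_b$, a partial asymptotic expansion in the indicial roots of the Euclidean Laplacian. The weight $\alpha_\cD\in(-\tfrac32,-\tfrac12)$ is precisely chosen to exclude both leading behaviors $\hat r^0$ and $\hat r^{-1}$, forcing genuine decay. A red-shift energy estimate across the event horizon (in the spirit of Dafermos--Rodnianski), combined with the divergence identity for the Killing field $\pa_{\hat t}$ and a careful choice of multiplier inside the ergoregion, yields $u=0$, as carried out in detail in \cite[\S3]{HintzKdSMS}; the cokernel vanishes by the same argument applied to the formal adjoint.

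For the transition face operator $L_\tface(\hat\sigma)$ with $|\hat\sigma|=1$, Lemma~\ref{LemmaFseIdent} identifies it (under $\rho'\leftrightarrow r'^{-1}$) with the reduced edge normal operator of $L_\circ=\Box_g$ along $\cC$; in Fermi normal coordinates, this coincides with the Helmholtz-type operator $-\Delta_{\R^3}-\hat\sigma^2$ on $\R^3$, compactified at infinity. For $\hat\sigma=\pm 1$, the injectivity conditions in Definition~\ref{DefEstInvtf} reduce to Rellich's uniqueness theorem (an outgoing solution of the Helmholtz equation with prescribed b-decay at the origin must vanish); for $\Im\hat\sigma>0$ they follow from a Green's identity using the nonzero sign of $\Im(\hat\sigma^2)$. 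The adjoint conditions are handled symmetrically, and $\alpha_\cD+\tfrac{3}{2}\notin\specb\hat L(0)$ on the stated interval by the same indicial root computation. With all three invertibility statements in hand, Proposition~\ref{PropEstFTlo} delivers the uniform low energy estimates. The only substantive obstacle is the $\sigma\neq 0$ step, which genuinely depends on the Whiting--Shlapentokh-Rothman mode stability theorem; the zero energy and transition face pieces are by comparison soft consequences of b-elliptic theory and classical Helmholtz scattering.
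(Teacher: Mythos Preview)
Your proposal takes the same route as the paper---both defer to \cite{HintzKdSMS}---but supplies a more detailed outline of what that reference actually does. The identification of $L_\tface(\hat\sigma)$ with the flat Helmholtz operator and the appeal to Whiting--Shlapentokh-Rothman for $\sigma\neq 0$ are both on target.

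One point to correct in your $\sigma=0$ discussion: the weight $\alpha_\cD\in(-\tfrac32,-\tfrac12)$ does \emph{not} exclude $\hat r^{-1}$ behavior. With the metric density $\sim\hat r^2\,\dd\hat r\,\dd\omega$, one checks that $\hat r^{-1}\in\Hb^{0,\alpha_\cD}(\hat X_b)$ precisely when $\alpha_\cD<-\tfrac12$, so the $\hat r^{-1}$-like zero mode is allowed by the weight throughout your interval. What actually rules it out is its failure to extend smoothly across the event horizon (on Schwarzschild the $l=0$ solution with $\hat r^{-1}$ tail is a multiple of $\log(1-2\bhm/\hat r)$, singular at $\hat r=2\bhm$); on Kerr this is handled by the pairing/multiplier argument you invoke next, as detailed in \cite[Lemma~3.19]{HintzKdSMS}. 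The weight's actual role is to exclude constants and to place $\alpha_\cD+\tfrac32$ strictly between the indicial roots $0$ and $1$ so that b-Fredholm theory applies cleanly. Your energy-estimate sentence is the load-bearing one; drop the claim that the weight alone forces the decay.
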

\begin{proof}
  The invertibility of the zero energy operator is the content of \cite[Lemma~3.19]{HintzKdSMS}. Nonzero real $\sigma$ are discussed in \cite[Proposition~3.18]{HintzKdSMS}; the invertibility of $\hat L(\sigma)$ for $\Im\sigma>0$ is shown in \cite[\S{3.9}]{HintzKdSMS} using \cite{ShlapentokhRothmanModeStability,WhitingKerrModeStability} and a continuity argument. The invertibility of $L_\tface$ is the content of \cite[Lemma~3.20]{HintzKdSMS} for $\hat\sigma=1$ (with $\hat\sigma=-1$ being completely analogous), and the case $\hat\sigma=e^{i\theta}$ for $\theta\in(0,\pi)$ is discussed in \cite[\S{3.9}]{HintzKdSMS} as well.
\end{proof}

\begin{thm}[Scalar waves on Kerr: 3b-estimates]
\label{ThmSc3b}
  Let $\sfs\in\CI(\Stb^*\hat M_b)$, $\alpha_\cD\in(-\frac32,-\frac12)$ be Kerr-admissible orders in the sense of Definition~\usref{DefEstAdm}\eqref{ItEstAdmKerr} with $\vartheta_\bullet=0$. Fix on $\hat M_b$ the metric density $|\dd\hat g_b|$. We work with 3b-Sobolev spaces which have extendible character at the final hypersurface at $\hat r=\bhm$ (though we do not make this explicit in the notation). Then the following statements hold.
  \begin{enumerate}
  \item\label{ItSc3bSol} The operator
  \begin{equation}
  \label{EqSc3b}
    \Box_{\hat g_b} \colon \{ u\in\Htb^{\sfs,\alpha_\cD,0}(\hat M_b) \colon \Box_{\hat g_b}u \in \Htb^{\sfs,\alpha_\cD+2,0}(\hat M_b) \} \to \Htb^{\sfs,\alpha_\cD+2,0}(\hat M_b)
  \end{equation}
  is invertible. In particular, there exists a constant $C=C(b,\sfs,\alpha_\cD)$ so that the estimate
  \begin{equation}
  \label{EqSc3bEst}
    \|u\|_{\Htb^{\sfs,\alpha_\cD,0}(\hat M_b)} \leq C\|\Box_{\hat g_b}u\|_{\Htb^{\sfs,\alpha_\cD+2,0}(\hat M_b)}
  \end{equation}
  holds for all $u$ for which both sides are finite.
  \item\label{ItSc3bFwd} The inverse of~\eqref{EqSc3b} is the forward solution operator, i.e.\ if $\hat t\geq T_0$ on $\supp f$ for some $T_0\in\R$, then also $\hat t\geq T_0$ on $\supp\Box_{\hat g_b}^{-1}f$.
  \item\label{ItSc3bDom} Suppose $\Omega\subset\hat M_b$ is a $4$-dimensional submanifold with corners whose boundary hypersurfaces are all spacelike, and which has exactly one initial boundary hypersurface which is moreover contained in a level set $\{\hat t=\hat t_0\}$. Write $\Htb^{\sfs,\alpha_\cD,0}(\Omega)^{\bullet,-}$ for the space of distributions which have supported, resp.\ extendible character at the initial, resp.\ final boundary hypersurfaces of $\Omega$. Then, for the same constant $C=C(b,\sfs,\alpha_\cD)$ as in~\eqref{EqSc3bEst} (which is thus \emph{independent} of $\Omega$), we have
  \[
    \|u\|_{\Htb^{\sfs,\alpha_\cD,0}(\Omega)^{\bullet,-}} \leq C\|\Box_{\hat g_b}u\|_{\Htb^{\sfs,\alpha_\cD+2,0}(\Omega)^{\bullet,-}}.
  \]
  \end{enumerate}
\end{thm}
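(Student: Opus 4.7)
The plan is to deduce Theorem~\ref{ThmSc3b} from the Fourier transform identification of (stationary) 3b-Sobolev spaces with $L^2$-spaces of resolvent-adapted function spaces, established in~\eqref{EqFVar3bFT}, combined with the uniform spectral family estimates proved in \S\ref{SsEstFT} which, by Lemma~\ref{LemmaSc3bSpec}, hold unconditionally for $L=\Box_{\hat g_b}$ in the weight range $\alpha_\cD\in(-\tfrac32,-\tfrac12)$.

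First I would prove part~\eqref{ItSc3bSol}. Since $\sfs$ is stationary (being the pullback of an admissible se-order from an $\hat M_t$-fiber via~\eqref{EqFseBundle3b}), the Fourier transform in $\hat t$ intertwines $\Box_{\hat g_b}$ with the spectral family $\hat L(\sigma)$ and gives the isomorphism~\eqref{EqFVar3bFT}
\[
  \cF \colon \Htb^{\sfs,\alpha_\cD,0}(\hat M_b) \xrightarrow{\cong} L^2\bigl(\R_\sigma; H_{\wh{\tbop},\sigma}^{\sfs,\alpha_\cD}(\hat X_b) \bigr),
\]
and analogously on the target space (with weight $\alpha_\cD+2$). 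By Plancherel, it thus suffices to show that $\hat L(\sigma)^{-1}$ exists and is uniformly (in $\sigma\in\R$) bounded as a map
\[
  \hat L(\sigma)^{-1} \colon H_{\wh{\tbop},\sigma}^{\sfs,\alpha_\cD+2}(\hat X_b) \to H_{\wh{\tbop},\sigma}^{\sfs,\alpha_\cD}(\hat X_b).
\]
Mode stability (Lemma~\ref{LemmaSc3bSpec}) gives pointwise invertibility for every real $\sigma$. To upgrade to uniform bounds, I would split the $\sigma$-axis into three regimes: for $|\sigma|\leq 1$, I would use the conditional low energy estimate of Proposition~\ref{PropEstFTlo}, whose hypotheses (invertibility of $\hat L(0)$ on weighted b-Sobolev spaces and of $L_\tface$ at weight $\alpha_\cD$) are precisely the content of Lemma~\ref{LemmaSc3bSpec} for $\alpha_\cD\in(-\tfrac32,-\tfrac12)$; for bounded $\sigma$ bounded away from $0$, the uniform boundedness follows from the Fredholm estimate and invertibility of Proposition~\ref{PropEstFTbdd} together with a compactness argument; and for $|\sigma|\gg 1$, I would use the semiclassical estimate~\eqref{EqEstFThi} of Proposition~\ref{PropEstFThi}, noting that (since $L=L^*$ modulo lower order) the subprincipal trapping condition~\eqref{EqEstFTSubpr} holds. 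The resulting bound on each piece precisely matches the norm definitions in~\eqref{EqFVar3bFT} (including the two-derivative weight shift built into the high-energy space $H_{\scop,|\sigma|^{-1}}^{\sfs,\sfs+\alpha_\cD,\sfs}$, where the semiclassical loss is absorbed into the weight on the source side). This gives~\eqref{EqSc3bEst} with a constant depending only on $b,\sfs,\alpha_\cD$, and Plancherel then yields surjectivity as well.

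For part~\eqref{ItSc3bFwd}, I would use a Paley--Wiener/contour shift argument: since by Lemma~\ref{LemmaSc3bSpec} the family $\hat L(\sigma)$ is also invertible for $\Im\sigma>0$ (with uniform bounds on strips $0<\Im\sigma\leq C_0$ by Propositions~\ref{PropEstFTbdd} and~\ref{PropEstFThi}, where in the latter one uses the stronger estimate for $|\Im\sigma|>c|\Re\sigma|$ away from the real axis), given $f$ with $\hat t\geq T_0$ on $\supp f$, multiplication of $\hat f(\sigma)$ by $e^{i\sigma T_0}$ gives a holomorphic-in-$\{\Im\sigma>0\}$ extension with appropriate bounds. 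Then $\hat L(\sigma)^{-1}\hat f(\sigma)$ admits a holomorphic extension to $\Im\sigma>0$, and applying the inverse Fourier transform via contour shift $\sigma\mapsto\sigma+i C$, $C\to\infty$, shows that the inverse Fourier transform is supported in $\hat t\geq T_0$.

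For part~\eqref{ItSc3bDom}, given $f\in\Htb^{\sfs,\alpha_\cD+2,0}(\Omega)^{\bullet,-}$, I would extend it by zero to the region $\hat t<\hat t_0$ (which is possible since $f$ has supported character there) and take any extension of the result to all of $\hat M_b$ with norm at most twice the original; call this extension $\tilde f$. Part~\eqref{ItSc3bSol} produces a unique $\tilde u=\Box_{\hat g_b}^{-1}\tilde f$ with $\|\tilde u\|_{\Htb^{\sfs,\alpha_\cD,0}}\leq C\|\tilde f\|_{\Htb^{\sfs,\alpha_\cD+2,0}}$. By part~\eqref{ItSc3bFwd}, $\tilde u$ vanishes in $\hat t<\hat t_0$. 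Standard finite-speed-of-propagation for the wave equation on $(\hat M_b^\circ,\hat g_b)$ implies that $\tilde u|_\Omega$ depends only on $\tilde f|_\Omega=f$, and hence equals the (unique) $u$ on $\Omega$. Taking the infimum over extensions $\tilde f$ gives the desired estimate with the \emph{same} constant $C$. The main obstacle throughout is the careful matching of the variable-order function spaces on both sides of the Fourier transform identification, particularly at the transition between the low-energy (sc-b-transition) and high-energy (semiclassical scattering) regimes; but this is precisely what Remark~\ref{RmkEstAdmInd} and the order-function bookkeeping in~\eqref{EqFVar3bFT} were set up to handle.
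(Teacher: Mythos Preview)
Your proposal is correct and follows essentially the same approach as the paper: part~\eqref{ItSc3bSol} via the Fourier transform identification~\eqref{EqFVar3bFT} combined with the uniform resolvent bounds from Propositions~\ref{PropEstFTbdd}, \ref{PropEstFThi}, and~\ref{PropEstFTlo} (the latter two covering the high and low energy regimes), whose hypotheses are verified by Lemma~\ref{LemmaSc3bSpec}; part~\eqref{ItSc3bFwd} via a Paley--Wiener/contour shift argument using holomorphic invertibility in $\Im\sigma>0$; and part~\eqref{ItSc3bDom} by extending $f$, solving globally, and restricting. The only cosmetic difference is in part~\eqref{ItSc3bDom}: the paper takes the \emph{minimal} norm extension $\tilde f$ (which realizes the quotient norm exactly, so $\|\tilde f\|=\|f\|_{\Omega}$), avoiding the need for your infimum step; and it appeals to uniqueness of forward solutions on $\Omega$ rather than finite speed of propagation per se---but these amount to the same thing.
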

\begin{proof}
  In view of the isomorphism~\eqref{EqFVar3bFT}, part~\eqref{ItSc3bSol} is an immediate consequence of those statements in Lemma~\ref{LemmaSc3bSpec} which concern $\hat L(\sigma)$ for $\sigma\in\R$, as well as the high energy estimates of Proposition~\ref{PropEstFThi}. Thus,
  \[
    \cF(\Box_{\hat g_b}^{-1}f)(\sigma) = \wh{\Box_{\hat g_b}}(\sigma)^{-1}(\cF f)(\sigma),\qquad \sigma\in\R.
  \]
  To prove part~\eqref{ItSc3bFwd}, it suffices to consider $f\in\CIc(\hat M_b^\circ)$ since this space is dense in the target space $\Htb^{\sfs,\alpha_\cD+2,0}(\hat M_b)$ of~\eqref{EqSc3b}. Given $\chi\in\CI(\hat X_b^\circ)$ which is equal to $1$ on $\hat\pi(\supp f)$ (in the notation used in~\eqref{EqEstL}), consider then
  \[
    u_\chi(\hat t,-) = I(0),\qquad I(\alpha) := \frac{1}{2\pi}\int_{\R+i\alpha} e^{i\hat t\sigma} \chi\wh{\Box_{\hat g_b}}(\sigma)^{-1}\chi(\cF f)(\sigma)\,\dd\sigma.
  \]
  By the holomorphicity of $\chi\wh{\Box_{\hat g_b}}(\sigma)^{-1}\chi$ in $\Im\sigma>0$ as a map $H^{\sfs-1}\to H^\sfs$, and in view of the high energy estimates that this map satisfies, $I(\alpha)$ is independent of $\alpha>0$. The uniform resolvent estimates in $\Im\sigma\geq 0$ near $\pm(0,\infty)$ allow one to shift the contour to the union of $(-\infty,-1]$, an upper semicircle from $-1$ to $1$, and $[1,\infty)$; and the holomorphicity in $\Im\sigma>0$ together with the uniform low energy estimates of Proposition~\ref{PropEstFTlo} imply that one can shift the semicircle contour down to $[-1,1]$. Therefore, $u_\chi=\chi u$ where $u=\Box_{\hat g_b}^{-1}f$. The claim now follows from the Paley--Wiener theorem (now using the uniformity of the estimates~\eqref{EqEstFThi} as $\Im\sigma\to+\infty$).

  For part~\eqref{ItSc3bDom}, let $f\in\Htb^{\sfs,\alpha_\cD+2,0}(\Omega)^{\bullet,-}$. Denote by $\tilde f\in\dot H_\tbop^{\sfs,\alpha_\cD+2,0}(\{\hat t\geq\hat t_0\})$ the extension of $f$ with minimal norm, so $\|\tilde f\|_{H_\tbop^{\sfs,\alpha_\cD+2,0}(\hat M_b)}=\|f\|_{\Htb^{\sfs,\alpha_\cD+2,0}(\Omega)^{\bullet,-}}$. Let $\tilde u=\Box_{\hat g_b}^{-1}\tilde f\in\dot H_\tbop^{\sfs,\alpha_\cD,0}(\{\hat t\geq\hat t_0\})$. Then $u:=\tilde u|_\Omega$ is the (unique) forward solution of $\Box_{\hat g_b}u=f$. By the definition of norms on spaces of extendible distributions, we have
  \[
    \|u\|_{\Htb^{\sfs,\alpha_\cD,0}(\Omega)^{\bullet,-}} \leq \|\tilde u\|_{\Htb^{\sfs,\alpha_\cD,0}(\hat M_b)} \leq C\|\tilde f\|_{\Htb^{\sfs,\alpha_\cD+2,0}} = C\|f\|_{\Htb^{\sfs,\alpha_\cD+2,0}(\Omega)^{\bullet,-}}. \qedhere
  \]
\end{proof}

Considering the forward mapping properties of $\Box_{\hat g_b}$, the estimate~\eqref{EqSc3bEst} is sharp as far as the weights (i.e.\ decay rates) are concerned; indeed, for $u\in\Htb^{\sfs,\alpha_\cD,0}(\hat M_b)$ one has $\Box_{\hat g_b}u\in\Htb^{\sfs-2,\alpha_\cD+2,0}(\hat M_b)$. For present purposes, we have no need for sharper decay estimates (e.g.\ Price's law \cite{HintzPrice,AngelopoulosAretakisGajicKerr}) when the source term has better decay.

\begin{rmk}[The space $\Htb^{\sfs,\alpha_\cD,0}$]
\label{RmkSc3bSpace}
  In order to give some intuition for the space $\Htb^{\sfs,\alpha_\cD,0}$ in~\eqref{EqSc3b}, we replace it by a closely related space of functions on $\hat M_b=\R_{\hat t}\times\{\hat x\in\R^3\colon|\hat x|\geq\bhm\}$ which can be defined without 3b-ps.d.o.s. To wit, fix $\alpha_\cD\in(-\frac32,-\frac12)$ and $s=0$ (so $s+\alpha_\cD<-\frac12$), and consider for $k\in\N_0$ the space
  \[
    \tilde H_\tbop^{(0;k),\alpha_\cD}(\hat M_b) := \bigl\{ u \colon \pa_{\hat t}^j\pa_{\hat x}^\alpha(\hat r(\pa_{\hat t}+\pa_{\hat r}))^l \pa_\omega^\beta u\in \la\hat x\ra^{-\alpha_\cD}L^2(\hat M_b)\ \forall\,j+|\alpha|+l+|\beta|\leq k \bigr\},
  \]
  where we use the volume density $|\dd\hat t\,\dd\hat x|$ to define $L^2$. The point is that the space of 3b-vector fields which are characteristic at $\tilde\cR_{\rm out}:=\{\hat r=\infty\}\cap\mathspan\frac{\dd(\hat t-\hat r)}{\hat r}\subset\Ttb^*\hat M_b\setminus o$---which is the $\hat t$-invariant extension of $\hat\cR_{\rm out}$ defined in~\eqref{EqGlDynKerrOut}---is spanned over $\CI(\hat M_b)$ by $\pa_{\hat t}$, $\pa_{\hat x}$, $\hat r(\pa_{\hat t}+\pa_{\hat r})$, $\pa_\omega$. (That is, $\tilde H_\tbop^{(0;k),\alpha_\cD}$ captures \emph{module regularity} at $\tilde\cR_{\rm out}$ relative to $\Htb^{0,\alpha_\cD}$, which is a frequently used variant of, or addition to, variable order spaces \cite{BaskinVasyWunschRadMink,GellRedmanHassellShapiroZhangHelmholtz,VasyLAPLag}. More refined versions of this space arise in \cite[\S{3.6}]{HintzGlueLocIII}.) In particular, elements of $\tilde H_\tbop^{(0;1),\alpha_\cD}$ are microlocally in $\Htb^{1,\alpha_\cD}$ away from $\tilde\cR_{\rm out}$ and thus, due to $1+\alpha_\cD>-\frac12$, have above-threshold regularity at the $\hat t$-invariant extension of $\hat\cR_{\rm in}$. Let now $u\in\tilde H_\tbop^{(0;1),\alpha_\cD}$. (We remark that one can show that $\Box_{\hat g_b}^{-1}\colon\tilde H_\tbop^{(0;k),\alpha_\cD+2}\to\tilde H_\tbop^{(0;k),\alpha_\cD}$ for all $k\geq 1$.) Passing to $\hat t_*=\hat t-\hat r$, we thus have $\la\hat x\ra^{\alpha_\cD}u\in L^2(\R_{\hat t_*};\Hb^1(\hat X_b))\cap H^1(\R_{\hat t*};L^2(\hat X_b))$; this is consistent for bounded $\hat t_*$ with $u\sim\la\hat x\ra^{-\eta}$ for any $\eta>\alpha_\cD+\frac32\in(0,1)$, and thus in particular allows for a nontrivial radiation field at $\scri^+$. On the other hand, for $\frac{\hat r}{\hat t}$ near $-1$ (or merely bounded away from $+1$), the fact that $u$ remains in $\la\hat x\ra^{-\alpha_\cD}L^2$ when differentiating along $\hat r(\pa_{\hat t}+\pa_{\hat r})$ means, roughly speaking, that $u$ cannot have a nontrivial radiation field at $\scri^-$. See also \cite[Lemma~5.7]{HintzVasyMink4} for the relationship between such b-regularity on the radial compactification of $\R^{1+3}$ (here away from $\scri^+$ and $\cT^\pm$) and decay near null infinity.
\end{rmk}

\subsection{Uniform estimates for linear waves}
\label{SsScUnif}

We now combine the symbolic estimates from~\S\S\ref{SsEstRad}--\ref{SsEstStd} with the normal operator estimates at $M_\circ$ and $\hat M$ from~\S\ref{SsEstMc} and \S\ref{SsSc3b} to prove uniform bounds for linear scalar waves on glued spacetimes on se-Sobolev spaces.

\begin{thm}[Uniform se-estimates for linear waves on standard domains in glued spacetimes]
\label{ThmScUnif}
  Let $\wt\Omega\subset\wt M\setminus\wt K^\circ$ denote a standard domain (see Definition~\usref{DefGlDynStd}). Fix $\eps_0>0$ so that the conclusions of Lemma~\usref{LemmaGlDynStdFlow}\eqref{ItGlDynStdFlowSpace} hold. Let $\sfs\in\CI(\Sse^*\wt M)$, $\alpha_\circ,\hat\alpha\in\R$ denote admissible orders on $\wt\Omega$ (see Definition~\usref{DefEstAdm}) with $\alpha_\circ-\hat\alpha\in(-\frac32,-\frac12)$, and assume that also the orders $\sfs-3,\alpha_\circ,\hat\alpha$ are admissible. Then there exists $C<\infty$ so that the unique forward solution of $\Box_{g_\eps}u=f$ on $\Omega_\eps=\wt\Omega\cap M_\eps$ satisfies the estimate
  \begin{equation}
  \label{EqScUnif}
    \|u\|_{H_{\seop,\eps}^{\sfs,\alpha_\circ,\hat\alpha}(\Omega_\eps)^{\bullet,-}} \leq C \|f\|_{H_{\seop,\eps}^{\sfs,\alpha_\circ,\hat\alpha-2}(\Omega_\eps)^{\bullet,-}}
  \end{equation}
  for all $\eps\in(0,\eps_0]$.
\end{thm}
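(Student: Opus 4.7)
The argument follows the two-step bootstrap outlined in Section~\ref{SssIFModel} (Option~II): apply Theorem~\ref{ThmEstStd} to obtain a microlocal regularity estimate with a weak (but not yet small) error, then successively invert the two normal operators of $\wt L=\Box_{\wt g}$ at $\hat M$ (via the 3b-estimate of Theorem~\ref{ThmSc3b}) and at $M_\circ$ (via the edge estimate of Theorem~\ref{ThmEstMc}) to convert the error into a small multiple of the left-hand side of~\eqref{EqScUnif}.

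Applying Theorem~\ref{ThmEstStd} with the two admissible order triples $(\sfs,\alpha_\circ,\hat\alpha)$ and $(\sfs-3,\alpha_\circ,\hat\alpha)$---the subprincipal symbol hypothesis holds trivially since $L=\Box_{\hat g_b}$ acts on the trivial bundle---yields
\begin{equation}\label{eq:pfplan-reg}
\|u\|_{H_{\seop,\eps}^{\sfs,\alpha_\circ,\hat\alpha}(\Omega_\eps)^{\bullet,-}} \leq C\Bigl(\|f\|_{H_{\seop,\eps}^{\sfs,\alpha_\circ,\hat\alpha-2}(\Omega_\eps)^{\bullet,-}} + \|u\|_{H_{\seop,\eps}^{\sfs-3,\alpha_\circ,\hat\alpha}(\Omega_\eps)^{\bullet,-}}\Bigr).
\end{equation}
It suffices to dominate the error (second) term by $C\|f\|_{H_{\seop,\eps}^{\sfs,\alpha_\circ,\hat\alpha-2}}+C\eps^\delta\|u\|_{H_{\seop,\eps}^{\sfs,\alpha_\circ,\hat\alpha}}$ for some $\delta>0$; choosing $\eps_0$ so small that $C\eps_0^\delta<\tfrac12$ then absorbs the latter into the left of~\eqref{eq:pfplan-reg}, yielding~\eqref{EqScUnif}.

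For the improvement at $\hat M$, the pullback under $\Psi_\eps\colon(t,x)\mapsto(\tfrac{t-t_0}{\eps},\tfrac{x}{\eps})$ and the se-to-3b correspondence of Proposition~\ref{PropEstFnSobRel}\eqref{ItFVarseSobRel3b} (based on Lemma~\ref{LemmaFseHse3b}) give, modulo arbitrarily small losses in differential order, the uniform equivalence
\[
\|u\|_{H_{\seop,\eps}^{\sfs',\alpha_\circ,\hat\alpha}(\Omega_\eps)^{\bullet,-}}\sim\eps^{2-\hat\alpha}\|(\Psi_\eps)_*u\|_{\Htb^{\sfs',\alpha_\cD,0}(\hat\Omega_\eps)^{\bullet,-}},
\]
with $\alpha_\cD:=\alpha_\circ-\hat\alpha\in(-\tfrac32,-\tfrac12)$ in the admissibility range of Lemma~\ref{LemmaSc3bSpec}. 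Applying Theorem~\ref{ThmSc3b} at the slightly shifted weight $\alpha_\cD-\delta$ (still in $(-\tfrac32,-\tfrac12)$ for small $\delta>0$) and using that $\eps^2\Box_{g_\eps}-\Box_{\hat g_b}\in\hat\rho\Diffse^{2,0,2}(\wt M\setminus\wt K^\circ)$ has coefficients vanishing at $\hat M$, the extra $\hat\rho$-factor in this difference is converted into the weight shift $\hat\alpha\to\hat\alpha-\delta$ at the cost of two se-derivatives, yielding
\[
\|u\|_{H_{\seop,\eps}^{\sfs-3,\alpha_\circ,\hat\alpha}(\Omega_\eps)^{\bullet,-}} \leq C\Bigl(\|f\|_{H_{\seop,\eps}^{\sfs,\alpha_\circ,\hat\alpha-2}} + \|u\|_{H_{\seop,\eps}^{\sfs-1,\alpha_\circ,\hat\alpha-\delta}(\Omega_\eps)^{\bullet,-}}\Bigr).
\]
The improvement at $M_\circ$ is completely analogous, now using Proposition~\ref{PropEstFnSobRel}\eqref{ItFVarseSobRele} to reduce to edge-spaces on $\upbeta_\circ^{-1}(\Omega)\subset M_\circ$ at edge weight $\ell=\hat\alpha-\alpha_\circ=-\alpha_\cD$, the edge estimate Theorem~\ref{ThmEstMc} applied to $L_\circ=\Box_g$ (whose invertibility hypotheses follow from Lemma~\ref{LemmaSc3bSpec}), and the vanishing of $\Box_{g_\eps}-\Box_g\in\rho_\circ\Diffse^{2,0,2}$ at $M_\circ$, which now provides an $\alpha_\circ\to\alpha_\circ-\delta$ shift at the cost of one more se-derivative (the real principal type loss in Theorem~\ref{ThmEstMc}). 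Chaining with the previous improvement and using the identity $\rho_\circ\hat\rho=\eps$, which directly implies $\|u\|_{H_{\seop,\eps}^{\sfs,\alpha_\circ-\delta,\hat\alpha-\delta}}=\eps^\delta\|u\|_{H_{\seop,\eps}^{\sfs,\alpha_\circ,\hat\alpha}}$, produces the desired bound on the error of~\eqref{eq:pfplan-reg}.

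The main technical obstacle is the tight budget of se-derivative losses: the $\hat M$-step spends $2$ derivatives (matching the order of $\eps^2\Box_{g_\eps}-\Box_{\hat g_b}$) and the $M_\circ$-step spends $1$ (matching the real principal type loss in Theorem~\ref{ThmEstMc}), totaling exactly the $3$ derivatives allowed by the hypothesis that $\sfs-3$ also be admissible. A secondary source of care is the correct handling of supported/extendible characters when transferring norms among the three function space frameworks (se-spaces on $\Omega_\eps\subset M_\eps$, 3b-spaces on the pulled-back Kerr domain $\hat\Omega_\eps$, and edge spaces on $\upbeta_\circ^{-1}(\Omega)\subset M_\circ$), which is furnished by Proposition~\ref{PropEstFnSobRel}.
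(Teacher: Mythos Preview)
Your proposal is correct and follows the same two-normal-operator strategy as the paper's proof. Two minor refinements: (i) the paper takes $\sfs_0$ strictly below $\sfs-3$ to absorb the small losses in differential order from the se$\leftrightarrow$3b and se$\leftrightarrow$edge norm equivalences (Proposition~\ref{PropEstFnSobRel}), which is permissible since admissibility is an open condition and which your phrase ``modulo arbitrarily small losses'' already anticipates; (ii) the $\hat\alpha\to\hat\alpha-\delta$ improvement at $\hat M$ arises purely from the extra factor of $\hat\rho$ in $\Box_{\wt g}-\eps^{-2}\Psi_\eps^*\Box_{\hat g_b}$, so there is no need to apply Theorem~\ref{ThmSc3b} at a shifted weight $\alpha_\cD-\delta$.
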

\begin{proof}
  For any fixed $\eps_1\in(0,\eps_0]$ and for all $\eps\in[\eps_1,\eps_0]$, the estimate~\eqref{EqScUnif} follows from the standard hyperbolic estimate $\|u\|_{H^\sfs(\Omega_\eps)}\leq C'\|f\|_{H^{\sfs-1}(\Omega_\eps)}$ since weighted se- and standard Sobolev norms are equivalent for $\eps$ bounded away from $0$. When $\wt\Omega$ is a standard domain of the type in Definition~\ref{DefGlDynStd}\eqref{ItGlDynStdNotC}, then the norms on $H_{\seop,\eps}^{\sfs,0,\hat\alpha-\alpha_\circ}$ and $H^\sfs$ are uniformly equivalent for all $\eps\in(0,1)$, and~\eqref{EqScUnif} (multiplied by $\eps^{\alpha_\circ}$) is again equivalent to a standard hyperbolic estimate. It thus suffices to prove~\eqref{EqScUnif} for $\eps\leq\eps_0$ where $\eps_0>0$ may depend on $\sfs,\alpha_\circ,\hat\alpha$, and for domains $\wt\Omega$ of the type in Definition~\ref{DefGlDynStd}\eqref{ItGlDynStdC}, associated with a standard domain $\Omega=\Omega_{t_0,t_1,r_0}$. 

  \pfstep{Control of se-regularity.} Pick $\sfs_0$ with $\sfs_0<\sfs-3-\eta$ for some $\eta>0$ and so that $\sfs_0,\alpha_\circ,\hat\alpha$ are admissible. Then Theorem~\ref{ThmEstStd} gives
  \begin{equation}
  \label{EqScUnifPf}
    \|u\|_{H_{\seop,\eps}^{\sfs,\alpha_\circ,\hat\alpha}(\Omega_\eps)^{\bullet,-}} \leq C\Bigl(\|\Box_{g_\eps}u\|_{H_{\seop,\eps}^{\sfs,\alpha_\circ,\hat\alpha-2}(\Omega_\eps)^{\bullet,-}} + \|u\|_{H_{\seop,\eps}^{\sfs_0,\alpha_\circ,\hat\alpha}(\Omega_\eps)^{\bullet,-}}\Bigr)
  \end{equation}

  \pfstep{Control near $\hat M$.} To estimate the final, error, term, we first use Proposition~\ref{PropEstFnSobRel} to relate, via $\Psi_\eps(t,x)=(\eps,\frac{t-t_0}{\eps},\frac{x}{\eps})=(\eps,\hat t,\hat x)$, its norm with a 3b-norm. Set
  \[
    \hat\Omega_\eps=\bigl\{0\leq\hat t\leq\eps^{-1}(t_1-t_0),\ \bhm\leq|\hat x|\leq\eps^{-1}r_0+2(\eps^{-1}(t_1-t_0)-t)\bigr\}.
  \]
 For $\eta'\in(0,\eta)$, we then have
  \[
    \|u\|_{H_{\seop,\eps}^{\sfs_0,\alpha_\circ,\hat\alpha}(\Omega_\eps)^{\bullet,-}} \leq C\eps^{2-\hat\alpha}\|(\Psi_\eps)_*u\|_{H_\tbop^{\sfs_0+\eta',\alpha_\cD,0}(\hat\Omega_\eps)^{\bullet,-}}
  \]
  for all sufficiently small $\eps>0$ where $\alpha_\cD:=\alpha_\circ-\hat\alpha\in(-\frac32,-\frac12)$. Using Theorem~\ref{ThmSc3b}, we can bound this by $C\eps^{2-\hat\alpha}\|\Box_{\hat g_b}((\Psi_\eps)_*u)\|_{\Htb^{\sfs_0+\eta',\alpha_\cD+2,0}(\hat\Omega_\eps)^{\bullet,-}}$. But since
  \[
    \wt R := \Box_{\wt g} - \eps^{-2}(\Psi_\eps)^*\Box_{\hat g_b} \in (\hat\rho\CI+\cC_\seop^{\infty,1,1})\Diffse^{2,0,2}(\wt M\setminus\wt K^\circ)
  \]
  (cf.\ condition~\eqref{ItEstLOp} in~\S\ref{SEst}) has one order of vanishing more at $\hat M$ than $\Box_{\wt g}$, we can further bound (again using Proposition~\ref{PropEstFnSobRel})
  \begin{align*}
    \|u\|_{H_{\seop,\eps}^{\sfs_0,\alpha_\circ,\hat\alpha}(\Omega_\eps)^{\bullet,-}} &\leq C\eps^{2-\hat\alpha}\eps^2\|(\Psi_\eps)_*(( \Box_{\wt g}-\wt R )u)\|_{H_\tbop^{\sfs_0+\eta',\alpha_\cD+2,0}(\hat\Omega_\eps)^{\bullet,-}} \\
      &\leq C\Bigl(\|\Box_{g_\eps}u\|_{H_{\seop,\eps}^{\sfs_0+\eta',\alpha_\circ,\hat\alpha-2}(\Omega_\eps)^{\bullet,-}} + \|\wt R u\|_{H_{\seop,\eps}^{\sfs_0+\eta',\alpha_\circ,\hat\alpha-2}(\Omega_\eps)^{\bullet,-}}\Bigr) \\
      &\leq C\Bigl(\|\Box_{g_\eps}u\|_{H_{\seop,\eps}^{\sfs_0+\eta',\alpha_\circ,\hat\alpha-2}(\Omega_\eps)^{\bullet,-}} + \|u\|_{H_{\seop,\eps}^{\sfs_0+2+\eta',\alpha_\circ,\hat\alpha-1}(\Omega_\eps)^{\bullet,-}}\Bigr).
  \end{align*}
  Plugging this into~\eqref{EqScUnifPf}, we obtain, for any $\beta\leq 1$ (chosen below),
  \begin{equation}
  \label{EqScUnifPf2}
    \|u\|_{H_{\seop,\eps}^{\sfs,\alpha_\circ,\hat\alpha}(\Omega_\eps)^{\bullet,-}} \leq C\Bigl(\|\Box_{g_\eps}u\|_{H_{\seop,\eps}^{\sfs,\alpha_\circ,\hat\alpha-2}(\Omega_\eps)^{\bullet,-}} + \|u\|_{H_{\seop,\eps}^{\sfs_0+2+\eta',\alpha_\circ,\hat\alpha-\beta}(\Omega_\eps)^{\bullet,-}}\Bigr)
  \end{equation}
  and have thus weakened the $\hat M$-decay order at the (acceptable) expense of a stronger se-regularity order.

  \pfstep{Control near $M_\circ$.} Let now $\chi_\circ=\chi(\frac{\eps}{|x|})$, where $\chi\in\CIc([0,c_0))$ equals $1$ near $0$, and $c_0>0$ is small and chosen momentarily. We then apply the triangle inequality to estimate the norm of $u=\chi_\circ u+(1-\chi_\circ)u$ in $H_{\seop,\eps}^{\sfs_0+2+\eta',\alpha_\circ,\hat\alpha-\beta}(\Omega_\eps)^{\bullet,-}$, with the second term having support disjoint from $M_\circ$, and use Proposition~\ref{PropEstFnSobRel} to estimate
  \begin{subequations}
  \begin{equation}
  \label{EqScUnifPfe1}
    \|u\|_{H_{\seop,\eps}^{\sfs_0+2+\eta',\alpha_\circ,\hat\alpha-\beta}(\Omega_\eps)^{\bullet,-}} \leq C\eps^{-\alpha_\circ}\Bigl(\|\chi_\circ u\|_{H_\eop^{\sfs_\circ+2+\eta'',\ell-\beta}(\Omega)^{\bullet,-}} + \|u\|_{H_{\seop,\eps}^{\sfs_0+2+\eta',-N,\hat\alpha-\beta}(\Omega_\eps)^{\bullet,-}}\Bigr)
  \end{equation}
  where $\ell=\hat\alpha-\alpha_\circ$ and $-N<\alpha_\circ$ is arbitrary but fixed, further $\sfs_\circ=\sfs_0|_{\Sse^*_{M_\circ}\wt M}$; and we fix $\eta''\in(\eta',\eta)$, and take $c_0$ so small that the $\eps$-independent extension of $\sfs_\circ+\eta''$ on $\supp\chi_\circ$ is pointwise larger than $\sfs_0+\eta'$. Choosing $\beta>0$ in~\eqref{EqScUnifPf2} so that $\ell-\beta\in(\frac12,\frac32)$ still, and using that the orders $\sfs_\circ+2+\eta'',\ell-\beta$ satisfy the requirements of Theorem~\ref{ThmEstMc} when $\beta,\eta''$ are sufficiently small, we can estimate
  \begin{equation}
  \label{EqScUnifPfe2}
  \begin{split}
    &\eps^{-\alpha_\circ}\|\chi_\circ u\|_{\He^{\sfs_\circ+2+\eta'',\ell-\beta}(\Omega)^{\bullet,-}} \\
    &\qquad \leq C\eps^{-\alpha_\circ}\|\Box_g(\chi_\circ u)\|_{\He^{\sfs_\circ+1+\eta'',\ell-2-\beta}(\Omega)^{\bullet,-}} \\
    &\qquad \leq C\Bigl(\|\Box_{g_\eps}(\chi_\circ u)\|_{H_{\seop,\eps}^{\sfs_0+1+\eta''',\alpha_\circ,\hat\alpha-2-\beta}(\Omega_\eps)^{\bullet,-}} + \|R_\circ u\|_{H_{\seop,\eps}^{\sfs_0+1+\eta''',\alpha_\circ,\hat\alpha-2-\beta}(\Omega_\eps)^{\bullet,-}}\Bigr),
  \end{split}
  \end{equation}
  where $R_\circ=(\Box_{\wt g}-\Box_g)\circ\chi_\circ\in(\rho_\circ\CI+\cC_\seop^{\infty,1,1})\Diffse^{2,0,2}(\wt M\setminus\wt K^\circ)$ has one order of vanishing more at $M_\circ$ than $\Box_{\wt g}$; moreover, $\eta'''\in(\eta'',\eta)$, and we work with sufficiently small $\eps$ so that $\sfs_\circ+3+\eta''\leq\sfs_0+3+\eta'''$ near $\supp\chi_\circ$. Using now that $[\Box_{g_\eps},\chi_\circ]\in(\rho_\circ^N\CI+\cC_\seop^{\infty,N,1})\Diffse^{1,0,2}$ for all $N$, this is further bounded by a uniform constant times
  \begin{equation}
  \label{EqScUnifPfe3}
    \|\chi_\circ\Box_{g_\eps}u\|_{H_{\seop,\eps}^{\sfs_0+1+\eta''',\alpha_\circ,\hat\alpha-2-\beta}(\Omega_\eps)^{\bullet,-}} + \|u\|_{H_{\seop,\eps}^{\sfs_0+3+\eta''',\alpha_\circ-1,\hat\alpha-\beta}(\Omega_\eps)^{\bullet,-}}.
  \end{equation}
  \end{subequations}

  \pfstep{Conclusion.} Using~\eqref{EqScUnifPfe1}--\eqref{EqScUnifPfe3} to estimate the error term of~\eqref{EqScUnifPf2} gives the uniform estimate
  \begin{equation}
  \label{EqScUnifPfFinal}
    \|u\|_{H_{\seop,\eps}^{\sfs,\alpha_\circ,\hat\alpha}(\Omega_\eps)^{\bullet,-}} \leq C\Bigl(\|\Box_{g_\eps}u\|_{H_{\seop,\eps}^{\sfs,\alpha_\circ,\hat\alpha-2}(\Omega_\eps)^{\bullet,-}} + \|u\|_{H_{\seop,\eps}^{\sfs_0+3+\eta''',\alpha_\circ-1,\hat\alpha-\beta}(\Omega_\eps)^{\bullet,-}}\Bigr).
  \end{equation}
  But since $\sfs_0+3+\eta'''<\sfs$, the error term here is bounded by a constant times
  \[
    \|u\|_{H_{\seop,\eps}^{\sfs,\alpha_\circ-1,\hat\alpha-\beta}(\Omega_\eps)^{\bullet,-}} \leq C\eps^\beta \|u\|_{H_{\seop,\eps}^{\sfs,\alpha_\circ,\hat\alpha}(\Omega_\eps)^{\bullet,-}},
  \]
  and thus can be absorbed into the left hand side for sufficiently small $\eps>0$. This finishes the proof of~\eqref{EqScUnif}.
\end{proof}

\begin{rmk}[Initial value problems]
\label{RmkScUnifIVP}
  One can convert initial value problems for $\Box_{g_\eps}$ (with trivial source terms, for simplicity of exposition) to forcing problems in the usual fashion, say on a standard domain $\wt\Omega$ associated with $\Omega_{0,t_1,r_0}$ where $t_1,r_0>0$. To wit, using standard hyperbolic theory one can solve wave equations for $g_\eps$ on domains $\frac{t}{\hat\rho}\in[0,\delta]$ where $\delta>0$ is small enough so that $\{\frac{t}{\hat\rho}=\delta'\}$ is spacelike on $\wt\Omega$ for $\delta'\in[0,\delta]$. Using a cutoff function $\chi=\chi(\frac{t}{\hat\rho})$ which vanishes on $[\frac{\delta}{2},\infty]$ and equals $1$ on $[0,\frac{\delta}{4}]$, the continuation of this very short time solution $u_{\rm in}$ can be found by solving the forward problem for $\Box_{g_\eps}v=-[\Box_{g_\eps},\chi]u_{\rm in}$: then $u_{\rm in}+v$ solves the initial value problem. We leave the details to the interested reader.
\end{rmk}

\subsubsection{Uniform bounds on small domains without the \texorpdfstring{$M_\circ$-normal operator}{Mo-normal operator}}
\label{SssScUnifSm}

In a spirit similar to \cite[Proposition~3.17]{HintzConicWave} and its proof, one can establish the uniform estimate~\eqref{EqScUnif} on standard domains associated with $\Omega=\Omega_{t_0,t_1,r_0}$ when $t_1-t_0,r_0$ are sufficiently small by only utilizing the se-regularity estimate and the inversion of the $\hat M$-normal operator (i.e.\ $\Box_{\hat g_b}$ above), but \emph{without needing the edge normal operator inversion} (Theorem~\ref{ThmEstMc}). The idea is that in the estimate~\eqref{EqScUnifPf2}, the error term is small compared to the left hand side when the domain is sufficiently localized near $\hat M$, for if $\hat\rho\lesssim\lambda\ll 1$ on $\Omega_\eps$, then the error term is, roughly, bounded by $C\lambda^\beta<\frac12$ times the left hand side and thus can be absorbed. (This observation will become important in \cite{HintzGlueLocIII} where the inversion of the $\hat M$-normal operator estimate loses $\hat t$-decay due to the presence of zero energy bound states and resonances, and the usage of a $M_\circ$-normal operator estimate would become cumbersome.) We thus explain this in some detail. Until (and including) Lemma~\ref{LemmaScUnifSmOp}, we work with general operators $\wt L$ as introduced in~\eqref{ItEstLBundle}--\eqref{ItEstLOp} in~\S\ref{SEst}.

Consider for $\lambda>0$ the map
\begin{equation}
\label{EqScUnifSmResc}
  \wt S_\lambda \colon (0,1)\times\R\times\R^3 \ni (\eps,t',x') \mapsto (\lambda\eps,\ t_0+\lambda t',\ \lambda x').
\end{equation}
In the coordinates $\hat t'=\frac{t'}{\eps}$, $\hat x'=\frac{x'}{\eps}$, and $\hat t=\frac{t-t_0}{\eps}$, $\hat x=\frac{x}{\eps}$, this map is given by $(\eps,\hat t',\hat x')\mapsto(\lambda\eps,\hat t,\hat x)=(\lambda\eps,\hat t',\hat x')$. At $\eps=0$ and in polar coordinates, this map is the restriction to $r'>0$ of the rescaling map
\[
  S_\lambda \colon \R\times[0,\infty)\times\Sph^2 \ni (t',r',\omega) \mapsto (t_0+\lambda t',\lambda r',\omega)
\]
from \cite[Proposition~3.17]{HintzConicWave}. More geometrically, define
\[
  \wt M':=[[0,1)_\eps\times(\R_{t'}\times\R^3_{x'});\{0\}\times\R\times\{0\}],
\]
where we use the coordinates $t',x'$ on $\wt M'$; set $\hat x'=\frac{x'}{\eps}$. Then $\wt S_\lambda$ extends by continuity from~\eqref{EqScUnifSmResc} to a diffeomorphism of $\wt M'$ fixing $\hat M'_0$, the fiber of the front face of $\wt M'$ over $t'=0$. By identifying a neighborhood of $\hat M_0'$ with a neighborhood of $\hat M_{t_0}$ via $t=t_0+t'$, $x=x'$, the map $\wt S_\lambda$ maps a neighborhood of $\hat M'_0$ diffeomorphically to a neighborhood of $\hat M_{t_0}$.

Let now
\begin{subequations}
\begin{equation}
\label{EqScUnifDomain1}
  \Omega':=\Omega_{0,1,1}\subset\R_{t'}\times[0,\infty)_{r'}\times\Sph^2,
\end{equation}
which is a standard domain for the Minkowski metric, and write $\wt\Omega'$ for the associated standard domain in $\wt M'$,
\begin{equation}
\label{EqScUnifDomain2}
  \wt\Omega'\subset \wt M'\setminus(\wt K')^\circ, \qquad
  \wt K' := \{ |\hat x'|\leq\bhm \}.
\end{equation}
\end{subequations}
Then $\Omega_{(\lambda)}:=S_\lambda(\Omega')=\Omega_{t_0,t_0+\lambda,\lambda}$ is a (small) standard domain in $(M,g)$ when $\lambda$ is small, and we can study $\wt g$ and $\wt L=\Box_{\wt g}$ on the associated (small) standard domain $\wt\Omega_{(\lambda)}=\wt S_\lambda(\wt\Omega')\subset\wt M$ uniformly in $\lambda$ by considering
\begin{equation}
\label{EqScUnifSmDefs}
  \wt g_{(\lambda)} := \lambda^{-2}\wt S_\lambda^*\wt g,\qquad
  \wt L_{(\lambda)} := \lambda^2\wt S_\lambda^*\wt L
\end{equation}
on the fixed domain $\wt\Omega'$. Here, $\wt L_{(\lambda)}$ acts on sections of the bundle $\wt S_\lambda^*\wt E$. In a neighborhood of $\hat M_{t_0}$ in $\wt M$, we fix an identification of $\wt E$ with the pullback of $\hat\cE=\wt E|_{\hat M_{t_0}}$ in~\eqref{EqEstLBundle} along $(\eps,t,\hat x)\mapsto(t_0,\hat x)$; we pull this back to an identification of $\wt S_\lambda^*\wt E$ with $[0,1)_\eps\times\R_{t'}\times\hat\cE$.

\begin{lemma}[Limit of rescalings: metrics]
\label{LemmaScUnifSmMetric}
  Write $\wt g_b\in\CI(\wt M'\setminus(\wt K')^\circ;S^2\wt T^*\wt M')$ for the family of Kerr metrics with parameters $\eps b$, i.e.
  \begin{equation}
  \label{EqScUnifSmMetricKerr}
    (\wt g_b)_{\mu\nu}(\eps,t,x)=(\hat g_{\eps b})_{\hat\mu\hat\nu}(x) = (\hat g_b)_{\hat\mu\hat\nu}(x/\eps),
  \end{equation}
  where on the left, resp.\ right we compute the metric coefficients in the coordinates $z=(t,x)$, resp.\ $\hat z=(\hat t,\hat x)$.\footnote{In other words, $\wt g_b|_{M'_\eps}$ is the metric of a Kerr black hole with parameters $\eps\bhm,\eps\bha$, cf.\ \eqref{EqIResc}.} Then, for sufficiently small $\lambda_0>0$, the following statements hold.
  \begin{enumerate}
  \item\label{ItScUnifSmMetricDiff} We have
    \begin{equation}
    \label{EqScUnifSmMetric}
      \wt g_{(\lambda)} - \wt g_b \in \lambda L^\infty\bigl((0,\lambda_0]_\lambda; \hat\rho\CI_\seop(\wt\Omega';S^2\wt T^*\wt M')\bigr);
    \end{equation}
  \item\label{ItScUnifSmMetricFlow} the conclusions of Lemma~\usref{LemmaGlDynStdFlow} hold for $\wt g,\wt\Omega_{(\lambda)}$ for all $\eps\in(0,1)$;
  \item\label{ItScUnifSmMetricMono} an admissible order function $\sfs$ for $\wt g_b$ (relative to some weights $\alpha_\circ,\hat\alpha$) on $\wt\Omega'$ defined as in~\eqref{EqEstAdmIndEx} is admissible also for $\wt g_{(\lambda)}$ on $\wt\Omega'$ for all $\lambda\in(0,\lambda_0]$.
  \end{enumerate}
\end{lemma}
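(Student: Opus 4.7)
The strategy is to leverage the $\wt S_\lambda$-invariance $\wt g_b = \lambda^{-2}\wt S_\lambda^*\wt g_b$, which is verified by direct substitution in~\eqref{EqScUnifSmMetricKerr} using $(\lambda x')/(\lambda\eps) = x'/\eps$. This reduces all three parts to perturbative comparisons between $\wt g_{(\lambda)}$ and the exact Kerr reference $\wt g_b$ on the fixed domain $\wt\Omega'$.

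For Part~\eqref{ItScUnifSmMetricDiff}, the invariance gives $\wt g_{(\lambda)} - \wt g_b = \lambda^{-2}\wt S_\lambda^*\wt h$ with $\wt h := \wt g - \wt g_b$. Since Definition~\ref{DefGl}\eqref{ItGlwtgKerr} guarantees $\wt g|_{\hat M} = \wt g_b|_{\hat M}$, the section $\wt h$ vanishes at $\hat M$, so $\wt h \in \hat\rho\CI_\seop(\wt M\setminus\wt K^\circ;S^2\wt T^*\wt M)$; write $\wt h_{\mu\nu} = \hat\rho \cdot a_{\mu\nu}$ with $a \in \CI_\seop$. The key identity $\hat\rho\circ\wt S_\lambda = \lambda\hat\rho'$ (with $\hat\rho'=\sqrt{\eps^2+|x'|^2}$) yields $(\wt g_{(\lambda)}-\wt g_b)_{\mu\nu} = \lambda\hat\rho' \cdot(\wt S_\lambda^*a_{\mu\nu})$. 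Uniform boundedness of $\wt S_\lambda^*a$ in $\CI_\seop(\wt\Omega')$ as $\lambda\in(0,\lambda_0]$ follows because basic se-vector fields transform as $\hat\rho'\pa_{t'} = \lambda \wt S_\lambda^*(\hat\rho\pa_t)$ and $\hat\rho'\pa_{x'^i} = \lambda\wt S_\lambda^*(\hat\rho\pa_{x^i})$, so se-derivatives only cost factors of $\lambda \leq 1$.

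For Part~\eqref{ItScUnifSmMetricFlow}, first note that $\wt\Omega_{(\lambda)}\cap M_\eps$ is nonempty only for $\eps$ satisfying $\eps\bhm < 3\lambda$, so the statement ``for all $\eps\in(0,1)$'' is effectively a uniformity over a range of size $\cO(\lambda)$. For each such $\eps$, $\wt g_b|_{M_\eps}$ is \emph{precisely} the subextremal Kerr metric with parameters $\eps b$, so the conclusions of Lemma~\ref{LemmaGlDynStdFlow} hold for $(\wt g_b, \wt\Omega_{(\lambda)})$ uniformly in $\eps$ by direct verification from the Kerr structure: past timelike character of $\dd t$ (from Lemma~\ref{LemmaGlCoord}), future spacelike character of $\dd\hat r$ at $\hat r = \bhm$, and the spacelike character of the outer hypersurface (from Definition~\ref{DefGlDynStdM}\eqref{ItGlDynStdFin}). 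These are all open conditions in $L^\infty$, and the null-bicharacteristic exit property of Lemma~\ref{LemmaGlDynStdFlow}\eqref{ItGlDynStdFlow} is similarly open, so they persist under the $\cO(\lambda)$ perturbation from Part~\eqref{ItScUnifSmMetricDiff} once $\lambda_0$ is chosen sufficiently small.

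For Part~\eqref{ItScUnifSmMetricMono}, the non-monotonicity admissibility conditions in Definition~\ref{DefEstAdm} (constancy near radial/trapped sets, $t$-invariance on $\hat M$, and the threshold conditions) are encoded directly in the form~\eqref{EqEstAdmIndEx} of $\sfs$ and hold independently of the metric. The monotonicity $\pm\hat\rho^2 H_{\wt G_b}\sfs\leq 0$ for $\wt g_b$ reproduces the calculation of Lemma~\ref{LemmaGlDynStdMono}\eqref{ItGlDynStdMonoFn}: the restriction of $\hat\rho^2 H_{\wt G_b}$ to $\hat M$ is the zero-energy Kerr Hamiltonian flow (Remark~\ref{RmkGlDynKerrLoc}), and on $M_\circ$ it reduces to the Minkowski edge flow~\eqref{EqGlDynHamEdge} for which~\eqref{EqGlDynStdMonotone} is ensured by Lemma~\ref{LemmaGlDynStdEx}. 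The resulting non-strict inequality is strict off the constancy region of $\sfs$, hence an open condition that persists under the $\cO(\lambda)$ perturbation of the Hamilton vector field coming from Part~\eqref{ItScUnifSmMetricDiff}. The principal obstacle throughout is the uniformity in $\eps$ in Parts~\eqref{ItScUnifSmMetricFlow}--\eqref{ItScUnifSmMetricMono}; this is ultimately resolved by the dual facts that $\wt g_b$ carries the \emph{exact} Kerr structure at every $\eps$, while the perturbation $\wt g - \wt g_b$ is uniformly small in $\hat\rho\CI_\seop$, so the perturbation step degenerates neither as $\eps \to 0^+$ nor as $\eps$ approaches the upper end of its effective range.
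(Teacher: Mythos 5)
Your proof is correct and takes essentially the same approach as the paper: decompose $\wt g = \wt g_b + \wt h$ with $\wt h$ vanishing at $\hat M$, use the $\wt S_\lambda$-invariance of $\wt g_b$, and extract the factor $\lambda$ from $\hat\rho\circ\wt S_\lambda = \lambda\hat\rho'$; the paper phrases the same decomposition explicitly in the coordinates $(t,\rho_\circ,\hat\rho,\omega)$ near $\pa M_\circ'$. One small slip: the correct relation is $\hat\rho'\pa_{t'}=\wt S_\lambda^*(\hat\rho\pa_t)$ with \emph{no} extra factor of $\lambda$ (since $\wt S_\lambda^*\pa_t=\lambda^{-1}\pa_{t'}$ cancels the factor from $\hat\rho\circ\wt S_\lambda=\lambda\hat\rho'$), so the basic se-vector fields on $\wt M'$ and $\wt M$ are exchanged by $\wt S_\lambda$ without scaling, which immediately gives the uniform $\CI_\seop$-boundedness of $\wt S_\lambda^*a$; this does not affect your conclusion.
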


We can equivalently define $\wt g_b$ as
\begin{equation}
\label{EqScUnifSmMetricPsi}
  \wt g_b|_{M'_\eps} = \eps^2\Psi_\eps^*\hat g_b,\qquad
  \Psi_\eps\colon(\eps,t',x')\mapsto\Bigl(\frac{t'}{\eps},\frac{x'}{\eps}\Bigr)\in\cM.
\end{equation}
We remark that the admissibility of $\sfs$ for $\wt g_b$ is equivalent to the following two conditions: $\pa_t\sfs|_{\hat M}=0$; and $\sfs|_{\hat M_t}$, $\alpha_\circ-\hat\alpha$ are Kerr admissible orders (see Definition~\ref{ItEstAdmKerr}) for one (and thus all) $t\in I_\cC$.

\begin{proof}[Proof of Lemma~\usref{LemmaScUnifSmMetric}]
  We verify part~\eqref{ItScUnifSmMetricDiff} near the boundary $\pa M_\circ'$ of the lift $M_\circ'\subset\wt M'$ of $\{0\}\times(\R\times\R^3)$. Using the coordinates $t$, $\rho_\circ=\frac{\eps}{|x|}$, $\hat\rho=|x|$, $\omega=\frac{x}{|x|}$, we have
  \[
    \wt g_{\mu\nu}(t,\rho_\circ,\hat\rho,\omega) = (\hat g_b)_{\hat\mu\hat\nu}(\rho_\circ^{-1}\omega) + \hat\rho\wt h_{\mu\nu}(t,\rho_\circ,\hat\rho,\omega)
  \]
  where $\wt h_{\mu\nu}$ is of class $\CI+\rho_\circ\CI_\seop\subset\CI_\seop$. Using the coordinates $t',\rho_\circ'=\frac{\eps}{|x'|}$, $\hat\rho'=|x'|$, $\omega$ on $\wt M'$, the map $\wt S_\lambda$ is given by
  \begin{equation}
  \label{EqScUnifSmMetCoord}
    (t',\rho_\circ',\hat\rho',\omega)\mapsto(t,\rho_\circ,\hat\rho,\omega)=(t_0+\lambda t',\rho_\circ',\lambda\hat\rho',\omega)
  \end{equation}
  and pulls back $\dd z^\mu\,\dd z^\nu$ to $\lambda^2\,\dd z'{}^\mu\,\dd z'{}^\nu$; therefore,
  \[
    (\wt g_{(\lambda)})_{\mu\nu}(t',\rho_\circ',\hat\rho',\omega) = (\hat g_b)_{\hat\mu\hat\nu}(\rho_\circ'{}^{-1}\omega) + \lambda\hat\rho \wt h_{\mu\nu}(t_0+\lambda t',\rho'_\circ,\lambda\hat\rho',\omega).
  \]
  But $\wt h_{\mu\nu}(t_0+\lambda t',\rho'_\circ,\lambda\hat\rho,\omega)$ is (a fortiori) uniformly bounded (for $\lambda\in(0,\lambda_0]$) in $\CI_\seop$. This proves~\eqref{EqScUnifSmMetric}.

  Parts~\eqref{ItScUnifSmMetricFlow} and~\eqref{ItScUnifSmMetricMono} follow from the convergence $\wt g_{(\lambda)}\to\wt g_b$ as $\lambda\to 0$ and an inspection of the proof of Lemma~\ref{LemmaGlDynStdMono}.
\end{proof}

\begin{lemma}[Limit of rescalings: wave operators]
\label{LemmaScUnifSmOp}
  We use the notation of Lemma~\usref{LemmaScUnifSmMetric} and~\eqref{EqScUnifSmMetricPsi}, and consider an operator $\wt L$ of the general form specified in~\S\usref{SEst}. Write $\wt L_b\in\hat\rho^{-2}\Diffse^2(\wt M'\setminus\wt K';\cE)$ for the operator family with $\wt L_b|_{M'_\eps}=\eps^{-2}\Psi_\eps^*L$ where $L$ is the $\hat M$-normal operator~\eqref{EqEstL}. Then
  \begin{equation}
  \label{EqScUnifSmOp}
    \wt L_{(\lambda)} - \wt L_b \in \lambda L^\infty\bigl((0,\lambda_0]; \hat\rho^{-2}\CI_\seop\Diffse^2(\wt M'\setminus\wt K';\cE)\bigr).
  \end{equation}
\end{lemma}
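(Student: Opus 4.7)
The plan is to reduce the lemma to a coefficient-level computation in the two standard coordinate systems on $\wt M'$ (near-field $(\eps,t',\hat x')$ with $\hat x'=x'/\eps$, and far-field $(\eps,t',x')$), exploiting the fact that $\wt S_\lambda$ pulls se-structure back to se-structure, and that $\eps^{-2}\Psi_\eps^*$ does the same (compare Lemma~\ref{LemmaFseRelGeo}). First I would write $\wt L$ in near-field coordinates as
\[
  \wt L = \hat\rho^{-2}\sum_{i+|\beta|\leq 2} b_{i\beta}(\eps,t,\hat x)(\hat\rho\pa_t)^i(\hat\rho\pa_x)^\beta, \qquad b_{i\beta}=a^{(0)}_{i\beta}+a^{(1)}_{i\beta},
\]
where $a^{(0)}_{i\beta}\in\CI(\wt M)$ and $a^{(1)}_{i\beta}\in\eps\CI_\seop$ (using $\cC_\seop^{\infty,1,1}=\eps\CI_\seop$), and where by the assumption~\eqref{ItEstLOp} the $\hat M$-normal operator values $a^{(0)}_{i\beta}(0,t,\hat x)$ are independent of $t$.

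Next I would compute the two operators explicitly. Using $\la\hat x'\ra\pa_{\hat t'}=\hat\rho'\pa_{t'}$ and $\la\hat x'\ra\pa_{\hat x'}=\hat\rho'\pa_{x'}$ together with $\eps^{-2}\la\hat x'\ra^{-2}=\hat\rho'^{-2}$ on $\wt M'$, the pullback formula yields
\[
  \wt L_b = \hat\rho'^{-2}\sum_{i+|\beta|\leq 2} a^{(0)}_{i\beta}(0,t_0,\hat x')(\hat\rho'\pa_{t'})^i(\hat\rho'\pa_{x'})^\beta.
\]
Similarly, from $\wt S_\lambda^*(\hat\rho\pa_t)=\hat\rho'\pa_{t'}$, $\wt S_\lambda^*(\hat\rho\pa_x)=\hat\rho'\pa_{x'}$, and $\wt S_\lambda^*\hat\rho^{-2}=\lambda^{-2}\hat\rho'^{-2}$,
\[
  \wt L_{(\lambda)} = \hat\rho'^{-2}\sum_{i+|\beta|\leq 2} b_{i\beta}(\lambda\eps,t_0+\lambda t',\hat x')(\hat\rho'\pa_{t'})^i(\hat\rho'\pa_{x'})^\beta,
\]
since in near-field coordinates $\wt S_\lambda$ acts as $(\eps,t',\hat x')\mapsto(\lambda\eps,t_0+\lambda t',\hat x')$, fixing the fiber coordinate $\hat x'$. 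Hence the difference $\wt L_{(\lambda)}-\wt L_b$ has the same se-differential structure, and it suffices to bound, uniformly in $\lambda\in(0,\lambda_0]$, the coefficients $b_{i\beta}(\lambda\eps,t_0+\lambda t',\hat x')-a^{(0)}_{i\beta}(0,t_0,\hat x')$ in $\lambda L^\infty(\CI_\seop(\wt\Omega'))$.

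For the smooth part, Taylor's theorem along the curve $s\mapsto(s\eps,t_0+st',\hat x')$, using the smoothness of $a^{(0)}_{i\beta}$ on $\wt M$, gives
\[
  a^{(0)}_{i\beta}(\lambda\eps,t_0+\lambda t',\hat x')-a^{(0)}_{i\beta}(0,t_0,\hat x') = \lambda\,r^{(0)}_{i\beta}(\lambda,\eps,t',\hat x'),
\]
with $r^{(0)}_{i\beta}$ uniformly bounded in $\CI(\wt\Omega')\subset\CI_\seop(\wt\Omega')$ for $\lambda\in[0,\lambda_0]$. For the se-regular remainder $a^{(1)}_{i\beta}\in\eps\CI_\seop$, write $a^{(1)}_{i\beta}=\eps q_{i\beta}$ with $q_{i\beta}\in\CI_\seop(\wt M)$; then
\[
  a^{(1)}_{i\beta}(\lambda\eps,t_0+\lambda t',\hat x') = \lambda\eps\cdot(\wt S_\lambda^* q_{i\beta}),
\]
and $\wt S_\lambda^* q_{i\beta}$ is bounded in $\CI_\seop(\wt\Omega')$ uniformly in $\lambda$ because $\wt S_\lambda$ is a diffeomorphism of $\wt M'$ onto an open subset of $\wt M$ that preserves the basic se-vector fields $\hat\rho\pa_t,\hat\rho\pa_x$; since $\eps=\rho'_\circ\hat\rho'\in\CI(\wt M')$ is bounded, this contribution is in $\lambda L^\infty(\CI_\seop)$ as required. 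Summing the contributions produces~\eqref{EqScUnifSmOp}.

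The main obstacle, and the only point requiring real care, is checking that everything extends \emph{uniformly} across the two other regions of $\wt\Omega'$: the far-field chart (where $|x'|$ is bounded away from $0$ and the Taylor expansion is straightforward in the coordinates $(\eps,t',x')$) and the corner $M'_\circ\cap\hat M'$ (where one must use coordinates $(\rho'_\circ,\hat\rho',t',\omega)$). In all three charts, the same calculation applies verbatim once one invokes the smoothness of $a^{(0)}_{i\beta}$ on the entire blow-up $\wt M$—in particular at the corner $M_\circ\cap\hat M$—together with the fact (straightforward from the coordinate expressions for $\wt S_\lambda$ in each chart, cf.\ the proof of Lemma~\ref{LemmaScUnifSmMetric}(1)) that $\wt S_\lambda$ extends to a smooth $\lambda$-family of se-preserving diffeomorphisms onto its image. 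A partition of unity on $\wt\Omega'$ subordinate to the three charts then patches the pointwise coefficient bounds into the required global statement.
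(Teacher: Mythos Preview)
Your proof is correct and takes essentially the same approach as the paper: reduce to a coefficient-level comparison, then use Taylor expansion for the smooth part and the explicit $\eps$-factor for the se-regular remainder. The only organizational difference is that you carry out the detailed computation in near-field coordinates $(\eps,t',\hat x')$ and defer the corner to a partition-of-unity remark, whereas the paper works directly in the corner chart $(t,\rho_\circ,\hat\rho,\omega)$ near $\pa\hat M_{t_0}$ and expresses $\wt L$ via the building blocks $f\pa_z^2$, $\hat\rho^{-1}f\pa_z$, $\hat\rho^{-2}f$ rather than your form $\hat\rho^{-2}\sum b_{i\beta}(\hat\rho\pa)^i(\hat\rho\pa)^\beta$.
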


For the present case $\wt L=\Box_{\wt g}$, we have $\wt L_b:=\Box_{\wt g_b}$, i.e.\ the family of wave operators on Kerr spacetimes with parameters $\eps b$.

\begin{proof}[Proof of Lemma~\usref{LemmaScUnifSmOp}]
  We describe the coefficients of $\wt L$ in the coordinates $t$, $\rho_\circ=\frac{\eps}{|x|}$, $\hat\rho=|x|=r$, $\omega=\frac{x}{|x|}$ near $\pa\hat M_{t_0}$, and use $r\pa_z$, $z=(t,x)$, as a local frame of $\Vse(\wt M)$. Using the (fixed) identifications $\wt E|_{(t,\rho_\circ,\hat\rho,\omega)}\cong\hat\cE|_{(\rho_\circ,\omega)}$, and locally trivializing the latter bundle, it suffices to prove the Lemma for scalar operators. Since $\hat\rho^{-1}\Vse(\wt M)$ is then locally spanned by $\hat\rho^{-1} r\pa_{z^\mu}=\pa_{z^\mu}$, we then note that for $f=f(t,\rho_\circ,\hat\rho,\omega)\in\CI+\eps\CI_\seop=\CI+\rho_\circ\hat\rho\CI_\seop$, and using the same coordinates on $\wt M'$ as in~\eqref{EqScUnifSmMetCoord},
  \[
    \bigl(\lambda\wt S_\lambda^* ( f \pa_z )\bigr)(t',\rho'_\circ,\hat\rho',\omega) = f(t_0+\lambda t',\rho'_\circ,\lambda\hat\rho',\omega) \pa_{z'}\qquad\qquad (z'=(t',x'))
  \]
  is equal to $f(t_0,\rho'_\circ,0,\omega)\pa_{z'}=\eps^{-1} f(t_0,\rho'_\circ,0,\omega)\pa_{\hat z'}$ (where $\hat z'=\frac{z'}{\eps}$) plus an error term of class $\lambda L^\infty((0,\lambda_0];\hat\rho'{}^{-1}\CI_\seop\Vse(\wt M'))$. In a similar vein,
  \[
    \bigl(\lambda\wt S_\lambda^*(\hat\rho^{-1}f)\bigr)(t',\rho'_\circ,\hat\rho',\omega) = \hat\rho'{}^{-1}f(t_0+\lambda t',\rho'_\circ,\lambda\hat\rho',\omega)
  \]
  is equal to $\eps^{-1}\rho'_\circ f(t_0,\rho'_\circ,0,\omega)$ modulo $\lambda L^\infty([0,\lambda_0);\hat\rho'{}^{-1}\CI_\seop(\wt M'))$. Writing $\wt L$ as a sum of products of terms of type $f\pa_z^2$, $\hat\rho^{-1}f\pa_z$, $\hat\rho^{-2}f$, we thus obtain~\eqref{EqScUnifSmOp}.
\end{proof}

\bigskip

Returning to the setting of Theorem~\ref{ThmScUnif}, we apply the above for $\wt L=\Box_{\wt g}$ and the domain $\wt\Omega'$ from~\eqref{EqScUnifDomain1}--\eqref{EqScUnifDomain2}. Thus $\wt L_{(\lambda)}=\Box_{\wt g_{(\lambda)}}$ in the notation~\eqref{EqScUnifSmDefs}; we write $g_{(\lambda),\eps}=\wt g_{(\lambda)}|_{M'_\eps}$ and $g_{b,\eps}=\wt g_b|_{M'_\eps}$. Fix an order function $\sfs$ on $\wt\Omega'$ as in~\eqref{EqEstAdmIndEx} so that\footnote{This is slightly weaker than the requirement in Theorem~\ref{ThmScUnif}.} $\sfs,\alpha_\circ,\hat\alpha$ and $\sfs-2,\alpha_\circ,\hat\alpha$ are admissible for the metric $\wt g_b$ (given by~\eqref{EqScUnifSmMetricKerr}). We then have an analogue of~\eqref{EqScUnifPf},
\begin{equation}
\label{EqScUnifPfLambda}
  \|u\|_{H_{\seop,\eps}^{\sfs,\alpha_\circ,\hat\alpha}(\Omega'_\eps)^{\bullet,-}} \leq C\Bigl(\|\Box_{g_{(\lambda),\eps}}u\|_{H_{\seop,\eps}^{\sfs,\alpha_\circ,\hat\alpha-2}(\Omega'_\eps)^{\bullet,-}} + \|u\|_{H_{\seop,\eps}^{\sfs_0,\alpha_\circ,\hat\alpha}(\Omega'_\eps)^{\bullet,-}}\Bigr),
\end{equation}
for all $\lambda\leq\lambda_0$ and all $\eps\in(0,1)$ where $\lambda_0>0$ is sufficiently small, and $C$ is independent of $\lambda,\eps$. (Cf.\ Remark~\ref{RmkEstStdUnif}). We choose here $\sfs_0<\sfs-2-\eta$ for some $\eta>0$ so that $\sfs_0,\alpha_\circ,\hat\alpha$ are admissible. Theorem~\ref{ThmSc3b}, together with Proposition~\ref{PropEstFnSobRel} to pass between se- and 3b-norms, then gives the estimate
\[
  \|u\|_{H_{\seop,\eps}^{\sfs_0,\alpha_\circ,\hat\alpha}(\Omega'_\eps)^{\bullet,-}} \leq C\|\Box_{g_{b,\eps}}u\|_{H_{\seop,\eps}^{\sfs_0+\eta,\alpha_\circ,\hat\alpha-2}(\Omega'_\eps)^{\bullet,-}}
\]
for all sufficiently small $\eps$. In view of Lemma~\ref{LemmaScUnifSmOp}, this is further bounded by a constant (independent of $\eps,\lambda$) times
\begin{equation}
\label{EqScUnifPfOrig}
  \|\Box_{g_{(\lambda),\eps}}u\|_{H_{\seop,\eps}^{\sfs_0+\eta,\alpha_\circ,\hat\alpha-2}(\Omega'_\eps)^{\bullet,-}} + C\lambda\|u\|_{H_{\seop,\eps}^{\sfs_0+2+\eta,\alpha_\circ,\hat\alpha}(\Omega'_\eps)^{\bullet,-}}.
\end{equation}
For sufficiently small $\lambda$, the second term here can be absorbed into the left hand side of~\eqref{EqScUnifPfLambda}. When $u$ is related to a function on $\wt M$ via pullback along $\wt S_\lambda$, this proves the estimate~\eqref{EqScUnif} for the original operator $\Box_{\wt g}$ on the domain $\wt\Omega_{(\lambda)}=\wt S_\lambda(\wt\Omega')$ for $\eps<\lambda$. We thus obtain the following result.

\begin{thm}[Uniform se-estimates for linear waves on small domains]
\label{ThmScUnifSm}
  Let $\wt\Omega'$ be as in~\eqref{EqScUnifDomain1}--\eqref{EqScUnifDomain2}. Let $\alpha_\circ,\hat\alpha\in\R$ with $\alpha_\circ-\hat\alpha\in(-\frac32,-\frac12)$, and fix an order function $\sfs$ as in~\eqref{EqEstAdmIndEx} so that $\sfs,\alpha_\circ-\hat\alpha$ and $\sfs-2,\alpha_\circ-\hat\alpha$ are Kerr-admissible. Then there exist $\lambda_0,\eps_0>0$ so that for all $\lambda\in(0,\lambda_0]$ and with $\wt\Omega_{(\lambda)}=\wt S_\lambda(\wt\Omega')$, the unique forward solution of $\Box_{g_\eps}u=f$ on $\Omega_{(\lambda),\eps}=\wt\Omega_{(\lambda)}\cap M_\eps$ satisfies the estimate
  \begin{equation}
  \label{EqScUnifSm}
    \|u\|_{H_{\seop,\eps}^{\sfs,\alpha_\circ,\hat\alpha}(\Omega_{(\lambda),\eps})^{\bullet,-}} \leq C \|f\|_{H_{\seop,\eps}^{\sfs,\alpha_\circ,\hat\alpha-2}(\Omega_{(\lambda),\eps})^{\bullet,-}}
  \end{equation}
  for all $\eps<\lambda\eps_0$; here $C=C(\sfs,\alpha_\circ,\hat\alpha,\lambda)$.
\end{thm}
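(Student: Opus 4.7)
The plan is to reduce the estimate on the small domain $\wt\Omega_{(\lambda)}$ to a uniform estimate on the fixed domain $\wt\Omega'$ via the diffeomorphism $\wt S_\lambda$, and then close the argument by playing off the ($\wt L_{(\lambda)}$-)se-regularity estimate against the Kerr normal operator estimate, using the smallness $\wt L_{(\lambda)} - \wt L_b = \cO(\lambda)$ from Lemma~\ref{LemmaScUnifSmOp} to absorb errors (rather than needing to invert the $M_\circ$-model as in Theorem~\ref{ThmScUnif}). Since $\wt S_\lambda$ is a diffeomorphism of $\wt M'$ fixing $\hat M'_0$ and intertwines $\Box_{g_\eps}$ with $\lambda^{-2}\Box_{g_{(\lambda),\eps/\lambda}}$, and since $\wt S_\lambda$ preserves the classes of se-/s-/3b-vector fields up to uniform bounds, pulling back via $\wt S_\lambda$ converts the estimate~\eqref{EqScUnifSm} on $\Omega_{(\lambda),\eps}$ (for $\eps < \lambda\eps_0$) into the analogous estimate on $\Omega'_{\eps/\lambda}$ for the rescaled metric $g_{(\lambda),\eps/\lambda}$, with constants that may depend on $\lambda$ only through the rescaling of the density; it thus suffices to prove, uniformly in $\lambda \in (0,\lambda_0]$ and $\eps \in (0,\eps_0]$, the estimate
\[
  \|u\|_{H_{\seop,\eps}^{\sfs,\alpha_\circ,\hat\alpha}(\Omega'_\eps)^{\bullet,-}} \leq C \|\Box_{g_{(\lambda),\eps}} u\|_{H_{\seop,\eps}^{\sfs,\alpha_\circ,\hat\alpha-2}(\Omega'_\eps)^{\bullet,-}}.
\]

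First, I invoke Theorem~\ref{ThmEstStd} for the family $\Box_{\wt g_{(\lambda)}}$ on the fixed standard domain $\wt\Omega'$. By Lemma~\ref{LemmaScUnifSmMetric}\eqref{ItScUnifSmMetricDiff}, the coefficients of $\wt g_{(\lambda)}$ lie in a bounded subset of $\wt g_b + \lambda\hat\rho\CI_\seop$, and by parts~\eqref{ItScUnifSmMetricFlow}--\eqref{ItScUnifSmMetricMono} the spacelike character of the boundary and the admissibility of $\sfs,\alpha_\circ,\hat\alpha$ are preserved; combined with the uniform-perturbation remark (Remark~\ref{RmkEstStdUnif}), this yields the symbolic estimate~\eqref{EqScUnifPfLambda} with a constant $C$ independent of $\lambda,\eps$. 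Here I choose an admissible $\sfs_0$ with $\sfs_0 < \sfs - 2 - \eta$ for some small $\eta > 0$ (which is possible because the hypothesis allows both $\sfs$ and $\sfs - 2$).

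Next, to control the error term $\|u\|_{H_{\seop,\eps}^{\sfs_0,\alpha_\circ,\hat\alpha}(\Omega'_\eps)^{\bullet,-}}$ at $\hat M$, I replace $\Box_{\wt g_{(\lambda)}}$ by the Kerr model $\Box_{\wt g_b}$, whose fiberwise action on $M'_\eps$ is $\eps^{-2}\Psi_\eps^*\Box_{\hat g_b}$. Using Proposition~\ref{PropEstFnSobRel} to identify the se-norm on $\Omega'_\eps$ with the corresponding 3b-norm on $\Psi_\eps(\Omega'_\eps)\subset\hat M_b$ at the cost of an arbitrarily small loss in the se-order (absorbed by $\eta$), and then applying the forward-solvability statement Theorem~\ref{ThmSc3b}\eqref{ItSc3bDom} (with the constant independent of $\eps$ because $\alpha_\cD := \alpha_\circ - \hat\alpha \in (-\tfrac{3}{2},-\tfrac{1}{2})$), I obtain
\[
  \|u\|_{H_{\seop,\eps}^{\sfs_0,\alpha_\circ,\hat\alpha}(\Omega'_\eps)^{\bullet,-}} \leq C\|\Box_{g_{b,\eps}} u\|_{H_{\seop,\eps}^{\sfs_0+\eta,\alpha_\circ,\hat\alpha - 2}(\Omega'_\eps)^{\bullet,-}}.
\]
Writing $\Box_{g_{b,\eps}} = \Box_{g_{(\lambda),\eps}} - (\Box_{g_{(\lambda),\eps}} - \Box_{g_{b,\eps}})$ and applying Lemma~\ref{LemmaScUnifSmOp}, which gives $\Box_{g_{(\lambda),\eps}} - \Box_{g_{b,\eps}} \in \lambda L^\infty((0,\lambda_0];\hat\rho^{-2}\CI_\seop\Diffse^2)$ uniformly, the error term is bounded by the expression~\eqref{EqScUnifPfOrig}, namely
\[
  C\|\Box_{g_{(\lambda),\eps}} u\|_{H_{\seop,\eps}^{\sfs_0 + \eta,\alpha_\circ,\hat\alpha-2}(\Omega'_\eps)^{\bullet,-}} + C\lambda\|u\|_{H_{\seop,\eps}^{\sfs_0 + 2 + \eta,\alpha_\circ,\hat\alpha}(\Omega'_\eps)^{\bullet,-}}.
\]
Since $\sfs_0 + 2 + \eta < \sfs$, the second summand is bounded by $C\lambda$ times the left hand side of~\eqref{EqScUnifPfLambda}. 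Fixing $\lambda_0 > 0$ small enough that $C\lambda_0 < \tfrac{1}{2}$ permits absorption into the left hand side, which completes the estimate on $\wt\Omega'$ and, after pulling back by $\wt S_\lambda^{-1}$, yields~\eqref{EqScUnifSm}.

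The only nonroutine point is verifying that the constants arising from Theorem~\ref{ThmEstStd} and Proposition~\ref{PropEstFnSobRel} are genuinely uniform in $\lambda$; this reduces to the uniform bound~\eqref{EqScUnifSmMetric} on the family $\wt g_{(\lambda)}$, which controls all the (micro)local constants (radial points, trapping, energy estimate) through Remark~\ref{RmkEstStdUnif}. Once this is in hand, the absorption in the final step closes the estimate, and the uniqueness/existence of the forward solution $u$ for each $\eps < \lambda\eps_0$ follows from standard hyperbolic theory applied fiberwise, exactly as in Remark~\ref{RmkScUnifIVP}.
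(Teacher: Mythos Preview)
Your proposal is correct and follows essentially the same route as the paper: work on the fixed domain $\wt\Omega'$ with the rescaled operator $\wt L_{(\lambda)}$, apply Theorem~\ref{ThmEstStd} uniformly in $\lambda$ (via Lemma~\ref{LemmaScUnifSmMetric} and Remark~\ref{RmkEstStdUnif}) to get~\eqref{EqScUnifPfLambda}, then control the error term using Theorem~\ref{ThmSc3b} and Proposition~\ref{PropEstFnSobRel}, replace $\Box_{g_{b,\eps}}$ by $\Box_{g_{(\lambda),\eps}}$ via Lemma~\ref{LemmaScUnifSmOp}, and absorb the $C\lambda$-term for small $\lambda_0$. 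The final pullback along $\wt S_\lambda$ is exactly how the paper transfers the estimate to $\wt\Omega_{(\lambda)}$.
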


The restriction on $\eps$ arises from the scaling of $\eps$ in~\eqref{EqScUnifSmResc}.

\begin{rmk}[Other domains]
\label{RmkScUnifSmShift}
  The same arguments can be applied for scalings of the domain $\Omega'=\Omega_{t_0,t_1,r_0}$ (instead of~\eqref{EqScUnifDomain1}) for any $t_0<t_1$ and $r_0>0$.
\end{rmk}

\begin{rmk}[Perturbations]
\label{RmkScUnifSmPert}
  If we add to $\wt L$ a perturbation of class $\hat\rho^{-2}\cC_\seop^{\infty,1,1}\Diffse^2$, then the estimate~\eqref{EqScUnifSm} remains valid \emph{for the same value of $\lambda_0$} and sufficiently small $\eps_0>0$ (depending only on the perturbation). Indeed, the perturbation causes an additional term on the right hand side in~\eqref{EqScUnifPfOrig} which is \emph{small} (due to the decay at $\eps=0$ of the perturbation) for sufficiently small $\eps>0$ and can thus be absorbed.
\end{rmk}

\subsection{Higher s-regularity}
\label{SsScS}

We proceed to show that higher s-regularity of the source term $f$ implies matching higher s-regularity of the solution $u$ of $\wt L u=f$. We shall only present the details on small domains as considered in Theorem~\ref{ThmScUnifSm}.

\begin{thm}[Uniform estimates for linear waves on small domains: s-regularity]
\label{ThmScS}
  We use the notation and assumptions of Theorem~\usref{ThmScUnifSm}, except we assume that $\sfs,\alpha_\circ-\hat\alpha$ and $\sfs-4,\alpha_\circ-\hat\alpha$ are Kerr-admissible, and $\sfs$ is as in~\eqref{EqEstAdmIndEx}. Let $k\in\N_0$, and suppose that $\wt L\in\hat\rho^{-2}(\CI+\cC_{\seop;\sop}^{(\infty;k),1,1})\Diffse^2$, i.e.\ its coefficients have s-regularity $k$. (Finite but sufficiently large se-regularity $d_0$, as in~\eqref{EqEstRadsOp}, with $d_0=d_0(\sfs)$, is sufficient.) Then the unique forward solution of $\Box_{g_\eps}u=f$ on $\Omega_{(\lambda),\eps}$ satisfies
  \begin{equation}
  \label{EqScS}
    \|u\|_{H_{(\seop;\sop),\eps}^{(\sfs;k),\alpha_\circ,\hat\alpha}(\Omega_{(\lambda),\eps})^{\bullet,-}} \leq C\|f\|_{H_{(\seop;\sop),\eps}^{(\sfs;k),\alpha_\circ,\hat\alpha-2}(\Omega_{(\lambda),\eps})^{\bullet,-}}
  \end{equation}
  for all $\lambda\in(0,\lambda_0]$ and $\eps<\lambda\eps_0$, with $C=C(\sfs,k,\alpha_\circ,\hat\alpha,\lambda)$. (Here $\lambda_0$ does not depend on $k$, and can moreover be taken to be uniform upon adding to $\wt L$ a bounded perturbation of class $\hat\rho^{-2}\cC_{\seop,\sop}^{(\infty;k),1,1}\Diffse^2$.)
\end{thm}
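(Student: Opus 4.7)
The plan is to argue by induction on $k$, with the base case $k=0$ being Theorem~\ref{ThmScUnifSm}. For the inductive step, I would mimic the strategy used in the proof of Theorem~\ref{ThmScUnifSm}---combine a microlocal regularity estimate with inversion of the Kerr model at $\hat M$, and absorb errors using the smallness of the domain---but work throughout in mixed $(se;s)$-Sobolev spaces. Two ingredients must be upgraded: (I) a mixed $(se;s)$-analogue of the microlocal estimate used in Theorem~\ref{ThmScUnifSm}, obtained by combining Propositions~\ref{PropEstRadsIn}, \ref{PropEstRadsOut}, \ref{PropEstHors}, \ref{PropEstTraps} with Lemmas~\ref{LemmaFVarsseEll}, \ref{LemmaFVarsseProp} (and Proposition~\ref{PropEstEn} for bridging gaps at final hypersurfaces); and (II) a mixed $(se;s)$-analogue of the Kerr 3b-estimate Theorem~\ref{ThmSc3b}, relying on $\hat t$-translation invariance of $\Box_{\hat g_b}$.

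Concretely, for (I), I would extend $f$ to $\tilde f$ on a slightly enlarged small domain $\Omega_{(\lambda'),\eps}$ with $\lambda<\lambda'\leq\lambda_0$, and take $\tilde u$ to be the unique forward solution of $\wt L\tilde u=\tilde f$ there, existing with uniform se-bounds by Theorem~\ref{ThmScUnifSm}. The combined $(se;s)$-microlocal estimates then yield
\[
  \|u\|_{H^{(\sfs;k),\alpha_\circ,\hat\alpha}_{(\seop;\sop),\eps}(\Omega_{(\lambda),\eps})^{\bullet,-}} \leq C\Bigl(\|\tilde f\|_{H^{(\sfs;k),\alpha_\circ,\hat\alpha-2}_{(\seop;\sop),\eps}(\Omega_{(\lambda'),\eps})^{\bullet,-}} + \|\tilde u\|_{H^{(\sfs_0;k),\alpha_\circ,\hat\alpha}_{(\seop;\sop),\eps}(\Omega_{(\lambda'),\eps})^{\bullet,-}}\Bigr),
\]
for some $\sfs_0<\sfs-3$ with $\sfs_0,\alpha_\circ-\hat\alpha$ still Kerr-admissible, which is possible since $\sfs-4$ is. For (II), since $\pa_{\hat t}$---equivalently $\pa_t=\eps^{-1}\pa_{\hat t}$---commutes with the pulled-back Kerr wave operator $\Box_{\wt g_b}$ of Lemma~\ref{LemmaScUnifSmMetric}, applying Theorem~\ref{ThmSc3b} to each $\pa_t^j\tilde u$ for $j=0,\ldots,k$ and converting to se-norms via Proposition~\ref{PropFVarseSobRel} gives
\[
  \|\tilde u\|_{H^{(\sfs_0;k),\alpha_\circ,\hat\alpha}_{(\seop;\sop),\eps}} \leq C\|\Box_{\wt g_b}\tilde u\|_{H^{(\sfs_0+\eta;k),\alpha_\circ,\hat\alpha-2}_{(\seop;\sop),\eps}}
\]
uniformly in $\eps$, provided $\alpha_\circ-\hat\alpha\in(-\tfrac32,-\tfrac12)$. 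Writing $\Box_{\wt g_b}=\wt L-(\wt L-\Box_{\wt g_b})$ and using Lemma~\ref{LemmaScUnifSmOp} together with the small-domain inequality $\|w\|_{\cdot,\cdot,\hat\alpha-2}\leq\lambda\|w\|_{\cdot,\cdot,\hat\alpha-1}$ on $\hat\rho\leq\lambda$, the right hand side is bounded by $C\|\tilde f\|_{H^{(\sfs_0+\eta;k),\alpha_\circ,\hat\alpha-2}}+C\lambda\|\tilde u\|_{H^{(\sfs_0+2+\eta;k),\alpha_\circ,\hat\alpha}}$; for $\sfs_0+2+\eta\leq\sfs$ (ensured by Kerr-admissibility of $\sfs-4$), the remainder is absorbed into the LHS of (I) for sufficiently small $\lambda_0$, completing the inductive step.

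The main obstacle is the one flagged in Remark~\ref{RmkEstStds}: the extension/restriction procedure for the microlocal mixed estimate requires uniform $(se;s)$-regularity on the enlarged domain, not merely the se-regularity furnished by Theorem~\ref{ThmScUnifSm}. My plan resolves this by having the mixed Kerr 3b-estimate (II) supply the missing higher $s$-regularity of $\tilde u$ directly, in self-consistent combination with (I): for each fixed $\eps>0$, standard hyperbolic theory ensures $\tilde u\in H^{(\sfs;k)}$ qualitatively, and the combined estimates yield the quantitative uniform bound. Three structural facts enable the closure: (a) the $t$-independence of the $\hat M$-normal operator $L=\Box_{\hat g_b}$ of $\wt L$, so that $\pa_t$ commutes with the Kerr model; (b) the extra $\hat M$-vanishing of the perturbation $\wt L-\Box_{\wt g_b}$ given by Lemma~\ref{LemmaScUnifSmOp}; and (c) the smallness of $\lambda$, which produces the absorbing factors $\lambda^\beta$. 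The admissibility buffer $\sfs-4$ (rather than $\sfs-2$ as in Theorem~\ref{ThmScUnifSm}) accommodates the two-se-derivative cost incurred in passing through the second-order perturbation $\wt L-\Box_{\wt g_b}$ in the mixed Kerr estimate.
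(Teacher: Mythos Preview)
Your proposed resolution of the domain-enlargement obstacle (Remark~\ref{RmkEstStds}) does not close. The mixed microlocal estimate (I) controls $u$ in $H_{(\seop;\sop),\eps}^{(\sfs;k),\alpha_\circ,\hat\alpha}$ on $\Omega_{(\lambda),\eps}$ with an error involving $\tilde u$ in $H_{(\seop;\sop),\eps}^{(\sfs_0;k),\alpha_\circ,\hat\alpha}$ on the \emph{larger} domain $\Omega_{(\lambda'),\eps}$. After your Kerr-model step (II) and the perturbation from $\Box_{\wt g_b}$ back to $\wt L$, the residual error is $C\lambda'$ times the norm of $\tilde u$ in $H_{(\seop;\sop),\eps}^{(\sfs_0+2+\eta;k),\alpha_\circ,\hat\alpha}$---still on $\Omega_{(\lambda'),\eps}$. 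Since $\tilde u$ on $\Omega_{(\lambda'),\eps}\setminus\Omega_{(\lambda),\eps}$ is not controlled by $u=\tilde u|_{\Omega_{(\lambda),\eps}}$, this term cannot be absorbed into the left-hand side of (I). Applying (I) on $\Omega_{(\lambda'),\eps}$ instead merely pushes the error to an even larger domain, producing an infinite regress; and Proposition~\ref{PropEstEn} cannot bridge the gap since it operates only on se-spaces, not on mixed $(se;s)$-spaces.

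The paper's inductive scheme avoids this circularity. Assuming the theorem at level $k-1$, one extends $f$ to $f^\sharp$ on a larger domain $\Omega^\sharp_{(\lambda),\eps}$ and solves $\wt L u^\sharp=f^\sharp$ there \emph{using the inductive hypothesis}, obtaining $u^\sharp\in H_{(\seop;\sop),\eps}^{(\sfs;k-1),\alpha_\circ,\hat\alpha}(\Omega^\sharp_{(\lambda),\eps})^{\bullet,-}$ with uniform bounds. Then one commutes once more: $\Box_{g_\eps}(\pa_t^k u^\sharp)=\pa_t^k f^\sharp+[\Box_{g_\eps},\pa_t^k]u^\sharp$, where the commutator term lands in $H_{\seop,\eps}^{\sfs-2,\alpha_\circ,\hat\alpha-2}$ precisely because of the $(\sfs;k-1)$ control just established. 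Applying Theorem~\ref{ThmScUnifSm} (the $k=0$ case, at se-order $\sfs-2$) gives $\pa_t^k u^\sharp\in H_{\seop,\eps}^{\sfs-2}$ on $\Omega^\sharp_{(\lambda),\eps}$, hence $u^\sharp\in H_{(\seop;\sop),\eps}^{(\sfs-2;k),\alpha_\circ,\hat\alpha}$ there. Only now are the mixed $(se;s)$-microlocal estimates invoked, to upgrade $\sfs-2$ to $\sfs$ upon restriction to $\Omega_{(\lambda),\eps}$---with the required a priori control on the larger domain already in hand. The key idea you are missing is that the inductive hypothesis itself furnishes the $s$-regularity on the enlarged domain; this is also why the admissibility buffer is $\sfs-4$ (Theorem~\ref{ThmScUnifSm} must be applied at se-order $\sfs-2$, which in turn requires $\sfs-2$ and $\sfs-4$ to be Kerr-admissible).
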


In~\S\ref{SsNTame}, we will prove a strengthening of Theorem~\ref{ThmScS} which gives \emph{tame estimates} for $u$.

\begin{proof}[Proof of Theorem~\usref{ThmScS}]
  Let $\Omega^\sharp:=\Omega_{0,2,2}$ (the scaling by $\frac12$ of which contains $\Omega'=\Omega_{0,1,1}$ from~\eqref{EqScUnifDomain2}), write $\wt\Omega^\sharp\subset\wt M'\setminus(\wt K')^\circ$ for the associated standard domain in $\wt M'$, and let $\wt\Omega^\sharp_{(\lambda)}=\wt S_\lambda(\wt\Omega^\sharp)$. Suppose~\eqref{EqScS} has been established for $k-1$, the case $k-1=0$ being Theorem~\ref{ThmScUnifSm} (which only requires $\sfs,\alpha_\circ,\hat\alpha$ and $\sfs-2,\alpha_\circ,\hat\alpha$ to be admissible). Taking $\lambda_0$ to be $\frac12$ times the value from Theorem~\ref{ThmScUnifSm}, extend $f\in H_{(\seop;\sop),\eps}^{(\sfs;k),\alpha_\circ,\hat\alpha-2}(\Omega_{(\lambda),\eps})^{\bullet,-}$ to $f^\sharp\in H_{(\seop;\sop),\eps}^{(\sfs;k),\alpha_\circ,\hat\alpha-2}(\Omega^\sharp_{(\lambda),\eps})^{\bullet,-}$ and solve $\Box_{g_\eps}u^\sharp=f^\sharp$ using the inductive hypothesis, which gives
  \begin{equation}
  \label{EqScSusharp}
    u^\sharp\in H_{(\seop;\sop),\eps}^{(\sfs;k-1),\alpha_\circ,\hat\alpha}(\Omega^\sharp_{(\lambda),\eps})^{\bullet,-}.
  \end{equation}

  Differentiating the equation $\Box_{g_\eps}u^\sharp=f^\sharp$ along $\pa_t^k$ gives
  \[
    \Box_{g_\eps}(\pa_t^k u^\sharp) = \pa_t^k f^\sharp + [\Box_{g_\eps},\pa_t^k]u^\sharp \in H_{\seop,\eps}^{\sfs-2,\alpha_\circ,\hat\alpha-2}(\Omega^\sharp_{(\lambda),\eps})^{\bullet,-};
  \]
  here we use\footnote{We do not need to exploit the gain of decay at $\hat M$ of the commutator here.}
  \[
    [\Box_{g_\eps},\pa_t^k]\in\sum_{j=0}^{k-1}\hat\rho^{-2}(\hat\rho\CI+\cC_{\seop;\sop}^{(\infty;j),1,1})\Diffse^2\circ\pa_t^j\subset\hat\rho^{-2}\CI_\seop\Diffse^2\circ\pa_t^{k-1}.
  \]
  Applying Theorem~\ref{ThmScUnifSm} to this equation gives $\pa_t^k u^\sharp\in H_{\seop,\eps}^{\sfs-2,\alpha_\circ,\hat\alpha-2}$. Together with~\eqref{EqScSusharp} (which gives $W u^\sharp\in H_{(\seop;\sop),\eps}^{(\sfs-1;k-1),\alpha_\circ,\hat\alpha}$ for all $W\in\Vse(\wt M)\subset\cV_{[\sop]}(\wt M)$), we have thus shown
  \begin{equation}
  \label{EqScSusharp2}
    u^\sharp \in H_{(\seop;\sop),\eps}^{(\sfs-2;k),\alpha_\circ,\hat\alpha}(\Omega^\sharp_{(\lambda),\eps})^{\bullet,-}.
  \end{equation}
  Using that $\Box_{g_\eps}u^\sharp\in H_{(\seop;\sop),\eps}^{(\sfs;k),\alpha_\circ,\hat\alpha-2}(\Omega^\sharp_{(\lambda),\eps})^{\bullet,-}$, we can now use the microlocal elliptic and propagation estimates on (se;s)-Sobolev spaces (on open subsets of $\Omega^\sharp_{(\lambda),\eps}$) as in the proof of Theorem~\ref{ThmEstStd}---but now using Lemmas~\ref{LemmaFVarsseEll} and \ref{LemmaFVarsseProp} and Propositions~\ref{PropEstRadsIn}, \ref{PropEstRadsOut}, \ref{PropEstHors}, and \ref{PropEstTraps}---to re-gain the $2$ degrees of se-regularity lost in~\eqref{EqScSusharp2}. Upon restriction to $\Omega_{(\lambda),\eps}$, this gives $u=u^\sharp|_{\Omega_{(\lambda),\eps}}\in H_{(\seop;\sop),\eps}^{(\sfs;k),\alpha_\circ,\hat\alpha}(\Omega_{(\lambda),\eps})^{\bullet,-}$.
\end{proof}

\begin{rmk}[s-regularity on standard domains]
\label{RmkScSStd}
  The same method of proof can be used to prove a version of Theorem~\ref{ThmScUnif} with higher s-regularity; the resulting conditions on $\sfs$ are that $\sfs,\alpha_\circ,\hat\alpha$ and $\sfs-5,\alpha_\circ,\hat\alpha$ be admissible. In the context of Theorem~\ref{ThmI} (where $\alpha_\circ=0$ and $\hat\alpha\in(\frac12,\frac32)$), note that the threshold condition $\sfs<-\frac12+\hat\alpha$ is satisfied for $\sfs=0$, whereas $\sfs-5>-\frac12+\hat\alpha$ is satisfied for $\sfs=6$. This is the reason for the shift by $6$ orders in~\eqref{EqIEst}.
\end{rmk}

\subsection{From local to semiglobal solvability}
\label{SsScSG}

We now concatenate the uniform bounds on small domains given by Theorem~\ref{ThmScS} to obtain a quantitative semiglobal solvability result for the linear scalar wave equation. In the following result, we write $\cA=\cA(\wt M)$ for the space of bounded conormal functions on $\wt M$; thus $u\in\cA$ if and only if $u\in\CI_\sop(\wt M)$ and $(\eps\pa_\eps)^j u\in\CI_\sop(\wt M)$ for all $j\in\N_0$.

\begin{thm}[Semiglobal solvability]
\label{ThmScSG}
  Let $K\subset M$ be compact. Let $X\subset M$ be a Cauchy surface with $K\subset J^+(X)$. Denote by $\wt X\subset\wt M$ the lift of $[0,1)\times X$. Suppose that $\wt L\in\hat\rho^{-2}(\CI+\cC_\sop^{\infty,1,1})\Diffse^2(\wt M\setminus\wt K^\circ)$. Then there exists $\eps_0>0$ so that for all $\wt f\in\hat\rho^{-2}H_\sop^\infty(\wt M\setminus\wt K^\circ)$ (by which we mean that $\wt f|_{M_\eps}$ is uniformly bounded in $\hat\rho^{-2}H_{\sop,\eps}^k(M_\eps\setminus\wt K^\circ)$ for all $k$) which vanish to infinite order at $\wt X$, the equation
  \[
    \wt L\wt u=\wt f
  \]
  has a unique solution $\wt u\in H_\sop^\infty(\wt\upbeta^*([0,\eps_0]\times K)\setminus\wt K^\circ)$ vanishing to infinite order at $\wt X$, i.e.\ $\Box_{g_\eps}u_\eps=f_\eps$ on $K\setminus\{|\hat x|\leq\bhm\}$ for all $\eps>0$ where $g_\eps,u_\eps,f_\eps$ are the restrictions of $\wt g,\wt u,\wt f$ to $M_\eps$. If, moreover, we have $\wt L\in\hat\rho^{-2}(\CI+\eps\cA)\Diffse^2(\wt M\setminus\wt K^\circ)$ and $\wt f\in\CIdot(\wt M\setminus\wt K^\circ)$, then also $\wt u\in\CIdot(\wt\upbeta^*([0,\eps_0]\times K)\setminus\wt K^\circ)$.
\end{thm}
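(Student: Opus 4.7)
I would combine the semiglobal covering of Proposition~\ref{PropGlDynCover} with local small-domain solvability from Theorems~\ref{ThmScUnifSm} and~\ref{ThmScS}. First I apply Proposition~\ref{PropGlDynCover} to produce a finite family of standard domains $\Omega_0,\ldots,\Omega_J$ covering $K$, ordered so that the initial Cauchy hypersurface of $\Omega_j$ lies in $X'_0$ or in $\bigcup_{k<j}\Omega_k^\circ$, with every $\Omega_j$ meeting $\cC$ of the form $\Omega_{t_0,t_1,r_0}$ satisfying $|t_1-t_0|,r_0<\lambda_0$, where $\lambda_0>0$ is the constant from Theorem~\ref{ThmScUnifSm}. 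Then I proceed inductively in $j$, constructing $\wt u$ on $\wt\upbeta^{-1}([0,\eps_0)\times\Omega_j)\setminus\wt K^\circ$ as the unique forward solution with Cauchy data on the initial hypersurface of $\wt\Omega_j$ matching the previously constructed values (or vanishing to infinite order if that hypersurface lies in $\wt X$). On domains disjoint from $\cC$, the operator $\wt L$ has coefficients that are $\eps$-uniformly smooth, and standard hyperbolic theory yields uniform bounds at every s-Sobolev order. On small domains meeting $\cC$, I convert the initial value problem to a forcing problem as in Remark~\ref{RmkScUnifIVP} and apply Theorem~\ref{ThmScS} with weights $\alpha_\circ,\hat\alpha$ satisfying $\alpha_\circ-\hat\alpha\in(-\tfrac32,-\tfrac12)$. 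Patching over overlaps is consistent by finite speed of propagation and uniqueness of forward solutions.

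\textbf{From weighted bounds to $H_\sop^\infty$.} The principal technical difficulty is a weight mismatch: Theorem~\ref{ThmScS} produces uniform bounds in $H_{(\seop;\sop),\eps}^{(\sfs;k),\alpha_\circ,\hat\alpha}$ only for $\alpha_\circ-\hat\alpha\in(-\tfrac32,-\tfrac12)$, while the hypothesis $\wt f\in\hat\rho^{-2}H_\sop^\infty$ most naturally fits $\hat\alpha=0$, which forces $\alpha_\circ\in(-\tfrac32,-\tfrac12)$ and thus gives $\wt u\in H_{\sop,\eps}^{k,\alpha_\circ,0}$ rather than the desired unweighted $H_{\sop,\eps}^k$. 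I would close this gap by exploiting the fact that $\hat\rho^2\wt f\in H_\sop^\infty$ is substantially stronger than its weighted consequence $\wt f\in H_{\sop,\eps}^{k,\alpha_\circ,-2}$: since on each small domain the implicit smallness $\eps\ll\lambda_0$ in the absorption step (cf.\ the last part of the proof of Theorem~\ref{ThmScUnifSm} and the factor $\lambda^\beta$ therein) can be used to convert the unfavorable $\rho_\circ$-weight into an $\eps$-gain, iterating through the finitely many domains of the covering retrieves a uniform bound in $H_{\sop,\eps}^k$ for every $k\in\N_0$. The combinatorial bookkeeping of these weight translations is the main obstacle.

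\textbf{$\CIdot$ regularity under the stronger hypothesis.} For the final statement I commute the b-vector field $\eps\pa_\eps$ through $\wt L\wt u=\wt f$. A direct computation in $(\eps,t,x)$-coordinates gives $\eps\pa_\eps(\hat\rho^{-2})=-2(\eps^2/\hat\rho^2)\hat\rho^{-2}\in\hat\rho^{-2}\CI(\wt M)$ and $[\eps\pa_\eps,\hat\rho\pa_z]=(\eps^2/\hat\rho^2)\cdot\hat\rho\pa_z\in\CI\cdot\Vse(\wt M)$, and $\eps\pa_\eps$ preserves $\eps\cA$; hence $[\eps\pa_\eps,\wt L]\in\hat\rho^{-2}(\CI+\eps\cA)\Diffse^2$. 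Iterating Leibniz,
\[
\wt L(\eps\pa_\eps)^N\wt u=(\eps\pa_\eps)^N\wt f+\sum_{j=1}^N\binom{N}{j}(-1)^j\wt L_j(\eps\pa_\eps)^{N-j}\wt u,\qquad \wt L_j:=\operatorname{ad}(\eps\pa_\eps)^j\wt L,
\]
with each $\wt L_j$ of the same class as $\wt L$. Induction on $N$, applying the first part of the theorem (uniform for bounded perturbations of $\wt L$, cf.\ Remark~\ref{RmkScUnifSmPert}), yields $(\eps\pa_\eps)^N\wt u\in H_\sop^\infty$ for every $N$, whence $\wt u\in\cA(\wt M)$ with full s-Sobolev regularity. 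Infinite order vanishing at $\hat M$ follows by a normal operator argument: restricting $\eps^2\wt L\wt u=\eps^2\wt f$ to $\hat M$ yields $\Box_{\hat g_b}(\wt u|_{\hat M})=0$ with trivial Cauchy data from the vanishing of $\wt u$ at $\wt X$, so Theorem~\ref{ThmSc3b}\eqref{ItSc3bFwd} gives $\wt u|_{\hat M}=0$; iterating with the formal Taylor coefficients of $\wt u$ in a defining function of $\hat M$ (using the subleading operators in the normal operator hierarchy of $\wt L$) produces the desired infinite order vanishing. A parallel argument at $M_\circ$ uses the edge normal operator $L_\circ=\Box_g$ and Theorem~\ref{ThmEstMc}. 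Combined with Sobolev embedding (Proposition~\ref{PropFsSobEmb}), this upgrades $\wt u$ to $\CIdot$.
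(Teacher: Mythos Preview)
Your covering-and-induction plan matches the paper's proof. On the weight mismatch: you correctly note that Theorem~\ref{ThmScS} with $\hat\alpha=0$ forces $\alpha_\circ<0$, giving only weighted control $u\in H_{\sop,\eps}^{k,\alpha_\circ,0}$; the paper's proof does not spell this out either. Your proposed fix is not a proof (the $\lambda$-smallness in Theorem~\ref{ThmScUnifSm} absorbs an error term, it does not shift the solution's weight), but a weighted conclusion still suffices for the $\CIdot$ part, which is where the theorem is actually used.

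The substantive gap is in your $\CIdot$ argument. You miss a much simpler idea: since se-vector fields annihilate $\eps$, the operator $\wt L$ \emph{commutes with multiplication by $\eps^N$}. As $\wt f\in\CIdot$ gives $\eps^{-N}\wt f\in\hat\rho^{-2}H_\sop^\infty$ for every $N$, applying the first part to $\wt L(\eps^{-N}\wt u)=\eps^{-N}\wt f$ yields $\wt u\in\eps^N H_\sop^\infty$ (or its weighted variant) for all $N$, hence $\wt u\in\eps^\infty\CI_\sop$ by Sobolev embedding. Since $\eps=\rho_\circ\hat\rho$, this delivers infinite order vanishing at \emph{both} $\hat M$ and $M_\circ$ in one stroke (and renders the weight issue moot). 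Only then does the paper commute with $\eps\pa_\eps$ to upgrade to $\eps^\infty\cA=\CIdot$. Your reversed order forces a normal operator argument that is incorrect as written: restricting $\eps^2\wt L\wt u$ to $\hat M$ does not give a Kerr wave equation with Cauchy data---$\wt u|_{\hat M}$ is a function of $(t,\hat x)\in I_\cC\times\hat X_b$, not of $(\hat t,\hat x)$, and what survives at $\eps=0$ (the terms involving $\hat\rho\pa_t$ vanish there on s-regular functions) is the \emph{zero energy} equation $\hat L(0)(\wt u|_{\hat M_t})=0$ for each $t$, to which Theorem~\ref{ThmSc3b}\eqref{ItSc3bFwd} and the notion of Cauchy data at $\wt X$ do not apply.
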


On a technical level, Theorem~\ref{ThmScSG} is of no use for solving nonlinear equations on glued spacetimes. However, the method of proof is flexible and used in a nonlinear context in~\S\ref{SssNToyT} below.

\begin{rmk}[Other types of problems]
\label{RmkScSGOther}
  Given $\wt f\in H^\infty_\sop(\wt M\setminus\wt K^\circ)$, we can solve $\Box_{\wt g}\wt u_+=\chi_+\wt f$ where $\chi_+$ is supported in the future of $X$ and equal to $1$ near a later Cauchy surface $X'$, and then $\Box_{\wt g}\wt u_-=(1-\chi_+)\wt f$ (with the lift of $X'$ taken as the Cauchy surface); the sum $\wt u=\wt u_++\wt u_-$ then solves $\Box_{\wt g}\wt u=\wt f$. One can also consider problems with nontrivial Cauchy data at $\wt X$; we leave such modifications to the interested reader.
\end{rmk}

\begin{proof}[Proof of Theorem~\usref{ThmScSG}]
  It suffices to consider the case that $f_\eps$ vanishes in the past of $X$. (Indeed, the extension of $f_\eps|_{J^+(X)}$ by $0$ to the causal past of $X$ satisfies the same regularity assumptions as $\wt f$.) Fix a metric splitting $\R\times X_0$ of $M$ and set $X_\cT:=\{\cT\}\times X_0$; by shifting the time coordinate, we arrange that $J^-(K)\cap X\subset J^+(X_1)$. Modify $X_0$ near $\{\fp\}=X_0\cap\cC$ to a Cauchy hypersurface $X'_0\subset(-1,1)\times X_0$ so that $X'_0$ coincides with a $t$-level set near $\fp$.

  We now apply Proposition~\ref{PropGlDynCover}, with $\eta>0$ there chosen so small that Theorem~\ref{ThmScS} applies on each standard domain produced by the Proposition. We then solve $\Box_{\wt g}\wt u=\wt f$ iteratively in $\Omega_0$, $\Omega_2$, \dots, $\Omega_J$. To wit, suppose we have already solved the equation in $\bigcup_{j=0}^{J'-1}\Omega_j$, $J'\geq 0$. If the initial boundary hypersurface $X_{J'}$ of $\Omega_{J'}$ is contained in $X'_0$, there are two possibilities: either $X_{J'}$ is disjoint from $\cC$, in which case we can solve using standard hyperbolic theory; or $X_{J'}\cap\cC\neq\emptyset$, in which case we can solve $\Box_{\wt g}\wt u=\wt f$ on $\Omega_{J'}$ using Theorem~\ref{ThmScS} for all $k\in\N_0$, obtaining $\wt u\in H^\infty_\sop(\wt\Omega_{J'})^{\bullet,-}$; here $\wt\Omega_{J'}$ denotes the lift to $\wt M\setminus\wt K^\circ$ of $\Omega_{J'}$. 

  If, on the other hand, the initial boundary hypersurface of $\Omega_{J'}$ is contained in $\bigcup_{j=0}^{J'-1}\Omega_j^\circ$, let $\chi\in\CI(\Omega_{J'})$ be equal to $0$ near the initial boundary hypersurface of $\Omega_{J'}$ and equal to $1$ outside of $\bigcup_{j=0}^{J'-1}\Omega_j$. The source term in the equation
  \begin{equation}
  \label{EqScSGwtv}
    \Box_{\wt g}\wt v = \chi\wt f + [\Box_{\wt g},\chi]\wt u
  \end{equation}
  is well-defined and lies in $\hat\rho^{-2} H^\infty_\sop(\wt\Omega_{J'})^{\bullet,-}$. Therefore, we can again appeal to Theorem~\ref{ThmScS} to obtain $\wt v\in H^\infty_\sop(\wt\Omega_{J'})^{\bullet,-}$. Since $\wt v=\chi\wt u$ on $\Omega_{J'}\cap(\bigcup_{j=0}^{J'-1}\Omega_j)$, the solution $\wt v$ of~\eqref{EqScSGwtv} furnishes the extension of $\wt u$ to $\bigcup_{j=0}^{J'}\Omega_j$. Arguing similarly for solving the equation towards the past of $X_0$ finishes the proof of the first part of the Theorem.

  To prove the second part, note first that since $\Box_{\wt g}$ commutes with multiplication by $\eps^N$, we have $\wt u\in\eps^\infty\CI_\sop$ by Sobolev embedding (Proposition~\ref{PropFsSobEmb}). Furthermore, $u_\eps$ is differentiable in $\eps$ for every $\eps>0$, as follows by considering the equation $L_\eps\frac{u_{\eps+h}-u_\eps}{h}=\frac{f_{\eps+h}-f_\eps}{h}-\frac{L_{\eps+h}-L_\eps}{h}u_{\eps+h}$, and taking the limit as $h\searrow 0$ in the estimates for its solution on Sobolev spaces. Differentiating $\wt L\wt u=\wt f$ (now regarded as an equation on $\wt M\setminus(\hat M\cup M_\circ)$, rather than a family of equations parameterized by $\eps$) along $\eps\pa_\eps$ gives
  \[
    \wt L(\eps\pa_\eps\wt u) = \eps\pa_\eps\wt f + [\wt L,\eps\pa_\eps]\wt u.
  \]
  By our assumption on $\wt L$, we have $[\eps\pa_\eps,\wt L]\in\hat\rho^{-2}\eps\cA\Diffse^2$, and thus the right hand side lies in $\eps^\infty\CI_\sop$. Therefore, also $\eps\pa_\eps\wt u\in\eps^\infty\CI_\sop$. Iterating this argument gives $\wt u\in\eps^\infty\cA=\CIdot$, finishing the proof.
\end{proof}

\begin{rmk}[Application in settings with degenerating potentials]
\label{RmkScSGV}
  In the context of Remark~\ref{RmkGlOther}, we recall that in \cite[\S{1.2.1}]{HintzGlueLocI}, a formal solution of $\wt L\wt u=\wt f\in\CIdot(\wt M)$ was constructed where $\wt L=\Box_g+\eps^{-2}V(\frac{x}{\eps})$ with $0\leq V\in\CIc(\R^3)$, and $u=\wt u|_{M_\circ}$ is a given solution of $\Box_g u=0$. Since the $\hat M_t$-model operator $\Box_{\hat{\ubar g}}+V$ (with $\hat{\ubar g}=-\dd\hat t^2+\dd\hat x^2$ the Minkowski metric) satisfies mode stability in $\Im\sigma\geq 0$, the proof of Theorem~\ref{ThmScSG} carries over to produce $\wt v\in\CIdot(\wt M)$ with $\wt L\wt v=-\wt f$; thus $\wt u+\wt v$ is a true solution (i.e.\ $\wt L(\wt u+\wt v)=0$) of the singular perturbation problem.
\end{rmk}

\section{Tame estimates, Nash--Moser iteration, and a nonlinear toy model}
\label{SN}

Consider a semi- or quasilinear wave equation which involves derivatives in the nonlinearity. Due to loss of two derivatives (relative to elliptic estimates) in the trapping estimate~\eqref{EqScUnif}, the existence of solutions, over compact subsets of $M$ lifted to a glued spacetime as in Theorem~\ref{ThmScSG}, cannot be proved using a Picard type iteration, regardless also of the amount of s-regularity of the function spaces one works with. To get around this issue, we shall demonstrate in~\S\ref{SsNTame} that the quantitative estimates on (se;s)-Sobolev spaces are \emph{tame} in the s-regularity order: in~\S\ref{SssNTameMl}, we discuss this on the level of se-microlocal elliptic and propagation estimates, and in~\S\ref{SssNTameSg} we prove tame estimates for forward solutions of $\wt L u=f$. This material is used in~\S\ref{SsNToy} to correct a formal solution of a toy nonlinear equation to a true solution.

We fix a glued spacetime $(\wt M,\wt g)$ corresponding to $(M,g)$, $\cC\subset M$, $b=(\bhm,\bha)$, and Fermi normal coordinates $t,x$ (with $t\in I_\cC\subseteq\R$) as in Definition~\ref{DefGl}.

\subsection{Tame estimates}
\label{SsNTame}

We assume that $M$ is compact; recall from~\S\ref{SsEstFn} that this suffices for local analysis near points in $\cC$.

\subsubsection{se-microlocal s-tame estimates}
\label{SssNTameMl}

We shall use the schematic notation $D_\sop^j$, resp.\ $D_{[\sop]}^j$ for the vector of all $j$-fold compositions of the elements of a fixed finite subset of $\cV_\sop(\wt M)$, resp.\ $\cV_{[\sop]}(\wt M)$ which spans $\cV_\sop(\wt M)$ over $\CI(\wt M)$.

\begin{lemma}[Multiplication on s-Sobolev spaces]
\label{LemmaNTameDaDb}
  Let $a,b\in\N_0$, and fix $s_0>\frac{\dim M}{2}=2$. Then there exists a constant $C_{a,b}$ so that for all $\eps\in(0,1)$, we have
  \[
    \|(D_\sop^a\ell)(D_\sop^b u)\|_{L^2(M)}\leq C_{a,b}\Bigl(\|\ell\|_{L^\infty(M)}\|u\|_{H_{\sop,\eps}^{a+b}(M)} + \|\ell\|_{\cC_{\sop,\eps}^{a+b}(M)}\|u\|_{H_{\sop,\eps}^{s_0}(M)}\Bigr).
  \]
\end{lemma}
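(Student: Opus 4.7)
The plan is to reduce the estimate to a classical Moser inequality on a fixed Euclidean unit cell by exploiting the parameterized scaled bounded geometry structure for $\Vs(\wt M)$ recalled in~\S\ref{SssFVarsse}. Cover $M$, away from a fixed neighborhood of $\cC$ (where s-derivatives are uniformly equivalent to standard derivatives, so the inequality reduces at once to the classical Moser estimate on a compact manifold), by the s-unit cells $U_{\eps,j}^{(1)}$ and $U_{\eps,\pm m,j k}^{(2)}$ of \eqref{EqFVarsseBG1}--\eqref{EqFVarsseBG2}. This cover has multiplicity of overlap bounded independently of $\eps$, and admits a subordinate partition of unity $\sum_\alpha\chi_\alpha^2=1$ that is uniformly bounded in $\cC_\sop^\infty(\wt M)$. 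Under the pullbacks $\phi_{\eps,j}^{(1)},\phi_{\eps,\pm m,j k}^{(2)}$, the basic s-vector fields are pushed to vector fields uniformly equivalent to the coordinate vector fields on the fixed bounded domains $U^{(1)},U^{(2)}$.

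On a chart of type $U_{\eps,\pm 1,j k}^{(2)}$, the density $|\dd t\,\dd x|=r^2|\dd t\,\dd r\,\dd\omega|$ pulls back to $(2^k\eps)^3 R^2|\dd T\,\dd R\,\dd\omega|$, a uniformly positive smooth multiple of the standard Lebesgue density on the fixed domain; writing $\tilde L^p,\tilde H^k,\tilde\cC^k$ for the Euclidean norms on $U^{(2)}$ in the pulled-back coordinates, this yields uniform (in $\eps$ and in $\alpha=(\pm m,j,k)$) equivalences
\[
  \|\chi_\alpha v\|_{L^2(M)}\sim(2^k\eps)^{3/2}\|\chi_\alpha v\|_{\tilde L^2},\qquad \|\chi_\alpha v\|_{H_{\sop,\eps}^k(M)}\sim(2^k\eps)^{3/2}\|\chi_\alpha v\|_{\tilde H^k},
\]
together with $\|\chi_\alpha v\|_{\tilde\cC^k}\leq C\|v\|_{\cC_{\sop,\eps}^k(M)}$; analogous equivalences (with $\eps^{3/2}$ in place of $(2^k\eps)^{3/2}$) hold on cells of type $U_{\eps,j}^{(1)}$. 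The standard Moser inequality on the fixed domain $U^{(2)}$, combined with the Sobolev embedding $\tilde H^{s_0}\hookrightarrow\tilde L^\infty$, which is valid exactly for $s_0>\dim M/2=2$, gives
\[
  \|\chi_\alpha(D_\sop^a\ell)(D_\sop^b u)\|_{\tilde L^2}\leq C_{a,b}\bigl(\|\ell\|_{\tilde L^\infty(\tilde U_\alpha)}\|u\|_{\tilde H^{a+b}(\tilde U_\alpha)}+\|\ell\|_{\tilde\cC^{a+b}(\tilde U_\alpha)}\|u\|_{\tilde H^{s_0}(\tilde U_\alpha)}\bigr),
\]
where $\tilde U_\alpha$ denotes a slightly enlarged cell accommodating differentiation of $\chi_\alpha$.

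Multiplying this estimate by the common factor $(2^k\eps)^{3/2}$ (or $\eps^{3/2}$) converts it back into the ambient $L^2$, $H_{\sop,\eps}^{a+b}$, and $H_{\sop,\eps}^{s_0}$ norms on $M$; note that the $L^\infty$ and $\tilde\cC^{a+b}$ norms of $\ell$ carry no $\eps$- or $k$-dependent weights, so the same density factor appears on both sides of the inequality. Squaring, summing over $\alpha$ using the bounded multiplicity of the cover, and pulling the global quantities $\|\ell\|_{L^\infty(M)}$ and $\|\ell\|_{\cC_{\sop,\eps}^{a+b}(M)}$ out of the sums (they dominate their localized counterparts) then yields the claimed global inequality. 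The only real bookkeeping issue is the consistent matching of the density factors $(2^k\eps)^{3/2}$ across the various norms, which is precisely the feature guaranteed by the scaled bounded geometry structure; beyond this I anticipate no serious obstacle.
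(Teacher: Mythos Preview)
Your proof is correct and follows essentially the same route as the paper: reduce to the Euclidean Moser inequality via the parameterized bounded geometry charts for $\Vs(\wt M)$, then sum using bounded overlap. The only cosmetic difference is that the paper first passes from the fixed smooth density $|\dd t\,\dd x|$ to a positive s-density (absorbing the factor $\hat\rho^3$ once and for all, so that the pullback density on each unit cell is uniformly bounded and no $(2^k\eps)^{3/2}$ factors appear), whereas you keep the original density and track those factors explicitly, observing that they appear identically on both sides and hence cancel; both bookkeepings are equivalent.
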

\begin{proof}
  Instead of working with an $\eps$-independent volume density on $M$, we shall work with a smooth positive s-density. (The two densities differ by a factor of $\hat\rho^3$; cf.\ the proof of Proposition~\ref{PropFsSobEmb} for $n=3$.)

  Denote by $U_{\eps,\alpha}\subset\{\eps\}\times M$ the unit cells and by $\phi_{\eps,\alpha}\colon U_{\eps,\alpha}\xra{\cong}(-2,2)^n$ the corresponding charts for a parameterized b.g.\ structure (in the sense of \cite[Definition~1.9]{HintzScaledBddGeo}) for which the space of uniformly bounded vector fields is $\CI_\sop\cV_\sop(\wt M)$ (see e.g.\ \eqref{EqFVarsseBG1}--\eqref{EqFVarsseBG2}). For all $k\in\N_0$, we then recall from (the discussion following) \cite[Definition~3.14]{HintzScaledBddGeo} that for all $\chi\in\CIc((-2,2)^n)$ which equal $1$ on $[-1,1]^n$, we have a uniform equivalence of norms
  \[
    \|v\|_{H_{\sop,\eps}^k(M)}^2 \sim \sum_\alpha\|\chi(\phi_\alpha)_*v\|_{H^k(\R^4)}^2.
  \]
  The Lemma then follows from the standard $\R^n$ result
  \[
    \|(D^a\ell)(D^b u)\|_{L^2(\R^n)} \leq C_{a,b}\bigl( \|\ell\|_{L^\infty}\|u\|_{H^{a+b}} + \|\ell\|_{H^{a+b}}\|u\|_{L^\infty}\bigr),
  \]
  see \cite[Chapter~13, Proposition~3.6]{TaylorPDE3}, together with the bound $\|\ell\|_{H^{a+b}}\leq C\|\ell\|_{\cC^{a+b}}$ (valid for all $\ell$ with support in $(-2,2)^n$) and Sobolev embedding $\|u\|_{L^\infty}\leq C\|u\|_{H^{s_0}}$.
\end{proof}

\begin{lemma}[Tame multiplication, se-microlocalized]
\label{LemmaNTameMult}
  Let $\sfs,\sfs_0\in\CI(\Sse^*\wt M)$. Then there exist $d_0,d\in\N$ with the following property. For all $B,B^\sharp\in\Psi_\seop^0(\wt M)$ and $\chi\in\CIc(\wt M)$ with $B=\chi B\chi$, $B^\sharp=\chi B^\sharp\chi$, and $\WF_\seop'(B)\subset\Ell_\seop(B^\sharp)$, and for all $k\in\N_0$, there exists a constant $C_k$ so that for all $\eps\in(0,1)$ and $0\leq j\leq k$,
  \begin{equation}
  \label{EqNTameMult}
  \begin{split}
    \|B(D_\sop^j\ell)(D_\sop^{k-j}u)\|_{H_{\seop,\eps}^\sfs(M)} &\leq C_k\biggl[ \|\ell\|_{\cC_{(\seop;\sop),\eps}^{(d_0;d)}(M)}\Bigl(\|B^\sharp u\|_{H_{(\seop;\sop),\eps}^{(\sfs;k)}(M)} + \|\chi u\|_{H_{(\seop;\sop),\eps}^{(\sfs_0;k)}(M)}\Bigr) \\
      &\quad\hspace{0.8em} + \|\ell\|_{\cC_{(\seop;\sop),\eps}^{(d_0;k)}(M)}\Bigl(\|B^\sharp u\|_{H_{(\seop;\sop),\eps}^{(\sfs;d)}(M)} + \|\chi u\|_{H_{(\seop;\sop),\eps}^{(\sfs_0;d)}(M)}\Bigr) \biggr].
  \end{split}
  \end{equation}
  An analogous estimate holds on weighted spaces.
\end{lemma}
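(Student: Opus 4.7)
The plan is to combine microlocal reductions in the scaled bounded-geometry calculus $\Psi_\seop$ with standard tame Moser-type bounds transported to the parameterized bounded-geometry charts for $\Vs(\wt M)$ recalled in~\S\ref{SssFVarsse}. First I would reduce the variable-order statement to a constant-order one: by a finite partition of unity on $\Sse^*\wt M$ subordinate to $\WF_\seop'(B)$ together with variable-order elliptic regularity in $\Psi_\seop$ (Lemma~\ref{LemmaFVarsseEll}), the $H_{\seop,\eps}^\sfs$-norm on the left may be controlled by finitely many constant-order norms $H_{\seop,\eps}^{s_\iota}$ for integers $s_\iota$ that majorize $\sfs$ on the relevant microlocal pieces, modulo errors of the form $\|\chi u\|_{H_{(\seop;\sop),\eps}^{(\sfs_0;\,\cdot\,)}}$ which are already allowed on the right.

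Next, for a constant order $s$, I would choose an elliptic $E\in\Psi_\seop^s$ with $\WF_\seop'(E)\subset\Ell_\seop(B^\sharp)$ and pair against $L^2$, reducing matters to bounding $\|E B(D_\sop^j\ell \cdot D_\sop^{k-j}u)\|_{L^2}$. A paraproduct/symbolic-calculus expansion of $EB$ on the product---performed in the $\Psi_\seop$ calculus, whose coefficients preserve $\CI_\sop$-regularity and whose residuals lie in $\Psi_\seop^{-\infty}$ (and thus contribute only $\|\chi u\|_{H_{(\seop;\sop),\eps}^{(\sfs_0;\,\cdot\,)}}$)---writes this $L^2$-norm as a finite sum of $L^2$-norms of bilinear terms $(P_1 D_\sop^j\ell)(P_2 D_\sop^{k-j} u)$, with $P_1,P_2\in\Psi_\seop$ of orders bounded solely in terms of $s$ and with $\WF_\seop'(P_2)\subset\Ell_\seop(B^\sharp)$. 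The latter microlocal statement is the crucial point: it guarantees that high-regularity control on $u$ enters only through $B^\sharp u$.

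I would then pass to the parameterized bounded-geometry charts~\eqref{EqFVarsseBG1}--\eqref{EqFVarsseBG2}. On each chart, the basic s-vector fields pull back to the coordinate vector fields, and each $P_i$ pulls back to a uniformly-in-$\eps$ bounded family of ordinary pseudodifferential operators on $(-2,2)^n$. Combining the standard Euclidean Moser inequality (as used in Lemma~\ref{LemmaNTameDaDb}) with the Sobolev embedding $\cC_{\sop,\eps}^{d_0}\hookrightarrow L^\infty$ of Proposition~\ref{PropFsSobEmb} (valid once $d_0>\tfrac{\dim M}{2}$ plus a loss absorbing the orders of $P_1,P_2$), one obtains in each chart the tame bound
\[
  \|(P_1 D_\sop^j\ell)(P_2 D_\sop^{k-j}u)\|_{L^2}\leq C\bigl(\|\ell\|_{\cC_{(\seop;\sop),\eps}^{(d_0;d)}}\|B^\sharp u\|_{H_{(\seop;\sop),\eps}^{(\sfs;k)}} + \|\ell\|_{\cC_{(\seop;\sop),\eps}^{(d_0;k)}}\|B^\sharp u\|_{H_{(\seop;\sop),\eps}^{(\sfs;d)}}\bigr)
\]
for a single pair $(d_0,d)$ depending only on $s$ and the maximal orders of $P_1,P_2$; summing over the finitely many decomposition pieces and over charts, and combining with the variable-order reduction of the first step, yields the stated estimate. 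The extension to weighted spaces is purely notational, since the weights $\rho_\circ^{\alpha_\circ}\hat\rho^{\hat\alpha}$ commute with $\Psi_\seop$ modulo operators of the same structure.

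The main obstacle will be the bookkeeping in the second step: after the paraproduct/symbolic expansion and the subsequent commutators of ps.d.o.s through differential operators, one must verify that the factor carrying the high s-regularity content of $u$ can always be arranged to be microlocally captured by $B^\sharp$, with all excess pushed into $\chi u$ error terms. This is standard but somewhat intricate in the scaled bounded-geometry setting, since the composition properties of $\Psi_\seop$ with $\cC_\sop$-regular coefficients---rather than $\CI$-coefficients---must be invoked at each step. The uniformity in $\eps$ is automatic from the scaled bounded geometry framework of~\S\ref{SssFVarsse}, and the tameness in $k$ is inherited directly from the Euclidean Moser estimate.
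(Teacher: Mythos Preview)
Your approach is workable but takes a more elaborate route than the paper. The paper avoids paraproducts entirely by a cruder and more direct argument: after reducing to $B(I-B^\sharp)\in\Psi_\seop^{-\infty}$ and handling small $k$ or extreme $j$ (namely $j\leq d$ or $j\geq k-d$) by direct substitution into the $k=0$ case, the middle range $d+1\leq j\leq k-d-1$ is treated by first \emph{dropping} the microlocalization via the crude bound $\|B w\|_{H_{\seop,\eps}^\sfs}\leq C\|B w\|_{H_{\sop,\eps}^{\bar s}}$ for a fixed integer $\bar s\geq\sup\sfs$, expanding $D_{[\sop]}^{\bar s}$ by Leibniz, and then using only the $L^2$-boundedness of $B$. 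The microlocalization is \emph{re-inserted} by writing $u=B^\sharp u+(I-B^\sharp)\chi u$ inside the remaining $L^2$-norm of a pure product, at which point Lemma~\ref{LemmaNTameDaDb} applies separately to each summand. This sidesteps entirely the need to distribute a pseudodifferential operator over a product; what you identify as ``the main obstacle'' simply does not arise, because the se-order is fixed and bounded throughout, so losing it crudely costs nothing for tameness in $k$.

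Your paraproduct step, as stated, is also not quite right: a paraproduct decomposition of $\ell'\cdot u'$ does not produce a \emph{finite} sum of terms $(P_1\ell')(P_2 u')$ with $P_1,P_2\in\Psi_\seop$ and $\WF_\seop'(P_2)\subset\Ell_\seop(B^\sharp)$---the low-high, high-low, and diagonal pieces are infinite dyadic sums, and in the high-$\ell'$/low-$u'$ piece the factor carrying $u'$ is not microlocalized at all. One can make a paraproduct argument work (the low-$u'$ piece lands in the $\chi u$ term at low regularity), but the bookkeeping you flag is genuine, whereas the paper's insertion trick $u=B^\sharp u+(I-B^\sharp)\chi u$ accomplishes the same separation in one line.
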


Below, we shall only consider \emph{fixed se-regularity orders}, whereas the s-regularity order $k$ will be arbitrarily large. Thus, one should consider the quantities $d_0,d,\sfs$ in~\eqref{EqNTameMult} as being low regularity orders. (Of course, one could replace $(\sfs;d)$ by $(d_0;d)$ by picking $d_0>\sup\sfs$, but this would make the estimate look more opaque.)

\begin{rmk}[Special cases]
\label{RmkNTameMultSpecial}
  In the special case $B=B^\sharp=\chi=1$ and $\sfs_0=\sfs$, this implies
  \[
    \|\ell u\|_{H_{(\seop;\sop),\eps}^{(\sfs;k)}(M)} \leq C_k\Bigl(\|\ell\|_{\cC_{(\seop;\sop),\eps}^{(d_0;d)}(M)} \|u\|_{H_{(\seop;\sop),\eps}^{(\sfs;k)}(M)} + \|\ell\|_{\cC_{\seop;\sop}^{(d_0;k)}(M)} \|u\|_{H_{(\seop;\sop),\eps}^{(\sfs;d)}(M)} \Bigr)
  \]
  upon using the Leibniz rule. Further restricting to $\sfs=0$, this is a (imprecise) version of the usual tame multiplication estimate stated in \cite[Chapter~13, Proposition~3.7]{TaylorPDE3}, which here gives
  \begin{equation}
  \label{EqNTameMultSpecial}
    \|\ell u\|_{H_{\sop,\eps}^k(M)} \leq C_k\Bigl(\|\ell\|_{\cC_{\sop,\eps}^0(M)}\|u\|_{H_{\sop,\eps}^k(M)} + \|\ell\|_{H_{\sop,\eps}^k(M)}\|u\|_{\cC_{\sop,\eps}^0(M)}\Bigr).
  \end{equation}
\end{rmk}

\begin{proof}[Proof of Lemma~\usref{LemmaNTameMult}]
  By se-microlocal elliptic regularity, it suffices to consider the case that $B^\sharp=I$ microlocally on $\WF_\seop'(B)$, i.e.\ $B(I-B^\sharp)\in\Psi_\seop^{-\infty}$. Furthermore, we may assume without loss of generality that $\sfs_0<\sfs$.

  Write
  \begin{equation}
  \label{EqNTameMultInsert}
    B\ell u = B\ell B^\sharp u + B\ell(I-B^\sharp)u = B\ell B^\sharp u + B\ell(I-B^\sharp)\chi u.
  \end{equation}
  If $\ell\in\cC_{\seop,\eps}^\infty(\wt M)$, then $B\ell B^\sharp\in\tilde\Psi_\seop^0(\wt M)$ is uniformly bounded on $H_{\seop,\eps}^\sfs$, with operator norm depending only on a finite seminorm of $B\ell B^\sharp$ and thus of $\ell$. Similarly, $B\ell(I-B^\sharp)\in\tilde\Psi_\seop^{-\infty}(\wt M)$ is uniformly bounded as a map $H_{\seop,\eps}^{\sfs_0}\to H_{\seop,\eps}^\sfs$, and this persists for $\ell\in\cC_\seop^{d_0}(\wt M)$ when the se-regularity order $d_0$ is sufficiently large (depending on $\sfs,\sfs_0$). This implies~\eqref{EqNTameMult} for $k=j=0$ with $d=0$.

  For $k\leq d$, where $d>2\bar s+2$ with $\bar s:=\max(0,\lceil\sup\sfs\rceil)$ will be fixed below, the estimate~\eqref{EqNTameMult} follows from the case $k=0$ upon plugging in $D_\sop^j\ell$ and $D_\sop^{k-j}u$ in place of $\ell$ and $u$, respectively. Consider now $k>d$. It suffices to estimate $B(D_{[\sop]}^j\ell)(D_{[\sop]}^{k-j} u)$ in $H_{\seop,\eps}^\sfs$ by the right hand side of~\eqref{EqNTameMult}. For $j\leq d$ or $j\geq k-d$, we again apply the case $k=0$ to $D_{[\sop]}^j\ell$ and $D_{[\sop]}^{k-j}u$ in place of $\ell$ and $u$, respectively, to conclude. It remains to consider the terms with $d+1\leq j\leq k-d-1$. We first crudely estimate
  \[
    \|B(D_{[\sop]}^j\ell)(D_{[\sop]}^{k-j}u)\|_{H_{\seop,\eps}^\sfs} \leq C\|B(D_{[\sop]}^j\ell)(D_{[\sop]}^{k-j}u)\|_{H_{\sop,\eps}^{\bar s}} \leq C'\sum_{q=0}^{\bar s} \| D_{[\sop]}^q ( B(D_{[\sop]}^j\ell)(D_{[\sop]}^{k-j}u) ) \|_{L^2}
  \]
  and expand $D_{[\sop]}^q$ using the Leibniz rule, i.e.\ schematically
  \[
    D_{[\sop]}^q (B v) = \sum_{i=0}^q ({\rm ad}_{D_{[\sop]}}^i B)(D_{[\sop]}^{q-i}v),\qquad v:=(D_{[\sop]}^j\ell)(D_{[\sop]}^{k-j}u),
  \]
  where ${\rm ad}_{D_{[\sop]}}^i B$ is an $i$-fold commutator of $B$ with $i$ (possibly different) commutator s-vector fields, and thus satisfies the same assumptions as $B$ relative to the operator $B^\sharp$. We only treat the terms with $q=\bar s$ and $i=0$ (which thus involve the largest number of overall s-derivatives); expanding $D_{[\sop]}^{\bar s}v$, we thus need to estimate
  \[
    \sum_{l=0}^{\bar s} \| B (D_{[\sop]}^{j+l}\ell) (D_{[\sop]}^{k+\bar s-j-l}u) \|_{L^2}.
  \]
  Writing $u=B^\sharp u+(I-B^\sharp)\chi u$ as in~\eqref{EqNTameMultInsert}, we bound the $l$-th summand by
  \begin{equation}
  \label{EqNTameMult2}
    \|B\|_{L^2\to L^2} \Bigl( \| (D_{[\sop]}^{j+l}\ell)(D_{[\sop]}^{k+\bar s-j-l}(B^\sharp u)) \|_{L^2} + \|(D_{[\sop]}^{j+l}\ell) (D_{[\sop]}^{k+\bar s-j-l}((I-B^\sharp)\chi u)) \|_{L^2} \Bigr).
  \end{equation}
  (Here $d+1\leq j\leq k-d-1$ and $0\leq l\leq\bar s$.) For a value $d_1$ with $\bar s<d_1$ and $2 d_1\leq d$ (to be determined below), we now apply Lemma~\ref{LemmaNTameDaDb} to the first term with $D_\sop^{d_1}\ell$, $D_\sop^{d_1}(B^\sharp u)$ and $a=j+l-d_1$, $b=k+\bar s-j-l-d_1$, $s_0=3$; this gives the bound
  \[
    \|(D_{[\sop]}^{j+l}\ell)(D_{[\sop]}^{k+\bar s-j-l}(B^\sharp u))\|_{L^2} \leq C'\Bigl(\|\ell\|_{\cC_{\sop,\eps}^{d_1}}\|B^\sharp u\|_{H_{\sop,\eps}^{k+\bar s-d_1}} + \|\ell\|_{\cC_{\sop,\eps}^{k+\bar s-2 d_1}}\|B^\sharp u\|_{H_{\sop,\eps}^{d_1+3}}\Bigr).
  \]
  But $\|u\|_{H_{\sop,\eps}^{k+\bar s-d_1}}\leq C\|u\|_{H_{(\seop;\sop),\eps}^{(\sfs;k+\bar s+\ubar s-d_1)}}$ where $\ubar s:=\max(0,\lceil\sup(-\sfs)\rceil)$, and likewise $\|u\|_{H_{\sop,\eps}^{d_1+3}}\leq C\|u\|_{H_{(\seop;\sop),\eps}^{(\sfs;d_1+3+\ubar s)}}$. Thus, if we require $d_1\geq\ubar s+\bar s$ and $d\geq d_1+3+\ubar s$, we obtain the bound
  \[
    \|(D_{[\sop]}^{j+l}\ell)(D_{[\sop]}^{k+\bar s-j-l}(B^\sharp u))\|_{L^2} \leq C\Bigl(\|\ell\|_{\cC_{\sop,\eps}^{d_1}}\|B^\sharp u\|_{H_{(\seop;\sop),\eps}^{(\sfs;k)}}+\|\ell\|_{\cC_{\sop,\eps}^k}\|B^\sharp u\|_{H_{(\seop;\sop),\eps}^{(\sfs;d)}}\Bigr).
  \]
  With the stronger requirements $d_1\geq\ubar s_0+\bar s$, $d\geq d_1+3+\ubar s_0$, where $\ubar s_0:=\max(0,\lceil\sup(-\sfs_0)\rceil)$, we can similarly bound the second term in~\eqref{EqNTameMult2} by $\|\ell\|_{\cC_{\sop,\eps}^{d_1}}\|\chi u\|_{H_{(\seop;\sop),\eps}^{(\sfs_0;k)}}+\|\ell\|_{\cC_{\sop,\eps}^k}\|\chi u\|_{H_{(\seop;\sop),\eps}^{(\sfs_0;d)}}$; here we use the uniform boundedness of $I-B^\sharp$ on every mixed (se;s)-Sobolev space.
\end{proof}

We proceed to discuss tame analogues of microlocal regularity results. In the elliptic regularity estimate~\eqref{EqFVarsseEll}, the constant $C$ can be taken to be a function of the $\cC_{(\seop;\sop),\eps}^{(d_0;k)}$-norm of the coefficients of $\wt L$ when expressing $\wt L$ as a finite sum of operators $\ell_j A_j$ where $\ell_j\in\cC_{\seop;\sop}^{(d_0;k)}$ and $A_j\in\Psi_\seop^m(\wt M)$, with the $A_j$ fixed. This estimate can thus be used for s-regularity orders $k$ below some (large but fixed) value $d\in\N$, much as in the proof of the previous result. We shall write $\|\wt L\|_{\cC_{(\seop;\sop),\eps}^{(d_0;k)}}$ for the sum of the $\cC_{(\seop;\sop),\eps}^{(d_0;k)}$-norms of the $\ell_j$.

\begin{lemma}[se-microlocal elliptic regularity: tame estimate]
\label{LemmaNTameEll}
  Given $m,\sfs,N$, there exist $d_0,d\in\N$ so that for all $k\in\N_0$ there exists a constant $C_k$ so that, under the assumptions of Lemma~\usref{LemmaFVarsseEll}, we have a tame estimate
  \begin{equation}
  \label{EqNTameEll}
  \begin{split}
    &\|B u\|_{H_{(\seop;\sop),\eps}^{(\sfs;k)}} \\
    &\quad \leq C_k \biggl[ \|G \wt L u\|_{H_{(\seop;\sop),\eps}^{(\sfs-m;k)}} + \|\chi u\|_{H_{(\seop;\sop),\eps}^{(-N;k)}} + \|\wt L\|_{\cC_{(\seop;\sop),\eps}^{(d_0;k)}} \Bigl( \|G \wt L u\|_{H_{(\seop;\sop),\eps}^{(\sfs-m;d)}} + \|\chi u\|_{H_{(\seop;\sop),\eps}^{(-N;d)}} \Bigr) \biggr].
  \end{split}
  \end{equation}
  The constant $C_k$ can be taken to be uniform for perturbations of $\wt L$ which are sufficiently small in $\cC_{(\seop;\sop),\eps}^{(d_0;d)}\Psi_\seop^m(\wt M)$.
\end{lemma}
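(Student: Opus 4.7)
The strategy is to argue by induction on $k$, running the inductive structure of the proof of Lemma~\ref{LemmaFVarsseEll} but replacing its uniform operator bounds for $\wt L$ by Lemma~\ref{LemmaNTameMult} (tame multiplication), so as to expose the required linear dependence on $\|\wt L\|_{\cC_{(\seop;\sop),\eps}^{(d_0;k)}}$. I fix a finite representation $\wt L = \sum_j \ell_j A_j$ with $A_j \in \Psi_\seop^m(\wt M)$ fixed and $\ell_j \in \cC_{(\seop;\sop),\eps}^{(d_0;k)}(\wt M)$, so that $\|\wt L\|_{\cC_{(\seop;\sop),\eps}^{(d_0;k)}} \sim \sum_j \|\ell_j\|_{\cC_{(\seop;\sop),\eps}^{(d_0;k)}}$, together with an a priori chain $B \prec B_1 \prec \cdots \prec B_K \prec G$ of microlocal cutoffs and matching nested spatial cutoffs $\chi \prec \chi_1 \prec \cdots \prec \chi_K$ in which all inductive microlocalizations will take place; here $K$ is a sufficiently large integer (larger than the final target $k$), and $d_0$ and $d \geq d_0$ are chosen large enough that Lemmas~\ref{LemmaFVarsseEll} and~\ref{LemmaNTameMult} apply at all se-orders appearing below.

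The base case $k = 0$ is immediate from Lemma~\ref{LemmaFVarsseEll}: the latter yields $\|Bu\|_{H_{\seop,\eps}^\sfs} \leq C(\|G\wt L u\|_{H_{\seop,\eps}^{\sfs-m}} + \|\chi u\|_{H_{\seop,\eps}^{-N}})$ with a constant $C$ depending only on a fixed $\cC_{\seop,\eps}^{d_0}$-seminorm of the coefficients of $\wt L$, hence uniform for small perturbations of $\wt L$ in $\cC_{(\seop;\sop),\eps}^{(d_0;d)}\Psi_\seop^m$. Dropping the non-negative second-line term in~\eqref{EqNTameEll} completes this case.

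For the inductive step $k \Rightarrow k+1$, let $\{V_l\}_{l=1}^N \subset \cV_{[\sop]}(\wt M)$ span $\cV_\sop(\wt M)$ over $\CI(\wt M)$ (Lemma~\ref{LemmaFsComm}). Applying the inductive hypothesis to $V_l u$ using the derived equation $\wt L(V_l u) = V_l(\wt L u) + [\wt L, V_l] u$, and using $[V_l, B], [V_l, G] \in \Psi_\seop^0$ to pass between $V_l B u$ and $B V_l u$ (and between $G V_l \wt L u$ and $V_l G \wt L u$) with errors absorbed by a slight enlargement of cutoffs, it remains to estimate $\|G[\wt L, V_l] u\|_{H^{(\sfs-m;k)}}$. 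Expanding
\[
  [\wt L, V_l] u = \sum_j \bigl(\ell_j [A_j, V_l] + (V_l \ell_j) A_j\bigr) u,
\]
each summand has the form $\ell \cdot (A u)$ with $A \in \Psi_\seop^m$ fixed and $\ell \in \cC_{(\seop;\sop),\eps}^{(d_0;k+1)}$, satisfying $\|\ell\|_{\cC^{(d_0;j)}} \lesssim \|\wt L\|_{\cC^{(d_0;j+1)}}$ for each $j$. Lemma~\ref{LemmaNTameMult}, applied with target se-order $\sfs - m$ and s-order $k$, with $B^\sharp \in \Psi_\seop^0$ microlocally the identity on $\WF_\seop'(G)$ and $\WF_\seop'(B^\sharp) \subset \Ell_\seop(\wt L)$, and with a slightly enlarged spatial cutoff, bounds the $H^{(\sfs-m;k)}$-norm of each such summand by the sum of two terms: one carrying the factor $\|\ell\|_{\cC^{(d_0;d)}}$ times a norm of $A u$ at s-order $k$, and one carrying the tame factor $\|\ell\|_{\cC^{(d_0;k)}}$ times a norm of $A u$ at the fixed low s-order $d$. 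Microlocal elliptic regularity for $\wt L$ (the non-tame Lemma~\ref{LemmaFVarsseEll}, applied separately at s-orders $k$ and $d$) then converts each norm of $A u$ into an expression of the form $\|G' \wt L u\|_{H^{(\sfs-m;\cdot)}} + \|\chi' u\|_{H^{(-N;\cdot)}}$ on slightly enlarged cutoffs $G', \chi'$, which are precisely the two required contributions to the right-hand side of~\eqref{EqNTameEll} at s-order $k+1$.

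The main obstacle is to close the induction with the \emph{same} cutoff operators $B, G, \chi$ (up to the a priori fixed enlargement) without uncontrolled microlocal growth across the $k$ inductive steps; this is the role of the a priori nested chain of cutoffs fixed at the outset, which must be long enough to accommodate the slight enlargements needed at each step up to the final desired $k$. A closely related point is that a naive tame estimate would leave an error $\|\ell\|_{\cC^{(d_0;k)}}\|u\|_{H^{(\cdot;k)}}$, which would circularly require control of $u$ at the target s-order $k+1$; Lemma~\ref{LemmaNTameMult} avoids this precisely by trading such a term for a factor involving only the \emph{fixed} low s-order $d$ on $u$, which is controlled through the inductive base. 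Uniformity for small perturbations of $\wt L$ propagates through the induction automatically, since all constants depend on $\wt L$ only through the fixed-order norm $\|\wt L\|_{\cC_{(\seop;\sop),\eps}^{(d_0;d)}}$ plus the explicit tame factor $\|\wt L\|_{\cC_{(\seop;\sop),\eps}^{(d_0;k)}}$.
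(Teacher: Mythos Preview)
Your overall strategy—induction on $k$, commuting one s-vector field at a time and invoking the tame multiplication Lemma~\ref{LemmaNTameMult}—is reasonable and close in spirit to the paper's proof, but the paper organizes the induction differently: it applies the \emph{base case} ($k=0$) se-estimate to $u':=D_{[\sop]}^k u$ all at once, expands $\wt L u'=D_{[\sop]}^k f+\sum_{j\geq 1}(\mathrm{ad}_{D_{[\sop]}}^j\wt L)D_{[\sop]}^{k-j}u$, bounds the commutator sum via Lemma~\ref{LemmaNTameMult}, and is left with error terms at s-order $k-1$ which are then estimated by the inductive hypothesis.

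Your argument has a genuine gap at the conversion step. After Lemma~\ref{LemmaNTameMult} you obtain a term of the form $\|\ell'\|_{\cC^{(d_0;d)}}\cdot\|B^\sharp A u\|_{H^{(\sfs-m;k)}}$ (low-order coefficient norm times a high-s-order norm of $u$). You propose to bound $\|B^\sharp A u\|_{H^{(\sfs-m;k)}}$ via the \emph{non-tame} Lemma~\ref{LemmaFVarsseEll} at s-order $k$. But, as the paragraph immediately preceding Lemma~\ref{LemmaNTameEll} spells out, the constant in Lemma~\ref{LemmaFVarsseEll} at s-order $k$ depends on $\|\wt L\|_{\cC^{(d_0;k)}}$. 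This produces a coefficient of the form $\|\wt L\|_{\cC^{(d_0;d+1)}}\cdot F(\|\wt L\|_{\cC^{(d_0;k)}})$ in front of $\|G'\wt L u\|_{H^{(\sfs-m;k)}}$, which is \emph{not} of tame form: the constant $C_k$ in~\eqref{EqNTameEll} is required to be uniform for perturbations of $\wt L$ which are small in $\cC^{(d_0;d)}$ only, i.e.\ to depend on $\wt L$ solely through the fixed low-order norm $\|\wt L\|_{\cC^{(d_0;d)}}$.

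The fix is to use, in place of the non-tame Lemma~\ref{LemmaFVarsseEll} at s-order $k$, your own \emph{inductive hypothesis} (the already-established tame estimate at level $k$, applied with the enlarged cutoff $\tilde B$). This gives $\|\tilde B u\|_{H^{(\sfs;k)}}$ bounded by the tame right-hand side at level $k$, and the resulting product with the low-order factor $\|\ell'\|_{\cC^{(d_0;d)}}$ is then genuinely tame. (At the low s-order $d$, your use of non-tame Lemma~\ref{LemmaFVarsseEll} is fine, since its constant there depends only on $\|\wt L\|_{\cC^{(d_0;d)}}$.) The paper avoids this pitfall precisely by always applying the base-case estimate—whose constant depends only on the $\cC_\seop^{d_0}$ norm of $\wt L$—to the fully differentiated equation, and relegating all high-s-order control of $u$ to the inductive hypothesis applied at level $k-1$.
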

\begin{proof}
  We will fix $d=d(m,\sfs,N)$ to be large. Write $f:=\wt L u$. From the above discussion, we have, for $k\leq d$, an estimate
  \[
    \|B u\|_{H_{(\seop;\sop),\eps}^{(\sfs;k)}} \leq C_k\Bigl( \|G f\|_{H_{(\seop;\sop),\eps}^{(\sfs-m;k)}} + \|\chi u\|_{H_{(\seop;\sop),\eps}^{(-N;k)}} \Bigr).
  \]
  For $k>d$, we apply the case $k=0$ to $u':=D_{[\sop]}^k u$; thus, schematically,
  \[
    f':=\wt L u'=D_{[\sop]}^k f+\sum_{j=1}^k ({\rm ad}_{D_{[\sop]}}^j\wt L)D_{[\sop]}^{k-j}u.
  \]
  We shall bound the two terms in
  \begin{equation}
  \label{EqNTameEllGfp}
    \|G f'\|_{H_{\seop,\eps}^{\sfs-m}}\leq\|G D_{[\sop]}^k f\|_{H_{\seop,\eps}^{\sfs-m}}+\sum_{j=1}^k\|G ({\rm ad}_{D_{[\sop]}}^j\wt L)D_{[\sop]}^{k-j}u\|_{H_{\seop,\eps}^{\sfs-m}}
  \end{equation}
  separately. We can estimate the $j$-th summand using Lemma~\ref{LemmaNTameMult} with $k-1$, $j-1$ in place of $k$, $j$. The first term in~\eqref{EqNTameEllGfp} is bounded by $\sum_{l=0}^k\|D_{[\sop]}^l({\rm ad}_{D_{[\sop]}}^{k-l}G)f\|_{H_{\seop,\eps}^{\sfs-m}}\leq C(\|\tilde G f\|_{H_{(\seop;\sop),\eps}^{(\sfs-m;k)}}+\|\chi f\|_{H_{(\seop;\sop),\eps}^{(-N-m;k)}})$ where $\tilde G\in\Psi_\seop^0$ is elliptic on $\WF_\seop'(G)$; but again using the Leibniz rule we can estimate
  \[
    \|\chi f\|_{H_{(\seop;\sop),\eps}^{(-N-m;k)}}=\sum_{j=0}^k\|D_{[\sop]}^j\chi\wt L u\|_{H_{\seop,\eps}^{-N-m}}
  \]
  by the right hand side of~\eqref{EqNTameEll} without the $G\wt L u$ terms. Altogether, we arrive at
  \begin{align*}
    \|G f'\|_{H_{\seop,\eps}^{\sfs-m}} &\leq C_k\biggl[\|\tilde G f\|_{H_{(\seop;\sop),\eps}^{(\sfs-m;k)}} + \|\tilde G u\|_{H_{(\seop;\sop),\eps}^{(\sfs;k-1)}} + \|\chi u\|_{H_{(\seop;\sop),\eps}^{(-N;k-1)}} \\
      &\quad \hspace{2.8em} + \|\wt L\|_{\cC_{(\seop;\sop),\eps}^{(d_0;k)}}\Bigl(\|\tilde G u\|_{H_{(\seop;\sop),\eps}^{(\sfs;d)}} + \|\chi u\|_{H_{(\seop;\sop),\eps}^{(-N;d)}}\Bigr) \biggr].
  \end{align*}

  Using the mapping properties of $\Psi_\seop^0$, we can moreover bound
  \begin{equation}
  \label{EqNTameEllBup}
    \|B u'\|_{H_{\seop,\eps}^\sfs} \geq c\|B u\|_{H_{(\seop;\sop),\eps}^{(\sfs;k)}} - C\|\tilde B u\|_{H_{(\seop;\sop),\eps}^{(\sfs;k-1)}}
  \end{equation}
  where $\tilde B\in\Psi_\seop^0$ is elliptic on $\WF'_\seop(B)$. Altogether, we thus obtain the desired tame estimate~\eqref{EqNTameEll} except with slightly larger cutoffs and with an additional term $C_k(\|\tilde G u\|_{H_{(\seop;\sop),\eps}^{(\sfs;k-1)}}+\|\tilde B u\|_{H_{(\seop;\sop),\eps}^{(\sfs;k-1)}})$ on the right hand side. Choosing, as we may, the operator wave front sets of $G$ and $\tilde B,\tilde G$ to be supported in a small neighborhood of $\WF'_\seop(B)$, we can estimate this term inductively to conclude.
\end{proof}

We next prove tame estimates in the setting of Lemma~\ref{LemmaFVarsseProp}.

\begin{lemma}[se-microlocal real principal type propagation: tame estimate]
\label{LemmaNTameProp}
  Given $m,\sfs,N$, there exist $d_0,d\in\N$ so that for all $k\in\N_0$ there exists a constant $C_k$ so that, under the assumptions of Lemma~\usref{LemmaFVarsseProp}, we have a tame estimate
  \begin{equation}
  \label{EqNTameProp}
  \begin{split}
    \|B u\|_{H_{(\seop;\sop),\eps}^{(\sfs;k)}} &\leq C_k\biggl[ \|G \wt L u\|_{H_{(\seop;\sop),\eps}^{(\sfs-m+1;k)}} + \|E u\|_{H_{(\seop;\sop),\eps}^{(\sfs;k)}} + \|\chi u\|_{H_{(\seop;\sop),\eps}^{(-N;k)}} \\
      &\quad \hspace{2em} + \|\wt L_1\|_{\cC_{(\seop;\sop),\eps}^{(d_0;k)}}\Bigl( \|G \wt L u\|_{H_{(\seop;\sop),\eps}^{(\sfs-m+1;d)}} + \|E u\|_{H_{(\seop;\sop),\eps}^{(\sfs;d)}} + \|\chi u\|_{H_{(\seop;\sop),\eps}^{(-N;d)}} \Bigr) \biggr]
  \end{split}
  \end{equation}
  for all $u$ with support in $t\geq t_0$. The constant $C_k$ can be taken to be uniform for perturbations of $\wt L=\wt L_0+\wt L_1$ which are sufficiently small in that $\wt L_0$ is perturbed in $\Psi_\seop^m(\wt M)$ and $\wt L_1$ in $\cC_{\seop;\sop}^{(d_0;d),\delta,\delta}\Psi_\seop^m(\wt M)$.
\end{lemma}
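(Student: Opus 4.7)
The plan is to mirror the strategy used for Lemma~\ref{LemmaNTameEll}, with the extra care needed to accommodate the a priori control term $\|E u\|$ present in real principal type propagation. Fix integers $d_0, d$ (depending on $m, \sfs, N$) to be chosen large. For $k \leq d$, the estimate~\eqref{EqNTameProp} follows immediately from Lemma~\ref{LemmaFVarsseProp}, since the operator norms there depend on a fixed finite seminorm of the coefficients of $\wt L_1$, which is bounded by $\|\wt L_1\|_{\cC_{(\seop;\sop),\eps}^{(d_0;d)}}$ and can be absorbed into $C_k$.

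For $k > d$, the idea is to apply Lemma~\ref{LemmaFVarsseProp} (non-tame, with $k = 0$ and se-order $\sfs$) to $u' := D_{[\sop]}^k u$. Using the fact that $\wt L$ has scalar principal symbol and that commutation with $\cV_{[\sop]}(\wt M)$ preserves the class $\Psi_\seop^m + \cC_{(\seop;\sop),\eps}^{(d_0;k-1),\delta,\delta}\Psi_\seop^m$ (cf.\ the alternative argument at the end of the proof of Lemma~\ref{LemmaFVarsseProp}), we can expand
\[
  \wt L u' = D_{[\sop]}^k f + \sum_{j=1}^{k} (\mathrm{ad}_{D_{[\sop]}}^j \wt L) D_{[\sop]}^{k-j} u =: f',
\]
so that $u'$ still has support in $t \geq t_0$ and the principal part of $\wt L$ acting on $u'$ is unchanged. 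Applying Lemma~\ref{LemmaFVarsseProp} to $u'$ yields
\[
  \|B u'\|_{H_{\seop,\eps}^\sfs} \leq C\Bigl( \|G f'\|_{H_{\seop,\eps}^{\sfs - m + 1}} + \|E u'\|_{H_{\seop,\eps}^\sfs} + \|\chi u'\|_{H_{\seop,\eps}^{-N}} \Bigr).
\]
The left-hand side controls $\|B u\|_{H_{(\seop;\sop),\eps}^{(\sfs;k)}}$ modulo $C\|\tilde B u\|_{H_{(\seop;\sop),\eps}^{(\sfs;k-1)}}$ (as in~\eqref{EqNTameEllBup}), and analogously for the $E$ and $\chi$ terms on the right---after commuting through $E$, $\chi$ one acquires extra error terms with $k-1$ orders of s-regularity, which will be handled by induction on $k$ (at the cost of slightly enlarging cutoffs).

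The heart of the argument is the tame control of $\|G f'\|_{H_{\seop,\eps}^{\sfs - m + 1}}$. Split
\[
  \|G f'\|_{H_{\seop,\eps}^{\sfs - m + 1}} \leq \|G D_{[\sop]}^k f\|_{H_{\seop,\eps}^{\sfs - m + 1}} + \sum_{j=1}^k \|G (\mathrm{ad}_{D_{[\sop]}}^j \wt L) D_{[\sop]}^{k-j} u\|_{H_{\seop,\eps}^{\sfs - m + 1}}.
\]
The first term is absorbed into $\|G \wt L u\|_{H_{(\seop;\sop),\eps}^{(\sfs - m + 1;k)}} + \|\chi u\|_{H_{(\seop;\sop),\eps}^{(-N;k)}}$ (expanding via Leibniz and relocalizing exactly as in the proof of Lemma~\ref{LemmaNTameEll}, using $\wt L \chi u \in \Psi_\seop^m$ acting on $\chi u$). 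For the commutator sum, the $\wt L_0$-contributions have smooth coefficients and thus contribute only non-tame terms bounded inductively by $\|\tilde G u\|_{H_{(\seop;\sop),\eps}^{(\sfs;k-1)}}$ (via the inductive hypothesis applied with $k-1$), while the $\wt L_1$-contributions $(\mathrm{ad}_{D_{[\sop]}}^j \wt L_1) D_{[\sop]}^{k-j} u$ are exactly of the product form to which Lemma~\ref{LemmaNTameMult} applies, producing precisely the tame structure $\|\wt L_1\|_{\cC_{(\seop;\sop),\eps}^{(d_0;k)}}$ times a low-s-regularity norm of $u$ (which can in turn be bounded by the left hand side of~\eqref{EqNTameProp} with $d$ in place of $k$, and then by the right hand side using the base case).

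The main obstacle is the middle range $d < j < k - d$ in the commutator sum, where both the coefficient of $\mathrm{ad}_{D_{[\sop]}}^j \wt L_1$ and $D_{[\sop]}^{k-j}u$ have many s-derivatives: it is precisely to treat these terms that we invoke Lemma~\ref{LemmaNTameMult}, which splits each such product into a term linear in $\|\wt L_1\|_{\cC_{(\seop;\sop),\eps}^{(d_0;k)}}$ multiplied by a low-s-regularity norm of $u$, and a term linear in a low-s-regularity norm of $\wt L_1$ multiplied by $\|u\|_{H_{(\seop;\sop),\eps}^{(\cdot;k)}}$. The latter factor of $u$ is then controlled microlocally, using $u = (I - B^\sharp)u + B^\sharp u$ type decompositions and the inductive hypothesis (with $k-1$ in place of $k$) on the error terms with one less s-derivative. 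A finite induction on $k$ starting from the base case $k \leq d$ completes the proof; uniformity under perturbations is automatic since all constants depend on $\wt L_0$ only through a fixed seminorm on $\Psi_\seop^m$ and on $\wt L_1$ only through $\cC_{(\seop;\sop),\eps}^{(d_0;d)}\Psi_\seop^m$-smallness.
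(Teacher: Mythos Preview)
There is a genuine gap in your treatment of the $j=1$ term in the commutator sum. You claim that the contributions from $\wt L_0$ are ``bounded inductively by $\|\tilde G u\|_{H_{(\seop;\sop),\eps}^{(\sfs;k-1)}}$'', but this fails for $j=1$. The operator $\mathrm{ad}_{D_{[\sop]}}\wt L$ lies in $\Psi_\seop^m$ (commutation with $\cV_{[\sop]}$ does not lower the se-order), so estimating
\[
  \|G(\mathrm{ad}_{D_{[\sop]}}\wt L)D_{[\sop]}^{k-1}u\|_{H_{\seop,\eps}^{\sfs-m+1}}
\]
requires $D_{[\sop]}^{k-1}u\in H_{\seop,\eps}^{\sfs+1}$ microlocally, i.e.\ $u\in H_{(\seop;\sop),\eps}^{(\sfs+1;k-1)}$. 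This is \emph{not} controlled by $H_{(\seop;\sop),\eps}^{(\sfs;k-1)}$. You could instead invoke $H_{(\seop;\sop),\eps}^{(\sfs;k)}\subset H_{(\seop;\sop),\eps}^{(\sfs+1;k-1)}$, but that is circular; or you could run the induction at se-order $\sfs+1$, but then after $k$ steps the se-order has grown to $\sfs+k$, forcing $d_0,d$ to depend on $k$ and destroying the tame structure. This is exactly the obstruction flagged in the first argument of the proof of Lemma~\ref{LemmaFVarsseProp}; the ``alternative argument'' you cite there is designed precisely to circumvent it, but you have not actually implemented it.

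The fix the paper uses is to write $[\wt L,D_{[\sop]}]=A D_{[\sop]}+A'$ with $A,A'\in(\CI+\cC_{\seop;\sop}^{(d_0;k-1)})\Psi_\seop^{m-1}$, and then to apply the (non-tame) propagation estimate not to $\wt L$ but to the principally equivalent operator $\wt L-A$ acting on $D_{[\sop]}^k u$. The resulting forcing
\[
  (\wt L-A)D_{[\sop]}^k u = D_{[\sop]}^k f + A'D_{[\sop]}^{k-1}u + \sum_{j=2}^k c_{k,j}(\mathrm{ad}_{D_{[\sop]}}^j\wt L)D_{[\sop]}^{k-j}u
\]
now starts at $j=2$, and since $\mathrm{ad}_{D_{[\sop]}}^j\wt L\in\Psi_\seop^m\subset\Psi_\seop^{m-1}\Diffs^1$, each term uses at most $k-j+1\leq k-1$ s-derivatives of $u$ at se-order $\sfs$; the extra $A'D_{[\sop]}^{k-1}u$ term likewise needs only $(\sfs;k-1)$ control since $A'\in\Psi_\seop^{m-1}$. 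After this adjustment, your treatment of the remaining terms (tame multiplication for large $j$, induction for the lower-order errors) goes through.
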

\begin{proof}
  As before, we only need to consider $k$ above some large but fixed value $d$. As in the second proof of Lemma~\ref{LemmaFVarsseProp}, write $[\wt L,D_{[\sop]}]=A D_{[\sop]}+A'$ where the coefficients of $A,A'\in(\CI+\cC_{\seop;\sop}^{(d_0;k-1)})\Psi_\seop^{m-1}$ are bounded in norm for the parameter value $\eps$ by $\|\wt L_1\|_{\cC_{(\seop;\sop),\eps}^{(d_0;k)}}$. Thus, $(\wt L-A)D_{[\sop]}=D_{[\sop]}\wt L+A'$. Writing $f=\wt L u$, we then have (schematically)
  \begin{equation}
  \label{EqNTamePropfp}
    (\wt L-A)D_{[\sop]}^k u = f' := D_{[\sop]}\wt L D_{[\sop]}^{k-1}u + A' D_{[\sop]}^{k-1}u = D_{[\sop]}^k f + A' D_{[\sop]}^{k-1} u + \sum_{j=2}^k ({\rm ad}_{D_{[\sop]}}^j\wt L) D_{[\sop]}^{k-j}u.
  \end{equation}
  (The operator $\wt L-A$ on the left hand side acts on the vector of $k$-fold derivatives of $u$ along a fixed spanning set of elements of $\cV_{[\sop]}$, with $\wt L$ acting component-wise and $A$ being an appropriate matrix of operators of class $(\CI+\cC_{\seop;\sop}^{(d_0;k-1)})\Psi_\seop^{m-1}$.) We then apply the propagation estimate on $H_{\seop,\eps}^\sfs$ to this equation. The main task is to estimate $\|G f'\|_{H_{\seop,\eps}^{\sfs-m+1}}$, which we do using the arguments following~\eqref{EqNTameEllGfp}, except for two differences.

  First, the sum in~\eqref{EqNTamePropfp} starts at $j=2$; for $2\leq j\leq d$, we use ${\rm ad}_{D_{[\sop]}}^j\wt L\in\cC_{\seop;\sop}^{(d_0;k-j)}\Psi_\seop^m$ and relax the $\Psi_\seop^m$ part to $\Psi_\seop^{m-1}\Diffs^1\supset\Psi_\seop^m$; this yields the estimate
  \begin{equation}
  \label{EqNTamePropEst}
    \|G({\rm ad}_{D_{[\sop]}}^j\wt L)D_{[\sop]}^{k-j}u\|_{H_{\seop,\eps}^{\sfs-m+1}} \leq C\|\wt L\|_{\cC_{(\seop;\sop),\eps}^{(d_0;d)}}\Bigl(\|\tilde G u\|_{H_{(\seop;\sop),\eps}^{(\sfs;k-1)}}+\|\chi u\|_{H_{(\seop;\sop),\eps}^{(-N;k-1)}}\Bigr),
  \end{equation}
  where we use that $k-j+1\leq k-1$. For $j>d$, we can directly quote Lemma~\ref{LemmaNTameMult} to obtain a tame estimate.

  Second, we need to estimate the additional term $\|G A' D_{[\sop]}^{k-1}u\|_{H_{\seop,\eps}^{\sfs-m+1}}$; but since $A' D_{[\sop]}^{k-1}=A' D_{[\sop]} D_{[\sop]}^{k-2}$, with $A' D_{[\sop]}\in\Psi_\seop^{m-1}\Diffs^1$, this can be estimated in exactly the same fashion.

  Finally, using an inductive argument in $k$ to estimate (microlocalized) $H_{(\seop;\sop),\eps}^{(\sfs;k-1)}$-norms of $u$, we conclude the proof of the tame estimate~\eqref{EqNTamePropfp}.
\end{proof}

Lemmas~\ref{LemmaNTameEll}--\ref{LemmaNTameProp} remain valid, \emph{mutatis mutandis}, for weighted operators on weighted Sobolev spaces.

We continue with tame versions of radial point estimates. Since the arguments are very similar to those in the proof of Lemma~\ref{LemmaNTameProp}, we shall only prove a tame version of the outgoing radial point estimate, Proposition~\ref{PropEstRadsOut}, and leave the (notational) modifications required to prove tame versions of the other estimates (Propositions~\ref{PropEstRadsIn} and \ref{PropEstHors}) to the reader.

\begin{prop}[Propagation near $\cR_{\rm out}^+$: tame estimate]
\label{PropNTameRout}
  Let $s,s_0,N\in\R$. Then there exist $d_0,d\in\N_0$ so that the following holds for operators $\wt L$ as in~\eqref{ItEstLBundle}--\eqref{ItEstLOp} in~\S\usref{SEst}, or indeed for operators satisfying the weaker regularity assumptions
  \begin{equation}
  \label{EqNTameRoutOp}
    \wt L=\wt L_0+\wt L_1,\qquad \wt L_0\in\hat\rho^{-2}(\Diffse^2+\Psi_\seop^1),\quad \wt L_1\in\hat\rho^{-2}\cC_{\seop;\sop}^{(d_0;d),1,1}(\Diffse^2+\Psi_\seop^1).
  \end{equation}
  Let $\alpha_\circ,\hat\alpha\in\R$ with $s+\alpha_\circ-\hat\alpha<\frac12(-1-\vartheta_{\rm out})$ in the notation of Proposition~\usref{PropEstRadOut}. Define $\cK=\{\rho_\infty=\hat\rho=\rho_\circ=0,\ \hat\xi_\seop=1,\ t_0\leq t\leq t_1\}$. Then for all neighborhoods $\cU\subset\Sse^*\wt M$ of $\cK$ and all $\chi\in\CIc(\wt M)$ equal to $1$ near $\pa M_\circ\cap\{t_0\leq t\leq t_1\}$, there exist operators $B,E,G\in\Psi_\seop^0(\wt M)$ with $\chi B\chi=B$ and $\WF_\seop'(B)\subset\cU$, similarly for $E,G$, with $B$ elliptic at $\cK$ and $\WF'_\seop(E)\cap\pa W_{\rm out}^+=\emptyset$ so that for all $k\in\N_0$, there exists a constant $C_k$ so that we have the uniform tame estimate
  \begin{equation}
  \label{EqNTameRout}
  \begin{split}
    &\|B u\|_{H_{(\seop;\sop),\eps}^{(s;k),\alpha_\circ,\hat\alpha}} \\
    &\quad \leq C_k\biggl[ \|G\wt L u\|_{H_{(\seop;\sop),\eps}^{(s-1;k),\alpha_\circ,\hat\alpha-2}} + \|E u\|_{H_{(\seop;\sop),\eps}^{(s;k),\alpha_\circ,\hat\alpha}} + \|G u\|_{H_{(\seop;\sop),\eps}^{(s_0;k),\alpha_\circ,\hat\alpha}} + \|\chi u\|_{H_{(\seop;\sop),\eps}^{(-N;k),\alpha_\circ,\hat\alpha}} \\
    &\quad \hspace{4em} + \|\wt L_1\|_{\hat\rho^{-2}\cC_{(\seop;\sop),\eps}^{(d_0;k),1,1}} \Bigl( \|G\wt L u\|_{H_{(\seop;\sop),\eps}^{(s-1;d),\alpha_\circ,\hat\alpha-2}} + \|E u\|_{H_{(\seop;\sop),\eps}^{(s;d),\alpha_\circ,\hat\alpha}} \\
    &\quad \hspace{13em} + \|G u\|_{H_{(\seop;\sop),\eps}^{(s_0;d),\alpha_\circ,\hat\alpha}} + \|\chi u\|_{H_{(\seop;\sop),\eps}^{(-N;d),\alpha_\circ,\hat\alpha}} \Bigr) \biggr].
  \end{split}
  \end{equation}
  The constant $C_k$ can be taken to be uniform for perturbations of $\wt L_0$ in $\hat\rho^{-2}\Diffse^2$ and of $\wt L_1$ in $\hat\rho^{-2}\cC_{\seop;\sop}^{(d_0;d),1,1}\Diffse^2$.
\end{prop}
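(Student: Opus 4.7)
The plan is to follow the structure of the proof of Proposition~\ref{PropEstRadsOut}, combined with the tame bookkeeping developed in Lemma~\ref{LemmaNTameProp}, and the tame multiplication estimate of Lemma~\ref{LemmaNTameMult}. First, I will fix a large integer $d=d(s,s_0,N)$ and $d_0=d_0(s,s_0,N)$ as furnished by the (non-tame) Proposition~\ref{PropEstRadsOut}. For $k\le d$, the desired estimate~\eqref{EqNTameRout} is a direct consequence of Proposition~\ref{PropEstRadsOut}, where the constant $C_k$ depends only on a fixed finite $\cC_{(\seop;\sop),\eps}^{(d_0;d)}$-seminorm of the coefficients of $\wt L_1$, in view of the uniformity of that proposition under perturbations; thus for $k\le d$ the first line on the right hand side of~\eqref{EqNTameRout} alone suffices (the coefficient $\|\wt L_1\|_{\cdots}$ being absorbed into $C_k$).

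The case $k>d$ is handled by induction on $k$. As in the proof of Proposition~\ref{PropEstRadsOut}, I apply Lemma~\ref{LemmaEstRadsComm} to the equation $\wt L u=f$ to derive, modulo an error of the schematic form $R u$ with $R\in\hat\rho^{-2}\cC_{\seop;\sop}^{(d_0;k-1),1,1}\Psi_\seop^{-\infty}$ (whose operator wave front set is disjoint from $\Ell_\seop(G)\cup\Ell_\seop(B)\cup\WF_\seop'(E)$ after suitably localizing), the commuted equation
\[
  \wt L^{(1)}(\pa_t u) = f^{(1)} := \pa_t f + R u,\qquad
  \wt L^{(1)} = \wt L - \sum_{j=1}^N a_j A_j\hat\rho,
\]
where $\wt L^{(1)}$ has the same principal symbol as $\wt L$ and fits into the class~\eqref{EqNTameRoutOp} with coefficients bounded in $\cC_{(\seop;\sop),\eps}^{(d_0;k-1),1,1}$ by $C(1+\|\wt L_1\|_{\cC_{(\seop;\sop),\eps}^{(d_0;d),1,1}})$. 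Since the threshold quantity $\vartheta_{\rm out}$ for $\wt L^{(1)}$ agrees with that of $\wt L$ (all modifications enter only at subprincipal order and vanish at $\hat M$), the inductive hypothesis with $(s,k)$ replaced by $(s,k-1)$ applies to this equation.

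The bulk of the work is then to bound the two pieces of $f^{(1)}$ in the tame fashion. For $\pa_t f$, one commutes $\pa_t$ through $G$ (producing controlled commutator error terms) and gains a derivative onto $G\wt L u$ at the cost of one s-order, which is precisely what enters the norm $\|G\wt L u\|_{H_{(\seop;\sop),\eps}^{(s-1;k),\alpha_\circ,\hat\alpha-2}}$ on the right of~\eqref{EqNTameRout}. The term $R u$ is essentially smoothing and of fixed finite s-regularity structure, and is absorbed into the $\|\chi u\|$ error. Crucially, when applying the inductive hypothesis one must also handle the error terms arising from $[\wt L^{(1)}-\wt L,\cdot]$ and the contribution $-\sum a_j A_j\hat\rho$ of $\wt L^{(1)}-\wt L$ acting on $\pa_t u$, which produces summands of the form (schematically) $({\rm ad}_{D_{[\sop]}}^j\wt L_1)D_{[\sop]}^{k-j}u$ for $j$ up to $k$; these are precisely of the form handled by Lemma~\ref{LemmaNTameMult}, which produces the tame splitting into a product $\|\wt L_1\|_{\cC_{(\seop;\sop),\eps}^{(d_0;k)}}\cdot(\text{low s-order norm of }u)$ plus a low s-order norm of $\wt L_1$ times the full $H_{(\seop;\sop),\eps}^{(s;k)}$-norm of $u$, the latter localized via $G$ and estimated by moving back to $G\wt L u$, $E u$, $G u$, $\chi u$ error terms using the inductive hypothesis applied microlocally on slightly enlarged neighborhoods. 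Finally, I invoke the norm equivalence~\eqref{EqEstRadsNorm} (valid because $\hat\rho\pa_t$, and hence $\pa_t$ modulo $\hat\rho$-vanishing factors, is elliptic on $\cU$ after shrinking) to convert the $H_{(\seop;\sop),\eps}^{(s;k-1)}$-control of $B\pa_t u$ into the desired $H_{(\seop;\sop),\eps}^{(s;k)}$-control of $Bu$, up to an $H_{(\seop;\sop),\eps}^{(s;k-1)}$ error which is absorbed by a further application of the inductive hypothesis.

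The main obstacle is keeping the bookkeeping of the tame inequality clean while the commutators produce lower-order but s-regular coefficients, i.e.\ ensuring that the $k-d$ ``excess'' s-derivatives land on $\wt L_1$ rather than on $u$. This is purely notational once Lemma~\ref{LemmaNTameMult} is in hand, but requires that one chooses the cutoff operators $B,E,G$ and the auxiliary enlargements $\tilde B,\tilde E,\tilde G$ compatibly at each step of the induction, precisely as in the proof of Lemma~\ref{LemmaNTameProp}. The uniformity under perturbations follows, as in Lemma~\ref{LemmaNTameProp}, from the fact that the constants $C_k$ at each induction step depend only on $\cC_{(\seop;\sop),\eps}^{(d_0;d)}$-seminorms of $\wt L_1$, hence are continuous in $\wt L_1$ in that norm.
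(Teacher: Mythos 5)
Your proposal takes a genuinely different route from the paper's, and the route has a gap that is not purely notational. The paper commutes $\pa_t^k$ through $\wt L u=f$ \emph{all at once}, rewrites the result as $(\wt L-k\sum_l a_l A_l\hat\rho)\pa_t^k u=f'$, applies the $k=0$ radial point estimate of Proposition~\ref{PropEstRadOut} (not the s-regularity version) to $\pa_t^k u$, and then the tame bookkeeping is carried entirely by the direct estimate of $\|Gf'\|_{H_{\seop,\eps}^{s-1}}$ via Lemma~\ref{LemmaNTameMult}; a residual induction in $k$ appears only at the end, for a lower-order error term that carries no $\|\wt L_1\|$-coefficient. You instead commute $\pa_t$ once to get $\wt L^{(1)}(\pa_t u)=f^{(1)}$ and apply the \emph{level-$(k-1)$ tame estimate} to the modified operator $\wt L^{(1)}$.

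The gap lies in your claim that $\wt L^{(1)}$ ``fits into the class~\eqref{EqNTameRoutOp} with coefficients bounded in $\cC_{(\seop;\sop),\eps}^{(d_0;k-1),1,1}$ by $C(1+\|\wt L_1\|_{\cC_{(\seop;\sop),\eps}^{(d_0;d),1,1}})$.'' This is not correct: by Corollary~\ref{CorFsComm} the coefficients $a_l$ of $[\wt L,\pa_t]$ involve $\pa_t$ of the coefficients of $\wt L_1$, so $\|\wt L_1^{(1)}\|_{\cC_{(\seop;\sop),\eps}^{(d_0;j),1,1}}\sim\|\wt L_1\|_{\cC_{(\seop;\sop),\eps}^{(d_0;j+1),1,1}}$. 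Consequently, the constant $C_{k-1}$ in the inductive hypothesis---which is uniform only for $\wt L_1^{(1)}$ varying in a bounded set in $\cC_{(\seop;\sop),\eps}^{(d_0;d),1,1}$---in fact requires $\wt L_1$ to be controlled in $\cC_{(\seop;\sop),\eps}^{(d_0;d+1),1,1}$. Unwinding the induction, the constant $C_k$ ends up depending on $\|\wt L_1\|_{\cC_{(\seop;\sop),\eps}^{(d_0;d+(k-d)),1,1}}=\|\wt L_1\|_{\cC_{(\seop;\sop),\eps}^{(d_0;k),1,1}}$ in a potentially nonlinear, $k$-dependent way; this destroys the tame structure, which demands that $C_k$ depend on $\wt L_1$ only through a fixed low-order norm, with the $\|\wt L_1\|_{\cC^{(d_0;k)}}$-norm entering only \emph{linearly} and only multiplying the low-$s$-order norms of $f$ and $u$. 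This is exactly the failure mode that the paper's one-shot $k$-fold commutation avoids: the constant in Proposition~\ref{PropEstRadOut} applied to $(\wt L-k\sum_l a_l A_l\hat\rho)\pa_t^k u=f'$ depends on the coefficients of the perturbed operator only through its pure se-regularity $\cC_\seop^{d_0}$-seminorm, and that seminorm is controlled by $\|\wt L_1\|_{\cC_{(\seop;\sop),\eps}^{(d_0;1),1,1}}$ uniformly in $k$.

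A second, smaller issue: you defer the treatment of the iterated commutator terms $({\rm ad}_{D_{[\sop]}}^j\wt L_1)D_{[\sop]}^{k-j}u$ for $j$ up to $k$ to ``the inductive hypothesis applied microlocally on slightly enlarged neighborhoods.'' But these terms are precisely what one must estimate \emph{directly} via Lemma~\ref{LemmaNTameMult}, because they already sit in the source $f'$ of the equation to which the radial point estimate is applied. Re-applying the full tame inductive hypothesis to them would again degrade the $d$-value. The paper handles them explicitly as part of the bound on $\|Gf'\|$, together with the extra residual terms $\|Ga_l R_l\pa_t^{k-1}u\|_{H_{\seop,\eps}^{s-1,\alpha_\circ,\hat\alpha-2}}$ that arise specifically at the radial set (and that your sketch treats too quickly as simply ``absorbed into the $\|\chi u\|$ error'').
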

\begin{proof}
  Starting with $\wt L u=f$, we recall from Lemma~\ref{LemmaEstRadsComm} that $[\wt L,\pa_t]=\sum_{l=1}^N a_l(A_l\hat\rho\pa_t+R_l)$ where
  \begin{equation}
  \label{EqNTameRoutComm}
  \begin{gathered}
    a_l=a_{l,0}+a_{l,1},\qquad a_{l,0}\in\hat\rho^{-2}\CI,\quad a_{l,1}\in\hat\rho^{-2}\cC_{\seop;\sop}^{(d_0;k-1),1,1}, \\
    A_l\in\Psi_\seop^1,\qquad R_l\in\Psi_\seop^2,\ \ \WF'_\seop(R_l)\cap\bar\cU=\emptyset,
  \end{gathered}
  \end{equation}
  provided we shrink $\cU$ so that $\Char_\seop(\hat\rho\pa_t)\cap\bar\cU=\emptyset$.

  We can then rewrite the equation
  \begin{equation}
  \label{EqNTameRoutEq}
    \wt L(\pa_t^k u) = \pa_t^k f + k[\wt L,\pa_t]\pa_t^{k-1} u + \sum_{j=2}^k c_{k,j}({\rm ad}_{\pa_t}^j\wt L)(\pa_t^{k-j}u),
  \end{equation}
  where the $c_{k,j}$ are combinatorial constants, as
  \[
    \Bigl(\wt L-k\sum\nolimits_l a_l A_l\hat\rho\Bigr)\pa_t^k u = f' := \pa_t^k f + k\sum_{l=1}^N a_l R_l\pa_t^{k-1}u + \sum_{j=2}^k c_{k,j}({\rm ad}_{\pa_t}^j\wt L)(\pa_t^{k-j}u).
  \]
  We now apply the radial point estimate in Proposition~\ref{PropEstRadOut} to this equation. Importantly, the subprincipal term $-k\sum_l a_l A_l\hat\rho$ has vanishing principal symbol at $\hat M$ and thus in particular at $\pa\cR_{\rm out}^+$. Note that due to the se-ellipticity of $\hat\rho\pa_t=\hat\rho\cdot\pa_t$, the quantity $\|B\pa_t^k u\|_{H_{\seop,\eps}^{s,\alpha_\circ,\hat\alpha}}$ controls $\|B u\|_{H_{(\seop;\sop),\eps}^{(s;k),\alpha_\circ,\hat\alpha}}$ up to an error $\|\tilde B u\|_{H_{(\seop;\sop),\eps}^{(s;k-1),\alpha_\circ,\hat\alpha}}+\|\chi u\|_{H_{(\seop;\sop),\eps}^{(s_0;k-1),\alpha_\circ,\hat\alpha}}$ where $\tilde B\in\Psi_\seop^0$ is elliptic on $\WF'_\seop(B)$ (cf.\ \eqref{EqNTameEllBup}). For $k\leq d$ with $d$ large but fixed, this gives~\eqref{EqNTameRout}. The main task for proving a tame estimate for $k>d$ is to tamely estimate $\|G f'\|_{H_{\seop,\eps}^{s-1,\alpha_\circ,\hat\alpha-2}}$. This can be done in the same fashion as in the proof of Lemma~\ref{LemmaNTameProp}, except we now also need to estimate the terms
  \begin{equation}
  \label{EqNTameRoutResidual}
    \|G a_l R_l\pa_t^{k-1}u\|_{H_{\seop,\eps}^{s-1,\alpha_\circ,\hat\alpha-2}}.
  \end{equation}
  Commuting $\pa_t^{k-1}$ through $R_l$, one sees that it suffices to bound $\|G a_l\pa_t^j R_{l,j} u\|_{H_{\seop,\eps}^{s-1,\alpha_\circ,\hat\alpha-2}}$ for $j\leq k-1$ where still $R_{l,j}\in\Psi_\seop^2$ with $\WF'_\seop(R_{l,j})\cap\bar\cU=\emptyset$; we focus on the case $j=k-1$ where $R_{l,k-1}=R_l$. We further write $a_l\circ\pa_t^{k-1}$ as a sum of terms $(\pa_t^j a_l)\circ\pa_t^{k-1-j}$, $j=0,\ldots,k-1$; then Lemma~\ref{LemmaNTameMult} (with $k-1$ in place of $k$) gives the bound
  \begin{align*}
    &\|G(\pa_t^j a_l)\pa_t^{k-1-j}R_l u\|_{H_{\seop,\eps}^{s-1,\alpha_\circ,\hat\alpha-2}} \\
    &\quad \leq C_k\biggl[ \|\tilde G R_l u\|_{H_{(\seop;\sop),\eps}^{(s-1;k-1),\alpha_\circ,\hat\alpha}} + \|\chi u\|_{H_{(\seop;\sop),\eps}^{(-N;k-1),\alpha_\circ,\hat\alpha}} \\
    &\quad \qquad + \|a_{l,1}\|_{\cC_{(\seop;\sop),\eps}^{(d_0;k-1),1,1}}\Bigl( \|\tilde G R_l u\|_{H_{(\seop;\sop),\eps}^{(s-1;d),\alpha_\circ,\hat\alpha}} + \|\chi u\|_{H_{(\seop;\sop),\eps}^{(-N;d),\alpha_\circ,\hat\alpha}}\Bigr) \biggr]
  \end{align*}
  where $\tilde G\in\Psi_\seop^0$ (playing the role of $B^\sharp$ in Lemma~\ref{LemmaNTameMult}) is elliptic on $\WF_\seop'(G)$ but still satisfies $\WF_\seop'(\tilde G)\cap\WF_\seop'(R_l)=\emptyset$. Since $\tilde G R_l$ is a smoothing operator, we can estimate $\|\tilde G R_l u\|_{H_{(\seop;\sop),\eps}^{(s-1;k-1),\alpha_\circ,\hat\alpha}}\leq C_{k-1}\|\chi u\|_{H_{(\seop;\sop),\eps}^{(-N;k-1),\alpha_\circ,\hat\alpha}}$ simply.

  Next, observe that $\|E\pa_t^k u\|_{H_{\seop,\eps}^{s,\alpha_\circ,\hat\alpha}}$ is bounded by a constant (which is independent of $\wt L$ of course) times $\|E u\|_{H_{(\seop;\sop),\eps}^{(s;k),\alpha_\circ,\hat\alpha}}$ plus terms arising from the commutator $[E,\pa_t^k]$, which can thus be estimated by $\|\tilde E u\|_{H_{(\seop;\sop),\eps}^{(s;k-1),\alpha_\circ,\hat\alpha}}+\|\chi u\|_{H_{(\seop;\sop),\eps}^{(s_0;k-1),\alpha_\circ,\hat\alpha}}$ analogously to~\eqref{EqNTameEllBup}. The terms involving $G\pa_t^k u$ and $\chi\pa_t^k u$ are treated similarly.

  In summary, we have now proved the estimate~\eqref{EqNTameRout} except for an additional term $C_k\|\tilde G u\|_{H_{(\seop;\sop),\eps}^{(s;k-1),\alpha_\circ,\hat\alpha}}$ on the right, with $\tilde G$ elliptic near $\WF_\seop'(G)\cup\WF_\seop'(E)$; this is estimated by induction on $k$, finishing the proof.
\end{proof}

Finally, we discuss tame estimates at trapping. As in the previous tame estimates, we apply the basic se-estimate (involving microlocalizers with s-regularity, as in Proposition~\ref{PropEstTraps}) to $k$-th order derivatives of $u$. (We remark that we therefore in particular do \emph{not} need to prove (tame) s-regularity of the stable and unstable defining functions constructed in Proposition~\ref{PropTrap}.)

\begin{prop}[Uniform tame estimate near the trapped set]
\label{PropNTameTrap}
  Let $s,N\in\R$. Then there exist $d_0,d\in\N_0$ so that the following holds for operators $\wt L=\wt L_0+\wt L_1$ as in~\eqref{ItEstLBundle}--\eqref{ItEstLOp} in~\S\usref{SEst} except for allowing the more general form~\eqref{EqNTameRoutOp}. Let $\alpha_\circ,\hat\alpha\in\R$. Let $\cU\subset\Sse^*\wt M$ be a neighborhood of $\pa\Gamma^+\cap\{t_0\leq t\leq t_1\}$, and let $\chi\in\CIc(\wt M)$ be equal to $1$ near the base projection of $\pa\Gamma^+\cap\{t_0\leq t\leq t_1\}$. Then there exist $B_\Gamma,B^{\rm s},G\in\Psi_\seop^0$ with $\chi B_\Gamma\chi=B_\Gamma$ and $\WF'_\seop(B_\Gamma)\subset\cU$ etc., with $B_\Gamma$ elliptic at $\pa\Gamma^+\{t_0\leq t\leq t_1\}$ and $\WF_\seop'(B^{\rm s})\cap\Gamma^{\rm u,+}=\emptyset$ so that for all $k\in\N_0$, there exists a constant $C_k$ so that we have the uniform tame estimate
  \begin{equation}
  \label{EqNTameTrap}
  \begin{split}
    &\|B_\Gamma u\|_{H_{(\seop;\sop),\eps}^{(s;k),\alpha_\circ,\hat\alpha}} \\
    &\quad \leq C_k\biggl[ \|G\wt L u\|_{H_{(\seop;\sop),\eps}^{(s;k),\alpha_\circ,\hat\alpha-2}} + \|B^{\rm s}u\|_{H_{(\seop;\sop),\eps}^{(s+1;k),\alpha_\circ,\hat\alpha}} + \|\chi u\|_{H_{(\seop;\sop),\eps}^{(-N;k),\alpha_\circ,\hat\alpha}} \\
    &\quad \hspace{3em} + \|\wt L_1\|_{\hat\rho^{-2}\cC_{(\seop;\sop),\eps}^{(d_0;k),1,1}}\Bigl( \|G\wt L u\|_{H_{(\seop;\sop),\eps}^{(s;d),\alpha_\circ,\hat\alpha-2}} + \|B^{\rm s}u\|_{H_{(\seop;\sop),\eps}^{(s+1;d),\alpha_\circ,\hat\alpha}} + \|\chi u\|_{H_{(\seop;\sop),\eps}^{(-N;d),\alpha_\circ,\hat\alpha}} \Bigr) \biggr]
  \end{split}
  \end{equation}
  for all $u$ with support in $t\geq t_0$. The constant $C_k$ can be taken to be uniform for perturbations of $\wt L_0$ in $\hat\rho^{-2}\Diffse^2$ and of $\wt L_1$ in $\hat\rho^{-2}\cC_{\seop;\sop}^{(d_0;d),1,1}\Diffse^2$.
\end{prop}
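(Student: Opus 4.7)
The plan is to adapt the strategy of Proposition~\usref{PropNTameRout} (tame radial point estimate), with the base estimate now being Proposition~\usref{PropEstTrap}. For $k$ below some fixed large threshold $d$, the estimate~\eqref{EqNTameTrap} follows directly from Proposition~\usref{PropEstTraps}, since for fixed orders the constant in that estimate only depends on a finite $\cC_{(\seop;\sop),\eps}^{(d_0;d)}$-seminorm of the coefficients of $\wt L_1$. Thus suppose $k > d$. As in the proof of Proposition~\usref{PropNTameRout}, we commute $\wt L u = f$ with $\pa_t^k$ and use Lemma~\usref{LemmaEstRadsComm} to write $[\wt L,\pa_t] = \sum_{l=1}^N a_l(A_l\hat\rho\pa_t + R_l)$ with $a_l\in\hat\rho^{-2}(\hat\rho\CI+\cC_{\seop;\sop}^{(d_0;k-1),1,1})$, $A_l\in\Psi_\seop^1$, and $R_l\in\Psi_\seop^2$ satisfying $\WF'_\seop(R_l)\cap\bar\cU=\emptyset$ (after shrinking $\cU$ so that $\Char_\seop(\hat\rho\pa_t)\cap\bar\cU=\emptyset$, which is possible since $\sigma>0$ on $\pa\Gamma^+$ by~\eqref{EqGlDynKerrTrapSigma}, rendering $\hat\rho\pa_t$ elliptic there). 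This rewrites the differentiated equation as
\[
  \wt L^{\natural}(\pa_t^k u) = f', \qquad \wt L^{\natural} := \wt L - k\sum\nolimits_l a_l A_l\hat\rho,
\]
where $f' = \pa_t^k f + k\sum_l a_l R_l\pa_t^{k-1}u + \sum_{j=2}^k c_{k,j}(\mathrm{ad}_{\pa_t}^j\wt L)(\pa_t^{k-j}u)$.

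The crucial structural observation is that the correction $k\sum_l a_l A_l\hat\rho$ has coefficients vanishing at $\hat M$ (since $a_l\in\hat\rho^{-2}(\hat\rho\CI+\cC_{\seop;\sop}^{(d_0;k-1),1,1})$ absorbs one factor of $\hat\rho$), and thus its principal part vanishes at $\pa\Gamma^+$. Consequently $\wt L^{\natural}$ has the same scalar principal symbol as $\wt L$ at $\pa\Gamma^+$ \emph{and the same subprincipal operator at the trapped set}, so the hypothesis $\frac{1}{2 i}(S_{\rm sub}(L)-S_{\rm sub}(L)^*)<\frac12\nu_{\min}$ from Proposition~\usref{PropEstTrap} remains valid for $\wt L^{\natural}$. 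Applying Proposition~\usref{PropEstTrap} to the equation $\wt L^{\natural}(\pa_t^k u) = f'$ on $H_{\seop,\eps}^{s,\alpha_\circ,\hat\alpha}$ yields, with $B_\Gamma^\natural,B^{{\rm s},\natural},G^\natural\in\tilde\Psi_\seop^0$ supplied by that proposition,
\[
  \|B_\Gamma^\natural\pa_t^k u\|_{H_{\seop,\eps}^{s,\alpha_\circ,\hat\alpha}} \leq C\bigl(\|G^\natural f'\|_{H_{\seop,\eps}^{s,\alpha_\circ,\hat\alpha-2}} + \|B^{{\rm s},\natural}\pa_t^k u\|_{H_{\seop,\eps}^{s+1,\alpha_\circ,\hat\alpha}} + \|\chi\pa_t^k u\|_{H_{\seop,\eps}^{-N,\alpha_\circ,\hat\alpha}}\bigr).
\]
Using the ellipticity of $\hat\rho\pa_t$ at $\pa\Gamma^+$ (analogously to~\eqref{EqEstRadsNorm} and the step from $B\pa_t^k u$ back to $B u$ in the proof of Proposition~\usref{PropNTameRout}), the left side controls $\|B_\Gamma u\|_{H_{(\seop;\sop),\eps}^{(s;k),\alpha_\circ,\hat\alpha}}$ up to a remainder of the form $\|\tilde B_\Gamma u\|_{H_{(\seop;\sop),\eps}^{(s;k-1),\alpha_\circ,\hat\alpha}} + \|\chi u\|_{H_{(\seop;\sop),\eps}^{(-N;k),\alpha_\circ,\hat\alpha}}$ for some $\tilde B_\Gamma\in\Psi_\seop^0$ elliptic on $\WF_\seop'(B_\Gamma)$, which will be absorbed by induction on $k$; similar commutator arguments rewrite the $B^{{\rm s},\natural}$ and $\chi$ terms in terms of $B^{\rm s}u$ and $\chi u$ in the mixed $(s;k)$-norms, with controllable errors.

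The remaining work is to tamely estimate $\|G^\natural f'\|_{H_{\seop,\eps}^{s,\alpha_\circ,\hat\alpha-2}}$. The term $G^\natural\pa_t^k f$ is handled by commuting $\pa_t^k$ through $G^\natural$ and using the Leibniz rule, giving $\|G\wt L u\|_{H_{(\seop;\sop),\eps}^{(s;k),\alpha_\circ,\hat\alpha-2}}$ plus lower order errors with an enlarged $G$. The commutator sums $\sum_{j=2}^k c_{k,j}(\mathrm{ad}_{\pa_t}^j\wt L)(\pa_t^{k-j}u)$ are bounded tamely exactly as in~\eqref{EqNTamePropEst}, splitting into $j\leq d$ (where we exploit $\mathrm{ad}_{\pa_t}^j\wt L\in \hat\rho^{-2}(\CI+\cC_{\seop;\sop}^{(d_0;k-j),1,1})(\Diffse^2+\Psi_\seop^1)$ and relax $\Psi_\seop^2$ to $\Psi_\seop^1\Diffs^1$) and $j>d$ (a direct appeal to Lemma~\usref{LemmaNTameMult} on the product $(\pa_t^j\wt L)\cdot(\pa_t^{k-j}u)$); the exceptional residual term $k\sum_l a_l R_l\pa_t^{k-1}u$ is bounded by splitting $a_l\pa_t^{k-1} = \sum_{j=0}^{k-1}\binom{k-1}{j}(\pa_t^j a_l)\pa_t^{k-1-j}$ and applying Lemma~\usref{LemmaNTameMult} to each product, with the factor $G^\natural R_l$ being of class $\Psi_\seop^{-\infty}$ thanks to the disjointness $\WF_\seop'(R_l)\cap\bar\cU=\emptyset$, so the $u$-factor is effectively smoothed and the resulting error is of type $\|\chi u\|_{H_{(\seop;\sop),\eps}^{(-N;k),\alpha_\circ,\hat\alpha}}$ tamely bounded as needed.

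The main obstacle in this argument, which does not appear in the earlier tame estimates, is the preservation of the trapping subprincipal symbol condition through commutation with $\pa_t$. This is resolved by the observation above that Lemma~\usref{LemmaEstRadsComm} provides a $\hat\rho$-gain in the subprincipal correction $k\sum_l a_l A_l\hat\rho$ (reflecting the $t$-independence of the $\hat M$-normal operator of $\wt L$); without this $\hat\rho$-gain the perturbation would be of the same order as $\wt L$ at $\pa\Gamma^+$ and the key assumption of Proposition~\usref{PropEstTrap} would fail after one iteration. With this structural point in hand, the remaining manipulations are formally parallel to the proofs of Lemma~\usref{LemmaNTameProp} and Proposition~\usref{PropNTameRout}, and the inductive closure in $k$ goes through exactly as there.
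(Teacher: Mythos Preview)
Your argument has a genuine gap at the $j=2$ term in the commutator expansion, and this is precisely the point where the paper's proof diverges from the approach of Proposition~\ref{PropNTameRout}. The issue is the extra derivative loss in the trapping estimate: you need $\|G f'\|_{H_{\seop,\eps}^{s}}$ (not $H_{\seop,\eps}^{s-1}$ as in the radial point or real principal type cases). Relaxing $\Diffse^2+\Psi_\seop^1$ to $\Psi_\seop^1\Diffs^1$ as you propose, the term $(\mathrm{ad}_{\pa_t}^2\wt L)\pa_t^{k-2}u$ measured in $H_{\seop,\eps}^{s}$ requires control of $\pa_t^{k-2}u$ in $H_{(\seop;\sop),\eps}^{(s+1;1)}$, i.e.\ of $u$ in $H_{(\seop;\sop),\eps}^{(s+1;k-1)}$. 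But since $H_{(\seop;\sop),\eps}^{(s;k)}\subset H_{(\seop;\sop),\eps}^{(s+1;k-1)}$ (one s-derivative dominates one se-derivative), this norm is only bounded by $\|u\|_{H_{(\seop;\sop),\eps}^{(s;k)}}$---exactly what you are trying to estimate on the left. The inductive hypothesis at level $k-1$ gives only $H_{(\seop;\sop),\eps}^{(s;k-1)}$ control near $\pa\Gamma^+$, which does not bound $H_{(\seop;\sop),\eps}^{(s+1;k-1)}$.

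The paper resolves this by absorbing \emph{both} the $j=1$ and the $j=2$ commutator terms into the modified operator. Concretely, one writes $[\pa_t,[\wt L,\pa_t]]=\sum_m b_m(B_m\hat\rho\pa_t+Q_m)$ via a second application of Lemma~\ref{LemmaEstRadsComm}, and then (using a parametrix $I=B\hat\rho\pa_t+Q$ with $B\in\Psi_\seop^{-1}$) rewrites the $j=2$ contribution so that the piece $\sum_m b_m B_m\hat\rho B\hat\rho$ can be moved to the left as a further correction to $\wt L^\natural$. This correction is of class $\Psi_\seop^0$ (unweighted), hence two orders below $\wt L$ and vanishing at $\hat M$, so the subprincipal symbol condition at $\pa\Gamma^+$ is still preserved. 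With only $j\geq 3$ remaining on the right, the inclusion $\Diffse^2+\Psi_\seop^1\subset\Diffs^2\Psi_\seop^0$ gives control by $\|\tilde G u\|_{H_{(\seop;\sop),\eps}^{(s;k-j+2)}}\leq\|\tilde G u\|_{H_{(\seop;\sop),\eps}^{(s;k-1)}}$, and the induction closes. Your identification of the subprincipal symbol preservation as ``the main obstacle'' is therefore misplaced: that issue already arose in Proposition~\ref{PropEstTraps} and is handled identically; the genuinely new difficulty here is the derivative count at $j=2$.
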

\begin{proof}
  We cannot quite follow the same argument as in the proof of Proposition~\ref{PropNTameRout}: this would now involve the estimate~\eqref{EqNTamePropEst} but with $s-m+2$ instead of $\sfs-m+1$ (and now concretely with $m=2$) for $j\geq 2$: one would be forced to use the $H_{(\seop;\sop),\eps}^{(s;k)}$-norm of $\tilde G u$ on the right hand side---which however is what one is trying to estimate. To resolve this issue, we shall rewrite also the $j=2$ term on the right in~\eqref{EqNTameRoutEq} as a contribution to the operator to which we apply the original trapping estimate, and regard only the terms with $j\geq 3$ as error terms which we can estimate inductively.

  Writing $[\wt L,\pa_t]=\sum_{l=1}^N a_l(A_l\hat\rho\pa_t+R_l)$ as in~\eqref{EqNTameRoutComm}, we compute
  \[
    [\pa_t,[\wt L,\pa_t]] = \sum_{l=1}^N (\pa_t a_l)(A_l\hat\rho\pa_t+R_l) + a_l([\pa_t,A_l]\hat\rho\pa_t+[\pa_t,R_l]),
  \]
  which due to $\pa_t\in\cV_{[\sop]}$ we can write as
  \[
    [\pa_t,[\wt L,\pa_t]] = \sum_{m=1}^{N'} b_m(B_m\hat\rho\pa_t+Q_m),\qquad b_m\in\hat\rho^{-2}\CI+\hat\rho^{-2}\cC_{\seop;\sop}^{(d_0;k-2),1,1},
  \]
  where $B_m\in\Psi_\seop^1$ and $Q_m\in\Psi_\seop^2$, $\WF'_\seop(Q_m)\cap\bar\cU=\emptyset$. (Here $\cU$ is chosen so small that $\bar\cU\cap\Char_\seop(\hat\rho\pa_t)=\emptyset$.) We furthermore write $I=B\hat\rho\pa_t+Q$ where $B\in\Psi_\seop^{-1}$ and $Q\in\Psi_\seop^0$ with $\WF'_\seop(Q)\cap\bar\cU=\emptyset$. Writing $f=\wt L u$, we then have
  \begin{align*}
    \wt L(\pa_t^k u) &= \pa_t^k f + c_{k,1}\sum_{l=1}^N a_l(A_l\hat\rho\pa_t+R_l)\pa_t^{k-1}u \\
      &\quad\hspace{2em} + c_{k,2}\sum_{m=1}^{N'} b_m\bigl(B_m\hat\rho(B\hat\rho\pa_t+Q)\pa_t+Q_m\bigr)\pa_t^{k-2}u + \sum_{j=3}^k ({\rm ad}_{\pa_t}^j\wt L) \pa_t^{k-j}u
  \end{align*}
  (using schematic notation for the sum over $j$), which we rewrite as
  \begin{align*}
    &\biggl(\wt L-c_{k,1}\sum_{l=1}^N a_l A_l\hat\rho - c_{k,2}\sum_{m=1}^{N'} b_m B_m\hat\rho B\hat\rho\biggr)\pa_t^k u = \pa_t^k f + f'_1 + f'_2, \\
    &\qquad \quad f'_1 := \sum_{j=3}^k ({\rm ad}_{\pa_t}^j\wt L)\pa_t^{k-j}u, \\
    &\qquad \quad f'_2 := \biggl(c_{k,1}\sum_{l=1}^N a_l R_l+c_{k,2}\sum_{m=1}^{N'}b_m B_m\hat\rho Q\biggr)\pa_t^{k-1}u + c_{k,2}\sum_{m=1}^{N'} b_m Q_m\pa_t^{k-2}u.
  \end{align*}
  We can then apply Proposition~\ref{PropEstTraps} (with $k=0$) to this equation. We need to prove tame estimates for $\|G f'_i\|_{H_{\seop,\eps}^{s,\alpha_\circ,\hat\alpha-2}}$, $i=1,2$. For $i=2$, these can be proved in the same fashion as for the term~\eqref{EqNTameRoutResidual}; this is due to the fact that $\WF_\seop'(G)$ is disjoint from the operator wave front sets of $R_l\in\Psi_\seop^2$, $B_m\hat\rho Q\in\Psi_\seop^{-\infty}$, and $Q_m\in\Psi_\seop^2$. For the estimate on $\|G f'_2\|_{H_{\seop,\eps}^{s,\alpha_\circ,\hat\alpha-2}}$, we note that ${\rm ad}_{\pa_t}^j\wt L\in\hat\rho^{-2}(\CI+\cC_{\seop;\sop}^{(d_0;k-j)})(\Diffse^2+\Psi_\seop^1)$ and use $\Diffse^2+\Psi_\seop^1\subset\Diffs^2\Psi_\seop^0$ to estimate, for $j\geq 3$,
  \[
    \|G({\rm ad}_{\pa_t}^j\wt L)\pa_t^{k-j}u\|_{H_{\seop,\eps}^{s,\alpha_\circ,\hat\alpha-2}} \leq C\Bigl(\|\tilde G u\|_{H_{(\seop;\sop),\eps}^{(s;k-1),\alpha_\circ,\hat\alpha}} + \|\chi u\|_{H_{(\seop;\sop),\eps}^{(-N;k-1),\alpha_\circ,\hat\alpha}}\Bigr),
  \]
  where $\tilde G\in\Psi_\seop^0$ is elliptic on $\WF'_\seop(G)$. Using this estimate for $j\leq d$, and a similar estimate (now involving $k$ s-derivatives on $\wt L$ and $d$ s-derivatives on $u$) for $j\geq k-d$, it remains to estimate those terms with $d+1\leq j\leq k-d-1$ using Lemma~\ref{LemmaNTameMult}. An inductive argument in $k$ finishes the proof.
\end{proof}

\subsubsection{Uniform tame estimates on s-Sobolev spaces}
\label{SssNTameSg}

Assuming the invertibility of the Kerr model operator, we can now combine the s-tame se-microlocal estimates established in~\S\ref{SssNTameMl} to prove a tame analogue of Theorem~\ref{ThmScS}. For concreteness, we only consider the case that the Kerr model operator is equal to the scalar wave operator on $(\hat M_b,\hat g_b)$.

\begin{thm}[Uniform estimates for linear waves on small domains: tame estimates]
\label{ThmNTame}
  There exists a number $d\in\N$ such that the following holds for all operators $\wt L=\wt L_0+\wt L_1$ as in~\eqref{ItEstLBundle}--\eqref{ItEstLOp} in~\S\usref{SEst}, except we now require $\wt L_0\in\hat\rho^{-2}\Diffse^2$ and $\wt L_1\in\hat\rho^{-2}\cC_\sop^{d,1,1}\Diffse^2$;\footnote{Correspondingly, the underlying metric $\wt g$ is only required to be of class $\CI+\cC_\sop^{d,1,1}$ as a section of $S^2\wt T^*\wt M$, cf.\ Definition~\usref{DefGl}\eqref{ItGlMetric}.} and we require the $\hat M$-normal operator of $\wt L$ to be equal to the scalar wave operator $\Box_{\hat g_b}$ on a subextremal Kerr spacetime. Let $t_0\in I_\cC$, and define $\wt\Omega_{(\lambda)}\subset\wt M\setminus\wt K^\circ$ to be the standard domain associated with $\Omega_{(\lambda)}:=\Omega_{t_0,t_0+\lambda,\lambda}$; let $\lambda_1>0$ be such that $[t_0,t_0+\lambda_1]\subset I_\cC$. Let $\alpha_\circ,\hat\alpha\in\R$ with $\alpha_\circ-\hat\alpha\in(-\frac32,-\frac12)$. Then there exist $\lambda_0\in(0,\lambda_1]$ and $\eps_0>0$ depending only on $t_0$ and $\wt L|_{\Omega_{(\lambda_1)}}$ so that for all $\lambda\in(0,\lambda_0]$ and all $\N_0\ni k\geq d$, there exists a constant $C=C(\lambda,k)$ so that for the solution $u$ of $\wt L u=f$, the tame estimate
  \begin{equation}
  \label{EqNTame}
  \begin{split}
    &\|u\|_{H_{\sop,\eps}^{k,\alpha_\circ,\hat\alpha}(\Omega_{(\lambda),\eps})^{\bullet,-}} \\
    &\qquad \leq C\Bigl( \|f\|_{H_{\sop,\eps}^{k+d,\alpha_\circ,\hat\alpha-2}(\Omega_{(\lambda),\eps})^{\bullet,-}} + \|\wt L_1\|_{\hat\rho^{-2}\cC_{\sop,\eps}^{k+d,1,1}(\Omega_{(\lambda),\eps})}\|f\|_{H_{\sop,\eps}^{d,\alpha_\circ,\hat\alpha-2}(\Omega_{(\lambda),\eps})^{\bullet,-}}\Bigr)
  \end{split}
  \end{equation}
  holds uniformly for $\eps\in(0,\lambda\eps_0)$.\footnote{Here, as before, the norm on $\wt L_1$ is the sum of norms of the coefficients of $\wt L_1$ when expressed in terms of the standard se-vector fields $\hat\rho\pa_t$, $\hat\rho\pa_x$.} The constants $\lambda_0$ and $C$ can be chosen to be uniform for perturbations of $\wt L_1|_{\wt\Omega_{(\lambda)}}$ whose $\hat\rho^{-2}\cC_{\sop,\eps}^{d,1,1}\Diffse^2$-norm is bounded by $1$ (say) for $\eps\in(0,\eps_0]$.
\end{thm}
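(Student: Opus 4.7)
The plan is to follow the structure of the proof of Theorem~\ref{ThmScS} while tracking tame dependence on $\wt L_1$ at every step, using as the main ingredients the s-tame microlocal estimates of Section~\ref{SssNTameMl} together with the Kerr model estimate of Theorem~\ref{ThmSc3b} (which is $\wt L_1$-independent and hence automatically tame). I will argue by induction on $k$. The base case consists of all $k \leq d_1$ for some fixed large $d_1 = d_1(\alpha_\circ,\hat\alpha) \in \N$; this follows from Theorem~\ref{ThmScS} with $\sfs = 0$, whose constant depends only on $\|\wt L_1\|_{\cC_{\sop,\eps}^{d_1,1,1}}$ and is therefore (trivially) tame up to order $d_1$. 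I fix $\lambda_0,\eps_0$ at the outset once and for all, taken small enough for both Theorem~\ref{ThmScS} and for the absorption argument sketched below; the uniformity under small perturbations of $\wt L_1$ in $\hat\rho^{-2}\cC_{\sop,\eps}^{d,1,1}\Diffse^2$ comes from Remark~\ref{RmkScUnifSmPert} and the corresponding uniformity assertions in each tame microlocal estimate.

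For the inductive step with $k > d_1$, I first extend $f \in H_{\sop,\eps}^{k+d,\alpha_\circ,\hat\alpha-2}(\Omega_{(\lambda),\eps})^{\bullet,-}$ to $f^\sharp$ on the slightly larger domain $\wt\Omega^\sharp_{(\lambda)}$ used in the proof of Theorem~\ref{ThmScS}, realizing the quotient norm (extension is a local operation whose operator norm between s-Sobolev spaces does not depend on $k$), and solve $\wt L u^\sharp = f^\sharp$ there. The inductive hypothesis applied at level $k-1$ controls $u^\sharp$ tamely in $H_{\sop,\eps}^{k-1,\alpha_\circ,\hat\alpha}$. I then commute the equation with $\pa_t^k$, producing $\wt L(\pa_t^k u^\sharp) = \pa_t^k f^\sharp + [\wt L,\pa_t^k] u^\sharp$. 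Using Corollary~\ref{CorFsComm} (which yields one extra order of vanishing at $\hat M$ for each commutation with $\pa_t$, reflecting the stationarity of the normal operator) and expanding the commutator via the Leibniz rule, each term has the schematic form $(D_\sop^j \wt L_1)(D_\sop^{k-j} u^\sharp)$ with $j \geq 1$; Lemma~\ref{LemmaNTameMult} (or its pure-s version given by Remark~\ref{RmkNTameMultSpecial}) bounds these tamely by a combination of $\|\wt L_1\|_{\cC_{\sop,\eps}^{d,1,1}} \|u^\sharp\|_{H_{\sop,\eps}^{k-1+?}}$ and $\|\wt L_1\|_{\cC_{\sop,\eps}^{k+d,1,1}} \|u^\sharp\|_{H_{\sop,\eps}^{d}}$, both of which are controlled by the inductive hypothesis and the desired right-hand side.

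Next, I apply the tame version of the uniform se-estimate from Theorem~\ref{ThmScUnifSm} to $\pa_t^k u^\sharp$. This tame se-estimate is itself what requires the most care: I would establish it by rerunning the proof of Theorem~\ref{ThmScUnifSm} on mixed $(\seop;\sop)$-spaces of fixed small se-order but $k$-dependent s-order, combining (i) the tame se-regularity estimate, obtained by concatenating the tame microlocal estimates of Lemmas~\ref{LemmaNTameEll}, \ref{LemmaNTameProp} and Propositions~\ref{PropNTameRout}, \ref{PropNTameTrap} (together with a tame version of the incoming radial and horizon estimates, whose proofs are entirely analogous to Proposition~\ref{PropNTameRout}) with the tame energy estimate of Proposition~\ref{PropEstEn} (energy estimates only use the coefficients of $\wt L$ at the $L^\infty$ level, hence are automatically tame); (ii) the 3b-estimate of Theorem~\ref{ThmSc3b}, applied to the difference $\wt L - \wt L_b$ which, by the scaling analysis of Lemma~\ref{LemmaScUnifSmOp}, is bounded by $\lambda$ times a tame norm of $\wt L_1$. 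The error term in~\eqref{EqScUnifPfOrig} (now with its tame analogue) is absorbed for small $\lambda$ thanks to the prefactor $C\lambda$, where $C$ depends only on the low regularity of $\wt L_1$; this is precisely what makes the absorption compatible with tameness.

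Finally, from the tame se-bound on $\pa_t^k u^\sharp$ and the inductive control of $u^\sharp$ in s-regularity $k-1$, I use the fact that $\pa_t$ is se-elliptic (after multiplication by $\hat\rho$) together with commutator bounds in $\cV_{[\sop]}$ (Lemma~\ref{LemmaFsComm}) to upgrade to $u^\sharp \in H_{\sop,\eps}^{k,\alpha_\circ,\hat\alpha}$ with a tame estimate, and then restrict to $\Omega_{(\lambda),\eps}$. The main obstacle is indeed the tame version of the uniform se-estimate of Theorem~\ref{ThmScUnifSm}: one must verify that every factor of $\lambda$ used for absorption appears multiplying a \emph{fixed} low-regularity norm of $\wt L_1$ rather than the high-regularity norm that carries the tame growth, so that the choice of $\lambda_0$ remains uniform in $k$. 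Once this is in place, all remaining steps are bookkeeping with tame multiplication (Lemma~\ref{LemmaNTameMult}) and the tame microlocal toolkit already developed.
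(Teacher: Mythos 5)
Your proposal assembles the right ingredients (tame multiplication, tame microlocal estimates, the 3b-estimate on Kerr, Seeley extension of $f$, absorption in small $\lambda$), but it has two genuine gaps that the paper's proof is specifically designed to address.

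\textbf{Missing coefficient extension.} You extend $f$ to $f^\sharp$ on a larger domain and then solve $\wt L u^\sharp=f^\sharp$ there, \emph{using the original operator $\wt L$}. Every tame constant obtained this way would therefore be a function of $\|\wt L_1\|_{\hat\rho^{-2}\cC_{\sop,\eps}^{k+d,1,1}(\Omega^\sharp_{(\lambda),\eps})}$, i.e.\ the $\cC_\sop$-norms of $\wt L_1$ on the \emph{larger} domain. But~\eqref{EqNTame} and the perturbation-uniformity statement at the end of the theorem require the constants to depend only on $\wt L_1|_{\wt\Omega_{(\lambda)}}$; there is no way to bound the larger-domain norm by the smaller-domain norm. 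The paper resolves this by applying the Seeley extension operator of Lemma~\ref{LemmaNTameExt} to the \emph{coefficients} of $\wt L_1|_{\Omega^{(0)}_\eps}$, producing $\wt L_1'$ on $\Omega^{(1)}_\eps$ that satisfies the quantitative bound~\eqref{EqNTameExtL} (and for which the uniform se-estimate still holds for small $\eps$ by Remark~\ref{RmkScUnifSmPert}). This step is essential and is not present in your proof; without it, the final estimate and the perturbation-uniformity claim both fail on the level of what the constants are allowed to depend on.

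\textbf{Missing nested-domain bookkeeping.} You work with a single fixed enlargement $\wt\Omega^\sharp_{(\lambda)}$, mirroring the proof of Theorem~\ref{ThmScS}. That works for Theorem~\ref{ThmScS} because the constant $C_k$ there is allowed to depend arbitrarily on $k$ and on $\wt L$, so one may re-extend $f$ afresh at each induction level. In the tame setting one must carry the same $\tilde u,\tilde f$ through the whole induction. At each s-regularity level $k'$, the commutator expansion bound from Lemma~\ref{LemmaNTameMult} (applied on an extendible domain) and the subsequent se-regularity upgrade from $\sfs-2$ to $\sfs$ via Lemmas~\ref{LemmaNTameEll},~\ref{LemmaNTameProp} and Propositions~\ref{PropNTameRout},~\ref{PropNTameTrap} each cost a slightly smaller domain (because of the cutoffs $\chi$ and the extendible nature of the variable-order se-norm). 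If this loss were a fixed amount per step, one would exhaust the whole domain after $O(1)$ steps. The paper's proof therefore nests $k$ domains $\wt\Omega^{(1-j/k)}$, $j=0,\ldots,k$, shrinking by $1/k$ per level so that one arrives precisely at $\wt\Omega^{(0)}$ after $k$ steps; your proposal does not account for this.

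A smaller inaccuracy: you invoke ``se-ellipticity of $\hat\rho\pa_t$'' for the final upgrade to $H_{\sop,\eps}^k$, but $\hat\rho\pa_t=\pa_{\hat t}$ is \emph{not} se-elliptic at $\cR_{\cH^+}$ (where $\sigma_0=0$). The paper's argument instead relies on the elementary norm equivalence $\|\tilde u\|_{H_{(\seop;\sop),\eps}^{(\sfs-2;k')}}\sim\|\pa_t^{k'}\tilde u\|_{H_{\seop,\eps}^{\sfs-2}}+\|\tilde u\|_{H_{(\seop;\sop),\eps}^{(\sfs-1;k'-1)}}$, which requires no ellipticity, followed by the se-regularity upgrade; and at the horizon the relevant s-regularity machinery (Proposition~\ref{PropEstHors} and its tame analogue) uses $\pa_{\hat r}$ in place of $\pa_t$.
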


One can prove an analogous result on general standard domains by building on Theorem~\ref{ThmScUnif}. However, we restrict to small domains here since in \cite{HintzGlueLocIII} we will only prove an analogue of Theorem~\ref{ThmScUnifSm}. One can also obtain uniform estimates for perturbations of the smooth part $\wt L_0$, but since this flexibility is not used in our applications, we do not discuss this further.

\begin{rmk}[Other domains]
\label{RmkNTameOther}
  Analogously to Remark~\ref{RmkScUnifSmShift}, Theorem~\ref{ThmNTame} remains valid for the standard domains associated with $\Omega_{t_0+c_1\lambda,t_0+c_2\lambda,r_0\lambda}$ for any $c_1<c_2$ and $r_0>0$, with $\lambda_0$ depending on $c_1,c_2,r_0$.
\end{rmk}

The proof will use an extension/restriction procedure. For the minimal norm extensions used e.g.\ in the proof of Theorem~\ref{ThmEstStd} for a fixed choice of function space, one cannot deduce boundedness properties on other (less regular) function spaces. For tame applications however, we need to quantitatively control low and high regularity norms of extensions. This is accomplished by the following result.

\begin{lemma}[Extension operators]
\label{LemmaNTameExt}
  Consider $M'=\R_t\times\R^3_x$, $\wt M'=[[0,1)\times M';\{0\}\times\R\times\{0\}]$. Let $t_0<t_1<t_2$ and $r_0<r_1$. Denote by $\wt\Omega,\wt\Omega^\sharp\subset\wt M'$ the lifts of $[0,1)_\eps$ times $\Omega_{t_0,t_1,r_0},\Omega_{t_0,t_2,r_1}\subset M'$, and write $\Omega_\eps=\wt\Omega\cap M_\eps$, $\Omega^\sharp_\eps=\wt\Omega^\sharp\cap M_\eps$. Then there exists a family $E_\eps$, $\eps\in(0,1)$, of linear operators $E_\eps\colon\CI(\Omega_\eps)\to\CI(\Omega^\sharp_\eps)$ with the following properties:
  \begin{enumerate}
  \item $E_\eps$ is an extension operator, i.e.\ $(E_\eps u)|_{\Omega_\eps}=u$ for all $u\in\CI(\Omega_\eps)^{\bullet,-}$;
  \item there exist constants $C_j$, $j\in\N_0$ so that we have uniform (in $\eps$) estimates
    \[
      \|E_\eps u\|_{H_{\sop,\eps}^j(\Omega^\sharp_\eps)^{\bullet,-}} \leq C_j\|u\|_{H_{\sop,\eps}^j(\Omega_\eps)^{\bullet,-}},\qquad
      \|E_\eps u\|_{\cC_{\sop,\eps}^j(\Omega^\sharp_\eps)} \leq C_j\|u\|_{\cC_{\sop,\eps}^j(\Omega_\eps)}.
    \]
  \end{enumerate}
  An analogous statement holds when replacing $\Omega_\eps$ and $\Omega_\eps^\sharp$ by their intersections with $\{\frac{|x|}{\eps}\geq\hat r_0\}$ and $\{\frac{|x|}{\eps}\geq\hat r_1\}$, respectively, where $\hat r_1<\hat r_0$.
\end{lemma}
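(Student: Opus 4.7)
The strategy is to use a uniform partition of unity to reduce the extension problem to three local problems near the three final boundary hypersurfaces of $\Omega_\eps$, and then apply Seeley-type reflection extensions in suitably chosen coordinates in which the s-vector fields are uniformly (in $\eps$) equivalent to coordinate derivatives. The initial surface $\{t = t_0\}$ requires no action since a supported distribution is already trivially extendable by zero below $t_0$.

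Concretely, I would fix cutoffs $\chi_{\rm in}, \chi_{\rm fin}^t, \chi_{\rm fin}^r \in \CI(\wt M')$ depending smoothly on $t$ and $|x|$ alone (hence uniformly bounded in $\cC_{\sop,\eps}^j$ for all $j$), summing to $1$ on a neighborhood of $\Omega_\eps$, with $\chi_{\rm in}$ supported in $\Omega_{\eps}^\circ$ away from both final hypersurfaces, $\chi_{\rm fin}^t$ supported in a collar of $\{t = t_1\}$ disjoint from the slanted face, and $\chi_{\rm fin}^r$ supported in a collar of $\{|x| = r_0 + 2(t_1 - t)\}$ disjoint from $\{t = t_1\}$. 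The piece $\chi_{\rm in} u$ extends by zero. For $\chi_{\rm fin}^t u$, I would apply a standard Seeley reflection operator
\[
  (E_t v)(t, x) = \sum_{k=0}^{\infty} a_k \phi(b_k(t-t_1))\, v(t_1 - b_k(t-t_1), x),\qquad t > t_1,
\]
(with $\phi \in \CIc((-1,1))$ equal to $1$ near $0$, and $a_k, b_k > 0$ chosen by the classical Seeley construction so that $\sum a_k (-b_k)^j = 1$ and $\sum |a_k| b_k^j < \infty$ for all $j \ge 0$), and $E_t v = v$ for $t \le t_1$. The key observation is that $\hat\rho = (\eps^2 + |x|^2)^{1/2}$ is \emph{independent of $t$}: the reflection in $t$ therefore commutes (up to signs) with $\pa_t$ and commutes exactly with the vector fields $\hat\rho\pa_x$, so the Seeley estimates on each $\{x = {\rm const}\}$ fiber give uniform $H_{\sop,\eps}^j$ and $\cC_{\sop,\eps}^j$ bounds. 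For $\chi_{\rm fin}^r u$ I would change to polar coordinates $(t, r, \omega)$ with $r = |x|$, introduce $s = r - (r_0 + 2(t_1 - t))$, and apply the analogous Seeley reflection $s \mapsto -s$; this region is contained in $r \ge r_0 > 0$, where $\hat\rho \sim r$ is uniformly bounded above and below independently of $\eps$, so $\{\pa_t, \hat\rho\pa_r, \pa_\omega\}$ is uniformly equivalent to $\{\pa_t, \pa_r, \pa_\omega\}$, and the reflection is uniformly bounded by the classical Seeley theorem.

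For the variant with an inner Kerr-type boundary at $|\hat x| = \hat r_0$, I would add a fourth cutoff $\chi_{\rm fin}^{|}$ supported near $\{|\hat x| = \hat r_0\}$ and reflect in $\hat r = |\hat x|$ across $\hat r_0$. In the coordinates $(t, \hat r, \omega)$, one computes $\hat\rho\pa_{|x|} = \sqrt{1 + \hat r^2}\,\pa_{\hat r}$ near $\hat r = \hat r_0$, so s-vector fields are again uniformly equivalent to the coordinate derivatives $\pa_t, \pa_{\hat r}, \pa_\omega$, and Seeley reflection in $\hat r$ gives uniformly bounded extensions. The gluing of the four pieces preserves the support condition at $\{t = t_0\}$ because each cutoff (and the Seeley reflection point) stays bounded away from that initial face.

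The main technical obstacle is the verification in the step with $\chi_{\rm fin}^t$ that the Seeley reflection respects the s-structure all the way down to $\cC$ (where $\hat\rho\pa_x$ genuinely degenerates relative to $\pa_x$). This is resolved by the simple but essential fact that the reflection is purely in the $t$ variable and $\hat\rho$ is $t$-independent: any composition of $\pa_t$ and $\hat\rho\pa_x$ applied to $E_t v$ reduces (by the Leibniz rule and the commutation $[\pa_t, \hat\rho\pa_x] = 0$) to a finite sum of terms where a finite s-differential operator acts on $v$ evaluated at the reflected point, with coefficients involving only powers of $b_k$ that are absorbed by the Seeley summability. Everything else is standard.
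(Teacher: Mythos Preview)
Your approach is essentially the same as the paper's: Seeley-type reflection, with the key observation that $\hat\rho=(\eps^2+|x|^2)^{1/2}$ is $t$-independent so that reflection in $t$ commutes with the s-vector fields, and that away from $\cC$ (near the slanted face and near the inner boundary $\hat r=\hat r_0$) the s-vector fields are uniformly equivalent to ordinary coordinate derivatives. The paper treats only the $t$-reflection in detail and then says that corners are handled by iterating extensions (citing \cite[\S1.4]{MelroseDiffOnMwc}).

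There is one genuine, if minor, gap in your write-up: the three cutoffs $\chi_{\rm in},\chi_{\rm fin}^t,\chi_{\rm fin}^r$ as you describe them cannot sum to $1$ on all of $\Omega_\eps$. The two final hypersurfaces $\{t=t_1\}$ and $\{|x|=r_0+2(t_1-t)\}$ meet at the codimension-$2$ corner $\{t=t_1,\ |x|=r_0\}$; if $\chi_{\rm fin}^t$ is disjoint from the slanted face and $\chi_{\rm fin}^r$ is disjoint from $\{t=t_1\}$, then both vanish near this corner, and $\chi_{\rm in}$ (supported in the interior) does as well. You need either an additional corner chart where you perform a double Seeley reflection (first in $t$, then in the slanted coordinate $s$, say), or---cleaner---drop the partition of unity and proceed as the paper does: first extend across $\{t=t_1\}$ to the intermediate domain $\Omega_{t_0,t_2,r_0}$, then extend the result across the (new, longer) slanted face to $\Omega_{t_0,t_2,r_1}$. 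Each single extension is exactly the Seeley operator you wrote down, and the composition is still bounded on every $H_{\sop,\eps}^j$ and $\cC_{\sop,\eps}^j$. A similar corner issue arises at $\{t=t_1\}\cap\{\hat r=\hat r_0\}$ in the inner-boundary variant, and is fixed the same way.
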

\begin{proof}
  This is a variant of Seeley extension, see \cite{SeeleyExtension}. For notational simplicity, we shall only consider the local extension from $K_\eps:=\{-1\leq t\leq 0\}\cap\{\eps<r=|x|<1<r_0\}$ to $K^\sharp_\eps:=\{-1\leq t\leq 1\}\cap\{\eps<r=|x|<r_0\}$; and we ignore spherical variables. (Extensions near the corners of $\Omega_\eps$ can then be defined by first extending across one and then the other hypersurface, as in \cite[\S{1.4}]{MelroseDiffOnMwc}.) Given a function $u=u(t,r)$ on $K_\eps$, we set
  \[
    \tilde u(t,r) := \begin{cases} u(t,r), & t<0, \\ \sum_{l=0}^\infty c_l\cdot\chi\bigl(-\frac{t}{\delta_l}\bigr) u\bigl(-\frac{t}{\delta_l},r\bigr), & t>0, \end{cases}
  \]
  where $\chi\in\CIc((-1,0])$ equals $1$ near $0$, and with $c_l\in\R$ and $\delta_l>0$ (with $\delta_l\searrow 0$) fixed so that, for all $j\in\N_0$, one has $\sum_{l=0}^\infty c_l\delta_l^{-j}=(-1)^j$ and $\sum_{l=0}^\infty|c_l|\delta_l^{-j}<\infty$. (We may define $c_l$ via $\sin(\frac{\pi}{2}z)=\sum_{l=0}^\infty c_l z^l$ and take $\delta_l=3^{-l}$; thus $\sum_{l=0}^\infty c_l\delta_l^{-j}=\sin(\frac{\pi}{2}3^j)=(-1)^j$ since $3^j\equiv(-1)^j\bmod 4$, and $\sum_{l=0}^\infty|c_l||z|^l<\infty$ for all $z\in\C$ (by absolute convergence) gives also $\sum_{l=0}^\infty|c_l|\delta_l^{-j}<\infty$ for all $j$.) This implies that $\tilde u$ is smooth across $t=0$. Furthermore, we have
  \[
    \|\tilde u\|_{L^2(\{0\leq t\leq 1\})} \leq \sum_{l=0}^\infty |c_l|\delta_l^{\frac12}\|u\|_{L^2(\{-1\leq t\leq 0\})} \leq C\|u\|_{L^2},
  \]
  similarly for derivatives in $t$ (each of which produces a power of $\delta_l^{-1}$) and along $(\eps^2+r^2)^{\frac12}\pa_r$ (which do not produce additional powers). Since near $K_\eps^\sharp$, the space of s-vector fields is spanned by $\pa_t$ and $(\eps^2+r^2)^{\frac12}\pa_r$, $L^2$-bounds on s-derivatives of $u$ on $K_\eps$ thus imply the same bounds (up to $\eps$-independent factors) on s-derivatives $\tilde u$ on $K^\sharp_\eps\setminus K_\eps$. Similar arguments apply also to $L^\infty$-bounds.
\end{proof}

\begin{proof}[Proof of Theorem~\usref{ThmNTame}]
  \pfstep{Preliminary constructions and simplifications.} Fix an order function $\sfs$ as in~\eqref{EqEstAdmIndEx} on $\wt M$ near $\wt\Omega_{(\lambda_1)}$ so that $\sfs,\alpha_\cD$ and $\sfs-4,\alpha_\cD$ are Kerr-admissible both for $\alpha_\cD=-\frac32$ and for $\alpha_\cD=-\frac12$. Then $\sfs,\alpha_\circ-\hat\alpha$ are Kerr-admissible for all $\alpha_\circ,\hat\alpha$ as in the statement of the Theorem. Similarly to the proof of Theorem~\ref{ThmScUnifSm}, we work on standard domains on $\wt M':=[[0,1)\times\R^{1+3};\{0\}\times\R\times\{0\}]$ of fixed size. Concretely, in terms of the functions $\eps\geq 0$, $t'\in\R$, $x'\in\R^3$ on $\wt M'$, and for $\delta\in[0,1]$, we set
  \[
    \wt\Omega^{(\delta)} := \upbeta'{}^*\bigl([0,1)_\eps\times\Omega_{0,1+\delta,1+\delta}\bigr) \cap \{ |\hat r'|\geq (1-c\delta)\bhm \},\qquad \hat r':=\frac{|x'|}{\eps};
  \]
  here $\upbeta'\colon\wt M'\to[0,1)\times\R^{1+3}$ is the blow-down map, and we fix any $c<\bhm-\hat r_b^-=\sqrt{\bhm^2-a^2}$ (cf.\ Lemma~\ref{LemmaGlCoord}). We will prove tame estimates for the forward solution operator of $\wt L_{(\lambda)}=\lambda^2\wt S_\lambda^*\wt L$ (see~\eqref{EqScUnifSmResc} and \eqref{EqScUnifSmDefs}) on (the $\eps$-level sets $\Omega^{(0)}_\eps$ of) $\wt\Omega^{(0)}$. For sufficiently small $\lambda_0>0$, we have, for every fixed $\lambda\in(0,\lambda_0]$, uniform se-estimates for $\wt L_{(\lambda)}$ on $\Omega^{(\delta)}_\eps$ for $\eps<1$ and $\delta\in[0,1]$, and moreover $\sfs,\alpha_\circ,\hat\alpha$ and $\sfs-4,\alpha_\circ,\hat\alpha$ are admissible for $\wt L_{(\lambda)}$ (see Definition~\ref{DefEstAdm}). We now fix such a value of $\lambda$ and relabel $\wt M',t',x',\wt L_{(\lambda)}$ as $\wt M,t,x,\wt L$. The remainder of our argument will take place on $\wt\Omega^{(1)}$; and the goal is to prove the tame estimate~\eqref{EqNTame} with $\Omega^{(0)}_\eps$ in place of $\Omega_{(\lambda),\eps}$, with uniformity in $\eps\in(0,1)$.

  Let $\sfs_\pm\in\N_0$ be such that $-\sfs_-<\sfs<\sfs_+$. Then the inclusion maps
  \begin{equation}
  \label{EqNTamePassages}
  \begin{alignedat}{2}
    H_{(\seop;\sop),\eps}^{(\sfs;k)}&\hra H_{\sop,\eps}^{k-\sfs_-},&\qquad k&\geq\sfs_-, \\
    H_{\sop,\eps}^{k+\sfs_+}&\hra H_{(\seop;\sop),\eps}^{(\sfs;k)}, &\qquad k&\in\N_0, \\
    \cC_{\sop,\eps}^{d_0+k}&\hra\cC_{(\seop;\sop),\eps}^{(d_0;k)}, &\qquad k&\in\N_0,
  \end{alignedat}
  \end{equation}
  are uniformly bounded; here, in the notation of~\S\ref{SsEstFn}, the domains are $M_\eps$ or $\Omega^{(\delta)}_\eps$, with the spaces having supported/extendible character at initial/final boundary hypersurfaces in the second case. The strategy is to pass from s-spaces to (se;s)-spaces, prove tame estimates using the microlocal elliptic and propagation estimates proved above, and at the end pass back from (se;s)-spaces to s-spaces. In particular, the parameter $d$ will be chosen to exceed $\sfs_++\sfs_-$ as well as the values of $d$ from the above tame microlocal estimates. (We shall not track the value of $d$, and allow $d$ to change throughout the argument, although it will only increase by a finite amount overall which is independent of $\wt L$, $k$, $\alpha_\circ$, $\hat\alpha$, $t_0$, $\lambda$.)

  \pfstep{Extensions from $\wt\Omega^{(0)}$ to $\wt\Omega^{(1)}$.} The requirement that the tame estimate~\eqref{EqNTame} must only feature the norm of $\wt L_1$ on $\Omega^{(0)}_\eps$ (and be stable under perturbations of the coefficients of $\wt L_1$ as measured by $\cC_\sop$-norms \emph{on this set}) means that we cannot work with $\wt L_1$ on $\wt\Omega^{(1)}$, as the $\cC_\sop$-norms of $\wt L_1$ on $\wt\Omega^{(1)}$ cannot be quantitatively bounded by those on $\wt\Omega^{(0)}$. Instead, we apply the extension operator $E_\eps$ of Lemma~\ref{LemmaNTameExt} to the coefficients of $\wt L_1|_{\Omega^{(0)}_\eps}$ and obtain a new operator $\wt L_1'\in\hat\rho^{-2}\cC_{\sop,\eps}^{\infty,1,1}(\Omega^{(1)}_\eps)$ for which we do have quantitative bounds
  \begin{equation}
  \label{EqNTameExtL}
    \|\wt L_1'\|_{\hat\rho^{-2}\cC_{\sop,\eps}^{j,1,1}(\Omega^{(1)}_\eps)} \leq C_k\|\wt L_1\|_{\hat\rho^{-2}\cC_{\sop,\eps}^{j,1,1}(\Omega^{(0)}_\eps)},\qquad j\in\N_0.
  \end{equation}
  We then consider $\wt L':=\wt L_0+\wt L_1'$. For all sufficiently small $\eps>0$, we still have the uniform bounds $\|u\|_{H_{\seop,\eps}^{\sfs,\alpha_\circ,\hat\alpha}(\Omega^{(\delta)}_\eps)^{\bullet,-}}\leq C\|\wt L'u\|_{H_{\seop,\eps}^{\sfs,\alpha_\circ,\hat\alpha-2}(\Omega^{(\delta)}_\eps)^{\bullet,-}}$ (for all $\delta\in[0,1]$) given by Theorem~\ref{ThmScUnifSm}; cf.\ Remark~\ref{RmkScUnifSmPert}.

  \pfstep{Non-tame estimates.} Consider now
  \[
    f\in H_{\sop,\eps}^{k+d,\alpha_\circ,\hat\alpha-2}(\Omega^{(0)}_\eps)^{\bullet,-}
  \]
  We then set
  \[
    \tilde f := E_\eps f \in H_{\sop,\eps}^{k+d,\alpha_\circ,\hat\alpha-2}(\Omega^{(1)}_\eps)^{\bullet,-} \subset H_{(\seop;\sop),\eps}^{(\sfs;k),\alpha_\circ,\hat\alpha-2}(\Omega^{(1)}_\eps)^{\bullet,-},
  \]
  which by Lemma~\ref{LemmaNTameExt} and using~\eqref{EqNTamePassages} satisfies
  \begin{equation}
  \label{EqNTameExtf}
    \|\tilde f\|_{H_{(\seop;\sop),\eps}^{(\sfs;j),\alpha_\circ,\hat\alpha-2}(\Omega^{(1)}_\eps)^{\bullet,-}} \leq C_k\|f\|_{H_{\sop,\eps}^{j+d,\alpha_\circ,\hat\alpha-2}(\Omega^{(0)}_\eps)^{\bullet,-}}.
  \end{equation}
  We can now solve $\wt L'\tilde u=\tilde f$, with the solution satisfying uniform bounds
  \[
    \|\tilde u\|_{H_{(\seop;\sop),\eps}^{(\sfs;k),\alpha_\circ,\hat\alpha}(\Omega_\eps^{(1)})^{\bullet,-}} \lesssim \|\tilde f\|_{H_{(\seop;\sop),\eps}^{(\sfs;k),\alpha_\circ,\hat\alpha-2}(\Omega_\eps^{(1)})^{\bullet,-}},
  \]
  with the implicit constant depending on $\|\wt L_1'\|_{\hat\rho^{-2}\cC_{(\seop,\sop),\eps}^{(d_0;k),1,1}(\Omega_\eps^{(1)})}$ (thus, this estimate is not yet tame) for some large but fixed $d_0\in\N_0$; this follows by repeating the arguments in the proof of Theorem~\ref{ThmScUnifSm}. Upon restriction to $\Omega_\eps^{(0)}$, and using the bounds~\eqref{EqNTameExtL} and \eqref{EqNTameExtf}, this gives
  \begin{equation}
  \label{EqNTameEstNontame}
    \|u\|_{H_{\sop,\eps}^{k,\alpha_\circ,\hat\alpha}(\Omega^{(0)}_\eps)^{\bullet,-}} \leq F_k\bigl(\|\wt L_1\|_{\hat\rho^{-2}\cC_{\sop,\eps}^{k+d,1,1}(\Omega^{(0)}_\eps)}\bigr) \|f\|_{H_{\sop,\eps}^{k+d,\alpha_\circ,\hat\alpha}(\Omega^{(0)}_\eps)^{\bullet,-}}
  \end{equation}
  where $F_k\colon[0,\infty)\to[0,\infty)$ is some non-decreasing function. For $k$ below any fixed finite value, this is of the form~\eqref{EqNTame}.

  \pfstep{Tame estimates.} In order to obtain a tame estimate for all $k$, we shall use a mild adaptation of the arguments for Theorem~\ref{ThmScUnifSm}. We will use an inductive argument in which we control $j$ degrees of s-regularity on the domain $\wt\Omega^{(1-\frac{j}{k})}$ (which shrinks as $j$ increases). To wit, consider $k'\in\{2,\ldots,k\}$, and suppose we have already obtained a tame estimate
  \begin{equation}
  \label{EqNTameInd}
  \begin{split}
    \|\tilde u\|_{H_{(\seop;\sop),\eps}^{(\sfs;k'-1),\alpha_\circ,\hat\alpha}(\Omega_\eps^{(1-\frac{k'-1-\eta}{k})})^{\bullet,-}} &\leq C_{k'-1}\biggl(\|\tilde f\|_{H_{(\seop;\sop),\eps}^{(\sfs;k'-1),\alpha_\circ,\hat\alpha-2}(\Omega^{(1-\frac{k'-1-\eta}{k})}_\eps)^{\bullet,-}} \\
      &\quad \hspace{4em} + \|\wt L'_1\|_{\hat\rho^{-2}\cC_{(\seop;\sop),\eps}^{(d_0;k'-1),1,1}(\Omega^{(1)}_\eps)} \|\tilde f\|_{H_{(\seop;\sop),\eps}^{(\sfs;d),\alpha_\circ,\hat\alpha-2}(\Omega^{(1)}_\eps)^{\bullet,-}}\biggr)
  \end{split}
  \end{equation}
  for $\eta=0,1$. This is true for all $k'$ below any fixed, i.e.\ $k$-independent, finite value $d$. Note that the domain on which $\tilde f$ is estimated can be replaced by $\Omega^{(1)}_\eps$ or $\Omega^{(0)}_\eps$ upon adjusting the constant $C_{k'-1}$, due to the fact that $\tilde f$ is controlled by $f$ via~\eqref{EqNTameExtf}.

  Consider now $k'>d$ with $k'\leq k$. We have
  \[
    \wt L'(\pa_t^{k'}\tilde u)=\pa_t^{k'}\tilde f+\sum_{j=1}^{k'} c_{k',j}({\rm ad}_{\pa_t}^j\wt L')\pa_t^{k'-j}\tilde u
  \]
  for some combinatorial constants $c_{k,j}$; the basic se-estimate for $\wt L'$ with se-regularity order $\sfs-2$ and s-regularity order $0$ thus gives
  \begin{align*}
    &\|\pa_t^{k'}\tilde u\|_{H_{\seop,\eps}^{\sfs-2,\alpha_\circ,\hat\alpha}(\Omega^{(1-\frac{k'-1}{k})}_\eps)^{\bullet,-}} \\
    &\qquad \leq C\Bigl( \|\pa_t^{k'}\tilde f\|_{H_{\seop,\eps}^{\sfs,\alpha_\circ,\hat\alpha-2}(\Omega^{(1-\frac{k'-1}{k})}_\eps)^{\bullet,-}} + \|\tilde u\|_{H_{(\seop;\sop),\eps}^{(\sfs;k'-1),\alpha_\circ,\hat\alpha}(\Omega^{(1-\frac{k'-2}{k})})^{\bullet,-}} \\
    &\qquad\quad\hspace{10em} + \|\wt L'_1\|_{\hat\rho^{-2}\cC_{(\seop;\sop),\eps}^{(d_0;k),1,1}(\Omega_\eps^{(1)})}\|\tilde u\|_{H_{(\seop;\sop),\eps}^{(\sfs;d),\alpha_\circ,\hat\alpha}(\Omega^{(1-\frac{k'-2}{k})})^{\bullet,-}}\Bigr).
  \end{align*}
  Here we used Lemma~\ref{LemmaNTameMult} (with $k,j$ in the Lemma being equal to $k'-1$, $j-1$ in present notation) to produce a tame estimate for $\sum_{j=1}^{k'}|c_{k',j}| \|({\rm ad}_{\pa_t}^j\wt L')\pa_t^{k'-j}u\|_{H_{\seop,\eps}^{\sfs,\alpha_\circ,\hat\alpha-2}(\Omega^{(1-\frac{k'-1}{k})}_\eps)^{\bullet,-}}$ using that ${\rm ad}_{\pa_t}^j\wt L'\in\hat\rho^{-2}(\CI+\cC_{\seop;\sop}^{(d_0;k-j)})\Diffse^2$ when $\wt L'_1$ has coefficients of regularity $\hat\rho^{-2}\cC_{\seop;\sop}^{(d_0;k)}$. The extendible nature of the variable order se-norm here is the reason for enlarging the domain in the final two norms on the right.

  Observe then that
  \[
    \|\tilde u\|_{H_{(\seop;\sop),\eps}^{(\sfs-2;k'),\alpha_\circ,\hat\alpha}} \sim \|\pa_t^{k'}\tilde u\|_{H_{\seop,\eps}^{\sfs-2,\alpha_\circ,\hat\alpha}} + \|\tilde u\|_{H_{(\seop;\sop),\eps}^{(\sfs-1;k'-1),\alpha_\circ,\hat\alpha}}.
  \]
  Therefore, we can use the inductive hypothesis~\eqref{EqNTameInd} with $\eta=1$ to deduce a tame estimate for $\|\tilde u\|_{H_{(\seop;\sop),\eps}^{(\sfs-2;k'),\alpha_\circ,\hat\alpha}(\Omega^{(1-\frac{k'-1}{k})}_\eps)^{\bullet,-}}$---namely, this is bounded by the right hand side of~\eqref{EqNTameInd} for $k'$ in place of $k'-1$, and with $k'-2$ in place of $k'-1-\eta$.

  Finally, we upgrade the se-regularity of $\tilde u$ from $\sfs-2$ to $\sfs$ by using the tame se-microlocal results---Lemmas~\ref{LemmaNTameEll} and \ref{LemmaNTameProp} as well as Propositions~\ref{PropNTameRout} (and its variants for the incoming and horizon radial sets) and \ref{PropNTameTrap}---in the same fashion as in the proof of Theorem~\ref{ThmEstStd}. This gives tame control on $\tilde u$ in $H_{(\seop;\sop),\eps}^{(\sfs;k'),\alpha_\circ,\hat\alpha}$ on any slightly smaller region than where we have already established tame $H_{(\seop;\sop),\eps}^{(\sfs-2;k'),\alpha_\circ,\hat\alpha}$-control. Choosing as this smaller region the set $\Omega^{(1-\frac{k'}{k})}_\eps$ finishes the proof of~\eqref{EqNTameInd} for $k'$ in place of $k'-1$, and with $\eta=0$; keep in mind here that the domain on which $\tilde f$ is estimated on the right of~\eqref{EqNTameInd} can be any domain between $\Omega^{(0)}_\eps$ and $\Omega^{(1)}_\eps$.\footnote{We stress that this step, however innocent-looking, is the backbone of our tame theory: without it, we would lose 2 orders of se-regularity for every gain of 1 order of s-regularity.} The case $\eta=1$ can be proved in exactly the same manner by working throughout with slightly larger domains.

  The estimate~\eqref{EqNTameInd}, with $k$ in place of $k'-1$, finishes the proof of the desired tame estimate using the inclusions~\eqref{EqNTamePassages} and the bounds~\eqref{EqNTameExtL} and \eqref{EqNTameExtf} similarly to the proof of~\eqref{EqNTameEstNontame}.
\end{proof}

\subsubsection{Nash--Moser theorem; smoothing operators}
\label{SssNTameNM}

To solve nonlinear equations, the tame estimates from Theorem~\ref{ThmNTame} provide the key estimates allowing for an application of a Nash--Moser iteration scheme. The following is a variant of the main result of \cite{SaintRaymondNashMoser} which essentially already featured in \cite{HintzVasyQuasilinearKdS}.

\begin{thm}[Nash--Moser]
\label{ThmNTameNM}
  Let $(B^s,|\cdot|_s)$ and $(\bfB^s,\|\cdot\|_s)$ be Banach spaces for $s\in\N_0$. Suppose that $B^s\subset B^t$ with $|\cdot|_t\leq|\cdot|_s$ for $s\geq t$, and set $B^\infty=\bigcap_{s=0}^\infty B_s$. For $\eta\in[0,1]$, let $B^s_\eta\subset B^s$ be a linear subspace (with the induced norm), with $B^s_\eta\subseteq B^s_{\eta'}$ whenever $\eta\leq\eta'$. We make the analogous definitions for, and assumptions on, $\bfB^s$. Suppose $\Phi\colon B^\infty\to\bfB^\infty$ is a $\cC^2$ map, defined for all $u\in B^\infty$ with $|u|_{3 d}<\delta$, which satisfies $\Phi(B^\infty_\eta)\subset\bfB^\infty_\eta$ for all $\eta\in[0,1]$. Suppose there exist constants $d\in\N$, $\delta>0$ and $C_i$, $i\in\N$, so that for all $u,v,w\in B^\infty$ with $|u|_{3 d}<\delta$, we have
  \begin{subequations}
  \begin{gather}
  \label{EqNTameNMPhi}
    \|\Phi(u)\|_s \leq C_s(1+|u|_{s+d})\quad\forall\,s\geq d, \\
  \label{EqNTameNMPhip}
    \|\Phi'(u)v\|_{2 d}\leq C_1|v|_{3 d},\qquad
    \|\Phi''(u)(v,w)\|_{2 d}\leq C_2|v|_{3 d}|w|_{3 d}.
  \end{gather}
  \end{subequations}
  Suppose moreover that for every $u\in B^\infty$ with $|u|_{3 d}<\delta$, there exist a linear operator $\Psi(u)\colon\bfB^\infty\to B^\infty$ mapping $\bfB^\infty_\eta\to B^\infty_\eta$ for all $\eta\in[0,1]$ and satisfying $\Phi'(u)\Psi(u)f=f$ for all $f\in\bfB^\infty$ together with a tame estimate
  \begin{equation}
  \label{EqNTameNMPsi}
    |\Psi(u)f|_s \leq C_s(\|f\|_{s+d}+|u|_{s+d}\|f\|_{2 d})\quad\forall\,s\geq d,\ f\in\bfB^\infty.
  \end{equation}
  Suppose there exist operators $S_\theta$, $\theta>1$, mapping $B_{\theta^{-1/2}}^\infty\to B_0^\infty$ and indeed $B_\eta^\infty\to B_{\eta-\theta^{-1/2}}^\infty$ which for all $s,t\geq 0$ and $v\in B_{\theta^{-1/2}}^\infty$ satisfy
  \begin{equation}
  \label{EqNTameNMSmoothing}
    |S_\theta v|_s \leq C_{s,t}\theta^{s-t}|v|_t\quad\text{for}\ s\geq t, \qquad
    |v-S_\theta v|_s \leq C_{s,t}\theta^{s-t}|v|_t\quad\text{for}\ s\leq t.
  \end{equation}
  Then if $\|\Phi(0)\|_{2 d}$ is sufficiently small depending on $\delta$ and the constants $C_s$, $C_{s,t}$ for $s,t\leq 16 d^2+43 d+24$, there exists $u\in B^\infty$ with $|u|_{3 d}<\delta$ and $\Phi(u)=0$.
\end{thm}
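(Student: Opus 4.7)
\textbf{Proof proposal for Theorem~\ref{ThmNTameNM}.} The plan is to run a smoothed Newton iteration in the spirit of Nash--Moser \cite{SaintRaymondNashMoser}, with the parameter $\eta$ playing the role of a ``support budget'' that is shifted inward by a geometrically summable amount at each step. Concretely, I would fix an initial scale $\theta_0>1$ and a geometric growth rate $\theta_n = \theta_0\cdot 2^n$, define $u_0=0$, and set inductively
\[
  u_{n+1} := u_n + v_n, \qquad v_n := -\Psi(u_n)\,S_{\theta_n}\Phi(u_n),
\]
so that expanding $\Phi$ to second order yields
\[
  \Phi(u_{n+1}) = \Phi(u_n) + \Phi'(u_n)v_n + Q_n = (I-S_{\theta_n})\Phi(u_n) + E_n + Q_n,
\]
where $Q_n = \int_0^1(1-\tau)\Phi''(u_n+\tau v_n)(v_n,v_n)\,\dd\tau$ is the quadratic error and $E_n$ collects the mismatch between $\Phi'(u_n)\Psi(u_n)$ and the identity (which vanishes in our setting, since $\Psi(u_n)$ is a right inverse of $\Phi'(u_n)$). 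The support-preservation properties of $\Phi$, $\Psi$, and $S_\theta$ ensure that if $u_n\in B^\infty_{\eta_n}$ then $v_n\in B^\infty_{\eta_n - \theta_n^{-1/2}}$, and since $\sum_n\theta_n^{-1/2}<\infty$ we can keep $\eta_n$ bounded away from $0$ by choosing $\theta_0$ large enough; this guarantees that the limiting solution lies in $B^\infty_\eta$ for some $\eta>0$.

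The heart of the argument is an inductive bookkeeping that controls three quantities simultaneously: the ``error size'' $\delta_n := \|\Phi(u_n)\|_{2d}$, a ``low'' Sobolev norm $|u_n|_{3d}$ (which must remain below $\delta$ so that the hypotheses on $\Phi$, $\Phi'$, $\Phi''$, and $\Psi$ remain applicable), and the ``high'' norms $|u_n|_s$ for $s$ up to some large but finite cutoff $S$ (which are allowed to grow polynomially in $\theta_n$). First I would extract the schematic bounds $|v_n|_s \leq C_s\theta_n^{d}(\delta_n + |u_n|_{s+d}\delta_n)$ from~\eqref{EqNTameNMPsi} combined with~\eqref{EqNTameNMSmoothing}, and $\|Q_n\|_{2d} \leq C|v_n|_{3d}^2$ from~\eqref{EqNTameNMPhip}; interpolation via the smoothing estimates~\eqref{EqNTameNMSmoothing} (used to split $u_n = S_{\theta_n}u_n + (I-S_{\theta_n})u_n$ when estimating $|u_n|_{s+d}$) converts the tame bound on $\Psi$ into a tractable recursion. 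Then the smoothing term $\|(I-S_{\theta_n})\Phi(u_n)\|_{2d}$ is controlled by $C\theta_n^{2d-s^*}\|\Phi(u_n)\|_{s^*}$ for any $s^*\geq 2d$, and the bound~\eqref{EqNTameNMPhi} lets one estimate $\|\Phi(u_n)\|_{s^*}$ in terms of $|u_n|_{s^*+d}$.

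Combining these, I would prove by induction the quantitative ansatz
\[
  \delta_n \leq \delta_0\cdot\theta_n^{-\alpha},\qquad |u_n|_s \leq A_s\cdot\theta_n^{\max(0,\,s-\beta)}\ \text{for}\ d\leq s\leq S,
\]
for suitably chosen exponents $\alpha>0$ and $\beta>3d$ and constants $A_s$. The inductive step requires that one quadratic iteration beats the tame loss in $\Psi$ by a fixed margin, which leads to a specific lower bound on $\beta$ in terms of $d$; tracing the algebra (balancing losses of order $d$ in $\Psi$, order $d$ in $\Phi'$ and $\Phi''$, and the smoothing gains from $(I-S_{\theta_n})$) is where the explicit cutoff $16d^2+43d+24$ in the statement presumably originates, as it captures the largest Sobolev index on which the smoothing/interpolation estimates must be applied during the closure of the induction.

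The main obstacle, and where I expect most of the work to lie, is the careful choice of the exponents $\alpha$, $\beta$ and the base constants $A_s$ so that the recursion closes: one must verify that the quadratic error $\|Q_n\|_{2d}$ together with the smoothing residual $\|(I-S_{\theta_n})\Phi(u_n)\|_{2d}$ is bounded by $\frac12\delta_{n+1}$, while at the same time $|u_{n+1}|_{3d} = |u_n + v_n|_{3d}$ remains below $\delta$ (for which one needs $\sum_n|v_n|_{3d}$ to be small, guaranteed by the geometric decay of $\delta_n$ once $\alpha$ is chosen large enough). Once the induction is closed, the sequence $u_n$ is Cauchy in $|\cdot|_{3d}$ and bounded in every $|\cdot|_s$ with $s\leq S$ by the ansatz; taking $S$ arbitrary shows the limit lies in $B^\infty$, and continuity of $\Phi$ in $\|\cdot\|_{2d}$ together with $\delta_n\to 0$ yields $\Phi(u)=0$. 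The smallness requirement on $\|\Phi(0)\|_{2d}$ precisely corresponds to the base step of the induction (choosing $\delta_0$) and the summability needed to preserve $|u_n|_{3d}<\delta$ and $\eta_n$ bounded away from $0$.
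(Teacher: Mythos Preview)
Your iteration scheme has a structural problem: you set $v_n = -\Psi(u_n)\,S_{\theta_n}\Phi(u_n)$, applying the smoothing operator to $\Phi(u_n)\in\bfB^\infty$. But the hypotheses only provide smoothing operators $S_\theta$ on the $B$-spaces, not on the $\bfB$-spaces, so $S_{\theta_n}\Phi(u_n)$ is not defined. The paper (following Saint-Raymond) instead uses the update $u_{n+1}=u_n+S_{\theta_n}v_n$ with $v_n=-\Psi(u_n)\Phi(u_n)$: one first applies $\Psi(u_n)$ to land in $B^\infty$, and only then smooths. With this placement the residual is not $(I-S_{\theta_n})\Phi(u_n)$ but rather $\Phi'(u_n)(I-S_{\theta_n})v_n + Q_n$, and the linearization term is controlled via the bound $\|\Phi'(u)v\|_{2d}\leq C_1|v|_{3d}$ from \eqref{EqNTameNMPhip} together with the second smoothing estimate in \eqref{EqNTameNMSmoothing} applied to $v_n\in B^\infty$.

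A secondary point: you take $\theta_n=\theta_0\cdot 2^n$ (geometric), whereas Saint-Raymond's argument, which the paper invokes verbatim for the estimates, uses the super-exponential sequence $\theta_j=\theta_0^{(5/4)^j}$. The explicit constant $16d^2+43d+24$ in the statement is inherited from Saint-Raymond's bookkeeping with that particular growth. Your geometric sequence would still give a summable $\sum_n\theta_n^{-1/2}$ (which is all the filtration argument needs), but you should not expect to recover the same numerical threshold, and you would need to redo the closure of the induction from scratch rather than quote \cite{SaintRaymondNashMoser}. The paper's proof, by contrast, is very short precisely because it defers all the analytic estimates to Saint-Raymond and only checks the one new ingredient: that the iterates remain in $B^\infty_{\eta_J}$ with $\eta_J=1-\sum_{j<J}\theta_j^{-1/2}>0$ for $\theta_0$ large, so that each application of $S_{\theta_j}$ is legitimate.
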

\begin{proof}
  This follows by repeating the arguments of \cite{SaintRaymondNashMoser}; we only need to address the filtrations by $\eta\in[0,1]$ (and the fact that $S_\theta$ is only defined on $B^\infty_\eta$ for $\eta\geq\theta^{-1/2}$), the estimates themselves being otherwise unaffected. To start, since $u_0:=0\in B^\infty_1$, we have $\Phi(u_0)\in\bfB^\infty_1$ and therefore (using the notation of \cite[Lemma~1]{SaintRaymondNashMoser}) $v_0:=-\Psi(u)\Phi(0)\in B^\infty_1$; hence $u_1:=u_0+S_{\theta_0}v_0\in B^\infty_{1-\theta_0^{-1/2}}$. Now, the parameters $\theta_j$ in the proof of \cite[Lemma~1]{SaintRaymondNashMoser} are fixed by $\theta_j=\theta_0^{(5/4)^j}$, with $\theta_0$ large. It then follows that $u_J\in B^\infty_{\eta_J}$ where $\eta_J=1-\sum_{j=0}^{J-1}\theta_j^{-1/2}$. For sufficiently large $\theta_0$, we have $\eta_J>0$ for all $J$, and therefore all iterates $u_J$ are well-defined. Their limit $u$ in $B^{3 d+3}$ then solves $u\in B^\infty$ and $\Phi(u)=0$ as in \cite{SaintRaymondNashMoser}.
\end{proof}

We motivate the presence of the filtration $B^s_\eta$, $\eta\in[0,1]$, as follows: if $B^s$ is a space of $H^s$-functions with support in $t\geq 0$, then one could take $B^s_\eta$ to be the subspace of elements with support in $t\geq\eta$. If $\Phi$ is a nonlinear wave operator and $\Psi(u)$ is its forward solution operator, then $\Phi$ and $\Psi(u)$ preserve supports in $t\geq\eta$. On the other hand, standard smoothing operators do not respect supports. We now show, following \cite[Lemma~5.9]{HintzVasyQuasilinearKdS}, how to construct smoothing operators which enlarge supports only by an amount $\theta^{-\delta}$, $\delta\in(0,1)$ (so $\delta=\frac12$ is a possible choice).

\begin{lemma}[Smoothing operators]
\label{LemmaNTameSmooth}
  Let $\delta\in(0,1)$. Then there exist linear operators $S_\theta\colon\sD'(\R^n)\to\CI(\R^n)$, $\theta>1$, with the following properties:
  \begin{enumerate}
  \item\label{ItNTameSmoothSupp0} for all $u\in\sD'(\R^n)$, $p\in\supp u$, and $q\in\supp(S_\theta u)$, we have $|p-q|<\theta^{-\delta}$;
  \item\label{ItNTameSmoothEst} for all $s,t\geq 0$, we have estimates $\|S_\theta v\|_{H^s}\leq C_{s,t}\theta^{s-t}\|v\|_{H^t}$, $s\geq t$, and $\|v-S_\theta v\|_{H^s}\leq C_{s,t}\theta^{s-t}\|v\|_{H^t}$, $s\leq t$.
  \end{enumerate}
\end{lemma}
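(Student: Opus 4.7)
The plan is to construct $S_\theta$ as convolution with a truncated bandlimited mollifier. First I will fix a Schwartz function $\phi\in\mathcal{S}(\R^n)$ with $\int\phi=1$ whose Fourier transform $\hat\phi$ lies in $\CIc(B(0,1))$ and satisfies $\hat\phi\equiv 1$ on $B(0,\tfrac12)$ (such $\phi$ exist by standard construction), and scale it as $\phi_\theta(x):=\theta^n\phi(\theta x)$, so that $\widehat{\phi_\theta}(\xi)=\hat\phi(\xi/\theta)$. Convolution with $\phi_\theta$ would give the Sobolev estimates directly, but $\phi_\theta$ is not compactly supported. To remedy this, I pick a cutoff $\chi\in\CIc(B(0,1))$ with $\chi\equiv 1$ near the origin and set
\[
  \psi_\theta(x):=\chi(\theta^\delta x)\,\phi_\theta(x), \qquad S_\theta u:=\psi_\theta * u.
\]
Then $\psi_\theta\in\CIc(B(0,\theta^{-\delta}))$, so $S_\theta$ maps $\sD'(\R^n)$ into $\CI(\R^n)$ and $\supp(S_\theta u)\subset\supp u+B(0,\theta^{-\delta})$, which (after noting that $\supp\chi$ is compactly contained in $B(0,1)$) yields the strict inequality in part~\eqref{ItNTameSmoothSupp0}.

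For the Sobolev estimates in~\eqref{ItNTameSmoothEst}, I compare $S_\theta$ to the ideal bandlimited mollification $T_\theta u:=\phi_\theta*u$ and treat the correction $R_\theta:=S_\theta-T_\theta$ as a superpolynomially small error. The multiplier $\hat\phi(\xi/\theta)$ is bounded and supported in $\{|\xi|\leq\theta\}$, giving $\la\xi\ra^{2(s-t)}|\hat\phi(\xi/\theta)|^2\leq C\theta^{2(s-t)}$ when $s\geq t$, while $1-\hat\phi(\xi/\theta)$ is supported in $\{|\xi|\geq\theta/2\}$, giving $\la\xi\ra^{2(s-t)}|1-\hat\phi(\xi/\theta)|^2\leq C\theta^{2(s-t)}$ when $s\leq t$ (since there $\la\xi\ra^{-2(t-s)}\leq C\theta^{-2(t-s)}$). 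Plancherel then yields both parts of~\eqref{EqNTameSmoothing} with $T_\theta$ and $I-T_\theta$ in place of $S_\theta$ and $I-S_\theta$.

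The correction $r_\theta:=\psi_\theta-\phi_\theta=(\chi(\theta^\delta\cdot)-1)\phi_\theta$ is supported where $|x|\geq c\theta^{-\delta}$ for some $c>0$ (with $\chi\equiv 1$ on $B(0,c)$); on that set $|\theta x|\geq c\theta^{1-\delta}$, which tends to $+\infty$ as $\theta\to\infty$ precisely because $\delta<1$. Combining this with the Schwartz decay of $\phi$ and its derivatives, and counting the growth factors $\theta^{\delta|\beta|}$ and $\theta^{n+|\alpha-\beta|}$ picked up upon differentiating $\chi(\theta^\delta\cdot)$ and $\phi_\theta$ via the Leibniz rule, a direct computation gives $\|D^\alpha r_\theta\|_{L^1(\R^n)}\leq C_{\alpha,N}\theta^{-N}$ for every multi-index $\alpha$ and every $N\in\R$. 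Taking Fourier transforms converts this into $(1+|\xi|)^M|\hat r_\theta(\xi)|\leq C_{M,N}\theta^{-N}$ for all $M,N$, so the contribution of $R_\theta$ to either Sobolev estimate is $O(\theta^{-N}\|v\|_{H^t})$ for any $N$ and hence harmless.

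The key technical point is the hypothesis $\delta<1$: the gap $\theta^{1-\delta}\to\infty$ between the natural mollifier scale $\theta^{-1}$ and the truncation scale $\theta^{-\delta}$ is precisely what allows the Schwartz tail of the bandlimited $\phi$ to absorb any polynomial losses incurred by the cutoff, so that $S_\theta$ inherits the infinite-order smoothing estimates of the ideal bandlimited mollifier $T_\theta$. If one insisted on $\delta=1$ this gap would disappear, and indeed compactly supported mollifiers of scale $\theta^{-1}$ can only furnish finite-order smoothing, the order being set by the number of vanishing moments of $\phi$.
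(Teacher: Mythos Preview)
Your proof is correct and follows essentially the same approach as the paper: both construct $S_\theta$ as convolution with a scaled bandlimited Schwartz function truncated by a cutoff at scale $\theta^{-\delta}$, verify the Sobolev bounds for the untruncated operator via its Fourier multiplier, and show the truncation error has superpolynomially decaying $L^1$-norms of all derivatives using the Schwartz decay of the kernel on the region $|x|\gtrsim\theta^{-\delta}$ (where $|\theta x|\gtrsim\theta^{1-\delta}\to\infty$). The only difference is notational---the paper calls the compactly supported cutoff $\phi$ and the Schwartz kernel $\chi=\mathcal F^{-1}\phi$, whereas you swap these roles---and you add a helpful remark explaining why $\delta<1$ is essential.
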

\begin{proof}
  Fix $\phi\in\CIc(\R^n)$ so that $\phi(\xi)=1$ for $|\xi|<1$. Let $\chi=\cF^{-1}\phi\in\sS(\R^n)$ and set
  \[
    (S_\theta v)(x) = \int_{\R^n} \phi(\theta^\delta y) \theta^n\chi(\theta y) v(x-y)\,\dd y,\qquad
    (S_\theta^0 v)(x) = \int_{\R^n} \theta^n\chi(\theta y)v(x-y)\,\dd y.
  \]
  The estimates of part~\eqref{ItNTameSmoothEst} hold for $S_\theta^0$ since $\wh{S_\theta^0 v}(\xi)=\phi(\frac{\xi}{\theta})\hat v(\xi)$, so $\la\xi\ra^{s-t}\phi(\frac{\xi}{\theta})\leq C_{s,t}\theta^{s-t}$ for $s\geq t$ and $\la\xi\ra^{s-t}(1-\phi(\frac{\xi}{\theta}))\leq C_{s,t}\theta^{s-t}$ for $s\leq t$. Consider then
  \[
    ((S_\theta-S_\theta^0)v)(x) = \int_{\R^n} (1-\phi(\theta^\delta y))\theta^n\chi(\theta y)v(x-y)\,\dd y.
  \]
  Since on $\supp(1-\phi(\theta^\delta y))$ we have $|y|\geq\theta^{-\delta}$, we can estimate
  \begin{align*}
    \int |1-\phi(\theta^\delta y)| \theta^n|\chi(\theta y)|\,\dd y &\leq \theta^n\int_{|\theta y|\geq\theta^{1-\delta}} C_N(1+|\theta y|)^{-N}\,\dd y \\
      &= C'_N \int_{r>\theta^{1-\delta}} r^{-N}\,r^{n-1}\dd r \leq C''_N\theta^{-(N-n)(1-\delta)}
  \end{align*}
  for any $N$. Since $\delta<1$, this implies $\|(S_\theta-S_\theta^0)v\|_{L^2}\leq C'''_N\theta^{-N}\|v\|_{L^2}$ for all $N$. To estimate derivatives of $(S_\theta-S_\theta^0)v$, note that any finite number of $y$-derivatives of $(1-\phi(\theta^\delta y))\theta^n\chi(\theta y)$ has $L^1$-norm bounded by $\theta^{-N}$ for any $N$ by a similar estimate. Therefore, we in fact have $\|(S_\theta-S_\theta^0)v\|_{H^s}\leq C_N\theta^{-N}\|v\|_{L^2}$ for all $N$; and thus $S_\theta$ satisfies the desired estimates.
\end{proof}

\begin{lemma}[Smoothing operators on extendible spaces]
\label{LemmaNTameSmoothExt}
  There exist linear operators $S_\theta\colon L^2(\R^n_+)\to\bar H^\infty(\R^n_+)$, $\theta>1$, satisfying part~\eqref{ItNTameSmoothSupp0} of Lemma~\usref{LemmaNTameSmooth} as well as the estimates~\eqref{ItNTameSmoothEst}, where now $\|\cdot\|_{H^s}$ denotes the quotient norm on $\bar H^s(\R^n_+)$.
\end{lemma}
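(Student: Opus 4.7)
The plan is to build $S_\theta$ by combining the $\R^n$-smoothing operators from Lemma~\ref{LemmaNTameSmooth} with a Seeley-type reflection extension, and then to verify the support bound by exploiting the quantitative nature of the reflection.

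First, I would fix an auxiliary parameter $\delta'\in(\delta,1)$ and apply Lemma~\ref{LemmaNTameSmooth} with $\delta'$ in place of $\delta$ to obtain smoothing operators $S_\theta^{\R^n,\delta'}\colon\sD'(\R^n)\to\CI(\R^n)$ whose supports grow by at most $\theta^{-\delta'}$. Next I would fix a Seeley extension operator $E\colon\bigcap_{t\in\N_0}\bar H^t(\R^n_+)\to\bigcap_{t\in\N_0}H^t(\R^n)$ which is bounded $\bar H^t(\R^n_+)\to H^t(\R^n)$ for every $t\in\N_0$ (see \cite{SeeleyExtension}) and which for $x_n<0$ is given by an expansion of the form
\[
  (E v)(x',x_n) = \sum_k c_k\,\phi(\mu_k x_n)\,v(x',-\mu_k x_n),
\]
with $\mu_k>0$ and $\phi\in\CIc(\R)$ equal to $1$ near $0$; the crucial consequence is that any $p'=(x',x_n)\in\supp(E v)$ with $x_n<0$ has a ``preimage'' $p=(x',-\mu_k x_n)\in\supp v$ satisfying $|p'-p|=|x_n|(1+\mu_k)\leq C|x_n|$, where $C:=1+\max_k\mu_k$. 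Finally, fixing $\theta_0=\theta_0(\delta,\delta',C)>1$ so large that $(C+1)\theta^{-\delta'}<\theta^{-\delta}$ for all $\theta\geq\theta_0$, I would set
\[
  S_\theta v := \bigl(S_\theta^{\R^n,\delta'}(E v)\bigr)\big|_{\R^n_+}\quad(\theta\geq\theta_0),\qquad S_\theta v := 0\quad(1<\theta<\theta_0).
\]

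The estimates~\eqref{ItNTameSmoothEst} for $\theta\geq\theta_0$ then follow immediately from the boundedness of restriction to $\R^n_+$ on $H^s$-quotient norms, the boundedness of $E$ on every $H^t$, and the corresponding estimates for $S_\theta^{\R^n,\delta'}$ applied to $v-S_\theta v=(E v-S_\theta^{\R^n,\delta'}E v)|_{\R^n_+}$; for $1<\theta<\theta_0$ the estimates are trivial after absorbing the bounded factor $\theta_0^{|s-t|}$ into the constants.

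The main obstacle, and the step that dictates the whole construction, is the support statement. Given $q\in\supp(S_\theta v)\subset\ol{\R^n_+}$, part~(1) of Lemma~\ref{LemmaNTameSmooth} provides $p'\in\supp(E v)$ with $|p'-q|<\theta^{-\delta'}$. If $p'\in\ol{\R^n_+}$, then $p'\in\supp v$ because $(E v)|_{\R^n_+}=v$, and $|q-p'|<\theta^{-\delta'}<\theta^{-\delta}$ is what we want. Otherwise $p'_n<0$; in this case the reflection formula above produces some $p\in\supp v$ with $|p'-p|\leq C|p'_n|$, and since $q_n\geq 0$ we have $|p'_n|\leq|p'_n-q_n|\leq|p'-q|<\theta^{-\delta'}$, so
\[
  |p-q|\leq|p-p'|+|p'-q|<(C+1)\theta^{-\delta'}<\theta^{-\delta}
\]
by our choice of $\theta_0$. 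This completes the verification of part~\eqref{ItNTameSmoothSupp0}. The whole argument hinges on having the ``slack'' $\delta<\delta'<1$: it is this slack that lets the Seeley constant $C$ be absorbed when passing from the enlargement $\theta^{-\delta'}$ provided by the $\R^n$-smoothing to the required enlargement $\theta^{-\delta}$.
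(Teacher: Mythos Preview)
Your construction---Seeley-extend to $\R^n$, apply the smoothing of Lemma~\ref{LemmaNTameSmooth}, restrict back---is precisely the paper's, and your derivation of the estimates in part~\eqref{ItNTameSmoothEst} matches the paper's proof. The device of setting $S_\theta=0$ for $\theta<\theta_0$ is harmless.

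There is, however, a gap in your verification of the support property. You write $|p'-p|=|x_n|(1+\mu_k)\leq C|x_n|$ with $C=1+\max_k\mu_k$, but a Seeley extension bounded $\bar H^t(\R^n_+)\to H^t(\R^n)$ for \emph{every} $t\in\N_0$ necessarily has $\sup_k\mu_k=\infty$. Indeed, such an extension must satisfy $\sum_k c_k\mu_k^j=(-1)^j$ for all $j\geq 0$ together with $\sum_k|c_k|<\infty$; if the $\mu_k$ were bounded, the entire function $h(z)=\sum_k c_k e^{z\mu_k}$ would have $h^{(j)}(0)=(-1)^j$, hence $h(z)=e^{-z}$, so on the imaginary axis $e^{-i t}=\sum_k c_k e^{i t\mu_k}$ would express a negative-frequency exponential as an absolutely convergent superposition of positive-frequency ones, which is impossible. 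With unbounded $\mu_k$, the cutoff $\phi$ only gives $\mu_k|x_n|\leq M_\phi$ for the active index, hence merely $|p'-p|\leq|x_n|+M_\phi$; your slack $\delta<\delta'$ cannot absorb the fixed constant $M_\phi$. (Concretely: for $v$ supported in $\{x_n\in[a,b]\}$ with $0<a<b<M_\phi$, the extension $E v$ is nonzero at points with $x_n<0$ arbitrarily close to $0$, so after smoothing and restricting one can get points of $\supp(S_\theta v)$ near $x_n=0$ at fixed positive distance from $\supp v$.)

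It is worth noting that the paper's own proof does not address part~\eqref{ItNTameSmoothSupp0} at all. In the downstream applications (Corollaries~\ref{CorNTameSmooth}--\ref{CorNTameSmooths}) one Seeley-extends across the \emph{extendible} faces and then applies Lemma~\ref{LemmaNTameSmooth} directly; what matters there is the support growth in the \emph{supported} directions, which are tangential to the extendible faces and hence untouched by the reflection. So your gap concerns a clause of the lemma that is not actually needed later.
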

\begin{proof}
  Fix $\chi\in\CIc([0,\infty))$ to be equal to $1$ near $0$. For $u\in L^2(\R^n_+)$, set $(E u)(x,y)=u(x,y)$ when $x>0$ and $(E u)(x,y)=\sum_{l=0}^\infty c_l\cdot\chi(-\frac{x}{\delta_l})u(-\frac{x}{\delta_l},y)$ where $c_l\in\R$ and $\delta_l>0$ with $\delta_l\searrow 0$ are chosen as in the proof of Lemma~\ref{LemmaNTameExt}. Then $E\colon\bar H^s(\R^n_+)\to H^s(\R^n)$ is well-defined and continuous for $s\in\N_0$ and thus for all $s\geq 0$ by interpolation. Denote by $R\colon H^s(\R^n)\to\bar H^s(\R^n_+)$ the restriction operator. We can then set
  \[
    S_\theta v := R \tilde S_\theta E v,\qquad v\in L^2(\R^n_+),
  \]
  where $\tilde S_\theta\colon H^0(\R^n)\to H^\infty(\R^n)$ is any family of linear operators satisfying part~\eqref{ItNTameSmoothSupp0} of Lemma~\usref{LemmaNTameSmooth} and the estimates~\eqref{ItNTameSmoothEst} (e.g.\ the operators constructed in Lemma~\ref{LemmaNTameSmooth}). The bound $\|S_\theta v\|_{\bar H^s}\leq C_{s,t}\theta^{s-t}\|v\|_{\bar H^t}$ for $s\geq t$ follows from the corresponding bounds for $\tilde S_\theta$ and the continuity of $R,E$. Since $R E=I$ on $\bar H^s(\R^n_+)$ for all $s\geq 0$, the bound on $v-S_\theta v=R(E v-\tilde S_\theta E v)$ follows from the corresponding bound on $v'-\tilde S_\theta v'$ with $v'=E v$.
\end{proof}

Combining the methods of proof of Lemmas~\ref{LemmaNTameSmooth}--\ref{LemmaNTameSmoothExt}, we now obtain:

\begin{cor}[Smoothing operators on manifolds with corners]
\label{CorNTameSmooth}
  Let $\Omega$ be a manifold with corners, with boundary defining functions $\rho_1,\ldots,\rho_N\in\CI(\Omega)$. Fix $j\in\{0,\ldots,N\}$, and set $\Omega_\eta:=\Omega\cap\{\rho_i\geq\eta,\ i=1,\ldots,j\}$ for $\eta\in[0,1]$. Assume that the differentials of the $\dd\rho_i$, $i\leq j$, are linearly independent on their $\eta$-level sets for $\eta\in[0,1]$ so that $\Omega_\eta$ is a manifold with corners.\footnote{This can be arranged by multiplying $\rho_1,\ldots,\rho_j$ by sufficiently large constants.} Write $H^s(\Omega_\eta)^{\bullet,-}$ for the space of $H^s$-functions with supported character at $\rho_i=\eta$, $i\leq j$, and extendible character at $\rho_i=0$, $i>j$. Then there exist linear operators $S_\theta$, $\theta>1$, mapping $L^2(\Omega_\eta)^{\bullet,-}\to H^\infty(\Omega_{\eta-\theta^{-1/2}})^{\bullet,-}$ satisfying for all $v\in H^t(\Omega_{\theta^{-1/2}})^{\bullet,-}$ the estimates
  \begin{alignat*}{2}
    \|S_\theta v\|_{H^s(\Omega)^{\bullet,-}} &\leq C_{s,t}\theta^{s-t}\|v\|_{H^t(\Omega)^{\bullet,-}}\quad\text{for}\ s\;&\geq t, \\
    \|v-S_\theta v\|_{H^s(\Omega)^{\bullet,-}} &\leq C_{s,t}\theta^{s-t}\|v\|_{H^t(\Omega)^{\bullet,-}}\quad\text{for}\ s\;&\leq t.
  \end{alignat*}
\end{cor}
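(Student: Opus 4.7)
The strategy is to combine the convolution-with-compactly-supported-kernel construction of Lemma~\ref{LemmaNTameSmooth} (which gives the support enlargement by $\theta^{-1/2}$ needed in the supported directions) with the Seeley-type extension/restriction trick of Lemma~\ref{LemmaNTameSmoothExt} (needed in the extendible directions). Concretely, fix a finite open cover $\{U_\alpha\}$ of a relatively compact neighborhood of $\Omega$ by coordinate charts adapted to the corner structure, together with a subordinate partition of unity $\{\chi_\alpha\}\subset\CIc(\R^n)$. Each chart comes with local coordinates $(x_1,\ldots,x_j,x_{j+1},\ldots,x_N,y)\in[0,\infty)^j\times[0,\infty)^{N-j}\times\R^m$ in which $\rho_i=x_i$ for $i\leq N$. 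It suffices to construct, for each chart, a local smoothing operator $S_\theta^\alpha$ satisfying the required properties; the global operator is then $S_\theta v:=\sum_\alpha S_\theta^\alpha(\chi_\alpha v)$, which preserves both the supported and extendible boundary structure since multiplication by $\chi_\alpha\in\CIc$ does.

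For the local construction in the coordinates $(x,y)$, I would first apply an extension operator $E$ in the extendible variables $x_{j+1},\ldots,x_N$ only, of the kind used in the proof of Lemma~\ref{LemmaNTameSmoothExt} (applied coordinate by coordinate, or equivalently as a single Seeley-style extension in the orthant $[0,\infty)^{N-j}$), to produce $E v$ defined on $[0,\infty)^j\times\R^{N-j}\times\R^m$. The operator $E$ maps $\bar H^s$ (in the extendible directions) to $H^s$ continuously for all $s\in\N_0$, hence for all $s\geq 0$ by interpolation, and it does not touch the supported variables $x_1,\ldots,x_j$. Next, apply the convolution smoothing $\tilde S_\theta$ from Lemma~\ref{LemmaNTameSmooth} (with $\delta=\frac12$) in \emph{all} of the variables $(x,y)\in\R^N\times\R^m$ to the trivial extension by zero of $E v$ past $x_i=0$ for $i\leq j$; finally, restrict back to the half-orthant $\{x_i\geq 0,\ i\leq j\}$ and further to $\Omega$ in the chart. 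This defines $S_\theta^\alpha$.

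For the support property, note that if $v$ is supported in $\{\rho_i\geq\eta\}$ for $i\leq j$, then $E v$ has the same support in the $x_i$-variables for $i\leq j$ (since $E$ acts only in the other variables), and the convolution with a kernel whose support lies in a ball of radius $\theta^{-1/2}$ enlarges support by at most $\theta^{-1/2}$ in every coordinate, in particular in the supported ones; this gives property~\eqref{ItNTameSmoothSupp0} globally upon combining with the partition of unity. For the norm estimates, the boundedness $\|Ev\|_{H^t}\leq C_t\|v\|_{\bar H^t}$ (quotient norm), the estimates~\eqref{EqNTameNMSmoothing} for $\tilde S_\theta$, and the continuity of restriction to $\bar H^s$ combine to give the claimed $\theta^{s-t}$ bounds, using the identity $R\circ E=\operatorname{Id}$ on the extendible factors to bound $\|v-S_\theta^\alpha v\|=\|R(Ev-\tilde S_\theta Ev)\|$ by $\|Ev-\tilde S_\theta Ev\|_{H^s}$.

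The main (minor) obstacle is organizing the bookkeeping so that the extension is performed only in the extendible directions while leaving the supported structure intact, and then verifying that the $\theta^{-1/2}$ support enlargement from Lemma~\ref{LemmaNTameSmooth}, inherited from the compactly supported convolution kernel, survives all three steps. Once this is in place, all norm estimates are immediate consequences of the corresponding estimates in Lemmas~\ref{LemmaNTameSmooth} and~\ref{LemmaNTameSmoothExt} and the continuity of extension and restriction on extendible Sobolev spaces.
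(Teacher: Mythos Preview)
Your proposal is correct and follows essentially the same approach as the paper's proof: localize via a partition of unity, apply Seeley-type extension in the extendible directions only, then use the convolution smoothing of Lemma~\ref{LemmaNTameSmooth} (which enlarges supports by $\theta^{-1/2}$) in all remaining variables, and restrict back. The paper's proof is a three-sentence sketch of exactly this strategy; your version simply spells out the bookkeeping more explicitly.
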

\begin{proof}
  Using a partition of unity, it suffices to construct smoothing operators on the local models $[0,\infty)_x^k\times\R_y^{n-k}$, with supported character at $x^i=\eta$ for $i\leq j$ and extendible character at $x^i=0$ for $i>j$. Using extension operators across each of the latter hypersurfaces, the task is reduced to the construction of a smoothing operator on $[0,\infty)_x^j\times\R_z^{n-j}$ acting on functions with support in $x^i>\eta$, $i=1,\ldots,j$, which was done in Lemma~\ref{LemmaNTameSmooth}.
\end{proof}

In the setting of interest in this paper, we similarly have:

\begin{cor}[Smoothing operators on s-Sobolev spaces]
\label{CorNTameSmooths}
  Consider $M'=\R_t\times\R^3_x$, $\wt M'=[[0,1)\times M';\{0\}\times\R\times\{0\}]$. Let $t_0<t_1$ and $r_0>0$. Denote by $\wt\Omega_\eta\subset\wt M'$, $\eta\in[0,1]$, the lift of $[0,1)_\eps\times\Omega_{t_0+c\eta,t_1,r_0}$ where $c<t_1-t_0$, and write $(\Omega_\eta)_\eps=\wt\Omega_\eps\cap M_\eps$. Then there exist linear operators $S_\theta$, $\theta>1$, mapping $H_{\sop,\eps}^{0,\alpha_\circ,\hat\alpha}((\Omega_\eta)_\eps)^{\bullet,-}\to H_{\sop,\eps}^{\infty,\alpha_\circ,\hat\alpha}((\Omega_{\eta-\theta^{-1/2}})_\eps)^{\bullet,-}$ for $\eta\in[\theta^{-1/2},1]$, so that for all integer $s,t\geq 0$, there exist constants $C_{s,t}$ so that the estimates
  \begin{alignat*}{2}
    \|S_\theta v\|_{H_{\sop,\eps}^s(\Omega_\eps)^{\bullet,-}} &\leq C_{s,t}\theta^{s-t}\|v\|_{H_{\sop,\eps}^t(\Omega_\eps)^{\bullet,-}}\quad\text{for}\ s\;&\geq t, \\
    \|v-S_\theta v\|_{H_{\sop,\eps}^s(\Omega_\eps)^{\bullet,-}} &\leq C_{s,t}\theta^{s-t}\|v\|_{H_{\sop,\eps}^t(\Omega_\eps)^{\bullet,-}}\quad\text{for}\ s\;&\leq t,
  \end{alignat*}
  with $v\in H_{\sop,\eps}^t((\Omega_{\theta^{-1/2}})_\eps)^{\bullet,-}$, hold uniformly for $\eps\in(0,1)$.
\end{cor}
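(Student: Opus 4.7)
The plan is to build $S_\theta$ by localizing to the unit cells of the parameterized scaled bounded geometry structure for the coefficient Lie algebra $\cV_\sop(\wt M)$ introduced in~\S\usref{SssFVarsse}, applying a standard smoothing operator on each model cell, and summing via a bounded geometry partition of unity. First, since $\rho_\circ^{\pm\alpha_\circ}\hat\rho^{\pm\hat\alpha}\in\CI(\wt M)$ and multiplication by these weights defines a uniformly bounded isomorphism between weighted and unweighted s-Sobolev (and s-continuity) spaces which preserves $t$-supports, it suffices to treat the case $\alpha_\circ=\hat\alpha=0$. I then recall from~\eqref{EqFVarsseBG1}--\eqref{EqFVarsseBG2} the unit cells $U_{\eps,\alpha}\subset M_\eps$ (of three types: near $\cC$ with bounded $|\hat x|$, near $\cC$ with $|\hat x|\gtrsim 1$, and away from $\cC$) and their charts $\phi_{\eps,\alpha}\colon U_{\eps,\alpha}\xra{\cong}U^{(\alpha)}\subset\R^4$; the crucial feature is that in every chart the $t$-coordinate takes the form $T=t-j$ for some integer $j$, so that smoothing at scale $\theta^{-1/2}$ in the model coordinate $T$ corresponds exactly to smoothing at scale $\theta^{-1/2}$ in $t$. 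Fix a bounded geometry partition of unity $\{\chi_\alpha\}$ subordinate to $\{U_{\eps,\alpha}\}$; by this I mean that each $(\phi_{\eps,\alpha}^{-1})^*\chi_\alpha$ is uniformly bounded in $\CI(U^{(\alpha)})$ in $\eps,\alpha$.

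Second, I dispose of the extendible boundary hypersurfaces. The final hypersurfaces of $(\Omega_\eta)_\eps$ are $\{t=t_1\}$, the wedge $\{|x|=r_0+2(t_1-t)\}$, and (in cells where $\hat r$ is bounded) $\{\hat r=\bhm\}$; each is crossed by only finitely many types of unit cells in the s-b.g.\ structure. Using the Seeley-type construction recalled in Lemma~\usref{LemmaNTameExt}, performed in the model coordinates on each such cell and glued by the partition of unity, I build an extension operator
\[
  E_\eps \colon H_{\sop,\eps}^s((\Omega_\eta)_\eps)^{\bullet,-} \to \dot H_{\sop,\eps}^s\bigl(\{t\geq t_0+c\eta\}\cap V_\eps\bigr),
\]
where $V_\eps$ is an $\eta$-independent open neighborhood of $(\Omega_\eta)_\eps$, with uniform (in $\eps$, and continuous in $s$) operator norm bounds on all $H_{\sop,\eps}^s$ and $\cC_{\sop,\eps}^s$ norms; this is a minor variant of the argument giving Corollary~\usref{CorNTameSmooth}, now carried out in the scaled b.g.\ setting. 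After applying $E_\eps$, only the supported-at-$\{t=t_0+c\eta\}$ structure remains.

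Third, on the model cell $U^{(\alpha)}$ I apply the standard smoothing operator $\tilde S_\theta^{(\alpha)}$ of Lemma~\usref{LemmaNTameSmooth} with $\delta=\tfrac12$, so that supports are enlarged by at most $\theta^{-1/2}$ (in model variables) and the standard $H^s(\R^4)$-smoothing estimates hold. I then set
\[
  S_\theta v \;:=\; \sum_\alpha \phi_{\eps,\alpha}^*\Bigl( \tilde S_\theta^{(\alpha)}\bigl[(\phi_{\eps,\alpha}^{-1})^*(\chi_\alpha\, E_\eps v)\bigr] \Bigr),
\]
viewed as a distribution on $(\Omega_{\eta-\theta^{-1/2}})_\eps$ after restriction. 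The sum is locally finite. Because $\phi_{\eps,\alpha}$ translates $t$ by a constant (without dilation), the $t$-support of each summand lies in $\{t \geq t_0+c\eta - \theta^{-1/2}\}$; after a harmless rescaling of the parameter $\theta$ by the constant $c$, this yields the claimed support in $\{t\geq t_0+c(\eta-\theta^{-1/2})\}$, i.e.\ inside $(\Omega_{\eta-\theta^{-1/2}})_\eps$. For the norm estimates, I use the uniform equivalence (established already in~\S\usref{SssFVarsse} and used throughout Proposition~\usref{PropFVarseSobRel})
\[
  \|w\|_{H_{\sop,\eps}^s(M_\eps)}^2 \sim \sum_\alpha \bigl\|(\phi_{\eps,\alpha}^{-1})^*(\chi_\alpha w)\bigr\|_{H^s(U^{(\alpha)})}^2,
\]
together with the corresponding bound for the product of a uniformly $\CI$-bounded multiplier with a function, which reduces both required inequalities to the standard $\R^n$-estimates of Lemma~\usref{LemmaNTameSmooth}\eqref{ItNTameSmoothEst}.

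The main obstacle I anticipate is bookkeeping rather than conceptual: one must verify that the combination of the extension step with the chart-wise smoothing does not spoil the supported character at $\{t=t_0+c\eta\}$, which forces the extension $E_\eps$ to be built only across the final (extendible) hypersurfaces, never across the initial one, and forces the smoothing scale in each model cell to be taken in $T$ with support increase bounded by $c\theta^{-1/2}$; this is where the choice $\delta=\tfrac12$ in Lemma~\usref{LemmaNTameSmooth}, the $t$-translation (not dilation) nature of the charts $\phi_{\eps,\alpha}$, and a possible constant rescaling of $\theta$ have to be combined. Once this is handled, the remaining verifications are routine.
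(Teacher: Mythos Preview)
Your proposal is correct and follows essentially the same approach as the paper, which states the corollary without proof and merely indicates that it is proved ``similarly'' to Corollary~\ref{CorNTameSmooth}: localize via a partition of unity to local models, extend across the extendible hypersurfaces, and apply the smoothing operators of Lemma~\ref{LemmaNTameSmooth}. Your key observation that the s-b.g.\ charts~\eqref{EqFVarsseBG1}--\eqref{EqFVarsseBG2} act on $t$ by pure translation (no dilation) is exactly what makes the support bookkeeping go through uniformly in $\eps$; this is the only point that genuinely distinguishes the s-Sobolev case from the general manifold-with-corners case, and you have identified it correctly. One minor remark: your reference to Proposition~\ref{PropFVarseSobRel} for the norm equivalence is slightly off (that result concerns se- versus edge/3b-spaces); the relevant equivalence for s-Sobolev spaces is the one used in the proof of Lemma~\ref{LemmaNTameDaDb}, coming directly from the s-b.g.\ structure.
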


We only restrict to integer orders here since we have not previously defined $H_{\sop,\eps}^s$ for non-integer $s$ (although this can easily be done, as they are equal to uniform Sobolev spaces for the b.g.\ structure~\eqref{EqFVarse1}--\eqref{EqFVarse2}).

\subsection{Solution of a nonlinear toy problem}
\label{SsNToy}

In this section, we shall study the following semilinear wave equation.

\begin{thm}[Nonlinear toy problem]
\label{ThmNToy}
  Let $(M,g)$ be a $4$-dimensional globally hyperbolic spacetime, and suppose $u\in\CI(M)$ solves
  \[
    P(u) := \Box_g u - u^2 = 0.
  \]
  Let $(\wt M,\wt g)$ be a glued spacetime associated with $M$, an inextendible timelike geodesic $\cC\subset M$, and subextremal Kerr parameters $\bhm,\bha$, and write $t\in I_\cC$, $x\in\R^3$ for Fermi normal coordinates around $\cC$; cf.\ Definition~\usref{DefGl}. Let $X$ be a Cauchy hypersurface in $M$ which near its intersection point $\fp$ with $\cC$ is equal to $\{t=0\}$, and let $\wt X\subset\wt M\setminus\wt K^\circ$ denote the lift of $[0,1)\times X$. Write $P_\eps(u)=\Box_{g_\eps}u-u^2$ and $\wt P(\wt u)=(P_\eps(\wt u|_{M_\eps}))_{\eps\in(0,1)}$.
  \begin{enumerate}
  \item\label{ItNToyF}{\rm (Formal solution.)} There exists $\wt u\in\cA_\phg^{\cE_\circ,\hat\cE}(\wt M\setminus\wt K^\circ)$, with index sets\footnote{Recall that $(1,*)$ is an index set contained in $(1+\N_0)\times\N_0\subset\C\times\N_0$. Thus, $\wt u$ is log-smooth on $\wt M$.} $\cE_\circ=(0,0)\cup(1,*)$ and $\hat\cE=(0,0)\cup(1,*)$, so that $\wt u|_{M_\circ}=\upbeta_\circ^*u$, further $\wt u|_{\hat M_t}=u(t,0)$, and $\wt P(\wt u)\in\CIdot(\wt M\setminus\wt K^\circ)$ vanishes to infinite order at $M_\circ\cup\hat M$ and also at $\wt X$.
  \item\label{ItNToyT}{\rm (Correction to a true solution.)} Let $\wt u$ be a formal solution as in part~\eqref{ItNToyF}. Let $K\subset M$ be compact. Then there exist $\eps_0>0$ and $\wt v\in\CIdot(\wt M\setminus\wt K^\circ)$ so that $\wt P(\wt u+\wt v)=0$ on $K$ for all $\eps\in(0,\eps_0]$.
  \end{enumerate}
\end{thm}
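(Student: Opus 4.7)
\textbf{Plan for part~\eqref{ItNToyF}.} I would begin with the ansatz $\wt u_0 := \wt\upbeta^*u \in \CI(\wt M\setminus\wt K^\circ)$. Since $\wt\upbeta$ collapses each fiber $\hat M_t$ to the single point $(0,c(t))\in\{0\}\times\cC$, the ansatz satisfies $\wt u_0|_{M_\circ}=\upbeta_\circ^*u$ and $\wt u_0|_{\hat M_t}=u(t,0)$, which are automatically compatible at the corner $M_\circ\cap\hat M$. A direct computation shows $\wt P(\wt u_0)$ vanishes at $M_\circ$ (this is precisely the equation $\Box_g u = u^2$ pulled back), while $\eps^2\wt P(\wt u_0)$ vanishes at $\hat M$ because $N_{\hat M_t}(\eps^2\Box_{\wt g})$ annihilates the $\hat x$-constant function $u(t,0)$, and the quadratic term is $\eps^2\cdot(u(t,0))^2\to 0$. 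I would then iteratively improve the ansatz: given $\wt u_N$ with $\wt P(\wt u_N)\in\rho_\circ^{a_N}\hat\rho^{b_N}\cA_\phg$ (with orders $a_N,b_N\to\infty$ and suitable index sets), solve the two model equations---at $M_\circ$, the edge-normal operator of $\Box_g$ inverted via the zero-energy b-resolvent $\hat L(0)^{-1}$ from Lemma~\ref{LemmaSc3bSpec} (identified via Lemma~\ref{LemmaFseIdent} with the reduced $M_\circ$-normal operator); at each $\hat M_t$, the stationary equation $\Box_{\hat g_b}\hat v=\hat f$ in 3b-spaces using Theorem~\ref{ThmSc3b}. Since the nonlinear contribution $-2\wt u_N\cdot(\text{correction})$ at each stage contributes only lower-order terms (the linearization of the quadratic nonlinearity), the construction is essentially the standard formal-series machinery detailed in \cite{HintzGlueLocI}. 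Borel-summing along $\rho_\circ$ and $\hat\rho$ while keeping track of the polyhomogeneous structure produced by the b-boundary spectrum (which generates the logarithmic powers in the index sets $\cE_\circ=(0,0)\cup(1,*)$ and $\hat\cE=(0,0)\cup(1,*)$, starting at the second iteration) yields $\wt u$ with $\wt P(\wt u)\in\CIdot(\wt M\setminus\wt K^\circ)$.

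\textbf{Plan for part~\eqref{ItNToyT}.} Given the formal solution $\wt u$ from part~\eqref{ItNToyF}, I would set $\Phi(\wt v):=\wt P(\wt u+\wt v)=\wt P(\wt u)+\Box_{\wt g}\wt v-2\wt u\wt v-\wt v^2$, so that $\Phi(0)=\wt P(\wt u)\in\CIdot$ is trivial to all polynomial orders in $\eps$. The linearization $\Phi'(\wt v)=\Box_{\wt g}-2(\wt u+\wt v)$ differs from $\Box_{\wt g}$ by a bounded zeroth-order term; crucially, its $\hat M$-normal operator is still $\Box_{\hat g_b}$ (since $-2(\wt u+\wt v)$, after multiplication by $\eps^2$, vanishes on $\hat M$), and its $M_\circ$-normal operator is $\Box_g-2u$, which is a lower-order perturbation of $\Box_g$ that does not disturb the invertibility of the transition-face normal operator family of Lemma~\ref{LemmaSc3bSpec}. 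Hence $\Phi'(\wt v)$ falls under the scope of the tame theory of Theorem~\ref{ThmNTame}, and by concatenating finitely many such local tame estimates across the covering of $K$ by small standard domains from Proposition~\ref{PropGlDynCover}---using energy estimates (Proposition~\ref{PropEstEn}) to bridge between domains, exactly as in the proof of Theorem~\ref{ThmScSG}---one obtains semiglobal tame estimates for $\Phi'(\wt v)^{-1}$. Together with the smoothing operators of Corollary~\ref{CorNTameSmooths}, whose support-enlargement property matches the filtration-by-$\eta$ in Theorem~\ref{ThmNTameNM}, I would apply the Nash--Moser scheme in $B^s:=H_{\sop,\eps}^{s,\alpha_\circ,\hat\alpha}(\Omega_\eps)^{\bullet,-}$ (with appropriate weights from Theorem~\ref{ThmScUnif}) to produce $\wt v$ with $\Phi(\wt v)=0$, provided $\|\Phi(0)\|_{2d}$ is small, which holds for $\eps_0$ small because $\Phi(0)\in\CIdot$. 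To upgrade $\wt v$ to $\CIdot$, I would run Nash--Moser for the rescaled problem $\Phi_N(\wt v_N):=\eps^{-N}\Phi(\eps^N\wt v_N)=0$ for each $N\in\N$: the linearization is a small zeroth-order perturbation of $\Phi'(0)$ whose tame constants are uniform in $N$, while $\Phi_N(0)=-\eps^{-N}\wt P(\wt u)\in\CIdot$ remains small. Uniqueness of small Nash--Moser solutions (via the underlying $L^2$ energy estimate) forces $\wt v=\eps^N\wt v_N$ for every $N$, and hence $\wt v\in\bigcap_N\eps^N H_\sop^\infty\subset\CIdot$. An a posteriori regularization using Theorem~\ref{ThmScSG} applied to $\Box_{\wt g}\wt v=2\wt u\wt v+\wt v^2-\wt P(\wt u)$ confirms the conormal (s-)smoothness of the resulting $\wt v$.

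\textbf{Main obstacle.} The principal technical difficulty is the semiglobal upgrade of Theorem~\ref{ThmNTame}, whose tame estimate is only established on small standard domains $\wt\Omega_{(\lambda)}$, to an estimate valid on a covering of $K$. The strategy parallels the proof of Theorem~\ref{ThmScSG}, solving sequentially on the standard domains $\Omega_j$ produced by Proposition~\ref{PropGlDynCover} and joining them through short energy-type estimates; the subtle point is preserving tameness across these concatenations, which requires that each intermediate estimate has constants depending only on a fixed (in particular, $k$-independent) finite number of tame norms of the coefficients of $\Phi'$---a feature that is indeed inherent to Theorem~\ref{ThmNTame} thanks to the restriction to a fixed s-regularity threshold $d$ in the dependence of its constants. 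A secondary (but routine) point is verifying that the perturbation $-2(\wt u+\wt v)$ stays within the allowed $\cC_{\sop,\eps}^d$-range throughout the Nash--Moser iteration, which follows from Sobolev embedding (Proposition~\ref{PropFsSobEmb}) applied to the iterates, whose norms are controlled by $\|\Phi(0)\|$ and hence arbitrarily small.
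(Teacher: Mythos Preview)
Your plan for part~\eqref{ItNToyF} has the model operators reversed. At $M_\circ$, the correction at each order is obtained by solving the linearized wave equation $L_\circ h = (\Box_g - 2u)h = f_0$ as a genuine initial value problem on $M$ (after first constructing a formal expansion at $r=0$); this is not accomplished by $\hat L(0)^{-1}$. Conversely, at $\hat M_t$ the correction is stationary in the fast time $\hat t$, so one solves $\hat L(0)h(t,\cdot)=f^1(t,\cdot)$ on $\hat X_b$ for each $t$ using Lemma~\ref{LemmaSc3bSpec}; the full time-dependent 3b-estimate of Theorem~\ref{ThmSc3b} is neither needed nor appropriate here. You also omit the final step of correcting the formal solution to infinite order at $\wt X$, which the paper carries out by a separate Taylor-series argument in a transversal defining function.

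For part~\eqref{ItNToyT}, your strategy and the paper's diverge. You propose to concatenate the local tame estimates of Theorem~\ref{ThmNTame} into a semiglobal tame estimate and then run Nash--Moser once. The paper instead runs Nash--Moser \emph{separately on each small standard domain} in the covering of Proposition~\ref{PropGlDynCover}, obtains $\wt v\in\CIdot$ on that domain (via the weight choice $\hat\alpha=N$ with $N\to\infty$, which is exactly your rescaling argument), and only then moves to the next domain. This sidesteps your ``main obstacle'' entirely: no semiglobal tame estimate is ever needed, and the handoff between domains uses only that the already-constructed correction contributes $\CIdot$ perturbations to the linearization, which are small for small $\eps$ and hence (crucially) do not force one to shrink $\lambda_0$. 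Your route would in principle work---concatenation loses a fixed, $k$-independent number of derivatives per domain, so the result is still tame---but the paper's approach is cleaner and makes the $\eps$-uniformity of the domain sizes transparent.
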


We will prove part~\eqref{ItNToyF} in~\S\ref{SssNToyF} and part~\eqref{ItNToyT} in~\S\ref{SssNToyT}.

Since any simple variant of the problem studied in Theorem~\ref{ThmNToy} is just as artificial, we do not aim for any sort of general statement here, and instead focus on a simple concrete example to illustrate how to operate the machinery developed in this paper. We thus leave it to the interested reader to study generalizations to equations such as $\Box_{g_\eps}u_\eps=|\dd u_\eps|_{g_\eps}^2$, or to quasilinear wave equations (with lower order terms). One could also add to $\Box_{g_\eps}$ a potential $\eps^{-2}V(\frac{x}{\eps})$ under the assumption of mode stability for $\Box_{\hat g_b}+V$ (which is in general a rather delicate problem \cite{MoschidisSuperradiant}); and one can also consider geometrically simpler settings as discussed in Remark~\ref{RmkGlOther}.

Note that Theorem~\ref{ThmNToy} is a gluing, or singular perturbation, problem: what is glued in at the fiber $\hat M_t=\ol{\R^3_{\hat x}}$ of the front face $\hat M$ of $\wt M$ is merely the constant function $u(t,0)$, which is a stationary solution of the relevant zero energy problem, $\wh{\Box_{\hat g_b}}(0)u(t,0)=0$ (with $t$ a parameter). See \cite[\S{1.2.1}]{HintzGlueLocI} for a less trivial, albeit somewhat more contrived and in any case linear, setup.

\subsubsection{Formal solution}
\label{SssNToyF}

Using the perspective introduced in \cite{HintzGlueLocI}, we sketch the proof of Theorem~\ref{ThmNToy}\eqref{ItNToyF} via the iterative solution of linear model problems at $M_\circ$ and $\hat M$. To wit, let $\wt u_0\in\CI(\wt M\setminus\wt K^\circ)$ be any function with $\wt u_0|_{M_\circ}=\upbeta_\circ^*u$ and $\wt u_0|_{\hat M_t}=u(t,0)$. Then
\begin{equation}
\label{EqNToyF0}
  \Err_0 := \wt P(\wt u_0) \in \rho_\circ\hat\rho^{-1}\CI(\wt M\setminus\wt K^\circ) = \cA_\phg^{(1,0),(-1,0)}.
\end{equation}
Indeed, since $\Box_{\wt g}\in\hat\rho^{-2}\Diffse^2$, we have $\Box_{\wt g}\wt u_0\in\hat\rho^{-2}\CI$ with leading order behavior at $\hat M_t$ given by $\eps^{-2}\Box_{\hat g_b}(\wt u_0|_{\hat M_t})=0$, so in fact $\Box_{\wt g}\wt u_0\in\hat\rho^{-1}\CI$; and $\wt u_0^2\in\CI$, so $\wt P(\wt u_0)\in\hat\rho^{-1}\CI$, with restriction to $M_\circ$ given by $\Box_g u-u^2=0$, giving~\eqref{EqNToyF0}.

Write $L_\circ v:=D_u P(v)=\Box_g v-2 u v$ and $L:=\Box_{\hat g_b}$ for the normal operators of the linearization of $\wt P$ around $\wt u_0$ (or any other element of $\cA_\phg^{\cE_\circ,\hat\cE}$ with $\cE_\circ,\hat\cE=(0,0)\cup(1,*)$ having the same restrictions to $M_\circ$ and $\hat M$ as $\wt u_0$).

Consider now $f_0:=(\eps^{-1}\Err_0)|_{M_\circ}\in r^{-2}\CI(M_\circ)$. As in \cite[\S{1.2.1}, Step 2]{HintzGlueLocI}, one can construct $h\in\cA_\phg^{(0,*)}(M_\circ)$ with $L_\circ h=f_0$ by first constructing a formal solution at $r=0$ and then solving away the remaining error (which lies in $\CIdot(M_\circ)\subset\upbeta_\circ^*\CI(M)$) using an initial value problem for the wave operator $L_\circ$. We then set $\wt u^1:=\wt u_0+\eps h$ (where we denote the $\eps$-independent extension of $h$ by the same symbol). Thus, for $\wt u^1\in\cA_\phg^{(0,0),(0,0)\cup(1,*)}(\wt M\setminus\wt K^\circ)$, we have $\Err^1:=\wt P(\wt u^1)\in\cA_\phg^{(2,0),(-1,*)}$. (This improves the order of vanishing at $M_\circ$ and does not make that at $\hat M$ worse except for the possibility of logarithmic factors.)

Suppose for simplicity that $\Err^1\in\cA_\phg^{(2,0),(-1,0)\cup(0,*)}$. (The general case is then dealt with similarly.) Consider then $f^1:=(\eps\Err^1)|_{\hat M}$, which is a smooth function of $t\in I_\cC$ with values in $\cA^{(3,0)}(\hat X_b)$. Using the invertibility of $\hat L(0)$ (see Lemma~\ref{LemmaSc3bSpec}) and standard normal operator arguments to extract asymptotic expansions at $\hat r=\infty$ (namely, using that $\hat L(0)-\Delta_{\hat x}\in\la\hat x\ra^{-3}\Diffb^2(\hat X_b)$, with the boundary spectrum of $\Delta_{\hat x}$ being the integers), one can thus find $h\in\CI(\R_t;\cA^{(1,*)}(\hat X_b))$ so that $\hat L(0)h(t,\cdot)=f^1(t,\cdot)$ for all $t\in I_\cC$. With $\hat\chi\in\CI(\wt M)$ denoting a cutoff, equal to $1$ and supported in a collar neighborhood of $\hat M$ where the Fermi normal coordinates are valid, we then regard $\hat\chi h$ as a function $(\eps,t,\hat x)\mapsto \hat\chi(\eps,t,\hat x)h(t,\hat x)$ which thus lies in $\cA^{(1,*),(0,0)}(\wt M\setminus\wt K^\circ)$. Setting $\wt u_1:=\wt u^1+\eps\chi h\in\cA^{(0,0)\cup(2,*),(0,0)\cup(1,*)}$, we then have $\Err_1:=\wt P(\wt u_1)\in\cA^{(2,*),(0,*)}$. (This improves the order of vanishing at $\hat M$ and makes that at $M_\circ$ worse by a power of $\log\rho_\circ$ only.)

One then proceeds in this fashion (with simple modifications to deal with logarithmic terms), thus constructing $\wt u^{k+1}=\wt u_k+\eps^{k+1}h$ with $h\in\cA^{(0,*),(0,*)}(\wt M\setminus\wt K^\circ)$ and $\wt u_{k+1}=\wt u^{k+1}+\eps^{k+1}h$ with $h\in\cA^{(1,*),(0,*)}(\wt M\setminus\wt K^\circ)$ so that, for all $k$, we have $\wt P(\wt u^k)\in\cA^{(k+1,*),(k-2,*)}$ and $\wt P(\wt u_k)\in\cA^{(k+1,*),(k-1,*)}$. Defining $\wt u_\infty$ to be an asymptotic sum $\wt u_\infty\sim\wt u_0+\sum_{k=0}^\infty((\wt u^{k+1}-\wt u_k) + (\wt u_{k+1}-\wt u^{k+1}))$ of $\wt u_0$ and all correction terms, the remaining error $\wt P(\wt u_\infty)$ is then polyhomogeneous and vanishes to all orders at $M_\circ$ and $\hat M$, so $\wt P(\wt u_\infty)\in\CIdot(\wt M\setminus\wt K^\circ)$ indeed.

Finally, we correct $\wt u_\infty$ in Taylor series at $\wt X$ as follows: fix $\ft\in\CI(\wt M)$ with $\wt X=\ft^{-1}(0)$; thus $\dd\ft$ is past timelike for the metric $g_\eps$ near $\wt X$ for all sufficiently small $\eps>0$. Then $\wt P(\wt u_\infty+\ft^2\wt h)\equiv\wt P(\wt u_\infty)+[\Box_{\wt g},\ft^2]\wt h$ modulo terms vanishing at $\wt X$. But the restriction to $\ft=0$ of $[\Box_{\wt g},\ft^2]\wt h$ is a smooth nonvanishing multiple of $\wt h$; therefore, we can choose $\wt h\in\CIdot(\wt M\setminus\wt K^\circ)$ so that $\wt P(\wt u_\infty+\ft^2\wt h)$ vanishes at $\ft=0$. One then constructs further corrections $\ft^j\wt h$, $j\geq 3$, in an iterative fashion. The sought-after formal solution $\wt u$ is then the asymptotic sum (in the sense of Taylor series at $M_\circ\cup\hat M\cup\{\ft=0\}$) of $\wt u_\infty$ and all these corrections.

\subsubsection{True solution}
\label{SssNToyT}

We now turn to the proof of part~\eqref{ItNToyT} of Theorem~\ref{ThmNToy}. We shall prove the existence of the correction term $\wt v=(v_\eps)$ for sufficiently small $\eps>0$ in weighted s-Sobolev spaces, as a solution of a modification of the equation
\[
  \breve\Phi_\eps(v_\eps) = 0,\qquad \breve\Phi_\eps(v_\eps) := P_\eps(u_\eps + v_\eps) = (\Box_{g_\eps}u_\eps-u_\eps^2)+\bigl((\Box_{g_\eps}-2 u_\eps)v_\eps-v_\eps^2\bigr),\quad u_\eps:=\wt u|_{M_\eps}.
\]
Let $f_\eps:=P_\eps(u_\eps)$, then $\wt f=(f_\eps)_{\eps\in(0,1)}$ lies in $\CIdot(\wt M\setminus\wt K^\circ)$ and vanishes to infinite order at $\wt X$.

\medskip

\textbf{Tame estimates on small domains.} Fix some large $d\in\N_0$. Consider a standard domain $\wt\Omega\subset\wt M$ associated with $\Omega_{-\lambda,\lambda,\lambda}$, with $\lambda>0$ small and fixed below. Fix quantities $\alpha_\cD\in(-\frac32,-\frac12)$, $N\geq 3$, and let
\begin{equation}
\label{EqNToyTWeights}
  \alpha_\circ=\alpha_\cD+N,\qquad\hat\alpha=N.
\end{equation}
Let $v_\eps\in H_{\sop,\eps}^{\infty,\alpha_\circ,\hat\alpha}(\Omega_\eps)^{\bullet,-}=\eps^N H_{\sop,\eps}^{\infty,\alpha_\cD,0}(\Omega_\eps)^{\bullet,-}$ (the space $B^\infty$ in the notation of Theorem~\ref{ThmNTameNM}), and suppose that $\|v_\eps\|_{H_{\sop,\eps}^{d,\alpha_\circ,\hat\alpha}(\Omega_\eps)^{\bullet,-}}\leq 1$. Denote by $H=H(t)$ the Heaviside function. Then we can estimate
\begin{equation}
\label{EqNToyTPhiDef}
  \Phi_\eps(v_\eps) := H(t)f_\eps + (\Box_{g_\eps}-2 u_\eps)v_\eps - v_\eps^2
\end{equation}
in $H_{\sop,\eps}^{s-2,\alpha_\circ,\hat\alpha-2}(\Omega_\eps)^{\bullet,-}$ (the space $\bfB^s$ in the notation of Theorem~\ref{ThmNTameNM}) via
\begin{equation}
\label{EqNToyTPhi}
  \|\Phi_\eps(v_\eps)\|_{H_{\sop,\eps}^{s-2,\alpha_\circ,\hat\alpha-2}(\Omega_\eps)^{\bullet,-}} \leq C_{s,N}\eps^N + C_s\|v_\eps\|_{H_{\sop,\eps}^{s,\alpha_\circ,\hat\alpha}(\Omega_\eps)^{\bullet,-}}.
\end{equation}
Indeed, due to the infinite order vanishing of $f_\eps$ at $t=0$, the family $(H(t)f_\eps)$ lies in $\CIdot(\wt M\setminus\wt K^\circ)$ and now vanishes in $t\leq 0$; moreover, we used that~\eqref{EqNTameMultSpecial} (with weights) and Sobolev embedding (Proposition~\ref{PropFsSobEmb}) imply
\[
  \eps^{-\hat\alpha+\frac32}\|v_\eps^2\|_{H_{\sop,\eps}^{s-2,\alpha_\circ,\hat\alpha}} \leq C_s\|v_\eps^2\|_{H_{\sop,\eps}^{s-2,2\alpha_\circ,2\hat\alpha-\frac32}} \leq C_s\|v_\eps\|_{H_{\sop,\eps}^{s-2,\alpha_\circ,\hat\alpha}}\|v_\eps\|_{H_{\sop,\eps}^{d,\alpha_\circ,\hat\alpha}},\qquad d>2,
\]
in view of $\alpha_\circ+\hat\alpha-\frac32\leq 2\alpha_\circ$; thus $\|v_\eps^2\|_{H_{\sop,\eps}^{s-2,\alpha_\circ,\hat\alpha}}\leq C_s\eps^{N-\frac32}\|v_\eps\|_{H_{\sop,\eps}^{s-2,\alpha_\circ,\hat\alpha}}$. The estimate~\eqref{EqNToyTPhi} implies~\eqref{EqNTameNMPhi}. We also note that for every $\delta>0$ and every $N\geq 2$ in~\eqref{EqNToyTWeights}, there exists $\eps_{\delta,N}>0$ so that $\|\Phi_\eps(0)\|_{H_{\sop,\eps}^{s-2,\alpha_\circ,\hat\alpha-2}(\Omega_\eps)^{\bullet,-}}<\delta$ for $\eps<\eps_{\delta,N}$.

The estimates~\eqref{EqNTameNMPhip} follow in a similar fashion from the expressions
\[
  \Phi_\eps'(v_\eps)w_\eps=(\Box_{g_\eps}-2 u_\eps)w_\eps-2 v_\eps w_\eps
\]
and $\Phi_\eps''(v_\eps,w_\eps)z_\eps=-2 w_\eps z_\eps$. The tame estimate~\eqref{EqNTameNMPsi} is provided by Theorem~\ref{ThmNTame} (see also Remark~\ref{RmkNTameOther}): this provides values for $d$, $\lambda$, $\eps_0>0$ so that for all $\eps\leq\eps_0$,
\begin{align*}
  &\|w_\eps\|_{H_{\sop,\eps}^{s,\alpha_\circ,\hat\alpha}(\Omega_\eps)^{\bullet,-}} \\
  &\qquad \leq C_s\Bigl( \|\Phi_\eps'(v_\eps)w_\eps\|_{H_{\sop,\eps}^{s,\alpha_\circ,\hat\alpha-2}(\Omega_\eps)^{\bullet,-}} + \|v_\eps\|_{\cC_{\sop,\eps}^{s+d,1,-1}(\Omega_\eps)}\|\Phi_\eps'(v_\eps)w_\eps\|_{H_{\sop,\eps}^{d,\alpha_\circ,\hat\alpha-2}(\Omega_\eps)^{\bullet,-}}\Bigr),
\end{align*}
where we can further estimate $\|v_\eps\|_{\cC_{\sop,\eps}^{s+d,1,-1}}\leq C_s\|v_\eps\|_{H_{\sop,\eps}^{s+2 d,1,\frac12}}\leq C'_s\|v_\eps\|_{H_{\sop,\eps}^{s+2 d,\alpha_\circ,\hat\alpha}}$ via Sobolev embedding.

\medskip

\textbf{Nonlinear solution on small domains.} The forward solution operator $\Phi'_\eps(v_\eps)^{-1}$ preserves the property of being supported in $t\geq-\lambda+\eta\lambda$, $\eta\in[0,1]$. If we define $B_\eta^s$, resp.\ $\bfB_\eta^s$ to be the subspace of $H_{\sop,\eps}^{s,\alpha_\circ,\hat\alpha}(\Omega_\eps)^{\bullet,-}$, resp.\ $H_{\sop,\eps}^{s-2,\alpha_\circ,\hat\alpha-2}(\Omega_\eps)^{\bullet,-}$ consisting of all elements with support in $t\geq-\lambda+\eta\lambda$, all hypotheses of Theorem~\ref{ThmNTameNM} are thus verified for the map~\eqref{EqNToyTPhiDef}, with uniform constants for all $\eps\leq\eps_0$. Therefore, applying Theorem~\ref{ThmNTameNM} for all $\eps\in(0,\eps_0]$ produces
\begin{equation}
\label{EqNToyTN}
  v^+_\eps \in H_{\sop,\eps}^{\infty,\alpha_\circ,\hat\alpha}(\Omega_\eps)^{\bullet,-}
\end{equation}
so that $\Phi_\eps(v^+_\eps)=0$ for all $\eps\in(0,\eps_0]$. By finite speed of propagation, we have $t\geq 0$ on $\supp v^+_\eps$. Moreover, by Sobolev embedding, we have $\wt v^+=(v^+_\eps)_{\eps\in(0,\eps_0]}\in\cC_\sop^{\infty,\alpha_\circ,\hat\alpha-\frac32}(\wt\Omega)^{\bullet,-}\subset\eps^{N-\frac32}\cC_\sop^\infty(\wt\Omega)^{\bullet,-}$. Since the same arguments apply for larger values of $N$, uniqueness of solutions of nonlinear wave equations implies that $\wt v^+$ does not depend upon the choice of $N$. Therefore,
\begin{equation}
\label{EqNToyTinfty}
  (v^+_\eps)_{\eps\in(0,\eps_0]} \in \eps^\infty\cC_\sop^\infty(\wt\Omega)^{\bullet,-}.
\end{equation}
(Note that the smallness of $\Phi_\eps(0)$ required by Theorem~\ref{ThmNTameNM} imposes an $N$-dependent upper bound on $\eps$; but for $\eps$ bounded away from $0$, the membership~\eqref{EqNToyTinfty} is equivalent to~\eqref{EqNToyTN} for any fixed value of $N$.)

To prove regularity of $v^+_\eps$ in $\eps$, we differentiate the equation $\Phi_\eps(v^+_\eps)=0$ in $\eps$ to find
\[
  \Phi'_\eps(v^+_\eps)(\pa_\eps v^+_\eps) + (\pa_\eps\Phi_\eps)(v^+_\eps) = 0.
\]
Now $(\pa_\eps\Phi_\eps)(v^+_\eps)=H(t)\pa_\eps f_\eps + [\pa_\eps,\Box_{g_\eps}-2 u_\eps]v^+_\eps\in\eps^\infty\cC_\sop^\infty(\wt\Omega)^{\bullet,-}$, and thus Theorem~\ref{ThmScS} shows that $(\pa_\eps v^+_\eps)\in\eps^\infty\cC_\sop^\infty(\wt\Omega)^{\bullet,-}$ as well. Proceeding iteratively implies
\[
  \wt v^+=(v^+_\eps)_{\eps\in(0,\eps_0]}\in\CIdot(\wt\Omega),\qquad t\geq 0\ \text{on}\ \supp\wt v,
\]
for the solution of $\breve\Phi_\eps(v^+_\eps)=0$ on $\Omega_\eps\cap\{t\geq 0\}$ which have constructed.

\medskip

\textbf{Semiglobal existence.} We now argue as in the proof of Theorem~\ref{ThmScSG}: one iteratively solves the nonlinear equation $\wt P(\wt u+\wt v)=0$ on the domains produced by Proposition~\ref{PropGlDynCover}, with $\eta>0$ chosen so small that the tame estimates of Theorem~\ref{ThmNTame} apply to the linearization of $P_\eps$ around the formal solution $u_\eps$ on each standard domain in Proposition~\ref{PropGlDynCover}. Carefully note that Theorem~\ref{ThmNTame} applies uniformly (as far as the value of $\lambda_0$ is concerned) also for bounded perturbations of the linearizations of $P_\eps$ (measured in a norm with fixed finite regularity and decay orders). Crucially, this means that the same estimates apply, \emph{without having to shrink the domains further} (but possibly reducing the upper bound on $\eps$), also for the linearization of $P_\eps$ around $u_\eps+\chi v_\eps$ where $v_\eps$ is the correction term in previous steps of the iterative construction (which thus lies in $\CIdot$, and contributes $\CIdot$ terms to the coefficients of the linearized operator, which are therefore small in the required sense when $\eps$ is small enough), and where $\chi$ is a cutoff function as in the proof of Theorem~\ref{ThmScSG} which equals $0$ in the later part of the domain $\wt\Omega_{J'}$ into which one is currently extending the nonlinear solution, and $1$ near the initial hypersurface of $\wt\Omega_{J'}$. This completes the proof of Theorem~\ref{ThmNToy}.

\appendix
\section{Basic notions of geometric singular analysis}
\label{SB}

\textbf{Manifolds with corners.} An $n$-dimensional \emph{manifold with corners} $X$ is diffeomorphic to $[0,\infty)^k\times\R^{n-k}$ in a neighborhood of each of its points $p\in X$; if the smallest possible number $k\in\{0,\ldots,n\}$, for a given point $p$, is equal to $0$, then $p$ lies in the interior $X^\circ$. We require all boundary hypersurfaces of $X$ to be embedded submanifolds; a boundary hypersurface is the closure of a connected component of the set of $p$ for which $k=1$. This ensures that each boundary hypersurface $H\subset X$ admits a \emph{defining function}, i.e.\ a nonnegative function $\rho\in\CI(X)$ so that $H=\rho^{-1}(0)$ and $\dd\rho\neq 0$ everywhere on $X$. If $U\subset X$ is open, then a function $\rho\in\CI(U)$ is a \emph{local defining function} for $H$ if for all compact $K\subset U$ there exists a defining function $\rho'\in\CI(X)$ for $H$ so that $\rho'=\rho$ on $K$. The quotient of any two (local) defining functions of $H$ is a smooth positive function on $X$ (resp.\ on $U$). Furthermore, we write $\cM_1(X)$ for the set of all boundary hypersurfaces of $X$, and $\CIdot(X)$ for the space of smooth functions on $X$ which vanish to infinite order at $\pa X$. See \cite{MelroseDiffOnMwc} for a comprehensive treatment.

\medskip
\textbf{b- and scattering structures; radial compactification; densities.} On a manifold with corners $X$, we define the space (in fact, $\CI(X)$-module) of \emph{b-vector fields} $\Vb(X)\subset\cV(X)=\CI(X;T X)$ to consist of all smooth vector fields on $X$ which are tangent to all boundary hypersurfaces of $X$ \cite{MelroseMendozaB,MelroseAPS,GrieserBasics}. In local coordinates $x_1,\ldots,x_k\geq 0$, $y_1,\ldots,y_{n-k}\in\R$, such vector fields are linear combinations, with smooth coefficients, of the vector fields
\[
  x_1\pa_{x_1},\ \ldots,\ x_k\pa_{x_k}, \ \pa_{y_1},\ \ldots,\ \pa_{y_{n-k}}.
\]
These vector fields are a frame of the b-tangent bundle $\Tb X\to X$, which over $X^\circ$ is thus equal to the ordinary tangent bundle. Smooth positive sections $\mu_0$ of the associated density bundle $\Omegab X\to X$ are called \emph{b-densities}. Given a \emph{weight family} $w\colon\cM_1(X)\to\R$ and a collection $\rho=(\rho_H)_{H\in\cM_1(X)}$ of defining functions, we write $\rho^w:=\prod_{H\in\cM_1(X)} \rho_H^{w(H)}$, and then densities on $X^\circ$ of the form $\rho^w\mu_0$ for a b-density $\mu_0$ are called \emph{weighted b-densities} (with weight $w$).

When $X$ is a manifold with boundary and $\rho\in\CI(X)$ denotes a boundary defining function, then
\[
  \Vsc(X) := \rho\Vb(X) = \{ \rho V \colon V \in \Vb(X) \}
\]
is the space of \emph{scattering vector fields} \cite{MelroseEuclideanSpectralTheory,VasyMinicourse}. (It is independent of the choice of $\rho$.) In local coordinates $\rho\geq 0$, $y=(y_1,\ldots,y_{n-1})\in\R^{n-1}$ near a boundary point, they are linear combinations, with smooth coefficients, of the frame
\[
  \rho^2\pa_\rho,\ \rho\pa_{y_1},\ \ldots,\ \rho\pa_{y_{n-1}}
\]
of the \emph{scattering tangent bundle} $\Tsc X\to X$. The most important example is $X=\ol{\R^n}$, where we denote by $\ol{\R^n}$ the \emph{radial compactification}
\[
  \ol{\R^n} := \R^n \sqcup \Bigl( [0,\infty)_\rho \times \Sph^{n-1} \Bigr) / \sim,\qquad 0\neq x=r\omega \sim (\rho,\omega)=(r^{-1},\omega);
\]
here $r=|x|$ (Euclidean norm), $\omega=\frac{x}{|x|}$. In the region where the first component $x_1$ is relatively large, i.e.\ $x_1>c\max_{j\neq 1}|x_j|$ for some $c>0$, the functions
\[
  \rho_1 := \frac{1}{x_1},\qquad
  \hat x_j := \frac{x_j}{x_1}\ \ (j\neq 1),
\]
extend by continuity from $\R^n$ to smooth coordinates on an open subset of $\ol{\R^n}$; they are called \emph{projective coordinates}. Since $\pa_{x_1}=-\rho_1^2\pa_{\rho_1}-\sum_{j\neq 1}\hat x_j\pa_{\hat x_j}$ and $\pa_{x_j}=\rho\pa_{\hat x_j}$, we see that scattering vector fields are linear combinations, with $\CI(\ol{\R^n})$ coefficients, of the coordinate vector fields $\pa_{x_1},\ldots,\pa_{x_n}$. The Euclidean density $|\dd x_1\cdots\dd x_n|=r^{n-1}|\dd r\,\dd g_{\Sph^{n-1}}|=\rho^{-n}|\frac{\dd\rho}{\rho}\,\dd g_{\Sph^{n-1}}|$ is a weighted b-density with weight $-n$.

Since invertible linear maps on $\R^n$ lift to diffeomorphisms of $\ol{\R^n}$, the radial compactification of a finite-dimensional real vector space is well-defined. We can therefore define the fiber-radial compactification $\bar E\to X$ of a real rank $k$ vector bundle $E\to X$ to be the fiber bundle with fibers $\bar E_p=\ol{E_p}$, $p\in X$.

\medskip
\textbf{Blow-ups \cite{MelroseDiffOnMwc}.} A \emph{p-submanifold} $Y\subset X$ of a manifold with corners $X$ is an embedded submanifold so that for all $p\in Y$ there exist coordinates on $X$ for which $Y$ is given near $p$ as the zero set of a subset of these coordinates; we call such coordinates \emph{adapted}. (The `p' stands for `product'.) If $Y$ is contained in a boundary hypersurface, we call $Y$ a \emph{boundary p-submanifold}. When $Y$ is a boundary p-submanifold, the blow-up of $X$ along $Y$ is defined as
\[
  [X;Y] := (X\setminus Y) \sqcup S\,{}^+N Y,
\]
where ${}^+N Y={}^+T_Y X/T Y$ is the (non-strictly) inward pointing normal bundle (with ${}^+T_p X$, $p\in Y$, consisting of all $V\in T_p X$, which are not non-strictly inward pointing, i.e.\ $V\rho\geq 0$ for the defining functions $\rho$ of all boundary hypersurfaces of $X$ containing $p$), and the inward pointing spherical normal bundle $S\,{}^+N Y={}^+N Y/\R_+$ is its quotient by fiber-dilations. One can give $[X;Y]$ the structure of a smooth manifold with corners: in adapted coordinates $x_1,\ldots,x_k\geq 0$ and $y_1,\ldots,y_{n-k}\in\R$, in which $Y=\{x_1=\ldots=x_q=0,\ y_1=\ldots,y_p=0\}$, set $R=(\sum_{i=1}^q x_i^2+\sum_{j=1}^p y_j^2)^{1/2}$; then $R^{-1}(x_1,\ldots,x_q,y_1,\ldots,y_p)\in\Sph^{q+p-1}$ and $x_{q+1},\ldots,x_k$, $y_{p+1},\ldots,y_{n-k}$ are local coordinates on $[X;Y]$, with $S\,{}^+N Y$ being identified with $R=0$. One calls $S\,{}^+N Y$ the \emph{front face} of $[X;Y]$, while the closure of $X\setminus Y$ is called the \emph{lift} of $Y$. The \emph{blow-down map} $\upbeta\colon[X;Y]\to X$ is defined to be the identity on $X\setminus Y$ and the base projection on $S\,{}^+N Y$; in local coordinates, $\upbeta$ is thus the polar coordinate map. Given any subset $T\subset X$, the lift $\upbeta^*T\subset[X;Y]$ of $T$ is defined to be $\upbeta^{-1}(T)$ when $T\subset Y$, and the closure of $\upbeta^{-1}(T\setminus Y)$ otherwise.

It is often computationally advantageous to use projective coordinates: in the region where $x_1>c\max_{1\neq i\leq q} x_i$ and $x_1>c\max_{j\leq p}|y_j|$, the functions
\[
  x_1,\quad \hat x_i:=\frac{x_i}{x_1}\ (1\neq i\leq q),\ x_{q+1},\ \ldots,\ x_k,\quad \hat y_j:=\frac{y_j}{x_1}\ (j\leq p),\ y_{p+1},\ \ldots,\ y_{n-k}
\]
extend by continuity from $X\setminus Y$ to a smooth coordinate chart on $[X;Y]$. Taking $c<1$, this chart and analogous charts where $x_i$ ($i\leq q$) or $|y_j|$ ($j\leq p$) plays the role of $x_1$ cover a neighborhood of the front face.

\medskip
\textbf{3-body scattering structures.} When $X$ is a manifold with boundary, and $Y\subset\pa X$ is a boundary p-submanifold, then following \cite{VasyThreeBody} we define the 3-body scattering tangent bundle
\[
  \Ttsc[X;Y] := \upbeta^*(\Tsc X),\qquad \upbeta\colon [X;Y]\to X.
\]
The space of smooth sections of this bundle is $\Vtsc([X;Y])$: these can be thought of as scattering vector fields on $X$ but with coefficients that are singular in $Y$ in the precise fashion that they lift to be smooth on $[X;Y]$.

A non-degenerate section of $S^2\,\Ttsc^*[X;Y]$ is a 3-body scattering metric. In this paper, we encounter these in the case $X=\ol{\R_t\times\R^{n-1}_x}$, $Y=\pa(\ol{\R_t\times\{0\}})\subset\pa X$. Working in the subset $t>0$ of $\R^{1+n}_{t,x}$, local coordinates near the interior of the front face of $[X;Y]$ are $\frac{1}{t}$ and $\frac{x/t}{1/t}=x$. In the region $|x|>1$ on the other hand, local coordinates on $[X;Y]$ are $\rho_\cD:=\frac{1}{|x|}$, $\rho_\cT:=\frac{|x|}{t}$, and $\frac{x}{|x|}\in\Sph^{n-2}$. (In particular, each of the two components of the front face is diffeomorphic to $\ol{\R^n_x}$. Moreover, the projection map $(t,x)\mapsto x$ lifts to a diffeomorphism $[X;Y]\to\ol{\R^n}$. See also \cite[Lemma~3.4]{HintzGlueLocI}.) A particular example of a 3-body scattering metric is thus any \emph{stationary} metric which, when expressed in terms of $\dd t$, $\dd x_1$, $\ldots$, $\dd x_{n-1}$, has coefficients of class $\CI(\ol{\R^{n-1}_x})$.

\medskip
\textbf{Differential operators.} Given the Lie algebra of vector fields $\Vb(X)$, one can define the space $\Diffb^m(X)$ of \emph{$m$-th order b-differential operators} to consist of locally finite sums of up to $m$-fold compositions of elements of $\Vb(X)$ (for $m=0$: multiplication operators by elements of $\CI(X)$); the space $\Diffsc^m(X)$ is defined analogously relative to $\Vsc(X)$. Given a weight family $\alpha\colon\cM_1(X)\to\R$, we furthermore define the space $\Diffb^{m,\alpha}(X)=\rho^{-\alpha}\Diffb^m(X)=\{\rho^{-\alpha}P\colon P\in\Diffb^m(X)\}$. Since $\rho^\alpha[V,\rho^{-\alpha}]\in\CI(X)$ for $V\in\Vb(X)$, one has the composition rule $\Diffb^{m_1,\alpha_1}(X)\circ\Diffb^{m_2,\alpha_2}(X)\subset\Diffb^{m_1+m_2,\alpha_1+\alpha_2}(X)$. Analogous results hold for $\Diffsc^{m,\alpha}(X)=\rho^{-\alpha}\Diffsc^m(X)$.

\medskip
\textbf{Boundary spectrum.} Let $X$ be a manifold with boundary with boundary defining function $\rho$, and let $P\in\rho^{-\alpha}\Diffb^m(X)$. In a collar neighborhood $[0,\eps)_\rho\times\pa X$ of $\pa X$, we can write
\[
  \rho^\alpha P = \sum_{j=0}^m (\rho\pa_\rho)^j P_j(\rho),\qquad P_j\in\CI\bigl([0,\eps)_\rho;\Diff^{m-j}(\pa X)\bigr).
\]
The \emph{b-normal operator} of $P$ is then defined as
\[
  N(P) = \rho^{-\alpha}\sum_{j=0}^m (\rho\pa_\rho)^j P_j(0) \in \rho^{-\alpha}\Diff_{\bop,\rm I}^m([0,\infty)_\rho\times\pa X);
\]
this is homogeneous of degree $-\alpha$ under dilations in $\rho$, hence the subscript `$\rm I$' (for `invariant'). We further define its \emph{Mellin transformed normal operator family} to be the holomorphic operator family
\[
  \hat N(P,\xi) := \sum_{j=0}^m \xi^j P_j(0) \in \Diff^m(\pa X)
\]
We then set
\[
  \specb(P) := \{ \xi\in\C \colon \hat N(P,\xi)\colon \CI(\pa X)\to\CI(\pa X)\ \text{is not invertible} \}.
\]
When $X$ is compact and $P$ is an elliptic b-operator, then $\specb(P)\subset\C$ is discrete, and its intersection with every strip on which $\Re\xi$ is bounded is finite.

\medskip
\textbf{Conormality.} Let $X$ be a manifold with boundary, and let $\rho\in\CI(X)$ denote a boundary defining function. For $\alpha\in\R$, we define the space
\[
  \cA^\alpha(X)\subset\CI(X^\circ)
\]
of \emph{conormal functions (with weight $\alpha$)} to consist of all smooth functions $u$ on $X^\circ$ so that $\rho^{-\alpha}P u\in L^\infty_\loc(X)$ for all $P\in\Diffb(X)$.

Next, recall that an \emph{index set} is a subset $\cE\subset\C\times\N_0$ so that for all $C\in\R$ only finitely many $(z,k)\in\cE$ have $\Re z\leq C$, and $(z,k)\in\cE$ implies $(z+1,k)\in\cE$ and also $(z,k-1)\in\cE$ in case $k\geq 1$. The space $\cA_\phg^\cE(X)$ of \emph{$\cE$-smooth functions} (or \emph{polyhomogeneous functions (with index set $\cE$)}) consists of all $u\in\cA^\alpha(X)$ (where $\alpha<\min_{(z,k)\in\cE}\Re z$) so that there exist $u_{(z,k)}\in\CI(\pa X)$, $(z,k)\in\cE$, so that in a collar neighborhood $[0,1)_\rho\times\pa X$ of $\pa X$ and for a cutoff function $\chi\in\CIc([0,1)_\rho)$ which is identically $1$ near $0$ one has
\[
  u - \chi(\rho)\sum_{\genfrac{}{}{0pt}{}{(z,k)\in\cE}{\Re z\leq C}} \rho^z(\log\rho)^k u_{(z,k)} \in \cA^C(X)
\]
for all $C$. We set
\[
  (z,k) := \{ (z+j,l) \colon j\in\N_0,\ 0\leq l\leq k \},
\]
and shall moreover write $(z,*)$ for an (unspecified) index set contained in $(z+\N_0)\times\N_0$. In particular, $\cA_\phg^{(\alpha,0)}(X)=\rho^\alpha\CI(X)$ and $\cA_\phg^{(\alpha,k)}(X)=\sum_{j=0}^k \rho^\alpha(\log\rho)^j\CI(X)$.

Suppose next that $X$ is a manifold with corners. Then for weight families $\alpha\colon\cM_1(X)\to\R$, the space $\cA^\alpha(X)$ is defined analogously to before in the case that $X$ is a manifold with boundary. Given a collection $\cH\subset\cM_1(X)$ of boundary hypersurfaces, write $\tilde\rho=\prod_{H\in\cH}\rho_H$ for the product of their defining functions; then for $\delta\geq 0$ we set
\[
  \cA_{\cH,\delta}^\alpha(X) = \{ u\in\CI(X^\circ) \colon P u \in \rho^{-\alpha}\tilde\rho^{-\delta m}L^\infty_\loc(X)\ \forall\,P\in\Diffb^m(X),\ m\in\N_0 \}.
\]
Let next $\cE=(\cE_H)_{H\in\cM_1(X)}$ where $\cE_H\subset\C\times\N_0$ is an index set for all $H\in\cM_1(X)$, and let $\alpha_H<\min_{(z,k)\in\cE_H}\Re z$; then $\cA_\phg^\cE(X)$ consists of all $u\in\cA^\alpha(X)$ (where $\alpha=(\alpha_H)_{H\in\cM_1(X)}$) so that, in a collar neighborhood $[0,1)_{\rho_H}\times H$ of $H\in\cM_1(X)$,
\begin{equation}
\label{EqBPhgExp}
  u - \sum_{\genfrac{}{}{0pt}{}{(z,k)\in\cE_H}{\Re z\leq C}} \rho_H^z(\log\rho_H)^k u_{H,(z,k)} \in \cA^{\alpha'(C)}(X)
\end{equation}
for all $C$ where $\alpha'(C)_H=C$ and $\alpha'(C)_{H'}=\alpha_{H'}$ for $H'\neq H$; here $u_{H,(z,k)}\in\cA^{\alpha^H}(H)$ where $(\alpha^H)_{H'}=\alpha_{H'}$ for all $H'\in\cM_1(X)$ with $H'\neq H$, $H'\cap H\neq\emptyset$. See again \cite{MelroseDiffOnMwc}, and also \cite[\S{2A}]{MazzeoEdge}, \cite{MelrosePushfwd}.

\medskip
\textbf{Sobolev spaces.} Associated with $\Vb(X)$ are \emph{weighted b-Sobolev spaces}. Suppose first $X$ is compact. Fixing a weighted b-density to define $L^2(X)$, one defines
\[
  \Hb^m(X) = \{ u\in L^2(X) \colon P u\in L^2(X)\ \forall\,P\in\Diffb^m(X) \},\qquad m\in\N_0,
\]
and then $\Hb^{m,\alpha}(X)=\rho^\alpha\Hb^m(X)=\{\rho^\alpha u\colon u\in\Hb^m(X)\}$. This can be given the structure of a Hilbert space, with squared norm $\|u\|_{\Hb^{m,\alpha}(X)}^2=\sum_j\|P_j u\|_{L^2(X)}^2$ where $\{P_j\}\subset\Diffb^{m,\alpha}(X)$ is a finite subset spanning $\Diffb^{m,\alpha}(X)$ over $\CI(X)$. For arbitrary $m\in\R$, these spaces can be defined using duality and interpolation. The spaces $\Hsc^{m,\alpha}(X)$ are defined analogously. When $E$ is a vector bundle, one can define $\Hb^{m,\alpha}(X;E)$ to consist of $E$-valued distributions which, upon multiplication with a smooth cutoff on whose support $E$ is trivialized, are $(\rank E)$-tuples of elements of $\Hb^{m,\alpha}(X)$.

For general $X$ and a choice of weighted b-density, we can define the spaces $H_{\bop,\loc}^m(X)$ (distributions $u$ so that $\chi u\in\Hb^m$ when $\chi\in\CIc(X)$ is a cutoff to a coordinate chart) and $H_{\bop,\cp}^m(X)\subset H_{\bop,\loc}^m(X)$ (compactly supported elements). Suppose $U\subset X$ is a precompact open subset; then the space
\begin{subequations}
\begin{equation}
\label{EqBHsupp}
  \dot H_\bop^m(\ol U) := \{ u \in H_{\bop,\loc}^m(X) \colon \supp u\subset\ol U \}
\end{equation}
can be given the structure of a Hilbert space; indeed, as the squared norm of $u$ one can take the sum of squared $L^2(U)$-norms of up to $m$ derivatives of $u$ along a fixed finite spanning set of smooth b-vector fields near $\ol{U}$. The space
\begin{equation}
\label{EqBHext}
  \bar H_\bop^m(U) := \{ \tilde u|_U \colon \tilde u\in H_{\bop,\loc}^m(X) \}
\end{equation}
is given the quotient topology via $\bar H_\bop^m(U)=H_{\bop,\loc}^m(X)/\dot H_\bop^m(X\setminus U)$. When $X$ is compact, this endows $\bar H_\bop^m(U)$ with the structure of a Hilbert space. (Only the compactness of $\ol{U}$ is required for this, as one can then replace $X$ in this discussion by a \emph{compact} manifold with corners containing $\ol{U}$ in its topological interior.) Elements of $\dot H_\bop^m(\ol U)$, resp.\ $\bar H_\bop^m(U)$ are called \emph{supported}, resp.\ \emph{extendible distributions}. We also encounter mixed versions of such spaces: suppose $f,g\in\CI(X)$ are such that $\dd f$ and $\dd g$ are linearly independent on $\{f=g=0\}$. Set $\Omega_{f_0,g_0}:=\{f\geq f_0,\ g\geq g_0\}$, and suppose $\Omega_{-1,-1}$ is compact. Then
\begin{equation}
\label{EqBHsuppext}
  H_\bop^m(\Omega_{0,0})^{\bullet,-} := \{ \tilde u|_{\Omega_{-1,0}} \colon \tilde u\in\dot H_\bop^m(\Omega_{-1,-1}),\ \supp\tilde u\subset\Omega_{0,-1} \}
\end{equation}
\end{subequations}
is a space of distributions with supported character at $\Omega_{0,0}\cap\{f=0\}$ and extendible character at $\Omega_{0,0}\cap\{g=0\}$. The dual spaces with respect to $L^2(X)$ are
\[
  (\dot H_\bop^m(\ol U))^* = \bar H_\bop^{-m}(U),\qquad
  (\Hb^m(\Omega_{0,0})^{\bullet,-})^* = \Hb^{-m}(\Omega_{0,0})^{-,\bullet}.
\]
Weighted and vector bundle valued versions of these spaces are defined analogously. 

\medskip
\textbf{Symbols.} Given a vector bundle $E\to X$, we shall write
\begin{equation}
\label{EqBPm}
  P^m(E),\ \text{resp.}\ P^{[m]}(E) \subset \CI(E)
\end{equation}
for the subspace of smooth functions which are, on each fiber $E|_x$, polynomials, resp.\ homogeneous polynomials, of degree $m$. The space of symbols of order $m$ is the space
\[
  S^m(E) := \cA^{-m}(\bar E)
\]
where $\bar E$ is the fiber-radial compactification, and the weight refers to fiber infinity $S E\subset\bar E$. In a local trivialization $U\times\R^k$ of $E$ over $\R^n\cong U\subset X$, membership of $a=a(x,e)$ in $S^m(E)$ is equivalent to the usual condition $|\pa_x^\alpha\pa_e^\beta a(x,e)|\leq C_{K\alpha\beta}\la e\ra^{m-|\beta|}$ for $x$ in a compact subset $K\subset\R^n$ and all $e\in\R^k$.

For $\delta\geq 0$, the space
\[
  S^m_{1-\delta,\delta}(E):=\cA_{S E,\delta}^{-m}(E)
\]
consists precisely of H\"ormander type $(\rho,\delta)$ (with $\rho=1-\delta$) symbols on $E$. They arise in the present paper for variable order symbols. For a bounded function $\sfm\in\CI(S E)$, called a \emph{variable order function}, denote by $\tilde\sfm\in\CI(\bar E)$ any bounded extension of $\sfm$. Let $\rho\in\CI(\bar E)$ be a defining function of $S E$. Then we set
\[
  S^\sfm(E) := \rho^{-\tilde\sfm}\bigcap_{\delta>0} S^0_{1-\delta,\delta}(E).
\]
Allowing for $\delta>0$ here implies that this definition is independent of the choice of extension $\tilde\sfm$ and boundary defining function $\rho$ (in essence since logarithmic factors arising from differentiation of the weight are bounded by $\rho^{-\delta}$ for all $\delta>0$).

\medskip
\textbf{Bounded geometry.} We recall the basic notions from Shubin \cite{ShubinBounded}, and refer the reader to \cite{AmmannLauterNistorLie,AmmannLauterNistorLieGeometry} for general results relating certain Lie algebras of vector fields to manifolds of bounded geometry and pseudodifferential calculi. Let $X$ be a smooth manifold without boundary. A smooth Riemannian metric $g$ on $X$ is called a \emph{metric of bounded geometry} if the Riemann curvature tensor is uniformly bounded together with all of its covariant derivatives, and the injectivity radius of $(X,g)$ is positive. Equivalently, one can cover $X$ by coordinate charts $\phi_i\colon U_i\to B_2\subset\R^n$ (where $B_R$ is the ball of radius $R$) with the following properties: all transition functions $\phi_i\circ\phi_j^{-1}$, as maps between open subsets of $\R^n$, are uniformly bounded together with all derivatives; there exists $J<\infty$ so that the intersection of more than $J$ pairwise distinct elements of $\{U_i\}$ is empty; and the unions of all $\phi_i^{-1}(B_1)$ still cover $X$. A smooth manifold equipped with such a covering is called a \emph{manifold of bounded geometry}.

A \emph{vector bundle of bounded geometry} is a smooth vector bundle $E\to X$ which admits trivializations over the $U_i$ whose transition functions are uniformly bounded; similarly for fiber bundles of bounded geometry. Important examples are $E=T X,T^*X$. One can then define uniform symbol spaces $S_{\rm uni}^s(E)$ (and also variable order versions $S_{\rm uni}^\sfs(E)$ for $\sfs\in\CI_{\rm uni}(S E)$, which we shall not discuss explicitly here).

Given $a\in S_{\rm uni}^s(T^*X)$, we can define its quantization $\Op_{\rm uni}(a)$ as follows: pick $\chi,\tilde\chi\in\CIc(B_2)$ so that $\chi=1$ on $\ol{B_1}$ and $\tilde\chi=1$ on $\supp\chi$, and set $\Xi=\sum_i\phi_i^*\chi\in\CI_{\rm uni}(X)$ and $\chi_i:=(\phi_i^*\chi)/\Xi$, $\tilde\chi_i=\phi_i^*\tilde\chi$. Then
\begin{equation}
\label{EqBOpUni}
  \Op_{\rm uni}(a):=\sum_i \phi_i^* \Op\bigl( \tilde\chi_i (\phi_i^{-1})^*(\chi_i a) \bigr) (\phi_i)_*,
\end{equation}
where $(\Op(b)u)(x)=(2\pi)^{-n}\int e^{i(x-y)\cdot\xi}b(x,\xi)u(y)\,\dd\xi\,\dd y$ is the standard (left) quantization on $\R^n$. One then defines the space of \emph{uniform pseudodifferential operators} as
\[
  \Psi_{\rm uni}^s(X):=\Op_{\rm uni}(S^s_{\rm uni}(T^*X))+\Psi_{\rm uni}^{-\infty}(X)
\]
where $\Psi_{\rm uni}^{-\infty}(X)$ consists of all operators whose Schwartz kernels lie in $\CI_{\rm uni}(X\times X)$ and have support in a bounded neighborhood of the diagonal. The principal symbol of $A=\Op_{\rm uni}(a)+R$, where $a\in S_{\rm uni}^s(T^*X)$ and $R\in\Psi_{\rm uni}^{-\infty}(X)$, is defined as the equivalence class
\[
  \upsigma_{\rm uni}^s(A) := [a] \in S_{\rm uni}^s(T^*X)/S_{\rm uni}^{s-1}(T^*X).
\]
One can also consider quantizations of H\"ormander type symbols $a\in S_{1-\delta,\delta,\rm uni}^s(T^*X)$ when $\delta\in(0,\frac12)$; the principal symbol is then well-defined modulo $S_{\rm uni}^{s-1+2\delta}(T^*X)$. Another variant concerns operators $A=\Op_{\rm uni}(a)$ with $a\in S_{\rm uni}^\sfs(T^*X)$ having variable order $\sfs\in\CI_{\rm uni}(S^*X)$; in this case the principal symbol lies in $S^\sfs_{\rm uni}(T^*X)/\bigcap_{\delta\in(0,\frac12)}S^{\sfs-1+2\delta}_{\rm uni}(T^*X)$.

\emph{Uniform Sobolev spaces} are defined to consist of all distributions with finite squared norm
\[
  \|u\|_{H_{\rm uni}^s(X)}^2 := \sum_i \|(\phi_i^{-1})^*(\chi_i u)\|_{H^s(\R^n)}^2.
\]
One can also consider weighted uniform Sobolev spaces, where the admissible weights are positive functions $w\in\CI(X)$ so that with $w_i:=(\phi_i^{-1})^*w\in\CI(B_2)$ the ratio $\sup w_i/\inf w_i$ is uniformly bounded, and $w_i/\inf w_i$ is uniformly bounded in $\CI(B_2)$. Then
\[
  \|u\|_{H_{\rm uni}^{s,w}(X)}^2 := \sum_i \|w_i^{-1}(\phi_i^{-1})^*(\chi_i u)\|_{H^s(\R^n)}^2.
\]
If one replaces $w_i$ here by $\inf w_i$ or $\sup w_i$, one obtains an equivalent norm. Furthermore, one can define uniform Sobolev spaces with variable differential order $\sfs\in\CI_{\rm uni}(S^*X)$ in the same fashion, with the local Sobolev norms now being the variable order norms $H^{\sfs_i}(\R^n)$ where $\sfs_i$ interpolates between $(\phi_i^{-1})^*\sfs$ over $B_2$ and a fixed constant far away, in such a way that the $\sfs_i$ are uniformly bounded in $\CI(S^*\R^n)$; see \cite{VasyMinicourse} for a detailed treatment of variable order spaces.

Uniform ps.d.o.s act boundedly between (weighted) uniform Sobolev spaces. One can also define classes $\Psi_{\rm uni}^{s,w}(X)$ of weighted uniform ps.d.o.s via simple modifications of the above definitions; in particular, $\Psi_{\rm uni}^{-\infty,w}(X)$ consists of operators with Schwartz kernels supported in a bounded neighborhood of the diagonal and of class $(\pi^*w)^{-1}\CI_{\rm uni}(X\times X)$ where $\pi\colon X\times X\to X$ is the projection to the left (or right) factor.

\section{Perturbations of the trapped set; proof of Proposition~\usref{PropTrap}}
\label{STrap}

Let $\hat Y$ be a smooth manifold, and let $\eps_0>0$. On $Y:=[0,\eps_0]_\eps\times I_t\times\hat Y$, we write
\begin{align*}
  \cV_\seop(Y) &:= \{ V=(V_\eps)_{\eps\in(0,\eps_0]} \colon V_\eps=a_\eps\eps\pa_t+W_\eps, \\
    &\qquad \text{with}\ a_\eps\in\CI(I\times Y),\ W_\eps\in\CI(I\times Y;T(I\times Y))\ \text{uniformly bounded} \}.
\end{align*}
We furthermore define $\cC_\seop^k(Y)$ to consist of all functions on $(0,\eps_0]\times I\times\hat Y$ which remain uniformly bounded (in $\eps$) over every fixed compact subset of $I\times\hat Y$ upon application of any finite number of elements of $\cV_\seop$. We shall write $\eps^\alpha\cC_\seop^k\Vse(Y)$ for the space of locally finite linear combinations of products $\eps^\alpha f V$ where $f\in\cC_\seop^k(Y)$ and $V\in\Vse(Y)$.

We shall deduce Proposition~\ref{PropTrap} from the following general result.

\begin{thm}[Extensions of stable and unstable manifolds: general result]
\label{ThmTrap}
  Let $\hat\cX$ be a smooth manifold, and suppose we are given the following data on $\hat\cX$:
  \begin{enumerate}
  \myitem{ItTraphatVVF}{$\hat V$.1} a smooth vector field $\hat V\in\cV(\hat\cX)=\CI(\hat\cX;T\hat\cX)$;
  \myitem{ItTraphatVNhyp}{$\hat V$.2} a compact $2$-codimensional subset $\hat\Gamma\subset\hat\cX$ which is invariant under the $\hat V$-flow, and which is $r$-normally hyperbolic for all $r\in\N$ (see below) with $1$-codimensional orientable local $\CI$ unstable and stable manifolds $\hat\Gamma^{\rm u}$ and $\hat\Gamma^{\rm s}\subset\hat\cX$, respectively.
  \end{enumerate}
  Shrink $\hat\cX$ so that it is given by a neighborhood
  \[
    \hat\cX = \hat\Gamma \times (-2\delta_0,2\delta_0)_u \times (-2\delta_0,2\delta_0)_s
  \]
  of $\hat\Gamma$ in such a way that $\hat\Gamma^{\rm u}=\{s=0\}$ and $\hat\Gamma^{\rm s}=\{u=0\}$.\footnote{Therefore, $u$ is a coordinate along $\hat\Gamma^{\rm u}$, while $s$ is a defining function of $\hat\Gamma^{\rm u}$.} Let $-\infty<t_0<t_1<\infty$, $I=(t_0-\delta,t_1+\delta)$ for some $\delta>0$, and define the space
  \[
    \cX := [0,\eps_0]_\eps \times I_t \times \hat\cX.
  \]
  Let $k\in\N\cup\{\infty\}$. Suppose we are given $V\in\cC_\seop^0\Vse(\cX)$ so that
  \begin{enumerate}
  \myitem{ItTrapV}{$V$} for some $\alpha>0$, we have $V-(\hat V+\eps\pa_t)\in\eps^\alpha\cC_\seop^k\Vse(\cX)$.
  \end{enumerate}
  Then, upon shrinking $\eps_0>0$ if necessary, there exist functions $\varphi^{\rm u},\varphi^{\rm s}$ on $\cX$ with
  \begin{equation}
  \label{EqTrapExist}
    \varphi^{\rm u}-s,\ \varphi^{\rm s}-u\in \eps^\alpha\cC^k_\seop(\cX),\qquad \pa_s(\varphi^{\rm u}-s)=0=\pa_u(\varphi^{\rm s}-u),
  \end{equation}
  so that in a neighborhood of $[0,\eps_0]_\eps\times[t_0,t_1]\times\hat\Gamma$ in $\cX$, the vector field $V$ is tangent to $\Gamma^{\rm u}:=(\varphi^{\rm u})^{-1}(0)$ and $\Gamma^{\rm s}:=(\varphi^{\rm s})^{-1}(0)$.
\end{thm}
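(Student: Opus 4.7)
By replacing $V$ with $-V$ (which swaps the roles of the stable and unstable manifolds) and exchanging the coordinates $u$ and $s$ correspondingly, it suffices to construct $\varphi^{\rm u}$; the claim for $\varphi^{\rm s}$ then follows. The constraint $\pa_s(\varphi^{\rm u}-s)=0$ prompts me to look for $\varphi^{\rm u}$ in the graph form
\[
  \varphi^{\rm u} = s - \psi^{\rm u}(\eps, t, \hat\gamma, u),
\]
with $\psi^{\rm u}$ defined on a neighborhood of $[0,\eps_0]\times[t_0,t_1]\times\hat\Gamma\times\{0\}$ inside $[0,\eps_0]\times I\times\hat\Gamma\times(-\delta_0,\delta_0)_u$. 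The required smallness $\varphi^{\rm u}-s\in\eps^\alpha\cC_\seop^k$ then reduces to $\psi^{\rm u}\in\eps^\alpha\cC_\seop^k$, and $V$-tangency to $\Gamma^{\rm u}=\{s=\psi^{\rm u}\}$ is equivalent to the transport equation
\[
  V^s\big|_{s=\psi^{\rm u}} \;=\; \bigl(V^t\pa_t + V^{\hat\gamma}\pa_{\hat\gamma} + V^u\pa_u\bigr)\psi^{\rm u}\big|_{s=\psi^{\rm u}},
\]
where $V^\bullet$ denotes the $\bullet$-component of $V$. At $\psi^{\rm u}=0$ and $\eps=0$ this equation is trivially satisfied, since $\hat V$ is tangent to $\hat\Gamma^{\rm u}=\{s=0\}$.

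To produce $\psi^{\rm u}$, I first extend $V$ to a vector field on $[0,\eps_0]\times\R_t\times\hat\cX$ that agrees with $\hat V+\eps\pa_t$ outside $[t_0-\delta/2, t_1+\delta/2]$; this extension is possible because the buffer $I\setminus[t_0,t_1]$ is nonempty, and it preserves the $\eps^\alpha\cC_\seop^k\Vse$-bound on the perturbation. I then invoke a parameterized invariant manifold theorem in the spirit of Hirsch--Pugh--Shub~\cite{HirschPughShubInvariantManifolds}, following the adaptations in~\cite{HintzPolyTrap}: treating $\eps\in[0,\eps_0]$ as a parameter, the dynamics at $\eps=0$ is the time-independent $\hat V$-flow, for which $\hat\Gamma^{\rm u}$ is the unstable manifold of $\hat\Gamma$. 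For $\eps>0$, the perturbation $V-(\hat V+\eps\pa_t)$ is of size $\eps^\alpha$ in $\cC_\seop^k$; the $r$-normal hyperbolicity of $\hat\Gamma$ for all $r$ then produces a nearby invariant manifold $\Gamma^{\rm u}$ which is a $\cC^k$-graph $\{s=\psi^{\rm u}\}$ with $\psi^{\rm u}$ of size $O(\eps^\alpha)$.

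The concrete mechanism is a graph transform contraction: given a candidate $\psi$, I flow its graph backward under $V$ by a fixed time and reparametrize the image as a graph over $(\eps,t,\hat\gamma,u)$. In the backward direction the $s$-direction becomes expanding and the $u$-direction contracting, so the transform is a contraction on a ball in the $\eps^\alpha\cC_\seop^k$-topology, with rate strictly less than $1$ uniformly in $\eps$. The normal hyperbolicity rates dominate the $\eps^\alpha$-perturbation for all sufficiently small $\eps_0$, yielding a unique fixed point $\psi^{\rm u}\in\eps^\alpha\cC_\seop^k$ and hence $\varphi^{\rm u}=s-\psi^{\rm u}$; the constraint $\pa_s(\varphi^{\rm u}-s)=0$ is automatic from the ansatz.

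The principal obstacle is tracking the $\cC_\seop^k$-regularity through the long $V$-flow times required for the contraction argument. Because $V$ advances $t$ at rate $\eps$, traversing $[t_0,t_1]$ under the $V$-flow takes time $O(\eps^{-1})$; over such a window the stable direction contracts by a factor $e^{-c/\eps}$, which dominates any potential $\eps^\alpha$-type error accumulation, so any $\alpha>0$ suffices. Concretely, differentiating the fixed-point equation along elements of $\cV_\seop$ and bootstrapping converts the $\eps^\alpha\cC_\seop^k$-bound on $V-(\hat V+\eps\pa_t)$ into the corresponding bound on $\psi^{\rm u}$, losing only a bounded number of se-derivatives (controlled by a multiple of $k$)---which is accommodated by the hypothesis. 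The $r$-normal hyperbolicity for all $r\in\N$ ensures no further finite regularity loss in the HPS machinery. The construction of $\varphi^{\rm s}$ is then obtained by applying the above to $-V$ with $u$ and $s$ interchanged.
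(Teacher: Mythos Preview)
Your overall strategy---write $\Gamma^{\rm u}$ as a graph $\{s=\psi^{\rm u}(\eps,t,\hat\gamma,u)\}$, modify $V$ to be unperturbed outside a compact $t$-window, and run a graph-transform contraction \`a la Hirsch--Pugh--Shub---is the paper's strategy. But the implementation contains a genuine error. You flow the candidate graph \emph{backward} and claim that since ``the $s$-direction becomes expanding'' the transform is a contraction. That reasoning is inverted: if the flow you apply expands $s$, the $s$-values of the image graph get \emph{larger}, so the transform expands. The correct setup (and the paper's) uses the \emph{forward} time-$T_0$ map $f=e^{T_0 V}$: forward flow contracts $s$ by $\|\hat N^{\rm s}\hat f\|<1$ and expands $u$, so $f_\sharp\psi$ (the $s$-graph of $f(\{s=\psi\})$) is a contraction in sup norm with rate $\|\hat N^{\rm s}\hat f\|+o(1)$, uniformly in $\eps$. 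Forward flow also expands the $u$-range, which is what makes the reparametrization over the original domain well-defined; backward flow shrinks it.

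The second gap is the function space on which the contraction runs. The paper modifies $V$ only on the \emph{past} side (so $V=\hat V+\eps\pa_t$ for $t<0$ after renormalizing), extends sections by $0$ there, and works in
\[
  \Sigma(\delta)=\bigl\{\sigma:\ |\breve\sigma(\eps,t,\hat x^{\rm u})|\le C_\Sigma\eps^\alpha H(t),\ L_{\eps,(t,\hat x^{\rm u})}(\breve\sigma)\le C_\Sigma\eps^\alpha H(t)\bigr\},
\]
with $H$ the Heaviside function and Lipschitz constant measured via $d_\eps=|t-t'|/\eps+d(\hat x^{\rm u},\hat y^{\rm u})$. Since $f$ advances $t$ by $\eps$ and the perturbation vanishes for $t<0$, one single-step computation shows $f_\sharp\colon\Sigma(\delta)\to\Sigma(\delta)$ once $C_\Sigma\|\hat N^{\rm s}\hat f\|+C_{\tilde f}<C_\Sigma$; no iterate-counting is needed, and your $e^{-c/\eps}$ heuristic is not the mechanism (the point is the \emph{per-step} rate is uniformly $<1$, and the $\eps^\alpha$ errors sum geometrically to $O(\eps^\alpha)$). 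Finally, your claim of ``losing only a bounded number of se-derivatives'' is incorrect---the theorem states $k$ in, $k$ out; the $r$-normal hyperbolicity for all $r$ is exactly what prevents loss, and the paper obtains $\cC^k_\seop$ by running the graph transform on bundles of Lipschitz jets and then continuous linear-map sections as in \cite{HirschPughShubInvariantManifolds,HintzPolyTrap}, not by differentiating the fixed-point equation.
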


The second condition in~\eqref{EqTrapExist} means that $\Gamma^\bullet$ is a graph over $[0,\eps_0]_\eps\times I_t\times\hat\Gamma^\bullet$ of a function of class $\eps^\alpha\cC_\seop^k$, for $\bullet={\rm u,s}$.

\begin{rmk}[Generalizations]
\label{RmkTrapGeneralize}
  The case $\codim\hat\Gamma=2$ is the case of interest in this paper. The case of higher codimensions can be treated with the same methods; we leave this to the interested reader. Furthermore, the existence of the (un)stable manifolds $\hat\Gamma^{\rm s/u}$ follows from the $r$-normal hyperbolicity for every $r$, as shown in \cite[Theorem~(4.1)]{HirschPughShubInvariantManifolds}. We make the orientability assumption for convenience here; the result remains true (with minor modifications to the proof) without it.
\end{rmk}

\begin{rmk}[Fast time]
\label{RmkTrapFast}
  If one introduces the fast time $\hat t=\frac{t-t_0}{\eps}$, then the $\eps$-level set of $\cX$ is $(-\eps^{-1},(t_1-t_0+2)\eps^{-1})\times\hat\cX\subset\R_{\hat t}\times\hat\cX=:\hat\cM$, and condition~\eqref{ItTrapV} means that $V$ equals $\pa_{\hat t}+\hat V$ up to $\cO(\eps^\alpha)$ correction terms (with regularity in $\eps^{-1}\pa_{\hat t}$ and $\hat\cX$). The unperturbed vector field $\pa_{\hat t}+\hat V$ is tangent to $\R_{\hat t}\times\hat\Gamma^{\rm u}=\{s=0\}$. The conclusion of Theorem~\ref{ThmTrap} is thus that one can construct $\Gamma^{\rm u}$ close to $\R_{\hat t}\times\hat\Gamma^{\rm u}$ also for perturbations $V$ of this vector field \emph{on long time scales $\sim\eps^{-1}$}.
\end{rmk}

\begin{rmk}[Higher regularity]
\label{RmkTrapReg}
  The regularity~\eqref{EqTrapExist} means the uniform boundedness (in $\eps$) of the supremum norm up to $k$ derivatives of $\eps^{-\alpha}(\varphi^{\rm u}-s)$ along a fixed spanning set of `se-vector fields' $\eps\pa_t$, $\cV(\hat\cX)$. If instead we require `s-regularity' of $V-(\hat V+\eps\pa_t)$, i.e.\ regularity of the coefficients under application of `s-vector fields' $\pa_t$, $\cV(\hat\cX)$, then also $\varphi^{\rm u}$ and $\varphi^{\rm s}$ have the matching amount of s-regularity; in the proof, this requires modifying the Lipschitz constant~\eqref{EqTrapLipschitz} to use the temporal distance $|t-t'|$ instead of $\frac{|t-t'|}{\eps}$.
\end{rmk}

We recall the notion of  $r$-normal hyperbolicity \cite{HirschPughShubInvariantManifolds}: there exists a constant $T_0>0$ so that for the time $T_0$ flow
\begin{equation}
\label{EqTraphatf}
  \hat f=e^{T_0\hat V}
\end{equation}
of $\hat V$, the following holds. Set $\hat N^\bullet:=T_{\hat\Gamma}\hat\Gamma^\bullet/T\hat\Gamma$ for $\bullet={\rm u,s}$, and write
\[
  \hat\Gamma_p\hat f:=D_p\hat f|_{T_p\hat\Gamma},\qquad
  \hat N_p^\bullet\hat f := D_p\hat f|_{\hat N_p^\bullet}\quad(\bullet={\rm u,s}).
\]
For a linear map $A$ between normed vector spaces, we set $m(A)=\inf_{\|x\|=1}\|A x\|$ (so $m(A)=\|A^{-1}\|^{-1}$ when $A$ is invertible). Then there exist positive definite fiber inner products on $T\hat\Gamma$, $\hat N^{\rm u}$, and $\hat N^{\rm s}$ so that for all $p\in\hat\Gamma$ and $0\leq k\leq r$,
\begin{equation}
\label{EqTrapNhyp}
  m(\hat N_p^{\rm u}\hat f) > \|\hat\Gamma_p\hat f\|^k, \qquad
  \|\hat N_p^{\rm s}\hat f\| < m(\hat\Gamma_p\hat f)^k;
\end{equation}
(The conditions~\eqref{EqTrapNhyp} amount to the \emph{immediate relative $r$-normal hyperbolicity} of $e^{T\hat V}$; they follow from the \emph{eventual relative $r$-normal hyperbolicity} of the time $1$ flow $e^{\hat V}$ by taking $T_0$ sufficiently large. The latter is the notion used in the subextremal Kerr context in \cite[Assumption~(8)]{DyatlovWaveAsymptotics} and \cite[(1.6)]{WunschZworskiNormHypResolvent}.)

\begin{proof}[Proof of Proposition~\usref{PropTrap}, given Theorem~\usref{ThmTrap}]
  Let $\hat\cU\subset\hat X_b^\circ$ be a precompact open set containing the projection to the base of $\hat\Gamma$ (which is compact). Set $\hat\cX=S^*_{\hat\cU}\hat M_b^\circ\cap\pa\hat\Sigma_b^+$. We fix a fiber bundle isomorphism $S^*_{\hat\cU}\hat M_b^\circ=\hat\cU\times\Sph^3$ via polar coordinates in the fibers corresponding to the fiber-linear coordinates $-\sigma,\xi$ induced by the coordinates $\hat t,\hat x$; and then $\hat\cX\cong\hat\cU\times\Sph^2\subset\hat\cU\times\Sph^3$. Working locally near $\pa\hat\Gamma_b^+$, we set $\rho_\infty=\sigma^{-1}$ and set $\hat V:=(\rho_\infty H_{\hat G_b}\hat t)^{-1}\rho_\infty H_{\hat G_b}-\pa_{\hat t}$. While a priori this is a vector field on $S^*\hat M_b^\circ$, it annihilates $\hat t$ and thus restricts to a vector field on $S^*_{\hat X_b^\circ}\hat M_b^\circ$ which is moreover tangent to $\pa\hat\Sigma_b^+$; therefore, $\hat V\in\cV(\hat\cX)$.

  Set $\wt\cU=[0,\eps_0]_\eps\times I_t\times\cU$. The coordinates $-\sigma,\xi$ also induce an identification $\Sse^*_{\wt\cU}\wt M\cong [0,\eps_0]_\eps\times I\times(\hat\cU\times\Sph^3)$. Since on this set $\eps^2\wt G_\eps$ is a fiber-wise nondegenerate quadratic function with $\CI+\eps^\alpha\cC_\seop^k$ regularity in the base, we can write $\Sse^*_{\wt\cU}\wt M\cap\pa\wt\Sigma^+$ as the normal graph over $\wt\cX=[0,\eps_0]_\eps\times I_t\times\hat\cX$ of a function $\wt\Phi$ of class $\eps^\alpha\cC^k_\seop([0,\eps_0]\times I_t\times\hat\cU)$. To prove this, pick local coordinates $z\in\R^5$, $w\in\R$ on $S^*_{\hat\cU}\hat M_b^\circ$ near $\pa\hat\Sigma_b^+$ so that $\hat G_b=w$. Then $\wt\cX\subset[0,\eps_0]_\eps\times I_t\times S^*_{\hat\cU}\hat M_b^\circ$ is given by $w=0$ in the coordinates $\eps,t,z,w$. Write $\rho_\infty^2\eps^2\wt G=\rho_\infty^2\hat G_b+\eps^\alpha\wt H_0$ where $\wt H_0\in\cC_\seop^k$ (i.e.\ uniform $L^\infty$ bounds for up to $k$ differentiations along $\eps\pa_t$, $\pa_z$, $\pa_w$). The function $\wt\Phi=\eps^\alpha\wt\Phi_0(\eps,t,z)$ then must satisfy $\wt\Phi_0+\wt H_0(\eps,t,z,\eps^\alpha\wt\Phi_0)=0$ which one can solve (parametrically in $\eps,t,z$) using the contraction mapping principle (using that $k\geq 1$). The $\cC_\seop^k$-regularity of $\wt\Phi_0$ for $k\geq 2$ is then inherited from that of $\wt H_0$, as follows by direct differentiation of this equation. We now define $V:=\wt\Phi^*(\rho_\infty H_{\eps^2\wt G})$, which is of regularity $\CI+\eps^\alpha\cC_\seop^{k-1}$. (See \cite[Lemma~4.6]{HintzPolyTrap} for a similar construction.)

  The $r$-normal hyperbolicity of $\hat V$ for every $r$ was proved in \cite[Propositions~3.6 and 3.7]{DyatlovWaveAsymptotics}. Applying Theorem~\ref{ThmTrap} produces functions which we now denote $\wt\varphi^{\rm u},\wt\varphi^{\rm s}$ and which satisfy $\wt\varphi^{\rm u}-s,\wt\varphi^{\rm s}-u\in\eps^\alpha\cC^k_\seop([0,\eps_0]_\eps\times I_t\times\hat\cX)$. We push these forward along $\wt\Phi$ to get the desired functions on $\Sse^*\wt M$ near $\pa\Gamma^+\cap t^{-1}([t_0,t_1])$.

  Since $\rho_\infty H_{\eps^2\wt G}\wt\varphi^{\rm u}=0$ on the set $(\wt\varphi^{\rm u})^{-1}(0)$ it suffices for obtaining~\eqref{EqTrapHamPhisu} to observe that the function $\wt\varphi^{\rm u}$ vanishes simply for small $\eps\geq 0$. The argument for $\wt\varphi^{\rm s}$ is analogous. Finally, the positivity of $\wt w^{\rm s/u}$ on $\cU$ follows from that of $w^{\rm s/u}$ when $\cU$ is sufficiently small.
\end{proof}

\bigskip

\emph{In the remainder of this section, we prove Theorem~\usref{ThmTrap}.} We only explain the construction of $\wt\Gamma^{\rm u}$; the construction of $\wt\Gamma^{\rm s}$ then follows by working with $-\hat V,-\wt V,-t$ in place of $\hat V,\wt V,t$.

\medskip

\textbf{Step 1. Modification of $V$ in $t<t_0$; new setup.} For notational simplicity, we work with $I=(t_0-1,t_1+1)$ (i.e.\ $\delta=1$ in the statement of Theorem~\ref{ThmTrap}). Fix a smooth function $\chi\in\CI(\R)$ with $\chi(t)=1$ for $t<t_0-\frac12$ and $\chi(t)=0$ for $t>t_0-\frac14$. Consider then on $[0,\eps_0]\times(-\infty,t_1+1)\times\hat\cX$ the vector field
\[
  V'' := \chi(\hat V + \eps\pa_t) + (1-\chi)V.
\]
This vector field still satisfies~\eqref{ItTrapV}; but in $t<t_0-\frac12$ where it equals $\hat V+\eps\pa_t$, it is tangent to
\[
  \wt\Gamma_0^{\rm u}=s^{-1}(0) = [0,\eps_0]_\eps\times\R_t\times\hat\Gamma^{\rm u}.
\]
By shrinking $\delta_0,\eps_0>0$, we may assume that $\eps^{-1}V'' t$ has a positive lower bound on $[0,\eps_0]\times(-\infty,t_++1)\times\hat\cX$ (since this equals $1$ at $\eps=0$); and then we may set
\[
  V' := \frac{1}{\eps^{-1}V't}V'',
\]
which also satisfies~\eqref{ItTrapV}.

Let $T_0$ be as in~\eqref{EqTraphatf}--\eqref{EqTrapNhyp}, and shrink $\eps_0>0$ further so that $\eps_0 T_0<1$. The idea is to construct the set $\wt\Gamma^{\rm u}$ as (a subset of) the flow-out under $V'$ of $\wt\Gamma^{\rm u}_{\rm germ}:=(\wt\Gamma_0^{\rm u}\cap\{t_0-2<t<t_0-1\}) \cup (\{0\}\times(t_0-2,t_1+1)\times\hat\Gamma^{\rm u})$. In terms of the map $f':=e^{T_0 V'}$, we can equivalently define $\wt\Gamma^{\rm u}$ as the intersection of $\{t_0-\frac12<t<t_1+\frac12\}$ with $\bigcup_{n\geq 0}f'{}^n(\wt\Gamma^{\rm u}_{\rm germ})$ where $f'{}^n=f'\circ f'\circ\cdots\circ f'$ (with its maximal domain of definition). Carefully note however that in order to cover a unit time interval at the parameter value $\eps$, one needs to iterate $f$ on the order of $\eps^{-1}$ many times.\footnote{Without any structural assumptions on the $\hat V$-flow, already much fewer iterations would typically already produce unit size deviations from $\wt\Gamma^{\rm u}_0$: for example, if $V$ is $\cC^1$ in $\eps$, then this typically happens after $\sim\log\eps^{-1}$ many iterations.} We shall thus instead use ideas from \cite{HirschPughShubInvariantManifolds} and construct $\wt\Gamma^{\rm u}$ using a graph transform involving $f$.

To set this up, we now rename $V'$ as $V$. Shifting and scaling $\eps$, $t$, $\hat V$, and $V$, and giving ourselves some room for notational convenience, we assume that
\begin{equation}
\label{EqTrapV}
  V=\hat V+\eps\pa_t\ \ \text{for}\ \ t<0;\qquad
  \eps^{-1}V t=1\ \ \text{on}\ \ \cX:=[0,\eps_0]_\eps\times[-2,2]_t\times\hat\cX.
\end{equation}
Moreover, we assume that $\hat f=e^{\hat V}$ satisfies the assumptions~\eqref{EqTrapNhyp} of immediate relative $r$-normal hyperbolicity for any fixed $r\geq 1$; and we set $f=e^V$.

For $\delta<\min(\delta_0,\eps_0)$, write
\[
  \hat\Gamma^{\rm u}(\delta):=\hat\Gamma\times[-\delta,\delta]_u,\qquad
  \Gamma^{\rm u}(\delta)=[0,\delta]_\eps\times[-1,1]_t\times\hat\Gamma^{\rm u}(\delta).
\]
We shall write points in $\hat\Gamma^{\rm u}(\delta)$ as $\hat x^{\rm u}=(\hat x,u)\in\hat\Gamma^{\rm u}(\delta)$. We shall construct $\wt\Gamma^{\rm u}$ as the image of a section
\begin{equation}
\label{EqTrapSigma}
\begin{split}
  \sigma\colon\Gamma^{\rm u}(\delta)\to\cS(\delta) := \Gamma^{\rm u}(\delta)\times[-\delta,\delta],\qquad
  &\sigma(\eps,t,\hat x^{\rm u})=(\eps,t,\hat x^{\rm u},\breve\sigma(\eps,t,\hat x^{\rm u})), \\
  &\qquad \breve\sigma|_{\eps=0}=0,\quad \breve\sigma|_{t\leq 0}=0.
\end{split}
\end{equation}
We will then set $\wt\varphi^{\rm u}(\eps,t,\hat x,u,s):=s-\breve\sigma(\eps,t,\hat x,u)$, which inherits the regularity from $\breve\sigma$.

We introduce the notation
\[
  \hat\pi^{\rm u}\colon\hat\cX\ni(\hat x,u,s)\mapsto(\hat x,u)\in\hat\Gamma^{\rm u}(\delta),\qquad
  \pi^{\rm u}\colon(\eps,t,\hat x,u,s)\mapsto(\eps,t,\hat\pi^{\rm u}(\hat x,u,s))\in\Gamma^{\rm u}(\delta).
\]

\medskip

\textbf{Step 2. Constructing a continuous section $\sigma$.} We shall find $\sigma$ in~\eqref{EqTrapSigma} in the space
\begin{equation}
\label{EqTrapSigmadelta}
  \Sigma(\delta) = \bigl\{ {\rm sections}\ \sigma\colon\Gamma^{\rm u}(\delta) \to \cS(\delta) \colon |\breve\sigma(\eps,t,\hat x^{\rm u})|\leq C_\Sigma\eps^\alpha H(t),\ L_{\eps,(t,\hat x^{\rm u})}(\breve\sigma)\leq C_\Sigma\eps^\alpha H(t) \bigr\};
\end{equation}
here $H$ is the Heaviside function. We extend $\breve\sigma$ by $0$ to $t<0$, and thus $\sigma(\eps,t,\hat x^{\rm u})=(\eps,t,\hat x^{\rm u},0)$ for $t<0$. In~\eqref{EqTrapSigmadelta}, the (large) constant $C_\Sigma$ will be chosen in the course of the proof (and then our argument works for all sufficiently small $\delta>0$), and we use the local Lipschitz constant
\begin{equation}
\label{EqTrapLipschitz}
\begin{split}
  &L_{\eps,(t,\hat x^{\rm u})}(\breve\sigma) := \limsup_{(t',\hat y^{\rm u})\to(t,\hat x^{\rm u})} \frac{|\breve\sigma(\eps,t,\hat x^{\rm u})-\breve\sigma(\eps,t',\hat y^{\rm u})|}{d_\eps((t,\hat x^{\rm u}),(t',\hat y^{\rm u}))}\,, \\
  &\hspace{9em} d_\eps((t,\hat x^{\rm u}),(t',\hat y^{\rm u})):=\frac{|t-t'|}{\eps}+d(\hat x^{\rm u},\hat y^{\rm u}),
\end{split}
\end{equation}
with the Riemannian distance function $d$ on $\hat\Gamma^{\rm u}$ (given by the definition of $r$-normal hyperbolicity) being used to define $d(\hat x^{\rm u},\hat y^{\rm u})$. Equipped with the supremum norm of $\breve\sigma$, the space $\Sigma(\delta)$ is complete. (The choice of space~\eqref{EqTrapSigmadelta} is the main difference between the present proof and \cite{HintzPolyTrap}.) The strategy is to iterate the \emph{graph transform}
\[
  \Sigma(\delta) \ni \sigma \mapsto f_\sharp\sigma := f\sigma g,
\]
where $g$ is a right inverse of $\pi^{\rm u}f\sigma$. (Thus, the image of the section $f_\sharp\sigma$ is equal to $f$ applied to the image of $\sigma$.) We shall show that this map is well-defined and defines a contraction on $\Sigma(\delta)$.

\medskip

\textit{\underline{Step 2.1.} Control of $g$.} This closely follows analogous arguments in \cite{HirschPughShubInvariantManifolds,HintzPolyTrap}, and hence we shall omit some details. We first work near a point $\hat x\in\hat\Gamma$. We flatten out $\hat\Gamma$ inside of $\hat\Gamma^{\rm u}$ by introducing a map $h_{\hat x}\colon T_{\hat x}\hat\Gamma^{\rm u}\to T_{\hat x}\hat\Gamma^{\rm u}$, vanishing quadratically at $0$, for which the equality of Riemannian exponential maps $\exp^{\hat\Gamma}_{\hat x}(\hat v)=\exp^{\hat\Gamma^{\rm u}}_{\hat x}(\hat v+h_{\hat x}(\hat v))$ holds for all sufficiently small $\hat v\in T_{\hat x}\hat\Gamma$. For small $\hat v^{\rm u}\in T_{\hat x}\hat\Gamma^{\rm u}=T_{\hat x}\hat\Gamma\oplus\R$, we then define charts for $\hat\Gamma^{\rm u}$ and $\hat\cX$ by
\[
  \hat\chi_{\hat x}(\hat v^{\rm u}) := \exp_{\hat x}^{\hat\Gamma^{\rm u}}\bigl(\hat v^{\rm u}+h_{\hat x}(\hat v^{\rm u})\bigr),\qquad
  \hat e_{\hat x}(\hat v^{\rm u},s) := (\hat\chi_{\hat x}(\hat v^{\rm u}),s)
\]
We express $\hat f$ in such local charts as
\begin{equation}
\label{EqTrapfExp}
\begin{split}
  &\hat f_{\hat x} := \hat e_{\hat f(\hat x)}^{-1}\hat f\hat e_{\hat x}\colon T_{\hat x}\hat\Gamma^{\rm u}\oplus\R\to T_{\hat f(\hat x)}\hat\Gamma^{\rm u}\oplus\R, \\
  &\hat f_{\hat x} = D_{\hat x}\hat f+\hat r_{\hat x},\qquad
  D_{\hat x}\hat f = \begin{pmatrix} \hat\Gamma_{\hat x}\hat f \oplus \hat N_{\hat x}^{\rm u}\hat f & 0 \\ 0 & \hat N_{\hat x}^{\rm s}\hat f \end{pmatrix},\quad
  \hat r_{\hat x}=\cO(|\hat v^{\rm u}|^2+s^2).
\end{split}
\end{equation}
For later use, we note that since $\hat f$ preserves $\hat\Gamma^{\rm u}$, the $s$-component of $\hat r_{\hat x}(\hat v^{\rm u},0)$ equals $0$.

We extend the above charts to charts of $\Gamma^{\rm u}(\delta)$ and $\cX$ via
\[
  \chi_{\hat x}(\eps,t,\hat v^{\rm u}) = \bigl(\eps,t,\hat\chi_{\hat x}(\hat v^{\rm u})\bigr),\qquad
  e_{\hat x}(\eps,t,\hat v^{\rm u},s) = \bigl(\eps,t,\hat e_{\hat x}(\hat v^{\rm u},s)\bigr),
\]
and then define the local coordinate representation of $f$ as $f_{\hat x}:=e_{\hat f(\hat x)}^{-1} f e_{\hat x}$. By~\eqref{EqTrapV} and assumption~\eqref{ItTrapV}, we then have (writing $\hat f_{\hat x}$ also for the product of the identity map on $[0,\delta]_\eps\times[-2,2]_t$ with $\hat f_{\hat x}$)
\begin{equation}
\label{EqTrapfDiff}
  f_{\hat x} - \hat f_{\hat x} \colon (\eps,t,\hat v^{\rm u},s) \mapsto \bigl(0,\eps, \tilde f_{\hat x}(\eps,t,\hat v^{\rm u},s)\bigr), \quad \tilde f_{\hat x}\in\eps^\alpha\cC^k_\seop([0,\delta]_\eps\times[-2,\tfrac32]_t\times T_{\hat x}\hat\Gamma^{\rm u}\times\R_s);
\end{equation}
here $\tilde f_{\hat x}$ takes values in $T_{\hat f(\hat x)}\hat\Gamma^{\rm u}\oplus\R$ and vanishes for $t<-\eps$. We shrink the $t$-interval from $[-2,2]$ to $[-2,\frac32]$ here since $f$ increases $t$ by the amount $\eps\leq\delta$, and we require $\delta<\frac12$.

Given $\sigma\in\Sigma(\delta)$, define similarly $\sigma_{\hat x}:=e_{\hat x}^{-1}\sigma\chi_{\hat x}\colon [0,\delta]_\eps\times[-1,1]_t\times T_{\hat x}\hat\Gamma^{\rm u}\to[0,\delta]_\eps\times[-1,1]_t\times T_{\hat f(\hat x)}\hat\Gamma^{\rm u}$, which is thus given by
\[
  \sigma_{\hat x}(\eps,t,\hat v^{\rm u})=\bigl(\eps,t,\breve\sigma_{\hat x}(\eps,t,\hat v^{\rm u})\bigr),\qquad
  \breve\sigma_{\hat x}(\eps,t,\hat v^{\rm u})=\breve\sigma\bigl(\eps,t,\hat\chi_{\hat x}(\hat v^{\rm u})\bigr);
\]
here $\breve\sigma_{\hat x}$ satisfies the bounds~\eqref{EqTrapSigmadelta} albeit with an additional factor of $1+\cO(\delta)$ on the right to account for the varying nature of the metric on $\hat\Gamma^{\rm u}$. Then the local chart representation of $\pi^{\rm u}f\sigma$ is
\begin{align}
  &\chi_{\hat f(\hat x)}^{-1}\pi^{\rm u}f\sigma\chi_{\hat x} = \pi^{\rm u} \circ e_{\hat f(\hat x)}^{-1}f e_{\hat x}\circ e_{\hat x}^{-1}\sigma\chi_{\hat x}= \pi^{\rm u}f_{\hat x}\sigma_{\hat x} \colon \nonumber\\
  &\qquad (\eps,t,\hat v^{\rm u}) \mapsto \pi^{\rm u}\bigl(f_{\hat x}(\eps,t,\hat v^{\rm u},\breve\sigma_{\hat x}(\eps,t,\hat v^{\rm u})\bigr) \nonumber\\
\label{EqTrapfLoc}
  &\qquad\hspace{6em} = \Bigl(\eps,t,\hat\pi^{\rm u}\hat f_{\hat x}\bigl(\hat v^{\rm u},\breve\sigma_{\hat x}(\eps,t,\hat v^{\rm u})\bigr)\Bigr) + \Bigl(0,\eps, \hat\pi^{\rm u}\tilde f_{\hat x}\bigl(\eps,t,\hat v^{\rm u},\breve\sigma_{\hat x}(\eps,t,\hat v^{\rm u})\bigr)\Bigr).
\end{align}

\medskip

\textit{\underline{Step 2.1.1.} Local injectivity.} For $\delta_0>0$, write $T_{\hat x}\hat\Gamma(\delta_0)$ for the open $\delta_0$-ball around $0\in T_{\hat x}\hat\Gamma$; define $\hat N^{\rm u}_{\hat x}(\delta_0)$ analogously. We first claim that if $\delta_0>0$ is sufficiently small, then
\begin{equation}
\label{EqTrapfsigmaInj}
  \chi_{\hat f(\hat x)}^{-1}\pi^{\rm u}f\sigma\chi_{\hat x}\ \ \text{is injective on}\ \ [0,\delta_0]_\eps\times[-1,1]_t\times\bigl(T_{\hat x}\hat\Gamma(\delta_0)\oplus\hat N_{\hat x}^{\rm u}(\delta_0)\bigr).
\end{equation}
Note first that this map takes $\eps\mapsto\eps$ and $t\mapsto t+\eps$, so two points can only have the same image if their $\eps$- and $t$-values agree. Set
\[
  \ubar\mu := m(\hat\Gamma_{\hat x}\hat f) > 0;
\]
for $\eps\in[0,\delta_0]$, $t\in[-1,1]$ then, we estimate using~\eqref{EqTrapfExp} and~\eqref{EqTrapfLoc} and the $\cO(\eps^\alpha)$ bound on the Lipschitz constant of $\breve\sigma$:
\begin{align}
  &d\bigl(\pi^{\rm u}f_{\hat x}(\eps,t,\hat v^{\rm u},\breve\sigma_{\hat x}(\eps,t,\hat v^{\rm u})), \pi^{\rm u}f_{\hat x}(\eps,t,\hat w^{\rm u},\breve\sigma_{\hat x}(\eps,t,\hat w^{\rm u}))\bigr) \nonumber\\
  &\qquad \geq |D_{\hat x}\hat f(\hat v^{\rm u}-\hat w^{\rm u})| - \bigl| \hat r_{\hat x}\bigl(\hat v^{\rm u},\breve\sigma_{\hat x}(\eps,t,\hat v^{\rm u})\bigr) - \hat r_{\hat x}\bigl(\hat w^{\rm u},\breve\sigma_{\hat x}(\eps,t,\hat w^{\rm u})\bigr) \bigr| \nonumber\\
  &\qquad \qquad - \bigl|\hat\pi^{\rm u}\tilde f_{\hat x}\bigl(\eps,t,\hat v^{\rm u},\breve\sigma_{\hat x}(\eps,t,\hat v^{\rm u})\bigr) - \hat\pi^{\rm u}\tilde f_{\hat x}\bigl(\eps,t,\hat w^{\rm u},\breve\sigma_{\hat x}(\eps,t,\hat w^{\rm u})\bigr) \bigr| \nonumber\\
  &\qquad \geq \bigl(\ubar\mu - C_{\hat r}\delta_0(1 + \cO(\eps^\alpha)) - C_{\tilde f}\cO(\eps^\alpha) \bigr)(1-C\delta_0) |\hat v^{\rm u}-\hat w^{\rm u}| \nonumber\\
\label{EqTrapfsigmaEst}
  &\qquad \geq (\ubar\mu-o(1))|\hat v^{\rm u}-\hat w^{\rm u}|,\qquad \delta_0\to 0.
\end{align}
Here $C_{\hat r}$ controls the $\cC^2$ norm of $\hat r_{\hat x}$, and $C_{\tilde f}$ the $\cC^1$ norm of $\eps^{-\alpha}\tilde f_{\hat x}$ in $\hat v^{\rm u},s$; moreover, the factor $1-C\delta_0$ accounts for the difference between the norm in $T_{\hat x}\hat\Gamma^{\rm u}$ and the Riemannian distance function on $\hat\Gamma^{\rm u}$. This gives~\eqref{EqTrapfsigmaInj}.

\medskip

\textit{\underline{Step 2.1.2.} Coverage of the range.} Let now
\begin{equation}
\label{EqTrapfsigmaMapParams}
  0 < \mu < \ubar\mu=m(\hat\Gamma_{\hat x}\hat f),\qquad
  1 < \lambda < m(\hat\Gamma^{\rm u}_{\hat x}\hat f),\qquad
  0<\omega<1,\qquad \lambda\omega>1.
\end{equation}
We then claim that for sufficiently small $\delta\leq\delta_0$
\begin{equation}
\label{EqTrapfsigmaMap}
\begin{split}
  &[0,\delta]_\eps\times[-1,1]_t\times\bigl(T_{\hat f(\hat x)}\hat\Gamma(\mu\omega\delta) \oplus \hat N_{\hat f(\hat x)}^{\rm u}(\lambda\omega\delta)\bigr) \\
  &\qquad \subset \chi_{\hat f(\hat x)}^{-1}\pi^{\rm u}f\sigma\chi_{\hat x} \bigl( [0,\delta]_\eps\times[-2,1]_t\times\bigl(T_{\hat x}\hat\Gamma(\omega\delta) \oplus \hat N_{\hat x}^{\rm u}(\omega\delta)\bigr) \bigr).
\end{split}
\end{equation}
To prove this, we need to show (recalling~\eqref{EqTrapfLoc}) that given a point $(\eps,t,\hat w^{\rm u})$, $\hat w^{\rm u}=(\hat w,u')$, in the space on the left one can solve
\begin{align*}
  &D_{\hat x}\hat f(\hat v^{\rm u}) + \hat q_{\hat x}(\eps,t-\eps,\hat v^{\rm u}) = \hat w^{\rm u}, \\
  &\qquad \hat q_{\hat x}(\eps,t-\eps,\hat v^{\rm u}) := \hat r_{\hat x}\bigl(\hat v^{\rm u},\breve\sigma_{\hat x}(\eps,t-\eps,\hat v^{\rm u})\bigr) + \hat\pi^{\rm u}\tilde f_{\hat x}\bigl(\eps,t-\eps,\hat v^{\rm u},\breve\sigma_{\hat x}(\eps,t-\eps,\hat v^{\rm u})\bigr),
\end{align*}
for $\hat v^{\rm u}=(\hat v,u)$. This is accomplished via a fixed point argument using the map
\begin{equation}
\label{EqTrapfFixedPt}
  \hat v^{\rm u} \mapsto (D_{\hat x}\hat f)^{-1}\bigl( \hat w^{\rm u} - \hat q_{\hat x}(\eps,t-\eps,\hat v^{\rm u})\bigr),
\end{equation}
which, due to the choices~\eqref{EqTrapfsigmaMapParams}, is easily seen to be a contraction on $T_{\hat x}\hat\Gamma(\omega\delta)\oplus\hat N_{\hat x}^{\rm u}(\omega\delta)$ when $\delta$ in~\eqref{EqTrapfsigmaMap} is sufficiently small.

In combination with~\eqref{EqTrapfsigmaInj}, the statement~\eqref{EqTrapfsigmaMap} implies, upon shrinking $\delta$ even further if necessary, that one can define a right inverse $g$ of $\pi^{\rm u}f\sigma$ as a map (recalling~\eqref{EqTrapSigma})
\[
  g \colon S(\delta) \to [0,\delta]_\eps \times [-1,1]_t \times \bigcup_{\hat x\in\hat\Gamma} \hat\chi_{\hat x}\bigl(T_{\hat x}\Gamma(\omega\delta) \oplus \hat N_{\hat x}^{\rm u}(\omega\delta)\bigr).
\]
The estimate~\eqref{EqTrapfsigmaEst} and the characterization of $g(\eps,t,\hat\chi_{\hat x}(\hat v,u))$ as the fixed point of~\eqref{EqTrapfFixedPt} furthermore implies, for $\hat x^{\rm u},\hat y^{\rm u}$ lying in a single chart, the bound
\begin{equation}
\label{EqTrapgEst}
  d_\eps\bigl( g(\eps,t,\hat y^{\rm u}), g(\eps,t',\hat x^{\rm u}) \bigr) \leq (\ubar\mu^{-1} + o(1)) |\hat y^{\rm u}-\hat x^{\rm u}| + (1+o(1))\frac{|t-t'|}{\eps},\qquad \eps\leq\delta\to 0;
\end{equation}
here, the $\cO(\eps^\alpha)=o(1)$-bound on the Lipschitz constant of $\breve\sigma_{\hat x}$ in $t$ enters.

\medskip
\textit{\underline{Step 2.2.} $f_\sharp$ as a map on $\Sigma(\delta)$.} Let $\sigma\in\Sigma(\delta)$, then $f_\sharp\sigma=f\sigma g\colon\Gamma^{\rm u}(\delta)\to\cS(1)$ is well-defined. Write $\pi^{\rm s}$ for the projection onto the $s$-coordinate. We first claim that $f_\sharp\sigma$ maps into $\cS(\delta)$. This follows in the above local coordinates near $\hat x,\hat f(\hat x)$, and writing $g(\eps,t,\hat w^{\rm u})=:(\eps,t-\eps,\hat v^{\rm u})$, from
\begin{align}
  |\pi^{\rm s}f_\sharp\sigma(\eps,t,\hat w^{\rm u})| &= \bigl|\pi^{\rm s} f\bigl(\eps,t-\eps,\hat v^{\rm u},\breve\sigma(\eps,t-\eps,\hat v^{\rm u})\bigr) \bigr| \nonumber\\
    &\leq |\hat N^{\rm s}_{\hat x}\hat f(\breve\sigma(\eps,t-\eps,\hat v^{\rm u}))| + \bigl|\pi^{\rm s}\hat r_{\hat x}\bigl(\hat v^{\rm u},\breve\sigma_{\hat x}(\eps,t-\eps,\hat v^{\rm u})\bigr)\bigr| \nonumber\\
    &\quad\hspace{9.25em} + \bigl|\pi^{\rm s}\tilde f_{\hat x}\bigl(\eps,t-\eps,\hat v^{\rm u},\breve\sigma_{\hat x}(\eps,t-\eps,\hat v^{\rm u})\bigr)\bigr| \nonumber\\
\label{EqTrapfsharpsigma}
    &\leq \|\hat N^{\rm s}_{\hat x}\hat f\| C_\Sigma\eps^\alpha H(t-\eps) + C_{\hat r}\delta\cdot C_\Sigma\eps^\alpha H(t-\eps) + C_{\tilde f}\eps^\alpha H(t).
\end{align}
Here, we used $\pi^{\rm s}\hat r_{\hat x}(\hat v^{\rm u},0)=0$ as well as the vanishing of $\tilde f_{\hat x}(\eps,t-\eps,\cdots)$ for $t-\eps\leq-\eps$. Thus, if we choose $C_\Sigma$ so large that $C_\Sigma\sup_{\hat\Gamma}\|\hat N^{\rm s}\hat f\|+C_{\tilde f}<C_\Sigma$, then for sufficiently small $\delta$, the expression~\eqref{EqTrapfsharpsigma} is bounded by $C_\Sigma\eps^\alpha H(t)$.

As for the Lipschitz constant, write $\hat v^{\rm u}=(\hat v,u)$; similar considerations as for~\eqref{EqTrapfsharpsigma}, together with~\eqref{EqTrapgEst}, then allow us to estimate
\begin{align*}
  L_{\eps,(t,\hat w^{\rm u})}(\pi^{\rm s}f_\sharp\sigma) &\leq L_{\eps,(t-\eps,\hat v^{\rm u})}(\pi^{\rm s}f\sigma)\times L_{\eps,(t,\hat w^{\rm u})}(g) \\
    &\leq \bigl(\|\hat N_{\hat x}^{\rm s}\hat f\| C_\Sigma\eps^\alpha H(t-\eps) + C_{\tilde r}\delta\cdot C_\Sigma\eps^\alpha H(t-\eps) + C_{\tilde f}\eps^\alpha H(t)\bigr) \\
    &\qquad \times \max(\ubar\mu^{-1},1)(1+o(1)) \\
    &\leq C_\Sigma\eps^\alpha H(t),
\end{align*}
where we use~\eqref{EqTrapNhyp} for $k=0,1$.

\medskip
\textit{\underline{Step 2.3.} $f_\sharp$ is a contraction.} Recall that $\Sigma(\delta)$ is equipped with the (complete) sup norm metric. Given $\sigma$, $\sigma'\in\Sigma(\delta)$ and the right inverses $g$, $g'$ of $\pi^{\rm u}f\sigma$, $\pi^{\rm u}f\sigma'$ constructed above, one easily obtains the pointwise bound $|g-g'|\leq o(1)|\sigma-\sigma'|$, $\delta\to 0$ from the description of $g(\eps,t,\hat w^{\rm u})$ as the fixed point of~\eqref{EqTrapfFixedPt}. Therefore, we have the pointwise bound
\[
  |\pi^{\rm s}f_\sharp\sigma-\pi^{\rm s}f_\sharp\sigma'|\leq|\pi^{\rm s}f\sigma g-\pi^{\rm s}f\sigma' g| + |\pi^{\rm s}f\sigma'g-\pi^{\rm s}f\sigma'g'| \leq (\|\hat N^{\rm s}\hat f\|+o(1))|\sigma-\sigma'|,\quad \delta\to 0,
\]
which by~\eqref{EqTrapNhyp} for $k=0$ is bounded by $\theta|\sigma-\sigma'|$ for some $\theta<1$ when $\delta$ is small enough.

The contraction mapping principle thus produces the desired section $\sigma\in\Sigma(\delta)$ with $f_\sharp\sigma=\sigma$, with the bounds on $\breve\sigma$ and its local Lipschitz constants given in~\eqref{EqTrapSigmadelta}.

\medskip

\textbf{Step 3. Higher regularity.} Since the modification of the arguments in \cite[\S{2.3}]{HintzPolyTrap} (which follow \cite[\S{4}]{HirschPughShubInvariantManifolds}) to the present setting proceeds in a similar fashion as the previous two steps of the proof, we shall be very brief.

Pointwise differentiability is proved by defining a graph transform analogous to $f_\sharp$ on sections over $\Gamma^{\rm u}(\delta)$ of the bundle of Lipschitz jets of local sections of $\cS(\delta)$ which at their respective base point of $\Gamma^{\rm u}(\delta)$ agree with the already constructed invariant section $\sigma$. Using the regularity of $f$ under differentiations along $\eps\pa_t$ and vector fields on $\hat\cX$, this graph transform preserves the space of sections of the closed subbundle of \emph{differentiable} Lipschitz jets (with uniform bounds for $\eps\in(0,\delta]$); and therefore the Lipschitz jets of the invariant section $\sigma$ are in fact differentiable.

Continuous differentiability is proved by showing, by analogous means, the existence of a fixed point for a graph transform acting on continuous sections of a bundle of linear maps, which end up being the local linear approximations of $\sigma$. Up to this point, only the $r$-normal hyperbolicity for $r=1$ and the $\cC_\seop^1$-regularity of the vector field perturbation in condition~\eqref{ItTrapV} of Theorem~\ref{ThmTrap} are used.

Higher regularity is proved via induction in $r$ similarly to \cite[Step~6 in the proof of Theorem~2.3]{HintzPolyTrap}. We remark here that higher regularity of $\sigma$ is clear for any fixed value of $\eps>0$, since $\sigma$ can be obtained by applying $f$ a finite number of times (as explained in Step~1). Thus, for proving $r$-fold differentiability, we may replace $\delta(1):=\delta$ by a value $\delta(r)$; but by applying $f_\sharp$ a finite number of times (depending on the values of $\delta(r)$ and $\delta(1)$, and utilizing the expanding nature of $\hat f$ in the unstable directions), one can then recover the $r$-fold differentiability of $\sigma$ as a section of the \emph{fixed} base $\Gamma^{\rm u}(\delta)$. This completes the proof of Theorem~\ref{ThmTrap}.

\bibliographystyle{alphaurl}


\end{document}